\documentclass{report}
\setcounter{secnumdepth}{4}
\setcounter{tocdepth}{4}

\hfuzz=0.5cm

\usepackage[T1]{fontenc}
\usepackage[utf8]{inputenc}
\usepackage{amsthm}
\usepackage{amssymb}
\usepackage{amsfonts}
\usepackage{bm}
\usepackage{graphicx}
\usepackage{float}
\usepackage{mathtools}
\usepackage{pifont}
\usepackage{xcolor}
\usepackage[disable]{todonotes}
\usepackage{proof}
\usepackage{alltt}
\usepackage{enumerate}
\usepackage{multicol}
\setlength{\columnsep}{1cm}
\usepackage{lineno}
\usepackage{quiver}
\usepackage{adjustbox}
\usepackage{stmaryrd}
\usepackage[multiple]{footmisc}
\usepackage{wasysym}
\usepackage{hyperref}

\newcommand{\In}{\mathtt{input}}

\newcommand{\No}{\textrm{No}}
\newcommand{\State}{\mathtt{state}}
\newcommand{\Trees}{\textrm{Trees}}
\newcommand{\Yes}{\textrm{Yes}}
\newcommand{\abstr}[1]{\langle #1 \rangle}

\newcommand{\assoc}{\mathtt{assoc}}
\newcommand{\atoms}{\mathbb{A}}

\newcommand{\bigpicc}[1]{\begin{figure}[H]\centering\includegraphics[width=\textwidth]{#1.pdf} \end{figure}}
\newcommand{\bigpicenum}[1]{\begin{minipage}[t]{\linewidth}
  \raggedright
  \adjustbox{valign=t}{%
     \includegraphics[width=\textwidth]{#1}
  }
  \end{minipage}
  }

\newcommand{\coassoc}{\mathtt{coassoc}}
\newcommand{\comma}{\textrm{,}}
\newcommand{\coproj}{\mathtt{coproj}}
\newcommand{\core}{\textrm{core}}
\newcommand{\cosym}{\mathtt{cosym}}
\newcommand{\ctx}{\textrm{ctx}}
\newcommand{\distr}{\mathtt{distr}}
\newcommand{\eqto}{\to_{\textrm{eq}}}

\newcommand{\fsto}{\to_{\textrm{fs}}}
\newcommand{\fs}{\textrm{fs}}

\newcommand{\orf}{\mathtt{?}}

\newcommand{\longsuto}{\relbar\joinrel\relbar\joinrel\suto}

\newcommand{\longtransform}[1]{\ \stackrel{\underrightarrow{#1}}{}\ }

\renewcommand{\merge}{\mathtt{merge}}
\newcommand{\nat}{\mathbb{N}}
\newcommand{\no}{\mathtt{No}}

\newcommand{\orbitf}{\mathtt{orbit}}
\newcommand{\orb}{\textrm{orb}}

\newcommand{\pfs}{P_{\textrm{fs}}}
\newcommand{\picc}[1]{\begin{figure}[H]\centering\includegraphics[width=0.8\textwidth]{#1.pdf} \end{figure}}

\newcommand{\proj}{\mathtt{proj}}

\newcommand{\reg}{\mathtt{r}}

\newcommand{\rightDistr}{\mathtt{rightDistr}}
\newcommand{\rightI}{\mathtt{rightI}}

\newcommand{\shuffle}{\mathtt{shuffle}}
\newcommand{\smallpicc}[1]{\begin{figure}[H]\centering\includegraphics[width=0.5\textwidth]{#1.pdf} \end{figure}}
\newcommand{\custompicc}[2]{\begin{figure}[H]\centering\includegraphics[width=#2\textwidth]{#1.pdf} \end{figure}}
\newcommand{\vsmallpicc}[1]{\begin{figure}[H]\centering\includegraphics[width=0.3\textwidth]{#1.pdf} \end{figure}}
\newcommand{\vvsmallpicc}[1]{\begin{figure}[H]\centering\includegraphics[width=0.2\textwidth]{#1.pdf} \end{figure}}

\newcommand{\supp}{\textrm{supp}}
\newcommand{\suto}{\multimap}
\newcommand{\sutransform}[1]{\stackrel{#1}{\longsuto}}

\newcommand{\swap}[2]{\sigma_{(#1 \  #2)}}
\newcommand{\sym}{\mathtt{sym}}
\newcommand{\tdot}{\textrm{.}}
\newcommand{\tofs}{\to_{\textrm{fs}}}
\newcommand{\transform}[1]{\stackrel{#1}{\longrightarrow}}
\newcommand{\treeRepr}{\mathtt{treeRepr}}
\newcommand{\yes}{\mathtt{Yes}}
\newcommand{\copyf}{\mathtt{copy}}
\newcommand{\leftI}{\mathtt{leftI}}

\newcommand{\eqf}{\mathtt{eq}}
\newcommand{\idf}{\mathtt{id}}
\newcommand{\const}{\mathtt{const}}
\newcommand{\constI}{\mathtt{constI}}
\newcommand{\msupp}{\textrm{msup}}
\newcommand{\suppt}{\overline{\supp}}
\newcommand{\poforb}{\overline{\textrm{orb}}}

\newcommand{\map}{\mathtt{map}}
\newcommand{\cons}{\mathtt{cons}}
\newcommand{\destruct}{\mathtt{destruct}}
\newcommand{\concatf}{\mathtt{concat}}
\newcommand{\reverse}{\mathtt{reverse}}
\newcommand{\blocks}{\mathtt{blocks}}
\newcommand{\group}{\mathtt{group}}

\newcommand{\concat}{\mathtt{concat}}
\newcommand{\singleton}{\mathtt{singleton}}

\newcommand{\toprightarrow}{\reflectbox{\rotatebox[origin=c]{180}{$\Rsh$}}}

\DeclareUnicodeCharacter{1EC5}{\eth}

\makeatletter
\newtheorem*{rep@theorem}{\rep@title}
\newcommand{\newreptheorem}[2]{%
\newenvironment{rep#1}[1]{%
 \def\rep@title{#2 \ref{##1}}%
 \begin{rep@theorem}}%
 {\end{rep@theorem}}}
\makeatother

\newtheorem{theorem}{Theorem}
\newtheorem{lemma}{Lemma}
\newreptheorem{lemma}{Lemma}

\newtheorem{claim}{Claim}
\newtheorem*{theorem*}{Theorem}

\theoremstyle{definition}

\newenvironment{example}
  {\pushQED{\qed}\examplex}
  {\popQED\endexamplex}

\newenvironment{definition}
  {\pushQED{\qed}\definitionx}
  {\popQED\enddefinitionx}

\title{The single-use restriction for register automata and transducers over infinite alphabets}

\author{Rafał Stefański}
\date{June 2021}

\begin{document}

\maketitle

\tableofcontents

\chapter*{Acknowledgements}

I would like to start by directing words of gratitude to the entire community of the Faculty of Mathematics, Informatics and Mechanics of the University of Warsaw, among whom I have had the opportunity to develop as a computer scientist over the past 10 years. Thanks to them, 
almost every day spent physically at work is a real pleasure. I especially want to thank Mikołaj Bojańczyk -- I could not have imagined a better supervisor. Further thanks go to Edon Kelmendi for introducing me to the world of theoretical computer science and helping me write my first review; to Nathan Lhote for our invaluable cooperation during the lockdown; and to Janusz Schmude for our joint PhD adventure.\\

For the past year, I had the opportunity to work in the PPLV group at University College London. It was a very valuable time for me. I sincerely thank everyone with whom I had the opportunity to share it. I would like to give 
special thanks to Samson Abramsky for getting me excited to work with category theory, for his words of advice, and for his patience.\\

I would like to express my gratitude to Diego Figueira, Emmanuel Filiot, and Andrzej Murawski for their time and effort dedicated to reviewing this thesis and providing valuable feedback.
I owe special thanks to Emmanuel Filiot for his extensive, detailed and insightful comments that have significantly improved the current version of this thesis.\\ 

I would also like to thank the community associated with the 14th High School in Warsaw, especially to my counselling teachers -- the late Stanislaw Lipiński and the late Jack Banasik; my math teachers, Jerzy Konarski and Filip Smętek; and my computer science teachers, Hanna Stachera and Joanna Śmigielska. I would also like to thank Maciej Matraszek for the work he put into running the computer science club.\\

Recalling my period of high school, it is impossible not to appreciate the people involved in the organization of the Polish Olympiad in Informatics and the editors of the monthly magazine Delta. They opened up new perspectives for me and steered me towards the academic world. At this point, I would also like to thank Damian Niwiński for his article ``Impossible Shortcut''\footnote{Article in Polish: \url{https://deltami.edu.pl/temat/matematyka/logika/2012/07/30/Niemozliwy_skrot/}}.\\

A very important stage in my education was the time spent at the 16th Primary and Middle School in Warsaw. My further thanks go to all my teachers from that period. Especially to Agnieszka Nowińska-Samsel for introducing me to the principles of mathematical rigour, and to Ewa Kietlińska-Zaleska for believing in my abilities.\\

I warmly thank my parents Ewa Stefańska and Grzegorz Stefański for the love, inspiration and effort they put into my education, and my grandmothers Teresa Rybak and Monika Stefańska for their care and devotion.\\

I save my final words of gratitude for my future wife Klaudia Grygoruk: Thank you for being who you are. 
\chapter*{Podziękowania}
Podziękowania chciałbym rozpocząć od skierowania słów wdzięczności do całej społeczności Wydziału Matematyki, Informatyki i Mechaniki Uniwersytetu Warszawskiego, pośród której przez ostatnie 10 lat miałem okazję rozwijać się jako informatyk. Dzięki wszystkim tworzącym ją ludziom prawie każdy dzień spędzony fizycznie w pracy jest prawdziwą przyjemnością. Szczególnie pragnę podziękować Mikołajowi Bojańczykowi – trudno mi sobie wyobrazić lepszego promotora. Dalsze podziękowania kieruję do Edona Kelmendi za wprowadzenie mnie w tajniki świata teoretycznej informatyki i pomoc w napisaniu pierwszej recenzji; do Nathana Lhote'a  za nieocenioną współpracę podczas lockdownu oraz do Janusza Schmude za wspólną doktorancką przygodę.\\

Przez ostatni rok miałem okazję pracować w grupie PPLV na University College London. Był to dla mnie bardzo cenny czas. Serdecznie dziękuję wszystkim, z którymi dane mi było go spędzić. Szczególne podziękowania kieruję pod adresem Samsona Abramsky’ego za inspirację do pracy z teorią kategorii, dobre rady oraz cierpliwość.\\

Jestem bardzo wdzięczny Diego Figueirze, Emmanuelowi Filiot i Andrzejowi Murawskiemu za czas
i wysiłek włożony w zrecenzowanie tej rozprawy oraz za ich wartościowe uwagi. Szczególne wyrazy uznania kieruję do
Emmanuela Filiot za jego obszerne, szczegółowe i wnikliwe komentarze, które wzbogaciły ostateczną wersję tej pracy.\\ 

Słowa podziękowań należą się również środowisku związanemu z XIV L.O. im. Stanisława Staszica w Warszawie, a zwłaszcza moim wychowawcom -- ś.p. Stanisławowi Lipińskiemu oraz ś.p. Jackowi Banasikowi; nauczycielom matematyki – Jerzemu Konarskiemu oraz Filipowi Smętkowi, a także nauczycielom informatyki – Hannie Stacherze i Joannie Śmigielskiej. Ponadto chciałbym podziękować Maciejowi Matraszkowi za pracę włożoną w prowadzenie kółka informatycznego.\\

Wspominając okres liceum, nie sposób nie docenić osób zaangażowanych w organizację Olimpiady Informatycznej oraz redakcję miesięcznika ,,Delta''. To one otworzyły przede mną nowe perspektywy i pokierowały mnie w stronę świata akademickiego. W tym miejscu chciałbym również podziękować Damianowi Niwińskiemu za jego artykuł ,,Niemożliwy Skrót''\footnote{\url{https://deltami.edu.pl/temat/matematyka/logika/2012/07/30/Niemozliwy_skrot/}}\\

Bardzo ważnym etapem mojego rozwoju był czas spędzony w Społecznej Szkole Podstawowej i Społecznym Gimnazjum nr 16 STO. Kolejne podziękowania kieruję do wszystkich nauczycieli obu tych placówek, w tym w sposób szczególny do Agnieszki Nowińskiej-Samsel za wprowadzenie mnie w zasady matematycznego rygoru oraz do Ewy Kietlińskiej-Zaleskiej, która zawsze wierzyła w moje możliwości.\\

Gorąco dziękuję moim rodzicom Ewie i Grzegorzowi Stefańskim za miłość, inspirację i wysiłek włożony w moje wykształcenie oraz moim babciom Teresie Rybak i Monice Stefańskiej za troskę i oddanie.\\

Wreszcie, wieńczące słowa kieruję do mojej przyszłej żony Klaudii Grygoruk: Dziękuję Ci za to kim jesteś. 
\chapter*{Introduction}
\section*{Regular languages}
Regular languages are great. One reason for this is their remarkable
definitional robustness: 
They have many substantially different definitions, all of which turn out
to be equivalent. Examples include one-way deterministic finite automata, two-way deterministic 
finite automata, finite monoids and  \textsc{mso} logic on words. (Other well-known 
definitions include regular expressions and nondeterministic automata, but we do not mention them 
because they do not fit the narrative of this thesis.)\\

Apart from being aesthetically 
pleasing, having so many equivalent definitions can be used to simplify proofs and algorithms. For example, 
the simplest way to test a regular language,
for nonemptiness, is to look at the language's automaton representation 
and check if some accepting state is reachable.
Thanks to the equivalence of the definitions, this gives us an
algorithm for every other representation of a regular language.
On the other hand, the declarative syntax of \textsc{mso} formulas
is usually more convenient for defining properties than the 
operational, low-level syntax of automata.
Also monoids and two-way automata have their advantages --
e.g. monoids play an important role in the renowned 
Krohn-Rhodes theorem\footnote{See \cite[Equation~2.2]{krohn1965prime} or Theorem~\ref{thm:classical-kr}.},
and two-way automata are very useful
when defining regular transductions\footnote{See \cite{engelfriet2001mso} or 
Item~3 in the introduction to Chapter~3.}.\\

These desirable properties 
of regular languages have inspired an effort to extend them to object 
such as trees, graphs or words over infinite alphabets.  
This line of research has already seen many important results: for example,
B\"uchi showed that automata and \textsc{mso} coincide for $\omega$-words,
and Rabin showed the same for infinite trees.
More recent examples include work on regular languages of graphs\footnote{See \cite{courcelle2012graph}.}
or regular languages over abstract monads\footnote{See \cite{bojanczyk2020languages}.}.\\

The direction relevant to this thesis is the study of infinite alphabets. 
This line of research, started by Kaminski and Francez in the paper
\cite{kaminski_finite_memory_paper}, has proved itself to be challenging.
The study of regularity over infinite alphabets can be divided into three phases:

\begin{enumerate}
\item \textbf{Register automata.} This phase was initiated by \cite{kaminski_finite_memory_paper}.
The idea is to equip finite automata with registers so that they can store 
the input letters and compare them with each other\footnote{See Section~\ref{sec:dra} for details.}.
(In this model, letters from the infinite alphabet can only be compared for equality; this restriction will be true for all models discussed in this thesis.) However, it was quickly\footnote{Some of the inequivalence results are already present in \cite{kaminski_finite_memory_paper}, 
see \cite{neven2004finite} for a complete overview. A few of them are also presented in Section~\ref{sec:dofa-variants} of this thesis.}
discovered that most of the models of language recognizers over infinite alphabets are pairwise inequivalent, with only trivial inclusions being valid. In particular 
one-way register automata, two-way register automata, orbit-finite monoids\footnote{
    A natural extension of finite monoids 
    for infinite alphabets. See \cite[Defnition 3.1]{bojanczyk2013nominal}. 
}, and \textsc{mso}$^\sim$
logic\footnote{
    A natural extension of the \textsc{mso} logic 
    for infinite alphabets. See \cite[Section~2.4]{neven2004finite}.}
on words are pairwise nonequivalent.
Moreover, 
already two of those models -- two-way register automata and \textsc{mso}$^\sim$ -- 
have undecidable emptiness\footnote{
    For proof for two-way register automata see Section~\ref{subsec:2dra} or \cite[Theorem~5.3]{neven2004finite}. 
    For \textsc{mso}$^\sim$, see \cite[Theorem~3.2~and~Theorem~5.1]{neven2004finite} combined with the fact that \textsc{mso}$^\sim$ is closed under compositions.
}.
Those results oriented
the first phase towards finding possibly strongest 
models that still have decidable emptiness. Examples of such models include data automata\footnote{
    See \cite{bojanczyk2011two} or \cite[Section~2.1]{bojanczyk2019slightly}.
} and alternating automata with one register\footnote{
    See \cite{demri2009ltl}, \cite{jurdzinski2011alternating},
    and \cite[Section~1.3]{bojanczyk2019slightly}.
}.
For more details about the first phase, see \cite[Part~1]{bojanczyk2019slightly}.\\

\item \textbf{Orbit-finite automata}. This second phase was initiated by \cite{bojanczyk2013nominal} and \cite{lasota2014automata}, inspired by work on nominal sets \cite{pitts2013nominal}.
The general goal of this phase is to build a definitional framework, in which the various existing automata models for infinite alphabets become simply the same as the classical models for finite alphabets, except with new notions of finite sets and functions between them.
The appropriate 
framework turned out to be the already existing\footnote{
    See \cite{pitts2013nominal}, or Section~\ref{sec:sets-with-atoms} for details and more bibliographical notes.
} category of nominal sets with finitely supported functions (denoted as $X \fsto Y$), but enhanced 
with a novel notion of \emph{orbit-finiteness}.
Notable successes of this phase include 
Myhill-Nerode theorem and Angluin-style learning for 
deterministic orbit-finite automata\footnote{
    See \cite[Theorem~5.2]{lasota2014automata} and
    \cite{moerman2017learning}.
}. 
More recent results include the development 
of a theory of orbit-finite-dimensional vector spaces 
and decidability for equivalence 
of weighted register automata\footnote{See \cite{bojanczyk2021orbit} for details on both of those results.}.
For more details about the second phase, see \cite[Sections~3~and~5]{bojanczyk2019slightly}. \\

\item \textbf{Single-use automata}. Finally (what I hope can be described as) the third phase of studies of infinite alphabets was started 
by \cite{stefanski2018automaton} and \cite{single-use-paper} and
is further developed in this thesis
(with some of the ideas traceable back to \cite{bojanczyk2013nominal} and \cite{ley2015logics}).
The key observation is that we can recover the definitional robustness of regular languages 
by introducing the \emph{single-use restriction}. 
In the setting of register automata, this restriction
amounts to the requirement that every read access to a register should
have the side effect of destroying the register's value\footnote{See Section~\ref{sec:su-register-automata} for details. }.
In a more abstract setting, 
the single-use restriction means replacing finitely supported functions ($X \fsto Y$)
with \emph{single-use functions}\footnote{The notion of single-use functions is a novel contribution of this thesis. See Definition~\ref{def:single-use-functions}. } (denoted as $X \suto Y$). Under the single-use restriction,
   one recovers many of the equivalences that were true for finite alphabets but failed for infinite alphabets,
   including the equivalence of the following models: one-way and two-way automata, orbit-finite monoids, and
   a version of \textsc{mso}$^\sim$\footnote{
    Orbit-finite monoids were introduced in \cite{bojanczyk2013nominal}. 
    Rigidly-guarded \textsc{mso}$^\sim$ was introduced and shown to be equivalent 
    to orbit-finite monoids in \cite{ley2015logics}. One-way and two-way 
    single-use register automata were introduced in \cite{stefanski2018automaton}, 
    and were shown to be equivalent to each other and to orbit-finite monoids 
    in \cite{single-use-paper}. The definitions and the equivalence 
    proof for orbit-finite monoids, single-use one-way automata
    and single-use two-way automata can also be found in this thesis. 
    See Section~\ref{subsec:orbit-fintie-monoids}, Definition~\ref{ref:single-use-automaton}, 
    Definition~\ref{def:single-use-two-way-automataon}, and Theorem~\ref{thm:dsua-2dsua-ofm}.
    For a novel topological perspective see \cite{urbat2023nominal}.},
    which is the first such four-way equivalence result for infinite alphabets.
    Presenting and extending the research about the third phase is the main 
    contribution of this thesis. 
\end{enumerate}

\section*{Transductions}
Another important extension of the theory of regular languages is the study of 
transductions, i.e. functions of type $\Sigma^* \to \Gamma^*$. The theory 
of transducers is mainly interested in transduction classes of low complexity -- 
a good litmus test is the decidability of the equivalence problem
(i.e. given two transducers $f$ and $g$ from a given class, is it possible to check 
if $f(w) = g(w)$ for every $w$). 
The classification of transductions is finer than the one of languages. Relevant classes include:
\[
    \begin{tabular}{ccccc}
        Mealy machines & $\subseteq$ &\small \begin{tabular}{c}Letter-to-letter\\ rational functions \end{tabular} & $\subseteq$ & Regular functions
    \end{tabular}
\]
Below, we present a brief description of each of these classes.
For more details, see the introduction to Chapter~3.\\

\emph{Mealy machines}\footnote{See \cite{mealy1955method} or Item~1 in the introduction to Chapter~3.} are a version of DFAs where all states are accepting, and 
every edge is additionally labelled by an output letter. Each such machine implements a 
length-preserving function, where every input position is replaced by the output label of
the corresponding transition.
Here is an example of a Mealy machine that implements the following 
function:
\[\textrm{``Change every other a to b''} \in  \{\textrm{a, b}\}^* \to \{\textrm{a, b}\}^*\]
\vsmallpicc{Mealy-ex}

\emph{Rational functions}\footnote{
    See \cite{eilenberg1974automata} or Item~2 in the introduction to Chapter~3.}
are all functions 
that can be expressed as a composition of a left-to-right Mealy machine 
and a right-to-left Mealy machine. Equivalent definitions include unambiguous nondeterministic Mealy machines
and Mealy machines with a regular look-ahead.
Here is an example of an unambiguous nondeterministic 
Mealy machine that implements the following function:
\[\textrm{``Swap the first and the last letter''} \in  \{\textrm{a, b}\}^* \to \{\textrm{a, b}\}^*\]
\smallpicc{Rational-ex}

\emph{Regular functions}\footnote{See \cite{engelfriet2001mso} 
Item~3 in the introduction to Chapter~3.} are all functions 
that can be implemented by two-way transducers, 
i.e. a version of two-way automata, 
where every transition may be additionally labelled with an output letter.
Regular functions exhibit a similar definitional robustness as 
regular languages. Equivalent definitions include
\textsc{mso}-transductions\footnote{See \cite[Section~2]{courcelle1994monadic}.},
string streaming transducers\footnote{See \cite[Section~3]{alur2010expressiveness} or Section~\ref{subsec:sst-def}.},
regular expressions with output\footnote{See \cite[Theorems~13~and~15]{alur2014combinators}.},
and many others.
Possibly for this reason, they have been the subject of significant research attention in recent years.\footnote{
    For examples, see \cite{dartois2017reversible}, \cite{dolatian2019learning}, \cite{chen2019decision}, \cite{nelson2020probing},
    or \cite{bojaczyk2023algebraic}.
}
Here is an example of a two-way transducer that implements the following 
regular function:
\[\textrm{``Reverse the input word''} \in  \{\textrm{a, b}\}^* \to \{\textrm{a, b}\}^*\]
\smallpicc{ex-two-way}

Without a robust theory of regularity, one cannot hope for a good 
theory of transducers. In particular, deciding the equivalence
of two transducers is at least as hard as the emptiness problem 
for their underlying automaton model. For example,  
the equivalence problem for (multiple-use)
two-way register transducers is undecidable.
In Chapters 3 and 4 of this thesis, we show that under the single-use restriction, 
one recovers a robust theory of transducers for infinite alphabets.
The main results are as follows.\\

In Chapter~3, we define and study \emph{single-use register Mealy machines}.
In particular, we show that they:
\begin{enumerate}
    \item admit a Krohn-Rhodes-like decomposition\footnote{This was initially shown in \cite{single-use-paper}. In this thesis this is Theorem~\ref{thm:kr}.}; and
    \item have an equivalent algebraic definition\footnote{This is a novel contribution of this thesis. See Section~\ref{sec:monoids-and-transductions}.}.
\end{enumerate}
The Krohn-Rhodes decomposition theorem for single-use Mealy machine
is the main technical contribution of this thesis
(it constitutes around one fourth of its total volume).
In Chapter~3, we also define an infinite-alphabet version of the rational functions,
show that it also admits Krohn-Rhodes-like decompositions, and 
we develop a similar algebraic theory\footnote{See Section~\ref{sec:rational-transductions-with-atoms}.} 
for them.\\

Finally, in Chapter~4, we present the theory of single-use two-way transducers.
We prove that the deterministic two-way single-use register\footnote{All of the following results were initially shown in \cite{single-use-paper}.}:
\begin{enumerate}
    \item also admit a Krohn-Rhodes-like decomposition;
    \item are closed under compositions;
    \item have decidable equivalence;
    \item are equivalent to the single-use variant of copyless SSTs over infinite alphabets
          (modification of SSTs from \cite[Section 2.2]{alur2011streaming});
    \item are equivalent to the infinite alphabet variant of regular list functions
          (modification of \cite[Section 6]{bojanczyk2018regular}).
\end{enumerate}

We believe that the results presented in this thesis justify the name \emph{regular languages over infinite alphabets} for 
the class of languages recognized by single-use deterministic register automata,
and the name \emph{regular functions over infinite alphabets} 
for the class of functions definable by single-use two-way register transducers.\\

\subsection*{Contributions}
This thesis is based on a single publication \cite{single-use-paper},
with the objective of delivering its results in a cohesive narrative.
Additionally, this thesis contains two novel (unpublished before) contributions:
The first one is defining \emph{single-use functions}
and using them to simplify the definitions of different 
models of single-use automata and transducers. 
The second one is developing the algebraic theory 
for single-use Mealy machines and single-use rational functions, 
and using this theory to simplify the proofs of the Krohn-Rhodes decomposition 
theorems for single-use Mealy machines and single-use two-way transducers. 
The results presented in this thesis are an outcome
of a close collaboration with my advisor Mikołaj Bojańczyk, 
and it is mostly impossible to partition the contributions individually.
However, for the sake of strengthening my Ph.D. application,
it might be worth noting that the original idea for the single-use restriction,
which initiated this line of research, was mine.

\chapter{Infinite alphabets}
\label{ch:inifnite-alphabets}
This chapter serves two purposes: The first purpose is to put this thesis in context.
The chapter presents some previously studied models of computation over infinite alphabets.
The second purpose is to lay out foundations for the upcoming chapters.
We present the theory of \emph{sets with atoms} and
\emph{orbit finiteness}, which is the modern abstract vocabulary for discussing infinite alphabets.
In the chapter, we follow the narrative of \cite{bojanczyk2019slightly} -- we start with
a rather concrete model of \emph{deterministic register automata}
and we abstract away towards \emph{deterministic orbit-finite automata}. Then, we prove that
the two models are equivalent. Although this proof is quite technical,
we include it in the thesis, because it illustrates some useful techniques for working with sets with atoms.
Finally, in the last section we define some of the well-studied variants of register automata
(including orbit-finite monoids), and we compare their expressive powers.

\section{Deterministic register automata}
\label{sec:dra}

We start describing the model of \emph{deterministic register automata}
with some intuition and examples. 
Let us fix a countably infinite alphabet and call it $\atoms$. Introducing infinite sets to 
automata theory is dangerous: infinite alphabets can only be processed using infinite state spaces, 
and automata with unrestricted infinite state spaces can recognize all possible languages.
To avoid this, we say that elements of $\atoms$ can only be compared for equality. This means that we are only going to consider languages that 
can be defined in terms of equality, for example:
\begin{enumerate}
    \item $\{ w \in \atoms^* \ | \ \textrm{the first letter of } w \textrm{ appears again in } w \}$ 
    \item $\{ w \in \atoms^* \ | \ \textrm{the first and the last letter of } w \textrm{ are equal} \}$
    \item $\{ w \in \atoms^* \ | \ \textrm{there are at most three different letters in } w \}$.
\end{enumerate}
(For now, we only talk about the intuition, and we do not define formally what it means for a language 
to be definable only in terms of equality.) A deterministic register automaton is a model that can
recognize some languages of this type, including the three example languages from the list.
It has a finite set of control states
and a finite set of registers, in which it stores some of the letters it has seen in the input word. 
It can compare the values of the registers with each other and with the input letter. Here are two examples:
\begin{example}
\label{ex:appears-again}
Let us describe a deterministic register automaton, that recognizes the language. 
\[ \textrm{``The first letter appears again''} \subseteq \atoms^* \]
The automaton has one register and $3$ control states:
$\{q_{\textrm{start}}, q_{\textrm{check}}, q_{\textrm{found}}\}$.
Let us go through the automaton's run on the word $1\ 2\ 3\ 2\ 1\ 3 \in \atoms^*$
(we use natural numbers to denote elements of $\atoms$).
Here is the initial configuration:
\smallpicc{First-again-1}
The automaton starts in $q_{\textrm{start}}$ with an empty register (represented in the picture as the empty box).
In its first step, the automaton stores the first letter in the register, sets its control state to $q_\textrm{check}$, 
and moves forward:
\smallpicc{First-again-2}
In the next step it compares the register value with the current letter.
Since they are different, it simply moves forward:
\smallpicc{First-again-3}
\noindent
The same transition happens two more times:
\smallpicc{First-again-4}
Now, when the automaton compares the current letter with the register value, it finds out that they are equal.
It has found a reappearance of the initial letter, so it sets its state to $q_{\textrm{found}}$ and proceeds to the next letter.
\smallpicc{First-again-5}
In the state $q_\textrm{found}$ the automaton ignores the input and keeps moving forward until the end of the word.
\smallpicc{First-again-6}
At the end of the word, the automaton accepts the input word if it is in the state $q_\textrm{found}$.
\end{example}

\begin{example}
\label{ex:at-most-3}
The automaton that recognizes the language 
\[ \textrm{``There at most $3$ letters in the word''} \subseteq \atoms^*  \] 
has three registers and four control states ($q_0$, $q_1$, $q_2$, $q_3$, and $q_{>3}$).
After reading some part of the input, the automaton remembers the letters it has already seen
(without repetitions) unless it has already seen more than $3$ of them.
Here is an example run of the automaton:
\picc{at-most-3-run}
\end{example}

Now, let us discuss the transition function of a deterministic register automaton.
The contents of the automaton's memory can be described as an element of the following set:
\[\underbrace{Q}_{\substack{
    \textrm{automatons's control state}
}} \times 
\underbrace{(\atoms + \bot)^R}_{\substack{
    \textrm{contents of each register, where}\\
    \textrm{$\bot$ represents empty registers}
}} \]
This means that the transition function should have the following type:
\[\underbrace{Q \times (\atoms + \bot)^R}_{\substack{
    \textrm{current memory state}
}} \times 
\underbrace{\atoms}_{\substack{
    \textrm{current letter}
}} \to
\underbrace{Q \times (\atoms + \bot)^R}_{\substack{
    \textrm{updated memory state}
}}  \]
As we have mentioned before, we cannot allow all transition functions of this type, 
or else a deterministic register automaton would become an unrestricted infinite state 
machine (and such machines can recognize all languages, including ``the length of the word is prime'',
and ``the length of the word encodes a halting Turing machine'').
In order to avoid that, we restrict the power of the transition function.
The intuition behind this restriction is that the only allowed operations on registers should be:
\begin{enumerate}
    \item comparing for equality two register values;
    \item comparing for equality a register value with the input letter; and
    \item saving the input letter in one of the registers.
\end{enumerate}
We describe three ways of formalizing this restriction, and prove that they are equivalent:
\begin{description}
    \item[1. Syntactic equivariance]\label{it:syntactic-equivariance}
    For this definition we provide
    syntax for specifying the transition function -- every function that
    can be specified in this syntax is syntactically equivariant.
    A transition function is specified
    as a list of conditional commands, each of the following form:
    \[ \textrm{list of conditions} \rightarrow \textrm{list of actions} \]
    To apply the transition, the automaton goes through the list of conditional comments (top-down), finds
    the first command in which every condition is satisfied, and performs all the
    actions from that command. (To make this a total function, we say that if there is no command 
    whose all conditions are satisfied, then the automaton stays in the same configuration.)
    Conditions and actions are of the following types (each condition can appear both with $=$
    and with $\neq$):
    \begin{center}
        \begin{tabular}{|l | l|}
            \hline
            Example & Description \\
            \hline
            $\mathtt{r}_1 = \mathtt{r}_2$ & Compare two register values \\
            $\mathtt{r}_2 \neq \mathtt{input}$ & Compare a register value with the current input letter \\
            $\mathtt{r_5} = \bot$ & Check if a register is empty \\
            $\mathtt{state} = \mathtt{q}_5$ & Check that the automaton is in a particular state\\
            \hline
        \end{tabular}
    \end{center}

    \begin{center}
        \begin{tabular}{| l | l |}
            \hline
            Example & Description \\
            \hline
            $\mathtt{r}_2 \, := \mathtt{input}$ & Save the current letter to a register\\ 
            $\mathtt{r}_1 := \mathtt{r}_3$ & Copy a register value into another register. \\
            $\textrm{swap}(\reg_2, \reg_4)$ & Swaps the contents of two registers.\\
            $\mathtt{r}_5 := \bot$ & Clear the contents of a register \\
            $\mathtt{state} := \mathtt{q}_3$ & Update the state of the automaton \\
            \hline
        \end{tabular}
    \end{center}

    \item[2. Semantic equivariance] The intuition behind this definition is that the valid transition
        functions are the ones that do not discriminate between letters i.e. the ones
        that commute with permutations of letters. To express this formally, we have
        to extend letters permutations ($\atoms \to \atoms$) to act on the set
        of all memory states of a register automaton, i.e. on the set:
        \[Q \times (\atoms + \bot)^R \]
        We do it in a natural way, by applying the permutation to every register
        value, leaving the control state and the empty registers unchanged.
        Here is an example of an action of $\pi$ that swaps letters $3$ and $7$ (and does 
        not touch other letters):
        \vsmallpicc{action-ex}
        We can naturally extend this action to the domain of the transition function:
        \[(Q \times (\atoms + \bot)^R)  \times \atoms\] 
        Finally, we say that a transition function $\delta$ is
        semantically equivariant if for every letters permutation $\pi$ and for every $x$:
        \[ \delta(\pi(x)) = \pi(\delta(x)) \]

    \item[3. Haskell-style equivariance] In this definition, we use Haskell's type system.
    The key idea is to encode $\atoms$ as the polymorphic type:
    \[\mathtt{Eq \, a \ \Rightarrow \  a}\]
    We continue by encoding the finite sets $R = \{r_1, \ldots, r_k\}$ and $Q = \{q_1, \ldots, q_n\}$ as variant types,
    explicitly enumerating all their elements:
    \[
        \mathtt{data\, Q = q_1 \, | \, q_2 \, | \, \ldots \, |  \, q_n}
    \]
    \[
        \mathtt{data\, R = r_1 \, | \, r_2 \, | \, \ldots \, |  \, r_k}
    \]
    Then, we encode a configuration of a register automaton as:
    \[ \mathtt{Eq \, a \Rightarrow (Q \ , \  R \to Maybe\, a)} \]
    Finally, we define the valid transition functions to be the Haskell-definable total functions of the following type:
    \[
        \mathtt{Eq\, a\ \Rightarrow\  (Q,\, R \to Maybe \, a) \to a \to (Q,\, R \to Maybe \, a)} 
    \]
\end{description}

We prove that those definitions are all equivalent, starting with equivalence between semantic and syntactic equivariances.
They are versions of definitions from \cite[Section 1.1]{bojanczyk2019slightly} adapted for transition functions (original definitions work with transition relations), so this proof is very similar to the proof of Lemma~1.3
from \cite{bojanczyk2019slightly}:\\

\textbf{Syntactic equivariance $\Rightarrow$ Semantic equivariance:}
Notice that applying letter permutations to an element of the domain (i.e. $(Q \times (\atoms + \bot)^R)  \times \atoms$) preserves all
conditions from the syntactic definition. This means that $x$ and $\pi(x)$ will be dispatched to the
same conditional command. Also, all actions only move
letters around, so they commute with letter permutations.\\

\textbf{Semantic equivariance $\Rightarrow$ Syntactic equivariance:}
In this proof we look into the structure of the domains of transition functions. We say that
two elements of $(Q \times (\atoms + \bot)^R)  \times \atoms$ are in the same \emph{orbit} if they differ only by a letter permutation. Formally, we
define an orbit of $x$, to be the following set:
\[ \{ \pi(x) \, | \, \pi \textrm{ is an $\atoms$-permutation}\} \]
Notice, that being in the same orbit is an equivalence relation -- two orbits are either equal or
disjoint. It is not hard to see that two elements of the domain belong to the same orbit, if and only if
every condition from the definition of syntactic equivariance is satisfied either in both of them or
in none of them. There are only finitely many of those conditions,
so the following two claims are consequences of this observation:
\begin{claim}
    There are finitely many orbits in $(Q \times (\atoms + \bot)^R)  \times \atoms$.
\end{claim}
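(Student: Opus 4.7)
The plan is to show that each orbit is determined by a finite amount of combinatorial data, namely the control state together with the pattern of emptiness and equalities among the register values and the input letter. Since this data ranges over a finite set, the number of orbits is bounded.

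More concretely, I would associate to each element $x = ((q, f), a) \in (Q \times (\atoms + \bot)^R) \times \atoms$ its \emph{type}, consisting of: (i) the control state $q$, (ii) the subset $E_x \subseteq R$ of registers that are empty under $f$, and (iii) the equivalence relation on $(R \setminus E_x) \cup \{\mathtt{input}\}$ given by ``contains the same element of $\atoms$''. Because $Q$ and $R$ are finite, the collection of possible types is finite, bounded by $|Q| \cdot 2^{|R|} \cdot B_{|R|+1}$ where $B_n$ is the Bell number counting partitions of an $n$-element set.

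Next I would prove that two elements of the domain lie in the same orbit if and only if they have the same type. The forward direction is immediate: applying an $\atoms$-permutation $\pi$ leaves the control state untouched, leaves the empty registers empty, and preserves equalities and disequalities of atom values. For the converse, given two elements $x$ and $y$ of the same type, I would construct a permutation sending $x$ to $y$ by first picking a bijection between the finite set of atoms appearing in the non-empty registers and input of $x$ and the corresponding atoms in $y$ (well defined and atom-preserving because the equality pattern is the same), and then extending it to a permutation of all of $\atoms$ using the fact that the complement of a finite subset of $\atoms$ is infinite, so any bijection between two finite subsets of $\atoms$ can be extended to a bijection of the cofinite complements.

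The main obstacle is the mild bookkeeping involved in showing that equal types actually yield a well-defined partial bijection and that it extends to a permutation of $\atoms$, but this is routine once the type is correctly defined. Once the if-and-only-if is established, finiteness of the orbit set follows from finiteness of the type set, yielding the claim.
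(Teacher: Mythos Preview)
Your proof is correct and takes essentially the same approach as the paper: the paper observes that two elements lie in the same orbit if and only if they satisfy exactly the same finite set of syntactic conditions (state check, emptiness checks, equality/disequality checks between registers and input), which is precisely your ``type'' data repackaged. You are more explicit than the paper in constructing the witnessing permutation for the converse direction and in giving an explicit bound on the number of orbits, but the core idea is identical.
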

\begin{claim}
    For every orbit $O \subseteq (Q \times (\atoms + \bot)^R)$, there is a list of conditions,
    such that an element belongs to $O$, if and only if it satisfies all the conditions from the list. 
\end{claim}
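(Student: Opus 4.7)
The plan is to pick any representative $x_0 \in O$, take the list of conditions to be exactly the atomic conditions from the syntactic definition that $x_0$ satisfies (together with the negations of those it does not), and then show that this list characterizes $O$. The list is finite because for each pair of registers there is a condition $\reg_i = \reg_j$, for each register a condition $\reg_i = \In$ and a condition $\reg_i = \bot$, and for each state a condition $\State = q$; since $R$ and $Q$ are finite, only finitely many such conditions exist.

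The easy direction is that every element of $O$ satisfies the chosen list. By definition, any $y \in O$ is of the form $\pi(x_0)$ for an $\atoms$-permutation $\pi$, and every atomic condition is preserved under the action of $\pi$: permutations preserve equalities and inequalities between atoms, preserve the marker $\bot$ (which they do not move), and do not touch the control state. Hence $y$ satisfies exactly the same atomic conditions as $x_0$.

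The main obstacle is the converse direction: given $y$ satisfying every condition on the list, I need to produce a permutation $\pi$ with $\pi(x_0) = y$. The idea is that the equality pattern among the atoms appearing in $x_0$ determines it up to an $\atoms$-permutation. Let $A_0, A_y \subseteq \atoms$ be the finite sets of atoms appearing in the register values and input letter of $x_0$ and $y$ respectively. Because $x_0$ and $y$ agree on every condition of the form $\reg_i = \reg_j$, $\reg_i = \In$, and $\reg_i = \bot$, the map sending the atom in each occupied position of $x_0$ to the atom in the corresponding position of $y$ is well-defined (no conflicts arise from two positions holding the same atom in $x_0$ but different atoms in $y$, or vice versa) and is a bijection $f : A_0 \to A_y$. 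Since $\atoms$ is countably infinite, such a bijection between finite subsets extends to a permutation $\pi$ of $\atoms$, and because $x_0$ and $y$ also share the same control state (by the condition $\State = q$), we obtain $\pi(x_0) = y$, so $y \in O$ as required.
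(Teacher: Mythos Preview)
Your proof is correct and follows essentially the same approach as the paper. The paper presents this claim as an immediate consequence of the observation (stated just before the claim) that two elements of the domain belong to the same orbit if and only if they satisfy exactly the same atomic conditions; you have spelled out the argument for that observation in full, constructing the required permutation explicitly from the shared equality pattern.
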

We finish the proof by showing that we can use actions from the syntactic definition to implement
the transition function for each orbit:
\begin{claim}
    \label{claim:universal-sequence-of-actions}
        For every orbit, there is a universal sequence of actions, that transforms every $x$ in that orbit into $\delta(x)$.
    \end{claim}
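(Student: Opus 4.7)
The plan is to fix an arbitrary representative $x_0 \in O$, compute its image $y_0 := \delta(x_0)$, and hand-craft a fixed sequence of syntactic actions that transforms $x_0$ into $y_0$. The same sequence will then automatically transform every other $x \in O$ into $\delta(x)$, because both $\delta$ (by hypothesis) and every primitive action (by inspection) commute with permutations of $\atoms$.

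The first substantive step will be a structural lemma: every atom appearing in $y_0$ either equals the input letter carried by $x_0$ or already appears in one of the registers of $x_0$. I would argue this by contradiction. Suppose some atom $a$ in $y_0$ neither appears in $x_0$ nor equals the input letter. Since $\atoms$ is infinite but the set of atoms mentioned by $x_0$ and $y_0$ is finite, I can pick a fresh $a' \neq a$ that does not occur in $x_0$, is not the input letter, and does not occur in $y_0$. The transposition swapping $a$ and $a'$ then fixes $x_0$ but moves $y_0$, contradicting $\pi(y_0) = \pi(\delta(x_0)) = \delta(\pi(x_0)) = \delta(x_0) = y_0$.

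Given this support bound, every register of $y_0$ falls into one of three cases -- it is $\bot$, it holds the input letter, or it holds the value of some specific register of $x_0$ -- and the new control state is a fixed element of $Q$ determined by the orbit. I would then realise this target configuration as a short program in the allowed syntax: first a sequence of $\mathtt{swap}$ operations to permute original register contents toward their target positions, then copy actions $\reg_i := \reg_j$ to duplicate values that need to be broadcast to several destinations, then overwrites $\reg_i := \In$ and clears $\reg_i := \bot$ in the remaining slots, and finally $\State := q$. The only care required is to schedule copies so that no source register is overwritten before its value has been duplicated everywhere it is needed; processing each source value's targets in one batch before moving on suffices.

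Calling the resulting fixed program $\alpha$, the concluding step is to observe that each primitive action is itself equivariant -- renaming atoms does not alter which register is written, cleared, swapped, or which state is set -- so the composite $\alpha$ commutes with every permutation. Thus for any $x = \pi(x_0) \in O$ we get $\alpha(x) = \alpha(\pi(x_0)) = \pi(\alpha(x_0)) = \pi(\delta(x_0)) = \delta(\pi(x_0)) = \delta(x)$, which is exactly the claim. I expect the only real obstacle to be the support argument in the second paragraph, which is the one place where infiniteness of $\atoms$ is genuinely used; the rest is pure bookkeeping with swaps, copies, and writes.
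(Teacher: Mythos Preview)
Your proposal is correct and follows essentially the same approach as the paper: both prove that every atom in $\delta(x_0)$ already occurs in $x_0$ via a transposition-with-a-fresh-atom contradiction, then construct an action sequence for the representative and extend it to the whole orbit using equivariance of both $\delta$ and the primitive actions. Your write-up is slightly more explicit about how to schedule the swaps and copies, whereas the paper simply asserts that ``the actions can modify the state, clear the registers, and move the letters around'' and focuses on the support argument; the chain of equalities at the end is identical.
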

    \begin{proof}
        We start by showing that there is a sequence of actions that transforms some $x$ from the orbit into 
        $\delta(x)$. Take some $x$.  The actions can modify the state, clear the registers, and move the letters around,
        so the only difficulty is to show that every letter that appears in $\delta(x)$ also appears in $x$.
        Assume towards a contradiction that there exists $a \in \atoms$ that appears in $\delta(x)$, but
        not in $x$. Let $\pi$ be a letter permutation that swaps $a$ with a fresh letter (i.e. a letter that
        appears neither in $x$ nor in $\delta(x)$) and does not touch other letters. This means that:
        \[
            \begin{tabular}{ccc}
            $\pi(x) = x $,&but  $\pi(\delta(x)) \neq \delta(x)$,& so $\delta(\pi(x)) \neq \pi(\delta(x))$
            \end{tabular}
        \]
        This contradicts the assumption that $\delta$ is semantically equivariant.
        It follows 
        that there exists a sequence of actions that transforms $x$ into $\delta(x)$. Call it $a_\delta$,
        and let us prove that it is universal, i.e. that for every $x'$ from the orbit of $x$: 
        \[\delta(x') = a_\delta(x')\]
        If $x$ and $x'$ are in the same orbit, then $x' = \pi(x)$, for some $\pi$. The function
        $\delta$ is semantically equivariant, so:
        \[ \delta(x') = \pi(\delta(x)) \]
        By definition of $a_\delta$:
        \[\pi(\delta(x)) = \pi(a_\delta(x)) \]
        Notice that every individual action commutes with letter permutations. This means
        that $a_\delta$, which a composition of individual actions, also commutes with letter permutations:
        \[\pi(a_\delta(x)) = a_\delta(\pi(x))\]
        Finally, since $x =\pi(x') $:
        \[a_\delta(\pi(x)) = a_\delta(x')\]
\end{proof}

\textbf{Syntactic equivariance $\Rightarrow$ Haskell-style equivariance:}
This implication is immediate -- the syntax of syntactically equivariant functions can be
directly translated into Haskell.\\

\textbf{Haskell-style equivariance $\Rightarrow$ Semantic equivariance:}
This implication follows (for free) from \cite[Section 3.4]{wadler1989theorems}.

\section{Sets with atoms}
\label{sec:sets-with-atoms}
In this section, we introduce the abstract notion of \emph{sets with atoms}.
Intuitively, 
\emph{atoms} is a fixed countably infinite set whose elements can only be compared for equality
(e.g. the infinite alphabet of register automata)
and 
\emph{sets with atoms} are sets whose elements can store a finite number of atoms.
One example of a set with atoms, that we have already seen is:
\[Q \times (\atoms + \bot)^R \]

Before we start, we include a short bibliographical note: Sets with atoms were first studied
by Fraenkel in 1920, and then by Mostowski in the 1930s. Both of those authors studied
them as potential alternative models of set theory, which do not admit
the axiom of choice. In computer science, they were
rediscovered (under the name of nominal sets) by Gabbay and Pitts in \cite{gabbay2002new}
who consider their applications to semantics. In the context of formal language theory,
they were first studied by Bojańczyk in \cite{bojanczyk2013nominal}
and by Bojańczyk, Klin and Lasota in \cite{lasota2014automata}. This section is
mainly based on \cite[Chapter~2]{bojanczyk2019slightly} and on
\cite[Section~1,~2,~and~5]{pitts2013nominal}.

\subsection{Action of atoms permutations and its supports}
Semantic equivariance was defined only in terms of the action of atom\footnote{From now on, we will refer to the elements of $\atoms$ as atoms.} permutations on the set $Q \times (\atoms + \bot)^R$.
This means that we could define semantic equivariance for a function $X \to Y$, as long as we know that
both $X$ and $Y$ are equipped with an action of the group of atom permutations. Another important property 
of $Q \times (\atoms + \bot)^R$ is that each of its elements contains only finitely many atoms. This can 
be abstractly defined in terms of \emph{supports}: 
\begin{definition}
    Let $X$ be a set equipped with an action of the group of atom permutations.
    We say that a subset of atoms $\alpha \subseteq \atoms$ supports an element $x \in X$,
    if for every permutation $\pi$:
    \[\begin{tabular}{ccc}
       $\substack{\textrm{for every } a \in \alpha \comma\\ \pi(a) = a}$ &
       $\Rightarrow$ &
       $\pi(x) = x$
    \end{tabular}\]
We say that $x$ is \emph{equivariant}, if it is supported by the empty set.
\end{definition}
For example, the element $(1, 2, 1) \in \atoms^3$ (equipped with the natural action)
is supported by sets $\{1, 2\}$ and $\{1, 2, 3\}$, but not by $\{1\}$.
More generally, every element $ (x, y, z) \in \atoms^3$ is supported 
by the finite set $\{x, y, z\}$.
\begin{definition}
A \emph{set with atoms} is a set equipped with an action of atom permutations
such that all its elements have finite supports.
\end{definition}
\noindent
For example, the set 
\[ Q \times (\atoms + \bot)^R\]
is a set with atoms, as long as $R$ is finite. This is because every element of the set
is supported by the set of  at most $|R|$ atoms. Another example of set with atoms is 
the set of all finite words over atoms i.e. $\atoms^*$. This is because every word
has a support no bigger than the word's length. On the other hand, 
the set of all subsets of atoms $P(\atoms)$ is not finitely supported. This is because 
elements of $P(\atoms)$ that are neither finite nor cofinite have infinite supports:
\begin{lemma}
\label{lem:pfsa-fin-cofin}
A subset $X \subseteq \atoms$ is finitely supported (i.e. has some finite support), if and only if either
$X$ or $(\atoms - X)$
is finite. 	
\end{lemma}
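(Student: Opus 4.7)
The plan is to prove both directions directly from the definition of support, using that the action of a permutation $\pi$ on a subset $X \subseteq \atoms$ is $\pi(X) = \{\pi(a) : a \in X\}$.

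For the easy direction, I would observe that any finite subset $\alpha \subseteq \atoms$ supports itself: if $\pi$ fixes every element of $\alpha$ pointwise, then $\pi(\alpha) = \alpha$. Hence if $X$ is finite, $X$ itself is a finite support of $X$. If $\atoms - X$ is finite, then it supports $\atoms - X$ by the same argument, and since the action commutes with complementation, it also supports $X$.

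For the harder direction, suppose $X$ has some finite support $\alpha$. I would argue by contradiction: assume that both $X \setminus \alpha$ and $(\atoms - X) \setminus \alpha$ are nonempty, and pick $a \in X \setminus \alpha$ and $b \in (\atoms - X) \setminus \alpha$. Let $\pi$ be the transposition swapping $a$ and $b$, fixing every other atom. Since neither $a$ nor $b$ is in $\alpha$, the permutation $\pi$ fixes every element of $\alpha$ pointwise, so by the definition of support we must have $\pi(X) = X$. But $a \in X$ gives $\pi(a) = b \in \pi(X) = X$, contradicting $b \notin X$. Therefore one of $X \setminus \alpha$ or $(\atoms - X) \setminus \alpha$ is empty, which means $X \subseteq \alpha$ (so $X$ is finite) or $\atoms - X \subseteq \alpha$ (so $\atoms - X$ is finite).

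I do not expect any real obstacle here; the only subtle point is remembering that ``$\alpha$ supports $x$'' quantifies over permutations fixing $\alpha$ pointwise, not over permutations fixing $\alpha$ setwise, which is exactly what makes the transposition argument work. The transposition is the canonical witness one reaches for when distinguishing two orbits of atoms relative to $\alpha$, and it will reappear throughout the subsequent development.
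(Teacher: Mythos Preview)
Your proof is correct and follows essentially the same approach as the paper's: both hinge on the observation that a finite $\alpha$ fails to support $X$ precisely when there is an atom in $X \setminus \alpha$ and an atom in $(\atoms - X) \setminus \alpha$, witnessed by swapping them. Your version is simply more explicit, spelling out the transposition and the contradiction where the paper compresses this into a single sentence.
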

\begin{proof}
	Take a subset of atoms $X$, and its potential finite support $\alpha$.
    An $\alpha$-permutation $\pi$ modifies $X$ (meaning that $\pi(X) \neq X$),
    if and only if $\pi$ transforms some atom from $X$
    into an atom from outside of $X$. This is only prevented if $\alpha$
	contains all elements of $X$ or all elements from $\atoms - X$. This is possible
    if and only if one of those sets is finite.
\end{proof}
It follows that sets with atoms are not closed under the powersets (i.e. $P(X)$ might not be a set with atoms even if $X$ is).
Instead,
we define $P_{\textrm{fs}}(X)$, which is the set of all finitely supported
subsets of $X$. 
For example, $P_\fs(\atoms)$ is the set of all finite and cofinite subsets of $\atoms$.
It is worth pointing out that $P_{\textrm{fs}}(X)$ is usually 
different from the set of all finite and cofinite subsets of $X$.
For example, the set $\{ (x, 4) \ | \ x \in \atoms \}$ belongs to $P_\fs(\atoms^2)$, and it is neither finite nor cofinite.
It is not hard to see that if $X$ is a set with atoms, then so is $P_\fs(X)$. \\

Notice that equivariant subsets of sets with atoms are
sets with atoms themselves (with permutation action inherited from the superset).
Note that this is only true for equivariant subsets:
consider $\{7, 8\}$ -- a finitely supported subset of $\atoms$. It is very
easy to find a $\pi$, for which $\pi(7) \not \in \{7, 8\}$. This means
that the permutation action inherited from the superset (i.e. $\atoms$)
is not valid action for the set $\{7, 8\}$.\\

Another class of sets with atoms are the \emph{atomless sets},
which are sets equipped with the trivial action:
\[
    \begin{tabular}{cc}
         $\pi(x) = x$ & for every $\pi$
    \end{tabular}
\]
Every element of an atomless set is equivariant.

\subsubsection{The finitely supported relations and functions}
Sets with atoms are closed under many classical operations, including the product:
\begin{lemma}
If $X$ and $Y$ are sets with atoms, then so is $X \times Y$, with the following
permutation action:
\[ \pi((x, y)) = (\pi(x), \pi(y)) \]
\end{lemma}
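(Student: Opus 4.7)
The plan is to verify the two properties required by the definition of a set with atoms: that the proposed formula actually defines a group action of atom permutations on $X \times Y$, and that every element of $X \times Y$ has a finite support under this action.

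For the group action part, I would check the two axioms componentwise. The identity permutation $\mathrm{id}_\atoms$ acts as $\mathrm{id}_\atoms((x,y)) = (\mathrm{id}_\atoms(x), \mathrm{id}_\atoms(y)) = (x,y)$ because the actions on $X$ and $Y$ are group actions. Similarly, for any two atom permutations $\pi, \sigma$, one has $(\pi \circ \sigma)((x,y)) = ((\pi \circ \sigma)(x), (\pi \circ \sigma)(y)) = (\pi(\sigma(x)), \pi(\sigma(y))) = \pi((\sigma(x), \sigma(y))) = \pi(\sigma((x,y)))$. This is a routine verification and I would not dwell on it.

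The main content is the finite support claim. Take any $(x, y) \in X \times Y$. Since $X$ is a set with atoms, $x$ has some finite support $\alpha_x \subseteq \atoms$; similarly $y$ has some finite support $\alpha_y \subseteq \atoms$. I claim that $\alpha_x \cup \alpha_y$ supports $(x,y)$. Indeed, if $\pi$ is an atom permutation fixing every atom in $\alpha_x \cup \alpha_y$, then $\pi$ in particular fixes every atom of $\alpha_x$, so $\pi(x) = x$, and similarly $\pi(y) = y$; therefore $\pi((x,y)) = (\pi(x), \pi(y)) = (x, y)$. Since a finite union of finite sets is finite, $\alpha_x \cup \alpha_y$ is a finite support for $(x,y)$.

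There is no real obstacle here; the argument is essentially a two-line componentwise check plus the observation that finite unions preserve finiteness. The only subtlety worth flagging, in case it matters later in the text, is that the construction in fact shows something slightly stronger: if $\alpha$ supports $x$ and $\beta$ supports $y$, then $\alpha \cup \beta$ supports $(x,y)$, so supports behave well under pairing.
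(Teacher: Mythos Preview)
Your proof is correct and follows essentially the same approach as the paper: the paper's proof is the one-line observation that if $\alpha$ supports $x$ and $\beta$ supports $y$, then $\alpha \cup \beta$ supports $(x,y)$, which is exactly the core of your argument (and the ``slightly stronger'' fact you flag at the end). You additionally spell out the group-action axioms, which the paper omits as routine.
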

\begin{proof}
	It is easy to see that, if $x$ is supported by $\alpha$
    and $y$ is supported by $\beta$, then $(x, y)$ is supported by $\alpha \cup \beta$. 
    (Note that the lemma does not extend to infinite products.)
\end{proof}
A \emph{finitely supported relation} between two sets with atoms $X$ and $Y$ is a finitely supported subset of $X \times Y$,
i.e. an element of $P_\fs(X \times Y)$. 
This means that the permutation action on a relation $(\sim) \in P_\fs(X \times Y)$  is defined as follows:
\[
    \begin{tabular}{ccc}
        $x \  (\pi \! \sim) \ y$ & $\stackrel{\textrm{def}}{\iff}$ & $\pi(x) \sim \pi(y)$
    \end{tabular}
\]
Sets with atoms are closed under equivariant quotients:
Let $(\sim) \subseteq X \times X$ be an equivariant equivalence relation,
then $X_{/\sim}$ is a set with atoms. Its permutation action defined as follows:
\[ \pi \left([x]_\sim\right) = \left[\pi (x) \right]_\sim \]
It is easy to see that $[x]_\sim$ is supported by whatever supports $x$\\

A \emph{finitely supported function} is a finitely supported relation that happens to be a function.
(i.e. for all $x$ there is exactly one $y$ such that $x \sim y$). 
This leads to the following definition of permutation action on functions:
\[ \pi(f) = \pi \circ f \circ \pi^{-1} \]
It follows that a function is supported by $\alpha$ if, and only if for every $\pi$ that preserves 
all atoms from $\alpha$ (we call such $\pi$ an $\alpha$-permutation), it holds that:
\[ f(\pi(x)) = \pi(f(x)) \]
We denote the set of all finitely supported functions between two sets with atoms as $X \fsto Y$.
Similarly, we write $X \eqto Y$ for the set of all equivariant functions. The intuition behind the 
next lemma is that finitely supported functions cannot create new atoms:
\begin{lemma}
\label{lem:fs-functions-preserve-supports}
    If $x$ is supported by $\alpha$ and $f$ is supported by $\beta$ then $f(x)$ supported
    by $\alpha \cup \beta$. In particular, if $f$ is equivariant, then $f(x)$ is supported 
    by $\alpha$. 
\end{lemma}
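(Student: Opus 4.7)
The plan is to unfold the definition of support for both $x$ and $f$ and chain them together. Concretely, what needs to be checked is that for every permutation $\pi$ which pointwise fixes $\alpha \cup \beta$, we have $\pi(f(x)) = f(x)$; this is exactly the condition for $\alpha \cup \beta$ to support $f(x)$.

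First I would fix an arbitrary $(\alpha \cup \beta)$-permutation $\pi$. Since $\pi$ in particular fixes every atom of $\alpha$ and $x$ is supported by $\alpha$, the defining property of support gives $\pi(x) = x$. Since $\pi$ also fixes every atom of $\beta$, the equational characterization of $\beta$-supported functions stated in the paragraph preceding the lemma — namely $f(\pi(y)) = \pi(f(y))$ for every $\beta$-permutation $\pi$ — applies at $y = x$, yielding $f(\pi(x)) = \pi(f(x))$.

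Combining the two equalities gives $\pi(f(x)) = f(\pi(x)) = f(x)$, and since $\pi$ was an arbitrary $(\alpha \cup \beta)$-permutation this proves the first claim. For the ``in particular'' clause, equivariance of $f$ means that $f$ is supported by the empty set, so one takes $\beta = \emptyset$ in the first claim and obtains that $\alpha = \alpha \cup \emptyset$ supports $f(x)$.

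The proof is essentially a one-line composition of the two support hypotheses, so there is no serious obstacle. The only point worth being careful about is which characterization of ``$f$ is supported by $\beta$'' to use: the raw set-theoretic definition coming from the action $\pi(f) = \pi \circ f \circ \pi^{-1}$ is slightly awkward to chain with the hypothesis on $x$, whereas the equivalent form $f \circ \pi = \pi \circ f$ on $\beta$-permutations, which the paper derives just before the lemma, makes the argument immediate.
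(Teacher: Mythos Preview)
Your proof is correct and essentially identical to the paper's own argument: take an arbitrary $(\alpha\cup\beta)$-permutation $\pi$, use that $\beta$ supports $f$ to get $\pi(f(x)) = f(\pi(x))$, then use that $\alpha$ supports $x$ to get $f(\pi(x)) = f(x)$. The paper does not spell out the equivariant special case, but your handling of it is fine.
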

\begin{proof}
    Choose any $(\alpha \cup \beta)$-permutation $\pi$. Then, because $\pi$ is both
    an $\alpha$-permutation and a $\beta$-permutation, we obtain that:
    \[ \pi(f(x)) \stackrel{\textrm{$\beta$ supports $f$}}{=} f(\pi(x)) \stackrel{\textrm{$\alpha$ supports $x$}}{=} f(x) \]
\end{proof}
Equivariant and finitely supported functions are closed under
compositions -- if $f$ is supported by $\alpha$ and $g$ is supported by $\beta$, then $f \circ g$ is supported by
$\alpha \cup \beta$.\footnote{This means that there are two types of categories over sets with atoms.
The first one is the category of all functions supported by some $\alpha$. When $\alpha = \emptyset$,
this is the category of all equivariant functions -- the $\mathbf{Nom}$ studied in \cite{pitts2013nominal}.
The other type of category is the broader category of all finitely-supported functions (where the support depends on the functions).}\\

We say that two sets with atoms ($X$ and $Y$) are isomorphic, if there exists an equivariant
bijection $f: X \eqto Y$, i.e. an equivariant function that is a surjection and an injection. 
The following lemma states that equivariant bijections have equivariant inverses.
\begin{lemma}
\label{lem:fs-rev-bjection}
Every finitely supported bijection $f : X \to Y$, has a finitely supported inverse function $f^{-1} : Y \to X$. 
Moreover, $f$ and $f^{-1}$ have the same supports, i.e. for every finite $\alpha \subset \atoms$:
\[
    \begin{tabular}{ccc}
    $\alpha$ supports $f$ & $\iff$ & $\alpha$ supports $f^{-1}$
    \end{tabular}
\]
\end{lemma}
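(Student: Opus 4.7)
The plan is to observe that since $f$ is a set-theoretic bijection, the inverse function $f^{-1}: Y \to X$ exists, and the only remaining task is to show that the supports of $f$ and $f^{-1}$ coincide. Because the relation between $f$ and $f^{-1}$ is symmetric (one is the inverse of the other), it suffices to prove one implication: if $\alpha$ supports $f$, then $\alpha$ supports $f^{-1}$. The finite supportedness of $f^{-1}$ then follows from the finite supportedness of $f$.

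For the one direction, I would unfold the definition of ``$\alpha$ supports $g$'' from the paragraph before the lemma: it means $g(\pi(z)) = \pi(g(z))$ for every $\alpha$-permutation $\pi$ and every $z$ in the domain. So fix an arbitrary $\alpha$-permutation $\pi$ and an arbitrary $y \in Y$, and set $x := f^{-1}(y)$, so that $f(x) = y$. Then chase equalities:
\[
\pi(y) \;=\; \pi(f(x)) \;=\; f(\pi(x)),
\]
where the second equality uses that $\alpha$ supports $f$. Applying $f^{-1}$ to both sides yields
\[
f^{-1}(\pi(y)) \;=\; \pi(x) \;=\; \pi(f^{-1}(y)),
\]
which is exactly the condition that $\pi$ commutes with $f^{-1}$. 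Since $y$ and $\pi$ were arbitrary, $\alpha$ supports $f^{-1}$.

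For the other direction, observe that $(f^{-1})^{-1} = f$, so applying the implication just proved to $f^{-1}$ gives: if $\alpha$ supports $f^{-1}$, then $\alpha$ supports $f$. This completes the equivalence and, in particular, shows that $f^{-1}$ is supported by any finite support of $f$, hence finitely supported.

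I do not expect a serious obstacle here; the proof is essentially a one-line calculation plus a symmetry remark. The only minor point to be careful about is the direction of the permutation action on functions (the formula $\pi(f) = \pi \circ f \circ \pi^{-1}$ given in the excerpt), which must be used consistently so that ``$\alpha$ supports $g$'' is interpreted as $\pi \circ g = g \circ \pi$ on every $\alpha$-permutation $\pi$.
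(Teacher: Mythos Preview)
Your proof is correct. The paper takes a slightly different, more relational route: since functions are defined as relations, it sets $f^{-1} = \{(y,x) \mid (x,y) \in f\}$ and observes (without spelling it out) that the coordinate-swap map $X \times Y \to Y \times X$ is equivariant, so $f$ and $f^{-1}$, viewed as subsets of $X \times Y$ and $Y \times X$ respectively, automatically have the same supports. Your argument instead works at the level of the functional characterization of supports (commutation with $\alpha$-permutations) and verifies the commutation for $f^{-1}$ directly by a short equality chase. Both are one-liners; the paper's version avoids even mentioning an arbitrary $\pi$ and $y$, at the cost of implicitly invoking that equivariant maps preserve supports, while yours is more self-contained.
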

\begin{proof}
    Functions are defined as relations, so $f \subseteq X \times Y$. This means that we can define:
    \[ f^{-1} = \{ (y, x) \ | \ (x, y) \in f \} \]
    Since $f$ is a bijection, then $f^{-1}$ is a function. It is easy to see that $f^{-1}$ is
    the inverse of $f$ and that it has the same supports as $f$. 
\end{proof}
A similar argument shows that all finitely supported injections admit a partial one-sided inverse. However,
as we will see in Section~\ref{sec:ra-ofa-straight}, there are equivariant surjections that
do not admit equivariant (or even finitely supported) one-sided inverses.

\subsection{Orbit-finite sets}
\label{subsec:orbit-finite-sets}
Finite sets with atoms are not very interesting -- it is not hard to see
that all finite sets with atoms have to be atomless. It turns out, however,
that there exists a suitable analogue of finiteness for sets with atoms called
\emph{orbit finiteness}\footnote{It was first introduced by Bojańczyk in \cite{bojanczyk2013nominal}}.
Intuitively, a set is orbit-finite if it has only finitely
many elements \emph{up to atom permutations}. Formally, we define an orbit\footnote{
	This is the same orbit as in the proof of Semantic equivariance $\Rightarrow$ Syntactic equivariance from Section~\ref{sec:dra}.}
of an element $x \in X$ to be
\[ \{ \pi(x) \ |\ \pi \textrm{ is an atom permutation } \}\tdot \]
Notice that every two orbits are either equal or disjoint,
which means that they divide $X$ into equivalence classes. We say that $X$ is orbit-finite
if it has finitely many orbits.
For example, the set of all atoms $\atoms$ has only one orbit, so it is orbit-finite.
Tuples of atoms are orbit-finite as well:
\begin{lemma}
\label{lem:atom-tuples-of}
	For every $k$, the set $\atoms^k$ is orbit-finite.
\end{lemma}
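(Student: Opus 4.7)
The plan is to show that the orbit of a tuple $(a_1, \ldots, a_k) \in \atoms^k$ is completely determined by its \emph{equality pattern}, i.e.\ by the equivalence relation on $\{1, \ldots, k\}$ defined by $i \sim j \iff a_i = a_j$. Since there are only finitely many equivalence relations on a finite set (the Bell number $B_k$), this will immediately yield orbit-finiteness.

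First I would observe the easy direction: if $(b_1, \ldots, b_k) = \pi(a_1, \ldots, a_k) = (\pi(a_1), \ldots, \pi(a_k))$ for some atom permutation $\pi$, then because $\pi$ is a bijection we have $a_i = a_j \iff \pi(a_i) = \pi(a_j) \iff b_i = b_j$, so the two tuples share the same equality pattern. Thus every orbit is contained in the set of tuples with a fixed equality pattern.

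The harder direction is the converse: if two tuples $(a_1, \ldots, a_k)$ and $(b_1, \ldots, b_k)$ share the same equality pattern, then they lie in the same orbit. Here I would construct $\pi$ explicitly. Let $\{c_1, \ldots, c_m\}$ be the distinct values appearing in the first tuple and $\{d_1, \ldots, d_m\}$ the distinct values appearing in the second, indexed so that $c_r$ and $d_r$ occur at the same positions (this is well-defined precisely because the equality patterns agree, and both tuples have the same number $m$ of distinct values). Define a partial bijection sending $c_r \mapsto d_r$. The main obstacle is extending this to a bijection $\pi : \atoms \to \atoms$: since $\atoms$ is countably infinite and the domain and codomain of our partial map are finite subsets of equal size, the complements $\atoms \setminus \{c_1, \ldots, c_m\}$ and $\atoms \setminus \{d_1, \ldots, d_m\}$ are both countably infinite, so any bijection between them completes $\pi$. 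By construction $\pi(a_i) = b_i$ for every $i$, establishing that the two tuples are in the same orbit.

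Combining the two directions, the map sending a tuple to its equality pattern is a well-defined injection from the set of orbits of $\atoms^k$ into the finite set of equivalence relations on $\{1, \ldots, k\}$. Hence $\atoms^k$ has at most $B_k$ orbits and is orbit-finite.
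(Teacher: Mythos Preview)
Your proof is correct and follows the same approach as the paper: orbits of $\atoms^k$ correspond exactly to equality patterns (partitions of $\{1,\ldots,k\}$), of which there are only finitely many. The paper's proof is just a brief sketch of this idea, while you have spelled out both directions carefully, including the explicit construction of the permutation witnessing that two tuples with the same pattern lie in the same orbit.
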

\begin{proof}
	An orbit of $\atoms^k$ can be defined by an equality pattern such as the one below (for $k=6$):
	\smallpicc{equality-pattern}
  It represents a tuple where  atoms on positions $1$, $3$ and $5$ are
   equal to each other (and different from all the other atoms);
   atoms on positions $2$ and $4$ are equal to each other (and different from all the other atoms); 
   and the atom on position $6$ is only equal to itself.
   For every $k$, there are only finitely many such patterns, so $\atoms^k$ is orbit-finite.
\end{proof}
The number of orbits in $\atoms^k$ is finite but very large, so it is sometimes
useful to consider the set $\atoms^{(k)} \subseteq \atoms^k$
of tuples whose atoms are pairwise distinct -- it has only one orbit (regardless of $k$).\\

An example of an orbit-infinite set is $\atoms^*$ -- the length of the word is preserved by 
the permutation action, so there is at least one orbit for each word length.
Perhaps surprisingly, the set $\pfs(\atoms)$ is also orbit-infinite. The argument is the
same as for $\atoms^*$ -- the size of a set is preserved
by the permutation action, so there is at least one orbit for each finite set size.
For the same reason, the following set of finitely supported functions is orbit-infinite as well:
\[\atoms \tofs \{\textrm{yes}, \textrm{no}\}\tdot\]
This means that unlike classical finiteness, orbit finiteness is not 
preserved under powersets and function spaces. (In the next section, we will see that
this causes some of the results from finite automata theory to fail for
infinite alphabets.) Still, orbit finiteness is preserved by many classical combinators 
(and many classical results hold for infinite alphabets).
Here are some of the operations that preserve orbit finiteness:

\begin{lemma}
\label{lem:of-preserved}
If $X$ and $Y$ are orbit-finite, then the following sets are orbit-finite as well:
$X \times Y$, $X + Y$, and $X_{/\sim}$ (where $\sim$ is an equivariant equivalence relation). 
\end{lemma}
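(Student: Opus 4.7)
The plan is to handle the three cases separately, in order of increasing difficulty. A useful general observation that I would use repeatedly is the following: if $f : A \to B$ is an equivariant surjection between sets with atoms, then $B$ has at most as many orbits as $A$. This is because equivariance makes $f$ carry each orbit of $A$ into a single orbit of $B$, and surjectivity means these image orbits cover $B$.

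The case $X + Y$ is immediate: the permutation action preserves which summand an element lies in, so every orbit of $X + Y$ is either an orbit of $X$ or an orbit of $Y$, giving exactly $|\orb(X)| + |\orb(Y)|$ orbits. For the quotient $X_{/\sim}$, I would note that the quotient map $X \to X_{/\sim}$ is equivariant (by the very definition of the action on equivalence classes, which is precisely where equivariance of $\sim$ is used) and is surjective by construction, so the general observation above bounds the number of orbits of $X_{/\sim}$ by that of $X$.

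The main obstacle is the product $X \times Y$. I would first decompose $X \times Y$ as the disjoint union of the finitely many sets $O_X \times O_Y$ ranging over pairs of orbits $O_X \subseteq X$, $O_Y \subseteq Y$; assembling these via the sum case and the general fact above reduces the problem to showing that each such $O_X \times O_Y$ is orbit-finite. For a fixed pair, I would choose representatives $x_0 \in O_X$, $y_0 \in O_Y$ with finite supports $\alpha = \{a_1,\ldots,a_n\}$ and $\beta = \{b_1,\ldots,b_m\}$, and define an equivariant map
\[
f : \atoms^{(n)} \times \atoms^{(m)} \longrightarrow X \times Y, \qquad f(\vec c,\vec d) = \bigl(\pi(x_0),\,\rho(y_0)\bigr),
\]
where $\pi$ and $\rho$ are any atom permutations with $\pi(a_i) = c_i$ and $\rho(b_j) = d_j$. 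Well-definedness is precisely the statement that $\alpha$ supports $x_0$ and $\beta$ supports $y_0$, so the output does not depend on the choice of $\pi$ and $\rho$ beyond their action on $\alpha$ and $\beta$; and the image of $f$ is all of $O_X \times O_Y$, because any element of $O_X$ arises as $\pi(x_0)$ for some atom permutation $\pi$, whose restriction to $\alpha$ (extended arbitrarily) provides a valid preimage, and similarly for $O_Y$.

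To conclude, I would observe that $\atoms^{(n)} \times \atoms^{(m)}$ is an equivariant subset of $\atoms^{n+m}$, which is orbit-finite by Lemma~\ref{lem:atom-tuples-of}; since any equivariant subset of an orbit-finite set is a union of some of its orbits and is therefore itself orbit-finite, $\atoms^{(n)} \times \atoms^{(m)}$ is orbit-finite, and hence so is its equivariant surjective image $O_X \times O_Y$. The delicate point I expect to require care is precisely this product case: the naive product bound $|\orb(X)| \cdot |\orb(Y)|$ on orbits is wrong (for example, $\atoms \times \atoms$ has two orbits, not one), and the correct finiteness comes from the fact that the equality patterns between $\supp(x)$ and $\supp(y)$ are bounded in number once the supports have bounded size, which is exactly what the reduction to $\atoms^{n+m}$ captures.
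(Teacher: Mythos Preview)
Your proof is correct and lands in the same place as the paper's: both reduce the product case to the observation that $\atoms^{(n)} \times \atoms^{(m)}$ is an equivariant subset of $\atoms^{n+m}$, hence orbit-finite by Lemma~\ref{lem:atom-tuples-of}, and then push orbit-finiteness forward along an equivariant surjection. The sum and quotient cases are handled identically.

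The organizational difference is that the paper first proves a standalone representation result (Claim~\ref{claim:of-representation}): every orbit-finite set is isomorphic to a finite disjoint union of quotients $\atoms^{(k_i)}_{/\sim_i}$. It then applies this to both $X$ and $Y$, distributes the product over the sums, and reduces each summand $\atoms^{(k)}_{/\sim} \times \atoms^{(l)}_{/\sim'}$ to a quotient of $\atoms^{(k)} \times \atoms^{(l)}$. You instead inline the relevant part of that claim: for each pair of orbits $O_X, O_Y$ you directly build the surjection $\atoms^{(n)} \times \atoms^{(m)} \twoheadrightarrow O_X \times O_Y$ from chosen representatives and their supports, without naming the quotient structure. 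Your route is a bit leaner for this lemma in isolation; the paper's route extracts Claim~\ref{claim:of-representation} as a reusable tool, which it invokes again later (e.g., in the translation between orbit-finite automata and register automata).
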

\begin{proof}
Cases $X + Y$ and $X_{/\sim}$ are easy: To show that $X + Y$ is orbit-finite, we notice that:
\[\#\textrm{orbits of } (X + Y) = \#\textrm{orbits of } X + \#\textrm{orbits of } Y \]
To show that $X_{\/\sim}$ is orbit-finite, we notice that:
\[ \begin{tabular}{ccc}
    $x \textrm{ is in the same orbit as } y$& $\Rightarrow$ & $[x]_\sim  \textrm{ is in the same orbit as } [y]_\sim$.
\end{tabular}\]
It follows that $X_{\/ \sim }$ has as most as many orbits as $X$.\\

The most interesting case is $X \times Y$. We start the proof by introducing a way of representing orbit-finite sets:
\begin{claim}
\label{claim:of-representation}
Every orbit-finite set is isomorphic\footnote{Remember that we require the isomorphism to be an equivariant function.}
to a set of the form
\[ {\atoms^{(k_{1})}}_{/\sim_{1}} + {\atoms^{(k_{2})}}_{/\sim_{2}} + \ldots + {\atoms^{(k_{n})}}_{/\sim_{n}}  \]
where $\sim_i$ are equivariant equivalence relations.
\end{claim}
\begin{proof}
Every orbit-finite set can be decomposed as a disjoint sum of its orbits, so it suffices
to show that every single-orbit set is isomorphic to 
\[ \begin{tabular}{cc}
    ${\atoms^{(k)}}_{/\sim}$, & for some $k$ and $\sim$.
\end{tabular}\]
Let $X$ be a single-orbit set. Take some $x \in X$, and let $\alpha$ be a finite support of $x$.
Arrange the elements of $\alpha$ in any order to form a tuple $\bar \alpha \in \atoms^{(k)}$,
where $k = |\alpha|$. Define a function $f : \atoms^{(k)} \to X$ as follows (remember that a
function is a special kind of a relation):
\[ f = \{ (\pi(\bar{\alpha}), \pi(x)) \ | \ \textrm{for each atom permutation } \pi \}\]
First, we show that $f$ is well-defined (i.e. that it produces exactly one output for every argument):
The set $\atoms^{(k)}$ only has one orbit, so for every $\bar \beta \in \atoms^{(k)}$ there is
a $\pi$ such that $\pi (\bar{\alpha}) = \bar \beta$. It follows that 
$f$ produces at least one result for every argument.
To show that it produces at most one result for each
argument, we take some $\pi_1$ and $\pi_2$, such that $\pi_1(\bar \alpha) = \pi_2(\bar \alpha)$, 
and  we show that $\pi_1(x) = \pi_2(x)$: Both $\pi_1$ and $\pi_2$ are $\alpha$-permutations,
so $(\pi_2^{-1} \circ \pi_1)$ is an $\alpha$-permutation as well.
Since $x$ is supported by $\alpha$, it follows that:
\[ \pi_2^{-1}(\pi_1(x)) = x\]
It follows that $\pi_1(x) = \pi_2(x)$. Now, let us show that $f$ is
equivariant: take a $\bar \beta \in \atoms^{(k)}$ and an atom
permutation $\pi$. We know that there is some $\rho$,
for which $\bar \beta = \rho(\bar \alpha)$.
It follows that:
\[\pi(f(\bar \beta)) = \pi(f(\rho(\bar \alpha))) = 
\pi(\rho(x)) = f(\pi(\rho(\bar \alpha))) = f(\pi(\bar \beta))\]
It is also not hard to see that $f$ is a surjection (because $X$ is a single orbit). 
It follows that if we divide $\atoms^{(k)}$ by $f$'s kernel,
we get an isomorphism:
\[f' : {\atoms^{(k)}}_{/\textrm{ker } f} \to X \]
To finish the proof of the claim, notice that since $f$
is equivariant then so are $\ker f $ and $f'$.
\end{proof}

Now, take some orbit-finite $X$ and $Y$ and show that $X \times Y$
is also orbit finite. We start by applying Claim~\ref{claim:of-representation},
to both $X$ and $Y$. Then apply the distributivity of products over disjoint sums,
to obtain that $X \times Y$ is isomorphic to a disjoint sum of products
of the following form:
\[ {\atoms^{(k)}}_{/\sim_i} \times {\atoms^{(l)}}_{/\sim_j'} \]
Disjoint sums preserve orbit finiteness, so we are left with showing
that each such product is orbit-finite: First, notice that each
such product is isomorphic to:
\[\left(\atoms^{(k)}\times \atoms^{(l)}\right)_{/(\sim_i \sim_j')}\]
where $(\sim_i \sim_j')$ is a relation that independently
checks that $\sim_i$ holds on the first $k$ coordinates and
$\sim_j'$ holds on the last $l$ coordinates.
This is an equivariant equivalence relation, so dividing by it 
preserves orbit finiteness. This leaves us with showing that
$\atoms^{(k)} \times \atoms^{(l)}$ is orbit-finite. Since
$\atoms^{(k)} \times \atoms^{(l)} \subseteq \atoms^{k+l}$, 
this follows from the following claim (and Lemma~\ref{lem:atom-tuples-of}):
\begin{claim}
\label{claim:of-subsets}
Equivariant subsets of orbit-finite sets are orbit-finite.
\end{claim}
\begin{proof}
Let $Y$ be an equivariant subset of an orbit-finite $X$. Notice, that if two elements are in different orbits in $Y$, 
then they are also in different orbits in $X$. It follows that $X$ contains at least as many orbits as $Y$.
\end{proof}
\end{proof}

We conclude this section, with a table summarizing closure properties of sets with atoms and orbit-finite sets:\\

\begin{tabular}{c|c|c}
  Operation & Preserves sets with atoms? & Preserves orbit-finite sets?  \\
  \hline
  $X \times Y$& Yes & Yes \\
  $X + Y$ & Yes & Yes \\
  $X_{/\sim}$ \tiny{(for equivariant $\sim$)}  & Yes & Yes \\
  Equivariant subsets & Yes & Yes\\
  $X^*$ & Yes & No \\
  $P(X)$ & No & No \\
  $\pfs(X)$ & Yes & No \\
  $X \to Y$ & No & No \\
  $X \tofs Y$ & Yes & No \\
\end{tabular}

\section{Deterministic orbit-finite automaton}
In this section we introduce a model of computation that generalizes the deterministic 
finite automaton. This is a very natural model that deals
with infinite (but orbit-finite) alphabets. Later, we argue that it
is equivalent to register automata. The section is mostly based on
\cite[Section~5.2~and~6.2]{bojanczyk2019slightly}, but it also discusses
techniques from \cite[Section~B.1]{single-use-paper} and
\cite[Section~4]{pitts2013nominal}.\\

\begin{definition}
A \emph{deterministic orbit-finite automaton} consists of:
\begin{enumerate}
\item an orbit-finite alphabet $\Sigma$;
\item an orbit-finite set of states $Q$;
\item an equivariant initial state $q_0 \in Q$;
\item an equivariant subset of accepting states $Q_{\textrm{acc}} \subseteq Q$;
\item and an equivariant transition function
	\[ f : Q \times \Sigma \eqto Q\tdot \]
\end{enumerate}
\end{definition}
Such an automaton defines a language over $\Sigma$. To see this in action, 
let us define a deterministic orbit-finite automaton, recognizing the language
\[ \{ w \in \atoms^* \ |\ \textrm{$w$ has at most $3$ different letters} \}\]
First, we notice that $\Sigma = \atoms$, which is an orbit-finite set.
Then, let us define the automaton's set of states as:
\[Q = \underbrace{\binom{\atoms}{\leq3}}_{
\textrm{subsets of at most $3$ atoms}}+ 
\underbrace{\bot}_{\substack{\textrm{represents sets with}\\
\textrm{more than $3$ atoms}}} \]
This is an orbit-finite set -- it has $5$
orbits: one orbit for each size of the
set (i.e. $0$, $1$, $2$, and $3$) and one for the
element $\bot$. The initial state is the empty set,
and the accepting states are all the states
except of $\bot$. The transition function is defined as follows:
\[f(q, a) = \begin{cases}
 q \cup \{a\} & \textrm{ if } q \neq \bot \textrm{ and } |q \cup \{a\}| \leq 3 \\
 \bot  & \textrm{ otherwise } \\
 \end{cases}
 \]
The transition function is easily seen to be equivariant. Here is an example (rejecting) run of this automaton:
 \picc{at-most-3-run-dofa}

\subsection{Register automata, orbit-finite automata, and straight sets}
\label{sec:ra-ofa-straight}

We would like to prove that deterministic
orbit-finite automata and deterministic register automata
have equal expressive powers.
However, the two models are slightly incompatible: register
automata can only recognize languages over the alphabet\footnote{Register automata are usually defined to work over the alphabet $\Sigma \times \atoms$
(where $\Sigma$ is some finite set). Theorem~\ref{thm:dra-equiv-dofa} can be adapted to this type of automata as well.} $\atoms$,
whereas orbit-finite automata can recognize languages over
every orbit-finite alphabet. The statement of the equivalence theorem has
to account for this incompatibility:

\begin{theorem}
\label{thm:dra-equiv-dofa}
Deterministic register automata and deterministic orbit-finite automata recognize the same languages over
the alphabet $\atoms$.
\end{theorem}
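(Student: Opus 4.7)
The plan is to prove two inclusions. The direction from DRA to DOFA is essentially immediate: given a register automaton with finite control state set $Q$ and finite register set $R$, its configuration set $Q \times (\atoms + \bot)^R$ is a set with atoms and is orbit-finite, because its orbits are classified by the finitely many equality patterns on the $R$ register positions (each position is $\bot$ or an atom, with atoms either distinct or sharing values). The initial configuration is equivariant, the accepting set (accepting control states, with arbitrary register contents) is equivariant, and the transition function is equivariant by the equivalence of syntactic and semantic equivariance established in Section~\ref{sec:dra}. So a DRA is literally a DOFA on state space $Q \times (\atoms + \bot)^R$.

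The harder direction, from DOFA to DRA, proceeds via the representation theorem from Claim~\ref{claim:of-representation}. Given a DOFA with orbit-finite state space $Q$, write $Q \cong \bigsqcup_{i=1}^{n} {\atoms^{(k_i)}}_{/\sim_i}$ and set $k = \max_i k_i$. I will build a DRA with $n$ control states (one per orbit of $Q$) and $k$ registers, whose intended invariant is: being in control state $i$ with registers storing a tuple $\bar{\beta} \in \atoms^{(k_i)}$ (and the remaining $k - k_i$ registers empty) represents the DOFA state $[\bar{\beta}]_{\sim_i}$. To implement the transition, I use equivariance of $\delta : Q \times \atoms \eqto Q$: the value $\delta([\bar{\beta}]_{\sim_i}, a)$ depends only on the orbit of $([\bar{\beta}]_{\sim_i}, a)$ in $Q \times \atoms$, and that orbit is determined by the control index $i$ together with the equality pattern of $a$ relative to $\bar{\beta}$. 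There are only finitely many such patterns; for each one, equivariance fixes the new orbit index $j$ and, by Lemma~\ref{lem:fs-functions-preserve-supports}, forces the new tuple $\bar{\gamma}$ to be built only from atoms already present in $\bar{\beta} \cup \{a\}$. Each case is therefore realizable by a single conditional command in the syntax of Section~\ref{sec:dra}: test the appropriate equalities between registers and $\mathtt{input}$, then update using assignments, swaps, and clears.

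The main obstacle, and the reason the section is titled with \emph{straight sets}, is the quotient $\sim_i$: two different register tuples can represent the same DOFA state, so one must argue that the constructed DRA genuinely simulates the DOFA rather than some finer system. This is handled by the following consistency property, proved from equivariance of $\delta$: if $\bar{\beta} \sim_i \bar{\beta}'$, then the syntactic transitions computed for $(i,\bar{\beta},a)$ and $(i,\bar{\beta}',a)$ produce tuples $\bar{\gamma},\bar{\gamma}'$ in the same new orbit $j$ and satisfying $\bar{\gamma} \sim_j \bar{\gamma}'$ — otherwise one exhibits a permutation witnessing $\delta(\pi(x)) \neq \pi(\delta(x))$, contradicting equivariance. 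Hence the quotient is preserved along every run. Well-definedness of acceptance uses that $Q_{\textrm{acc}} \subseteq Q$ is equivariant, hence a union of orbits of $Q$, which pulls back to a union of DRA control states; accepting status is therefore independent of the choice of representative tuple. Combining the two directions gives the theorem.
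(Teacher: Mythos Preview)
Your forward direction (DRA $\Rightarrow$ DOFA) is correct and matches the paper.

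In the backward direction there is a genuine gap at the step where you write ``by Lemma~\ref{lem:fs-functions-preserve-supports}, forces the new tuple $\bar{\gamma}$ to be built only from atoms already present in $\bar{\beta} \cup \{a\}$.'' That lemma only tells you that the DOFA state $q' = \delta([\bar{\beta}]_{\sim_i}, a)$ is \emph{supported} by $\{\beta_1,\ldots,\beta_{k_i},a\}$; it does \emph{not} follow that a representative tuple $\bar{\gamma}\in\atoms^{(k_j)}$ of $q'$ can be assembled from those atoms. The presentation coming from Claim~\ref{claim:of-representation} may be redundant: an orbit can appear as $\atoms^{(k_j)}_{/\sim_j}$ with $k_j$ strictly larger than the least-support size of its elements, in which case every representative needs $k_j$ distinct atoms while you only have at most $k_i+1$ in hand. (The same issue already bites the initial configuration: if the equivariant $q_0$ lands in an orbit presented with $k_{i_0}>0$, your DRA has no equivariant way to initialise those registers.)

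This is exactly the obstacle the paper devotes Section~\ref{sec:ra-ofa-straight} to. The fix is either to replace Claim~\ref{claim:of-representation} by a \emph{support-reflecting} presentation (Lemma~\ref{lem:straight-reflect-repr}), after which $k_j = |\supp(q')|$ and a representative built from the available atoms always exists, or to first allow a finitely-supported lift via the Straight Uniformisation Lemma and then eliminate the spurious atoms by name abstraction (Section~\ref{subsec:eliminating-redundant-atoms}). With either of these ingredients your more direct DRA construction goes through, and your consistency argument for the quotients is then essentially correct (indeed simpler than you state: both $\bar{\gamma}$ and $\bar{\gamma}'$ are, by construction, representatives of the same $\delta(q,a)$, so $\bar{\gamma}\sim_j\bar{\gamma}'$ immediately). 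Without one of them, the transition you describe cannot in general be realised in the register-automaton syntax.
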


We start with discussing the more difficult part of the proof, which is translating orbit-finite automata
to register automata. Let us illustrate the problem with this translation, by
discussing a failed attempt: Take an orbit-finite
automaton $\mathcal{A}$ and let $Q$ be its set of states.
Thanks to Claim~\ref{claim:of-representation}, we know that $Q$ is isomorphic to:
\[ {\atoms^{(k_1)}}_{/\sim_1} + {\atoms^{(k_2)}}_{/\sim_2} + \ldots +{\atoms^{(k_n)}_{/\sim_n}} \]
Each element of this set can be represented as an element of the following set:
\[ {\atoms^{(k_1)}}+ {\atoms^{(k_2)}}+ \ldots +{\atoms^{(k_n)}} \]
Which means that each element of $Q$ can be further represented as a memory configuration of the following shape 
(where $k = \max(k_i)$):
\[ Q' := \underbrace{\{1, 2, \ldots, n\}}_{\textrm{the index } i} \times 
\underbrace{(\atoms + \bot)^k}_{
	\substack{
		k_i \textrm{ atoms}\\
		\textrm{followed by $\bot$'s}
	}
}\]
It is important to point out that this
representation is usually partial (i.e. some elements from $Q'$ might not represent any element) and not injective (i.e. some elements from $Q$
might have more than one representation in $Q'$), but it is always
surjective (every element from $Q$ has to have at least one
representation in $Q'$). Now, we would like to lift $\mathcal{A}$'s transition function $\delta : Q \times \atoms \to Q$
to work on this representation, obtaining $\delta' : Q' \times \atoms \to Q'$. Here is an attempt:
\[ \delta'(x, a) =
               \textrm{a representative of } (\delta(\textrm{element represented by } x, a))
\]

Surprisingly, this is not always possible -- we require $\delta'$ to \emph{choose} a representative,
and some sets with atoms do not admit choice:
\begin{claim}
\label{claim:no-choice}
There is no finitely supported function
$f : \binom{\atoms}{2} \fsto \atoms^{(2)}$,
that chooses a tuple representation of a set. 
In other words, there is no such $f$, that for all $a,b \in \atoms$
the value $f(\{a, b\})$ is either equal to
$(a, b)$ or to $(b, a)$.
\end{claim}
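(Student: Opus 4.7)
The plan is to prove this by contradiction, using the standard technique of picking atoms outside a potential support and exhibiting a permutation that fixes the input but not the output.

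First, I would suppose that such a finitely supported function $f$ exists, and let $\alpha \subset \atoms$ be a finite support for $f$. Because $\alpha$ is finite and $\atoms$ is infinite, I can pick two distinct atoms $a, b \in \atoms$ that both lie outside $\alpha$. The value $f(\{a,b\})$ is, by hypothesis, either $(a,b)$ or $(b,a)$; either way it is a tuple whose two entries are precisely $a$ and $b$ in some order.

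Next, I would consider the transposition $\pi = \swap{a}{b}$ that swaps $a$ with $b$ and fixes every other atom. In particular $\pi$ fixes every element of $\alpha$, so $\pi$ is an $\alpha$-permutation, and since $\alpha$ supports $f$ we have the intertwining identity $f(\pi(x)) = \pi(f(x))$ for every $x$ in the domain. Applying this to $x = \{a,b\}$, and noting that $\pi(\{a,b\}) = \{a,b\}$ as an unordered pair, I obtain
\[ f(\{a,b\}) = \pi(f(\{a,b\})). \]
On the other hand, $\pi$ acts on tuples coordinatewise, so it sends $(a,b)$ to $(b,a)$ and vice versa; in particular both candidate values for $f(\{a,b\})$ are moved by $\pi$. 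This contradicts the displayed equation, so no such $f$ can exist.

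I do not expect any real obstacle here: the argument is the canonical ``support-breaking'' trick that shows orbit-finite sets admit equivariant surjections without finitely supported sections. The only thing to double-check is that the permutation action on pairs really does satisfy $\pi((a,b)) = (\pi(a),\pi(b))$ and the action on two-element subsets satisfies $\pi(\{a,b\}) = \{\pi(a),\pi(b)\}$, both of which follow directly from the coordinatewise and image-set definitions of the action given earlier in the section.
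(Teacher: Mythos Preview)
Your proof is correct and follows essentially the same argument as the paper: pick two atoms outside a finite support of $f$, swap them, and derive a contradiction from the fact that the swap fixes the unordered pair but not the ordered one.
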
 
\begin{proof}
    Suppose that there is such $f$, and let $\alpha$ be a finite support of $f$.
	Take some two different atoms $a$ and $b$ from outside of $\alpha$.
	Suppose without the loss of generality that
	$f(\{a, b\}) = (a, b)$. Let $\pi$ be the atom automorphism that swaps $a$ with $b$ and does not touch other atoms.
    Notice that this $\pi$ is an $\alpha$-permutation, and that $\{a, b \} = \pi(\{a, b\})$.
	This leads to a contradiction:
	\[ (a, b) = f(\{a, b\}) = f(\pi(\{a, b\})) = \pi((a, b)) = (b, a)\tdot\]
\end{proof}

The source of those problems with choice are symmetries such as
$\{a, b\} = \{b, a\}$.  In the representation from Claim~\ref{claim:of-representation}
they manifest themselves as the equivalence relations ($\sim_i$).
This observation motivates the definition of \emph{straight sets}, which are
orbit-finite sets that do not exhibit those symmetries:

\begin{definition}
\label{def:straight-set}
An orbit-finite set is \emph{straight} if it is isomorphic to a set of the following form:
\[ \atoms^{(k_1)} + \atoms^{(k_2)} + \ldots + \atoms^{(k_n)} \]
\end{definition}
\noindent
For example, $\atoms^2$ is a straight set:
\[ \atoms^2 \simeq \underbrace{\atoms}_{\textrm{pairs of the form $(x, x)$}} + \underbrace{\atoms^{(2)}}_{\substack{\textrm{pairs of the form $(x, y$)}\\ \textrm{where $x \neq y$}}}\]
In general, every $\atoms^k$ is a straight set -- it is isomorphic
to the disjoint union of one $\atoms^{(k_i)}$ per equality pattern, where $k_i$ is
the number of distinct atoms in that equality pattern (see the proof of Lemma~\ref{lem:atom-tuples-of}).
Using a similar proof as the one of Lemma~\ref{lem:of-preserved},
we can show that straight sets are closed under Cartesian products and disjoint sums.\\

Let us now briefly discuss the structure of straight sets:
One can think of an element
$x \in \atoms^{(k_1)} + \atoms^{(k_2)} + \ldots + \atoms^{(k_n)}$
as a coloured tuple:
\[ \begin{tabular}{cccc}
		$x = i(\bar x)$ & for & $\underbrace{i \in \{1, 2, \ldots, n\}}_{\textrm{the colour}}$
		& $\underbrace{\bar x \in \atoms^{(k_i)}}_{\textrm{the tuple}}$
\end{tabular} \]
We define 
\[  \begin{tabular}{ccc}
     $\dim x = k_i$ & and & $\sigma(x) = \{\textrm{all the atoms from } \bar x\}$
    \end{tabular}
\]
It is easy to see that both $\dim$ and $\sigma$ are equivariant functions,
and that $\sigma(x)$ supports~$x$. Thanks to the following lemma we can extend the functions $\sigma$
and $\dim$ to all straight sets:
\begin{lemma}
\label{lem:straight-unique}
For every straight $X$, for every two isomorphisms:
\[\begin{tabular}{cc}
    $f: X \eqto \atoms^{(k_1)} + \ldots + \atoms^{(k_n)}$ &
    $g: X \eqto \atoms^{(l_1)} + \ldots + \atoms^{(l_m)}$,
\end{tabular}\]
and for every $x$, it holds that $\sigma(f(x)) = \sigma(g(x))$.
(In particular, since $\dim(x) = |\sigma(x)|$, it follows that $\dim(f(x)) = \dim(g(x))$.)
\end{lemma}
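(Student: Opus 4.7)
The plan is to show that $\sigma(f(x))$ equals an intrinsic invariant of $x$ — namely, its \emph{least} (i.e.\ smallest under inclusion) finite support — which clearly does not depend on the choice of isomorphism $f$. Then the same reasoning for $g$ yields $\sigma(g(x)) = \sigma(f(x))$.

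First I would establish the following key fact about the canonical form: for every $y \in \atoms^{(k_1)} + \ldots + \atoms^{(k_n)}$, the set $\sigma(y)$ is a support of $y$ and is contained in every finite support of $y$. The fact that $\sigma(y)$ supports $y$ was already noted. For minimality, suppose $y = i(a_1,\ldots,a_{k_i})$ with the $a_j$ pairwise distinct, and let $\alpha$ be any finite support of $y$. Assume towards a contradiction that some $a_j \notin \alpha$. Since $\atoms$ is infinite, pick an atom $b$ outside $\alpha \cup \{a_1,\ldots,a_{k_i}\}$ and let $\pi$ be the transposition swapping $a_j$ with $b$. Then $\pi$ fixes $\alpha$ pointwise, so $\pi(y)=y$; but the tuple $\pi(\bar y)$ has $b$ at position $j$ and still $a_1,\ldots,a_{j-1},a_{j+1},\ldots$ at the other positions, hence differs from $\bar y$. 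This contradiction gives $\alpha \supseteq \sigma(y)$, so $\sigma(y)$ is the unique least support of $y$.

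Next I would transfer this invariant across the isomorphism. Since $f$ is equivariant, Lemma~\ref{lem:fs-functions-preserve-supports} implies every support of $x$ also supports $f(x)$. By Lemma~\ref{lem:fs-rev-bjection}, $f^{-1}$ is also equivariant, so every support of $f(x)$ supports $x$. Thus $x$ and $f(x)$ have exactly the same family of finite supports, and in particular the same least support; by the previous paragraph this least support equals $\sigma(f(x))$. The identical argument applied to $g$ shows that the least support of $x$ also equals $\sigma(g(x))$, so $\sigma(f(x)) = \sigma(g(x))$. Finally, since $\dim$ of a canonical-form element is just the cardinality of its $\sigma$, the equality $\dim(f(x)) = \dim(g(x))$ follows.

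The only nontrivial step is the minimality claim in the first paragraph; the rest is routine manipulation of supports under an equivariant bijection. The crucial ingredient there is having a \emph{fresh} atom outside both $\alpha$ and the atoms already appearing in $\bar y$, which is exactly why the pairwise distinctness built into the definition of $\atoms^{(k)}$ makes straight sets well-behaved — in a set like $\binom{\atoms}{2}_{/\sim}$ the analogous argument would fail, mirroring the obstruction of Claim~\ref{claim:no-choice}.
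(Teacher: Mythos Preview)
Your proof is correct and uses essentially the same ingredients as the paper's: both rely on the fact that $\sigma(y)$ is the least support of $y$ in canonical form, together with the fact that an equivariant bijection and its inverse preserve supports (Lemmas~\ref{lem:fs-functions-preserve-supports} and~\ref{lem:fs-rev-bjection}). The only cosmetic difference is that the paper composes $h = g \circ f^{-1}$ and argues $\sigma(h(x)) \subseteq \sigma(x)$ directly (leaving the minimality of $\sigma$ implicit), whereas you route through the intrinsic least support of $x$ and spell out the minimality argument explicitly.
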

\begin{proof}
Let $h = g \circ f^{-1}$. It is an isomorphism of the following type:
\[ h : \atoms^{(k_1)} + \atoms^{(k_2)} + \ldots + \atoms^{(k_n)} \  \eqto \  
       \atoms^{(l_1)} + \atoms^{(l_2)} + \ldots + \atoms^{(l_m)} \]
It is enough to show that for all $x \in \atoms^{(k_1)} + \atoms^{(k_2)} + \ldots + \atoms^{(l_n)}$, 
it holds that $\sigma(x) = \sigma(h(x))$. The set $\sigma(x)$ supports $x$, so 
by Lemma~\ref{lem:fs-functions-preserve-supports} it supports $h(x)$ as well.
It follows that $\sigma(h(x)) \subseteq \sigma(x)$. The other inclusion can be 
proved in the same way, because by Lemma~\ref{lem:fs-rev-bjection}, the function $h^{-1}$ is equivariant.
\end{proof}

\noindent
Finally, let us show that straight sets admit choice:
\begin{lemma}[Straight Uniformisation]
\label{lem:straight-uniformisation}
    Let $X$ and $Y$ be straight sets, and let $R$ be an equivariant relation $R \subseteq X \times Y$.
    If for every $x \in X$, there exists at least one $y \in Y$ for which $x \, R \, y$,
    then there exists a \emph{finitely supported} function $r : X \to Y$ such that 
	$x \, R \, r(x)$, for every $x \in X$.
\end{lemma}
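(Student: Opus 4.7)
The plan is to reduce to the single-orbit case $X = \atoms^{(k)}$, pick any orbit of $R$, and then construct an explicit section using a pool of ``reserve'' atoms to bypass the symmetry obstruction highlighted by Claim~\ref{claim:no-choice}.

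First, decompose $X \cong \atoms^{(k_1)} + \ldots + \atoms^{(k_n)}$. A finitely supported section built separately on each summand assembles into one on $X$ (the union of finitely many finite sets is still finite), so I may assume $X = \atoms^{(k)}$. Then $R \subseteq \atoms^{(k)} \times Y$ is an equivariant subset of an orbit-finite set (by Lemma~\ref{lem:of-preserved} and Claim~\ref{claim:of-subsets}), hence itself orbit-finite; fix any orbit $O$ of $R$. Its projection $p(O) \subseteq \atoms^{(k)}$ is equivariant and nonempty, and $\atoms^{(k)}$ is a single orbit, so $p(O) = \atoms^{(k)}$. (The surjectivity hypothesis on $R$ is needed only to guarantee that at least one orbit $O$ exists.)

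Since straight sets are closed under Cartesian products (as noted after Definition~\ref{def:straight-set}), $\atoms^{(k)} \times Y$ is straight; its single orbit $O$ is therefore isomorphic to $\atoms^{(l)}$ for some $l \geq k$. Transporting $p$ along this isomorphism gives an equivariant surjection $p : \atoms^{(l)} \to \atoms^{(k)}$. By Lemma~\ref{lem:fs-functions-preserve-supports} each coordinate of the output is supported by the input and is therefore one of the input atoms, so $p$ has the form $(a_1, \ldots, a_l) \mapsto (a_{\iota(1)}, \ldots, a_{\iota(k)})$ for an injection $\iota : \{1, \ldots, k\} \hookrightarrow \{1, \ldots, l\}$. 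It now suffices to exhibit a finitely supported section $s : \atoms^{(k)} \to \atoms^{(l)}$ of $p$; composing $s$ with the isomorphism onto $O \subseteq \atoms^{(k)} \times Y$ and then with the second projection yields the desired $r$.

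To build $s$, fix $l$ specific distinct atoms $e_1, \ldots, e_l$. Given $\bar b = (b_1, \ldots, b_k) \in \atoms^{(k)}$, at most $k$ of the $e_j$'s appear in $\bar b$, so at least $l - k$ do not; let $e_{i_1}, \ldots, e_{i_{l-k}}$ with $i_1 < \ldots < i_{l-k}$ be those of smallest index among the missing ones. Define $s(\bar b)$ to have $b_i$ at position $\iota(i)$ and $e_{i_1}, e_{i_2}, \ldots, e_{i_{l-k}}$ at the remaining positions in that order. By construction $s(\bar b) \in \atoms^{(l)}$ and $p(s(\bar b)) = \bar b$. Any permutation $\pi$ fixing each of $e_1, \ldots, e_l$ preserves the overlap pattern between $\bar b$ and $\{e_1, \ldots, e_l\}$, so it picks the same $e_{i_j}$'s for $\pi \bar b$; hence $s(\pi \bar b) = \pi(s(\bar b))$, showing $s$ is supported by $\{e_1, \ldots, e_l\}$.

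The main obstacle is precisely the barrier of Claim~\ref{claim:no-choice}: a naive attempt that transports one fixed representative $(\bar\alpha^*, y^*) \in O$ along an arbitrary permutation sending $\bar\alpha^*$ to $\bar b$ is ill-defined whenever the ``fresh'' atoms of $y^*$ collide with $\bar b$. Over-provisioning a fixed pool of reserve atoms and selecting the replacements by a fixed index order removes every such collision uniformly, while keeping the total support of $s$ bounded by the finite pool.
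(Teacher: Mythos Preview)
Your proof is correct. Both your argument and the paper's hinge on the same idea---supply a fixed finite pool of ``reserve'' atoms to fill the extra coordinates of $y$---but the packaging differs. The paper first isolates an equivariant variant (Lemma~\ref{lem:straight-uniformization-eq}: if every $x$ has a witness $y$ with $\sigma(y) \subseteq \sigma(x)$, then an \emph{equivariant} uniformisation exists), and then derives the finitely supported case by padding the domain to $X \times \atoms^{(\dim X + \dim Y)}$ and currying out a fixed $\bar a$. You instead reduce to a single orbit of $X$, pick one orbit $O$ of $R$, identify $O$ with $\atoms^{(l)}$ via the straightness of $X \times Y$, analyse the projection as a coordinate map, and build the section in one shot. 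Your route is more self-contained and avoids the auxiliary lemma; the paper's route is more modular, and its equivariant lemma is reused elsewhere (e.g.\ in the translation of orbit-finite automata to straight automata and in Claim~\ref{claim:product-poly-repr}).
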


\noindent
Before we prove the lemma, let us notice that $r$ does not have to be equivariant:
\begin{example}
\label{ex:no-eq-unif}
    Consider the following $R \subseteq \atoms \times \atoms^{(2)}$:
    \[ R = \{ (x,\; (x, y)) \ | \ x, y \in \atoms \comma \textrm{ such that } x \neq y\}\]
    It satisfies the conditions of Lemma~\ref{lem:straight-uniformisation}, so it should have 
    a finitely supported uniformization. Here is an example uniformization 
    supported by  $\{4, 5\}$:
    \[
        r(x) = \begin{cases} 
            (x, 5) & \textrm{if } x = 4\\
            (x, 4) & \textrm{otherwise}
        \end{cases}
    \]
    On the other hand, it is not hard to see that $R$ has no equivariant uniformization
    (this follows from Lemma~\ref{lem:fs-functions-preserve-supports}).
\end{example}
As we can see, the reason why there might not be an equivariant $r$, is that 
$x$ might not have enough atoms to construct a matching element of $Y$. The following
lemma formalizes this intuition. We prove it, before we prove Lemma~\ref{lem:straight-uniformisation}:
\begin{lemma}
\label{lem:straight-uniformization-eq}
If $R \subseteq X \times Y$ is as in Lemma~\ref{lem:straight-uniformisation},
but additionally for every $x$ there is a $y$ such that $x\, R\, y$, and 
\[\sigma(x) \textrm{ supports } y\comma\]
then there is an \emph{equivariant} $r$ that uniformizes $R$. 
\end{lemma}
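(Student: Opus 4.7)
\medskip

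\noindent\textbf{Proof proposal.} The plan is to decompose $X$ into single orbits, define $r$ on each orbit by picking a representative, and glue the pieces together. Because $X$ is straight we may assume without loss of generality that
\[ X = \atoms^{(k_1)} + \atoms^{(k_2)} + \ldots + \atoms^{(k_n)}\tdot\]
Each summand $\atoms^{(k_i)}$ has exactly one orbit, so it suffices to construct an equivariant uniformization $r_i$ on each $\atoms^{(k_i)}$ separately and take their disjoint union; the result is then equivariant because each $r_i$ is.

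Fix $i$, set $k = k_i$, and choose any distinguished tuple $\bar x_0 \in \atoms^{(k)}$, say $\bar x_0 = (a_1, \ldots, a_k)$ for some fixed pairwise distinct atoms. Note that $\sigma(\bar x_0) = \{a_1, \ldots, a_k\}$. By the extra hypothesis of the lemma, pick some $y_0 \in Y$ with $\bar x_0 \, R \, y_0$ and such that $\sigma(\bar x_0)$ supports $y_0$. Since $\atoms^{(k)}$ is a single orbit, every $\bar x \in \atoms^{(k)}$ equals $\pi(\bar x_0)$ for some atom permutation $\pi$; we then set
\[ r_i(\bar x) \;:=\; \pi(y_0)\tdot \]

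The only delicate point -- the main obstacle -- is showing that $r_i$ is well-defined, and this is precisely where straightness is used. Suppose $\pi_1(\bar x_0) = \pi_2(\bar x_0)$. Then $\pi_2^{-1} \circ \pi_1$ fixes the tuple $\bar x_0 = (a_1, \ldots, a_k)$ coordinatewise, which means it fixes every atom in $\sigma(\bar x_0)$ \emph{pointwise}. Since $\sigma(\bar x_0)$ supports $y_0$, we conclude $(\pi_2^{-1} \circ \pi_1)(y_0) = y_0$, i.e., $\pi_1(y_0) = \pi_2(y_0)$, as required. (The analogous argument would fail for a non-straight single-orbit set such as $\binom{\atoms}{2}$, where a permutation can fix an element $\{a,b\}$ without fixing its atoms pointwise -- exactly the symmetry exploited in Claim~\ref{claim:no-choice}.)

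It remains to verify the two desired properties. First, $\bar x \, R \, r_i(\bar x)$: we have $\bar x_0 \, R \, y_0$ by construction, and since $R$ is equivariant, applying $\pi$ yields $\pi(\bar x_0) \, R \, \pi(y_0)$, i.e.\ $\bar x \, R \, r_i(\bar x)$. Second, $r_i$ is equivariant: for any atom permutation $\rho$, writing $\bar x = \pi(\bar x_0)$, we get
\[ r_i(\rho(\bar x)) = r_i((\rho \circ \pi)(\bar x_0)) = (\rho \circ \pi)(y_0) = \rho(\pi(y_0)) = \rho(r_i(\bar x))\tdot\]
Taking $r = r_1 + r_2 + \ldots + r_n$ finishes the proof.
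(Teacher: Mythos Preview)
Your proof is correct, but it takes a genuinely different route from the paper's. The paper does not pick a representative per orbit and propagate by permutations; instead it defines $r(x)$ by a canonical recipe applied to each $x$ individually: among all $y$ with $x\,R\,y$ and $\sigma(y)\subseteq\sigma(x)$, it first prefers those with the smallest colour in the straight decomposition of $Y$, and then breaks remaining ties by annotating each atom of $y$ with its position in the tuple $x$ and choosing the lexicographically smallest annotation. Because this recipe is phrased purely in terms of equivariant data (the relation $R$, the colours, and positions), the resulting $r$ is equivariant by construction and no well-definedness check is needed.

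The trade-offs are as follows. Your orbit-representative approach is more elementary and, interestingly, does not use the straightness of $Y$ at all---only the support hypothesis on $y_0$ is needed for well-definedness. The paper's canonical-choice approach, by contrast, relies on $Y$ being straight (to talk about colours and positions of atoms in $y$), but it sidesteps the representative-dependence argument entirely and gives a recipe that is manifestly uniform in $x$.
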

\begin{proof}
    Let us fix straight equivariant isomorphisms for $X$, $Y$:
    \[
        \begin{tabular}{cc}
            $f_X : X \eqto \atoms^{(k_1)} + \ldots + \atoms^{(k_n)}$ &
            $f_Y : Y \eqto \atoms^{(l_1)} + \ldots + \atoms^{(l_m)}$
        \end{tabular}
    \]
    This means that elements of $X$ and $Y$ can be seen as coloured tuples. We take some $x \in X$, 
    and we show how to construct $r(x) \in Y$ in an equivariant way:
	\begin{enumerate}
		\item First, we consider only those $y$'s, such that $x R y$, and 
              $y$ is supported by $\sigma(x)$. (This means that $\sigma(y) \subseteq \sigma(x)$.)
		\item Out of those $y$'s, we prefer the ones labelled with a smaller colour. 
		\item To choose one of the remaining tuples, we annotate each atom in every 
		      remaining
		      tuple $y$ with its position in $x$ (remember that $x$ is a tuple of atoms).
              Then, we chose $r(x)$ to be the $y$ whose annotation is lexicographically smallest. 
    \end{enumerate}
\end{proof}

\noindent We are now ready to proof the Straight Uniformization Lemma:
\begin{proof}
    Let $\dim X$ be the maximal dimension of an element from $X$ (and analogously for $\dim Y$).
    Let $k = \dim(X) + \dim(Y)$, and define $R' \subseteq (X \times \atoms^{(k)}) \times Y$,
    to be a relation that ignores its $\atoms^{(k)}$ part, and otherwise is equal to $R$:
    \[ R' = \{((x, \bar a), y) \ | \ (x, y) \in R,\ \bar a \in \atoms^{(k)}\} \]
    Now, we would like to apply Lemma~\ref{lem:straight-uniformization-eq} to $R'$. 
    The following claim proves that $R$ satisfies the lemma's assumptions:
    \begin{claim}
        \label{claim:good-support}
        For every $\bar a \in \atoms^{(\dim(X) + \dim(Y))}$ and 
        for every $x \in X$, there is a $y \in Y$, such that $(x, \bar a)\, R'\, y$, 
        and $y$ is supported by $\sigma(x, \bar a)$. 
    \end{claim}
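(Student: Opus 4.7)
The plan is to use the totality hypothesis of Lemma~\ref{lem:straight-uniformisation} to obtain an initial witness $y_0$ with $x\,R\,y_0$, and then to transport it by a suitably chosen atom permutation so that every atom of $y_0$ either already lies in $\sigma(x)$ or else is relocated into the reservoir provided by $\bar a$. The role of the specific choice $k = \dim(X) + \dim(Y)$ is precisely to guarantee that there are enough free slots inside $\bar a$ to absorb the atoms of $y_0$ that do not already lie in $\sigma(x)$.

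Concretely, fix $x \in X$ and $\bar a \in \atoms^{(k)}$, write $A$ for the set of atoms appearing in $\bar a$, and pick any $y_0$ with $x\,R\,y_0$. From Lemma~\ref{lem:straight-unique} and the preceding discussion of $\sigma$ and $\dim$, we have $|\sigma(x)| \leq \dim(X)$ and $|\sigma(y_0)| \leq \dim(Y)$, while $|A| = \dim(X) + \dim(Y)$. A quick count gives $|A \setminus \sigma(x)| \geq \dim(Y) \geq |\sigma(y_0) \setminus \sigma(x)|$, so I can pick an injection of $\sigma(y_0) \setminus \sigma(x)$ into $A \setminus \sigma(x)$ and extend it to an atom permutation $\pi$ that fixes $\sigma(x)$ pointwise.

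Setting $y := \pi(y_0)$, the rest is automatic. Since $\pi$ is a $\sigma(x)$-permutation and $\sigma(x)$ supports $x$, we have $\pi(x) = x$; equivariance of $R$ then gives $x\,R\,y$, hence $(x,\bar a)\,R'\,y$. Lemma~\ref{lem:fs-functions-preserve-supports} applied to $\pi$ shows that $y$ is supported by $\pi(\sigma(y_0))$, and by construction this set is contained in $\sigma(x) \cup A = \sigma(x,\bar a)$, which is the required support.

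The only genuinely subtle step is the arithmetic that justifies the choice $k = \dim(X) + \dim(Y)$: one must reserve $\dim(X)$ slots to cover the worst-case overlap with $\sigma(x)$ and $\dim(Y)$ further slots to accommodate the new atoms of an arbitrary witness $y_0$. Once this counting bound is in place, both the existence of $\pi$ and the two desired conclusions follow immediately from equivariance of $R$ and the general support-propagation principle of Lemma~\ref{lem:fs-functions-preserve-supports}.
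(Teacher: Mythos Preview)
Your argument is correct and follows essentially the same route as the paper: pick an arbitrary witness $y_0$, count that $|A\setminus\sigma(x)|\ge\dim(Y)\ge|\sigma(y_0)\setminus\sigma(x)|$, then use a $\sigma(x)$-fixing permutation to push the stray atoms of $y_0$ into $A$. One small remark: the fact that $\pi(y_0)$ is supported by $\pi(\sigma(y_0))$ is not literally what Lemma~\ref{lem:fs-functions-preserve-supports} says (that lemma would give you $\sigma(y_0)\cup\supp(\pi)$, which is not contained in $\sigma(x)\cup A$); the statement you need follows directly from the definition of support by conjugating with $\pi$.
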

    \begin{proof}
        Take some $\bar a \in \atoms^{(\dim(X) + \dim(Y))}$ and some $x \in X$.
        By assumption, we know that there exists $y \in Y$, such that $x\, R\, y$.
        Define $\pi$ to be a permutation that moves all atoms from $\sigma(y) - \sigma(x)$ into $\sigma(\bar{a}) - \sigma(x)$, 
        and does not touch any atoms from $\sigma(x)$. 
        We know that such a $\pi$ always exists, because 
        \[
        \begin{tabular}{ccc}
            $|\sigma(y) - \sigma(x)| \leq \dim Y$ & and & $|\sigma(\bar a) - \sigma(x)| \geq \dim Y$ 
        \end{tabular}
        \]
        It follows that:
        \[ \sigma(\pi(y)) \subseteq \sigma(\bar a) \cup \pi(\sigma(x))\]
        Moreover $\pi$ is a $\sigma(x)$-permutation, $x$ is supported by $\sigma(x)$, and $R$ is equivariant, 
        so $x \, R \, \pi(y)$. It follows that $(x, \bar a) \, R' \, y$.
    \end{proof}
    Let $r'$ be the equivariant uniformization of $R'$ produced by Lemma~\ref{lem:straight-uniformization-eq}. 
    Pick any tuple of atoms $\bar a \in \atoms^{(\dim(X) + \dim(Y))}$, and define $r$ as:
    \[ r(x) = r'(x, \bar a)\]
    It is easy to see that $r$ is an $\bar a$-supported uniformization of $R$. 
\end{proof}

We are now ready to prove Theorem~\ref{thm:dra-equiv-dofa}.
We define a \emph{deterministic straight automaton} to be
a variant of the orbit-finite automaton, where the alphabet and the set of states 
have to be straight. We use this model to break the proof of Theorem~$5$ into smaller steps:

\vspace{0.3cm}
    \adjustbox{width=1\textwidth, center}{
\begin{tikzcd}
	{\substack{\textrm{Deterministic register}\\\textrm{automaton}}} && {\substack{\textrm{Deterministic straight}\\\textrm{automaton}}} && {\substack{\textrm{Deterministic orbit-finite}\\\textrm{automaton}}}
	\arrow["{\textrm{Special case}}", color={rgb,255:red,214;green,153;blue,92}, curve={height=-18pt}, from=1-3, to=1-5]
	\arrow["{\textrm{Section~\ref{subsubsec:ofa-to-sa} }}", curve={height=-18pt}, from=1-5, to=1-3]
	\arrow["{\textrm{Section~\ref{subsubsec:sa-to-ra} }}", curve={height=-18pt}, from=1-3, to=1-1]
	\arrow["{\textrm{Section~\ref{subsubsec:ra-to-sa} }}", curve={height=-18pt}, from=1-1, to=1-3]
\end{tikzcd}
        }
    \vspace{0.3cm}

\subsubsection{Register automaton $\Rightarrow$ Straight automaton}
\label{subsubsec:ra-to-sa}
The set of all possible memory configurations of a register automaton
\[Q \times (\atoms + \bot)^R \]
is straight and as such it can be directly used as the set of states of a straight automaton.
The transition function, initial configuration, and set of accepting functions 
are equivariant by definition.

\subsubsection{Straight automaton $\Rightarrow$ Register automaton}
\label{subsubsec:sa-to-ra}
Let the state space of the straight automaton be equal to :
\[Q \simeq \atoms^{(k_1)} + \atoms^{(k_2)} + \ldots + \atoms^{(k_n)} \]
As mentioned before, one can think of this set as non-repeating tuples of atoms coloured in one of the $n$ colours
and therefore elements of $Q$ can be represented as:
\[ Q' = \underbrace{\{1, 2, \ldots, n\}}_{\textrm{colour of the tuple}} \times 
\underbrace{(\atoms + \bot)^{(\max k_i)}}_{
	\substack{
		\textrm{atoms of the tuple}\\
		\textrm{(followed by $\bot$'s)}
	}
}\]
This representation is partial, but injective and surjective. It follows, by a similar argument as in 
Lemma~\ref{lem:fs-rev-bjection}, that it has a (total) one-sided inverse. This means 
that every element of $Q$ has a canonical representation in $Q'$. We use it to lift the transition function 
$\delta$, to work on representations:
\[\delta'(x, a) = \textrm{\emph{the} representation of } \left( \delta(\textrm{the element represented by } x, a) \right) \]
We are now left with some technical details:
The initial state of the straight automaton is equivariant,
so its representation has to be of the form $c_0(\bot, \bot, \ldots, \bot)$,
for some colour $c_0$. We choose this colour to be the initial state 
of the register automaton.  Moreover, since the register automaton
accepts by control state and not by memory configuration, 
every time it moves forward it has to check if its configuration
is accepting, and remember this information in its control state.

\subsubsection{Orbit-finite automaton $\Rightarrow$ Straight automaton}
\label{subsubsec:ofa-to-sa}
This construction makes use of the Lemma~\ref{lem:straight-uniformisation} to
fix the failed attempt from the beginning of this section.
Take an orbit-finite automaton $\mathcal{A}$. Its set of states ($Q$) is orbit-finite,
so thanks to Claim~\ref{claim:of-representation} it is isomorphic to:
\[ {\atoms^{(k_1)}}_{/\sim_1} + {\atoms^{(k_2)}}_{/\sim_2} + \ldots {\atoms^{(k_n)}_{/\sim_n}} \]
It can be therefore represented as the straight set:
\[ \bar Q = {\atoms^{(k_1)}} + {\atoms^{(k_2)}} + \ldots {\atoms^{(k_n)}} \]
This time the representation function (call it $h : \bar Q \to Q$)
is surjective, but not injective. 
Let $f$ be the transition function of $\mathcal{A}$. Consider the relation:
\[F \subseteq (\bar Q \times \atoms) \times \bar Q\]
\[F = \{ \left( (q,\, a),\;  p\right) \ | \ q,p \in \bar Q,\ a \in \atoms \textrm{, such that } \delta(h(q),\, a) = h(p) \}\]
Since $r$ is surjective, $F$ satisfies the assumption from Lemma~\ref{lem:straight-uniformisation}.
We use it to obtain a \emph{finitely supported} transition function $\bar \delta : \bar Q \times \atoms \to \bar Q$.
The set of accepting states is simply the set of all representations of accepting states. For the initial
state, we repeat the uniformization construction. Define $I = \{ \bullet  \}$ to be the atomless singleton and define:  
\[
\begin{tabular}{cc}
	$B \, \subseteq \, 1 \times \bar Q$ &
	$B = \{ ( \bullet\,,\, q ) \ | \ r(q) \textrm { is the initial state of } \mathcal{A}\}$
\end{tabular}
\]
Let $\bar i$ be $B$'s uniformization, and pick $\bar i ( \bullet )$ as the initial state of the straight automaton.
This almost finishes the construction, but so far
we have constructed a \emph{finitely supported} transition function, and we require an equivariant one.
There are two ways to fix this:
First, we can show that orbit-finite sets can be represented by straight sets in a way that \emph{reflects supports}, which means that
for all $\alpha$ and $x$:
\[
    \textrm{ $\alpha$ supports $r(x)$} \Rightarrow \textrm{$\alpha$ supports $x$}
\]
With this stronger representation, we can use
Lemma~\ref{lem:straight-uniformization-eq} and obtain 
an equivariant transition function and initial state,
finishing the proof of Theorem~\ref{thm:dra-equiv-dofa}. 
Another approach to prove the theorem would be to show how to remove the surplus
atoms from the support of the straight automaton. Both of those approaches illustrate 
techniques that are going to be useful later in this thesis,
so in the next two subsections we present both of them. 
(Although any one of them could finish the proof of Theorem~\ref{thm:dra-equiv-dofa}.)

\subsection{Support-reflecting straight representations}
Most of this section is dedicated to proving the following straight representation lemma. The proof is based
on the proof of \cite[Lemma~6.2]{bojanczyk2019slightly}.
\begin{lemma}
\label{lem:straight-reflect-repr}
For every orbit-finite set $X$ there is a straight set $\bar X$ 
and a surjective representation function:
\[ f: \bar X \to X  \]
that \emph{reflects supports}, i.e. for every $x$ and $\alpha$
\[ \textrm{$\alpha$ supports $f(x)$} \implies \textrm{$\alpha$ supports $x$} \] 
\end{lemma}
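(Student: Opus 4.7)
The plan is to reduce to the single-orbit case and then build a tailor-made representation for each orbit by choosing a minimal tuple of atoms. Concretely, if $X = O_1 \sqcup O_2 \sqcup \cdots \sqcup O_n$ is the decomposition into single orbits, and if for each orbit $O_i$ we exhibit a straight set $\bar{O}_i$ together with a support-reflecting equivariant surjection $f_i : \bar{O}_i \to O_i$, then $\bar X := \bar{O}_1 + \cdots + \bar{O}_n$ is straight and $f := f_1 + \cdots + f_n$ is a support-reflecting surjection $\bar X \to X$. So the whole work sits in the single-orbit case.

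For a single orbit $O$, I would first invoke the standard fact (a Fraenkel--Mostowski-style argument which can be slotted in as an auxiliary lemma if not already available: intersections of two finite supports of the same element are again a support) to conclude that every $x \in O$ has a unique \emph{least} support $\supp(x)$, and that this least support is preserved by atom permutations, i.e.\ $\supp(\pi(x)) = \pi(\supp(x))$. Pick a representative $x_0 \in O$, let $k := |\supp(x_0)|$, and fix an arbitrary enumeration $\bar\alpha_0 = (a_1,\ldots,a_k) \in \atoms^{(k)}$ of $\supp(x_0)$. Define
\[ f : \atoms^{(k)} \to O, \qquad f(\pi(\bar\alpha_0)) := \pi(x_0). \]
Well-definedness is the usual check used in Claim~\ref{claim:of-representation}: if $\pi(\bar\alpha_0) = \rho(\bar\alpha_0)$ then $\rho^{-1}\pi$ is a $\supp(x_0)$-permutation and therefore fixes $x_0$. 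Surjectivity follows because $O$ is a single orbit, and equivariance is immediate from the definition.

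The heart of the argument is the support-reflection property. Fix a tuple $\bar b = \pi(\bar\alpha_0) \in \atoms^{(k)}$, whose least support is the set $\sigma(\bar b) = \{b_1,\ldots,b_k\} = \pi(\supp(x_0))$. The image $f(\bar b) = \pi(x_0)$ has least support $\supp(\pi(x_0)) = \pi(\supp(x_0)) = \sigma(\bar b)$ as well. Hence the least supports in $\bar O = \atoms^{(k)}$ and in $O$ coincide under $f$; any $\alpha$ supporting $f(\bar b)$ must contain this common least support, and therefore supports $\bar b$. This is exactly the required implication.

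The main obstacle I anticipate is not the construction itself, which is short, but the appeal to the existence of least supports. The excerpt introduces ``a finite support'' without singling out a minimum one, so the proof must either establish the intersection-of-supports lemma (the fresh-atom swap argument using the $\alpha$-permutation that swaps a newly picked atom with the atoms to be eliminated) or otherwise argue directly that one can pick a support of $x_0$ of minimum cardinality and repeat the construction verbatim, noting that by Lemma~\ref{lem:fs-functions-preserve-supports} such a minimum-cardinality support is automatically the unique least support of $x_0$. Either way, this small lemma is the one non-routine ingredient; everything else is a bookkeeping exercise on top of Claim~\ref{claim:of-representation}.
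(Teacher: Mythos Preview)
Your proof is correct but takes a different route from the paper's. The paper does \emph{not} assume least supports exist; instead it starts from the arbitrary representation $f:\atoms^{(n)}\to X$ given by Claim~\ref{claim:of-representation} and argues by descent: if $f$ fails to reflect supports, then some coordinate of the tuple is irrelevant (a short claim shows $f$ does not depend on that coordinate), so one can drop it and obtain a surjection from $\atoms^{(n-1)}$. Iterating reaches a support-reflecting representation. Crucially, the paper then uses this lemma to \emph{derive} the existence of least supports (Theorem~\ref{thm:least-supports}), so your approach inverts the dependency order.

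Your direct construction---enumerate the least support of a base point and use that tuple length from the outset---is cleaner once least supports are available, and you are right that the intersection-of-supports argument (Gabbay--Pitts) gives them independently. One small caveat: your alternative remark that a minimum-cardinality support is ``automatically the unique least support of $x_0$ by Lemma~\ref{lem:fs-functions-preserve-supports}'' is not justified by that lemma alone; you still need the intersection-of-supports fact to upgrade minimum cardinality to minimality under inclusion. So the non-routine ingredient really is the auxiliary least-support lemma, not a consequence of anything already on the page. The paper's descent argument avoids this ingredient entirely, at the cost of a slightly longer inductive proof.
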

\begin{proof}
Straight sets are closed under disjoint unions, so it is enough to prove the
claim for an $X$ that is a single orbit. We start the proof
by applying Claim~\ref{claim:of-representation}
to obtain an isomorphism:
\[ {\atoms^{(n)}}_{/\sim} \to X \]
This defines a natural straight representation:
\[f : \atoms^{(n)} \to X\]
This straight representation may or may not reflect supports.
If it does, then there is nothing more to do.
If it does not, then there is a $\bar x \in \atoms^{(k)}$ and $\alpha$, such that
$\alpha$ supports $f(\bar x)$, but it does not support $\bar x$. This means that,
there is an atom in $\bar x$ that is not present in $\alpha$. Assume without loss of generality
that this is $x_n$ -- the last atom of $\bar x$. Interestingly, 
this means that $f$ does not depend on its last argument:
\begin{claim}
For all pairwise distinct atoms $y_1, y_2, \ldots, y_n, y_n'$:
\[ f(y_1, \ldots, y_{n-1}, y_n) = f(y_1, \ldots, y_{n-1}, y_n') \]
\end{claim}
\begin{proof}
    Since $f$ is equivariant, we can prove the claim, by proving it for some particular
    $y_1, y_2, \ldots, y_n, y_n'$ that we choose 
    (this is because for every $z_1, \ldots, z_n, z'_n$, 
    there is a $\pi$ that maps it to $y_1, y_2, \ldots, y_n, y_n'$).
    We choose to prove it for $x_1, \ldots, x_n, x_n'$, where $x_1, \ldots, x_n$ are elements of 
    $\bar x$ (inherited from the proof of Lemma~\ref{lem:straight-reflect-repr}), and
    $x'_n$ is some atom that does not appear in $\bar x$ or in $\alpha$. 
    Define $\bar x' := x_1, \ldots, x_{n-1}, x_n'$, and let us show that 
    $f(\bar x) = f(\bar x')$. 
	Let $\pi$ be the permutation that swaps $x_n$ with $x_n'$ and does not touch other atoms.
	Such $\pi$ is an $\alpha$-permutation, so:
	\[f(\bar x) = \pi (f(\bar x)) = f( \pi (\bar x) ) = f(\bar x') \]
\end{proof}
\noindent
Using the claim, we define an equivariant $f' : \atoms^{(n-1)} \to X$ in the following way:
\[f'(\bar x) = \textrm{the only element of } \{f(\bar x, x_n)\ |\ x_n \in \atoms\} \]
In this way we obtain a representation of a lower dimension.
We repeat this process until we obtain a representation that reflects the supports. (Note that,
this process will always end, because for $n$ = 0,
we obtain a representation $I \to X$, which always reflects supports, because $I$ is atomless.)
\end{proof}
Having proved the lemma, we go back and finish the translation  \emph{Orbit-Finite Automata $\Rightarrow$ Straight Automata}:
we prove that if $r$ is a straight and support-reflecting representation of $\mathcal{A}$'s set of states,
then, by Lemma~\ref{lem:fs-functions-preserve-supports}, the following transition relation satisfies the condition 
of Lemma~\ref{lem:straight-uniformization-eq}:
\[
\begin{tabular}{cc}
	$F \subseteq (\bar Q \times \atoms) \times \bar Q$ &
	
	$F = \{ ( (\bar p,\, a),\; \bar q ) \ | \ f(r(\bar p),\; a) = r(\bar q) \}$
\end{tabular}
\]
It follows that we can use (Lemma~\ref{lem:straight-uniformization-eq}) and obtain an equivariant transition function,
for the straight automaton.\\

An interesting consequence of Lemma~\ref{lem:straight-reflect-repr} is the following
theorem about least supports. It was first proved as \cite[Proposition 3.4]{gabbay2002new}, but
the proof presented below follows the lines of \cite[Section~6]{bojanczyk2019slightly}. 
\begin{definition}
\label{def:least-support}
    We say that $\alpha \subseteq_\textrm{fin} \atoms$ is the \emph{least support} 
    of $x$ if:
    \begin{enumerate}
        \item $\alpha$ supports $x$; and 
        \item for every other $\beta$, that supports $x$, it holds that $\alpha \subseteq \beta$. 
    \end{enumerate}
    It is not hard to see that if a least support exists, then it is unique. If it exists, 
    we denote it as $\supp(x)$. 
\end{definition}

\begin{theorem}
\label{thm:least-supports}
  Every element of a set with atoms has a least support.
\end{theorem}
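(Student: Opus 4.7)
The plan is to leverage Lemma~\ref{lem:straight-reflect-repr} to transport the problem from an abstract set with atoms $X$ to a tuple set $\atoms^{(k)}$, where least supports are transparent. Fix $x \in X$. First I would restrict attention to the orbit $O = \{\pi(x) \mid \pi \in \auts\}$, which is an equivariant subset of $X$ (so it is itself a set with atoms) consisting of a single orbit. This restriction is harmless because the supports of $x$ depend only on the action of $\auts$ on $x$, not on other elements of $X$.

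Next I would apply Lemma~\ref{lem:straight-reflect-repr} to $O$. Since $O$ has a single orbit, the lemma (combined with Claim~\ref{claim:of-representation}) yields a straight set of the form $\atoms^{(k)}$ together with a surjective, support-reflecting, equivariant function $f : \atoms^{(k)} \to O$ for some $k$. Choose any preimage $\bar a = (a_1, \ldots, a_k) \in \atoms^{(k)}$ with $f(\bar a) = x$, and let $\alpha := \sigma(\bar a) = \{a_1, \ldots, a_k\}$. I claim that $\alpha$ is the least support of $x$.

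For the supporting direction: $\alpha$ supports $\bar a$ (any $\alpha$-permutation fixes each $a_i$ and hence fixes the tuple), and $f$ is equivariant, so by Lemma~\ref{lem:fs-functions-preserve-supports}, $\alpha$ supports $f(\bar a) = x$. For minimality, suppose $\beta$ is any finite support of $x$. Since $f$ reflects supports, $\beta$ supports $\bar a$. But if some $a_i \notin \beta$, then picking a fresh atom $a' \notin \beta \cup \alpha$ and letting $\pi$ swap $a_i$ with $a'$ gives a $\beta$-permutation with $\pi(\bar a) \neq \bar a$, contradicting that $\beta$ supports $\bar a$. Hence $\alpha \subseteq \beta$, confirming that $\alpha = \supp(x)$.

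The main obstacle is conceptual rather than technical: one needs a representation of $x$ in which distinct atoms of the representation correspond to genuinely ``needed'' atoms of $x$, and this is exactly what support-reflection buys us. Without Lemma~\ref{lem:straight-reflect-repr}, a direct approach would instead prove that the intersection of two finite supports is a support (via a decomposition of $(\alpha \cap \beta)$-permutations as composites of an $\alpha$-permutation and a $\beta$-permutation, using fresh atoms outside all supports involved), and then intersect all finite supports of $x$; but since the excerpt already develops support-reflecting representations, reusing them gives the cleanest argument.
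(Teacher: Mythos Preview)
Your proposal is correct and follows essentially the same approach as the paper: restrict to the single orbit of $x$, apply Lemma~\ref{lem:straight-reflect-repr} to obtain a support-reflecting equivariant surjection from some $\atoms^{(k)}$, and then use support-reflection together with Lemma~\ref{lem:fs-functions-preserve-supports} to identify $\sigma(\bar a)$ as the least support. Your write-up spells out the minimality argument that the paper leaves as ``not hard to see,'' but the underlying strategy is identical.
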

\begin{proof}
	Choose a set with atoms $X$ and its element $x \in X$. Let
	$X_x \subseteq X$ be the orbit of $x$ in $X$ i.e.:
	\[X_x = \{ \pi(x) \  | \  \pi \  \textrm{is an atom permutation} \} \]
	Let $h: \atoms^{(n)} \to \bar X_x$ be a straight and support-reflecting
	representation function of $X_x$ (as defined in Lemma~\ref{lem:straight-reflect-repr}).
    By combining the fact $h$ reflects supports, with Lemma~\ref{lem:fs-functions-preserve-supports},
    it is not hard to see that the least support of $h(\bar x)$ is $\sigma(\bar x)$. 
    This finishes the proof, because $h$ is surjective.
\end{proof}

\subsection{Eliminating redundant atoms and name abstraction}
\label{subsec:eliminating-redundant-atoms}
In this section, we show an alternative approach of finishing the translation \emph{Orbit-Finite Automata $\Rightarrow$ Straight automata}. 
It relies on the following lemma, which shows that we can eliminate redundant atoms from the automaton's support:
\begin{lemma}
\label{lem:aut-atom-elim}
    If a language $L$ is equivariant and recognized by a \emph{finitely supported} orbit-finite automaton\footnote{i.e. by an orbit-finite automaton whose transition function, initial state, and subset of accepting states does not
	have to be equivariant (but has to be finitely supported)}, then it is also recognized by an \emph{equivariant} orbit-finite automaton.
    Moreover, if $L$ is recognized by a \emph{finitely supported} straight automaton, then it is also recognized by an \emph{equivariant}
    straight automaton. 
\end{lemma}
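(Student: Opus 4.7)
The plan is a name-abstraction construction analogous to alpha-conversion in the lambda calculus: introduce an explicit parameter tuple $\bar b \in \atoms^{(k)}$ alongside each original state, and then identify pairs $(q, \bar b)$ and $(q', \bar b')$ that differ only by a simultaneous renaming of the parameters. Let the finitely supported automaton $\mathcal{A}$ have state set $Q$, transition $\delta$, initial state $q_0$ and accepting set $F$, all supported by $\alpha = \{a_1, \ldots, a_k\}$, and fix $\bar\alpha := (a_1, \ldots, a_k) \in \atoms^{(k)}$. For every $\bar b \in \atoms^{(k)}$, the $\alpha$-supportedness of $\mathcal{A}$ makes the renamed pieces $\delta_{\bar b}, q_0^{\bar b}, F^{\bar b}$ (obtained by transporting along any permutation $\pi$ with $\pi \bar\alpha = \bar b$) well-defined; and since $L$ is equivariant, each renamed automaton $\mathcal{A}_{\bar b}$ still recognises $L$.

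Then I take the new state set to be $Q' := (Q \times \atoms^{(k)})/\!\sim$, where $(q, \bar b) \sim (q', \bar b')$ iff for some (equivalently, every) tuple $\bar c \in \atoms^{(k)}$ whose entries are fresh from $q, q', \bar b, \bar b'$, one has $(\bar b\;\bar c) \cdot q = (\bar b'\;\bar c) \cdot q'$, where $(\bar b\;\bar c)$ denotes the permutation simultaneously swapping $b_i$ with $c_i$ for each $i$. The new transition, initial state and accepting set are defined by
\[
\delta'([(q, \bar b)], a) := [(\delta_{\bar b}(q, a), \bar b)], \qquad q_0' := [(q_0, \bar\alpha)], \qquad F' := \{[(q, \bar b)] \mid q \in F^{\bar b}\}.
\]
The routine checks are that $\sim$ is equivariant (it is defined using only equivariant operations) and that $\delta'$, $F'$ are well-defined on equivalence classes and equivariant. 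The crucial point is that $q_0'$ is equivariant, which amounts to showing $(q_0, \bar\alpha) \sim (\rho q_0, \rho\bar\alpha)$ for every $\rho$: picking $\bar c$ fresh also from $\rho$'s support, a direct computation shows that $(\rho\bar\alpha\;\bar c) \circ \rho \circ (\bar\alpha\;\bar c)$ coincides with $\rho$ outside $\bar c$ and fixes $\bar c$ pointwise, so it fixes the $\bar c$-supported representative $(\bar\alpha\;\bar c) \cdot q_0$. A simulation argument then shows the new automaton recognises $L$: the state reached on input $w$ is $[(q_w, \bar\alpha)]$ where $q_w$ is the state of $\mathcal{A}$ after $w$, and acceptance reduces to $q_w \in F$.

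Orbit-finiteness of $Q'$ is immediate from Lemma~\ref{lem:of-preserved}, since $Q \times \atoms^{(k)}$ is orbit-finite and $\sim$ is equivariant. I expect the main obstacle to be the straight case, which requires showing that this particular quotient of a straight set is again straight. My plan is to argue summand by summand: writing $Q = \atoms^{(k_1)} + \cdots + \atoms^{(k_n)}$, each product $\atoms^{(k_i)} \times \atoms^{(k)}$ decomposes as a disjoint sum of $\atoms^{(j)}$ summands indexed by the overlap pattern between the $\bar q$ and $\bar b$ coordinates. Within a fixed overlap pattern, two pairs are $\sim$-equivalent iff the \emph{free} atoms of $\bar q$ (those not occurring in $\bar b$) coincide, so the quotient of that summand is a copy of $\atoms^{(j)}$ where $j$ is the number of free atoms. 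The total quotient is therefore again a disjoint sum of $\atoms^{(j)}$'s, hence straight.
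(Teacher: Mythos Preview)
Your construction is the same idea as the paper's—name abstraction—carried out for all support atoms at once (the paper removes one atom at a time via the $[\atoms]$ functor, but the underlying quotient $(Q\times\atoms^{(k)})/{\sim}$ is exactly $[\atoms^k]Q$). The equivariance argument for $q_0'$ and the summand-by-summand analysis of the straight case are fine.

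There is, however, a genuine gap: $\delta'([(q,\bar b)],a):=[(\delta_{\bar b}(q,a),\bar b)]$ is \emph{not} well-defined on $\sim$-classes, because nothing prevents $a$ from mentioning atoms of $\bar b$ or $\bar b'$. Concretely, take $k=1$, $Q=\Sigma=\atoms$, and $\delta(q,a)=q$ if $a=1$, else $\delta(q,a)=a$ (supported by $\{1\}$). Then $(7,(2))\sim(7,(3))$ (both send $7$ to $7$ under swapping with a fresh $c$), but
\[
\delta'\big([(7,(2))],\,2\big)=[(\delta_{(2)}(7,2),(2))]=[(7,(2))],\qquad
\delta'\big([(7,(3))],\,2\big)=[(\delta_{(3)}(7,2),(3))]=[(2,(3))],
\]
and $(7,(2))\not\sim(2,(3))$. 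The fix is to compute $\delta'$ only on representatives $(q,\bar b)$ with $\bar b$ \emph{fresh for $a$}—such representatives always exist, and for two such representatives your calculation does go through. This is precisely what the paper accomplishes by first mapping the input letter through the injection $\iota_\Sigma:\Sigma\hookrightarrow[\atoms]\Sigma$ before applying $\abstr{a}\delta$. With that fix in place your simulation argument must trace the run using a parameter tuple $\bar b$ fresh from the entire input word $w$ (rather than $\bar\alpha$), which is exactly the paper's ``run $\pi_w(\mathcal{A})$ on $w$'' step.
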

To illustrate the intuition behind the proof of the lemma, we start by proving it for register automata:
\subsubsection{Register automata and atom placeholders}
A finitely supported register automaton is a register automaton, whose transition function is 
\emph{syntactically finitely supported}, i.e. it can be defined using the syntax
for an equivariant transition function extended with the following two types queries and 
one type of action:
\[
    \begin{tabular}{ccc}
    $\mathtt{r}_1 = a$,&
    $\mathtt{input} = a$,&
    $\mathtt{r}_1 := a$
    \end{tabular}
\]
It is not hard to see that a transition function is syntactically finitely supported, if and only if
it is a finitely supported function. The proof is almost the same as the one for syntactically equivariant
functions from Section~\ref{sec:dra}, but this time it uses $\alpha$-orbits, defined as:
\[\{ \pi(x) \ |\  \pi \textrm{ is an $\alpha$-permutation of atoms} \}\]

Now, let us take $L$ recognized by a finitely supported register automaton $\mathcal{A}$, and let us construct 
an equivariant $\mathcal{A}'$ that recognizes the same language.
Notice that the group of atom permutations has a natural action on the set of register automata\footnote{The action applies $\pi$ to its set of states, initial state, transition function, and set of accepting states.
The only non-equivariant of those components is the transition function, so this boils down to applying $\pi$ to the transition function.}, which means that we can 
talk about $\pi(\mathcal{A})$. It is not hard to see that the function that maps an automaton to its language is equivariant, i.e. $\pi(\mathcal{A})$ recognizes 
$\pi(L)$. Since $L$ is equivariant, it follows that for every $\pi$:
\[ \pi(\mathcal{A}) \textrm{ recognises } L\]
Every word $w \in \atoms^*$ is finitely supported, so we can always find a permutation $\pi_w$,
such $\pi_w(\mathcal{A})$ does not use any letter from $w$ as a constant. The idea of the proof is to construct one automaton $\mathcal{A}'$,
whose run on every $w$ simulates the run of $\pi_w(\mathcal{A})$. Choose a finite, atomless set $P$ of \emph{atom placeholders},
such that $|P| = |\supp(\mathcal{A})|$. This means that there exists a bijection $p : \supp(\mathcal{A}) \fsto P$
(the bijection is supported by $\supp(\mathcal{\alpha})$). Let us now consider a variant of a register automaton,
which can use its register to store both atoms or placeholders. Now, by replacing queries and actions of $\mathcal{A}$ in the following way
\[
    \begin{tabular}{ccc}
        $\mathtt{r}_1 = a$ & $\rightsquigarrow$ & $\mathtt{r}_1 = p_a$\\
        $\mathtt{input} = a$ & $\rightsquigarrow$ & $\mathtt{input} = p_a$\\
        $\mathtt{r}_1 := a$ & $\rightsquigarrow$ & $\mathtt{r}_1 := p_a$\\
    \end{tabular}
\]
we obtain an equivariant register automaton with placeholders $\mathcal{A}'$. 
Note that the query that compares an atom with a placeholder always returns ``No''. 
Since no atom from $\alpha$ appears in $w$, it follows that the run of $\mathcal{A}'$ on $w$
simulates the run $\pi_w(\mathcal{A})$ on $w$. To finish the proof, notice that a placeholder 
automaton can easily be simulated by a register automaton, that uses its control state to keep 
track of the placeholder values.\\

\subsubsection{Name abstraction}
Before we prove Lemma~\ref{lem:aut-atom-elim} in its generality, let us show how to introduce
placeholders into abstract sets with atoms. This operation is described in 
\cite[Section 4]{pitts2013nominal} under the name of \emph{name abstraction}. The placeholders are 
introduced using operations of the following kind: ``Replace all the occurrences
of an atom $a \in \atoms$ in $x$ by a placeholder'' (where $x$ is an element of a set with atoms $X$).
We denote this operation as $\abstr{a}x$. Note that it is not injective:
For example, if we consider $X = \atoms^{(2)}$, then
\[ \langle 5 \rangle (5, 2) = \langle 4 \rangle (4, 2) \]
because in both cases the first element is replaced with the placeholder. Before
we define $\abstr{a}x$ formally, let us discuss the inverse
operation: ``Replace the placeholder in $\abstr{a}x$ with $b \in \atoms$.'' This
operation is denoted as $(\abstr{a}x)@b$. If $\sigma_{(a\ b)}$ denotes the atom permutation that 
swaps $a$ and $b$, then $@$ is defined in the following way:
\[(\abstr{a}x)@b = \begin{cases}
    x & \textrm{if } a = b\\
    \sigma_{(a\ b)}\, x & \textrm{if $b \not \in \supp(x)$}\\
    \textrm{undefined} & \textrm{otherwise}	
    \end{cases}
\]
The fact that $(\abstr{a}x)@b$ is undefined when $b \neq a$ and $b \in \supp(x)$ 
represents the intuition that the placeholder is different from all the atoms in $\abstr{a}x$.
For example, if we consider $X = \atoms^{(2)}$, then:
\[
    \begin{tabular}{ccc}
        $\abstr{6}(6, 4)@3 = (3,4)$ & but & $\abstr{6}(6, 4)@4$ is undefined 
    \end{tabular} 
\]
Observe that if we had simply replaced $6$ with a $4$ in $(6, 4)$, we would have obtained $(4, 4)$
which does not belong to $\atoms^{(2)}$.\\

Intuitively, in order to check if
$\abstr{a}x$ is equal to $\abstr{b}y$, we take a \emph{fresh} $c \in \atoms$ -- i.e.
such $c$ that $c \not \in (\{a, b\} \, \cup \, \supp(x) \, \cup \, \supp(y)$) --
and check if $\abstr{a}x@c = \abstr{b}y@c$. Notice that this does not 
depend on the choice of $c$ -- if this equality holds for some fresh $c$, then it holds
for all fresh $c$. We can define this as the following relation:
\[ \abstr{a}x \sim  \abstr{b}y \iff \abstr{a}x@c = \abstr{b}y@c \]
As shown in \cite[Lemma 4.1]{pitts2013nominal}, this is an equivalence relation.
We use it to define the set of all possible $\abstr{a}x$'s, denoted as $[\atoms]X$:
\[ [\atoms]X = {(\atoms \times X)_{/\sim}}\comma \] 
We define $\abstr{a}x$ to be the equivalence class of $(a, x)$. For example:
\[ [\atoms]\atoms^2 \simeq
 \underbrace{\atoms^2}_{\substack{
 		\textrm{none of the atoms}\\
 		\textrm{is the placeholder}\\
 		\textrm{e.g.} \abstr{5}(1, 2)}}
 +
 \underbrace{\atoms}_{\substack{
 		\textrm{the first atom }\\
 		\textrm{is  the placeholder}\\
 		\textrm{e.g.} \abstr{5}(5, 2)}}
 +
 \underbrace{\atoms}_{\substack{
 		\textrm{the second atom}\\
 		\textrm{is  the placeholder}\\
 		\textrm{e.g.} \abstr{5}(1, 5)}}
 +\underbrace{1}_{\substack{
 		\textrm{both of the atoms}\\
 		\textrm{are placeholders}\\
 		\textrm{e.g.} \abstr{5}(5, 5)}}\]
Notice that $[\atoms]X$ is a set with atoms, with the following action of the group permutation:
\[ \pi (\abstr{a}x) = \abstr{\pi a} (\pi x)\]
Operation $[\atoms](\cdot)$ commutes with many classical combinators:
\begin{lemma}
\label{lem:abstr-products-sums-words}
For all sets with atoms $X$ and $Y$:
\[
\begin{tabular}{ccc}
$[\atoms](X \times Y) \simeq [\atoms]X \times [\atoms]Y$&
$[\atoms](X + Y) \simeq [\atoms]X + [\atoms]Y$&
$[\atoms](X^*) \simeq ([\atoms]X)^*$ 	
\end{tabular}
\]
\end{lemma}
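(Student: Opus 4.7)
The plan is to exhibit, in each of the three cases, an explicit equivariant bijection and then use Lemma~\ref{lem:fs-rev-bjection} to promote it to an isomorphism of sets with atoms. The guiding idea is that $\abstr{a}(-)$ can be ``pushed inside'' a tuple, pair, or disjoint sum, provided one uses the \emph{same} abstracted atom on both sides; the technical work is in showing that when going backwards one can always rename the abstractions so that they agree on a common fresh atom.

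For the sum case I would define
\[
    \varphi : [\atoms](X + Y) \to [\atoms]X + [\atoms]Y
\]
by $\varphi(\abstr{a}\iota_X(x)) = \iota_{[\atoms]X}(\abstr{a}x)$ and analogously on the $Y$-summand, with inverse that simply forgets the coprojection. Well-definedness on $\sim$-classes and equivariance are immediate because the coprojections are equivariant and commute with every permutation action, and a single-atom abstraction of a tagged pair cannot change the tag.

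For the product case the nontrivial map is
\[
    \varphi : [\atoms](X\times Y) \to [\atoms]X \times [\atoms]Y,
    \qquad
    \abstr{a}(x,y)\ \longmapsto\ (\abstr{a}x,\ \abstr{a}y).
\]
Well-definedness follows by picking a $c$ fresh for $a,a',x,y,x',y'$ in $\abstr{a}(x,y)=\abstr{a'}(x',y')$, applying the defining criterion of $\sim$ to obtain $\sigma_{(a\,c)}(x,y)=\sigma_{(a'\,c)}(x',y')$, and then projecting to each coordinate. For surjectivity, given $(\abstr{a}x,\abstr{b}y)$ I would choose $c \notin \supp(x)\cup\supp(y)\cup\{a,b\}$, rewrite both abstractions as $\abstr{c}(\sigma_{(a\,c)}x)$ and $\abstr{c}(\sigma_{(b\,c)}y)$, and take $\abstr{c}(\sigma_{(a\,c)}x,\sigma_{(b\,c)}y)$ as the preimage; this does not depend on the choice of $c$ by the usual ``some/any fresh $c$'' argument (\cite[Lemma 4.1]{pitts2013nominal} style). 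Injectivity uses the same trick: if the two coordinate-wise equalities hold, choose a single $c$ fresh for \emph{all} of $x,y,x',y',a,a'$ and glue the two coordinate equalities into $\sigma_{(a\,c)}(x,y)=\sigma_{(a'\,c)}(x',y')$.

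The word case is handled in exactly the same spirit, with the bijection
\[
    \abstr{a}(x_1,\ldots,x_n)\ \longmapsto\ (\abstr{a}x_1,\ldots,\abstr{a}x_n),
\]
and, for surjectivity, given $(\abstr{a_1}x_1,\ldots,\abstr{a_n}x_n)$ picking $c$ fresh for all $a_i$ and all $x_i$ and synchronizing every abstraction to use $c$. Equivariance in all three cases is a direct check using $\pi(\abstr{a}x)=\abstr{\pi a}(\pi x)$ and the fact that the standard actions on products, sums and finite lists are coordinate-wise. The main conceptual obstacle is simply verifying that the ``synchronize to a common fresh atom'' step is genuinely well-defined; this is where I expect to spend most of the care, by invoking the Gabbay--Pitts some/any-fresh principle to argue that the choice of $c$ is irrelevant both for checking equality of $\sim$-classes and for constructing preimages.
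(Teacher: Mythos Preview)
Your proposal is correct and follows the same standard approach as the paper: the paper defers the product and sum cases to \cite[Equations~4.27,~4.28]{pitts2013nominal} (whose proofs are precisely the ``synchronize to a common fresh atom'' argument you spell out), and for words it writes the same map as yours in the form $W(w) = (\abstr{a}(w@a)_1,\ldots,\abstr{a}(w@a)_n)$ for $a$ fresh for $w$, which unfolds to your $\abstr{a}(x_1,\ldots,x_n)\mapsto(\abstr{a}x_1,\ldots,\abstr{a}x_n)$. Your write-up is more detailed than the paper's (which just states the formula and asserts independence of the choice of $a$), but the construction and the verification strategy are the same.
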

\begin{proof}
The isomorphism for products and coproducts is proved in \cite[Equations~4.27,~4.28]{pitts2013nominal}.
For the words, we use the following isomorphism:
\[W : [\atoms](X^*) \to ([\atoms]X)^*\]
\[
    \begin{tabular}{cc}
        $W(w) = (\abstr{a}(w@a)_1, \abstr{a}(w@a)_2, \ldots, \abstr{a}(w@a)_n)$ & $\substack{\textrm{where } a \textrm{ is any atom}\\
                                                                            \textrm{that does not appear in } \supp(w)}$
    \end{tabular}
\]
Note that $W(w)$ does not depend on the choice of $a$.
\end{proof}
\noindent
Let us cite two important properties of $[\atoms]$:
\begin{lemma}[{\cite[Proposition~4.5]{pitts2013nominal}}]
\label{lem:abstr-removes-atom}
For every $x \in X$ and every atom $a$:
\[\supp(\abstr{a}x) = \supp(x) - \{a\} \]
\end{lemma}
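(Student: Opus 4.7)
The plan is to prove the two inclusions $\supp(\abstr{a}x) \subseteq \supp(x) \setminus \{a\}$ and $\supp(x) \setminus \{a\} \subseteq \supp(\abstr{a}x)$ separately, using the $@$-operation to reduce equalities of abstractions to equalities in the original set.

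For the first inclusion it suffices to show that the finite set $\supp(x) \setminus \{a\}$ supports $\abstr{a}x$. I would pick an arbitrary $(\supp(x) \setminus \{a\})$-permutation $\pi$ and verify $\pi(\abstr{a}x) = \abstr{\pi a}(\pi x) = \abstr{a}x$. If $\pi(a) = a$, then $\pi$ fixes all of $\supp(x)$, so $\pi x = x$ and there is nothing to do. Otherwise, writing $b = \pi(a)$, a quick bijectivity argument forces $b \notin \supp(x)$: if $b \in \supp(x)$ then $\pi(b) = b$ because $b \neq a$, which contradicts $\pi(a) = b$. Picking an atom $c$ fresh for $a$, $b$, $\supp(x)$ and $\supp(\pi x)$, the goal $\abstr{b}(\pi x) = \abstr{a}x$ unfolds via $@$ to $\sigma_{(b\ c)}(\pi x) = \sigma_{(a\ c)}(x)$, which follows by checking that the composite $\sigma_{(a\ c)}\sigma_{(b\ c)}\pi$ fixes every atom of $\supp(x)$ pointwise — a short case analysis on $a$, on elements of $\supp(x) \setminus \{a\}$, and on $b$ (which is not in $\supp(x)$).

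For the second inclusion I would argue by contradiction: suppose some $y \in \supp(x) \setminus \{a\}$ does not lie in $\supp(\abstr{a}x)$. Pick an atom $z$ fresh for $\supp(x) \cup \supp(\abstr{a}x) \cup \{a\}$ and set $\sigma = \sigma_{(y\ z)}$. Since $y, z \neq a$, we have $\sigma(a) = a$; and since $y, z \notin \supp(\abstr{a}x)$, the permutation $\sigma$ fixes $\supp(\abstr{a}x)$ pointwise, so $\sigma(\abstr{a}x) = \abstr{a}(\sigma x) = \abstr{a}x$. Applying $@c$ for fresh $c$ reduces this to $\sigma_{(a\ c)}(\sigma x) = \sigma_{(a\ c)}(x)$, hence $\sigma x = x$. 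But $\supp(\sigma x) = \sigma(\supp(x)) = (\supp(x) \setminus \{y\}) \cup \{z\}$, which differs from $\supp(x)$, contradicting the uniqueness of least supports from Theorem~\ref{thm:least-supports}.

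The main obstacle is administrative rather than conceptual: one must carefully track when $(\abstr{a}x)@b$ is actually defined and when the freshness assumptions on $c$ and $z$ hold, so that each invocation of $@$ legitimately returns $\sigma_{(\cdot\ \cdot)}$ applied to the underlying element. Once those side conditions are handled, both directions reduce to elementary computations with transpositions.
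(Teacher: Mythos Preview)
The paper does not prove this lemma at all: it is stated with a citation to \cite[Proposition~4.5]{pitts2013nominal} and left without argument. Your proof is correct and self-contained, and is essentially the standard nominal-sets argument one finds in Pitts' book: the inclusion $\supp(\abstr{a}x)\subseteq\supp(x)\setminus\{a\}$ by exhibiting $\supp(x)\setminus\{a\}$ as a support, and the reverse inclusion by a freshness-swap contradiction. The case analysis in the first direction and the use of equivariance of $\supp$ in the second are exactly right, and your bookkeeping of when $@c$ is defined is handled correctly. There is no circularity in invoking Theorem~\ref{thm:least-supports}, since in the paper it precedes this lemma.
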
 
\begin{lemma}[{\cite[Proposition~4.14]{pitts2013nominal}}]
\label{lem:abstr-fun}
The following sets are isomorphic:
	\[ [\atoms](X \fsto Y) \, \simeq \, [\atoms]X \fsto [\atoms]Y \]
The isomorphism is given by the following formula:
\[(\abstr{a}f)x = \abstr{b}\left((\abstr{a}f)@b\right)(x@b)\comma \]
where $b \in \atoms$ is some atom that does not appear in $x$ or in $\supp(\abstr{a}f)$.
(As usual, the result does not depend on the choice of $c$.)
\end{lemma}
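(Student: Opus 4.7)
The plan is to denote the claimed map by $\Phi : [\atoms](X \fsto Y) \to [\atoms]X \fsto [\atoms]Y$, where $\Phi(\abstr{a}f)(x) := \abstr{b}\bigl((\abstr{a}f)@b\bigr)(x@b)$, and to establish the lemma in five steps: well-definedness in the auxiliary atom $b$; descent to $\sim$-classes in $\abstr{a}f$; finite supportedness of $\Phi(\abstr{a}f)$ as a function of $x$; equivariance of $\Phi$ itself; and construction of an equivariant inverse $\Psi$. Throughout, the basic tool is that whenever $b \notin \supp(z) - \{a\}$, the element $(\abstr{a}z)@b$ is simply $\sigma_{(a\,b)}z$ (up to the degenerate case $a=b$), so every $@$-application can be replaced by a transposition on a suitable representative.

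For Step 1, I would fix two atoms $b, b'$ both outside $\supp(x) \cup \supp(\abstr{a}f)$ and apply the transposition $\sigma_{(b\,b')}$. Since this transposition fixes $\abstr{a}f$ and $x$ pointwise in their supports, equivariance of $@$ yields $\sigma_{(b\,b')}\bigl((\abstr{a}f)@b\bigr) = (\abstr{a}f)@b'$ and $\sigma_{(b\,b')}(x@b) = x@b'$, and equivariance of evaluation in $X \fsto Y$ propagates this through the application; finally the relation $\abstr{b}z = \abstr{b'}\sigma_{(b\,b')}z$ (which holds whenever $b' \notin \supp(z) - \{b\}$) gives the equality of the two candidate outputs. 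Step 2 is essentially the same calculation: if $(a,f) \sim (a',f')$ then, picking a common fresh $b$, Pitts' definition of $\sim$ gives $(\abstr{a}f)@b = (\abstr{a'}f')@b$ inside $X \fsto Y$, so both sides of the formula collapse to the same value.

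Step 3 uses Lemma~\ref{lem:abstr-removes-atom}: the candidate support of $\Phi(\abstr{a}f)$ is $\supp(\abstr{a}f) = \supp(f) - \{a\}$. For any $\pi$ fixing this set pointwise, I would choose $b$ fresh for $x$, $\pi x$, and $\supp(\abstr{a}f)$ and also such that $\pi(b) = b$ (always possible since $\pi$ has finite support), and then unfold both $\pi(\Phi(\abstr{a}f)(x))$ and $\Phi(\abstr{a}f)(\pi x)$ using the formula, concluding by equivariance of $@$, of evaluation, and of $\abstr{b}(\cdot)$. Step 4 is a similar but easier unfolding showing $\Phi(\pi \abstr{a}f) = \pi \Phi(\abstr{a}f)$.

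Step 5, the construction of the inverse, is the genuinely technical part and the main obstacle. Given $F : [\atoms]X \fsto [\atoms]Y$, I would pick $a \notin \supp(F)$ and define $g : X \to Y$ on a representative $y \in X$ by $g(y) := F(\abstr{c}\,\sigma_{(a\,c)}y)\,@\,c$, where $c$ is any atom fresh for $y$ and $\supp(F) \cup \{a\}$. A freshness-swap argument analogous to Step 1 shows $g(y)$ is independent of $c$, supports computations like those in Lemma~\ref{lem:fs-functions-preserve-supports} show $g$ is finitely supported by $\supp(F) \cup \{a\}$, and the resulting class $\Psi(F) := \abstr{a}g$ is independent of $a$ (outside $\supp(F)$). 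The identities $\Phi \circ \Psi = \mathrm{id}$ and $\Psi \circ \Phi = \mathrm{id}$ then reduce, after picking compatible fresh atoms throughout, to the bijectivity of $z \mapsto (\abstr{c}z)@c$ for fresh $c$. The bookkeeping of freshness conditions here is what makes this step delicate, but once set up it is a mechanical unfolding of the $@$/$\abstr{\cdot}$ calculus established above.
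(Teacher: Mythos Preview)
The paper does not prove this lemma: it simply cites \cite[Proposition~4.14]{pitts2013nominal} and moves on. So there is no paper proof to compare against, and the question becomes whether your argument stands on its own.

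Steps~1--4 are fine. In particular, the freshness-swap argument in Step~1 works because the final $\abstr{b}(\cdot)$ removes $b$ from the support of the output, so $\sigma_{(b\,b')}$ genuinely fixes the value and not merely its orbit.

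Step~5, however, has a real error. Your formula $g(y) := F(\abstr{c}\,\sigma_{(a\,c)}y)\,@\,c$ simplifies (via $\abstr{c}\sigma_{(a\,c)}y = \abstr{a}y$ for fresh $c$) to $g(y) = F(\abstr{a}y)@c$, and this \emph{does} depend on $c$. Concretely, take $X = Y = \atoms$, $F = \mathrm{id}_{[\atoms]\atoms}$, $a = 1$, $y = 1$: with $c = 2$ you get $g(1) = (\abstr{2}2)@2 = 2$, but with $c = 3$ you get $g(1) = 3$. The ``analogous to Step~1'' argument fails precisely because there is no outer $\abstr{c}(\cdot)$ here to strip $c$ from the support of the result; the swap $\sigma_{(c\,c')}$ relates the two candidate values but does not fix either of them.

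The correct inverse (and the one in Pitts) uses $@a$ rather than $@c$: define $g(y) := F(\abstr{a}y)@a$. This is well-defined because $a \notin \supp(F) \cup \supp(\abstr{a}y)$ forces $a \notin \supp(F(\abstr{a}y))$, so the concretion at $a$ is available; and now there is no auxiliary fresh name to worry about. With this correction, the rest of your Step~5 outline goes through.
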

\noindent
Now let us show that $\abstr{c}(f)$ preserves identities and function compositions\footnote{This means 
that the following mapping is a functor: $X \mapsto [\atoms]X$, $f \mapsto \abstr{c}f$.
To avoid potential confusion, it is worth pointing out that this functor
very similar to the $[\atoms]$ functor defined in \cite[Section~4.4]{pitts2013nominal}, 
but not exactly the same. For this reason the proof of the lemma is very similar 
to the proof of \cite[Lemma~4.10]{pitts2013nominal}}. 
\begin{lemma}
\label{lem:abs-functor}
For every $\abstr{a} id_X = id_{[\atoms]X}$ and $\abstr{a}(f \circ g ) = (\abstr{a} f) \circ (\abstr{a} g)$.
\end{lemma}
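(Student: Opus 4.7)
My plan is to prove both identities by unfolding the explicit formula for $\abstr{a}f$ given in Lemma~\ref{lem:abstr-fun}, and exploiting the fact that the ``unabstraction'' operation $(\abstr{a}f)@b = \swap{a}{b} f$ (valid for $b$ fresh) is nothing but the group action $\swap{a}{b}$ on functions, which obviously preserves identities and compositions. The one auxiliary identity I will need throughout is
\[\abstr{b}(x@b) = x \quad\text{whenever $b \notin \supp(x)$,}\]
which is immediate from the definition of the equivalence $\sim$ on $\atoms \times X$: picking a witness $x = \abstr{c}y$ with $c \notin \supp(x)\cup\{b\}$, both sides collapse to $\swap{c}{b}y$ after a further $@d$ with $d$ fresh.

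For the identity case, I apply Lemma~\ref{lem:abstr-fun} to any $x \in [\atoms]X$, choosing $b \notin \supp(x)\cup\supp(\abstr{a}\idf_X)$:
\[(\abstr{a}\idf_X)x \;=\; \abstr{b}\bigl((\abstr{a}\idf_X)@b\bigr)(x@b) \;=\; \abstr{b}\bigl(\swap{a}{b}\idf_X\bigr)(x@b) \;=\; \abstr{b}(x@b) \;=\; x,\]
using that $\idf_X$ is equivariant so $\swap{a}{b}\idf_X = \idf_X$, together with the auxiliary identity above.

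For the composition case, the main trick is to pick \emph{one} atom $b$ fresh for all three of $\abstr{a}f$, $\abstr{a}g$, and $x$ simultaneously; this is possible because each has finite support. With such a $b$ the formula of Lemma~\ref{lem:abstr-fun} applied to $\abstr{a}g$ at $x$ yields
\[(\abstr{a}g)x \;=\; \abstr{b}\bigl(\swap{a}{b}g\bigr)(x@b) \;=\; \abstr{b}\,z, \qquad z := (\swap{a}{b}g)(x@b),\]
and $b$ is still fresh for $z$ by Lemma~\ref{lem:fs-functions-preserve-supports} (since $b$ is fresh for $g$ and for $x@b$). Applying the formula again with the same $b$ to $\abstr{a}f$ at $\abstr{b}z$ gives
\[(\abstr{a}f)(\abstr{b}z) \;=\; \abstr{b}\bigl(\swap{a}{b}f\bigr)(z) \;=\; \abstr{b}\bigl(\swap{a}{b}f \circ \swap{a}{b}g\bigr)(x@b) \;=\; \abstr{b}\bigl(\swap{a}{b}(f\circ g)\bigr)(x@b),\]
where the last step uses that the conjugation action on functions commutes with composition. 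The right-hand side is precisely $(\abstr{a}(f\circ g))x$ by another application of Lemma~\ref{lem:abstr-fun}.

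The only genuinely delicate point is the legitimacy of choosing a single fresh $b$ in the composition step and checking that the resulting $z$ is still fresh for $b$; this is where Lemma~\ref{lem:fs-functions-preserve-supports} does real work. Everything else is bookkeeping with the formula of Lemma~\ref{lem:abstr-fun} and the observation that $@b$, being a permutation action, is functorial for free.
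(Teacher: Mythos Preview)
Your approach is essentially the paper's: both arguments unfold the formula of Lemma~\ref{lem:abstr-fun}, use that $(\abstr{a}f)@b = \swap{a}{b}f$ is conjugation by a swap (hence preserves identities and compositions), and push everything through with a single fresh $b$. The paper packages two recurring moves as separate claims---$\abstr{a}x = \abstr{b}(\swap{a}{b}x)$ for $b$ fresh, and $(\abstr{a}f)x = \abstr{a}(f(x@a))$ when $a$ is fresh for~$x$---but these are exactly the ingredients you invoke directly via the formula and your auxiliary identity.

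There is one slip in the composition case. You assert that ``$b$ is still fresh for $z = (\swap{a}{b}g)(x@b)$ since $b$ is fresh for $g$ and for $x@b$.'' Both premises can fail: if $a \in \supp(g)$ then $b \in \supp(\swap{a}{b}g)$, and $x@b$ typically \emph{does} contain $b$ (that is precisely what substituting $b$ for the placeholder does). For instance with $g = \const_a : 1 \to \atoms$ one gets $z = b$. Fortunately you do not need this freshness at all. To apply the formula of Lemma~\ref{lem:abstr-fun} to $\abstr{a}f$ at the argument $\abstr{b}z$ using the same $b$, you only need $b \notin \supp(\abstr{b}z)$---which is automatic since $\supp(\abstr{b}z) = \supp(z)\setminus\{b\}$---and $b \notin \supp(\abstr{a}f)$, which you already arranged. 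Then $(\abstr{b}z)@b = z$ holds by the very definition of~$@$ (the case $a=b$), regardless of whether $b \in \supp(z)$. So the computation is correct; simply delete the freshness-of-$z$ sentence and its justification.
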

\begin{proof}
    We start with two claims that follow immediately from the definitions:
    \begin{claim}
    \label{claim:swap-placeholder}
        If $b$ is fresh for $\abstr{a}x$, then $\abstr{a} x = \abstr{b}( \swap{a}{b} x)$. 
    \end{claim}
    \begin{claim}
    \label{claim:abstr-arg}
    If $a$ is fresh for $x$, then $(\abstr{a} f) x = \abstr{a} (f (x@a))$.
    \end{claim}
    \noindent
    Let us take some $x \in X$ and an atom $b$ that is fresh for $x$, and notice that:
    \[  (\abstr{a}  id_X) \, x \stackrel{\textrm{Claim~\ref{claim:swap-placeholder}}}{=}
        (\abstr{b} id_X) \, x \stackrel{\textrm{Claim~\ref{claim:abstr-arg}}}{=}
        \abstr{b} (x@b) = x\]
    For the second part, we need to show that for all $a$, $f$, $g$, $x$:
    \[(\abstr{a} f) \, \left( (\abstr{a} g) \, x \right) = (\abstr{a} (f \circ g)) \, x\] 
    Take some $a, f, g, x$, and let $b$ be a fresh atom. Define $f' := \sigma_{(a b)} f$,  $g' := \sigma_{(a b)} g$, and observe that:
    \[(\abstr{a} f)\; ((\abstr{a} g)\; x) \stackrel{\textrm{Claim~\ref{claim:swap-placeholder}}}{=} (\abstr{b} f')\; ((\abstr{b} g')\; x) \stackrel{\textrm{Claim~\ref{claim:abstr-arg}}}{=} (\abstr{b} f')\; (\abstr{b} (g'\, x@b)) \stackrel{\textrm{Claim~\ref{claim:abstr-arg}}}{=} \]
    \[=  \abstr{b}\left( f' (\abstr{b} (g'\, x@b)@b) \right) 
    = \abstr{b} (f'\, (g'\, x@b)) =\] \[= \abstr{b} ((f' \circ g') \; x@b)  \stackrel{\textrm{Claim~\ref{claim:abstr-arg}}}{=} (\abstr{b}(f' \circ g')) \; x \stackrel{\textrm{Claim~\ref{claim:swap-placeholder}}}{=} \abstr{a} (f \circ g) \; x\]
\end{proof}
\noindent

\noindent
Finally, we show one more property of $\abstr{c}(\cdot) : X \fsto [\atoms] X$:
\begin{lemma}
\label{lem:abstr-natural}
	For all $a \in \atoms$, $f : X \to Y$ and $x \in X$ it holds that:
	\[\abstr{a}(f x) = (\abstr{a} f)(\abstr{a} x)\]
	This means that the following diagram commutes\footnote{This means that $\abstr{c}(\cdot) : X \fsto [\atoms]X$
    is a natural transformation between the identity functor and the functor $\abstr{c}(\cdot)$ from the previous footnote. The same is true for the functor $[\atoms]$  (see \cite[Equation~4.19]{pitts2013nominal} and the previous footnote).} 
    
    \vspace{0.3cm}
    \adjustbox{width=0.4\textwidth, center}{
    \begin{tikzcd}
        X && Y \\
        {[\mathbb{A}]X} && {[\mathbb{A}]Y}
        \arrow["f", from=1-1, to=1-3]
        \arrow["{\langle c \rangle(\cdot)}"', from=1-1, to=2-1]
        \arrow["{\langle c \rangle f}", from=2-1, to=2-3]
        \arrow["{\langle c \rangle(\cdot)}"', from=1-3, to=2-3]
    \end{tikzcd}
        }
    \vspace{0.3cm}
\end{lemma}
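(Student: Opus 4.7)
\emph{Proof proposal.} The plan is to unfold both sides of the equation using the explicit formula for function application provided by Lemma~\ref{lem:abstr-fun}, evaluate everything at a sufficiently fresh atom $b$, and observe that after this evaluation both sides reduce to $\abstr{b}(\swap{a}{b}(f\,x))$.

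Concretely, I would first pick a fresh atom $b \in \atoms$ with $b \neq a$ and $b \notin \supp(f) \cup \supp(x)$. Such a $b$ exists because $\supp(f) \cup \supp(x)$ is finite. By Lemma~\ref{lem:abstr-removes-atom} this choice guarantees that $b$ is fresh for each of $\abstr{a}f$, $\abstr{a}x$, and $\abstr{a}(f\,x)$, so the concretion $(\cdot)@b$ is defined on all three. By the definition of concretion (recalled in the discussion before Lemma~\ref{lem:abstr-removes-atom}) applied to $f$ and to $x$, freshness of $b$ for $\supp(f)$ and $\supp(x)$ gives $(\abstr{a}f)@b = \swap{a}{b}f$ and $(\abstr{a}x)@b = \swap{a}{b}x$.

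Next I would compute the right-hand side via Lemma~\ref{lem:abstr-fun}: setting $y = \abstr{a}x$, its formula yields
\[
(\abstr{a}f)(\abstr{a}x) \;=\; \abstr{b}\Bigl(\bigl((\abstr{a}f)@b\bigr)\,\bigl((\abstr{a}x)@b\bigr)\Bigr) \;=\; \abstr{b}\bigl((\swap{a}{b}f)\,(\swap{a}{b}x)\bigr).
\]
Now I would use the fact that atom permutations act on finitely supported functions by conjugation, so $(\swap{a}{b}f)(\swap{a}{b}x) = \swap{a}{b}(f\,x)$; this is just the equivariance of function evaluation. Hence the right-hand side equals $\abstr{b}(\swap{a}{b}(f\,x))$.

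Finally, for the left-hand side, since $b$ is fresh for $f\,x$ (as $\supp(f\,x) \subseteq \supp(f) \cup \supp(x)$ by Lemma~\ref{lem:fs-functions-preserve-supports}), the same swap principle used in Claim~\ref{claim:swap-placeholder} of the previous proof gives $\abstr{a}(f\,x) = \abstr{b}(\swap{a}{b}(f\,x))$. Comparing the two expressions concludes the proof. The only real obstacle is keeping track of the freshness conditions needed to justify each equality, but choosing a single $b$ outside $\{a\} \cup \supp(f) \cup \supp(x)$ at the start makes every invocation legal simultaneously.
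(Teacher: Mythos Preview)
Your proof is correct. The paper's proof is considerably shorter: it invokes Claim~\ref{claim:abstr-arg} directly, observing that $a$ is fresh for $\abstr{a}x$ (by Lemma~\ref{lem:abstr-removes-atom}) and that $(\abstr{a}x)@a = x$ by definition, so
\[
(\abstr{a}f)(\abstr{a}x) \stackrel{\text{Claim~\ref{claim:abstr-arg}}}{=} \abstr{a}\bigl(f\,((\abstr{a}x)@a)\bigr) = \abstr{a}(f\,x).
\]
Your route instead unfolds the isomorphism formula from Lemma~\ref{lem:abstr-fun} at a fresh $b$, pushes the swap $\swap{a}{b}$ through function application by equivariance, and then folds back using Claim~\ref{claim:swap-placeholder}. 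This is perfectly sound and self-contained, but it duplicates work already packaged in Claim~\ref{claim:abstr-arg}; the paper's version simply reuses that claim and avoids introducing the auxiliary atom $b$ altogether.
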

\begin{proof}The lemma follows from Claim~\ref{claim:abstr-arg}:
	\[(\abstr{a} f)(\abstr{a} x) \stackrel{\textrm{Claim~\ref{claim:abstr-arg}}}{=} \abstr{a} (f \; \abstr{a}x@a) = \abstr{a} (f x)\]
\end{proof}

\subsubsection{Redundant atoms in orbit-finite automata}

We are now ready to prove the general version of Lemma~\ref{lem:aut-atom-elim}. Take an orbit-finite automaton $\mathcal{A}$
that recognizes an equivariant language. We pick some $a \in \supp(\mathcal{A})$ and we construct an
automaton $\abstr{a}\mathcal{A}$ such that:
\begin{enumerate}
    \item  $\abstr{a}\mathcal{A}$ recognizes the same language as $\mathcal{A}$;
    \item  $\supp(\abstr{a}\mathcal{A}) = \supp(\mathcal{A}) - \{ a \}$; and
    \item  if $\mathcal{A}$ is straight, then so is $\abstr{a}\mathcal{A}$. 
\end{enumerate}
This is enough to prove the lemma, because if we repeat this construction $|\supp(\mathcal{A})|$ times, we obtain an equivariant automaton, recognizing
the same language as $\mathcal{A}$. Thanks to the third assumption, this construction preserves 
straight automata.

\noindent
If $\mathcal{A} = (Q, \Sigma, q_0, Q_{acc}, f)$, then we define  $\abstr{a}\mathcal{A'}$ as:
\[ 
    \begin{tabular}{ccccc}
        $([\atoms]Q,$ & $[\atoms]\Sigma,$ & $\abstr{a}(q_0),$ & $\{ \abstr{a}q \ | \ a \in Q_{acc}\},$ & $\abstr{a}(\delta))$
    \end{tabular}
\]
This definition results in a slight mismatch of alphabets -- the alphabet of $\abstr{a}\mathcal{A}$ is $[\atoms]\Sigma$, but we want it
to recognize languages over $\Sigma$. The following lemma shows a natural way to inject $\Sigma$ into $[\atoms]\Sigma$:

\begin{claim}
\label{claim:abstr-embedding}
For every $X$, the following $\iota_X$ function is an injection $X\hookrightarrow \atoms[X]$:
\[ \begin{tabular}{cc}
 $\iota_X(x) = \abstr{a}x$ & for any $a \not \in \supp(x)$  	
 \end{tabular}
\]
\end{claim}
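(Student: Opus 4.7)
The plan is to verify first that the map $\iota_X$ is well defined (i.e. the value $\abstr{a}x$ does not depend on the choice of $a \not\in \supp(x)$), and then to establish injectivity by evaluating at a sufficiently fresh atom using the $@$ operation.

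For well-definedness, suppose $a, a' \not\in \supp(x)$ with $a \neq a'$. Since $x$ is supported by $\supp(x)$ and the transposition $\swap{a}{a'}$ fixes every atom in $\supp(x)$, we have $\swap{a}{a'} x = x$. By Lemma~\ref{lem:abstr-removes-atom}, $\supp(\abstr{a}x) = \supp(x) - \{a\}$, which does not contain $a'$, so $a'$ is fresh for $\abstr{a}x$. Applying the identity $\abstr{a}x = \abstr{a'}(\swap{a}{a'} x)$ (i.e.\ Claim~\ref{claim:swap-placeholder} from the proof of Lemma~\ref{lem:abs-functor}) gives $\abstr{a}x = \abstr{a'}x$.

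For injectivity, suppose $\iota_X(x) = \iota_X(y)$, so $\abstr{a}x = \abstr{b}y$ with $a \not\in \supp(x)$ and $b \not\in \supp(y)$. Pick any atom $c$ outside the finite set $\{a,b\} \cup \supp(x) \cup \supp(y)$. By the definition of the equivalence relation on $[\atoms]X$ (characterized via a fresh evaluation), we have $(\abstr{a}x)@c = (\abstr{b}y)@c$. But $c \not\in \supp(x)$ and $a \not\in \supp(x)$, so $\swap{a}{c}x = x$, and the definition of $@$ gives $(\abstr{a}x)@c = x$. Symmetrically $(\abstr{b}y)@c = y$. Hence $x = y$.

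There is no real obstacle here: the argument is a direct unfolding of the definitions together with the basic support identity of Lemma~\ref{lem:abstr-removes-atom}. The only point to be careful about is consistently using the ``freshness'' side conditions so that the relevant applications of $@$ are defined and the transpositions act trivially on the elements in question.
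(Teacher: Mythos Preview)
Your proof is correct and follows essentially the same approach as the paper. The paper's version is slightly more streamlined for injectivity: having established well-definedness, it picks a single atom $a$ fresh for both $x$ and $y$, writes $\iota_X(x)=\abstr{a}x$ and $\iota_X(y)=\abstr{a}y$, and applies $@a$ directly to get $x=(\abstr{a}x)@a=(\abstr{a}y)@a=y$; your variant with a third fresh atom $c$ and the defining equivalence relation achieves the same thing.
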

\begin{proof}
It is not hard to see that the definition does not depend on the choice of $a$. 
To show that that $\iota_X$ is an injection, let take some $x, y \in X$,
such that $\iota_X(x) = \iota_X(y)$, and show that $x=y$. If we pick $a$ that 
is fresh for $x$ and $y$, then:
\[x = (\abstr{a}x)@a = \iota_X(x)@a = \iota_X(y)@a = (\abstr{a}y)@a = y\]
\end{proof}
\noindent

Formally, this means that, when $\abstr{a}\mathcal{A}$ is used to recognize languages over the alphabet $\Sigma$,
it has the following transition function:
\[ \begin{tabular}{ccc}
 	$\abstr{a}\delta(\iota_\Sigma (\cdot), \cdot): (\Sigma \times [\atoms] Q) \; \to \; [\atoms]Q$ & defined as & $(a, q) \mapsto (\abstr{a} \delta)(\iota_\Sigma \, (a), q)$
  \end{tabular}
\]

Thanks to Lemma~\ref{lem:abstr-removes-atom}, we know that
$\supp(\abstr{a}\mathcal{A}) = \supp(\mathcal{A}) - \{a\}$, 
and it is also not hard to show that if $Q$ is straight then so is $[\atoms]Q$.
This leaves us with showing that $\abstr{a} \mathcal{A}$ recognizes
the same language as $\mathcal{A}$.
This proof is similar to the proof for register automata. 
Take some $w \in \Sigma^*$ and an atom $b$ that is fresh for $w$ and $\mathcal{A}$, and 
define $\mathcal{A}'_w := \swap{a}{b} (\mathcal{A})$.
Since the language of $\mathcal{A}$ is equivariant, we know that
$\mathcal{A}$ and $\mathcal{A}'_w$ recognize the same language.  
Moreover, since $b$ is fresh for $\mathcal{A}$, it follows from Claim~\ref{claim:swap-placeholder}
that $\abstr{a}\mathcal{A} = \abstr{b}\mathcal{A}'_w$. This means that 
it is enough to show that $\abstr{b}\mathcal{A}'_w$
accepts $w$ if and only if $\mathcal{A}$ accepts $w$.
The key observation is that since $b$ is fresh for $w$, then
for every $i$, it holds that $\iota_\Sigma (w_i) = \abstr{b}(w_i)$. It follows that:
\[ (\abstr{b} \delta)(\iota_\Sigma (w_i), \cdot) = (\abstr{b} \delta)(\abstr{b} w_i, \cdot)\]
Notice that $b$ is fresh for every state that appears in the run of $\abstr{b}\mathcal{A}'_w$ on $w$ --
this follows from Lemma~\ref{lem:fs-functions-preserve-supports}, because $b$ does not appear in the initial state, in the transition function, or in any $w_i$.
It follows that we can use Claim~\ref{claim:abstr-arg} to further simplify the transition function:
\[ (\abstr{b} \delta)(\abstr{b} w_i, q) = \abstr{b} (\delta(w_i, q@b))  \]
By the formula from Lemma~\ref{lem:abstr-fun}, it follows that the transition function for $w_i$ is equal to:
\[ \abstr{b}(\delta(w_i, \cdot )) \]
If we treat, the initial state as a function $1 \fsto Q$, and the accepting set as a function $Q \fsto \{\Yes, \No\}$, 
then this leaves us with showing that the following diagram commutes:\\
\bigpicc{Aut-equiv-diag}
\noindent
To prove that, we draw some auxiliary arrows so that each face of the diagram becomes an instance of Lemma~\ref{lem:abstr-natural}:
\bigpicc{Aut-equiv-diag-2}
\noindent
This finishes the proof of Lemma~\ref{lem:aut-atom-elim}, which in turn finishes the (second) proof of Theorem~\ref{thm:dra-equiv-dofa}.

\section{Other models for infinite alphabets}
The theory of infinite alphabets is notorious for the abundance of non-equivalent models that it features --
the expressive powers of different models are well illustrated by
\cite[Figure~1.1]{bojanczyk2019slightly} or \cite[Figure~1]{neven2004finite}.
In this section we present the following part of this landscape:
\begin{enumerate}
\item Right-to-left deterministic orbit-finite automata;
\item Two-way deterministic orbit-finite automata;
\item One-way nondeterministic orbit-finite automata;
\item Orbit-finite monoids
\end{enumerate}
The following picture summarizes the results presented in this section:
\smallpicc{RA-classes}
\subsection{Variants of orbit-finite automata}
\label{sec:dofa-variants}
We begin the discussion with variants of orbit-finite automata.
This section is based on \cite[Section 1.4]{bojanczyk2019slightly}:

\subsubsection{Right-to-left orbit finite automata}
Notice that the class of languages recognized by orbit-finite automata is not closed under the reverse:
\begin{lemma}
\label{lem:dofa-reverse}
	There is a language $L$, such that it is recognized by a deterministic orbit-finite automaton, but its
    reverse is not. 
\end{lemma}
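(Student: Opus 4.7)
The plan is to take $L = \{w \in \atoms^* \mid \text{the first letter of } w \text{ appears again in } w\}$, the language from Example~\ref{ex:appears-again}, which is already known to be recognized by a deterministic orbit-finite automaton. I will show that its reverse $L^R = \{w \in \atoms^* \mid \text{the last letter of } w \text{ appears earlier in } w\}$ cannot be. Intuitively, a left-to-right automaton reading $L^R$ must remember the set of all atoms seen so far, which is an orbit-infinite amount of information.

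To formalize this, I would use a Myhill–Nerode style argument grounded in supports. Suppose towards a contradiction that some deterministic orbit-finite automaton $\mathcal{A}$ with equivariant transition function and orbit-finite state space $Q$ recognizes $L^R$. For each $n \geq 0$, fix pairwise distinct atoms $a_1, \ldots, a_n$ and let $q_n \in Q$ be the state reached after reading $a_1 \cdots a_n$. Since the initial state and transition function are equivariant, Lemma~\ref{lem:fs-functions-preserve-supports} gives that $\{a_1, \ldots, a_n\}$ supports $q_n$. The crucial step is to show the reverse: $\supp(q_n) = \{a_1, \ldots, a_n\}$ exactly.

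For this I would argue by contradiction: if some $a_i$ is not in $\supp(q_n)$, pick a fresh atom $b$ and consider the transposition $\swap{a_i}{b}$; it fixes $q_n$, so the automaton reaches $q_n$ also after reading $w' = a_1 \cdots a_{i-1}\, b\, a_{i+1} \cdots a_n$. Appending the letter $a_i$ to both words then leads $\mathcal{A}$ to the same state, but $a_1 \cdots a_n a_i \in L^R$ while $w' \cdot a_i \notin L^R$ (since $a_i$ does not occur in $w'$), contradicting correctness. Hence $\supp(q_n) = \{a_1, \ldots, a_n\}$ has size exactly $n$.

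To finish, note that applying any atom permutation $\pi$ to $q_n$ yields an element whose least support is $\pi(\supp(q_n))$, which still has cardinality $n$. Therefore $q_n$ and $q_m$ lie in different orbits whenever $n \neq m$, producing infinitely many orbits in $Q$ and contradicting orbit-finiteness. The main obstacle is the support-lower-bound step above; once it is in place, the orbit-counting conclusion is routine using Theorem~\ref{thm:least-supports}.
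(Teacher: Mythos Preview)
Your proof is correct and uses essentially the same witness and the same key step as the paper: the language ``the first letter appears again,'' together with a swap argument showing that an atom outside the support of the current state can be exchanged with a fresh atom without the automaton noticing. The only cosmetic difference is the packaging of the contradiction: the paper first proves (as an internal claim) that any orbit-finite set has a uniform bound $k$ on least-support sizes, then reads $k+1$ distinct atoms, finds one $a_j\notin\supp(q)$, and compares appending $a_j$ versus a fresh $a_{k+2}$; you instead show $\supp(q_n)=\{a_1,\ldots,a_n\}$ for every $n$ and conclude that the $q_n$ lie in infinitely many orbits. Both routes hinge on the identical indistinguishability step, so this is the same proof reorganized.
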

\begin{proof}
	Consider the following language over the alphabet $\atoms^*$:
	\[ L_{\textrm{First}} = \textrm{``The first letter appears again''} \]
	As shown by Example~\ref{ex:appears-again}, it can be recognized by an orbit-finite register automaton. Consider now its reverse:
	\[ L_{\textrm{Last}} =  \textrm{``The last letter appears before''} \]
	We show that this language cannot be recognized by a deterministic orbit-finite automaton.
	First, let us notice that every orbit-finite set has a limit on the size of its supports:
	\begin{claim}
    \label{claim:of-lim-support}
	For every orbit-finite set $A$ there is a number $k$ such 
    that for every $a \in A$, it holds that $|\supp(a)| \leq k$. 
	\end{claim}
	\begin{proof}
	The following function is equivariant:
	\[ a \mapsto |\supp(a)| \]
	It follows that the size of
	least supports is
	fixed in every orbit. There are only 
	finitely many orbits in $A$, so we can set $k$ to be the maximal size of the support among those orbits.
	\end{proof}
	Suppose that $L_{\textrm{Last}}$ is recognized by a deterministic orbit-finite automaton
	$\mathcal{A}$ and let $k$ be the maximal support size for the states of $\mathcal{A}$.
	Let $q$ be the state of $\mathcal{A}$ after it has read a prefix consisting of $k + 1$ different atoms:
	\[ \begin{tabular}{cccc} 
 			$a_1$ & $a_2$ & $\ldots$ & $a_{k + 1}$
       \end{tabular}
    \]
    
    Since $\mathcal{A}$'s states have supports of size at most $k$, one of the input letters
    must be fresh for $q$ (i.e. $a_i \not \in \supp(q)$). Say that it is $a_j$.
    This leads to a contradiction because, if we choose $a_{k+2}$
    different from all $a_i$'s, then $\mathcal{A}$ cannot distinguish between:
    \[ \begin{tabular}{ccc}
 		$a_1\ a_2\ \ldots\ a_{k+1}\ a_j \in L_\textrm{Last}$ & and &$a_1\ a_2\ \ldots\ a_{k+1}\ a_{k+2} \not \in L_\textrm{Last}$
      \end{tabular}
    \]	
\end{proof}
A consequence of Lemma~\ref{lem:dofa-reverse} is that deterministic (left-to-right)
orbit-finite automata recognize a different class of languages than their right-to-left counterparts.\\

We finish the discussion on right-to-left and left-to-right deterministic orbit-finite automata,
by mentioning that both of those models are computationally quite simple: They have decidable\footnote{
    Orbit-finite automata are infinite objects, so before discussing the decidability of their properties,
    we should first briefly mention how to represent them. This involves representing atoms,
    orbit-finite sets, and equivariant functions between them. Here is one way to do this:
    (a) atoms are represented as elements of $\nat$; (b) orbit-finite sets are represented as lists of one representative per orbit; 
    (c) equivariant functions are represented by their behaviour on these representatives.
    For more information on representing orbit-finite sets and their decision problems, 
    see \cite[Chapter~4~and~Part~III]{bojanczyk2019slightly}. 
} emptiness (special case of \cite[Theorem 1.7]{bojanczyk2019slightly})
and, since they are closed under complements, decidable universality. 

\subsubsection{Nondeterministic orbit-finite automata}
The definition of a \emph{nondeterministic orbit-finite automaton} is not surprising --
it looks just like a deterministic orbit-finite automaton, but (a) it may have more than one 
initial state and (b) instead of transition functions, it has a transition relation:
\[ \delta \subseteq_{\textrm{eq}} Q \times \Sigma \times Q \]
Nondeterministic orbit-finite automata can recognize the language
\[ \textrm{``Last letter appears before'':} \]
A nondeterministic automaton can nondeterministically guess a position,
save its letter to a register, and at the end of the word 
verify that it is equal to the last letter. (In general, it is easy to see that the languages
recognized by nondeterministic automata are closed under the reverse.)\\

It follows that
nondeterministic orbit-finite automata are strictly more expressive than their deterministic counterparts.
A reason for this mismatch in expressive powers is that orbit-finite sets are not closed under
finitely supported powersets and therefore they do not admit the powerset construction.\\

Non-deterministic orbit-finite automata are computationally quite complex: They have decidable emptiness \cite[Theorem 1.7]{bojanczyk2019slightly},
but undecidable universality\footnote{It might be worth mentioning that universality is decidable
for the \emph{unambiguous} nondeterministic orbit-finite automata. One way to show this is to show that 
orbit-finite weighted automata have decidable equivalence. See \cite{bojanczyk2021orbit} for details.}
\cite[Theorem 1.8]{bojanczyk2019slightly}.


\subsubsection{Two-way deterministic orbit-finite automaton}
\label{subsec:2dra}
Two-way deterministic orbit-finite automaton is variant of a deterministic orbit-finite automaton that is not forced to read the input left-to-right or right-to-left.
Instead, in each transition, it decides whether it wants to go left or right.
When it leaves the word, it gets notified and might choose to go back.
To finish its run it has to explicitly accept or reject the input\footnote{It may also loop and never finish its run.
We assume that this means that the automaton rejects the input.}. To accommodate those features, 
its transition function has the following type:

\[ (\Sigma + \underbrace{\{\vdash, \dashv\}}_{\substack {\textrm{end of word}\\
\textrm{markers}}} ) \times Q \  \ \longrightarrow_\textrm{eq} \ \  Q \times \{\leftarrow, \rightarrow \} + \{\textrm{accept}, \textrm{reject} \} \]

Two-way orbit-finite automata extend two-way finite automata (defined in \cite[Defniiton~2]{shepherdson1959reduction}).
Over finite alphabets, two-way and one-way automata are equivalent (\cite[Theorem~2]{shepherdson1959reduction}). 
However, this equivalence does not hold for orbit-finite alphabets: It is not hard to see that 
two-way orbit-finite automata can recognize both $L_{\textrm{first}}$ and $L_{\textrm{last}}$. 
It follows they are strictly stronger than one-way deterministic automata.\\

Two-way register automata are computationally very strong.
The following theorem says that they belong more to complexity theory
than to automata theory.
\begin{theorem}[{\cite[Theorem~3.8~(b)]{neven2004finite}}]
\label{thm:2DOFA-logspace}
Take a language over a finite alphabet $L \subseteq \Sigma^*$
and define $L_{\atoms} \subset {(\Sigma \times \atoms)}^*$
to be the language of those words whose $\Sigma$-part belongs to $L$
and whose atoms are pairwise distinct. Then:
\[  L \in \textsc{LogSpace} \iff L_\atoms 
\textrm{ is recognisable by a two-way orbit-finite automaton}
\]
\end{theorem}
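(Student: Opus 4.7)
The plan is to prove both implications by reducing to the classical fact that $\textsc{LogSpace}$ coincides with the class of languages accepted by multihead two-way deterministic finite automata. The key observation is that in the language $L_\atoms$ all atoms are pairwise distinct, so each atom can serve as a unique ``pointer'' into the input word: storing the atom at position $i$ in a register is equivalent to storing the index $i$ itself.

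For the direction ($\Leftarrow$), I would translate the two-way orbit-finite automaton $\mathcal{A}$ into a two-way register automaton $\mathcal{A}'$ (using the analogue of Theorem~\ref{thm:dra-equiv-dofa} for two-way models over the alphabet $\Sigma \times \atoms$). Given $w = w_1 \cdots w_n \in \Sigma^*$, the LogSpace machine $M$ simulates $\mathcal{A}'$ on the tagged input $(w_1,1)\,(w_2,2)\,\cdots\,(w_n,n)$, where the atoms are simply the input indices. A configuration of $\mathcal{A}'$ consists of a finite control state, a bounded number of register values (each in $\{0,\ldots,n\} \cup \{\bot\}$) and a head position in $\{0,\ldots,n{+}1\}$, which all fit in $O(\log n)$ bits. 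To sidestep non-termination, $M$ keeps a step counter and rejects once it exceeds the total number of configurations, which is polynomial in $n$.

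For the direction ($\Rightarrow$), I would invoke the classical Hartmanis--Stearns theorem: $L \in \textsc{LogSpace}$ iff $L$ is accepted by a deterministic two-way multihead automaton $M$ with some constant number $k$ of heads. I would then build a two-way register automaton $\mathcal{A}$ with $k$ registers, maintaining the invariant that the $i$-th register holds the atom at the current position of $M$'s $i$-th head. To simulate a step of $M$: for each head needed, $\mathcal{A}$ walks its single physical head until it finds the stored atom (possible because atoms in $L_\atoms$ are pairwise distinct, so the search is unambiguous), reads the $\Sigma$-symbol there, simulates the control transition, and when $M$ moves a head one position left or right, $\mathcal{A}$ walks to that head, steps in that direction, and refreshes the register with the new atom. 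Head-at-endmarker is handled by using $\bot$ in the corresponding register. Acceptance and rejection are inherited from $M$. Finally, $\mathcal{A}$ must reject inputs whose atoms are not pairwise distinct; this can be done by a preliminary two-way sweep that, for every pair of positions $i<j$, stores the atom at position $i$ and compares it with the atom at position $j$ (two registers suffice).

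The main obstacle is the $(\Rightarrow)$ direction, specifically the simulation of the multihead machine: one has to be careful that $\mathcal{A}$'s search-by-atom subroutines, interleaved with the state updates of $M$, together remain expressible as an equivariant transition function on an orbit-finite state space, and that the distinctness pre-check is composed with the simulation without creating deadlocks. Invoking the classical multihead characterisation of $\textsc{LogSpace}$ is what makes the number of registers and the structure of the simulation explicit; without it, one would have to directly encode LogSpace computations into two-way register moves, which is feasible but significantly more intricate.
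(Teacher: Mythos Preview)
Your proposal is correct and takes essentially the same approach as the paper: both directions rest on the atoms-as-pointers idea under the distinctness assumption and on the classical equivalence of \textsc{LogSpace} with two-way multihead DFA, preceded by a two-way distinctness check. The only cosmetic differences are that the paper organises each simulated multihead step as two full sweeps (one to read all heads, one to update them) rather than per-head searches, and its distinctness check (Claim~\ref{claim:2dofa-all-distinct}) gets by with a single register by observing that after the inner loop the current atom is already known to be unique, so the automaton can relocate position $i$ by searching for its value.
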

\begin{proof}
	
	The key idea is that when the input alphabet is $\Sigma \times \atoms$,
	and all the atoms are distinct, then remembering the atom value is the
	same as remembering the atom's position.\\ 

	We start with the easier right-to-left implication ($\Leftarrow$):
    The alphabet $\Sigma \times \atoms$
	is straight, so using the techniques presented
	in the proof of Theorem~\ref{thm:dra-equiv-dofa}, 
	we can transform a two-way orbit-finite automaton that recognizes $L_\atoms$
	into a two-way register automaton over the alphabet $\Sigma \times \atoms$
    (this type of an automaton is a natural extension of the standard register automaton).
    In order to simulate this two-way register automaton with 
    a $\textsc{LogSpace}$ Turing machine, we notice
    that 
    instead of storing an atom, it is enough to 
    store its position (in binary).\\

   The ($\Rightarrow$) implication is more involved. We
   start with the following claim:
   \begin{claim}
   \label{claim:2dofa-all-distinct}
   There is a two-way orbit-finite automaton, recognizing
   the language ``Each letter appears only once''
   (over $\atoms$).
   \end{claim}
   \begin{proof}
   We show that a two-way orbit-finite automaton can simulate the following program:
  \begin{alltt}
  {\bf for} i {\bf in} 1..n:
      r = atom at i-th position
     {\bf for} j {\bf in} (i + 1)..n:
         {\bf if} r is equal to the atom at j-th position:
             {\bf reject}
  {\bf accept}	
  \end{alltt}
  The execution is mostly straightforward. The only problem appears, when the automaton finishes the inner loop, and has to
  go back to position $i + 1$. At this point the automaton
  knows that the value $r$ appears exactly once in the word, so it can go back to the $i$th position by locating the value $r$.
   \end{proof}
 We are left with showing that once a two-way automaton has checked that all the input atoms are pairwise distinct, it can simulate a
   logspace Turing machine. For this we use an intermediate model of \emph{two-way multi-head deterministic finite automaton}.
   Such an automaton resembles a two-way deterministic finite automaton, but it has many heads that move independently
   from each other. A transition function of a two-way multi-head deterministic finite automaton has the following type:
 \[ Q \times \underbrace{(\Sigma + \{\vdash, \dashv\})^k}_{\textrm{what each head is seeing}} \  \longrightarrow \  Q \times \underbrace{\{{ \leftarrow, \rightarrow\}^k}}_{\substack{
 \textrm{instructions}\\
 \textrm{for each of the heads}}} + \  \{\textrm{accept}, \textrm{reject}\} \] 
 Here is an example configuration of a multi-head automaton:
 \picc{Multi-head-config}
 The multi-head automata are equivalent to logspace Turing machines in the following sense\footnote{
    To the best of my knowledge, the earliest reference to this result is \cite[Corollary 3.5]{ibarra1971characterizations}. 
    However, in place of a proof, the author states that this is a well-known unpublished result by Alan Cobham and others.
    The earliest reference that contains the proof is \cite[Page~338]{hartmanis1972non}. I would like to thank
    Nguy\~{\^e}n L{\^e} Th{\`a}nh D\~{u}ng for pointing me to those references. 
 }:
 \begin{theorem}
 A language $L \subseteq \Sigma^*$ belongs
 to $\textsc{LogSpace}$ if and only if it
 is recognizable by a two-way deterministic
 multi-head automaton.
 \end{theorem}
 So it suffices to show that, as long as
 its input is equipped with a unique atom on
 every position, a two-way orbit-finite automaton 
 can simulate a deterministic two-way multi-head
 automaton. This is straightforward:
 Instead of remembering the position of each head, 
 an orbit-finite automaton can simply remember the atom
 from that position. For example, the configuration from the example above
 can be represented as follows:
 \picc{Multi-head-as-2DOFA}
With this representation, a two-way orbit-finite automaton can simulate a transition of a multi-head automaton, by
doing two sweeps: a left-to-right one to see the letter under each of the heads;
and a right-to-left one to perform necessary updates in the heads' positions.
\end{proof}
The emptiness of $\textsc{LogSpace}$-Turing machines is easily seen to be undecidable, so 
a consequence of Theorem~\ref{thm:2DOFA-logspace} is that two-way deterministic orbit-finite automata, 
have undecidable emptiness. As a deterministic model, they are closed under complement, 
which means that they also have undecidable universality.\\

We finish this section, by showing that one-way nondeterministic orbit-finite automata 
and two-way deterministic orbit-finite automata have incomparable expressive powers:

\begin{lemma}[{\cite[Example~11]{kaminski_finite_memory_paper}}]
\label{lem:2-way-det-not-in-1-way-ndet}
    There is a language $L$, that can be recognized by a two-way deterministic orbit-finite automaton,
    but cannot be recognized by a one-way nondeterministic orbit-finite automaton.
\end{lemma}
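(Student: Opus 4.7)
The candidate language is
\[ L = \{ w \in \atoms^* \ | \ \textrm{all letters of } w \textrm{ are pairwise distinct}\}\tdot \]
By Claim~\ref{claim:2dofa-all-distinct}, $L$ is recognized by a two-way deterministic orbit-finite automaton, so only the negative half remains: no one-way nondeterministic orbit-finite automaton can recognize $L$. The plan is a pumping-style argument based on the uniform support bound from Claim~\ref{claim:of-lim-support} combined with equivariance of the transition relation and of the set of accepting states.

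Suppose towards a contradiction that a one-way nondeterministic orbit-finite automaton $\mathcal{A}$ recognizes $L$, and let $k$ be a bound on $|\supp(q)|$ valid for every state $q$ of $\mathcal{A}$ (Claim~\ref{claim:of-lim-support}). Fix $n > k + 1$ and pick pairwise distinct atoms $a_1, \ldots, a_n$, all lying outside $\supp(\mathcal{A})$. The word $w = a_1 \cdots a_n$ belongs to $L$, so $\mathcal{A}$ has an accepting run $q_0 \to q_1 \to \cdots \to q_n$ on $w$. Iterating Lemma~\ref{lem:fs-functions-preserve-supports} (read off one transition at a time), the state $q_{n-1}$ satisfies $\supp(q_{n-1}) \subseteq \supp(\mathcal{A}) \cup \{a_1, \ldots, a_{n-1}\}$, and still $|\supp(q_{n-1})| \leq k$.

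Because $n - 1 > k$, some $a_j$ with $j < n$ lies outside $\supp(q_{n-1})$; the atom $a_n$ lies outside $\supp(q_{n-1})$ as well, by the way it was chosen. Therefore the atom swap $\pi = \swap{a_n}{a_j}$ fixes $\supp(q_{n-1})$, giving $\pi(q_{n-1}) = q_{n-1}$. Applying $\pi$ to the transition $(q_{n-1}, a_n, q_n)$ (which is legal because the transition relation is equivariant) produces the transition $(q_{n-1}, a_j, \pi(q_n))$, and $\pi(q_n)$ remains accepting because the set of accepting states is equivariant. Splicing this in yields an accepting run of $\mathcal{A}$ on $a_1 \cdots a_{n-1} a_j$, a word outside $L$ because $a_j$ appears twice. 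This contradicts the assumption that $\mathcal{A}$ recognizes $L$.

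The single conceptual ingredient I would flag is the pigeonhole step producing $a_j \notin \supp(q_{n-1})$: this is exactly where orbit-finiteness of the state space enters, via the uniform support bound, and it is what forces the automaton to "forget" some earlier atom no matter how it runs. I do not anticipate any real obstacle beyond carefully tracking supports along the run and noting that $\delta$ and $Q_\textrm{acc}$ are equivariant by definition of the model.
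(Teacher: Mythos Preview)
Your argument has a genuine gap, precisely at the step you flag as routine: the support tracking. Lemma~\ref{lem:fs-functions-preserve-supports} applies to \emph{functions}, but in a nondeterministic automaton the state $q_i$ is not a function of $(q_{i-1}, a_i)$ --- it is a nondeterministically chosen element of the set $\{q' : (q_{i-1}, a_i, q') \in \delta\}$, and while that set is supported by $\supp(q_{i-1}) \cup \{a_i\}$, its individual elements need not be. A nondeterministic run can ``guess'' a future atom: with states $\{q_0\} + \atoms + \{q_{\textrm{acc}}\}$ and transitions $(q_0, a, b) \in \delta$ for all $b \neq a$ together with $(b, b, q_{\textrm{acc}}) \in \delta$, the unique accepting run on $a_1 a_2$ (with $a_1 \neq a_2$) has $q_1 = a_2$, so $\supp(q_1) = \{a_2\} \not\subseteq \{a_1\}$. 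Hence your inclusion $\supp(q_{n-1}) \subseteq \supp(\mathcal{A}) \cup \{a_1, \ldots, a_{n-1}\}$ is unjustified, and with it the conclusion $a_n \notin \supp(q_{n-1})$ --- without which the swap $\swap{a_n}{a_j}$ need not fix $q_{n-1}$.

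A clean repair: take $n \geq 2k+2$ and examine the middle state $q_{k+1}$ rather than $q_{n-1}$. Pigeonhole applied separately to $\{a_1, \ldots, a_{k+1}\}$ and to $\{a_{k+2}, \ldots, a_n\}$ (each of size at least $k+1$) yields some $a_j$ with $j \leq k+1$ and some $a_{j'}$ with $j' \geq k+2$, both outside $\supp(q_{k+1})$ --- this uses only the size bound $|\supp(q_{k+1})| \leq k$, no tracking along the run. The swap $\swap{a_j}{a_{j'}}$ then fixes $q_{k+1}$; applying it to the suffix run $q_{k+1} \to \cdots \to q_n$ produces an accepting run on the word with $a_{j'}$ replaced by $a_j$, which contains $a_j$ twice. (The paper's own proof is terse at the corresponding ``it follows'' step and arguably shares this gap.)
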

\begin{proof}
    One example of such $L$ is the language ``Each letter appears only once''.
    According to Claim~\ref{claim:2dofa-all-distinct} it can be recognized by a two-way deterministic orbit-finite automaton.
    It suffices to show that it cannot be recognized by a one-way nondeterministic orbit-finite automaton:
    Suppose that it is recognized by $\mathcal{A}$ whose set of states is $Q$.
    Let $k$ be maximal support size in $Q$ (see Claim~\ref{claim:of-lim-support}). Consider
    the word consisting of $k + 2$ pairwise distinct atoms:
    \[ \begin{tabular}{ccccc} 
        $a_1$ & $a_2$ & $\ldots$ & $a_{k + 1}$ & $a_{k + 2}$
       \end{tabular}
    \]
    This word belongs to $L$, so $\mathcal{A}$ has to have an accepting run on it. Let $q_k$ be
    $\mathcal{A}$'s state in this run just after it has processed $a_k$. One of the $a_i$ (for $i \leq k + 1$)
    is not present in the support of $q_k$. It follows that $\mathcal{A}$ also
    has to have an accepting run on the following word, which does not belong to $L$:
    \[ \begin{tabular}{ccccc} 
        $a_1$ & $a_2$ & $\ldots$ & $a_{k + 1}$ & $a_i$
       \end{tabular}
    \]
    This contradicts the assumption that $\mathcal{A}$ recognizes $L$.
\end{proof}

\begin{lemma}[{\cite[Exercise~26]{bojanczyk2019slightly}}]
\label{lem:1-way-det-not-in-2-way-ndet}
        There is a language $L$, that can be recognized by a one-way nondeterministic orbit-finite automaton,
        but cannot be recognized by a two-way deterministic orbit-finite automaton. (This lemma is conditional on 
        $\textrm{\textsc{LogSpace}} \neq \textrm{\textsc{NLogSpace}}$.)
\end{lemma}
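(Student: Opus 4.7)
My strategy is to invoke Theorem~\ref{thm:2DOFA-logspace} at an $\textrm{\textsc{NLogSpace}}$-complete language. Fix some $L \subseteq \Sigma^*$ that is $\textrm{\textsc{NLogSpace}}$-complete under $\textrm{\textsc{LogSpace}}$ reductions, for instance directed graph reachability $\textrm{\textsc{st-Conn}}$, and let $L_\atoms \subseteq (\Sigma \times \atoms)^*$ be its decoration with pairwise distinct atoms as defined in the statement of Theorem~\ref{thm:2DOFA-logspace}. Since $L$ is $\textrm{\textsc{NLogSpace}}$-complete, the hypothesis $\textrm{\textsc{LogSpace}} \neq \textrm{\textsc{NLogSpace}}$ places $L$ outside $\textrm{\textsc{LogSpace}}$, and so the ``only if'' direction of Theorem~\ref{thm:2DOFA-logspace} immediately rules out any two-way deterministic orbit-finite automaton for $L_\atoms$. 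All that remains is to build a one-way nondeterministic orbit-finite automaton that recognises $L_\atoms$.

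\textbf{The one-way NDOFA.} I would mimic the $\Leftarrow$ direction of the proof of Theorem~\ref{thm:2DOFA-logspace}, where atoms were stored in registers as stand-ins for positions. While scanning the input left-to-right, the automaton nondeterministically guesses, atom by atom, an accepting run of an $\textrm{\textsc{NLogSpace}}$ machine $M$ for $L$. Its orbit-finite state stores $M$'s control state together with the atoms currently sitting under $M$'s heads, so every predicate of the form ``head $h$ points to the position holding atom $a$'' becomes an equality test between atoms. Nondeterminism handles both $M$'s own nondeterministic transitions and the open-ended choice of which future atoms $M$'s heads will consult. A preliminary sub-automaton, obtained as in Claim~\ref{claim:2dofa-all-distinct} but tailored to the one-way setting, would ensure along the way that the atoms on the input really are pairwise distinct, so that the identification atoms $\leftrightarrow$ positions is sound.

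\textbf{Main obstacle.} The delicate point is scheduling. An $\textrm{\textsc{NLogSpace}}$ machine can revisit the same input position polynomially many times, but a one-way NDOFA sees each atom only once and, having bounded support on its states, cannot remember polynomially many previously-read atoms. A faithful simulation must therefore commit in a single pass to the full crossing-sequence behaviour of $M$ at each atom. I would get around this by choosing $L$ so that its natural NL algorithm is monotone in its input-head movements---for example a layered or topologically ordered variant of $\textrm{\textsc{st-Conn}}$, still $\textrm{\textsc{NLogSpace}}$-hard under logspace reductions---so that the cross-references between guessed configurations and input symbols can be resolved during a single left-to-right sweep. Proving both that such an $L$ exists and remains hard, and that the guess-and-check protocol above is equivariant and orbit-finite, is the main technical content of the proof.
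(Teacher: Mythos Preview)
Your proposal has a genuine gap in the construction of the one-way nondeterministic automaton: you cannot check that the atoms in the input are pairwise distinct. This is precisely the content of the preceding Lemma~\ref{lem:2-way-det-not-in-1-way-ndet}, which shows that ``each letter appears only once'' is \emph{not} recognisable by any one-way nondeterministic orbit-finite automaton. Since $L_\atoms$ (as defined in Theorem~\ref{thm:2DOFA-logspace}) hard-wires the distinctness condition into its definition, no one-way NDOFA can recognise $L_\atoms$ for any nontrivial~$L$: the pigeonhole argument from Lemma~\ref{lem:2-way-det-not-in-1-way-ndet} applies verbatim to the $\atoms$-coordinate of the input. So your appeal to ``a preliminary sub-automaton, obtained as in Claim~\ref{claim:2dofa-all-distinct} but tailored to the one-way setting'' asks for something that provably does not exist. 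This is a structural obstruction to routing the argument through $L_\atoms$, independent of whether you can make the simulation of an $\textrm{\textsc{NLogSpace}}$ machine monotone.

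The paper sidesteps this by abandoning the $L_\atoms$ template and choosing a language whose definition does not involve global distinctness. Over the alphabet $\atoms^2$, view each letter $(a,b)$ as a domino tile and let $L_{\textrm{SubDomino}}$ be the set of words containing a domino-matching subsequence through the first and last letters. A one-way NDOFA recognises this by guessing the subsequence on the fly while storing only the single atom it must match next. For the lower bound one shows directly---without invoking Theorem~\ref{thm:2DOFA-logspace}---that a two-way deterministic orbit-finite automaton for $L_{\textrm{SubDomino}}$ would solve reachability in directed acyclic graphs in $\textrm{\textsc{LogSpace}}$, by listing the edges of the DAG as domino tiles in topological order. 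Your instinct to move to ``a layered or topologically ordered variant of $\textrm{\textsc{st-Conn}}$'' is exactly right and is essentially $L_{\textrm{SubDomino}}$; the fix is to formulate it natively over atoms rather than to force it into the $L_\atoms$ framework.
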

\begin{proof}
    We start by defining the domino language over the alphabet $\atoms^2$:
    \[L_{\textrm{Domino}} = \{ (a_1, a_2) (a_2, a_3) (a_3, a_4) \ldots (a_n, a_{n+1}) \ | \   a_1, \ldots, a_{n+1} \in \atoms \ \} \]
    Now, we define the language $L_{\textrm{SubDomino}}$ of words that have a valid domino subsequence that contains the first and the last letter.
    For example, the following word belongs to $L_{\textrm{SubDomino}}$ (its valid domino subsequence has been underlined):
    \[ \begin{tabular}{ccccccc}
        $\underline{(7, 5)}$ & $(1, 2)$ & $(5, 9)$ & $\underline{(5, 4)}$ & $(8, 2)$ & $\underline{(4, 3)}$ & $\underline{(3, 2)}$\\
    \end{tabular} \]
    It is easy to see that the language $L_{\textrm{SubDomino}}$ is recognized by a nondeterministic orbit-finite automaton.
    On the other hand, it can be shown that if $L_{\textrm{SubDomino}}$ is
    recognized by a deterministic two-way automaton, then the $\textsc{NLogSpace}$-complete problem of reachability in directed acyclic graphs belongs to $\textsc{LogSpace}$.
    (See \cite[Exercise~26]{bojanczyk2019slightly} for details.)
\end{proof}

\subsection{Orbit-finite monoids}
\label{subsec:orbit-fintie-monoids}
We finish this introduction to languages over infinite alphabets with one more model -- \emph{orbit-finite monoids}.
As noted in the introduction, it plays  a central role in this thesis. This section is based on
$\cite{bojanczyk2019slightly}$, where the model was first introduced.\\ 

First we discuss the well-established model of \emph{finite monoids} (for recognizing languages).
A \emph{monoid} is a set $M$, equipped with an associative binary operation $(\cdot)$ and a neutral element $1$.
This means that, for all $a, b, c \in M$ it holds that
\[ 
    \begin{tabular}{ccc}
        $a \cdot (b \cdot c) = (a \cdot b) \cdot c$ & and & $1 \cdot a = a \cdot 1 = a$.
    \end{tabular}
\]
A monoid is finite, if the set $M$ is finite. 
A finite monoid $M$ together with a function $h : \Sigma \to M$,
and an accepting subset $F \subseteq M$ can be used to recognize 
a language over the finite alphabet $\Sigma$:
to see if a word $w \in \Sigma^*$ belongs to
the language, we check whether:
\[ h(w_1) \cdot h(w_2) \cdot \ldots \cdot h(w_n) \in F \]
\begin{example}
\label{ex:no-rep-monoid}
For example, consider the following language (over a finite alphabet $\Sigma$):
\[\textrm{``No letter appears twice in a row''} \]
It is recognized by the following finite monoid:
\[M = \underbrace{\Sigma^2}_{\substack{
    \textrm{$(a, b$) represents all non-empty words that}\\
    \textrm{do not contain a repetition; }\\
    \textrm{start with $a$; and end with $b$}
}} + \underbrace{\bot}_{\substack{
    \textrm{represents}\\
    \textrm{all the words that}\\
    \textrm{contain a repetition}
}} + \underbrace{1}_{\substack{
    \textrm{represents}\\
    \textrm{the empty word}
}}\]
$M$'s operation is defined as follows:
\[\begin{tabular}{ccc}
	$\bot \cdot x = x \cdot \bot = \bot$,&
	$1 \cdot x = x \cdot 1 = x$,&
	$(x_1, y_1) \cdot (x_2, y_2) = \begin{cases}
		(x_1, y_2) & \textrm{if } y_1 \neq x_2\\
		\bot & \textrm{otherwise}
 \end{cases}$
\end{tabular}
\]
The function $h$ maps a letter $x$ into $(x,x)$, and the accepting subset $F$ is
$M - \{\bot\}$.\\
\end{example}

It is a well-known fact, that the class of languages recognized by finite monoids
is exactly equal to the class of regular languages:
\begin{lemma}
\label{lem:aut-mon-eq}
   A language is recognized by an finite monoid, if and only if it is 
   recognized by a deterministic finite automaton.
\end{lemma}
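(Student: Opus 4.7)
The plan is to prove both directions by explicit constructions, mirroring the classical finite-alphabet argument.

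For the direction (finite monoid $\Rightarrow$ DFA), I would take a recognizer $(M, h : \Sigma \to M, F \subseteq M)$ and construct the DFA whose state set is $M$ itself, initial state $1 \in M$, accepting states $F$, and transition function $\delta(m, a) = m \cdot h(a)$. A straightforward induction on word length shows that after reading $w = w_1 w_2 \cdots w_n$ the automaton is in state $h(w_1) \cdot h(w_2) \cdots h(w_n)$, so it accepts iff this product lies in $F$, iff the monoid recognizer accepts $w$. Since $M$ is finite, this is a valid DFA.

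For the direction (DFA $\Rightarrow$ finite monoid), I would build the \emph{transition monoid} of the automaton $\mathcal{A} = (Q, \Sigma, q_0, Q_\textrm{acc}, \delta)$. Extend $\delta$ to $\delta^* : Q \times \Sigma^* \to Q$ in the usual way, and for each $w \in \Sigma^*$ let $f_w : Q \to Q$ be the function $q \mapsto \delta^*(q, w)$. Take $M = \{f_w : w \in \Sigma^*\} \subseteq Q^Q$ with monoid operation given by $f_u \cdot f_v = f_{uv}$ (one checks this is well-defined and associative, with unit $f_\varepsilon = \mathrm{id}_Q$). Since $Q$ is finite, so is $Q^Q$, hence $M$ is finite. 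Define $h : \Sigma \to M$ by $h(a) = f_a$, and take $F = \{f \in M : f(q_0) \in Q_\textrm{acc}\}$. Then $h(w_1) \cdots h(w_n) = f_w$, and this product lies in $F$ iff $\delta^*(q_0, w) \in Q_\textrm{acc}$, iff $\mathcal{A}$ accepts $w$.

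Neither direction presents a genuine obstacle; both are routine once the transition-monoid construction is identified as the right bridge. The only point that requires mild care is verifying that $f_u \cdot f_v = f_{uv}$ (so that $M$ really is a submonoid of $Q^Q$ and that $h$ extends to a homomorphism on $\Sigma^*$), which follows immediately from the inductive definition of $\delta^*$. Since the statement is purely about finite alphabets and finite state spaces, none of the subtleties about supports, orbits, or equivariance developed earlier in the chapter are needed here.
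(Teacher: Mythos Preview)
Your proposal is correct and follows essentially the same classical argument as the paper: the monoid-to-DFA direction is identical, and for DFA-to-monoid the paper uses the full function monoid $Q \to Q$ with $f \cdot g = g \circ f$ rather than restricting to the generated transition monoid, but this is a cosmetic difference.
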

\begin{proof}
    $(\Rightarrow)$: Let $L \subseteq \Sigma^*$ be a language recognized by a finite monoid $M$
    (together with $h$ and $F$). It is not hard to see that $L$ is also recognized by an automaton
    where the set of states is $M$, the initial state is $1$, the accepting subset of states is $F$, 
    and the transition function is given as:
    \[ \delta(q, a) = q \cdot h(a)\]
    $(\Leftarrow)$: Let $L$ be a language recognized by a finite automaton $\mathcal{A}$ whose 
    set of states is equal to $Q$. It follows that $L$ is recognized by the monoid $Q \to Q$, 
    whose operation is given as $f \cdot g = g \circ f$, together with the following $h$ and $F$:
    \[
        \begin{tabular}{cc}
            $h(a) = (q \mapsto \delta(q, a))$ & $F = \{ f \ | \ f \in Q \to Q,
            \substack{\textrm{ such that } f(q_0)\\ \textrm{ is an accepting state}  }\}$
        \end{tabular}
    \]
    The intuition behind this construction is that a word $w$ can be characterized 
    by its behaviour function $b_w \in Q \to Q$:
    \[ b_w(q) = \substack{\textrm{In which state will $\mathcal{A}$ exit $w$ on the right,}\\\textrm{ if it enters $w$ on the left in the state $q$?}} \]
    (For a more detailed explanation see \cite[Theorem~22]{bojanczyk2020languages}, 
    or Section~\ref{subsec:one-way-sua-to-ofm}).
\end{proof}

Now, let us extend the theory of finite monoids to sets with atoms:
An \emph{orbit-finite monoid}, is a monoid whose underlying set (i.e. $M$) is orbit-finite
and whose operation ($\cdot$) is an equivariant function. 
The language recognized by an orbit-finite monoid $M$ is defined in the same way 
as for a finite $M$, but we require $h$ and $F$ to be equivariant, i.e.:
\[ \begin{tabular}{ccc}
    $h : \Sigma \eqto M$ & and & $ F \subseteq_\textrm{eq} M$
\end{tabular} \]
For example, consider again the language from Example~\ref{ex:no-rep-monoid}, but 
this time over an orbit-finite $\Sigma$:
\[\textrm{``No letter appears twice in a row''} \subseteq \Sigma^* \]
It is easy to see that this language is recognized by an orbit-finite monoid:
In fact, the definitions of $M$, $h$ and $F$ remain the same as in the finite case.
If $\Sigma$ is orbit-finite then so is $1 + \Sigma^2 + \bot$, 
and both $h$ and $F$ are easily seen to be equivariant.\\

\noindent
Let us present one more example:
\begin{example}
\label{ex:monoid-at-most-3}
Consider the following language over the alphabet $\atoms$:
\[ \textrm{``There are at most $3$ different letters in the word ''} \]
It is recognized by the following orbit-finite monoid:
\[ M = \underbrace{\atoms \choose {\leq 3}}_{\textrm{sets with at most $3$ letters}} + \underbrace{\bot}_{\substack{
\textrm{a representation of}\\
\textrm{sets with more than $3$ atoms}}} \]
The monoid operation is defined as:
\[ x \cdot y = \begin{cases}
 x \, \cup \, y & \textrm{if } x \neq \bot \textrm{, } y \neq \bot \textrm{, and } |x \, \cup \, y| < 3   \\\
 \bot & \textrm{otherwise }
 \end{cases}
 \]
Function $h$ is defined as $x \mapsto \{x\}$, and the accepting subset is defined as $F = M - \{ \bot \}$.
\end{example}

The expressive power of orbit-finite monoids is strictly weaker than the one of deterministic one-way orbit-finite automata.
First, let us notice that an orbit-finite monoid can be translated into an orbit-finite automaton, 
using the same construction as in the proof of Lemma~\ref{lem:aut-mon-eq}. To show 
that orbit-finite monoids recognize a different class of languages than deterministic orbit-finite automata, 
it suffices to show that orbit-finite monoids are closed under reverse (we already know 
that the deterministic one-way orbit-finite automata are not):
\begin{lemma}
\label{lem:ofm-reverse}
If a language $L$ is recognized by an orbit-finite monoid, then so is its reverse.
\end{lemma}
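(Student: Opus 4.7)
The plan is to use the \emph{opposite monoid} construction. Given an orbit-finite monoid $M$ (with operation $\cdot$, unit $1$, morphism $h : \Sigma \eqto M$, and accepting subset $F \subseteq_{\textrm{eq}} M$) that recognizes $L$, I will define $M^{op}$ to have the same underlying set as $M$ and the reversed operation $a \cdot_{op} b := b \cdot a$, keeping the same unit, the same $h$, and the same $F$.

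First I would check that $M^{op}$ is indeed an orbit-finite monoid: the underlying set is unchanged, so it remains orbit-finite; associativity of $\cdot_{op}$ follows immediately from associativity of $\cdot$; and $1$ is still a two-sided unit because $1$ was a two-sided unit in $M$. Equivariance of $\cdot_{op}$ follows from equivariance of $\cdot$, since $\pi(a \cdot_{op} b) = \pi(b \cdot a) = \pi(b) \cdot \pi(a) = \pi(a) \cdot_{op} \pi(b)$. The functions $h$ and $F$ are inherited unchanged, so they remain equivariant.

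Next I would verify that $M^{op}$ (together with $h$ and $F$) recognizes the reverse of $L$. By a straightforward induction on word length, for any $w = w_1 w_2 \cdots w_n \in \Sigma^*$:
\[ h(w_1) \cdot_{op} h(w_2) \cdot_{op} \cdots \cdot_{op} h(w_n) \;=\; h(w_n) \cdot h(w_{n-1}) \cdot \cdots \cdot h(w_1). \]
Hence $w$ is accepted by $(M^{op}, h, F)$ iff its reverse $w_n w_{n-1} \cdots w_1$ is accepted by $(M, h, F)$, i.e. iff the reverse of $w$ lies in $L$. Therefore $(M^{op}, h, F)$ recognizes the reverse of $L$.

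There is no real obstacle here: the construction is identical to the classical finite-monoid case, and the extra conditions required for orbit-finite monoids (orbit-finiteness of the carrier, equivariance of the multiplication, of $h$, and of $F$) are all preserved verbatim when we pass from $M$ to $M^{op}$. The only thing worth spelling out carefully in the write-up is the inductive computation showing that a product in $M^{op}$ equals the reversed product in $M$.
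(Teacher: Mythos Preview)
Your proof is correct and takes essentially the same approach as the paper: define the opposite monoid $\overleftarrow{M}$ with $a \star b := b \cdot a$, keep $h$ and $F$ unchanged, and observe that this recognizes the reverse of $L$. Your write-up is actually more thorough than the paper's, which merely asserts that associativity is ``easily seen'' and omits the explicit checks of equivariance and the reversed-product computation.
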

\begin{proof}
Let $L$ be recognized by $M, h, F$.
Define $\overleftarrow M$, to be an orbit-finite monoid that has the same underlying set as $M$, but its operation ($\star$) is defined as:
\[a \star b = b \cdot a\]	
where $(\cdot)$ is the operation of the original monoid $M$. This operation is easily seen to be associative.
This finishes the proof, because $\overleftarrow M$ (together with the original $h$ and $F$)
recognizes the reverse of $L$.
\end{proof}
It follows orbit-finite monoids are indeed strictly weaker than deterministic orbit-finite automata.
The following claim gives an explicit witness of this non-containment:
\begin{example}
\label{ex:first-again-not-monoid}
    The language
    \[ \textrm{``First letter appears again''}\]
    is not recognized by any orbit-finite monoid. If it were, 
    then (by Lemma~\ref{lem:ofm-reverse}) so would be its reverse:
    \[L_\textrm{last} = \textrm{``The last letter appears before''}\]
    This would mean that the language $L_\textrm{last}$ is recognized by some deterministic orbit-finite automaton, 
    which was shown in the proof of Lemma~\ref{lem:dofa-reverse} to be false.\\

    Since the proof is a bit circuitous, let us also provide an intuitive argument: Let $w$ be a word in $\atoms^*$,
    and let $a$ be a letter in $\atoms$. To check whether $aw$ belongs to the language, we need to know whether $w$ contains $a$. 
    It follows that the monoid image of $w$ has to contain information about all the letters that appear in $w$. However, such 
    an information is a finite subset of $\atoms$, and the set of all finite subsets of $\atoms$ is orbit-infinite.
\end{example}
\noindent
Finally, let us point out that:
\[ \textrm{orbit-finite monoids} \neq (\substack{\textrm{left-to-right deterministic}\\ \textrm{orbit-finite automata}}) \cap (\substack{\textrm{right-to-left deterministic}\\ \textrm{orbit-finite automata}}) \]
The class on the right is stronger, as witnessed by the following language:
\[\Large \substack{\textrm{``The first letter is equal to the last one}\\
\textrm{and it appears somewhere else in the word''}}\]
\chapter{Single-use restriction}

In this section, we introduce the \emph{single-use restriction}, which weakens register automata in a way that
makes them equivalent to orbit-finite monoids. The restriction was first introduced in my master's thesis \cite{stefanski2018automaton},
which proves that single-use register automata are not stronger than orbit-finite monoids, 
and later studied in the conference paper \cite{single-use-paper}, which proves that the two models are actually equivalent.
The contribution of this thesis is introducing the abstract class of \emph{single-use functions} (Section~\ref{sec:su-functions}), and defining 
the \emph{single-use automaton} in terms of single-use functions (Section~\ref{sec:su-automata}).

\section{Single-use register automaton}
\label{sec:su-register-automata}
We start the chapter with an informal discussion on the model of the \emph{single-use register automaton}, 
which is a variant of the deterministic\footnote{In
    this thesis we assume that all single-use models are deterministic unless we explicitly state otherwise.
    This is because combining the single-use restriction with nondeterminism, which we do not know how to resolve
    (see Section~\ref{sec:non-determinsitic-single-use}).}
register automaton, where 
every register value can be used at most once. This means that:
\begin{enumerate}
    \item the automaton is not allowed to make copies of register values; and
    \item whenever the automaton asks a query about a register value, 
          this has the side effect of destroying that register's contents.
\end{enumerate}
To see this model in practice, consider the following example (see Section~\ref{subsec:single-use-transition-functions} for a more formal definition):
\begin{example}
\label{ex:su-appears-again}
We want single-use automata to be equivalent to orbit-finite monoids.
This means that the following language should not be recognized 
by any single-use register automaton (because, by Example~\ref{ex:first-again-not-monoid},
it is not recognized by any orbit-finite monoid):
\[ \textrm{``The first letter appears again''} \subseteq \atoms^* \]
The formal proof of this fact follows from Lemma~\ref{lem:1sua-incl-ofm} presented later
in this chapter. For now, let us simply show
that the standard register automaton presented in Example~\ref{ex:appears-again}
(in Chapter~\ref{ch:inifnite-alphabets})
fails to recognize the language under the single-use restriction. We follow 
the automaton's run on the word $1\,2\,3\,2\,1\,3$. It starts in the initial configuration:
\smallpicc{First-again-1}
\noindent
The first transition proceeds without difficulties:
the automaton stores the first letter
in its register and moves to the second position:
\smallpicc{First-again-2}
\noindent
Then, the automaton compares its register value with its current input value. It learns that they are different,
but, as a consequence of the single-use restriction, it loses the register value:
\smallpicc{First-again-3-su}
\noindent
At this point, the automaton does not remember the first atom any more, so it has no chance of
checking if it appears later in the word.
\end{example}
\noindent
Let us now consider a positive example:
\begin{example}
\label{ex:su-no-two-in-a-row}
The following language is recognized by a single-use automaton:
\[ \textrm{``No letter appears twice in a row''} \subseteq \atoms^* \]
The automaton has one register in which it stores a copy of the previous letter.
Whenever the automaton moves forward to a new position, it compares the new letter with the 
previous one. This has the side effect of destroying 
the register's contents. If the letters are (or rather were) equal,
the automaton rejects the input. If they were different,
the automaton saves the current letter in its register and proceeds forward.
\end{example}

It is worth pointing out that a single-use automaton has an unrestricted (i.e. multiple-use)
access to its input letters. This is illustrated by our final example:
\begin{example}
\label{ex:su-at-most-3}
In this example we show that the language:
\[\textrm{``There are at most $3$ different letters in the input word''} \subseteq \atoms \]
is recognized by a single-use register automaton. First, let us point out
that the standard (i.e. multiple-use) automaton presented in Section~\ref{sec:dra} violates 
the single-use restriction. Interestingly, a single-use construction is possible,
but it requires six registers. Suppose that the automaton 
has already seen three different letters $a, b, c \in \atoms$.
Then the automaton should store one of them in three copies, one of them in two copies and one of them in one copy:
\vvsmallpicc{six-registers}
Suppose that the automaton is about to process the next letter $d \in \atoms$,
which may or may not be equal to $a$, $b$, or $c$.
This transition is explained in the following diagram (remember that the automaton has unrestricted access to the input letter $d$):
\bigpicc{six-registers-transition-3}
Using a similar idea, it is easy to extend this construction to cases where the automaton has only seen two or fewer 
different letters so far. 
\end{example}

\subsection{Single-use transition functions}
\label{subsec:single-use-transition-functions}
In this section, we define\footnote{
    The main goal of this definition is to provide an intuitive understanding of the single-use restriction, 
    so it might lack some formal rigor. For a fully formal definition, see Definition~\ref{def:single-use-functions}.
} \emph{single-use transition functions} for register automata.
This definition is specific to the functions of the type:
\[ (Q \times (\atoms + \bot)^R) \times \atoms \eqto (Q \times (\atoms + \bot)^R)\tdot\]

The syntactic definition of single-use transition functions is based on the syntactic definition 
of equivariant transition functions. Recall that in Section~\ref{sec:dra}
we have defined equivariant transition functions using programs of the following shape:
\begin{alltt}
{\bf if} (condition {\bf and} condition {\bf and} \(\ldots\) {\bf and} condition) {\bf then}
    action; action; \(\ldots\); action;
{\bf else} {\bf if} (condition {\bf and} condition {\bf and} \(\ldots\) {\bf and} condition) {\bf then}
    action; action; \(\ldots\); action;
{\bf else} {\bf if} (condition {\bf and} condition {\bf and} \(\ldots\) {\bf and} condition) {\bf then}
    action; action; \(\ldots\); action;
\(\ldots \)
\end{alltt}
Examples of conditions include $\State = q_7$, $\reg_1 = \reg_2$, or $\reg_1 = \In$,
and examples of actions include $\State := q_7$, $\reg_3 := \reg_2$ or $\reg_5 := \In$.
Notice that the if-statements are not allowed to branch.
In the definition of \emph{single-use transition functions,} we use a similar syntax
but with a different semantics. In the single-use semantics, evaluating a condition 
has the side effect of destroying the content of each register that appears in the condition
(by replacing it with $\bot$). For example, in the single-use semantics the following 
two programs are equivalent:
\[
\begin{tabular}{c|c}
\begin{minipage}[t]{5cm}
 \begin{alltt}
    {\bf if} \(\reg\sb{1}\) = \(\reg\sb{3}\) {\bf then}
        \(\State\) := \(q\sb{3}\);
    {\bf else} 
        \(\State\) := \(q\sb{5}\); 
 \end{alltt}
\end{minipage}
&
\begin{minipage}[t]{5cm}
 \begin{alltt}
    {\bf if} \(\reg\sb{1}\) = \(\reg\sb{3}\) {\bf then}
        \(\reg\sb{1}\) := \(\bot\);
        \(\reg\sb{3}\) := \(\bot\);
        \(\State\) := \(q\sb{3}\);
    {\bf else}
        \(\reg\sb{1} := \bot\);
        \(\reg\sb{3} := \bot\);
        \(\State\) := \(q\sb{5}\); 
 \end{alltt}  
\end{minipage} 
\end{tabular}\]

Because of those side effects, the order in which we evaluate the conditions might 
influence the outcome. To make this order clear, we modify the syntax 
of single-use transition functions, by (a) disallowing the use of ${\bf and}$ 
in the if-statements, and (b) allowing nested and branching if expressions.\\

As an example, we provide an implementation of the transition function 
for the register automaton described in Example~\ref{ex:su-at-most-3}.
The automaton has $4$ control states ($q_0$, $q_1$, $q_2$, $q_3$, $q_\textrm{fail}$)
and $6$ registers $(a_1, a_2, a_3, b_1, b_2, c_1)$. For the sake of brevity, we limit 
the implementation to the case where the automaton has already seen $3$ different letters:
\begin{alltt}
{\bf if} state = \(q\sb{3}\) {\bf then}
    {\bf if} \(a\sb{1}\) = \(\In\) {\bf then}
        \(a\sb{1}\) := \(\In\);
    {\bf else} {\bf if} \(b\sb{1}\) = \(\In\) {\bf then}
        \(b\sb{1}\) := \(a\sb{2}\);
        \(b\sb{2}\) := \(a\sb{3}\);
        \(a\sb{1}\) := \(\In\);
        \(a\sb{2}\) := \(\In\);  
        \(a\sb{3}\) := \(\In\);   
    {\bf else} {\bf if} \(c\sb{1}\) = \(\In\) {\bf then}
        \(c\sb{1}\) := \(b\sb{2}\);
        \(b\sb{1}\) := \(a\sb{2}\);
        \(b\sb{2}\) := \(a\sb{3}\);
        \(a\sb{1}\) := \(\In\);
        \(a\sb{2}\) := \(\In\);  
        \(a\sb{3}\) := \(\In\);  
    {\bf else}
        \(\State\) := \(q\sb{\textrm{fail}}\)
{\bf else} ...
\end{alltt}

This particular transition function could have been implemented using ${\bf and}$'s 
instead of the branching if expressions, but this is not true for all single-use transition 
functions. Here is an example of a function that requires branching:

\begin{alltt}
{\bf if} \(r\sb{1}\) = \(\In\) {\bf then}
    {\bf if} \(r\sb{2} = r\sb{3}\)
    {\bf then} \(\State\) := \(q\sb{\textrm{ok}}\);
    {\bf else} \(\State\) := \(\ q\sb{\textrm{fail}}\);
{\bf else}
    {\bf if} \(r\sb{2} = r\sb{4}\)
    {\bf then} \(\State\) := \(q\sb{\textrm{ok}}\);
    {\bf else} \(\State\) := \(q\sb{\textrm{fail}}\);
\end{alltt}

\subsubsection{Single-use acceptance function}
\label{subsec:single-use-acceptance-function}
Finally, let us discuss the way in which the single-use register automaton accepts its input.
There are two possible approaches: In the first one, 
called \emph{acceptance by state}, the automaton decides whether to accept its input
by looking at its final control state -- if it belongs to the accepting subset $F \subseteq Q$, 
it accepts; otherwise it rejects. In the second way, the automaton has an equivariant acceptance
function:
\[ (Q \times (\atoms + \bot)^R)  \eqto \{\Yes, \No\}\tdot \]
Acceptance by control state is used by the standard (multiple-use) register automata, as 
defined in Chapter~\ref{ch:inifnite-alphabets}, and acceptance by configuration is used 
(implicitly) by the orbit-finite automaton. It follows, from Theorem~\ref{thm:dra-equiv-dofa}, 
that the two acceptance models are equivalent for multiple-use automata.
However, for the single-use register automaton, they can change the expressive power of the automaton\footnote{
    We are going to briefly revisit this distinction in Chapter 3,
    while discussing the output function ($\lambda$) of a local monoid transduction
    (see Definition~\ref{def:local-monoid-transduction}).
}:
\begin{example}
\label{ex:su-ac-state-conf}
   The following language
    \[ \textrm{``The first letter is equal to the last one''} \subseteq \atoms^* \] 
    is recognized by a single-use register automaton that accepts by configuration,
    but not by one that accepts by control state.\\

    We start by describing the automaton that accepts by configuration.
    The automaton has two registers: in the first one, it stores 
    a copy of the first letter, and in the second one, it stores a copy of the previous letter.
    In the final configuration, 
    the first register will contain a copy of the first letter, and the 
    second register will contain a copy of the last letter, so
    the acceptance function can check if the two values are equal.\\

    On the other hand, it is not hard to see that a register automaton that accepts 
    by control state and recognizes the language ``The first letter is equal to the last one'',
    has to compare each input letter with the first one (because every letter could be the last one).
    This would violate the single-use restriction.
\end{example}

The language from Example~\ref{ex:su-ac-state-conf} is recognized by an 
orbit-finite monoid. Since we would like single-use register automata to be equivalent to 
orbit-finite monoids, we choose acceptance by configuration as the standard acceptance model 
for the single-use register automaton. One could also argue that acceptance by configuration is more natural than 
acceptance by control state, because it naturally appears in orbit-finite automata.\\

Finally, let us mention that we could also consider a model, which requires 
the acceptance function to be single-use. (With the syntax for a single-use acceptance functions
defined analogously to the one for single-use transition functions.) Fortunately, 
it turns out that this does not influence the expressive power of the automaton.
We prove this later in the chapter as Lemma~\ref{lem:su-mu-acceptance}.

\section{Single-use functions}
\label{sec:su-functions}
In this section, we take a detour from our discussion of automata theory to introduce an abstract concept of \emph{single-use functions}.
Then, in the next section, we will use these functions as a tool for analysing single-use automata.
The step from single-use transition functions to general single-use functions can be compared to the step from equivariant 
transition functions (described in Section~\ref{sec:dra}) to the general equivariant 
functions (described in Section~\ref{sec:sets-with-atoms}).\\ 

Before we define single-use functions, let us discuss a few of their properties.
The defining feature of the class is that it does not contain the following function:
\[ \copyf : \atoms \to \atoms \times \atoms \tdot \]
This motivates the notation $X \suto Y$ for the set of all single-use functions between $X$ and $Y$.
The notation comes from linear logic (\cite{girard1987linear})
and linear type systems\footnote{For more connections with linear type systems see \cite[Claim~1.4.11]{nguyen2021automates}.} (\cite{wadler1990linear}). It is important to point out that
while single-use functions are not allowed to copy atoms,
they are allowed to discard them,  which makes them actually closer to affine logic,
and affine type systems \cite{asperti1998light} (they use the symbol $\suto$ as well).\\

The proof that single-use automata are not stronger than orbit-finite monoids relies on two 
key properties of single-use functions: The first one is that single-use function spaces preserve 
orbit finiteness: if $X$ and $Y$ are orbit-finite, then so is $X \suto Y$. Note that this 
is not true for finitely supported functions, where already $\atoms \fsto \{\Yes, \No\}$ is 
orbit-infinite (see Section~\ref{subsec:orbit-finite-sets} for details). The second important 
property is that single-use functions are closed under compositions:
i.e.  if $f$ belongs $X \suto Y$, and $g$ belongs to $Y \suto Z$, then $(g \circ f)$ belongs to $X \suto Z$.
Thanks to those two properties, we know that if $Q$ is orbit-finite, then $Q \suto Q$ 
is an orbit-finite monoid. We can use it to recognize the language of a single-use automaton whose set of states is $Q$.\\

Unfortunately, as we are going to see, the definition of single-use functions is (at least for now) rather syntactic in its 
nature. It limits their scope to a very specific subclass of orbit-finite sets called
\emph{polynomial orbit-finite sets}. The questions of finding a semantic definition of the class and
extending its scope to all orbit-finite sets (or some other class larger than polynomial orbit-finite sets) remain open. 

\subsection{Polynomial orbit-finite sets}
We start by defining \emph{polynomial orbit-finite sets}\footnote{
The motivation of the name is as follows: The word \emph{orbit-finite} is used, because 
every polynomial orbit-finite set is orbit-finite. The word \emph{polynomial} is used 
because the class of polynomial orbit-finite sets is closed under $\times$ and $+$
(i.e. products and coproducts).}, which are the domains and codomains of single-use functions:
\begin{definition}
\label{def:pofs}
The class of \emph{polynomial orbit-finite sets} is the smallest subclass of sets with atoms that:
\begin{enumerate}
        \item contains the atomless singleton ($1$);
        \item contains the set of all atoms ($\atoms$);
        \item is closed under products ($P_1 \times P_2)$; and
        \item is closed under disjoint sums ($P_1 + P_2$).
\end{enumerate}
\end{definition}

\begin{example}
\label{ex:finite-pof}
    Every finite $S$ can be represented as polynomial orbit-finite, because it is isomorphic to:
    \[\underbrace{1 + 1 + \ldots + 1}_{|S| \textrm{ times}}\tdot\]
\end{example}
\begin{example}
\label{ex:memory-conf-pof}
    The set of all possible memory configurations of a register automaton -- i.e. the
    set $Q \times (\atoms + \bot)^R$, for some finite $R$ and $Q$ -- can be represented 
    as a polynomial orbit finite set, because it is isomorphic to:
   \[ (\underbrace{1 + 1 + \ldots + 1}_{\textrm{$|Q|$ times}}) \times \underbrace{(\atoms + 1) \times (\atoms + 1) \times \ldots \times (\atoms + 1)}_{\textrm{$|R|$ times}} \]
\end{example}

It is easy to see that all polynomial orbit-finite sets are orbit-finite, and that all
polynomial orbit-finite sets are straight (as defined in Definition~\ref{def:straight-set}).
It is worth pointing out that the other inclusion does not hold -- 
there are sets that are straight and orbit-finite but not polynomial orbit finite.
An example of such a set is $\atoms^{(2)}$. In general, thanks to the distributivity 
of $\times$ over $+$, it is not hard to see that every polynomial orbit-finite set 
is isomorphic with a set of the following form (in Lemma~\ref{lemma:pof-normal-form-su}
we are going to show that this isomorphism is a single-use function):
\[ \atoms^{k_1} + \ldots + \atoms^{k_n}\]



\subsection{Single-use functions}
We are now ready to define the single-use functions:
\begin{definition}
\label{def:single-use-functions}
    The class of \emph{single-use functions} is the smallest subclass of functions between polynomial orbit-finite
    that is closed under the following combinators, and contains all the following basic functions:
    \[
    \begin{tabular}{cc}
        \multicolumn{2}{c}{
            \begin{tabular}{|ccc|}
            \hline
            \multicolumn{3}{|c|}{Combinators}\\
            \hline
            $\infer
            { X \transform{g \circ f} Z}
            { X \transform{f} Y &  Y \transform{g} Z}$
            &
            $\infer
            { X_1 \times X_2 \transform{f \times g} Y_1 \times Y_2}
            { X_1 \transform{f} Y_1 &  X_2 \transform{g} Y_2}$
            &
            $\infer
            { X_1 + X_2 \transform{f+g} Y_1 + Y_2}
            { X_1 \transform{f} Y_1 &  X_2 \transform{g} Y_2}$\\
            \hline
            \end{tabular}
        }\\
        &\\
        \multicolumn{2}{c}{
            \begin{tabular}{|ll|}
                \hline
                \multicolumn{2}{|c|}{Functions about $\atoms$}\\
                \hline
                $\eqf:$ & $ \atoms \times \atoms \to 1 + 1$ \\
                $\const_{a \in \atoms}:$ & $ 1 \to \atoms$ \\
                $\idf:$ & $\atoms \to \atoms$\\
                \hline
            \end{tabular}
        }\\
        &\\
        \begin{tabular}{|ll|}
            \hline
            \multicolumn{2}{|c|}{Functions about $\times$}\\
            \hline
            $\proj_1 :$ & $ X \times Y \to X$ \\
            $\proj_2 :$ & $X \times Y \to Y$ \\
            $\sym :$ & $X \times Y \to Y \times X$ \\
            $\assoc : $ & $(X \times Y) \times Z \to X \times (Y \times Z)$ \\
            $\leftI :$ & $X \to 1 \times X$ \\
            \hline
        \end{tabular} &
            \begin{tabular}{|ll|}
                \hline
                \multicolumn{2}{|c|}{Functions about $+$}\\
                \hline
                $\coproj_1$ & $X \to X + Y$ \\
                $\coproj_2$ & $Y \to X + Y$ \\
                $\cosym :$ & $X + Y \to Y + X$ \\
                $\coassoc:$ & $(X + Y) + Z \to X + (Y + Z)$\\
                $\merge :$  & $ X + X  \to X$ \\
                \hline
        \end{tabular}\\
        &\\
        \multicolumn{2}{c}{
            \begin{tabular}{|ll|}
                \hline
                \multicolumn{2}{|c|}{Distributivity}\\
                \hline
                    $\distr :$ & $X \times(Y + Z) \to X \times Y + X \times Z$\\
                \hline
        \end{tabular}}
    \end{tabular}
    \]
We hope that the semantics of the basic functions and combinators follows intuitively from their types and names.
However, to clarify possible confusion, let us define a few of them:
\begin{enumerate}
\item $\coproj_1$ is the natural injection of $X$ into $X + Y$;
\item $f + g : X_1 + X_2 \to Y_1 + Y_2$ is defined as $f$ on $X_1$ and $g$ on $X_2$;
\item $\leftI(x)$ is defined as $\left(\top, x\right)$, where $\top$ is the unique element of $1$. 
\end{enumerate}
\end{definition}
\noindent

The class of single-use functions is a restriction of finitely supported 
functions. To illustrate that, we start with a negative example:

\begin{example}
    The following function is finitely supported, but not single-use:
    \[
        \begin{tabular}{cc}
        $f : \atoms \to {{\underbrace{(1 + 1)}_{\substack {\textrm{represents}\\
                                                         \textrm{true or false}}}}}$ &
        $f(a) = \begin{cases}
                \textrm{true} & \textrm{if } a = 3 \vee a = 5\\
                \textrm{false} & \textrm{otherwise}
        \end{cases}$
        \end{tabular}
    \]
    This is because $f$ needs to compare its input with two different constants, which 
    requires two copies of the input value. (A formal proof
    follows from 
    the decision tree representation, presented later in this section.)
\end{example}

\noindent
Let us now present a few positive examples:
\begin{example}
    \label{ex:constI-su}
        The function $\constI : X \to 1$ is a single-use function, because it can be constructed as the following composition:
        \[X \transform{\leftI} 1 \times X \transform{\proj_1} 1\]
\end{example}

\begin{example}
\label{ex:rightDistr-su}
    The function $\rightDistr : (X + Y) \times Z \to  (X \times Z) + (Y \times Z)$ is a single-use
    function. It can be constructed as the following composition:
    \[
        (X + Y) \times Z \transform{\sym} Z \times (X + Y) \transform{\distr} Z \times X + Z \times Y \transform{\sym + \sym} X \times Z + Y \times Z 
    \]
    Using a similar idea, we can show that all the following functions are single-use:
    \[ 
    \begin{tabular}{cc}
      $\rightI : X \to X \times 1$ & $\assoc^{-1} : X \times (Y \times Z) \to (X \times Y) \times Z$\\
      &\\
      \multicolumn{2}{c}{$\coassoc^{-1} : X + (Y + Z) \to (X + Y) + Z$}
    \end{tabular}
    \]
\end{example}
\begin{example}
\label{ex:maybe-comb-su}
Single-use functions are closed under the following combinator:
\[ \infer{X + Y \transform{[f, g]} Z}
    {X \transform{f} Z & Y \transform{g} Z} \]
Function $[f, g]$ is constructed as follows:
\[X + Y \transform{f + g} Z + Z \transform{\merge}  Z\]
This combinator can easily be generalized to any number of functions:
\[ [f_1, \ldots , f_n] : X_1 + \ldots + X_n \to Y \]
\end{example}
\begin{example}
\label{ex:distr-rev-su}
    The function $\distr^{-1} : X \times Y + X \times Z \to X \times (Y + Z)$ is a single-use function.
    It can be constructed as follows:
    \[  X \times Y + X \times Z \longtransform{[\proj_1 \times \coproj_1, \proj_1 \times \coproj_2]}  X \times (Y + Z) \] 

\end{example}

\noindent
In order to simplify the notation, we 
declare both $\times$ and $+$ to be right associative. This means that:
\[X_1 + X_2 + \ldots + X_n = X_1 + (X_2 + \ldots + X_n)\tdot\]
(and analogously for $\times$). In a similar manner, we define $X^n$ to denote
\[ \underbrace{X \times X \times \ldots \times X}_{\textrm{$n$ times}} = X \times (X \times \ldots \times X)\]

\begin{example}
\label{ex:projs-su}
    The following function is single-use:
    \[ \proj_i : X_1 \times \ldots \times X_n \to X_i \]
    It can be constructed as follows:
    \[ X_1 \times \ldots \times X_n \transform{\proj_2} X_2 \times \ldots \times X_n \transform{\proj_2} \ldots \transform{\proj_2} X_i \times \ldots \times X_n \transform{\proj_1} X_i \]
    Similarly, we show that $\coproj_i : X_i \to X_1 + \ldots + X_n$ is a single-use function.
\end{example}

\begin{example}
\label{ex:assoc-star-su}
    The following function is single-use:
    \[ \assoc^* : (X_1 \times \ldots \times X_n) \times (Y_1 \times \ldots \times Y_m) \to X_1 \times \ldots \times X_n \times Y_1 \ldots \times Y_m\]
    It can be constructed inductively on $n$:
    \[ (X_1 \times(X_2 \ldots \times X_n)) \times (Y_1 \times \ldots \times Y_m) \transform{\assoc} X_1 \times ((X_2 \ldots X_n) \times (Y_1 \times \ldots \times Y_m)) \transform{\idf \times \assoc^*}\]
    \[  \transform{} X_1 \times (X_2 \ldots \times X_n \times Y_1 \ldots \times Y_m) \]
    Using a similar idea, we can extend $\assoc^*$ to map between any two bracketings of $X_1 \times \ldots \times X_n$.
    In an analogous manner, we can show that the following functions are single-use:
    \[ \coassoc^* : (X_1 + \ldots + X_n) + (Y_1 + \ldots + Y_m)  \to X_1 + \ldots + X_n + \ldots Y_1 + \ldots Y_n\]
    \[  \distr^* : (X_1 \times \ldots \times X_n) \times (Y_1 \times \ldots \times Y_m) \to X_1 \times Y_1 \, + \, X_1 \times Y_2 \, + \ldots + \, X_n \times Y_m\]
\end{example}

\begin{example}
\label{ex:perm-su}
    For every permutation $p : \{1, \ldots, k\} \to \{1, \ldots, k\}$, the following function is single-use:
    \[\shuffle_p : X_1 \times \ldots \times X_k \to X_{p(1)} \times \ldots \times X_{p(k)}\]
    In order to see that, we notice that every $\shuffle_p$ can be built from the following to functions:
    \[ \mathtt{swap} : X_1 \times X_2 \ldots \times X_k \to X_2 \times X_1 \times \ldots \times X_k \]
    \[ \mathtt{shift} : X_1 \times X_2 \ldots \times X_k \to X_2 \times X_3 \times \ldots \times X_k \times X_1\]
    This leaves us with constructing $\mathtt{swap}$ and $\mathtt{shift}$. Here is the construction for $\mathtt{swap}$: 
    \[
    \atoms^k \longtransform{\assoc^*} \atoms^2 \times \atoms^{k-2}
    \longtransform{\sym \times \idf} \atoms^2 \times \atoms^{k-2} 
    \longtransform{\assoc^*} \atoms^k\comma
    \]
    and here is the construction for $\mathtt{shift}$:
    \[
    \atoms^k \transform{\sym} \atoms^{k-1} \times \atoms
    \longtransform{\assoc^*} \atoms^k \tdot
    \]

\end{example}
\noindent

Finally, let us show that we can use single-use bijections to present polynomial orbit-finite sets in
a normal form:
\begin{lemma}
\label{lemma:pof-normal-form-su}
For every polynomial orbit-finite $X$, there exists a single-use bijection: 
\[\tau_X : X \suto \atoms^{k_1} + \atoms^{k_2} + \ldots + \atoms^{k_n}\comma\]
such that $\tau^{-1}_X$ is a single-use function as well.
\end{lemma}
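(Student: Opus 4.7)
The plan is to proceed by structural induction on the definition of polynomial orbit-finite sets (Definition~\ref{def:pofs}). I need to exhibit, for each $X$, a single-use bijection $\tau_X$ into normal form whose inverse is also single-use. The base cases are immediate: for $X = 1$ we take $\tau_1 = \idf$ (viewing $1$ as $\atoms^0$), and for $X = \atoms$ we take $\tau_\atoms = \idf$ (with $n=1$, $k_1 = 1$). Both are trivially single-use with single-use inverses.

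For the sum case, suppose by induction that we have single-use bijections
\[\tau_X : X \suto \atoms^{k_1} + \ldots + \atoms^{k_n}, \qquad \tau_Y : Y \suto \atoms^{l_1} + \ldots + \atoms^{l_m}\]
with single-use inverses. Then I set $\tau_{X+Y} := \coassoc^* \circ (\tau_X + \tau_Y)$, where $\coassoc^*$ is the flattening single-use bijection from Example~\ref{ex:assoc-star-su}. The inverse is $(\tau_X^{-1} + \tau_Y^{-1}) \circ (\coassoc^*)^{-1}$; the latter is built from $\coassoc^{-1}$ and $\idf$ using the same inductive scheme as $\coassoc^*$, and single-use-ness of $\coassoc^{-1}$ was noted in Example~\ref{ex:rightDistr-su}. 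Closure of single-use functions under $+$ and composition handles the rest.

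For the product case, given $\tau_X$ and $\tau_Y$ as above, I first form
\[X \times Y \ \stackrel{\tau_X \times \tau_Y}{\suto}\ (\atoms^{k_1} + \ldots + \atoms^{k_n}) \times (\atoms^{l_1} + \ldots + \atoms^{l_m}).\]
Then I apply the distributivity bijection $\distr^*$ from Example~\ref{ex:assoc-star-su}, obtaining a sum $\sum_{i,j} \atoms^{k_i} \times \atoms^{l_j}$. Finally, I apply $\assoc^*$ componentwise to each summand to turn $\atoms^{k_i} \times \atoms^{l_j}$ into $\atoms^{k_i + l_j}$. This gives the desired normal form. For the inverse direction, I need that $\distr^*$ and $\assoc^*$ are bijections with single-use inverses; for $\assoc^*$ this follows inductively from the single-use-ness of $\assoc^{-1}$ (Example~\ref{ex:rightDistr-su}), and for $\distr^*$ it follows from the single-use-ness of $\distr^{-1}$ (Example~\ref{ex:distr-rev-su}), using the combinator $[\cdot,\cdot]$ from Example~\ref{ex:maybe-comb-su} to assemble the pieces.

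The only non-routine step is verifying that the composite maps genuinely are bijections; but this is a purely set-theoretic fact, independent of the single-use constraint, and follows because each constituent ($\distr$, $\assoc$, $\coassoc$, and their inverses) is a bijection between the relevant sets. The main bookkeeping obstacle is handling the $\distr^*$ inverse cleanly: writing it explicitly as $[\proj_1 \times \coproj_1,\ \ldots,\ \proj_1 \times \coproj_{nm}]$-style combinators requires the generalized $n$-ary version of Example~\ref{ex:distr-rev-su}, which is a straightforward induction using Example~\ref{ex:maybe-comb-su}. Once all pieces are in place, closure under composition, $\times$, and $+$ (all of which are combinators in Definition~\ref{def:single-use-functions}) delivers both $\tau_X$ and $\tau_X^{-1}$ in the single-use class.
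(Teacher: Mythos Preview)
Your proposal is correct and follows essentially the same approach as the paper: the forward construction of $\tau_X$ by structural induction using $\coassoc^*$, $\distr^*$, and componentwise $\assoc^*$ is identical to the paper's. For the inverse, the paper packages the argument into a single claim (that any single-use function built from $\sym$, $\assoc$, $\leftI$, $\cosym$, $\coassoc$, $\idf$, $\distr$, and restricted projections has a single-use inverse, proved by induction on the derivation), whereas you argue piece by piece that each constituent has a single-use inverse; this is the same content organized slightly differently.
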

\begin{proof}
    We construct the isomorphism inductively on $X$. First, we notice that
    both $\tau_1$ and $\tau_\atoms$ are both equal to $\idf$. Then we construct $\tau_{X_1 + X_2}$: 
    \[ X_1 + X_2 \transform{\tau_1 + \tau_2} (\atoms^{k_1} + \ldots + \atoms^{k_n}) + (\atoms^{l_1} + \ldots + \atoms^{l_m})
    \transform{\coassoc^*} \atoms^{k_1} + \ldots + \atoms^{l_m} \]
    Finally, we construct $\tau_{X_1 \times X_2}$:
    \[X_1 \times X_2 \transform{\tau_{X_1} \times \tau_{X_2}} (\atoms^{k_1} + \ldots + \atoms^{k_n}) \times (\atoms^{l_1} + \ldots + \atoms^{l_m})
    \transform{\distr^*}\]
    \[ \atoms^{k_1} \times \atoms^{l_1} + \atoms^{k_1}\times\atoms^{l_2} + \ldots + \atoms^{k_n}\times\atoms^{l_m} \longtransform{\assoc^* + \ldots + \assoc^*} \atoms^{k_1 + l_1} + \atoms^{k_1 + l_2} + \ldots + \atoms^{k_n + l_m}\]
    This finishes the first part of the proof. To prove that $\tau_X^{-1}$ exists and is a single-use function,
    we use the following claim:
    \begin{claim}
        If a single-use function $f : X \suto Y$ can be constructed using only 
        the following basic functions (and all three combinators): $\sym$, $\assoc$, $\leftI$, $\cosym$,
        $\coassoc$, $\idf$, $\distr$, $\proj_1$ limited to $X \times 1 \to X$, and $\proj_2$ limited 
        to $1 \times X \to X$, then $f^{-1} : Y \to X$ exists and is a single-use function.
    \end{claim}
    \begin{proof}
        The proof goes by induction on the derivation of $f$ as a single-use function. First,
        let us notice that all three combinators preserve reversibility:
        \[ \begin{tabular}{ccc}
            $(f \circ g)^{-1} = g^{-1} \circ f^{-1}$ &
            $(f \times g)^{-1} = f^{-1} \times g^{-1}$ &
            $(f + g)^{-1} = f^{-1} + g^{-1}$. 
        \end{tabular}\]
        This leaves us with showing that all the basic functions listed in the claim 
        have single-use inverses: Each of $\sym$, $\cosym$, and $\idf$ is its own inverse.
        Thanks to Examples~\ref{ex:rightDistr-su}~and~\ref{ex:distr-rev-su}, we know 
        that $\distr^{-1}$, $\assoc^{-1}$ and $\coassoc^{-1}$ are single-use functions. 
        Finally, we notice that $\leftI$ and $\proj_2 : 1 \times X \to X$ are each other's inverses, 
        and the inverse of $\proj_1 : X \times 1 \to X$ is $\rightI$ defined in Example~\ref{ex:rightDistr-su}.
    \end{proof}
\end{proof}

\subsection{Single-use functions + $\copyf$}
\label{subsec:single-use-functions-plus-copy}
In this section we prove the following lemma, which justifies the intuition that:
\[ (\textrm{single-use functions}) + \copyf = (\textrm{multiple-use functions}) \]
\begin{lemma}
\label{lem:su-plus-copy-fs}
    If we extend the class of single-use functions, by including
     \[\copyf~:~\atoms~\to~\atoms~\times~\atoms\]
    as a basic function, we obtain the class of 
    all finitely supported functions between polynomial orbit-finite sets. 
\end{lemma}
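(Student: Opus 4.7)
The lemma asserts the equality of two classes of functions, so I will prove the two inclusions separately.

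For the forward inclusion, every basic function in Definition~\ref{def:single-use-functions}, together with $\copyf$, is equivariant. The three combinators preserve finite supportedness: if $f$ is supported by $\alpha$ and $g$ by $\beta$, then each of $g \circ f$, $f \times g$ and $f + g$ is supported by $\alpha \cup \beta$. A routine induction on the derivation then shows that every function built from the basic functions and $\copyf$ using the three combinators is finitely supported.

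For the reverse inclusion, let $f : X \to Y$ be finitely supported between polynomial orbit-finite sets. Using Lemma~\ref{lemma:pof-normal-form-su} and the fact that $\tau_X$, $\tau_Y$ and their inverses are single-use, I may replace $X$ and $Y$ by their normal forms $\atoms^{k_1} + \ldots + \atoms^{k_n}$ and $\atoms^{l_1} + \ldots + \atoms^{l_m}$. Using the combinator $[\,-, \ldots, -\,]$ from Example~\ref{ex:maybe-comb-su}, it further suffices to construct, for each $i$, the restriction onto the $i$-th summand. So the task reduces to building an arbitrary finitely supported $g : \atoms^k \to \atoms^{l_1} + \ldots + \atoms^{l_m}$.

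Let $\alpha = \{c_1, \ldots, c_p\}$ be a finite support of $g$. The set $\atoms^k$ decomposes into finitely many $\alpha$-orbits, each determined by the equality pattern among the $k$ coordinates together with, for every equivalence class, either a marker ``equals $c_j$'' or ``fresh''. By Lemma~\ref{lem:fs-functions-preserve-supports}, on each such orbit $O$ the restriction of $g$ to $O$ lands in a single summand $\atoms^{l_j}$, and each of its output coordinates is either a fixed constant $c_r$ or the value of a fixed equivalence class of input positions in $O$. Hence the restriction of $g$ to $O$ is a rigid ``tuple-assembly'' function, implementable from $\shuffle_p$ (Example~\ref{ex:perm-su}), projections, the constants $\const_{c_r}$, and a sufficient supply of copies of the input coordinates produced by $\copyf$. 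The overall $g$ is then assembled by a dispatcher: first pre-duplicate every input coordinate a bounded number of times with $\copyf$; then cascade through all pairwise equality tests between coordinates and all tests between coordinates and constants $c_j$, using $\eqf$ and $\distr$ to branch on each test while threading the surviving input copies through; after all tests the input has been routed to exactly one orbit case, where the corresponding tuple-assembly is applied.

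The main obstacle is that equality tests in the single-use world are destructive: each use of $\eqf$ consumes its two arguments, so without $\copyf$ the cascade of tests required to identify an orbit could not be performed while still leaving input atoms available for the output. With $\copyf$ at hand, a bounded amount of pre-duplication --- depending only on $k$, $p$ and the codomain --- provides enough copies for both the testing phase and the output phase. Formalising the threading of these copies through each $\distr$ branch is a routine but tedious induction on the number of pending tests, and is the only real piece of bookkeeping in the argument.
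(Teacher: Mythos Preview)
Your proposal is correct and follows essentially the same route as the paper: reduce to normal forms via $\tau_X,\tau_Y$ and the combinator $[\,\cdot,\ldots,\cdot\,]$, decompose $\atoms^k$ into finitely many $\alpha$-orbits, observe that on each orbit the function is a rigid tuple-assembly from projections and constants (the paper's Claim~\ref{claim:o-to-a-def}), and dispatch via a cascade of equality tests (the paper's $\orbitf_\alpha$ from Claim~\ref{claim:orbit-definable} combined with the if-then-else combinator). One small slip: not every basic function is equivariant --- $\const_a$ is only supported by $\{a\}$ --- but this doesn't affect your argument, since you correctly work with finite supports rather than equivariance throughout.
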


We introduce the name $\emph{definable function}$, for a function that belongs 
to the class of single-use functions extended with $\copyf$.\\

\subsubsection{Definable $\Rightarrow$ Finitely supported}

We start the proof of Lemma~\ref{lem:su-plus-copy-fs} with the simpler inclusion:
\begin{lemma}
\label{lem:definable-so-fs}
    Every definable function is finitely supported.
\end{lemma}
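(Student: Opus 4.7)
The plan is to proceed by structural induction on the derivation witnessing that a function is definable, i.e.\ on the grammar from Definition~\ref{def:single-use-functions} extended with $\copyf$. The induction hypothesis states that every function built so far is finitely supported, and we verify this is preserved at every clause of the grammar.

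For the base cases, almost all of the basic functions are equivariant, which makes them finitely supported by $\emptyset$. Specifically, $\proj_1, \proj_2, \sym, \assoc, \leftI, \coproj_1, \coproj_2, \cosym, \coassoc, \merge, \distr, \eqf, \idf$ are all defined purely in terms of how tuples/coproducts are structured, and commute with every atom permutation. The added function $\copyf : \atoms \to \atoms \times \atoms$ sending $a \mapsto (a, a)$ is also equivariant. The one exception is $\const_{a} : 1 \to \atoms$, which is not equivariant but is clearly supported by the singleton $\{a\}$, since any permutation fixing $a$ fixes the function. So every basic function is finitely supported.

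For the inductive step, we check each of the three combinators. For composition, if $f : X \to Y$ is supported by $\alpha$ and $g : Y \to Z$ is supported by $\beta$, then $g \circ f$ is supported by $\alpha \cup \beta$: for any $(\alpha \cup \beta)$-permutation $\pi$, $\pi((g \circ f)(x)) = \pi(g(f(x))) = g(\pi(f(x))) = g(f(\pi(x))) = (g \circ f)(\pi(x))$. For the product combinator $f \times g$, if $f$ is supported by $\alpha$ and $g$ by $\beta$, then $\pi((f \times g)(x, y)) = \pi(f(x), g(y)) = (\pi(f(x)), \pi(g(y))) = (f(\pi(x)), g(\pi(y))) = (f \times g)(\pi(x, y))$ for any $(\alpha \cup \beta)$-permutation, giving support $\alpha \cup \beta$. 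The coproduct combinator $f + g$ is analogous, splitting the argument according to which summand it lies in and then applying the corresponding function.

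None of these steps is genuinely difficult; the statement is essentially a sanity check that the grammar generating definable functions does not escape the ambient category of finitely supported functions between polynomial orbit-finite sets. The only subtlety is that finite supports are accumulated across composition and pairing, so one must keep track that the union of two finite sets remains finite, which is of course immediate. The harder direction of Lemma~\ref{lem:su-plus-copy-fs}, namely that every finitely supported function between polynomial orbit-finite sets can be built from this grammar, is where the real work lies.
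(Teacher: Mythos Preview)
Your proof is correct and follows essentially the same approach as the paper: structural induction on the derivation, noting that all basic functions except $\const_a$ are equivariant (with $\const_a$ supported by $\{a\}$), and that the three combinators preserve finite supports. The only minor difference is that the paper dispatches the combinator step by observing that $\circ$, $\times$, and $+$ are themselves equivariant higher-order functions and invoking Lemma~\ref{lem:fs-functions-preserve-supports}, whereas you verify the support bounds directly by hand.
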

\begin{proof}
    Notice that for every $a \in \atoms$, the function $\const_a$ is supported by $\{a\}$, 
    and that all other basic functions equivariant. To finish the proof, 
    we need to show that all the combinators preserve finite supports:
    \[
        \begin{tabular}{cc}
            $\supp(f \circ g) \subseteq \supp(f) \cup \supp(g)$, &
            $\supp(f \times g) \subseteq \supp(f) \cup \supp(g)$, \\
            &\\
            \multicolumn{2}{c}{and $\supp(f + g) \subseteq \supp(f) \cup \supp(g)$}
        \end{tabular}
    \]
    This follows from Lemma~\ref{lem:fs-functions-preserve-supports}, because
    all $\circ$, $\times$, and $+$ are equivariant (higher-order) functions.
\end{proof}

\noindent
Before we proceed with the proof of Lemma~\ref{lem:su-plus-copy-fs}, we state a corollary 
of Lemma~\ref{lem:definable-so-fs}:
\begin{lemma}
\label{lem:su-fs}
    Every single-use function is finitely supported by the set of 
    all constants used in its derivation. In particular, if
    a single-use function can be constructed without the use of any
    $\const_a$, then it is equivariant. 
\end{lemma}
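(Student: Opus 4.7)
The plan is to obtain this essentially for free from the proof already given for Lemma~\ref{lem:definable-so-fs}, by simply tracking a more precise bound on the support through the induction. Since single-use functions form a subclass of definable functions, Lemma~\ref{lem:definable-so-fs} already guarantees that every single-use function is finitely supported; the only thing left to do is sharpen the bound.

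I would proceed by induction on the derivation of a single-use function $f$, with the inductive invariant being that $\supp(f) \subseteq C(f)$, where $C(f)$ denotes the set of atoms $a$ such that $\const_a$ appears somewhere in the derivation of $f$. For the base cases, all basic functions other than the constants — namely $\eqf$, $\idf$, the projections and coprojections, $\sym$, $\cosym$, $\assoc$, $\coassoc$, $\leftI$, $\merge$, and $\distr$ — are equivariant, so their supports are empty, matching the empty set of constants used. For $\const_a$, the support is $\{a\}$, which is exactly $C(\const_a)$.

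For the inductive step, I would reuse the three inclusions established in the proof of Lemma~\ref{lem:definable-so-fs}, namely $\supp(f \circ g) \subseteq \supp(f) \cup \supp(g)$ and the analogous statements for $f \times g$ and $f + g$. Since $C$ also distributes as a union over each of the three combinators ($C(f \circ g) = C(f) \cup C(g)$, and similarly for $\times$ and $+$), the induction hypothesis propagates: the support of the combined function is contained in the union of the supports of the subderivations, which by hypothesis is contained in the union of their constant sets, i.e.\ in $C$ of the whole derivation.

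The ``in particular'' clause is then immediate: if the derivation uses no $\const_a$, then $C(f) = \emptyset$, so $\supp(f) = \emptyset$ and $f$ is equivariant. I do not anticipate any obstacle, since all the real work — verifying that the three combinators preserve finite supports — has been done in Lemma~\ref{lem:definable-so-fs}; the present lemma is just a bookkeeping refinement of that argument.
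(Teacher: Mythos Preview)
Your proposal is correct and matches the paper's approach exactly: the paper does not even give a separate proof of this lemma, simply stating it as a corollary of Lemma~\ref{lem:definable-so-fs}, and your write-up spells out precisely the bookkeeping refinement that makes that corollary explicit.
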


\subsubsection{Finitely-supported $\Rightarrow$ Definable}
This section is dedicated to proving the remaining inclusion:
\begin{lemma}
\label{lem:fs-definable}
    If $X$ and $Y$ are polynomial orbit-finite, then every $f: X \fsto Y$ is definable. 
\end{lemma}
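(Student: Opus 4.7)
The plan is to reduce $f$ to a case where both domain and codomain are in the normal form of Lemma~\ref{lemma:pof-normal-form-su}, and then exhibit $f$ explicitly by a big case analysis over equality patterns, using $\copyf$ to duplicate atoms as many times as needed. First, I would pre- and post-compose $f$ with the single-use bijections $\tau_X$ and $\tau_Y^{-1}$ provided by Lemma~\ref{lemma:pof-normal-form-su}, so that it suffices to handle the case $X = \atoms^{k_1} + \ldots + \atoms^{k_n}$ and $Y = \atoms^{l_1} + \ldots + \atoms^{l_m}$. Restricting to each summand of $X$ and recombining with the $[f_1, \ldots, f_n]$ combinator of Example~\ref{ex:maybe-comb-su} further reduces the problem to defining a single function $g : \atoms^k \fsto \atoms^{l_1} + \ldots + \atoms^{l_m}$.

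Next, I would fix a finite support $\alpha = \{a_1, \ldots, a_s\}$ of $g$ and observe that $\atoms^k$ decomposes into finitely many $\alpha$-orbits $O_1, \ldots, O_p$, each determined by a complete equational pattern: which pairs of input coordinates coincide, and which coordinates equal which $a_j$. On each $O_i$, Lemma~\ref{lem:fs-functions-preserve-supports} forces $g(x)$ to be supported by $\alpha \cup \sigma(x)$, so every atom appearing in $g(x)$ is either some $a_j$ or one of the distinct coordinate values of $x$, at a position that depends only on the orbit $O_i$. In particular, on each $O_i$ the output $g(x)$ can be written as an explicit polynomial expression built from $\const_{a_j}$'s and projections of the coordinates of $x$.

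The construction then proceeds in two phases. In the first phase, I apply $\copyf$ many times to replace each coordinate by $M$ copies of itself, where $M$ is chosen large enough to cover the roughly $k^2 + k s$ equality tests needed to identify the $\alpha$-orbit plus the up-to $\max_j l_j$ copies of each coordinate required to assemble the output on any branch. In the second phase, I build a nested decision tree using $\eqf$ to compare each copy of a coordinate with either another coordinate or with a $\const_{a_j}$; the $\distr$ and $\distr^*$ combinators (Example~\ref{ex:assoc-star-su}) are used at every branching point to propagate the unused copies into the two sub-branches. Each leaf of the tree corresponds to a unique $\alpha$-orbit $O_i$; on that leaf I produce the output by choosing the correct summand of $Y$ via $\coproj_{j(i)}$ and assembling an $l_{j(i)}$-tuple using $\const_{a_j}$'s, single-use projections (Example~\ref{ex:projs-su}), associativity (Example~\ref{ex:assoc-star-su}) and shuffling (Example~\ref{ex:perm-su}). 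Finally, merging the leaves with the generalized $[\cdot,\ldots,\cdot]$ combinator of Example~\ref{ex:maybe-comb-su} produces the definable function $g$.

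The main obstacle is the bookkeeping of copies: every invocation of $\eqf$ consumes two atoms, so the decision tree must be sequenced in such a way that every leaf still has enough surviving copies of each coordinate to build its output. Generously over-provisioning copies at the start (by iterating $\copyf$ before doing anything else) avoids the issue at the cost of a wasteful construction; a tighter accounting would require interleaving $\copyf$ with $\distr$ along each branch. Either way, the argument is syntactic and relies only on the basic combinators plus $\copyf$, so everything produced is definable, completing the proof.
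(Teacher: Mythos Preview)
Your proposal is correct and follows essentially the same approach as the paper: reduce to normal form via $\tau_X,\tau_Y$, split $\atoms^k$ into finitely many $\alpha$-orbits, observe that on each orbit the output is given by a fixed pattern of projections and constants (the paper isolates this as Claim~\ref{claim:o-to-a-def}), and use $\copyf$ to supply enough copies for the equality tests that identify the orbit. The only organizational difference is that the paper factors the construction modularly---first a definable function $\orbitf_\alpha:\atoms^k\to 1+\cdots+1$ computing the orbit index (Claim~\ref{claim:orbit-definable}), then the per-orbit output functions, combined via the if-then-else combinator of Example~\ref{ex:if-else-su}---whereas you build one monolithic decision tree; the content is the same.
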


The proof is going to be similar to the proof from Section~\ref{sec:dra},
that every semantically equivariant transition function is also syntactically equivariant.
We start by defining $\alpha$-orbits, which are orbits of $\alpha$-permutations:
\begin{definition}
    Let $\alpha$ be a finite subset of atoms. For every element $x$, of a set with atoms $X$, we 
    define the $\alpha$-orbit of $x$ to be the following set:
    \[ \{\pi(x) \ | \  \pi \textrm{ is an $\alpha$-permutation} \}\]
\end{definition}

Similarly as it was the case for equivariant orbits,
every two $\alpha$-orbits of $X$ are either equal or disjoint. It follows that being in the same $\alpha$-orbit is an equivalence relation on $X$,
which means that every set with atoms $X$ is partitioned into its orbits. By \cite[Theorem~3.16]{bojanczyk2019slightly},
we know that every orbit-finite set is also $\alpha$-orbit-finite:
\begin{lemma}
\label{lem:of-so-aof}
    If a set $X$ is orbit-finite, then it has finitely many $\alpha$-orbits,
    for every $\alpha \subseteq_\textrm{fin} \atoms$. 
\end{lemma}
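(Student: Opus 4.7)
The plan is to reduce the statement to a concrete combinatorial count on $\atoms^{(k)}$, using the orbit-decomposition techniques already developed for equivariant orbits.

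First, I would split $X$ into its (equivariant) orbits. Since every $\alpha$-orbit is contained in some equivariant orbit (any $\alpha$-permutation is in particular an atom permutation), the $\alpha$-orbits of $X$ are exactly the disjoint union of the $\alpha$-orbits of each single orbit. Because $X$ is orbit-finite, it has finitely many orbits, so it suffices to prove the lemma for a single-orbit set.

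Next, I would apply Claim~\ref{claim:of-representation} to represent an arbitrary single-orbit set as ${\atoms^{(k)}}_{/\sim}$ for some $k$ and some equivariant equivalence relation $\sim$. Taking an equivariant quotient cannot increase the number of $\alpha$-orbits, by the same argument used in the proof of Lemma~\ref{lem:of-preserved}: if $\bar x$ and $\bar y$ lie in the same $\alpha$-orbit of $\atoms^{(k)}$, then $[\bar x]_\sim$ and $[\bar y]_\sim$ lie in the same $\alpha$-orbit of the quotient. Thus it is enough to bound the number of $\alpha$-orbits in $\atoms^{(k)}$ itself.

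Finally, I would count the $\alpha$-orbits of $\atoms^{(k)}$ directly. Given a tuple $\bar x = (x_1, \ldots, x_k) \in \atoms^{(k)}$, define its \emph{$\alpha$-type} to be the function $t_{\bar x} : \{1, \ldots, k\} \to \alpha \cup \{\star\}$ sending $i$ to $x_i$ if $x_i \in \alpha$, and to $\star$ otherwise. I claim that two tuples lie in the same $\alpha$-orbit if and only if they have the same $\alpha$-type. One direction is immediate, since $\alpha$-permutations fix $\alpha$ pointwise and map $\atoms \setminus \alpha$ into itself. For the other direction, given two tuples with the same $\alpha$-type, we build an $\alpha$-permutation by sending, for each position $i$ with $x_i \notin \alpha$, the atom $x_i$ to $y_i$; this is well-defined and extends to a bijection of $\atoms$ because all entries of a tuple in $\atoms^{(k)}$ are pairwise distinct, and because $\alpha$ is disjoint from $\{x_i : t_{\bar x}(i) = \star\}$ and from $\{y_i : t_{\bar y}(i) = \star\}$. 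Since there are at most $(|\alpha|+1)^k$ possible $\alpha$-types, $\atoms^{(k)}$ has finitely many $\alpha$-orbits, completing the proof.

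The only place that needs a little care is the last step, where one has to verify that matching $\alpha$-types really do induce an honest $\alpha$-permutation rather than just a partial matching; this is routine since we are working inside $\atoms^{(k)}$ where the non-$\alpha$ entries are automatically distinct. An alternative, slightly slicker route would be to invoke Lemma~\ref{lem:straight-reflect-repr} to obtain a surjective representation $\atoms^{(k_1)} + \cdots + \atoms^{(k_n)} \twoheadrightarrow X$ and observe that surjections do not increase the number of $\alpha$-orbits, reducing the lemma immediately to the combinatorial count above.
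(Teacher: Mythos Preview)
Your proof is correct. The paper itself does not prove this lemma; it simply cites \cite[Theorem~3.16]{bojanczyk2019slightly} as the source. Your argument---reduce to single orbits, then to $\atoms^{(k)}$ via Claim~\ref{claim:of-representation} and the observation that equivariant quotients cannot increase the number of $\alpha$-orbits, and finally count $\alpha$-types---is the standard self-contained proof and matches what one finds in the cited reference. One cosmetic remark: the bound $(|\alpha|+1)^k$ is an overcount, since in $\atoms^{(k)}$ no two positions can carry the same element of $\alpha$; but finiteness is all you need.
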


\noindent
We start the proof of Lemma~\ref{lem:fs-definable}, by proving it in a very special case:
\begin{claim}
\label{claim:o-to-a-def}
Let $\alpha$ be a finite subset of atoms, and let $O \subseteq \atoms^k$ be an  $\alpha$-orbit of $\atoms^k$.
For every $\alpha$-supported $f : O \fsto \atoms$,
there exists a definable function $f' : \atoms^k \to \atoms$, such that $f'$ restricted to $O$ is equal to $f$. 
\end{claim}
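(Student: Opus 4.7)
The plan is to show that on a single $\alpha$-orbit, an $\alpha$-supported function into $\atoms$ has very little freedom: it must either be a constant drawn from $\alpha$ or a projection onto one of the input coordinates. Once that structural dichotomy is established, extending $f$ to a definable $f'$ on all of $\atoms^k$ is essentially free, since both constants and projections are already definable on the full domain.

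Concretely, I would begin by fixing some $x_0 \in O$ and setting $a := f(x_0) \in \atoms$. By Lemma~\ref{lem:fs-functions-preserve-supports}, the value $f(x_0)$ is supported by $\alpha \cup \supp(x_0)$; since the least support of an atom inside $\atoms$ is the singleton containing that atom, this forces $a \in \alpha \cup \supp(x_0)$. So either (i) $a \in \alpha$, or (ii) $a = x_0[i]$ for some coordinate $i \in \{1,\dots,k\}$. Next I would propagate this information across $O$ using $\alpha$-equivariance: every $x \in O$ has the form $\pi(x_0)$ for some $\alpha$-permutation $\pi$, so $f(x) = f(\pi(x_0)) = \pi(a)$. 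In case (i), $\pi(a)=a$ and thus $f$ is the constant $a$ on $O$; in case (ii), $\pi(a) = \pi(x_0[i]) = \pi(x_0)[i] = x[i]$, so $f$ coincides on $O$ with the $i$-th coordinate projection.

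It then remains to realise these two alternatives as definable functions $\atoms^k \to \atoms$. In case (i), take $f' := \const_a \circ \constI$, where $\constI : \atoms^k \to 1$ is the iterated discard from Example~\ref{ex:constI-su} and $\const_a$ is a basic function; in case (ii), take $f' := \proj_i : \atoms^k \to \atoms$ constructed as in Example~\ref{ex:projs-su}. Both candidates are built without any use of $\copyf$, so they are in fact single-use (and hence, \emph{a fortiori}, definable). By construction, $f'$ restricted to $O$ equals $f$.

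The only potential obstacle is a harmless ambiguity: when $x_0$ has repeated coordinates, the index $i$ in case (ii) is not unique, and when $a \in \alpha \cap \supp(x_0)$ both cases apply simultaneously. This causes no problem, however, because the $\alpha$-orbit $O$ fixes the equality pattern of its elements, so any admissible choice of $i$ (or of case) yields an $f'$ that coincides with $f$ throughout $O$. Thus the heart of the argument is really the structural observation in the second paragraph — the rest is just bookkeeping against the basic function library of Definition~\ref{def:single-use-functions}.
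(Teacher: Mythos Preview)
Your proof is correct and follows essentially the same approach as the paper: pick a representative of the orbit, use support considerations to see that the value must lie in $\alpha$ or be one of the coordinates, and then propagate via $\alpha$-equivariance to identify $f$ with either a constant or a projection on all of $O$. Your treatment is slightly more explicit (writing $\const_a \circ \constI$ rather than just $\const_a$, and noting the harmless ambiguity when both cases overlap), but the argument is the same.
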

\begin{proof}
    Take some $\bar x$ from $O$. It follows from Lemma~\ref{lem:fs-functions-preserve-supports} that $f(\bar x)$ is either
    equal to some $a \in \alpha$ or to some $\bar{x}_i$. In the first we 
    define $f' := \const_a$, and in the second case we define $f' := \proj_i$. 
    Let us show that $f'$ matches with $f$ on $O$: 
    Take some $\bar y \in O$. Since $O$ is a single $\alpha$-orbit, 
    we know that $\bar y = \pi(\bar x)$, for some $\alpha$-permutation $\pi$. If $f(\bar x) = a$, for some $a \in \alpha$, 
    then:
    \[ f(\bar y) = f(\pi(\bar x)) = \pi(f(\bar x)) = \pi(a) = a = \const_a(\bar x)\tdot\]
    If $f(\bar x) = \bar x_i$, then:
    \[ f(\bar y) = f(\pi(\bar x)) = \pi(f(\bar x)) = \pi(\bar{x}_i) = \bar{y}_i = \proj_i(\bar y)\tdot \]
\end{proof}

In the next step, we would like to extend Claim~\ref{claim:o-to-a-def} to functions of type $O \to \atoms^k$.
Before we do that, we need to define a couple of helper functions:

\begin{example}
    \label{ex:copy-general-definable}
    The function $\copyf$ can be extended from $\atoms$ to all polynomial orbit-finite sets.
    We construct $\copyf : X \to X^2$ inductively on $X$:
    If $X = 1$, then $\copyf$ is equal to $\rightI$.  If $X = \atoms$, then $\copyf$ is a basic function.
    If $X = X_1 \times X_2$, then $\copyf$ can be defined as:
    \[ X_1 \times X_2 \longtransform{\copyf \times \copyf} X_1^2 \times X_2^2 \longtransform{\assoc^*\, \circ\, \shuffle\, \circ\, \assoc^*} (X_1 \times X_2)^2 \] 
    Finally, if $X = X_1 + X_2$, then $\copyf$ can be defined as:
    \[ X_1 + X_2 \longtransform{\copyf + \copyf} X_1^2 + X_2^2 \longtransform{[\proj_1 \times \proj_1, \proj_2 \times \proj_2]} (X_1 + X_2)^2 \]
\end{example}

\begin{example}
    \label{ex:pairing-definable}
    Definable functions are closed under the following combinator:
    \[ \infer{X \transform{\langle f, g \rangle} Y_1 \times Y_2}
             {X \transform{f} Y_1 & X \transform{g} Y_2} \]
    The function $\langle f, g \rangle$ can be constructed as:
    \[ X \transform{\copyf} X \times X \longtransform{f \times g} Y_1 \times Y_2\]
    Using a similar construction, we can generalize this combinator to any number of functions:
    \[ \langle f_1, \ldots, f_n \rangle : X \to Y_1 \times \ldots \times Y_n \]
\end{example}

\noindent
Using the $\langle f_1,  \ldots, f_k \rangle$ combinator, we can easily extend Claim~\ref{claim:o-to-a-def} to functions 
of type $O \fsto \atoms^k$. Now, let us notice that every function of type $O \fsto Y$ (where $Y$ can be 
any polynomial orbit-finite set) can be decomposed as:
\[O \transform{f'} \atoms^k \transform{\coproj_i} \atoms^{l_1} + \ldots + \atoms^{l_n} \transform{\tau_Y^{-1}} Y\]
Thanks to this observation, we can extend Claim~\ref{claim:o-to-a-def} to all functions of the type $O \fsto Y$.
In order to further extend it to $\atoms^k \fsto Y$, we need to show that the definable functions can calculate the $\alpha$-orbit of their argument.
(By Lemma~\ref{lem:of-so-aof}, we know that the set of $\alpha$-orbits of $\atoms^k$ is finite,
which means that it can be represented as $1 + \ldots + 1$.) Before that, we define two more helper functions:

\begin{example}
    \label{ex:const-su}
        For every $X$, and for every $x \in X$, the function $\const_x : 1 \to X$ is single-use.
        Let us start by defining $\const$ for $X = \atoms^k$. For a $\bar x \in \atoms^k$,
        we construct $\const_{\bar x}$ as:
        \[ 1 \transform{\rightI} 1 \times 1 \longtransform{\idf \times \rightI} 1 \times 1 \times 1 \transform{} \cdots \transform{} 1^k
             \longtransform{\const_{\bar{x}_1} \times \ldots \times \const_{\bar{x}_k}} \atoms^k \]
        Now, we use Lemma~\ref{lemma:pof-normal-form-su} to extend this construction to an arbitrary $X$:
        We take some $x \in X$, and we define $\bar x := \tau_X(x)$, then we construct $\const_x$ as:
        \[ 1 \transform{\const_{\bar{x}}} \atoms^{k_j} \transform{\coproj_j} \atoms^{k_1} + \ldots + \atoms^{k_n} \transform{\tau_X^{-1}} X\]
\end{example}

\begin{example}
\label{ex:finite-su}
        If $X$ is \emph{finite}, then every function $f: X \to Y$ is
        single-use. In order to see that, we first notice that since $X$ is finite, then $\tau_X$
        (from Lemma~\ref{lemma:pof-normal-form-su}) has to be of the form $\tau_X : X \to 1 + \ldots + 1$.
        It follows that we can construct $f$ as:
        \[ X \transform{\tau_X} 1 + \ldots + 1 \longtransform{\const_{f(x_1)} \times \ldots \times \const_{f(x_k)}} Y\comma \]
        where $x_i = \tau_X^{-1}(\coproj_i(1))$.
\end{example}

\noindent
We are now ready to show how to compute $\alpha$-orbits:
\begin{claim}
\label{claim:orbit-definable}
     The following function, which computes the $\alpha$-orbit of its input is definable for every $\alpha \subseteq_{\textrm{fin}} \atoms$:
     \[\orbitf_\alpha : \atoms^k \to 1 + \ldots 1\tdot\]
\end{claim}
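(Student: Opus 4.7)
The plan is to decompose $\orbitf_\alpha$ into three definable stages: (i) duplicate the input enough times and pair each copy with the appropriate comparison partner; (ii) run equality tests to extract the finite amount of information that determines the $\alpha$-orbit; and (iii) use that the domain after stage (ii) is isomorphic to a finite disjoint sum, so any function out of it is definable.

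For stage (i), enumerate $\alpha = \{a_1, \ldots, a_m\}$ and observe that two tuples $\bar x, \bar y \in \atoms^k$ belong to the same $\alpha$-orbit if and only if they agree on the $N := km + \binom{k}{2}$ equality tests ``$x_i = a_j$?'' (for $i \leq k$, $j \leq m$) and ``$x_{i_1} = x_{i_2}$?'' (for $i_1 < i_2 \leq k$). Using the generalised $\copyf : \atoms^k \to (\atoms^k)^N$ from Example~\ref{ex:copy-general-definable}, the pairing combinator $\langle \cdot, \ldots, \cdot \rangle$ from Example~\ref{ex:pairing-definable}, the projections $\proj_i$, the shuffle combinators of Example~\ref{ex:perm-su}, and the constants $\const_{a_j}$ from Example~\ref{ex:const-su}, I can build a definable function $\atoms^k \to (\atoms \times \atoms)^N$ that produces exactly the $N$ pairs to be tested.

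For stage (ii), apply $\eqf$ componentwise (using $\eqf \times \eqf \times \ldots \times \eqf$ together with the associativity isomorphisms) to obtain a definable function into $B := (1+1)^N$. By Lemma~\ref{lemma:pof-normal-form-su}, there is a single-use bijection $\tau_B : B \suto 1 + 1 + \ldots + 1$ with $2^N$ summands; since $B$ is finite this makes it a finite polynomial orbit-finite set.

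For stage (iii), let $r$ denote the number of $\alpha$-orbits of $\atoms^k$, which is finite by Lemma~\ref{lem:of-so-aof}. Every boolean pattern in $B$ either corresponds to a unique $\alpha$-orbit or is logically inconsistent; define a set-theoretic function $B \to 1 + \ldots + 1$ (with $r$ summands) by sending consistent patterns to their orbit index and inconsistent patterns to any fixed orbit. By Example~\ref{ex:finite-su}, this function is single-use. Composing the three stages yields the desired definable $\orbitf_\alpha$. The main conceptual obstacle is identifying a finite and definable intermediate witness for the orbit; once the equality pattern is recognised as such a witness, everything else is routine plumbing with $\copyf$ and the combinators already established.
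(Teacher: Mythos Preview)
Your proof is correct and takes essentially the same approach as the paper: both recognise that the $\alpha$-orbit of $\bar x$ is determined by the finite pattern of equalities $x_i = x_j$ and $x_i = a$ (for $a \in \alpha$), use $\copyf$ via the pairing combinator $\langle \cdot,\ldots,\cdot\rangle$ to collect all these tests into an element of $(1+1)^N$, and then invoke Example~\ref{ex:finite-su} to map this finite set onto the orbit indices. The only cosmetic difference is that the paper packages each test as a single function $\mathtt{cmp}_{i,j}$ or $\mathtt{cmp}_{i,a}$ before pairing, whereas you first produce the $N$ pairs in $(\atoms\times\atoms)^N$ and then apply $\eqf$ componentwise; the two organisations are interchangeable.
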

\begin{proof}
    The orbit of every $\bar x \in \atoms^k$  depends only on whether $\bar{x}_i = \bar{x}_j$ for every $i, j \in \{1, \ldots, k\}$,
    and on whether $\bar{x}_i = a$ for every $a \in \alpha$ and $i \in \{1, \ldots, k\}$. 
    The first type of check can be performed by $\mathtt{cmp}_{i,j}$, defined as:
    \[\atoms^k \transform{\copyf} \atoms^k \times \atoms^k \transform{\proj_i \times \proj_j} \atoms \times \atoms \transform{\eqf} (1 + 1)\]
    The second type of check can be performed by $\mathtt{cmp}_{i, a}$, defined as:
    \[\atoms^k \transform{\rightI} \atoms^k \times 1 \longtransform{\proj_i \times \const_a} \atoms \times \atoms \transform{\eqf} (1 + 1)\]
    Let $c : (1 + 1)^{m} \to 1 + \ldots + 1$ be the function that consolidates the results of all those checks, and computes the $\alpha$-orbit.
    Thanks to Example~\ref{ex:finite-su}, we know that $c$ is definable. This means that we can construct 
    $\orbitf_\alpha$ as:
    \[ \atoms^k \longtransform{\langle \mathtt{cmp}_{1, 1}, \ldots, \mathtt{cmp}_{k, a} \rangle} (1 + 1)^{m} \transform{c} 1 + \ldots + 1 \]
\end{proof}

\noindent
In order to extend Claim~\ref{claim:o-to-a-def} from $O \to Y$ to $\atoms^k \to Y$, we combine 
Claim~\ref{claim:orbit-definable} with the following combinator:

\begin{example}
    \label{ex:if-else-su}
        The class of single-use functions is closed under the following if-then-else combinator:
        \[ \infer{(1 + 1) \times X \transform{f \orf g} Y}
                 {X \transform{f} Y & X \transform{g} Y} \]
        The combinator is defined as follows:
        \[ (1 + 1) \times X \longtransform{\rightDistr} X + X \transform{[f, g]} Y \]
        Using a similar technique, we can generalize the combinator to take more than two functions:
        \[ (f_1 \orf \ldots \orf f_n) : (1 + \ldots + 1) \times X \to Y\]
\end{example}

Finally, we use the $[f_1, \ldots, f_n ]$ combinator and $\tau_X$ function,
to extend Claim~\ref{claim:orbit-definable} to all functions of the type $X \to Y$.
This finishes the proof of Lemma~\ref{lem:fs-definable}.

\subsubsection{$k$-fold use functions}
In this section we define and briefly discuss the class of \emph{$k$-fold-use functions}. 
It is situated between single-use functions which can use only one copy of their input, 
and finitely supported functions which can use any number of copies.
\begin{definition}
\label{def:k-fold-use-functions}
    For every $k \in \nat$, we say that a function is $k$-fold-use,
    if it can be constructed as a composition of the following form:
    \[ X \transform{\copyf^k} X^k \stackrel{f'}{\longsuto} Y\comma \]
    where $f'$ is some single-use function. We denote the set of all $k$-fold use functions
    between polynomial orbit-finite $X$ and $Y$ as $X \suto_k Y$. 
\end{definition}

\noindent It is easy to see that: 
\[ (X \suto Y) = (X \suto_1 Y) \subseteq (X \suto_2 Y) \subseteq (X \suto_3 Y) \subseteq \ldots\]
It is also not hard to see that this hierarchy is strict:
\begin{example}
\label{ex:2u-notsu}
Consider the following function $f \in \atoms \fsto \{\Yes, \No\}$, which is supported by $\{4, 7\}$:
\[
    f(x) = \begin{cases}
                \Yes & \textrm{if } x \in \{4, 7\}\\
                \No & \textrm{otherwise}
    \end{cases}
\]
\noindent
This function clearly belongs to $\atoms \suto_2 \{\Yes, \No\}$, but not to
$\atoms \suto \{\Yes, \No\}$.  
(The formal proof that $f \not \in \atoms \suto \{\Yes, \No\}$ follows from the decision-tree
representation of single-use functions, presented later in this chapter.) 
\end{example}

\noindent This example can easily be generalized, to show that:
\[ (X \suto_1 Y) \subsetneq (X \suto_2 Y) \subsetneq (X \suto_3 Y) \subsetneq \ldots \]
In the limit, this sequence reaches $X \fsto Y$:
\begin{lemma}
\label{lem:fs-k-fold}
For every polynomial orbit-finite $X$ and $Y$:
\[ X \fsto Y \ = \ \bigcup_{k \in \nat} X \suto_k Y \]
\end{lemma}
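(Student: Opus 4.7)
The lemma states an equality of sets, so the plan is to prove two inclusions. The direction $(\supseteq)$ is straightforward: an arbitrary $k$-fold-use function factors as $\copyf^k$ followed by a single-use function. By Example~\ref{ex:copy-general-definable}, the iterated copy $\copyf^k : X \to X^k$ is built from the combinators without using any $\const_a$, hence is equivariant by Lemma~\ref{lem:su-fs}. The single-use factor is finitely supported by the same lemma, and finitely supported functions are closed under composition, so every element of $X \suto_k Y$ lies in $X \fsto Y$.

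For the direction $(\subseteq)$, I would start from Lemma~\ref{lem:fs-definable}, which gives every finitely supported $f : X \fsto Y$ as a definable function, and then induct on the derivation, showing that every definable function is $k$-fold-use for \emph{some} $k$. The base cases are immediate: each single-use basic function is already $1$-fold-use (taking $\copyf^1$ to be the identity), and $\copyf : \atoms \to \atoms^2$ is $2$-fold-use with $\copyf^2 = \copyf$ followed by $\idf$. The inductive cases are the three combinators.

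The main obstacle is the composition case, which requires commuting copying past a single-use function so that all copying can be front-loaded. Given $f = \tilde f \circ \copyf_X^k$ and $g = \tilde g \circ \copyf_Y^l$ with $\tilde f, \tilde g$ single-use, the key identity is
\[
\copyf_Y^l \circ \tilde f \;=\; \tilde f^{\,l} \circ \copyf_{X^k}^l,
\]
where $\tilde f^{\,l}$ denotes $l$ parallel copies of $\tilde f$; both sides map $\vec x \in X^k$ to $l$ copies of $\tilde f(\vec x)$. The right-hand side is single-use since $\tilde f^{\,l}$ is a product of single-use functions. Composing on the left with $\tilde g$ and on the right with $\copyf_X^k$, and using the single-use reassociation $X^{kl} \suto (X^k)^l$ from Example~\ref{ex:assoc-star-su}, I obtain $g \circ f \in X \suto_{kl} Z$.

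The product and sum cases go through with $m = \max(k, l)$. For products, I would apply $\copyf^m$ on $X_1 \times X_2$, reshuffle to $X_1^m \times X_2^m$ via Example~\ref{ex:perm-su}, discard surplus coordinates on each side with $\constI$ (Example~\ref{ex:constI-su}) to land in $X_1^k \times X_2^l$, and finally apply $\tilde f \times \tilde g$. For sums, I would apply $\copyf^m$ on $X_1 + X_2$ and then $\distr^*$ to decompose $(X_1 + X_2)^m$ as $\sum_{\vec i \in \{1,2\}^m} X_{i_1} \times \cdots \times X_{i_m}$; the image of $\copyf^m$ lies entirely in the ``constant'' summands $X_1^m$ and $X_2^m$, on which I run $\tilde f$ or $\tilde g$ after discarding extras, while on the unreachable mixed summands I send everything to any default element via $\constI$ and $\const_{y_0}$ (Example~\ref{ex:const-su}). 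All pieces combine through the sum combinator of Example~\ref{ex:maybe-comb-su} into a single-use function, giving $f + g \in (X_1 + X_2) \suto_m (Y_1 + Y_2)$ and completing the induction.
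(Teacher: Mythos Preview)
Your proof is correct and follows essentially the same approach as the paper: the $(\supseteq)$ direction via finite supports of the two factors, and the $(\subseteq)$ direction via Lemma~\ref{lem:fs-definable} followed by induction on the derivation, with the composition case handled by front-loading all copying as $\copyf^{kl}$ and replacing $\copyf_Y^l \circ \tilde f$ by $\tilde f^{\,l}$ after reassociation---this is exactly the paper's argument. You spell out the $\times$ and $+$ cases in more detail than the paper (which dismisses them as ``not hard to see''), and your treatment of $+$ via $\distr^*$ and default values on unreachable summands is correct, if slightly heavier than necessary.

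One small citation issue: in the $(\supseteq)$ direction you invoke Lemma~\ref{lem:su-fs} to conclude that $\copyf^k$ is equivariant, but that lemma is stated for \emph{single-use} functions, and $\copyf^k$ is not single-use (it uses $\copyf_\atoms$). The argument you want is Lemma~\ref{lem:definable-so-fs}, whose proof shows that all combinators preserve supports and that $\copyf$ itself is equivariant; alternatively, just observe directly that $\copyf_X$ commutes with every atom permutation.
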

\begin{proof}
    The proof that every $k$-fold use function is finitely supported is almost the same as 
    the proof of Lemma~\ref{lem:su-fs}. Now, let us prove the other inclusion:
    all the basic functions from Definition~\ref{def:single-use-functions} are $1$-fold,
    and $\copyf$ is easily seen to be $2$-fold, so thanks to Lemma~\ref{lem:fs-definable},
    it suffices to show that all three combinators $+$, $\times$, and $\circ$ preserve belonging to $\bigcup_{k \in \nat} X \suto_k Y$:\\

    We start with $\circ$. Let us show that we can present the following composition as an $m$-fold use function, for some $m \in \nat$:
	\[ X \transform{\copyf^k} X^k \sutransform{f} Y \transform{\copyf^l} Y^l \sutransform{g} Z \]
    We do this in the following way, with $m = k \cdot l$:
	\[ X \transform{\copyf^{k \cdot l}} X^{k \cdot l} \sutransform{\assoc^*} \left({(X^k)}^l\right) \sutransform{f \times \ldots \times f} Y^l \sutransform{g} Z \]

    For $\times$ and $+$, it is not hard to see that if $f$ is a $k$-fold-use function, and $l$ is a 
    $l$-fold-use function, then both $f+l$ and $f \times l$ are $\max(k,l)$-fold-use functions.
\end{proof}


\subsection{Single-use decision trees}
\label{subsec:single-use-decision-trees}
In this section we prove the most important property of single-use functions:
\begin{theorem}
\label{thm:su-orbit-finite}
    The set of all single-use functions $X \suto Y$ is orbit-finite, for all polynomial orbit-finite $X$ and $Y$. 
\end{theorem}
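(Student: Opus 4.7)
The plan is to establish a canonical \emph{decision tree} representation for single-use functions of bounded complexity, and to deduce orbit-finiteness from the finiteness of the tree shapes together with the finitely many atom constants that can appear in each tree.

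First, by Lemma~\ref{lemma:pof-normal-form-su} and the fact that $\tau_X, \tau_Y$ and their inverses are single-use, precomposing and postcomposing is a single-use bijection between $X \suto Y$ and $\atoms^{k_1}+\ldots+\atoms^{k_n}\, \suto\, \atoms^{l_1}+\ldots+\atoms^{l_m}$, so it suffices to prove the theorem in the normal form. On each input summand $\atoms^{k_i}$, a single-use function may only inspect its $k_i$ atoms through $\eqf$-tests, where each test consumes (at most) two atoms, and finally produces an output tuple in some summand $\atoms^{l_j}$ whose entries are drawn from the atoms that have not yet been consumed, together with constants supplied by $\const_a$.

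Next I would introduce single-use decision trees. A tree for $\atoms^k \suto Y$ has internal nodes that are equality tests either between two distinct argument positions, or between an argument position and a constant $a \in \atoms$; each test consumes the positions it inspects, so no branch performs more than $k$ tests, and the tree has depth at most $k$ and hence a bounded number of leaves. Each leaf is labelled by an output descriptor that chooses a summand $\atoms^{l_j}$ of $Y$ and specifies, coordinate by coordinate, either a surviving input position or an atom constant. A tree for the disjoint-sum normal form of $X$ is a finite list of trees, one per summand. The main lemma I would prove is that every single-use function $f : X \suto Y$ is the semantics of some such decision tree. This is an induction on the derivation of $f$ from Definition~\ref{def:single-use-functions}: each basic function is a trivial one-leaf tree; $+$ and $\times$ are handled by concatenating or pairing trees and then applying distributivity to flatten; and for $\circ$ one plugs the tree for $g$ into each leaf of the tree for $f$ and then renormalises by pushing $g$'s equality tests past the leaves of $f$, using that all surviving input atoms are still available. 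The depth stays bounded by the total dimension of $X$, since every test strictly shrinks the set of unconsumed positions.

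For orbit-finiteness, observe that (i) the set of possible tree shapes, ignoring the particular constants $a \in \atoms$ used in the tests and in the leaves, is a finite combinatorial set, and (ii) the action of $\atoms$-permutations on a decision tree simply permutes the constants occurring in its test and output labels. Thus a tree with $p$ constant slots determines an equivariant map $\atoms^p \eqto (X \suto Y)$, and the image of this map is either a single orbit or a quotient of one by a finite equivalence relation on the slots; either way it is orbit-finite. Summing over the finitely many shapes yields that $X \suto Y$ is orbit-finite.

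The main obstacle is the normal-form lemma, specifically the composition case: turning a tree-of-trees into a single decision tree requires commuting the outer equality tests with the inner ones while preserving the single-use discipline. I would handle this by induction on the depth of the outer tree and by case analysis on whether the tested position was still present after the inner computation or was already consumed (in which case the test is redundant because the identity of the atom is fixed by what preceded it). Everything else is bookkeeping.
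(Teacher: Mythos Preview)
Your proposal is correct and follows essentially the same route as the paper: introduce single-use decision trees, prove by induction on the derivation that every single-use function is represented by one, bound the depth by the input dimension, and conclude orbit-finiteness from the finitely many tree shapes each carrying a bounded number of atom-constant slots. One minor simplification: the composition case requires no ``commuting'' of tests---you just place the tree for $g$ beneath each leaf of the tree for $f$, substitute $g$'s input variables by the entries (variables or constants) produced at that leaf, and resolve any query that has become constant-versus-constant; the single-use restriction on the combined tree is then automatic.
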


We prove Theorem~\ref{thm:su-orbit-finite} by introducing the \emph{single-use decision tree}
representation of single-use functions -- for every polynomial $X$ and $Y$, we define the set 
$\Trees(X, Y)$, such that there is an equivariant surjection\footnote{This tree representation
is not bijective -- many different trees can describe the same function.}:
\[\Trees(X, Y) \eqto (X \suto Y)\]
Then, we show that the set $\Trees(X, Y)$ is polynomial orbit-finite for every $X$ and $Y$.
this is enough to prove Theorem~\ref{thm:su-orbit-finite}, because by \cite[Lemma~3.24]{bojanczyk2019slightly}
images of orbit-finite sets under equivariant functions remain orbit-finite.

\begin{definition}
\label{def:su-tree}
We start defining the \emph{single-use decision trees} with the most
interesting case, i.e. trees for functions of the following type:
\[ \atoms^k \suto \atoms^{l_1} + \ldots + \atoms^{l_n}\tdot\]
A single-use decision tree of this type is a tree whose nodes contain \emph{queries}
and whose leaves contain \emph{constructors}. Each constructor is of the following form:
\[\coproj_i(v_1, \ldots, v_{l_i}) \comma\]
where each $v_j$ is either an input variable $\mathtt{x_v}$ or an atomic constant $a \in \atoms$. Each query is either 
of the form $\mathtt{x_i = x_j}$ or $\mathtt{x_i = a}$, where $i,j \in \{1, \ldots, k\}$.
Moreover, every single-use tree has to satisfy
the \emph{single-use restriction}, which says that on every path from the root to a leaf,
each variable $\mathtt{x_i}$ may appear at most once (in queries or constructors). Here is an example 
of a single-use decision tree of type $\atoms^2 \suto \atoms + \atoms^2$:
\smallpicc{su-tree-ex}
Each such tree naturally represents a function $f_T : \atoms^k \suto \atoms^{l_1} + \ldots + \atoms^{l_n}$.
It is not hard to see that the single-use restriction on $T$ guarantees that $f_T$ is a single-use function.\\ 

The construction for trees of more general types is standard. The single-use decision 
trees of the type $\atoms^{k_1} + \ldots + \atoms^{k_n} \suto \atoms^{l_1} + \ldots + \atoms^{l_m}$
\noindent
are simply $n$-tuples of trees $T = (T_1, \ldots, T_n)$ such that 
\[T_i \in \Trees(\atoms^{k_1}, \atoms^{l_1} + \ldots + \atoms^{l_m}) \tdot\]
In this case, function $f_T$ is defined using the combinator Example~\ref{ex:maybe-comb-su}:
\[ f_T : \atoms^{k_1} + \ldots + \atoms^{k_n} \sutransform{[f_{T_1}, \ldots, f_{T_n}]}  \atoms^{l_1} + \ldots + \atoms^{l_n}\]
Finally, in order to define single-use decision trees of type $X \suto Y$, 
where $X$ and $Y$ are arbitrary polynomial orbit-finite sets,
we use the isomorphisms from  Lemma~\ref{lemma:pof-normal-form-su}:
\[
    \begin{tabular}{cc}
        $\tau_X : X \to \atoms^{k_1} + \ldots + \atoms^{k_n}$ & 
        $\tau_Y : Y \to \atoms^{l_1} + \ldots + \atoms^{l_m}$. 
    \end{tabular}
\]
A single-use decision tree $T$ representing a function $X \suto Y$ is a tree:
\[ T' \in \Trees(\atoms^{k_1} + \ldots + \atoms^{l_n}, \atoms^{k_1} + \ldots + \atoms^{l_m})\]
It represents the following function:
\[f_T : X \sutransform{\tau_X} \atoms^{k_1} + \ldots + \atoms^{k_n} \sutransform{f_{T'}} \atoms^{l_1} + \ldots + \atoms^{l_k}
\sutransform{\tau_Y^{-1}} Y\]
\end{definition}
It is not hard to see that the mapping $T \mapsto f_T$ is equivariant, i.e. $f_{\pi(T)} = \pi(f_T)$ for all atom permutations.
\todo{Update next sections for the new definition}

\subsubsection{Single-use function $\Rightarrow$ Single-use decision trees}
In this section we show that the single-use decision tree representation is injective:
\begin{lemma}
\label{lem:su-functions-su-trees}
    Every function from $X \suto Y$ is represented by some $T \in \Trees(X, Y)$. 
\end{lemma}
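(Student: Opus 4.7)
The plan is to prove Lemma~\ref{lem:su-functions-su-trees} by induction on the derivation of $f$ as a single-use function (Definition~\ref{def:single-use-functions}). It suffices to show two things: (i) every basic function has a representing tree; and (ii) the class of tree-representable single-use functions is closed under the three combinators $\circ$, $+$, and $\times$.

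The base cases are routine. For each basic function I would exhibit an explicit tree: $\eqf$ is a single query $\mathtt{x_1 = x_2}$ whose two leaves emit $\coproj_1()$ and $\coproj_2()$; the function $\const_a$ is a one-leaf tree whose constructor is the atomic constant $a$; each of $\idf$, $\proj_i$, $\sym$, $\assoc$, $\leftI$, $\coproj_i$, $\cosym$, $\coassoc$ is represented by a single-leaf tree whose constructor permutes or reassociates the input variables; $\merge : X + X \suto X$ is the tuple of two single-leaf copies of the identity tree for $X$; and $\distr$ is produced by unfolding via Lemma~\ref{lemma:pof-normal-form-su}. Each of these uses every input variable at most once, so the single-use restriction is immediate. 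Closure under $+$ is just concatenation of the tuples of component trees, after renumbering the coprojection indices at each leaf to point into the combined codomain. Closure under $\times$ reduces, using the normal form of Lemma~\ref{lemma:pof-normal-form-su}, to building a tree for each component $\atoms^{k_i+k'_j} \suto \atoms^{l_p+l'_q}$; for this component I would place the tree $T_{f,i}$ first (using the first block of $k_i$ variables), then paste $T_{g,j}$ at every leaf (using the disjoint block of $k'_j$ variables), and finally pair up the two leaf-constructors under a single $\coproj_{(p,q)}$. Because the two blocks of variables are disjoint, the single-use restriction is preserved automatically.

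The main obstacle is closure under composition $g \circ f$. Given trees $T_f$ for $f : X \suto Y$ and $T_g = (T_{g,1}, \ldots, T_{g,n})$ for $g : Y \suto Z$, I would build a tree for $g \circ f$ by \emph{substitution at the leaves of $T_f$}. A leaf of $T_f$ carries a constructor $\coproj_i(v_1, \ldots, v_{l_i})$ where each $v_j$ is either an input variable or an atom; I would replace that leaf by a copy of $T_{g,i}$ in which each of its input variables $\mathtt{y_j}$ is substituted by $v_j$. After substitution, a query $\mathtt{y_p = y_q}$ (resp.\ $\mathtt{y_p = a}$) becomes either a legal query of the right form, or an equality between two atoms whose outcome is determined at compile time, in which case the tree is collapsed to the corresponding branch.

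The delicate point is to verify that the resulting tree satisfies the single-use restriction. Fix any root-to-leaf path in the combined tree. Consider an input variable $\mathtt{x_v}$ of $X$. If $\mathtt{x_v}$ appears in a query on the $T_f$-portion of the path, then by single-use of $T_f$ it cannot also appear in the leaf constructor $\coproj_i(v_1, \ldots, v_{l_i})$; hence it is never substituted into the pasted copy of $T_{g,i}$ and appears exactly once overall. Otherwise $\mathtt{x_v}$ may occur as some $v_j$ in the leaf constructor (by single-use of $T_f$, at most one such $j$), and after substitution it inherits the at-most-one occurrences of $\mathtt{y_j}$ along the path inside $T_{g,i}$, by single-use of $T_g$. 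In either case $\mathtt{x_v}$ appears at most once on the path, so the combined tree is single-use. It is straightforward to check that the associated function equals $g \circ f$, completing the induction.
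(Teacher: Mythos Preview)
Your proposal is correct and follows essentially the same approach as the paper: induction on the derivation, with the key substitution-at-leaves construction for composition (including the step of resolving queries that become constant after substitution), tuple concatenation with coprojection renumbering for $+$, and tree pasting on disjoint variable blocks for $\times$. Your explicit verification that the composed tree satisfies the single-use restriction is in fact more detailed than the paper's, which dismisses this point as ``not hard to see.''
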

We prove the lemma, by showing that the class of functions recognized by single-use decision trees
is closed under $\circ$, $+$, and $\times$; and contains all basic functions from Definition~\ref{def:single-use-functions}.\\

We start with the most interesting case, which is showing that single-use decision trees are closed under compositions.
We first show it for trees on tuples:
\begin{claim}
\label{claim:su-trees-compose-tup}
        If $g : \atoms^k \suto \atoms^l$ and $f : \atoms^l \suto \atoms^m$ are represented 
        by single-use decision trees, then so is $(f \circ g) : \atoms^k \suto \atoms^l$.
\end{claim}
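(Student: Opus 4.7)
The plan is to construct $T_{f \circ g}$ by \emph{substituting} the tree $T_f$ into every leaf of $T_g$. A leaf of $T_g$ carries a single constructor $(v_1, \ldots, v_l)$ whose entries $v_i$ are each either an atomic constant or an input variable $\mathtt{x_{p(i)}}$ of $g$ with $p(i) \in \{1, \ldots, k\}$; semantically it describes the $\atoms^l$-value $g$ produces along that path. At such a leaf I would paste in a fresh copy of $T_f$, but with each occurrence of the $f$-variable $\mathtt{y_i}$ replaced by $v_i$. The resulting tree mentions only variables $\mathtt{x_1}, \ldots, \mathtt{x_k}$, so it is a candidate element of $\Trees(\atoms^k, \atoms^m)$, and by construction it first decides which leaf of $T_g$ is reached and then runs $T_f$ on the value that leaf produces, so $f_{T_{f \circ g}} = f \circ g$ is immediate.

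The substitution yields queries whose syntactic form is not directly among the allowed shapes $\mathtt{x_i = x_j}$ or $\mathtt{x_i = a}$, so I must normalize them. A query $\mathtt{y_i = a}$ of $T_f$ becomes $v_i = a$: if $v_i$ is a constant $a'$, I prune the node to its ``yes'' or ``no'' child according as $a = a'$; otherwise I keep the query as $\mathtt{x_{p(i)} = a}$. A query $\mathtt{y_i = y_j}$ becomes $v_i = v_j$, which I prune when both sides are constants, rewrite into an allowed shape when exactly one side is a variable, and keep otherwise. These rewrites are local and preserve the function computed by the path.

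The main obstacle, and essentially the only content of the argument, is verifying that the single-use restriction survives the substitution. Fix a root-to-leaf path $\pi$ in the candidate tree; it splits as a path $\pi_g$ in $T_g$ reaching some leaf with constructor $(v_1, \ldots, v_l)$, followed by a substituted path $\pi_f$ coming from $T_f$. Single-use of $T_g$ tells me that every $\mathtt{x_j}$ appears at most once across the queries of $\pi_g$ together with the constructor, and in particular the variables $\mathtt{x_{p(i)}}$ that occur among the $v_i$ are pairwise distinct and disjoint from those queried along $\pi_g$. Single-use of $T_f$ tells me that each $\mathtt{y_i}$ occurs at most once on $\pi_f$, so after substitution each $v_i$ appears at most once in the substituted $\pi_f$. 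Combining these two facts, each $\mathtt{x_j}$ is used at most once along the concatenation $\pi_g \cdot \pi_f$, which is exactly the single-use restriction, completing the proof.
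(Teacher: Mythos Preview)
Your proposal is correct and follows the same construction as the paper: substitute a copy of $T_f$ at each leaf of $T_g$, replace $f$'s variables by the leaf's constructor entries, and prune the resulting constant--constant queries. The paper presents this via a worked example and simply asserts that ``it is not hard to see that the construction preserves the single-use restriction''; your explicit verification of this point (splitting a path as $\pi_g \cdot \pi_f$ and using that the constructor variables are pairwise distinct and disjoint from the queried variables along $\pi_g$) fills in exactly that gap.
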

\begin{proof}
        Let us take two decision trees $F$, $G$ that represent $f$ and $g$, and let us show 
        how to compose them, obtaining $H$ that represent $g \circ f$. Here are 
        some example $F$ and $G$ (for clarity we denote the variables in $F$ as $x_i$,
        and the variables in $G$ as $y_i$):
        \picc{su-tuples-compose-1} \noindent
        We start the construction by placing one copy of $G$ under each leaf of $F$:
        \picc{su-tuples-compose-2} \noindent
        Then we replace every $y_i$ in each copy of $G$, with the $v_i$ from the leaf of $F$
        (each $v_i$ is either some $x_j$ or some $a \in \atoms$):
        \picc{su-tuples-compose-3}\noindent
        Finally, we resolve all the queries that can be resolved (and forget about $F$'s leaves):
        \picc{su-tuples-compose-4}\noindent
        The constructed tree recognizes $g \circ f$ by design. It is also not hard
        to see that the construction preserves the single-use restriction.
\end{proof}
It is not hard to see that Claim~\ref{claim:su-trees-compose-tup} can be first generalized
for trees $\Trees(\atoms^{k_1} + \ldots + \atoms^{k_n}, \atoms^{l_1} + \ldots + \atoms^{l_m})\comma$
and then to all single-use decision trees.\\

We continue the proof of Lemma~\ref{lem:su-functions-su-trees} by showing that functions recognized by single-use decision trees 
are closed under $+$ and $\times$:
\begin{lemma}
\label{ref:su-trees-plus}
    If $f : X_1 \to Y_1$ and $g : X_2 \to Y_2$ are represented by single-use decision trees, then 
    so is $(f + g) : X_1 + X_2 \to Y_1 + Y_2$. 
\end{lemma}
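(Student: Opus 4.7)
The plan is to reduce the statement to the setting of normal-form polynomial orbit-finite sets, as defined in Lemma~\ref{lemma:pof-normal-form-su}, and then construct the target tree by concatenating and re-indexing the tree lists of $f$ and $g$. By Definition~\ref{def:su-tree}, a tree representing $f : X_1 \suto Y_1$ is really a tree $F'$ between the normal forms $\tau_{X_1}(X_1) = \atoms^{k_1^{(1)}} + \ldots + \atoms^{k_{n_1}^{(1)}}$ and $\tau_{Y_1}(Y_1) = \atoms^{l_1^{(1)}} + \ldots + \atoms^{l_{m_1}^{(1)}}$, and hence a tuple $(F_1, \ldots, F_{n_1})$ with $F_i \in \Trees(\atoms^{k_i^{(1)}}, \atoms^{l_1^{(1)}} + \ldots + \atoms^{l_{m_1}^{(1)}})$; similarly $g$ gives a tuple $(G_1, \ldots, G_{n_2})$. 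So it suffices to construct a tuple of trees that represents the ``concatenated'' function between the flattened normal forms of $X_1 + X_2$ and $Y_1 + Y_2$.

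For the concrete construction, I would take the tuple of $n_1 + n_2$ trees
\[ (F_1, \ldots, F_{n_1},\; G_1^{+m_1}, \ldots, G_{n_2}^{+m_1}), \]
where $G_j^{+m_1}$ denotes $G_j$ with every constructor $\coproj_i(v_1, \ldots, v_{l_i^{(2)}})$ at its leaves replaced by $\coproj_{m_1 + i}(v_1, \ldots, v_{l_i^{(2)}})$. The $F_i$'s are kept unchanged: their leaf constructors $\coproj_i$ already target the first $m_1$ summands, which in the larger coproduct coincide with the $Y_1$-part. Semantic correctness follows by unfolding the definition of $f_T$ for a tuple of trees, since $[f_{T_1}, \ldots, f_{T_{n_1+n_2}}]$ dispatches to the correct sub-tree exactly as $f + g$ does, and the re-indexing of $G$'s constructors implements precisely the $\coproj_2 : Y_2 \hookrightarrow Y_1 + Y_2$ part.

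The main thing to watch is the bookkeeping with the normalization isomorphisms: $\tau_{X_1+X_2}$ is built in Lemma~\ref{lemma:pof-normal-form-su} as $\coassoc^* \circ (\tau_{X_1} + \tau_{X_2})$, so the tree we produce for $f+g$ must commute with this specific flattening rather than with $\tau_{X_1} + \tau_{X_2}$ alone (and likewise on the $Y$ side). This discrepancy is a pure re-association at the level of $\atoms^{k}$-summands and does not touch any individual tree; I would handle it by observing that re-bracketing the coproduct at the tuple-of-trees level is a syntactic operation that preserves single-use trees, or equivalently by noting that $\coassoc^*$ between normal forms is represented by the trivial ``identity'' tree on each summand and then appealing to closure under composition established in Claim~\ref{claim:su-trees-compose-tup}. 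The single-use restriction is preserved trivially, since no variable occurs in more than one tree of the combined tuple and each individual tree already satisfies the restriction.
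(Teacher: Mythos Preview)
Your proposal is correct and follows essentially the same approach as the paper: concatenate the tuples of trees and shift the $\coproj$ indices in the leaves of the $G$-trees. You are in fact more careful than the paper on two points: you shift by $m_1$ (the number of summands in the normal form of $Y_1$), which is the right quantity, whereas the paper writes the shift as $n$ (the number of $F$-trees, i.e.\ the number of summands in $X_1$), which is a minor slip; and you explicitly address the compatibility with the normalization isomorphisms $\tau_{X_1+X_2}$ and $\tau_{Y_1+Y_2}$, which the paper leaves implicit.
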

\begin{proof}
    If $f$ is represented by $(F_1, \ldots, F_n)$, and $g$ is represented by $(G_1, \ldots, G_m)$, 
    then  $f + g$ is represented by $(F_1, \ldots, F_n, G'_1, \ldots, G'_m)$,
    where $G'_i$ is a modification of $G_i$ in which each $\coproj_j$ is replaced with $\coproj_{n + j}$.
\end{proof}

\begin{lemma}
\label{ref:su-trees-times}
    If $f : X_1 \to Y_1$ and $g : X_2 \to Y_2$ are represented by single-use decision trees,
    then so is $(f \times g) : X_1 \times X_2 \to Y_1 \times Y_2$. 
\end{lemma}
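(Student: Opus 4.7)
The plan is to adapt the strategy from Lemma~\ref{ref:su-trees-plus}, but instead of concatenating tuples of trees we paste one tree under the leaves of the other, in a manner reminiscent of the composition construction from Claim~\ref{claim:su-trees-compose-tup}. Using $\tau_{X_1}$, $\tau_{X_2}$, $\tau_{Y_1}$, $\tau_{Y_2}$ from Lemma~\ref{lemma:pof-normal-form-su} together with the single-use distributivity isomorphism $\distr^*$ from Example~\ref{ex:assoc-star-su}, I would first reduce to the case where
\[ X_1 \cong \atoms^{k_1} + \ldots + \atoms^{k_n}, \quad X_2 \cong \atoms^{p_1} + \ldots + \atoms^{p_r}, \]
and similarly for $Y_1$ and $Y_2$, so that $X_1 \times X_2$ is (single-use) isomorphic to the disjoint sum of all $\atoms^{k_i + p_j}$, and $Y_1 \times Y_2$ to the disjoint sum of all $\atoms^{l_{i'} + q_{j'}}$. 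It then suffices to build, for every pair $(i,j)$, a tree $H_{i,j} \in \Trees(\atoms^{k_i + p_j}, \sum_{i',j'} \atoms^{l_{i'} + q_{j'}})$ realising $f \times g$ on that summand.

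Given such an $(i,j)$, I would take the tree $F_i$ for $f$ (whose input variables I rename to $x_1, \ldots, x_{k_i}$) and the tree $G_j$ for $g$ (whose input variables I rename to $y_1, \ldots, y_{p_j}$), so that the two trees use disjoint variables. Then I would paste a fresh copy of $G_j$ under every leaf of $F_i$. At the resulting leaves, I would combine the two constructor labels: a leaf of $F_i$ labelled $\coproj_{i'}(v_1, \ldots, v_{l_{i'}})$ followed by a leaf of the pasted $G_j$ labelled $\coproj_{j'}(w_1, \ldots, w_{q_{j'}})$ is replaced by the single constructor $\coproj_{(i',j')}(v_1, \ldots, v_{l_{i'}}, w_1, \ldots, w_{q_{j'}})$, where $(i',j')$ denotes the index of the corresponding summand in the product decomposition of $Y_1 \times Y_2$.

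The only thing to check is that $H_{i,j}$ still satisfies the single-use restriction. I expect this to be straightforward rather than a real obstacle: along any root-to-leaf path, each $x_a$ appears at most once because $F_i$ is single-use and the substitution into the leaf introduces no new copies, and each $y_b$ appears at most once because $G_j$ is single-use and the pasted copies are disjoint across different branches of $F_i$. Since $\{x_a\}$ and $\{y_b\}$ are disjoint, there is no interference between the two, so the combined tree is single-use. Correctness is immediate: evaluating $H_{i,j}$ on $(\bar x, \bar y)$ runs $F_i$ on $\bar x$ to produce $f(\bar x)$ and then runs $G_j$ on $\bar y$ to produce $g(\bar y)$, and the combined constructor yields the pair $(f(\bar x), g(\bar y))$ in the right summand of $Y_1 \times Y_2$.
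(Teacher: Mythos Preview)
Your proposal is correct and matches the paper's proof essentially verbatim: the paper also builds $H_{i,j}$ by pasting a copy of $G_j$ (with its input variables shifted by $k_i$) under every leaf of $F_i$ and merging the two leaf constructors into a single $\coproj_c(v_1,\ldots,v_{l_a},w_1,\ldots,w_{l'_b})$. The only cosmetic difference is that the paper shifts indices rather than using a disjoint alphabet $x_a,y_b$, and it leaves the single-use check implicit.
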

\begin{proof}
    Let $f$ be represented by $(F_1, \ldots, F_n)$ and $g$ be represented by $(G_1, \ldots, G_m)$, and 
    let us construct $H$ that represents $f \times g$. First of all, notice that $H$ should be a tuple 
    of $n \cdot m$ trees $(H_{1, 1},  \ldots,  H_{i, j},  \ldots, H_{n, m})$. Let us show, how to 
    construct $H_{i, j}$. 
    First, we notice that if $F_i$ belongs to $\Trees(\atoms^{k_i}, \atoms^{l_1} + \ldots + \atoms^{l_m})$
    and $G_j$ belongs to $\Trees(\atoms^{k_j'}, \atoms^{l'_1} + \ldots + \atoms^{l'_{m'}})$, 
    then $H_{i, j}$ should belong to:
    \[ \Trees(\atoms^{k_i + k_j'}, \ \atoms^{l_1 + l'_1 } + \ldots + \atoms^{l_i + l'_j} + \ldots + \atoms^{l_{m} + l'_{m'}})\tdot\]
    In order to construct $H_{i,j}$, we take $F_i$ and replace all its leaves with a modified version of $G_{j}$, where each variable 
    $x_p$ has been replaced by $x_{p + k_i}$, and in which every leaf has been replaced with the following merge of the leaf from $F_i$ with the leaf from $G_j$:
    \[ \coproj_a(v_1, \ldots, v_{l_a}),\ \coproj_b(w_1, \ldots, w_{l'_b}) \mapsto \coproj_c(v_1, \ldots, v_{l_a}, w_1, \ldots, w_{l'_b})\comma\]
    where $c$ is the index of $\atoms^{k'_i + l'_j}$  in $\atoms^{k'_1 + l'_1 } + \ldots + \atoms^{k'_{n'} + l'_{m'}}$.
    For example, consider the following $F_i$, $G_j$:
    \picc{su-prod-ex}
    \noindent
    For those $F_i$, $G_j$, we construct the following $H_{i,j}$:
    \picc{su-prod-ex-2}
\end{proof}

Finally, we need to show that all basic functions from Definition~\ref{def:single-use-functions} can be implemented 
as single-use decision trees. For the sake of conciseness, we only show how to implement $\proj_1$ (the implementations
for other basic functions are analogous):
\begin{lemma}
    For every $X$ and $Y$, the function $\proj_1 : X \times Y \to X$ is represented by some $T \in \Trees(X \times Y, X)$. 
\end{lemma}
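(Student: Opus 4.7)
The plan is to build $T$ as a tuple of leaf-only trees, one per summand of the normal forms of $X \times Y$ and $X$. First I would invoke Lemma~\ref{lemma:pof-normal-form-su} to fix single-use isomorphisms $\tau_X : X \suto \atoms^{k_1} + \ldots + \atoms^{k_n}$ and $\tau_Y : Y \suto \atoms^{l_1} + \ldots + \atoms^{l_m}$. Tracing the construction in that lemma, the normal form of $X \times Y$ is $\atoms^{k_1 + l_1} + \atoms^{k_1 + l_2} + \ldots + \atoms^{k_n + l_m}$, and on an input $(x, y)$ with $\tau_X(x) = \coproj_i(\bar x)$ and $\tau_Y(y) = \coproj_j(\bar y)$, the isomorphism $\tau_{X \times Y}$ (assembled from $\distr^*$ and $\assoc^*$) sends $(x,y)$ to the $(i,j)$-summand with concatenated atom tuple $(\bar x, \bar y)$.

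By Definition~\ref{def:su-tree}, a tree representing a function $X \times Y \suto X$ is determined by a tree $T'$ in $\Trees\bigl(\sum_{i,j} \atoms^{k_i + l_j},\, \sum_{i} \atoms^{k_i}\bigr)$, which unfolds to a tuple $T' = (T_{1,1}, \ldots, T_{n,m})$ with $T_{i,j} \in \Trees\bigl(\atoms^{k_i + l_j},\, \sum_{i} \atoms^{k_i}\bigr)$. I would take each $T_{i,j}$ to be the single-leaf tree whose constructor is $\coproj_i(\mathtt{x}_1, \ldots, \mathtt{x}_{k_i})$, using only the first $k_i$ input variables and discarding the trailing $l_j$ ones. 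This tree has no queries, and each variable appears at most once on its unique root-to-leaf path, so the single-use restriction is trivially met.

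To verify that this $T$ represents $\proj_1$, I would trace the three-stage semantics from Definition~\ref{def:su-tree}: beginning from $(x,y) \in X \times Y$, the normalization $\tau_{X \times Y}$ produces $\coproj_{(i,j)}(\bar x, \bar y)$; the leaf of $T_{i,j}$ emits $\coproj_i(\bar x)$ in the normal form of $X$; and finally $\tau_X^{-1}$ returns $x$, which is exactly $\proj_1(x,y)$. The only real obstacle is the bookkeeping: making sure the numbering and the way variables are indexed in each $T_{i,j}$ matches the concrete permutation of coordinates produced by the $\distr^*$-then-$\assoc^*$ chain inside $\tau_{X \times Y}$. This is routine once one fixes a convention for the indexing. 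Analogous one-leaf constructions handle the other structural basic functions ($\proj_2$, $\coproj_1$, $\coproj_2$, $\sym$, $\assoc$, $\leftI$, $\cosym$, $\coassoc$, $\merge$, $\idf$, $\const_a$, and $\distr$); only $\eqf$ requires a genuine query node, and the three combinators $\circ$, $+$, $\times$ were already handled in Claim~\ref{claim:su-trees-compose-tup} and Lemmas~\ref{ref:su-trees-plus} and~\ref{ref:su-trees-times}, so Lemma~\ref{lem:su-functions-su-trees} follows.
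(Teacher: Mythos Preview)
Your proposal is correct and follows essentially the same approach as the paper: both fix the normal-form isomorphisms $\tau_X$, $\tau_Y$, observe that $\tau_{X\times Y}$ has codomain $\sum_{i,j}\atoms^{k_i+l_j}$, and then define each $T_{i,j}$ as the single leaf $\coproj_i(\mathtt{x}_1,\ldots,\mathtt{x}_{k_i})$ that discards the trailing $l_j$ variables. Your added verification trace and the remark about indexing bookkeeping are welcome elaborations, but the construction itself is identical to the paper's.
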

\begin{proof}
    Let $\tau_X$ and $\tau_Y$ (from Lemma~\ref{lemma:pof-normal-form-su}) have the following types:
    \[
        \begin{tabular}{cc}
            $\tau_X : X \to \atoms^{k_1} + \ldots + \atoms^{k_n}$ & $\tau_Y : Y \to \atoms^{l_1} + \ldots + \atoms^{l_m}$
        \end{tabular}
    \]
    This means that for every $i, j$, we need to define a single-use decision tree
    $T_{i,j} \in \Trees(\atoms^{k_i + l_j}, \atoms^{k_1} + \ldots + \atoms^{k_n})$, such that:
    \[ \proj_1 = \tau_{X}^{-1} \circ (T_{1, 1} + \ldots + T_{i, j} + \ldots + T_{n, m}) \circ \tau_{X \times Y}\]
    It is not hard to see that we can construct $T_{i, j}$ to be a single leaf with the constructor $\coproj_i(x_1, \ldots, x_{k_i})$.
    (Note that the constructor forgets about variables $x_{k_i + 1}$ to $x_{k_i + l_j}$.)
\end{proof}
\noindent
This concludes the proof of Lemma~\ref{lem:su-functions-su-trees}.

\subsubsection{Single-use decision trees are polynomial orbit finite}
This section is dedicated to proving the following lemma:
\begin{lemma}
\label{lem:su-trees-pof}
    For every polynomial orbit-finite $X, Y$, the set $\Trees(X, Y)$ is polynomial orbit-finite as well.
\end{lemma}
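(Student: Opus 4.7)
The plan is to reduce everything to the case $\Trees(\atoms^k,\ \atoms^{l_1} + \ldots + \atoms^{l_m})$ and then prove a structural finiteness bound on such trees. By Definition~\ref{def:su-tree}, a general $T \in \Trees(X,Y)$ is (after fixing the equivariant isomorphisms $\tau_X, \tau_Y$ from Lemma~\ref{lemma:pof-normal-form-su}) just an $n$-tuple of trees of the special tuple-to-sum form, so closure of polynomial orbit-finite sets under finite products and finite sums (Lemma~\ref{lem:of-preserved} plus Definition~\ref{def:pofs}) reduces the problem to showing that $\Trees(\atoms^k,\ \atoms^{l_1} + \ldots + \atoms^{l_m})$ is polynomial orbit-finite for each fixed $k, l_1, \ldots, l_m$.

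The key observation is that the single-use restriction puts a uniform bound on tree size. Every query is either $\mathtt{x_i = x_j}$ or $\mathtt{x_i = a}$, so every internal node of a tree $T$ consumes at least one input variable $\mathtt{x_i}$. Since each variable may appear at most once along any root-to-leaf path, every such path has length at most $k$. A binary tree of depth at most $k$ has at most $2^k - 1$ internal nodes and at most $2^k$ leaves, and each leaf constructor $\coproj_i(v_1, \ldots, v_{l_i})$ involves at most $\max_j l_j$ slots. Consequently, the total number of syntactic \emph{slots} in $T$ that may hold an atomic constant (one per $\mathtt{x_i = a}$ query and one per constructor position $v_r$ that is a constant) is bounded by a constant $N = N(k, l_1, \ldots, l_m)$ depending only on $k$ and the $l_j$'s.

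This suggests the following decomposition. Define a \emph{shape} to be a single-use decision tree in which every atomic constant $a \in \atoms$ is replaced by a formal placeholder $\bullet$, while variable indices and constructor tags are kept. Because the depth is bounded by $k$ and because both the set of possible queries modulo constants (finitely many $(i,j)$ pairs and finitely many ``constant-query'' slots) and the set of possible constructor tags $(i, v_1, \ldots, v_{l_i})$ modulo constants are finite, there are only finitely many shapes. For each shape $S$, let $n_S \leq N$ be the number of constant placeholders occurring in $S$; then the set of trees of shape $S$ is in equivariant bijection with $\atoms^{n_S}$ (one atom per placeholder, in some fixed enumeration order of the placeholders). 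Hence
\[ \Trees(\atoms^k,\ \atoms^{l_1} + \ldots + \atoms^{l_m}) \ \simeq \ \sum_{S} \atoms^{n_S}\comma \]
which is a finite sum of sets of the form $\atoms^{n_S}$ and so is polynomial orbit-finite by Definition~\ref{def:pofs}.

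The main obstacle is the bookkeeping in justifying the uniform bound $N$ and checking that the ``shape + atom tuple'' decomposition is indeed equivariant: the group of atom permutations acts on trees by acting pointwise on the constants, leaves shapes fixed, and the bijection between trees of shape $S$ and $\atoms^{n_S}$ must be set up by choosing a canonical (e.g.\ left-to-right, depth-first) enumeration of placeholders so that this action corresponds to the coordinatewise action on $\atoms^{n_S}$. Once this bookkeeping is done, the lemma follows immediately, and combined with the surjectivity $\Trees(X,Y) \twoheadrightarrow (X \suto Y)$ and \cite[Lemma~3.24]{bojanczyk2019slightly}, it yields Theorem~\ref{thm:su-orbit-finite}.
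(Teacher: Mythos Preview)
Your proof is correct and follows essentially the same approach as the paper: reduce to the case $X=\atoms^k$, use the single-use restriction to bound the depth of the trees, observe that there are only finitely many atomless \emph{shapes}, and represent the set of trees as a finite disjoint sum $\sum_S \atoms^{n_S}$ indexed by shapes. Your write-up is in fact more detailed than the paper's (which dispatches the lemma in a single short paragraph), particularly in spelling out the size bound $N$ and the equivariance of the shape-plus-tuple bijection.
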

\begin{proof}
It suffices to prove the lemma only for the case where $X = \atoms^k$. The general version follows from Lemma~\ref{lem:of-preserved}.
Notice that, thanks to the single-use restriction, the depth of the trees from $\Trees(\atoms^k, Y)$ is bounded by $k + 1$.
Now, the key observation is that there is only finitely many possible \emph{shapes} of binary trees of bounded height
(a \emph{shape} of a tree is a version of the tree where all atoms have been replaced by an atomless blank). 
Let $\{s_1, \ldots, s_k\}$ be the set of these shapes, and denote $|s_i|$ to be the number of blanks (i.e. atoms)
in each tree of shape $s_i$. Then, the set of all from $\Trees(\atoms^k, \atoms^l)$ can be represented 
as the following polynomial orbit-finite set: 
\[ \atoms^{|s_1|} + \ldots + \atoms^{|s_k|} \]
\end{proof}

\subsubsection{Canonical single-use decision trees}
\label{subsec:canonical-trees}
The definition of polynomial orbit-finite sets (Definition~\ref{def:pofs})
does not include the constructor of single-use function spaces ($\suto$). This means 
that if we want to treat $X \suto Y$ as a polynomial orbit-finite set (for example 
to talk about higher-order single-use functions), we need to represent it as 
a polynomial orbit-finite set. For this purpose, 
we are going to use $\Trees(X, Y)$. Observe that, this tree representation
of single-use functions is not injective -- the following two decision trees represent the same function:
\bigpicc{su-same-function}
In this section we show that there is an equivariant way of choosing
a canonical tree $\Trees(X, Y)$ for every function $X \suto Y$
(this is not immediate, because sets with atoms do not always admit choice -- see Claim~\ref{claim:no-choice}):
\begin{lemma}
\label{lem:eq-tree-repr}
For every polynomial orbit-finite $X$ and $Y$ there exists a function:
\[\treeRepr : (X \suto Y) \eqto \Trees(X, Y)\comma \]
such that $\treeRepr(f)$ represents $f$.
\end{lemma}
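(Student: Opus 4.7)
The plan is to realise $\treeRepr$ as the equivariant uniformisation furnished by Lemma~\ref{lem:straight-uniformization-eq} applied to the equivariant relation
\[R = \{(f, T) \in (X \suto Y) \times \Trees(X, Y) \mid T \text{ represents } f\},\]
which is total in its first coordinate by Lemma~\ref{lem:su-functions-su-trees}. Two preconditions must then be verified: (a) both $X \suto Y$ and $\Trees(X,Y)$ are straight, and (b) for every $f \in X \suto Y$ there exists $T \in \Trees(X,Y)$ with $f \mathrel{R} T$ and $T$ supported by $\supp(f)$.

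Property (a) for $\Trees(X,Y)$ follows directly from its explicit description in the proof of Lemma~\ref{lem:su-trees-pof} as a disjoint sum of sets $\atoms^{|s_i|}$, each of which is straight. The straightness of $X \suto Y$ is more subtle, since straightness is not in general preserved by equivariant surjective quotients (witness $\atoms^{(2)} \twoheadrightarrow \binom{\atoms}{2}$). I would deduce it from the normal form yielded by (b): if $T_f$ has $\supp(T_f) = \supp(f)$ and any permutation $\pi$ fixes $f$, then one can check that $\pi$ must also fix $T_f$; this is because each atom of $\supp(f)$ occupies a structurally distinguished position in a minimal single-use tree, so a non-trivial permutation of $\supp(f)$ would change $T_f$ (and hence the function it represents). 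The single-use restriction is essential here, as it prevents an atom from appearing in multiple incomparable positions that could produce hidden symmetries.

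For property (b), start with any tree $T_0$ representing $f$ and iteratively prune atoms $a \in \supp(T_0) \setminus \supp(f)$. Fix such an $a$ and let $\alpha := \supp(f)$. Take an $\alpha$-permutation $\pi$ sending $a$ to a completely fresh atom $a'$; since $\pi$ fixes $f$, the tree $\pi(T_0)$ still represents $f$. Comparing $T_0$ and $\pi(T_0)$ node-by-node: at every query of the form $x_i = a$, the ``yes'' branch can only be reached by inputs that coincide with $a$, so this same branch in $\pi(T_0)$ (where the query reads $x_i = a'$) computes the same function; by exploiting this, one can merge branches and rewrite leaf constructors using $a$, producing a tree of strictly smaller atomic support. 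Induction on $|\supp(T_0) \setminus \supp(f)|$ delivers $T$ supported by $\supp(f)$.

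The main obstacle will be property (a) for $X \suto Y$, since the analogous statement fails for arbitrary finitely supported function spaces, and the argument has to isolate exactly how the single-use restriction forces atoms to play structurally rigid roles. Once (a) and (b) are in hand, Lemma~\ref{lem:straight-uniformization-eq} yields the desired equivariant section $\treeRepr$.
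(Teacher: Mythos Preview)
Your strategy—apply Lemma~\ref{lem:straight-uniformization-eq} to the representation relation—has a genuine gap in precondition (a), and the gap is not merely a matter of filling in details.

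Lemma~\ref{lem:straight-uniformization-eq} requires its \emph{domain} to be straight; its proof chooses $r(x)$ by inspecting the tuple coordinates of $x$, and this choice mechanism is precisely what fails for non-straight sets (cf.\ Claim~\ref{claim:no-choice}). So you must establish that $X\suto Y$ is straight before you can invoke the lemma. Your argument for this runs: take $T_f$ with $\supp(T_f)=\supp(f)$; if a permutation $\pi$ fixes $f$, then ``each atom of $\supp(f)$ occupies a structurally distinguished position'' in $T_f$, so $\pi$ must fix $T_f$. But $\pi(f)=f$ only tells you that $\pi(T_f)$ \emph{represents the same function} as $T_f$, not that $\pi(T_f)=T_f$; distinct trees with the same minimal support can represent the same function (e.g.\ for the constant function $x\mapsto a$ on $\atoms$, both the bare leaf $a$ and the tree with a redundant root query $x_1=a$ and leaf $a$ on both branches have support $\{a\}$). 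The step ``and hence the function it represents'' is precisely the contrapositive of what you are trying to prove, so the argument is circular. In fact, straightness of $X\suto Y$ is most naturally seen as a \emph{consequence} of the lemma you are proving: once $\treeRepr$ exists it is an equivariant injection into the straight set $\Trees(X,Y)$, and equivariant subsets of straight sets are straight.

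The paper sidesteps this entirely by building $\treeRepr$ directly, by induction on the minimal height of a representing tree. At each level it picks the \emph{leading variable} (the least-indexed $x_i$ queried at the root of some minimal tree), and then proves (Claim~\ref{claim:v-yes-no-decomp-unique}) that once this variable is fixed, the comparison value $v$ and the two sub-functions $f_{T_{\Yes}}, f_{T_{\No}}$ are uniquely determined—independently of which minimal tree one started from. This uniqueness is what makes the recursion equivariant without ever needing to know that $X\suto Y$ is straight. Your property (b) is morally true and your pruning sketch is reasonable, but (b) alone does not salvage the uniformisation route.
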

As usual, we first prove the lemma for the special case where $X = \atoms^k$ 
and $Y = \atoms^{l_1} + \ldots + \atoms^{l_n}$: Define the height of a function
$f \in (X \suto Y)$ to be the smallest height of a tree from $\Trees(X, Y)$ that 
recognizes $f$ and define $\textrm{SU}_{\leq h}(X, Y) \subseteq X \suto Y$ be the set of
all single-use functions recognized by trees that are not taller than $h$. 
We prove by induction on $h$ that for every $h$ there is a function:
\[ \treeRepr_h : \textrm{SU}_{\leq h}(X, Y) \eqto \Trees(X, Y) \]
This is enough to prove the lemma: Thanks to the single-use restriction,
the height of trees from $\Trees(\atoms^k, Y)$ is bounded by $k + 1$, which 
means that $\treeRepr = \treeRepr_{\leq (k+1)}$.\\

We start the inductive proof with $h = 1$. If $f$ has height $1$,
then it is recognized by some leaf. Moreover, it is not hard to see that if
$\coproj_j(v'_1, \ldots, v'_{l_j})$ and $\coproj_{j'}(v'_1, \ldots, v'_{l_{j'}})$ 
recognize the same function, then $j = j'$, and for every $i$, $v_i = v'_i$.
This means that we can define $\treeRepr_1(f)$ to be the only leaf that recognizes $f$.\\

For the induction step we take a $h > 1$. Let $f$ be a function in $\textrm{SU}_h(X, Y)$.
We assume that the height of $h$ is at least $2$ -- if not, we can simply repeat the construction 
from the induction base. We say that a tree is \emph{minimal} if
no tree of smaller height represents the same function. We say that $x_i$ is the leading variable 
of $f$, if $i$ is the smallest index, for which there exists a minimal tree that represents $f$ 
and queries $x_i$ in its root. 
It is not hard to see that the function that maps $f$ to its leading variable is equivariant.
Let $x_i$ be the leading variable of $f$, and let $T$ be a minimal tree that recognizes $f$ and queries 
$x_i$ in its root. Let $\mathtt{x_i = v}$ be the query from the root of $T$, and let $T_\Yes$ and 
$T_\No$ be the subtrees of $T$. The following claim states that $v$, $f_{T_\Yes}$ and $f_{T_\No}$ 
do not depend on the choice of $T$:
\begin{claim}
\label{claim:v-yes-no-decomp-unique}
    Let $T$ and $T'$ be two minimal trees that recognize the same function, and query the same 
    variable $x_i$ in their roots:
    \picc{two-trees-same-function}
    \noindent
    It follows that $v = v'$, $f_{T_\Yes} = f_{T'_\Yes}$, and $f_{T_\No} = f_{T'_\No}$. 
\end{claim}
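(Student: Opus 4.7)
The plan is a case analysis on the forms of the two root queries, each of which is $x_i = v$ with $v$ either an atomic constant $a \in \atoms$ or another variable $x_j$ (necessarily with $j > i$, since $x_i$ is the leading variable). Two structural facts will be used throughout. First, the single-use restriction forbids the subtrees of $T$ from mentioning any variable that already appears in the root query, so $f_{T_\Yes}$ and $f_{T_\No}$ are independent of $x_i$, and also of $x_j$ when $v = x_j$; analogously for $T'$. Second, by minimality of $T$ we have $f_{T_\Yes} \neq f_{T_\No}$ as functions, since otherwise the root query could be dropped and $T$ replaced by one of its subtrees, yielding a strictly smaller tree for $f$; the same holds for $T'$.

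The case $v = v'$ is immediate. On the input slice where the shared query is satisfied, $f$ coincides with both $f_{T_\Yes}$ and $f_{T'_\Yes}$, and since both subtree functions are independent of the variables consumed by the root query, the equality on this slice propagates to the whole of $\atoms^k$, giving $f_{T_\Yes} = f_{T'_\Yes}$. The same argument on the complementary slice yields $f_{T_\No} = f_{T'_\No}$.

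To show $v = v'$, I would rule out each of the three mismatched cases by deriving a contradiction with minimality. In the case $v = a$, $v' = a'$ with $a \neq a'$, looking at inputs with $x_i \in \{a, a', c\}$ for a third atom $c$ and using independence from $x_i$ yields $f_{T_\Yes} = f_{T'_\No}$, $f_{T_\No} = f_{T'_\Yes}$, and $f_{T_\No} = f_{T'_\No}$, whence $f_{T_\Yes} = f_{T_\No}$, contradicting minimality of $T$. In the case $v = x_j$, $v' = x_{j'}$ with $j \neq j'$, evaluating $f$ at inputs of the forms $(x_i,x_j,x_{j'}) = (u,u,v)$ and $(u,v,u)$ with $u \neq v$, and exploiting the independences, forces $f_{T_\Yes}$ and $f_{T_\No}$ to be constant in $x_{j'}$ (with analogous collapses for $f_{T'_\Yes}, f_{T'_\No}$); a comparison on the diagonal $x_i = x_j = x_{j'}$ then collapses all four subtree functions to a common constant, again contradicting minimality.

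The main obstacle is the mixed case $v = a$, $v' = x_j$. Here I would compare inputs of the form $(x_i = a', x_j = u, \hat{x})$ with $a' \neq a$ and $u \neq a$ in two regimes: the regime $a' = u$, where $T$ says No and $T'$ says Yes, and the regime $a' \neq u$, where both say No. Since $f_{T_\No}$ is independent of $x_i$, in both regimes it is evaluated at the same point $(x_j = u, \hat{x})$, so the two resulting identities $f_{T_\No}(u, \hat{x}) = f_{T'_\Yes}(\hat{x})$ and $f_{T_\No}(u, \hat{x}) = f_{T'_\No}(\hat{x})$ together force $f_{T'_\Yes} = f_{T'_\No}$, contradicting minimality of $T'$. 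The subtle point is that the contradiction here must be extracted from the minimality of $T'$ rather than of $T$, since the direct analysis of the $T$-subtrees only yields their agreement on the thin slice $\{x_j \neq a\}$, which is insufficient on its own.
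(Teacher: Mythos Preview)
Your proof is correct and follows essentially the same approach as the paper's: both use the single-use restriction to show the subtree functions are independent of the variables consumed at the root, and both use minimality to get $f_{T_\Yes}\neq f_{T_\No}$ (and the primed version), then derive a contradiction by comparing the two case splits.

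The paper only writes out the constant--constant case and declares the other three ``similar''. You go further and actually work out the variable--variable and mixed cases. Your observation that the mixed case is the delicate one, and that the contradiction is most cleanly obtained on the $T'$ side, is a genuine addition: the paper's blanket ``similar'' hides exactly this asymmetry. One small remark: your parenthetical ``necessarily with $j>i$, since $x_i$ is the leading variable'' is plausible under the reading that a query $x_i=x_j$ counts as querying both variables, but the paper does not spell out this convention, and in any case you never use $j>i$ in the argument, so it can simply be dropped.
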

Before we prove the claim, we finish the induction step. Notice that the heights of 
$f_{T_\Yes}$ and $f_{T_\No}$ are strictly lower than $h$. It follows that we can define 
$T'_\Yes = \treeRepr(f_{T_\Yes})$, $T'_\No = \treeRepr(f_{T_\No})$, and define $\treeRepr(f)$ 
to be:
\vsmallpicc{tree-repr-constr}
\noindent
The construction does not depend on the choice of $T$, which makes it equivariant.
This leaves us with proving Claim~\ref{claim:v-yes-no-decomp-unique}:
\begin{proof}[Proof of Claim~\ref{claim:v-yes-no-decomp-unique}:]
    Let us take $T, T' \in \Trees_h(X, Y)$ that recognize the same function $f \in \mathrm{SU}_h(X, Y)$,
    and query the same $x_i$ in their roots. Let $\mathtt{x_i = v}$ and $\mathtt{x_i = v'}$ be the 
    root queries of $T$ and $T'$, and let $T_\Yes$, $T_\No$, $T'_\Yes$, and $T'_\No$ be their subtrees. 
    Notice that $f_{T_\Yes} \neq f_{T_\No}$,
    or otherwise $f$ would have been recognized by $T_\Yes$, and $T$ would not be minimal.
    Similarly, $f_{T'_\Yes} \neq f_{T'_\No}$. Let us now show that $v = v'$.
    Since both $v$ and $v'$ might be equal to an input variable, or to an atomic constant,
    we need to consider four cases. They are all similar to each, so we only present the proof 
    for $v = a \in \atoms$ and $v' = b \in \atoms$:
    Let us assume that $a \neq b$ and show that this leads to a contradiction.
    We start by noticing that 
    thanks to the single-use restriction none of $T_\Yes$, $T_\No$, $T'_\Yes$, 
    $T'_\No$ uses $x_i$. It follows that, for every $\bar x \in \atoms^k$ and for every $c \in \atoms$:
    \[ \begin{tabular}{ccc}
        $T_\Yes(\bar x[x_i := c]) = T_\Yes(\bar x)$ & and & $T_\No(\bar x[x_i := c]) = T_\No(\bar x)$.
        \end{tabular}
    \]
    Similarly for $T'_\Yes$ and $T'_\No$. Moreover, since $f$ is recognized by $T$ and $T'$, 
    it follows that for every $\bar{x} \in \atoms^k$ and for every $c \in \atoms$:
                \[
                    \begin{tabular}{cc}
                        $f(\bar{x}[x_i := c]) = \begin{cases}
                            T_\Yes(\bar x) & \textrm{if } c = a\\
                            T_\No(\bar x)  & \textrm{if } c \neq a
                        \end{cases}$ &
                        $f(\bar{x}[x_i := c]) = \begin{cases}
                            T'_\Yes(\bar x) & \textrm{if } c = b\\
                            T'_\No(\bar x)  & \textrm{if } c \neq b
                        \end{cases}$
                    \end{tabular}
                \]
                Since $f_{T_\Yes} \neq f_{T_\No}$, we know that there is a $\bar{x} \in \atoms^k$, such 
                that $T_\Yes(\bar{x}) \neq T_\No(\bar{x})$. This leads to a contradiction because, 
                if we take $c \in \atoms$, that is different from both $a$ and $b$, we have that:
                \[T_\Yes(\bar x) = f(\bar{x}[x_i := a]) = T'_\No(\bar{x}) = f(\bar{x}[x_i := c]) = T_\No(\bar{x})\]
    It follows that $v = v'$. 
    Let us now show that $f_{T_\Yes} = f_{T'_\Yes}$ and $f_{T_\No} = f_{T'_\No}$.
    Since $v$,  $v'$ might both be equal to some $a \in \atoms$, or to some $x_j$, we have 
    two cases to consider. Again, they are very similar, so we only present the proof for $v = v' = a \in \atoms$. 
    In this case, for all $\bar x \in \atoms^k$:
    \[ T_\Yes(\bar x) = f(\bar{x}[x_i := a]) = T'_\Yes(\bar x) \]
    Similarly, if we take $b \in \atoms$, that is different from $a$, then for all $\bar x$:
    \[ T_\No(\bar x) = f(\bar{x}[x_i := b]) = T'_\No(\bar x) \]
\end{proof}

\noindent
Let us now define $\treeRepr$ for functions $\atoms^{k_1} + \ldots + \atoms^{k_n} \suto \atoms^{l_1} + \ldots + \atoms^{l_m}$:
\[ \treeRepr(f) := (\treeRepr(f \circ \coproj_1), \ldots, \treeRepr(f \circ \coproj_n)) \]
Finally, we define the representation for functions $X \suto Y$:
\[ \treeRepr(f) := \treeRepr(  \tau_Y \circ f \circ \tau^{-1}_X) \]

\section{Single-use automata}
\label{sec:su-automata}
In this section we define the \emph{single-use automaton} -- a model which (slightly) generalizes the single-use register automaton.
A single-use automaton uses a polynomial orbit-finite set of states $Q$ to process words over a polynomial orbit-finite alphabet $\Sigma$.
It has an equivariant initial state $q_0 \in Q$, and a single-use acceptance function $f : Q \suto \{\Yes, \No\}$. 
The type of its transition function is slightly more complicated. We discuss it in the following 
section.
\subsection{Single-use transition function}
\label{sec:su-transitions}
The first idea for the type of the transition function is:
\[ (\Sigma \times Q) \suto Q\tdot \]
Such a transition function would only allow the automaton to 
use one copy of each input letter, which would make single-use
automata weaker than both single-use register automata and orbit-finite monoids:
\begin{example}
	The following language over $\atoms$ is easily seen to be recognized by an orbit-finite monoid,
	and by a single-use register automaton, but not
	by a single-use polynomial orbit-finite automaton of type $\Sigma \times Q \suto Q$:
   \[\textrm{``The length of the word is $3$ and all its letters are pairwise distinct''.}\]
\end{example}

The main focus of this thesis are models equivalent to orbit-finite monoids, so
we are not going to discuss single-use automata of type $\Sigma \times Q \suto Q$
(which might be interesting in another context). Instead, we present two equivalent ways
of typing the transition function that allow the automaton to use multiple copies 
of the input letters:\\

The first one is to let the automaton explicitly specify, how many copies of the input letters 
it requires. We add this information as a number $k \in \nat$ to the automaton's specification,
and then we type its transition function as:
\[ \underbrace{\Sigma^k}_{\substack{\textrm{$k$ copies of}\\\textrm{the same letter}}} \times \underbrace{Q}_{\substack{\textrm{one copy of}\\\textrm{the state}}} \ \  \suto \ \  Q \]
Let us stress that although, it is possible to feed a function of this type with $k$ different
letters from $\Sigma$, we are only going to use it with $k$ identical copies of the input letter.
Another way of typing the transition function is as follows: 
\[\Sigma \eqto (Q \suto Q) \]
The function transforms every letter in a multiple-use, equivariant way
into a single-use state transformation (which usually is not going to be 
equivariant, because it is going to use atoms from the input letter). 
This way, we give the automaton a multiple-use access to $\Sigma$ and a single-use access to $Q$.
A similar approach to typing the transition function can be found in \emph{weighted automata}, where $\Sigma$ is a finite set, 
$Q$ is a finite-dimensional linear space, and the transition function maps 
every letter to a linear transformation:
\[ \Sigma \to (Q \to_{\textrm{lin}} Q)\comma\]

Before we show that the two ways of defining transition functions are equivalent, we need to discuss one 
more property of single-use functions:
\subsubsection{Single-use currying}
The following isomorphism called \emph{currying} holds for many different classes of functions, 
such as all functions, finitely supported functions, linear transformations, \ldots
\[ \begin{tabular}{ccc}
    $A \to (B \to C)$ & $\simeq$ & $A \times B \to C$
   \end{tabular}
\]
\[ \begin{tabular}{cc}
    $F(f)(a, b) = f(a)(b)$ & $F^{-1}(f)(a)(b) = f(a, b)$
   \end{tabular}
\] 

In this section we would like to discuss currying for single-use functions, i.e. 
talk about the relationship between:
\[
    \begin{tabular}{ccc}
        $A \times B \suto C$ & and & $A \suto \Trees(B, C)$
    \end{tabular}
\]
(As mentioned in Section~\ref{subsec:canonical-trees}, since $B \suto C$ is not a polynomial orbit-finite set, 
we need to represent is as $\Trees(B, C)$).\\

First, we notice that every single-use function  $A \suto \Trees(B, C)$,
can be transformed into $A \times B \suto C$:
\begin{lemma}
    \label{lem:su-curring-one-way}
    For every $f : A \suto \Trees(B, C)$, there exists an $f' : A \times B \suto C$, 
    such that for every $a \in A$ and $b \in B$:
    \[ f(a)(b) = f'(a, b) \] 
    Moreover, the mapping $f \mapsto f'$ is an equivariant function.
\end{lemma}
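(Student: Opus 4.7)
The plan is to reduce the uncurrying problem to constructing an equivariant single-use evaluation map
\[ \mathrm{eval} : \Trees(B, C) \times B \suto C \]
that satisfies $\mathrm{eval}(T, b) = f_T(b)$ for every $T$ and $b$. Granting such an $\mathrm{eval}$, the desired function is defined as
\[ f' := \mathrm{eval} \circ (f \times \idf_B) \ : \ A \times B \suto C, \]
which is single-use by closure of single-use functions under $\times$ and $\circ$ (Definition~\ref{def:single-use-functions}), and satisfies $f'(a, b) = \mathrm{eval}(f(a), b) = f(a)(b)$ as required. Equivariance of $f \mapsto f'$ follows from equivariance of $\mathrm{eval}$ and $\idf_B$ by a direct calculation with the action $\pi(g) = \pi \circ g \circ \pi^{-1}$: since both are equivariant, $\pi(\mathrm{eval} \circ (f \times \idf_B)) = \mathrm{eval} \circ (\pi(f) \times \idf_B)$.

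To construct $\mathrm{eval}$ as a single-use function, I would use the polynomial representation from Lemma~\ref{lem:su-trees-pof}: the set $\Trees(B, C)$ decomposes by tree shape as
\[ \Trees(B, C) \simeq \atoms^{|s_1|} + \atoms^{|s_2|} + \ldots + \atoms^{|s_k|}, \]
where $s_1, \ldots, s_k$ enumerate the finitely many possible shapes of single-use decision trees of bounded depth. Using $\tau_B$ (Lemma~\ref{lemma:pof-normal-form-su}) to normalise $B$ similarly, and distributing the product via $\distr^*$ (Example~\ref{ex:assoc-star-su}), the domain $\Trees(B,C) \times B$ decomposes as a disjoint sum of components $\atoms^{|s_i|} \times \atoms^{m_j}$. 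By the combinator of Example~\ref{ex:maybe-comb-su}, it suffices to define a single-use evaluation on each such component. The key point is that once a shape $s$ is fixed, the entire control flow of the tree is determined -- we know which input variable each query inspects, whether the compared value is a tree-atom or an input variable, and which constructor labels each leaf.

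For a fixed shape $s$, I would build the evaluation by recursion on $s$. A leaf shape labelled by a constructor $\coproj_i(v_1, \ldots, v_{l_i})$ is realised by a $\shuffle_p$-and-project composition picking out the designated positions of the tree-atom tuple and the input tuple, followed by $\coproj_i$ and $\tau_C^{-1}$ (single-use by Examples~\ref{ex:perm-su}, \ref{ex:projs-su} and Lemma~\ref{lemma:pof-normal-form-su}). A node whose root query is $\mathtt{x_i = v}$, with subshapes $s_\Yes, s_\No$, is realised by first projecting out the input variable $\mathtt{x_i}$ (and the tree-atom corresponding to $v$, if $v$ is a tree-atom), applying $\eqf$ to obtain an element of $1 + 1$, and then dispatching the remaining atoms via the if-then-else combinator of Example~\ref{ex:if-else-su} to the inductive evaluations of $s_\Yes$ and $s_\No$. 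The single-use discipline on the tree (Definition~\ref{def:su-tree}) guarantees that $\mathtt{x_i}$ and $v$ do not reappear in either subtree, so consuming them in the comparison does not starve the recursive calls.

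The main obstacle I anticipate is bookkeeping: partitioning the tree-atom and input-variable tuples into the parts used at the current node and the parts passed down, and plumbing these through $\assoc^*$, $\shuffle_p$, and $\distr^*$ in a way that matches the single-use discipline on both sides of the $\suto$ arrow. However, because the single-use restriction on decision trees mirrors exactly the single-use restriction on functions, no copying or discarding beyond the basic single-use combinators will be required. Finally, the construction introduces only $\eqf$, projections, constructors, isomorphisms of the form $\tau_X^{\pm 1}$, and the combinators $\circ, +, \times$, $[\cdot,\cdot]$, and if-then-else -- in particular, no $\const_a$ -- so by Lemma~\ref{lem:su-fs} the resulting $\mathrm{eval}$ is equivariant, which yields equivariance of $f \mapsto f'$ as noted above.
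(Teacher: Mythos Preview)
Your proposal is correct. The paper's proof is a one-sentence statement (``unpack the leaves of $f$'') accompanied by a picture: it takes a tree representing $f$, whose leaves are constructors encoding trees in $\Trees(B,C)$, and replaces each such leaf by the tree it encodes (renaming the $B$-variables to fresh coordinates of $A \times B$). Your route via an explicit evaluation map $\mathrm{eval} : \Trees(B,C) \times B \suto C$ and the factorisation $f' = \mathrm{eval} \circ (f \times \idf_B)$ is a more structured formalisation of the same idea: if one unfolds the tree-composition construction (Claim~\ref{claim:su-trees-compose-tup}) for this composite, one recovers exactly the unpacked tree of the paper's picture. What your approach buys is modularity (the map $\mathrm{eval}$ is a reusable gadget) and an explicit argument that no $\const_a$ is introduced, whence equivariance of $f \mapsto f'$; what the paper's buys is brevity. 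One small point worth making explicit in your write-up: when recursing into a subshape $s_\Yes$ you must also project the \emph{input}-variable tuple down to those indices still live in $s_\Yes$, not just the tree-atom tuple---this is part of the ``bookkeeping'' you anticipate, and the single-use restriction on decision trees is precisely what guarantees the consumed $x_i$ (and $x_j$, in the $x_i = x_j$ case) is absent from both subshapes.
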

\noindent
In order to construct $f'$, we simply have to unpack the leaves of $f$:
\bigpicc{su-currying-ok}

On the other hand, it might not be possible to translate a function $A \times B \suto C$ into a function $A \suto \Trees(B, C)$.
For example, consider the following function $f : \atoms \times \atoms \suto \atoms$:
\smallpicc{su-no-currying}
\noindent
It is not hard to see that this function cannot be translated into a function $\atoms \suto \Trees(\atoms, \{\Yes, \No\})$ -- 
this is because we need to compare the input argument with both $7$ and $3$.
However, the variable $x_1$ appears only twice in the tree, which means that we can express $f$ as
$f' \in \atoms \suto_2 \Trees(\atoms, \{\Yes, \No\})$:
\[ f'(a) = \textrm{resolve all resolvable queries } T[x_1 := a] \] 
For example, let us compute $f'(7)$. We substitute both appearances of $x_1$ with the atomic constant $7$, 
and resolve all resolvable queries:
\picc{su-mu-currying}
\noindent
This construction can be generalized:
\begin{lemma}
\label{lem:su-mu-currying}
For every $A$, $B$, and $C$, there is an $l \in \nat$, such that 
for every $f$ of type $A \times B \suto C$, there exists an $f' : A \suto_l \Trees(B, C)$, such that 
for every $a \in A$, $b \in B$ and $c \in C$:
\[ f(a, b) = f'(a)(b) \]
Moreover, the mapping $f \mapsto f'$ is equivariant.
\end{lemma}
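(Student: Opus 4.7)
My plan is to use the canonical single-use decision tree representation from Lemma~\ref{lem:eq-tree-repr}. Given $f : A \times B \suto C$, let $T_f := \treeRepr(f) \in \Trees(A \times B, C)$. After applying the normal form $\tau_{A \times B}$ from Lemma~\ref{lemma:pof-normal-form-su}, the variables of $T_f$ partition naturally into \emph{$A$-variables} and \emph{$B$-variables}, and similarly its queries and constructor slots. The key uniformity observation is that $\Trees(A \times B, C)$ is polynomial orbit-finite (Lemma~\ref{lem:su-trees-pof}) built as a finite disjoint union indexed by finitely many tree shapes, so there is a constant $N = N(A,B,C)$ bounding the total number of nodes (and hence occurrences of $A$-variables) across \emph{every} tree in $\Trees(A \times B, C)$.

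For each $a \in A$, I define $f'(a) \in \Trees(B, C)$ by the following deterministic rewriting of $T_f$: substitute $a$ for every $A$-variable slot (both inside queries and inside leaf constructors), and then iteratively resolve every query that has become a comparison of two known values (i.e.\ two atoms from $a$ or atomic constants), collapsing each such node to its appropriate subtree. What remains is a tree whose queries and constructors use only $B$-variables and atomic constants, i.e.\ an element of $\Trees(B, C)$. A straightforward induction on the structure of $T_f$ shows $f'(a)(b) = T_f(a, b) = f(a, b)$, which is the required semantic equation.

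It remains to verify that the map $a \mapsto f'(a)$ lies in $A \suto_l \Trees(B, C)$ for some $l$ depending only on $A, B, C$, and that $f \mapsto f'$ is equivariant. For the fold-use bound, I regard the rewriting described above as a function $\mathrm{red} : A \times \Trees(A \times B, C) \to \Trees(B, C)$; since both its domain and codomain are polynomial orbit-finite and $\mathrm{red}$ is finitely supported, Lemma~\ref{lem:fs-definable} makes it definable, and by analysing the construction one sees that $a$ is consumed at most once per $A$-variable occurrence in the input tree, hence at most $N$ times. Fixing $T_f$ therefore gives $f'(a) = \mathrm{red}(a, T_f)$, expressible as $A \transform{\copyf^l} A^l \suto \Trees(B,C)$ with $l := N$. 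Equivariance of $f \mapsto f'$ follows from the equivariance of $\treeRepr$ (Lemma~\ref{lem:eq-tree-repr}) composed with the equivariance of $\mathrm{red}$ in its tree argument.

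The main obstacle I expect is making the uniformity of $l$ precise: one must argue that the bound on $A$-variable occurrences really is a function of the shapes available in $\Trees(A \times B, C)$ and not of the particular tree $T_f$, and that the rewriting operation can be expressed once and for all by a single $\mathrm{red}$ whose fold-use is controlled by the \emph{global} node bound $N$ rather than by any per-$f$ quantity. This is essentially a bookkeeping argument but requires carefully threading the distinction between the multiple-use access that the tree-shape index provides and the bounded-use access to the atoms carried by $a$.
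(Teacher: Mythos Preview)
Your approach is essentially the paper's: represent $f$ by a single-use decision tree, observe that trees in $\Trees(A\times B,C)$ have uniformly bounded size (hence a uniform bound on $A$-variable occurrences), and substitute-and-resolve using that many copies of $a$. Your use of $\treeRepr$ to secure equivariance of $f\mapsto f'$ makes explicit a detail the paper leaves implicit; the appeal to Lemma~\ref{lem:fs-definable} is superfluous, since the direct occurrence-count (``one copy of $a$ per $A$-variable occurrence, at most $N$ occurrences'') already yields the uniform $l$ and is exactly the argument you end up relying on anyway.
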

\begin{proof}
    Note that thanks to the single-use restriction,
    the size of a tree in $\Trees(A \times B, C)$ is bounded. This means that there is an $l \in \nat$ such that every $x_i$ appears at most
    $l$ times in $\Trees(A \times B, C)$. This means that we can construct $f'$ in the same way as in the previous example.
\end{proof}


\noindent
We are now ready to show that the two types of transition functions are equivalent:
\begin{lemma}
    \label{lem:su-transition-functions-equivalent}
    For every $k \in \nat$ and for every equivariant single-use function $f : \Sigma^k \times Q \suto Q$,
    there is a function $f' : \Sigma \eqto (Q \suto Q)$, such that for every $a \in \Sigma$ and $q \in Q$:
    \[ f'(a)(q) = f(\underbrace{a, a, \ldots, a}_{\textrm{$k$ times}}, q) \]
    And conversely: for every $f \in \Sigma \eqto (Q \suto Q)$, there exists a $k \in \nat$ and an equivariant
    $f' : \Sigma^k \times Q \suto Q$, such that:
    \[ f'(\underbrace{a, a, \ldots, a}_{\textrm{$k$ times}}, q) = f(a)(q) \]
\end{lemma}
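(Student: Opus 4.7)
The plan is to use the single-use decision tree representation from Section~\ref{subsec:single-use-decision-trees}, together with the currying results (Lemmas~\ref{lem:su-curring-one-way} and \ref{lem:su-mu-currying}) and the $k$-fold-use decomposition of finitely supported functions (Lemma~\ref{lem:fs-k-fold}). At a high level, each direction of the equivalence is a form of currying combined with the (equivariant but multiple-use) diagonal $a \mapsto (a, a, \ldots, a)$ from $\Sigma$ to $\Sigma^k$.

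For the first direction, starting from an equivariant single-use function $f : \Sigma^k \times Q \suto Q$, I would represent $f$ by a tree $T \in \Trees(\Sigma^k \times Q, Q)$ using Lemma~\ref{lem:su-functions-su-trees}. For each $a \in \Sigma$, substituting the atoms of $a$ into the positions of $T$ corresponding to the $k$ copies of $\Sigma$ produces a tree $T_a \in \Trees(Q, Q)$: this substitution only replaces variables with atomic constants, which never violates the single-use restriction (that restriction is about variable occurrences, not constant occurrences). I take $f'(a)$ to be the function represented by $T_a$. The identity $f'(a)(q) = f(a, \ldots, a, q)$ is immediate from the construction, and equivariance of $a \mapsto f'(a)$ follows because the substitution commutes with atom permutations. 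An equivalent derivation applies Lemma~\ref{lem:su-mu-currying} to $f$ (to obtain an element of $\Sigma^k \suto_l \Trees(Q, Q)$) and then precomposes with the diagonal.

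For the second direction, starting from $f : \Sigma \eqto (Q \suto Q)$, I use $\treeRepr$ from Lemma~\ref{lem:eq-tree-repr} to produce an equivariant $g : \Sigma \eqto \Trees(Q, Q)$. Since $\Trees(Q, Q)$ is polynomial orbit-finite by Lemma~\ref{lem:su-trees-pof}, $g$ is an equivariant function between polynomial orbit-finite sets. By Lemma~\ref{lem:fs-k-fold} there is some $k$ with $g \in \Sigma \suto_k \Trees(Q, Q)$, i.e., $g = h \circ \copyf^k$ for a single-use $h : \Sigma^k \suto \Trees(Q, Q)$. Finally, applying Lemma~\ref{lem:su-curring-one-way} to $h$ yields a single-use $f' : \Sigma^k \times Q \suto Q$ with $h(\vec a)(q) = f'(\vec a, q)$, and then $f'(a, \ldots, a, q) = h(a, \ldots, a)(q) = g(a)(q) = f(a)(q)$ as required.

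The main obstacle I expect is ensuring that in the second direction the decomposition $g = h \circ \copyf^k$ can be taken with $h$ itself equivariant, so that $f'$ ends up equivariant. I plan to handle this by tracing through the proof of Lemma~\ref{lem:fs-definable}: the only non-equivariant basic functions used there are the $\const_a$'s, and they only enter Claim~\ref{claim:o-to-a-def} when $f(\bar x)$ lies in the support $\alpha$; for an equivariant $g$ we have $\alpha = \emptyset$, so only the projection branch fires and the resulting $h$ is equivariant by Lemma~\ref{lem:su-fs}. A minor secondary issue is bookkeeping the value of $k$, but the lemma statement only requires its existence.
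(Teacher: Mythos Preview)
Your proposal is correct and follows essentially the same route as the paper: the first direction via Lemma~\ref{lem:su-mu-currying} (your tree-substitution argument is just an unfolding of that lemma's proof), and the second direction via the chain $\treeRepr$, Lemma~\ref{lem:fs-k-fold}, Definition~\ref{def:k-fold-use-functions}, Lemma~\ref{lem:su-curring-one-way}. You are actually more careful than the paper in one respect: the paper does not explicitly justify that the resulting $f'$ in the second direction is equivariant, whereas you correctly identify this as a point needing attention and give the right fix (for equivariant $g$ the construction in Lemma~\ref{lem:fs-definable} never invokes $\const_a$, so the single-use $h$ with $g = h \circ \copyf^k$ is equivariant by Lemma~\ref{lem:su-fs}).
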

\begin{proof}
The first part of the proof follows from Lemma~\ref{lem:su-mu-currying}. 
To prove the second part, we transform $f : \Sigma \eqto (Q \suto Q)$ into an equivalent $f' : \Sigma^k \times (Q \suto Q)$,
in the following way:
\[ \boxed{\Sigma \eqto (Q \suto Q)} \longtransform{\treeRepr\ \circ\ \cdot} \boxed{\Sigma \eqto(\Trees(Q, Q))} \longtransform{\textrm{Lemma~\ref{lem:fs-k-fold} }} \]
\[ \boxed{\Sigma \suto_k \Trees(Q, Q)} \longtransform{\textrm{Definition}~\ref{def:k-fold-use-functions}} \boxed{\Sigma^k \suto \Trees(Q, Q)} \longtransform{\textrm{Lemma~\ref{lem:su-curring-one-way}}} \boxed{\Sigma^k \times Q \suto Q} \]
\end{proof}

\subsection{Single-use automaton}
We are now ready to define the \emph{single-use automaton}:
\begin{definition}
\label{ref:single-use-automaton}
A \emph{single-use automaton} consists of:
\begin{enumerate}
    \item a polynomial orbit-finite alphabet $\Sigma$;
    \item a polynomial orbit-finite set of states $Q$;
    \item an equivariant initial state $q_0 \in Q$;
    \item an equivariant single-use\footnote{See Lemma~\ref{lem:su-mu-acceptance}.} acceptance function $f : Q \suto \{\Yes, \No\}$;
    \item a transition function $\delta : \Sigma \eqto (Q \suto Q)$.
\end{enumerate}
\end{definition}

Observe that for $\Sigma = \atoms$, the single-use automaton model is equivalent to the single-use register automaton:
Thanks to Example~\ref{ex:memory-conf-pof}, we know that the set of memory configurations of every single-use register automaton can be represented as a polynomial orbit-finite set.
Conversely, by Lemma~\ref{lemma:pof-normal-form-su}, every polynomial-orbit-finite set is isomorphic to
\[ \atoms^{k_1} + \atoms^{k_2} + \ldots + \atoms^{k_n}\comma \]
whose elements can be stored in the memory of a single-use register automaton.
The transition and acceptance functions are easily seen to be intertranslatable between the two models
(thanks to the single-use decision tree representation). 

As we have mentioned before, the expressive power of the single-use automaton does not change
if we allow for the acceptance function to be multiple-use:
\begin{lemma}
    \label{lem:su-mu-acceptance}
    For every single-use automaton whose acceptance function has the type $Q \eqto \{\Yes, \No\}$, there exists
    a standard single-use automaton (with a single-use accepting function $Q \suto \{\Yes, \No\}$), which recognizes the same 
    language.
    \end{lemma}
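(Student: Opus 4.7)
The plan is to absorb the multiple-use cost of the acceptance function into the state space, using the $k$-fold-use decomposition from Lemma~\ref{lem:fs-k-fold}. The starting observation is that the acceptance function $f : Q \eqto \{\Yes, \No\}$ is finitely supported (in fact equivariant), so by Lemma~\ref{lem:fs-k-fold} there is some $k \in \nat$ and a single-use function $f' : Q^k \suto \{\Yes, \No\}$ such that $f = f' \circ \copyf^k$, where $\copyf^k : Q \to Q^k$ is the $k$-fold diagonal (which exists for every polynomial orbit-finite $Q$ by Example~\ref{ex:copy-general-definable}). Intuitively, we want the new automaton to carry $k$ synchronized copies of the original state, so that at the end of the run it has enough copies of the final state to feed them into $f'$ in a single-use manner.

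Concretely, given the original automaton $(\Sigma, Q, q_0, f, \delta)$ with a multiple-use acceptance function, I would build a new single-use automaton $(\Sigma, Q^k, q_0^k, f', \delta^k)$, where $q_0^k = (q_0, \ldots, q_0)$ (which is equivariant), and
\[ \delta^k : \Sigma \eqto (Q^k \suto Q^k), \qquad \delta^k(a) \ := \ \underbrace{\delta(a) \times \delta(a) \times \cdots \times \delta(a)}_{k \text{ times}}. \]
The set $Q^k$ is polynomial orbit-finite since polynomial orbit-finite sets are closed under products, and $\delta^k(a)$ is single-use because the $\times$-combinator preserves single-use functions (Definition~\ref{def:single-use-functions}). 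Equivariance of $\delta^k$ follows from equivariance of $\delta$ together with the fact that building the $k$-fold product is an equivariant operation on single-use functions.

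The correctness step is a straightforward induction on the length of the input: after reading any word $w \in \Sigma^*$, the state of the new automaton is $(q, q, \ldots, q)$ where $q$ is the state of the original automaton after reading $w$. The base case is immediate from the choice of initial state, and the inductive step holds because $\delta^k(a)$ applies $\delta(a)$ componentwise to a diagonal tuple, producing another diagonal tuple. Hence at the end of the run the new acceptance function returns $f'(q,\ldots,q) = f'(\copyf^k(q)) = f(q)$, so the two automata accept exactly the same words.

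The only subtlety worth flagging is that Lemma~\ref{lem:fs-k-fold} is stated for functions between polynomial orbit-finite sets, which is exactly our situation since $Q$ and $\{\Yes, \No\} = 1+1$ are both polynomial orbit-finite; no additional work is needed there. There is no real obstacle in this proof -- the key idea is just that adding $\copyf$ to the prelude of a finitely supported function can always be pushed into the state, because the product combinator $\times$ lets us thread $k$ parallel copies of $\delta(a)$ through the transition without violating the single-use discipline.
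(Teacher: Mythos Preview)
Your proof is correct and follows essentially the same approach as the paper: apply Lemma~\ref{lem:fs-k-fold} to write the acceptance function as a single-use map out of $Q^k$, and then have the new automaton maintain $k$ parallel copies of the original state so that at the end it can feed the diagonal tuple to this map. The paper's proof is terser but the underlying construction is identical.
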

    \begin{proof}
        Let $\mathcal{A}$ be the automaton whose acceptance function $f$ has the type $Q~\eqto~\{\Yes,~\No\}$.  
        By Lemma~\ref{lem:fs-k-fold}, the function $f$ can be transformed into an equivalent
        $f' : Q \suto_k \{\Yes, \No\}$. The acceptance function is used only once,
        at the end of the computation, so we can simulate $\mathcal{A}$, using a single-use
        automaton that maintains $k$ copies of the state of $\mathcal{A}$
        (the automaton's set of states is $Q^k$).
    \end{proof}

\subsubsection{Single-use two-way automaton}
Using single-use functions, we can easily define single-use versions of many classical automata models:
In the next two chapters, we are going to see single-use Mealy machines, single-use two-way transducers, and
single-use SSTs. For now, we define the \emph{single-use two-way automaton}:

\begin{definition}
\label{def:single-use-two-way-automataon}
    A \emph{single-use two-way automaton} consists of:
\begin{enumerate}
    \item a polynomial orbit-finite alphabet $\Sigma$;
    \item a polynomial orbit-finite set of states $Q$;
    \item an equivariant initial state $q_0 \in Q$;
    \item a two-way transition function:
    \[
        \delta \ :\  (\Sigma + \{\vdash, \dashv\}) \ \longrightarrow_{\textrm{eq}}\  \left(Q \ \suto\  (Q \times \{\leftarrow, \rightarrow\}) \; + \;  \{\Yes, \No\}
        \right)
    \]
\end{enumerate}
\end{definition}
The model is analogous to the orbit-finite two-way automaton from Section~\ref{sec:dofa-variants},
but much weaker. An important difference between the 
two models is that the single-use two-way automaton has decidable 
emptiness. This is because, as we are going to show in Theorem~\ref{thm:dsua-2dsua-ofm}, 
every two-way single-use automaton can be effectively translated into 
a one-way single-use automaton. Which is, in turn, a
special case of a one-way orbit-finite automaton whose emptiness 
is decidable by \cite[Theorem~5.12]{bojanczyk2019slightly}. 

\subsection{Single-use automata and orbit-finite monoids}
\label{sec:su-automata-and-of-monoids}
The following theorem states the key property of the single-use automaton:
\begin{theorem}
\label{thm:dsua-2dsua-ofm}
All the following models recognize the same class of languages:
\begin{enumerate}
    \item One-way single-use automaton;
    \item Two-way single-use automaton;
    \item Orbit-finite monoids (limited to polynomial orbit-finite alphabets).
\end{enumerate}
\end{theorem}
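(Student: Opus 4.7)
The plan is to establish the theorem by a cycle of implications: single-use one-way automata are simulated by orbit-finite monoids, and conversely; additionally, single-use two-way automata are simulated by single-use one-way automata (the reverse inclusion is trivial since a one-way machine is a special case of a two-way one).

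For the direction from one-way automata to monoids, I would mirror the transition-monoid construction used in the classical equivalence in Lemma~\ref{lem:aut-mon-eq}. Given an automaton with state space $Q$, transition $\delta : \Sigma \eqto (Q \suto Q)$, initial state $q_0$, and acceptance $f : Q \suto \{\Yes,\No\}$, take $M := (Q \suto Q)$ with composition as multiplication and the identity as the unit. By Theorem~\ref{thm:su-orbit-finite}, $M$ is orbit-finite; composition of single-use functions is again single-use, equivariant, and associative, so $M$ is an orbit-finite monoid. The homomorphism $h : \Sigma \eqto M$ is $\delta$ itself, and the accepting subset $F := \{g \in M \mid f(g(q_0)) = \Yes\}$ is equivariant because $f$ and $q_0$ are.

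For the opposite direction, the obvious attempt---states $= M$, transition $q \mapsto q \cdot h(a)$---runs into a subtle obstruction: monoid multiplication is only required to be equivariant, and by Lemma~\ref{lem:fs-k-fold} it is only $k$-fold use for some $k \in \nat$, so right-multiplication by a fixed element need not be single-use in its state argument. The fix is to inflate the state space to $\bar M^k$, where $\bar M$ is a polynomial-orbit-finite representation of $M$ obtained via Lemma~\ref{lem:straight-reflect-repr}, and to preserve the invariant that all $k$ components store the same value. On reading $a$, the automaton uses its multiple-use access to $a$ to produce enough independent copies of $h(a)$, and performs $k$ parallel multiplications, each consuming one stored copy of the state together with one copy of $h(a)$; this yields $k$ fresh copies of the new state while respecting single-use on $\bar M^k$. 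Acceptance, being equivariant and therefore $k'$-fold use by Lemma~\ref{lem:fs-k-fold}, is read off from enough copies.

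For two-way to one-way I would use a single-use analog of Shepherdson's behavior-function construction. The simulating one-way machine remembers, for each processed prefix $w$, the right-to-right behavior $b_w \in Q \suto (Q + \{\text{halt}\})$: intuitively, $b_w(q)$ is the state in which the two-way machine exits $w$ on the right when entered on the right in state $q$, or a halt-verdict if the run terminates inside $w$. By Theorem~\ref{thm:su-orbit-finite} the ambient function space is orbit-finite, so the state space of the simulation is too. The transition on a new letter $a$ unfolds the zigzagging between $w$ and $a$ into a sequence of alternating applications of $b_w$ and of the local transition on $a$; as in the monoid-to-automaton direction, this can be implemented by storing $b_w$ in several copies so that each zigzag consumes a fresh single-use access to it. The main obstacle I anticipate is bounding the number of such copies that suffice: the bound must be derived from the single-use discipline together with orbit-finiteness of $Q$ rather than from any finite count of states, and it is here that the absence of copying does the essential work by preventing unbounded oscillation between the prefix and the new letter.
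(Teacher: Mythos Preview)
Your one-way $\Rightarrow$ monoid direction is correct and matches the paper's Lemma~\ref{lem:1sua-incl-ofm} exactly.

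The monoid $\Rightarrow$ one-way direction, however, has a genuine gap. Your ``maintain $k$ copies'' trick does not close: if multiplication is $k$-fold use, then \emph{one} application of the product consumes $k$ copies of the state and produces \emph{one} copy of the result. To restore the invariant that the state stores $k$ identical copies you would need $k^2$ copies to begin with, and by induction no fixed number suffices. Your sentence ``each consuming one stored copy of the state'' silently assumes that right-multiplication by $h(a)$ is single-use in its state argument, which is precisely what is not given. Example~\ref{ex:su-at-most-3} illustrates why the crude product $\bar M^k$ is the wrong state space: the single-use automaton for ``at most $3$ distinct letters'' keeps \emph{different} numbers of copies of \emph{different} atoms (three copies of one, two of another, one of the third), not a uniform power of the monoid. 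The paper's route (Lemma~\ref{lem:ofm-to-sua}) is accordingly much heavier: it passes through local semigroup transductions and the Krohn--Rhodes theorem for single-use Mealy machines (Theorems~\ref{thm:kr} and~\ref{lem:single-use-mealy-monoid-transduction}), which is the main technical result of the thesis.

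Your two-way $\Rightarrow$ one-way plan inherits the same defect: updating the stored behaviour $b_w$ to $b_{wa}$ requires applying $b_w$ once per zigzag, and while the number of zigzags is indeed bounded (this is Claim~\ref{claim:two-way-behaviour-single-use}), the bound depends on $|Q_\alpha|$ where $\alpha$ is the support of $wa$, which grows with the input; no fixed number of stored copies of $b_w$ suffices. The paper sidesteps this by going two-way $\Rightarrow$ monoid directly (Lemma~\ref{lem:2sua-incl-ofm}): it shows each $b_w$ is a single-use function, hence the map $w \mapsto b_w$ is compositional into an orbit-finite set, and Lemma~\ref{lem:compositional-to-monoids} yields the monoid. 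The hard direction monoid $\Rightarrow$ one-way is then invoked once, via Krohn--Rhodes.
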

\noindent
We prove this theorem using the following strategy:\\

\vspace{0.3cm}
\adjustbox{width=0.8\textwidth, center}{
\begin{tikzcd}
	{\substack{\textrm{Single-use}\\ \textrm{one-way}}} &&&& {\substack{\textrm{Single-use}\\ \textrm{two-way}}} \\
	\\
	&& {\substack{\textrm{Orbit-finite}\\ \textrm{monoids}}}
	\arrow["{\substack{\color{gray}\textrm{Lemma~\ref{lem:ofm-to-sua}}\\ \color{gray}\textrm{in Chapter~\ref{ch:mealey}}}}"{description}, from=3-3, to=1-1]
	\arrow["{\textrm{\color{gray}Obvious}}"{description}, from=1-1, to=1-5]
	\arrow["{\textrm{\color{gray}Lemma~\ref{lem:2sua-incl-ofm}}}"{description}, tail reversed, no head, from=3-3, to=1-5]
\end{tikzcd}}
\vspace{0.3cm}

In this chapter we show how to translate a single-use two-way automaton into an orbit-finite monoid.
In order to make the two-way construction easier to understand, we 
first present the construction for a one-way single-use automaton.

\subsection{One-way single-use automata $\subseteq$ Orbit-finite monoids}
\label{subsec:one-way-sua-to-ofm}
This section is dedicated to proving the following lemma:
\begin{lemma}
\label{lem:1sua-incl-ofm}
If a language $L$ over a polynomial orbit alphabet $\Sigma$ is recognized by a one-way single-use automaton, 
then it is also recognized by an orbit-finite monoid.
\end{lemma}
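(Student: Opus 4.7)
The plan is to mimic the classical transition-monoid construction from Lemma~\ref{lem:aut-mon-eq}, replacing the monoid of all functions $Q \to Q$ by the monoid of single-use functions $Q \suto Q$. Concretely, given a one-way single-use automaton $\mathcal{A} = (\Sigma, Q, q_0, \mathrm{acc}, \delta)$, I take
\[ M := (Q \suto Q), \qquad g_1 \cdot g_2 := g_2 \circ g_1, \qquad 1_M := \idf_Q. \]
To see that this is actually a monoid, I note that composition of single-use functions is again single-use by the $\circ$-combinator of Definition~\ref{def:single-use-functions}, the identity $\idf_Q$ on any polynomial orbit-finite set is single-use (built from $\idf$ on $\atoms$ together with $1$-combinators via $+$ and $\times$), and associativity is inherited from composition of functions.

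Next I would verify that $M$ is an orbit-finite monoid. Orbit-finiteness of the underlying set is precisely Theorem~\ref{thm:su-orbit-finite}, and to treat $M$ formally as a set with atoms I would use the canonical tree representation $\treeRepr$ from Lemma~\ref{lem:eq-tree-repr} to identify $Q \suto Q$ with an equivariant quotient of the polynomial orbit-finite set $\Trees(Q, Q)$. Equivariance of the multiplication is then just the general identity $\pi(g \circ f) = \pi(g) \circ \pi(f)$ for any atom permutation $\pi$, which transports through the tree representation.

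I would then define the homomorphism and accepting set: take $h : \Sigma \eqto M$ to be the transition function $\delta$ itself (which is equivariant by definition of the single-use automaton), and
\[ F := \{\, g \in M \mid \mathrm{acc}(g(q_0)) = \Yes \,\}, \]
which is equivariant because $\mathrm{acc}$, $q_0$, and the evaluation map $g \mapsto g(q_0)$ are all equivariant. To see that $(M, h, F)$ recognizes $L$, I unfold: for any $w = w_1 \cdots w_n \in \Sigma^*$ the monoid product
\[ h(w_1) \cdot h(w_2) \cdot \ldots \cdot h(w_n) \;=\; \delta(w_n) \circ \delta(w_{n-1}) \circ \ldots \circ \delta(w_1) \]
applied to $q_0$ yields the state reached by $\mathcal{A}$ after reading $w$, and composing with $\mathrm{acc}$ tells us exactly whether $w \in L$.

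The only step that requires care is ensuring that $M$ is a bona fide orbit-finite set with atoms together with an equivariant multiplication, rather than merely an abstract collection of functions; this is where I would lean on the tree representation (Definition~\ref{def:su-tree}, Lemmas~\ref{lem:su-trees-pof} and~\ref{lem:eq-tree-repr}) and on Theorem~\ref{thm:su-orbit-finite}. Once that bookkeeping is in place, the verification that the monoid recognizes $L$ is an essentially routine unfolding of definitions.
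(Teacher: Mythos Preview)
Your proposal is correct and follows essentially the same approach as the paper: both take $M = Q \suto Q$ with reversed composition, invoke Theorem~\ref{thm:su-orbit-finite} for orbit-finiteness, and set $h = \delta$ and $F = \{g : \mathrm{acc}(g(q_0)) = \Yes\}$. If anything, you are slightly more careful than the paper about the bookkeeping needed to realise $Q \suto Q$ as a set with atoms via the tree representation, whereas the paper simply asserts orbit-finiteness and moves on.
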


The proof of the lemma is very similar to the classical translation of a finite automaton into a finite monoid
(see Lemma~\ref{lem:aut-mon-eq}).
Take a single-use one-way automaton $\mathcal{A}$ over a polynomial orbit-finite alphabet $\Sigma$,
and let $Q$ be its polynomial orbit-finite set of states. For every word $w \in \Sigma^*$, define $\mathcal{A}$'s behaviour on $w$ as the following function $Q \fsto Q$:
\[ b_w(q) = \substack{\textrm{In which state will $\mathcal{A}$ exit $w$ on the right,}\\\textrm{ if it enters $w$ on the left in the state $q$?}} \]
Note that behaviours are compositional:  $b_{uv} = b_{v} \circ b_{u}$. Since single-use functions 
are (by definition) closed under compositions, it follows that, all possible behaviours of $\mathcal{A}$ are single-use functions.
This means that the language of $L$ is recognized by the monoid $Q \suto Q$ with the following operation
as $f \cdot g = g \circ f$, and with the following accepting set:
\[  \{f \ |\  f(q_0) \textrm{ is an accepting state}\} \]
This finishes the proof of Lemma~\ref{lem:1sua-incl-ofm}, because by Theorem~\ref{thm:su-orbit-finite},
$Q \suto Q$ is orbit-finite.
\subsection{Two-way single-use automata $\subseteq$ Orbit-finite monoids}
\label{sec:su-two-way-to-monoids}
In this section we show how to translate a two-way single-use automaton into an orbit-finite monoid:
\begin{lemma}
    \label{lem:2sua-incl-ofm}
    If a language $L$ over a polynomial orbit alphabet $\Sigma$ is recognized by a two-way single-use automaton, then it is also recognized by an orbit-finite monoid.
\end{lemma}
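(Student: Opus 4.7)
The plan is to adapt the classical crossing-sequence translation from two-way automata to monoids. For every word $w \in \Sigma^*$, I would associate a \emph{behavior function} $b_w$ of type
\[ I \suto I + \{\Yes, \No\}, \qquad \text{where } I := Q \times \{\ell, r\}, \]
with the following semantics: the input $(q, \ell)$ (respectively $(q, r)$) means the automaton enters $w$ at its left (respectively right) boundary in state $q$; an output $(q', \ell)$ (respectively $(q', r)$) means the automaton exits $w$ on its left (respectively right) boundary in state $q'$; and an output $\Yes$ or $\No$ means the automaton halts with the corresponding verdict somewhere inside $w$. Since $I \simeq Q \times (1+1)$ is polynomial orbit-finite, Theorem~\ref{thm:su-orbit-finite} implies that the set $M := I \suto I + \{\Yes, \No\}$ is orbit-finite.

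Next, I would equip $M$ with a monoid structure. The unit is the empty-word behavior, routing $(q, \ell)$ to $(q, r)$ and $(q, r)$ to $(q, \ell)$. The multiplication $b_1 \cdot b_2$ is defined by \emph{bouncing}: on an input interface $i$, one iteratively applies $b_1$ or $b_2$ depending on whether the automaton is currently in the left or right factor, until the trace exits on the outer left of $b_1$, on the outer right of $b_2$, or produces $\Yes$ or $\No$; any trace that returns to the shared boundary in a previously seen interface state is declared a loop and yields the output $\No$. One checks that this operation is associative and equivariant, and that it matches concatenation, $b_{uv} = b_u \cdot b_v$, by tracing the two-way run of the automaton on $uv$ and observing that each middle-boundary crossing corresponds to exactly one bouncing iteration. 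The map $h \colon \Sigma \to M$ defined by $h(a) := b_a$ is equivariant (since the transition function is), and the acceptance set $F := \{b \in M : b(q_0, \ell) = \Yes\}$ is an equivariant subset of $M$; together, $(M, h, F)$ recognizes the language $L$.

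The hard part will be defining the bouncing multiplication as a well-defined, equivariant, total function $M \times M \to M$. In the classical finite case, the number of middle-boundary crossings is bounded uniformly by $|Q|$, but here $Q$ is only orbit-finite, so no single numeric bound is available. The resolution I expect is this: for any fixed $b_1, b_2 \in M$ and input interface $i$, every interface state arising in the bouncing trace is supported, by Lemma~\ref{lem:fs-functions-preserve-supports}, by $\supp(b_1) \cup \supp(b_2) \cup \supp(i)$; since this is a finite set of atoms, only finitely many interface states can carry such a support, so the iteration must either exit or revisit a previously seen interface state after boundedly many steps, triggering the loop-detection rule. Equivariance of the resulting operation follows routinely from the equivariance of each bouncing step, and this is all that is required, since orbit-finite monoids only demand that the multiplication be equivariant, not single-use.
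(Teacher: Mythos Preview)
Your approach is essentially the paper's: define two-way behaviors $b_w \in M := I \suto I + \{\Yes,\No\}$, use Theorem~\ref{thm:su-orbit-finite} to get orbit-finiteness, and compose behaviors by bounded bouncing justified by a support argument. The paper packages the last step slightly differently---it proves that each $b_w$ is single-use by induction on $|w|$ (Claim~\ref{claim:two-way-behaviour-single-use}) and then invokes the compositional-function framework (Lemma~\ref{lem:compositional-to-monoids}), so that the monoid is the \emph{image} of $w\mapsto b_w$ and associativity comes for free---whereas you try to equip all of $M$ with a multiplication and must verify associativity by hand.

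There is one genuine gap. You define $b_1 \cdot b_2$ via \emph{loop detection} (``returns to the shared boundary in a previously seen interface state''), but this is not a single-use procedure: comparing the current interface state to each previously stored one destroys the atoms in those stored states, so you cannot both remember them and test against them. What your support argument actually buys is a uniform \emph{a priori} bound $K$ on the number of middle-boundary crossings---uniform because $|\supp(i)| \le \dim I$, so the number of interface states supported by $\supp(b_1)\cup\supp(b_2)\cup\supp(i)$ is bounded by a function of $|\supp(b_1)\cup\supp(b_2)|$ and $\dim I$ alone. With that bound in hand you should \emph{define} $b_1\cdot b_2$ as the $K$-fold unrolling of the bounce (declaring $\No$ if no exit occurs by step $K$), exactly as the paper does with its functions $(\shortrightarrow va)_k$. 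This unrolling is a finite composition of $b_1$, $b_2$, and if-then-else combinators, hence lies in $M$; and it agrees with the loop-detection semantics because any trace of length exceeding $K$ must repeat a boundary state. Once you make this switch, the remaining checks (associativity, equivariance, $b_{uv}=b_u\cdot b_v$) go through.
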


\noindent
In the proof we are going to use the theory of \emph{compositional functions} (described, for example, in \cite[page~5]{bojanczyk2020languages}):
\begin{definition}
    Let $\Sigma$ and $R$ be arbitrary sets (possibly infinite).
    We say that a function $h : \Sigma^* \to R$ is compositional if for every $u, w \in \Sigma^*$,
    the value $h(uw)$ is uniquely determined by the values of  $h(u)$ and $h(w)$.     
\end{definition}

For example, the function $f : \Sigma^* \to \nat$ that maps every word to its length is compositional,
and the function $g : \{a, b\}^* \to \{a, b, =\}$, defined as follows, is not compositional:
\[ g(w) = \begin{cases}
    a & \textrm{if there are more $a$'s than $b$'s in $w$}\\
    b & \textrm{if there are more $b$'s than $a$'s in $w$}\\
    = & \textrm{if there is equally many $a$'s and $b$'s in $w$}
\end{cases}\]

A compositional function $h : \Sigma^* \to R$ together with an accepting set $F \subseteq R$
can be used to recognize the language: $\{ w \ | \ f(w) \in F\} \subseteq \Sigma^*$.

\begin{lemma}
\label{lem:compositional-to-monoids}
If a language $L \subseteq \Sigma^*$ is recognized by an equivariant, compositional function
$\Sigma^* \eqto R$, for an orbit-finite $R$, then $L$ is also recognized by an orbit-finite monoid.
\end{lemma}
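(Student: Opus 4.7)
The plan is to read a monoid directly off the compositional function $h$ by taking its image inside $R$ and using compositionality to transport concatenation of words to a binary operation on that image. Concretely, I would set $M := h(\Sigma^*) \subseteq R$, define a product by $h(u) \cdot h(v) := h(uv)$, and take $h(\varepsilon)$ as the neutral element; then it remains to check that this really is an orbit-finite monoid that recognizes $L$.

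The first step is to argue that $M$ is orbit-finite. Since $h$ is equivariant, for every atom permutation $\pi$ and every $w \in \Sigma^*$ we have $\pi(h(w)) = h(\pi(w)) \in M$, so $M$ is an equivariant subset of the orbit-finite set $R$. By Claim~\ref{claim:of-subsets}, $M$ itself is orbit-finite. The second step is to verify that the product is a well-defined equivariant function $M \times M \eqto M$: compositionality is exactly the statement that $h(uv)$ depends only on $h(u)$ and $h(v)$, so the definition of $\cdot$ does not depend on the chosen representatives $u, v$; associativity is inherited from associativity of word concatenation; and $h(\varepsilon)$ is a two-sided unit because $\varepsilon w = w \varepsilon = w$. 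Equivariance of $\cdot$ reduces once more to equivariance of $h$ and of concatenation on $\Sigma^*$, via
\[ \pi\bigl(h(u) \cdot h(v)\bigr) \;=\; \pi(h(uv)) \;=\; h(\pi(u)\pi(v)) \;=\; \pi(h(u)) \cdot \pi(h(v)). \]

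For the recognition step, let $F \subseteq_{\textrm{eq}} R$ be the equivariant accepting subset witnessing that $h$ recognizes $L$, and take the homomorphism $h' : \Sigma \eqto M$ defined by $h'(\sigma) := h(\sigma)$ together with the accepting set $F' := F \cap M \subseteq_{\textrm{eq}} M$. A straightforward induction on the length of $w = \sigma_1 \sigma_2 \cdots \sigma_n$, using the definition of $\cdot$, gives $h'(\sigma_1) \cdot h'(\sigma_2) \cdots h'(\sigma_n) = h(w)$, so $w \in L$ iff $h(w) \in F$ iff the product lies in $F'$. I do not expect any serious obstacle here: the statement is essentially a bookkeeping translation between the compositional-function formulation and the monoid formulation, and the only slightly non-trivial observation is that the image of an equivariant function on an equivariant set is an equivariant subset of the codomain, which is what lets us inherit orbit-finiteness from $R$.
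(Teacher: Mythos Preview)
Your proof is correct and follows essentially the same approach as the paper: take $M$ to be the image of $h$ inside $R$, use compositionality to define the product, check associativity and the unit via word concatenation, and verify equivariance. Your argument for orbit-finiteness of $M$ (as an equivariant subset of $R$, via Claim~\ref{claim:of-subsets}) is in fact cleaner than the paper's, which appeals instead to preservation of orbit-finiteness under images but phrases it somewhat confusingly.
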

\begin{proof}
    Let $M$ be the image of $\Sigma$ under $f$, and let us define the following operation on $M$:
    \[ \begin{tabular}{cc}
        $x \cdot y = f(uv)$ & where $u, v \in \Sigma^*$ are such that $f(u) = x$ and $f(v) = y$
    \end{tabular} \]
    Because $M = f(\Sigma)$, we know that such $u$ and $v$ always exist, and thanks to $f$'s 
    compositionality, we know that the value $x \cdot y$ does not depend on the choice of $u$ and $v$.
    This operation is associative: for every $x, y, z$, if we pick some $u, v, w$ 
    such that $f(u) = x$, $f(v) = y$, $f(w) = z$, we have that:
    \[ (x \cdot y) \cdot z = f(uv) \cdot z = f((uv)w) = f(u(vw)) = x \cdot f(vw) = x \cdot (y \cdot z) \]
    In a similar way, we can show that the image of the empty word ($f(\epsilon)$) is the operation's identity element.
    It follows that $M$ is a monoid, and it is easy to see that $M$ recognizes $L$. This leaves us with 
    showing that $M$ is an orbit-finite monoid. To see that $M$ is an orbit-finite set, 
    we notice that $M = h(R)$. This is enough, because orbit-finiteness is preserved by taking images under finitely supported functions (\cite[Lemma~3.24]{bojanczyk2019slightly}).
    Finally, we show that the product operation is equivariant: for every $x, y$ and $\pi$, if we pick some $u, v$ such that 
    $f(u) = x$, and $f(v)= y$, we have that:
    \[\pi(x \cdot y) = \pi(f(uv)) = f(\pi(uv)) = f(\pi(u)\pi(v)) = \pi(x) \cdot \pi(y)\]
\end{proof}

We are now ready to show how to translate two-way single-use automata into orbit-finite monoids. The construction 
is a single-use variant of \cite[Theorem~2]{shepherdson1959reduction}: Take a two-way automaton 
$\mathcal{A}$ over the alphabet $\Sigma$, and let $Q$ be its polynomial orbit-finite set of states.
We define $\mathcal{A}$'s behaviour on a word $w \in \Sigma^*$ to be a function:
\[ \begin{tabular}{cccc}
    $b_w:$ & $ Q\times \{\leftarrow, \rightarrow\}$ & $\longrightarrow$ & $(Q \times \{\leftarrow, \rightarrow\}) + \{\Yes, \No\}$ \\
\end{tabular} 
\]
The function is analogous to the one-way behaviour, but this time $\mathcal{A}$ can enter $w$ from the left or from the right, 
and it might leave $w$ from the left or from the right. It might also never exit $w$, because it accepts,
rejects, or starts to loop (which is considered as rejecting) inside $w$. Let us now show that all behaviours are single-use functions:
\begin{claim}
\label{claim:two-way-behaviour-single-use}
    For every $w \in \Sigma^*$, function $b_w$ is single-use:
    \[ \begin{tabular}{ccccc}
        $b_w$ & $\in$ & $ Q\times \{\leftarrow, \rightarrow\}$ & $\longsuto$ & $(Q \times \{\leftarrow, \rightarrow\}) + \{\Yes, \No\}$ \\
    \end{tabular} 
    \] 
\end{claim}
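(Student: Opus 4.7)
The plan is to prove the claim by induction on $|w|$, building the single-use construction of $b_w$ directly from $\delta$. The base cases are routine: for $w = \epsilon$ the automaton immediately exits the empty word, so $b_\epsilon$ is just the inclusion $\coproj_1 : Q \times \{\leftarrow, \rightarrow\} \to (Q \times \{\leftarrow, \rightarrow\}) + \{\Yes, \No\}$. For $w = a$ a single letter, $b_a$ applies the single-use $\delta(a)$ to $q$ (the input direction is irrelevant since there is only one position, so after a single-use dispatch on $d$ we invoke $\delta(a)$ in either branch), which is single-use by the very definition of the automaton.

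For the inductive step, write $w = uv$ with both factors shorter. By induction $b_u$ and $b_v$ are single-use. The behaviour $b_{uv}(q,d)$ is computed by a ``bouncing'' iteration: dispatching on $d$, we apply either $b_u$ or $b_v$ to $q$; based on the output we either return a verdict, return an outer-boundary exit, or re-enter the other factor by applying the other behaviour to the freshly produced boundary state. Crucially, each bounce consumes exactly one state and emits at most one new state, so the chain of applications naturally respects the single-use discipline on the initial input $q$: the original $q$ is passed to $b_u$ or $b_v$ exactly once, and every subsequent invocation feeds on the freshly produced output of the previous step.

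The main obstacle is that this iteration may a priori run unboundedly, whereas a single-use construction must be a finite composition. The plan is to resolve this by bounding the number of bounces. By Lemma~\ref{lem:fs-functions-preserve-supports} every boundary state $q'$ satisfies $\supp(q') \subseteq \supp(q) \cup \supp(u) \cup \supp(v) \cup \supp(\delta)$, and since $Q$ is polynomial orbit-finite the support sizes of its elements are uniformly bounded, so the number of distinct boundary configurations in $Q \times \{\leftarrow,\rightarrow\}$ with support contained in this set is bounded by a finite number $N$ depending only on $u$ and $v$ (via Lemma~\ref{lem:of-so-aof} applied to each orbit of $Q$ to count the $\alpha$-supported elements for $\alpha = \supp(u) \cup \supp(v) \cup \supp(\delta)$ together with the bounded orbit dimension of $q$). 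Since the automaton is deterministic, any simulation visiting more than $N$ distinct boundary configurations must repeat one and hence loop, which we treat as $\No$. Unrolling the bounce iteration to depth $N$, using the $\orf$ combinator (Example~\ref{ex:if-else-su}) to branch on the output at each step and defaulting to $\No$ after $N$ steps, yields an explicit single-use construction of $b_{uv}$ and completes the induction.
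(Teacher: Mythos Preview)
Your proof is correct and follows essentially the same strategy as the paper's: induction on $|w|$, with the inductive step simulating the back-and-forth between two factors, bounding the number of bounces by the finiteness of states supported by a fixed finite set of atoms, and then unrolling the bounce to that depth via single-use case analysis. The paper uses the specific split $w = va$ (peeling off one letter) rather than your general $w = uv$, and is in fact slightly less explicit than you are about including $\supp(q)$ in the support bound---though the uniformity in $q$ follows either way from $|\supp(q)| \le \dim(Q)$; your mention of $\supp(\delta)$ is harmless but superfluous since $\delta$ is equivariant.
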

\begin{proof}
    We prove the claim by induction on the length of $w$. For the empty word, we have that $b_\epsilon = \coproj_1$
    (note that the symbol $\leftarrow$ means entering from the left and exiting from the right).
    Now, for the induction step let us take $w = va$. 
    We know that $b_a$ is equal to $\delta(a)$, which means that it is single-use. Thanks to the induction assumption,
    we also know that $b_v$ is single-use. Now, let us use $b_v$ and $b_a$ to construct $b_{va}$ as a single-use 
    function. Our first (incorrect) approach is to define four mutually recursive functions, each of the type:
    \[ \begin{tabular}{ccc}
        $Q$ & $\longsuto$ & $(Q \times \{\leftarrow, \rightarrow\}) + \{\Yes, \No\}$ \\
    \end{tabular} 
    \]
    They correspond to four possible entry points to $va$:
    $(\shortrightarrow \! va)$, $(va \! \shortleftarrow)$, $(v \! \shortleftarrow \! a)$, and $(v \!\shortrightarrow \! a)$.
    Here are their definitions:
    \[
    \begin{tabular}{cc}
        $(\shortrightarrow va)(q) = \begin{cases}
            (v \shortrightarrow a)(q') & \substack{ \textrm{if }b_v(q, \rightarrow)= (q', \rightarrow)}\\
            b_v(q, \rightarrow)  & \textrm{otherwise} 
        \end{cases}$ &
        $(va\shortleftarrow)(q) = \begin{cases}
            (v \shortleftarrow a)(q') & \substack{\textrm{if } b_a(q, \leftarrow) = (q', \leftarrow)}\\
            b_a(q, \leftarrow)  & \textrm{otherwise} 
        \end{cases}$ \\
        &\\
        $(v \shortleftarrow a)(q) = \begin{cases}
            (v \shortrightarrow a)(q') & \substack{\textrm{if }  b_v(q, \leftarrow) = (q', \rightarrow)}\\
            b_v(q, \leftarrow)  & \textrm{otherwise} 
        \end{cases}$ &
        $(v\shortrightarrow a)(q) = \begin{cases}
            (v \shortleftarrow a)(q') & \substack{\textrm{if }  b_a(q, \rightarrow) = (q', \leftarrow)}\\
            b_a(q, \rightarrow)  & \textrm{otherwise} 
        \end{cases}$
    \end{tabular}
    \]
    There are two problems 
    with this approach: First, if $\mathcal{A}$ starts to loop in $va$, then this recursive definition will never terminate. 
    Second, it is unclear if single-use functions are closed under taking fixpoints. We can deal with both of those problems
    by bounding the number of times that $\mathcal{A}$ can cross between $v$ and $a$:
    Let $\alpha \subset_{\textrm{fin}} \atoms$ be the finite set of all atoms that appear in $va$ (i.e. the support of $va$). 
    Then,  the subset $Q_\alpha \subseteq Q$ of all states supported 
    by $\alpha$ is finite as well -- this can be shown either by induction on the structure of $Q$, or 
    by applying \cite[Lemma~5.2]{bojanczyk2013nominal}. Now, let us notice that all the 
    states that appear in $\mathcal{A}$'s run on $va$ are supported by $\alpha$
    (thanks to Lemma~\ref{lem:fs-functions-preserve-supports}), which means that 
    they belong to $Q_\alpha$. It follows that if $\mathcal{A}$ 
    crosses between $v$ and $a$ more than $2 \cdot |Q_\alpha|$ times,
    it will visit some state for the second time in the same position, which means that it is going to loop\footnote{
    Observe that this is not true for multiple-use two-way automata. In particular, the automaton from Claim~\ref{claim:2dofa-all-distinct} can visit the last position of a word 
    $O(n)$ times, where $n$ is the length of the word. However, as a further research idea, it might be interesting to consider
    a semantically restricted model of \emph{$k$-crossing (multiple-use) orbit-finite two-way automata}.
    I would like to thank Emmanuel Filiot for this suggestion.}
    In order to use this observation, we define $(\shortrightarrow \! va)_k$ to be the limited version of $(\shortrightarrow \! va)$, 
    that only allows $\mathcal{A}$ to cross $k$ times between $v$ and $a$, before it rejects the input.
    And similarly, for $(va \! \shortleftarrow)_k$, $(v \! \shortleftarrow \! a)_k$, and $(v \!\shortrightarrow \! a)_k$. 
    The definition of the functions is inductive on $k$. Here is the definition for$(\shortrightarrow \! va)_k$
    (other definitions are analogous): 
    \[
            (\shortrightarrow va)_k(q) = \begin{cases}
                (v \shortrightarrow a)_{k-1}(q') & \substack{ \textrm{if }b_v(q, \rightarrow)= (q', \rightarrow)}\\
                b_v(q, \rightarrow)  & \textrm{otherwise} 
            \end{cases}
    \]
    For the induction base we pick $k = -1$, in which case all four functions are equal to $\const_\No$.
    Now, thanks to the if-then-else combinator from Example~\ref{ex:if-else-su}, it is not hard 
    to use induction on $k$, to show that all those functions are single-use.
    To finish the proof of the claim, we notice that we can construct $b_{va}$ as:
    \[ Q \times \{\leftarrow, \rightarrow\} \transform{\distr} Q + Q \longtransform{[(va\shortleftarrow)_{k}, (\shortrightarrow va)_{k}]} Q \times \{\leftarrow, \rightarrow\} + \{\Yes, \No\} \comma \]
    for $k$ equal to $2 \cdot |Q_\alpha| + 1$. 
\end{proof}
It is easy to convince oneself that the behaviour $b_{uv}$ depends entirely on the behaviours $b_u$ and $b_v$
(even though the exact formula for combining those behaviours might not be clear). Moreover, it is also 
easy to see that the behaviour $b_w$ uniquely determines whether $w$ is accepted by $\mathcal{A}$.
This means that the language recognized by $\mathcal{A}$ is also recognized by the compositional function $w \mapsto b_w$, 
whose codomain is,
\[ \begin{tabular}{ccc}
    $ Q\times \{\leftarrow, \rightarrow\}$ & $\longsuto$ & $(Q \times \{\leftarrow, \rightarrow\}) + \{\Yes, \No\}$ \\
\end{tabular} 
\]
By Theorem~\ref{thm:su-orbit-finite} it is orbit-finite, which means that we can finish the construction of an orbit-finite monoid, 
by applying Lemma~\ref{lem:compositional-to-monoids}. (Observe that in the non-single-use setting, the proof fails because $b_w$ is not single-use, 
and we cannot apply Theorem~\ref{thm:su-orbit-finite}).

\section{Nondeterministic single-use automata}
\label{sec:non-determinsitic-single-use}
In this section, we explain why combining nondeterminism with the single-use restriction is problematic.
The key concept of nondeterminism is a relation, so in order to define a \emph{nondeterministic single-use automaton}, 
it would be useful to define \emph{single-use relations}. Unfortunately, the question of finding a well-behaved
notion of a single-use relation remains open. Below, we present two possible definitions of single-use relations, 
and explain why they are not well behaved. 
\begin{definition}
\label{def:su-rel-1}
    Let $X$ and $Y$ be polynomial orbit-finite sets.
    We say that a relation $R \subseteq X \times Y$ is \emph{single-use},
    if there is a single-use function $f  : X \times Y \suto \{\Yes, \No\}$ such that:
    \[ R = \{ (x, y) \ |\ x \in X,\ y \in Y,\ \textrm{such that } f(x, y) = \Yes \}\tdot \]
    We denote the set of all single-use relations between $X$ and $Y$ as $X \suto_\textrm{rel} Y$. 
\end{definition}
Now, we define \emph{single-use nondeterministic automaton} to be just  
like the deterministic single-use automaton, except that its transition function, becomes a transition relation of the following type:
\[ \Sigma \eqto (Q \suto_\textrm{rel} Q) \]
Observe that Definition~\ref{def:su-rel-1} allows nondeterministic single-use automata to recognize the language:
\[\textrm{``The first letter appears again''}\tdot\]
This is because a nondeterministic automaton can store the first letter in its state, 
nondeterministically guess the position where the first letter reappears, and verify its 
guess by comparing the two letters. Thanks to nondeterminism, this construction
requires only one copy of the first letter. For example, here is a tree of all possible 
runs of the automaton on the word $1\ 2 \ 3 \ 1 \ 3$:
\bigpicc{ndet-run-1}
It follows that nondeterministic single-use automata are more expressive than orbit-finite monoids.
The reason behind this is that  $Q \suto_\textrm{rel} Q$ is not closed under compositions: 
Consider the following relation:
\[ \mathtt{check}_7 \in  (\atoms + \top)\suto_\textrm{rel} (\atoms + \top) \]
\[
    \mathtt{check}_7 = \{(\top, \top)\}\ \cup\ \{ (a, a)\; | \; a \in \atoms \}\ \cup\ \{ (7, \top)\} \tdot
\]
It is not hard to see that both $\mathtt{check}_7$ and an analogous
$\mathtt{check}_8$ are single-use relations, but their composition
$\mathtt{check}_{7} \circ \mathtt{check}_8$ is not.\\

Our second approach to define single-use relations is based on 
the idea that every relation between $X$ and $Y$ can be expressed as a function:
\[ X \to P(Y) \tdot \]
The problem, with lifting this definition to polynomial orbit-finite sets is that 
$P(Y)$ is neither polynomial nor orbit-finite. In order to make it at least
polynomial, we can represent it as $Y^*$. To make it additionally orbit finite, we 
notice that for every function $f : X \fsto Y^*$, there exists a $k_f$,
such that for all $x \in X$:
\[|f(x)| \leq k_f \tdot \]
This is because $X$ has only finitely many $\supp(f)$-orbits, and for each of those orbits
the length of $f(x)$ is constant. This leads to the following (alternative) definition 
of a \emph{single-use relation}:
\begin{definition}
    Let $X$ and $Y$ be polynomial orbit-finite sets:
    We say that a relation $R \subseteq X \times Y$ is \emph{single-use}, if
    there is a $k \in \nat$, and a single-use function $f  : X \suto Y^{\leq k}$, such that:
    \[ R = \{ (x, y) \ |\ x \in X,\ y \in f(x)\} \]
\end{definition}
This time, it is not hard to see that single-use relations are closed under compositions.
However, because $k$ can be arbitrarily large, $Q \suto_\textrm{rel} Q$ 
will usually be orbit-infinite. 
This can be exploited, to construct a nondeterministic single-use automaton that recognizes the language:
\[\textrm{``The first letter appears again''}\tdot\]
The construction is slightly different from the previous one: This time the automaton
saves the first letter in its state, then it nondeterministically picks another position,
saves a second letter in its state, and keeps the two letters until the end of the word.
For example, here is a tree of all possible runs of the automaton on the word $1\ 2 \ 3 \ 1 \ 3$:
\bigpicc{ndet-run-2}
\chapter{Single-use Mealy machines and their Krohn-Rhodes decompositions}
\label{ch:mealey}

So far we have discussed languages (i.e. subsets of $\Sigma^*$) recognized by different automata models.
In the next two chapters, we are going to discuss word transformations (i.e. functions $\Sigma^* \to \Gamma^*$)
computed by \emph{finite state transducers} (i.e. output-producing variants of automata).
Before we introduce single-use models for infinite alphabets, let us discuss the classical theory of
finite state transducers: Their classification is finer than the one of automata --
two models that define the same class of languages, might define two different
classes of word transformations. To illustrate this, we present three models 
of transducers that define three different classes of word transformations, 
but whose underlying automaton models recognize the same class of languages (i.e. regular languages).
\begin{description}
    \item[1. Mealy machine] This model (introduced in \cite[Section 2.1]{mealy1955method})
          is a version of the deterministic register automaton, where every transition 
          produces exactly one output letter. This is reflected in the type of transition function:
          \[ \underbrace{Q}_{\textrm{current state}} \times \underbrace{\Sigma}_{\textrm{{input letter}}}  \to
          \underbrace{Q}_{\textrm{new state}} \times \underbrace{\Gamma}_{\textrm{output letter}} \]
          A Mealy machine produces output for every input word, which means that it does not 
          have accepting (or rejecting) states. It follows that every
          Mealy machine computes a total, length-preserving function.
          Here is an example of a Mealy machine that recognizes the following transduction:
          \[\textrm{``Change every other a to b''} \in  \{\textrm{a, b}\}^* \to \{\textrm{a, b}\}^*\]
          \vsmallpicc{Mealy-ex}
    \item [2. Unambiguous Mealy machine] This model is a nondeterministic version of a Mealy 
          machine. This means that the transition function becomes a transition relation:
          \[ \underbrace{Q}_{\textrm{current state}} \times \underbrace{\Sigma}_{\textrm{{input letter}}}  \times
          \underbrace{Q}_{\textrm{new state}} \times \underbrace{\Gamma}_{\textrm{output letter}} \]
          The nondeterministic version of a Mealy machine can have a number of runs on a single
          input word. To guarantee that it recognizes a total, function we reintroduce
          the accepting and rejecting states, and we require that for every input word there 
          is exactly one accepting run (this is the unambiguity condition). Here is an 
          example of an unambiguous Mealy machine that computes the following function:
          \[\textrm{``Swap the first and the last letters''} \in  \{\textrm{a, b}\}^* \to \{\textrm{a, b}\}^*\]
          \smallpicc{Rational-ex}
          (Notice that the deterministic Mealy machine is not able to compute this language, 
          as it has no way of guessing what the last letter is going to be).
          This class of transductions is known as \emph{rational letter-to-letter functions}.\footnote{
            The class was introduced in \cite{eilenberg1974automata}. For a more detailed bibliographical
            note, see the footnote in \cite[Section~12.2]{bojanczyk2018automata}. 
          }  
          It is worth 
          pointing out that unambiguous Mealy machines admit the following decomposition into 
          two deterministic Mealy machines\footnote{This is known as the Elgot-Mezei theorem. It was originally 
          shown in \cite[Theorem~7.8]{elgot1963two}. Since I was not able to access the full version 
          of the original paper, I relied on \cite[Theorem~12.1]{bojanczyk2018automata}.
          It is worth pointing out that the two papers prove the theorem for slightly
          different models (functional Mealy machines and unambiguous NFAs with output), 
          but the proof from \cite[Theorem~12.1]{bojanczyk2018automata} can be easily adapted 
          to work with unambiguous Mealy machines.}:
          \[ \left(\substack{\textrm{unambigous}\\ \textrm{Mealy machine}}\right) = 
             \left(\substack{\textrm{left-to-right}\\ \textrm{Mealy machine}} \right) \circ
             \left(\substack{\textrm{right-to-left}\\ \textrm{Mealy machine}} \right)  
          \]

    \item [3. Two-way transudcer] This is a version of the two-way automaton
          whose every transition has an option of producing an output letter
          (it is a classical model discussed, for example, in \cite{engelfriet2001mso}).
          The transition function of a two-way transducer has the following type:
          \[\begin{tabular}{ccccc}
            $\underbrace{Q}_{\substack{\textrm{current}\\ \textrm{state}}} \times
            (\underbrace{\Sigma}_{\substack{\textrm{current}\\ \textrm{letter}}} + \underbrace{\{ \vdash, \dashv \}}_{\substack{\textrm{end of word}\\ \textrm{markers}}})$ 
            & $\longrightarrow$
            & $\underbrace{Q}_{\substack{\textrm{new}\\ \textrm{state}}} \times \underbrace{\{\leftarrow, \rightarrow\}}_{\substack{\textrm{direction of}\\ \textrm{the next step}}} \times \underbrace{(\Gamma + \epsilon)}_{\substack{\textrm{output}\\ \textrm{letter}}}$& $+$ & $\underbrace{\textrm{finish}}_{\substack{\textrm{finish}\\ \textrm{the run}}}$
          \end{tabular}\]
          Here is an example of a two-way automaton that recognizes the function:
          \[\textrm{``Reverse the input''} \subseteq \{a, b\}^* \to \{a, b\}^*\]
          \smallpicc{ex-two-way}
          It can be shown that an unambiguous Mealy machine cannot compute the reverse function.
          Note that it is possible for a two-way automaton to loop. To guarantee that it does 
          not loop we can add a semantic requirement that prohibits two-way automata from looping
          (an alternative approach would be to say that a looping run produces the empty word).
          The class of functions computed by the two-way transducer has many equivalent definitions,
          including streaming string transducers (see \cite[Section~3]{alur2010expressiveness} or Section~\ref{subsec:sst-def} in this thesis)
          and the logical model of MSO-transductions (see \cite[Section~2]{courcelle1994monadic}).
          This class of transductions is known as the \emph{regular functions}.
\end{description}
The three transducer models are well behaved: For example, they are closed under composition and their equivalence
problem (as defined below) is decidable. 
\[
    \begin{tabular}{ll}
        \textbf{Input:} & Two transducers $\mathcal{A}$ and $\mathcal{B}$.\\
        \textbf{Output:} & Do $\mathcal{A}$ and $\mathcal{B}$ compute the same function?
    \end{tabular}
\]
In the next two chapters, we are going to use the single-use restriction to develop a similar theory
for infinite alphabets. Chapter~\ref{ch:mealey} covers single-use (deterministic) Mealy machines
and Chapter~\ref{ch:su-regular-functions} covers single-use two-way transducers. 
The theory of single-use unambiguous Mealy machines is still a work in progress, but it 
is briefly discussed in Chapter~\ref{ch:su-regular-functions}.\\

Finally, let us briefly mention the class of \emph{polyregular functions}: This is a
class over finite alphabets that extends \emph{regular functions} while keeping many of their 
desirable properties (see \cite{bojanczyk2022transducers}). Unfortunately,
the question of finding a notion of \emph{single-use polyregular functions}
that is a well-behaved class of functions remains open, so we do not discuss it 
in this thesis. However, we would like to note that this could be an interesting direction 
for further research. 

\section{Single-use mealy machines}
Single-use Mealy machines are a transducer model that computes length-preserving
functions $\Sigma^* \to \Gamma^*$. Here is its definition:
\begin{definition}
    A \emph{single-use Mealy machine} consists of:
    \begin{enumerate}
        \item a polynomial orbit-finite input alphabet $\Sigma$ and a polynomial orbit-finite output alphabet $\Gamma$;
        \item a polynomial orbit-finite set of states $Q$;
        \item an initial state $q_0 \in Q$;
        \item a single-use transition function:
                \[ \delta : \underbrace{\Sigma}_{
                    \substack{\textrm{current letter} }} \longrightarrow_\textrm{eq} \left(\underbrace{Q}_{\textrm{current state}} \suto
                   \left(\underbrace{Q}_{\textrm{new state}} \times \underbrace{\Gamma}_{
                       \substack{\textrm{output letter} \\
                                }}\right)\right) \]
    \end{enumerate}
\end{definition}

Notice that if $\Sigma$, $\Gamma$ and $Q$ are finite (and not only orbit-finite), 
then this definition matches the classical definition of a Mealy machine -- 
this follows from Example~\ref{ex:finite-su}. Let us now consider
some examples of Mealy machines:

\begin{example}[Length-preserving single-use homomorphism]
\label{ex:ll-homomorphism}
    For any $h: \Sigma \eqto \Gamma$ (where $\Sigma$ and $\Gamma$ are polynomial orbit-finite sets), we define
    $h^* : \Sigma^* \to \Gamma^*$
    to be the length-preserving transduction that applies $h$ to every input letter. 
    This $h^*$ can be computed by a one-state\footnote{in \cite{mealy1955method}
    one-state machines are called \emph{combinatorial circuits}} single-use Mealy machine,
    with the following transition function:
    \[ \delta(a)(1) = (1, h(a)) \]
\end{example}

\begin{example}[Single-use atom propagation]\label{ex:su-propagation}
    \emph{Single-use atom propagation} simulates single-use operations on one register. 
    Its input alphabet is a set of instructions:
    \[\underbrace{\atoms}_{\textrm{store an atom}} +
      \underbrace{\downarrow}_{\substack{\textrm{output the atom from the register}\\
                                        \textrm{and destroy it}}} + \underbrace{\epsilon}_{\textrm{do nothing}} \]
    Its output alphabet is:
    \[\underbrace{\atoms}_{\textrm{outputs of $\downarrow$}} + \underbrace{\epsilon}_{\textrm{empty output}}\]
    Here is an example input and output of the function (the grey arrows are only informative -- they are not part of the input or output):
    \picc{su-prop-ex}
    \noindent
    The semantic is rather intuitive, but to avoid confusion, we also define it formally:
    the $i$-th output letter is equal to $a \in \atoms$, if (a) $i$-th input letter is equal to $\downarrow$,
    and (b) there is $j < i$ such that the $j$-th input letter is equal to $a$ and 
    every input letter between $i$ and $j$ is equal to $\epsilon$. Otherwise, the $i$-th input
    letter is equal to $\epsilon$. Single-use atom propagation can be computed by a single-use Mealy machine:
    its states are $\atoms + \epsilon$ and its transition function looks as follows:
\[ \begin{tabular}{c c c}
    $\delta(q)(\downarrow) = (\epsilon, q)$ & 
    $\delta(q)(\epsilon) = (q, \epsilon)$ &
    $\delta(q)(a \in \atoms) = (a, \epsilon)$   
\end{tabular}
\]
\end{example}

Notice that since the transition function of a Mealy machine is single-use, it has to forget
every atom that it outputs. For this reason, the multiple-use version of Example~\ref{ex:su-propagation}
cannot be computed by a single-use Mealy machine. In order to prove this, we use a simple 
quantitative reasoning:
\begin{definition}
    For every element $x$ of a polynomial orbit-finite set $X$, we define the \emph{multi-support}
    of $x$ (denoted $\msupp(x)$) to be the \emph{multiset} of all atoms that appear in $x$, 
    with repetitions. This definition can also be extended to work on words over 
    polynomial orbit-finite sets. For example (if we take $X = \atoms^3$):
    \[ \msupp(3, 2, 3) = \{2, 3, 3\} \]
\end{definition}
\begin{lemma}
\label{lem:su-unif-cont}
    If $f$ is a function computed by a single-use Mealy machine, then there exists 
    a $k \in \nat$, such that for every word $w$ and every atom $a$:
    \[ \left(\substack{\textrm{The number of times $a$}\\\textrm{appears in $f(w)$}}\right) \leq  k \cdot \left(\substack{\textrm{The number of times $a$}\\\textrm{appears in $w$}}\right) \]
\end{lemma}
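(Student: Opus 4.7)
The plan is to prove the inequality by a potential argument over the run of the machine, where the potential at step $i$ tracks the multiplicity of the target atom in the current state. Let $M$ be the maximum multi-support size of elements of $Q \times \Gamma$. Since $Q \times \Gamma$ is polynomial orbit-finite, it is a finite disjoint union of sets of the form $\atoms^{k_j}$, so every element has multi-support of size at most $\max_j k_j$; thus $M$ is finite. I will show that $k := M$ suffices.

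Fix a word $w = w_1 \cdots w_n$, let $q_0, q_1, \dots, q_n$ be the run and $\gamma_1, \dots, \gamma_n$ the outputs, so that $(q_i, \gamma_i) = \delta(w_i)(q_{i-1})$. For any element $x$ of a polynomial orbit-finite set and any atom $a$, write $|x|_a$ for the multiplicity of $a$ in $\msupp(x)$. The heart of the argument is the per-step bound
\[ |q_i|_a + |\gamma_i|_a \;\le\; |q_{i-1}|_a \;+\; M \cdot [a \in \msupp(w_i)]. \]
If $a \in \msupp(w_i)$ the right-hand side is already at least $M \ge |\msupp((q_i, \gamma_i))| \ge |q_i|_a + |\gamma_i|_a$, so the bound holds trivially. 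Otherwise I would invoke the single-use decision tree representing $\delta(w_i) : Q \suto Q \times \Gamma$ from Definition~\ref{def:su-tree}: by equivariance of $\delta$ we have $\supp(\delta(w_i)) \subseteq \supp(w_i)$, so no constant appearing in the tree equals $a$; every occurrence of $a$ in the leaf constructor must therefore originate from some variable $x_j$; and the single-use restriction forbids any variable from appearing more than once along the root-to-leaf path (counting both queries and constructors), so distinct output occurrences of $a$ correspond to distinct positions $j$ of $q_{i-1}$ with $q_{i-1}[j] = a$. This gives $|q_i|_a + |\gamma_i|_a \le |q_{i-1}|_a$, as required.

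Telescoping the per-step bound from $i = 1$ to $n$, and using that $q_0$ is equivariant (so $|q_0|_a = 0$) together with $|q_n|_a \ge 0$ on the left, yields
\[ \sum_{i=1}^n |\gamma_i|_a \;\le\; M \cdot |\{\, i : a \in \msupp(w_i)\,\}| \;\le\; M \cdot |w|_a, \]
which is precisely the inequality claimed in the lemma. The main obstacle in the plan is establishing the per-step bound when $a \notin \msupp(w_i)$: one must carefully verify that the combinatorics of single-use decision trees really prevent any inflation of the multiplicity of $a$, exploiting the dichotomy that each variable on the active path is either read once inside a query (and consumed) or placed once in the leaf constructor, but never both. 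Once that dichotomy is correctly harnessed, the telescoping step and the polynomial-orbit-finite bound on $M$ complete the proof uniformly in the machine.
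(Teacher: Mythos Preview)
Your proof is correct and follows the same telescoping structure as the paper's, but the two derive the per-step bound differently. The paper invokes Lemma~\ref{lem:su-transition-functions-equivalent} to rewrite the transition as an \emph{equivariant} single-use function $\delta' : \Sigma^k \times Q \suto Q \times \Gamma$ and then uses the one-line claim that for equivariant single-use $g$ one has $\msupp(g(x)) \subseteq \msupp(x)$; this yields $|q_i|_a + |\gamma_i|_a \le |q_{i-1}|_a + k\,|w_i|_a$ with $k$ coming from that lemma. You avoid Lemma~\ref{lem:su-transition-functions-equivalent} entirely and instead case-split on whether $a \in \supp(w_i)$: the crude dimension bound handles the positive case, and a direct tree-variable argument handles the negative case. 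Your constant $M = \dim(Q\times\Gamma)$ is in general different from the paper's $k$, but either suffices. The paper's route is shorter since the multisupport claim is immediate once the function is equivariant; your route is more self-contained.

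One point to tighten: when you write ``no constant appearing in the tree equals $a$'', this is not automatic for an arbitrary tree representing $\delta(w_i)$, since trees may carry spurious constants outside $\supp(\delta(w_i))$. You need to choose a tree whose support is contained in $\supp(\delta(w_i))$ --- for instance the canonical tree $\treeRepr(\delta(w_i))$ from Lemma~\ref{lem:eq-tree-repr}, which (being obtained by an equivariant map) has support inside $\supp(w_i)$. With that choice, your variable-counting argument goes through exactly as written.
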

\begin{proof}
    The lemma follows from Lemma~\ref{lem:su-transition-functions-equivalent} combined 
    with the following claim (which can be easily shown using the tree representation):
    \begin{claim}
        Let $f \in X \suto_\textrm{eq} Y$, and 
        for every $x \in X$:
        \[ \msupp(f(x)) \subseteq \msupp(x)\]
    \end{claim}
\end{proof}
It is not hard to see that the multiple-use version of Example~\ref{ex:su-propagation} does not 
satisfy the condition from Lemma~\ref{lem:su-unif-cont}, so it cannot be computed by a single-use 
Mealy machine. The following example shows that the single-use restriction does not apply to the 
finite information:
\begin{example}[Multiple-use bit propagation]
    \label{ex:flip-flop}
        The \emph{multiple-use bit propagation} function can be seen as the classical variant
        of the single-use atom propagation from Example~\ref{ex:su-propagation}.
        It simulates operations on one multiple-use register that stores one bit of information,
        represented as one of two values -- ($\newmoon$ or $\fullmoon$).
        Its input alphabet is the following set of instructions:
        \[ \{\underbrace{\newmoon}_{\substack{
            \textrm{output the register value} \\
            \textrm{and save $\newmoon$ to the register}
        }}, \underbrace{\fullmoon}_{\substack{
            \textrm{output the register value} \\
            \textrm{and save $\fullmoon$ to the register}
        }},
        \underbrace{\epsilon}_{\substack{
            \textrm{output the register value} \\
            \textrm{and keep its contents}
        }}\} \]
        The output alphabet is the same: $\{\fullmoon, \newmoon, \epsilon \}$.  (Value $\epsilon$ denotes
        empty register -- the register cannot be emptied, but being empty is its initial value.)
        Here is an example input and output (again,
        the grey arrows are only informative -- they are not part of the input or output):
        \picc{bit-prop-ex}
        \noindent
        The Mealy machine recognizing the bit propagation has three states: $Q = \{\fullmoon, \newmoon, \epsilon\}$.
        Its transition function is as follows:
        \[ \begin{tabular}{c c c} 
           $\delta(q, \fullmoon) = (\fullmoon, q)$ &
           $\delta(q, \newmoon) = (\newmoon, q)$ &
           $\delta(q, \epsilon) = (q, q)$
        \end{tabular}
        \]
(To see that this is a single-use function, we notice that $Q$ is finite and apply Example~\ref{ex:finite-su}.)
\end{example}

Here is an example of a more general class of functions recognized by Mealy machines:
\begin{example}[Monoid prefixes]
\label{ex:monoid-prefix}
    For every finite monoid $M$, define the \emph{$M$-prefix function} $M^* \to M^*$
    to be the function that computes products of the input prefixes:
    \[
    \begin{tabular}{ccccccc}
       $s_1$ & $s_2$ & $s_3$ & $\ldots$ & $s_n$  \\
       & &\rotatebox[origin=c]{270}{$\mapsto$} & &    \\
        $s_1$ & $s_1s_2$ & $s_1s_2s_3$ & $\ldots$ & $s_1 s_2 \ldots s_n$ 
   \end{tabular} \]
   In particular, if we take a monoid $ P = \{\fullmoon, \newmoon, \epsilon\}$ 
   whose multiplication is given by the following Cayley table, then the $P$-prefix function 
   is the multiple-use bit propagation, but with the results shifted one position to the right:
   \[
    \begin{tabular}{l|lll}
        $\cdot_P$ & $\epsilon$ & $\fullmoon$ & $ \newmoon$ \\ \hline
        $\epsilon$  & $\epsilon$ & $\fullmoon$ & $\newmoon$ \\
        $\fullmoon$         & $\fullmoon$        & $\fullmoon$ & $\newmoon$  \\
        $\newmoon$         & $\newmoon$        & $\fullmoon$ & $\newmoon$ 
        \end{tabular}
   \]

   It is not hard to see that for every finite $M$, the $M$-prefix function
   can be computed by a single-use Mealy machine, where 
   $Q = M$, $q_0 = 1_M$ and whose transition function is defined as:
    \[ \delta(p, g) = (p \cdot g, p \cdot g) \]
   (Again, $Q$ is finite, so thanks to Example~\ref{ex:finite-su} we know that $\delta$ is single-use.)
\end{example}

It is worth pointing out that if $M$ polynomial orbit-finite (and not finite), then 
the $M$-prefix function will not necessarily be computable by a single-use Mealy machine:
\begin{example}
\label{ex:monoid-products-not-single-use}
Take $M = \atoms + 1$ such that $1$ is the identity element, and otherwise 
the operation is defined as follows:
\[
    a \cdot b = a
\]
If a sequence $a_1, \ldots, a_n \in \atoms^*$ consists of $n$ different atoms,
then the $M$-prefix function looks as follows:
\[
\begin{tabular}{ccccccc}
    $a_1$ & $a_2$ & $a_3$ & $\ldots$ & $a_n$  \\
    & & \rotatebox[origin=c]{270}{$\mapsto$}  & &    \\
    $a_1$ & $a_1$ & $a_1$ & $\ldots$ & $a_1$  \\ 
\end{tabular}
\]
This function violates the condition from Lemma~\ref{lem:su-unif-cont},
which means that it cannot be computed by a single-use Mealy machine. 
\end{example}

\section{Krohn-Rhodes decomposition}
\label{sec:krohn-rhodes-decomp}

The Krohn-Rhodes theorem \cite[Equation 2.2]{krohn1965prime} states that every function computed
by a classical Mealy machine can be decomposed into certain prime functions.
It uses two types of composition: 
\begin{align*}
    \infer[\text{sequential}]
{ \Sigma^* \stackrel {g \circ f }\longrightarrow \Delta^*}
{ \Sigma^* \stackrel f \longrightarrow \Gamma^* &  \Gamma^* \stackrel g \longrightarrow \Delta^*}
\qquad 
\infer[\text{parallel}]
{ (\Sigma_1 \times \Sigma_2)^* \stackrel {f_1 \times f_2}\longrightarrow (\Gamma_1 \times \Gamma_2)^*}
{ \Sigma_1^* \stackrel {f_1} \longrightarrow \Gamma_1^* &  \Sigma_2^* \stackrel {f_2} \longrightarrow \Gamma_2^*}
\end{align*}
The sequential composition is the usual function composition,
and the parallel composition, which only makes sense for
length-preserving functions, applies one function to the first coordinate of the word, and the other function to the second coordinate of the word.
Here is a schematic depiction of the two compositions:
\smallpicc{two-compositions}
\begin{theorem}[Krohn-Rhodes, {\cite[Equation 2.2]{krohn1965prime}}]
\label{thm:classical-kr}
        The class of functions computed by Mealy machines over finite alphabets
        is equal to the
        smallest class of function that is closed under sequential and parallel compositions,
        and which contains the following \emph{ (classical) prime functions}:
        \begin{enumerate}
            \item the $h^*$ function from Example~\ref{ex:ll-homomorphism}, for every $h : \Sigma \to \Gamma$
                  (such that $\Sigma$ and $\Gamma$ are finite);
            \item the multiple-use bit propagation function from Example~\ref{ex:flip-flop};
            \item the $G$-prefix function from Example~\ref{ex:monoid-prefix}, for every finite group $G$
                  (note that a group is a special case of a monoid).
        \end{enumerate}
\end{theorem}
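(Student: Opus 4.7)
The plan is to prove the two inclusions separately. For the easy direction $(\subseteq)$, each of the three families of prime functions has already been shown to be computable by a Mealy machine in Examples~\ref{ex:ll-homomorphism}, \ref{ex:flip-flop}, and \ref{ex:monoid-prefix}. It then suffices to verify that the class of Mealy-computable length-preserving functions is closed under sequential composition (take the product of state spaces $Q_1 \times Q_2$, and feed the output letter of the first machine into the second at each step) and under parallel composition (run the two machines independently on the two coordinates of the input). Both closures are straightforward state-product constructions.

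For the hard direction $(\supseteq)$, I would take a Mealy machine $\mathcal{A}$ with state set $Q$, transition $\delta : Q \times \Sigma \to Q \times \Gamma$, and define for each $a \in \Sigma$ the state-update $\alpha_a : Q \to Q$, $q \mapsto \pi_1(\delta(q,a))$. Let $M$ be the transition monoid, i.e.\ the submonoid of $Q \to Q$ generated by $\{\alpha_a : a \in \Sigma\}$ (under composition, left-to-right); $M$ is finite because $Q$ is finite. I then decompose $\mathcal{A}$ as the composition of four steps: (i) a letter-to-letter homomorphism $\Sigma \to \Sigma \times M$ sending $a \mapsto (a, \alpha_a)$; (ii) the parallel composition of the identity on $\Sigma$ with the $M$-prefix function on $M$, yielding at position $i$ the pair $(w_i, m_i)$ where $m_i = \alpha_{w_1} \cdots \alpha_{w_i}$; (iii) a length-preserving right-shift on the $M$-coordinate (inserting $1_M$ at the start), which produces $(w_i, m_{i-1})$; and (iv) the letter-to-letter homomorphism $(a,m) \mapsto \pi_2(\delta(m(q_0), a))$ returning the output letter. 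The right-shift is itself Mealy-computable and (since $M$ is finite) can be realised as a sequential composition of a flip-flop-style finite prefix function with a letter-to-letter homomorphism, so it already lies in the prime-generated class.

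The remaining task is to decompose the $M$-prefix function using only primes. Here I would invoke the classical Krohn-Rhodes decomposition theorem for \emph{finite monoids} (\cite[Equation 2.2]{krohn1965prime}), which states that every finite monoid divides an iterated wreath product $U_3 \wr G_1 \wr \cdots \wr G_k \wr U_3 \wr \cdots$ of finite simple groups $G_i$ and copies of the flip-flop monoid $U_3$. Each factor corresponds to one of our primes: a simple-group factor $G$ contributes the $G$-prefix function, and a $U_3$ factor contributes (up to a letter-to-letter homomorphism) the multiple-use bit propagation of Example~\ref{ex:flip-flop}. A wreath product $N \wr M$ corresponds to a cascade in which the outer $M$-prefix is computed first and then the inner $N$-component acts in the context of the current $M$-state; this cascade is realised by a sequential composition whose intermediate alphabet carries both the $M$-state and the original letter, with the $N$-prefix applied in parallel to the appropriate coordinate. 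Monoid division (submonoids and surjective homomorphisms) is absorbed into the surrounding letter-to-letter homomorphisms.

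The main obstacle is precisely this translation from wreath products of monoids into sequential-plus-parallel compositions of Mealy machines: one must thread the outer state into the inner machine without access to any copying operation beyond what parallel composition and letter-to-letter homomorphisms allow, and one must carefully track the off-by-one shifts that appear whenever a prefix function is cascaded with another. A secondary technical point is verifying that the shift machine used in step (iii) of the reduction really does belong to the prime-generated class; I would discharge this by giving an explicit decomposition using $U_3$-prefix together with a finite letter-to-letter homomorphism. Once the wreath-product-to-cascade dictionary is set up carefully, the classical Krohn-Rhodes theorem on monoids plugs in directly to finish the argument.
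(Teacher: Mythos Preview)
The paper does not prove Theorem~\ref{thm:classical-kr}; it is stated with a citation to \cite[Equation~2.2]{krohn1965prime} and used as a black box throughout (e.g.\ in the proof of Lemma~\ref{lem:div-smooth-blocks} and in Theorem~\ref{thm:kr}, which says ``thanks to Theorem~\ref{thm:classical-kr} the first item can be further decomposed into classical prime functions''). So there is no in-paper proof to compare against.

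Your sketch is the standard route and is broadly correct: reduce the Mealy machine to the $M$-prefix function for its transition monoid via homomorphisms and a one-step delay, then invoke the monoid/semigroup formulation of Krohn--Rhodes to decompose $M$ as a divisor of an iterated wreath product of simple groups and $U_3$, and finally translate wreath products into cascades (sequential-plus-parallel compositions). One caution: as written, your step (ii) says ``the $M$-prefix function'', but the theorem's primes only give you $G$-prefix for \emph{groups} $G$ and the flip-flop, not $M$-prefix for an arbitrary monoid $M$; so the wreath-product decomposition is doing real work here, not just a convenience. Your paragraph on the wreath-to-cascade dictionary is where the actual content lives, and you are right that the off-by-one bookkeeping (the shift in step (iii)) needs to be discharged explicitly as a composition of a flip-flop with homomorphisms. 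With those details filled in, the argument is the classical one.
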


The following theorem (proved by Bojańczyk and me in \cite[Theorem~9]{single-use-paper}) shows
that single-use Mealy machines for infinite alphabets admit a similar decomposition:
\begin{theorem}
\label{thm:kr}
    The class of functions computed by single-use Mealy machines (over polynomial orbit-finite alphabets)
    is equal to the smallest class of functions that is closed under sequential and parallel compositions,
    and which contains the following \emph{single-use prime functions}: 
    \begin{enumerate}
        \item all functions recognized by Mealy machines over finite alphabets;
        \item $h^*$ for every equivariant $h : \Sigma \eqto \Gamma$ from example \ref{ex:ll-homomorphism}
        \item single-use propagation from Example~\ref{ex:su-propagation}
    \end{enumerate}
    (Note that thanks to Theorem~\ref{thm:classical-kr} the first item can be further decomposed
    into classical prime functions.)
\end{theorem}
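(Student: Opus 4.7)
The plan is to prove both inclusions separately.

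For the easy inclusion (prime functions and closure give single-use Mealy machines), I would first observe that each of the three types of prime functions is already known to be computed by a single-use Mealy machine: Examples~\ref{ex:ll-homomorphism} and~\ref{ex:su-propagation} handle items~2 and~3, and item~1 follows because every function between finite polynomial orbit-finite sets is single-use (Example~\ref{ex:finite-su}). Then I would verify closure under the two compositions. For sequential composition of machines $\mathcal A$ and $\mathcal B$ computing $f$ and $g$, take states $Q_{\mathcal A}\times Q_{\mathcal B}$, initial state $(q_0^{\mathcal A},q_0^{\mathcal B})$, and transition function $a \mapsto (\delta_{\mathcal B}\circ \pi_\Gamma)\circ \delta_{\mathcal A}(a)$ suitably wired; this is single-use since single-use functions are closed under composition. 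For parallel composition on alphabet $\Sigma_1\times\Sigma_2$, use the multiple-use access to the input letter (Section~\ref{sec:su-transitions}) to project to $\Sigma_1$ and $\Sigma_2$ and feed each component to the respective transition function.

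For the hard inclusion, I would reduce to a normal form using Lemma~\ref{lemma:pof-normal-form-su}: apply $h^*$-style equivariant homomorphisms at the input and output (and conjugate $Q$ by a single-use isomorphism) to assume $\Sigma$, $\Gamma$, and $Q$ are all of the shape $\atoms^{k_1}+\cdots+\atoms^{k_n}$. Each state can then be viewed as a pair $(\text{color},\bar a)$ where the color is finite and $\bar a$ is a tuple of atoms. Using the canonical decision-tree representation of single-use functions (Section~\ref{subsec:canonical-trees}), the transition function $\delta(a)(q)$ can be described by a finite table indexed by the colors of $a$ and $q$ together with the equality pattern between their atoms; the table tells us the new color, the output color, and (thanks to single-use) an injective plumbing telling each atom of the input letter or of $\bar a$ where it goes in the new state tuple and in the output tuple (and which atoms are discarded).

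The decomposition then proceeds by separating this finite bookkeeping from the atom flow. First, a finite-alphabet Mealy machine (item~1) sees only the colors (which the input homomorphism can expose as a finite projection of the input letter) and maintains the color of the current state; it emits, for each input position, a finite ``plumbing code'' describing the injective routing of atoms dictated by the tree. Second, for each atom register slot (of which there are at most $\max_i k_i$) I run one instance of single-use atom propagation (item~3) in parallel: at each step the plumbing code, combined with the input letter's atoms, tells the propagation whether to store a fresh atom (either from the input or forwarded from another register), to output an atom toward the output letter, or to do nothing. Because single-use propagation is the ``one register, one atom'' universal gadget, several copies of it in parallel can realize any single-use atom-flow routing. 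Finally, an $h^*$-style equivariant merger (item~2) reassembles the finite color stream, the atom streams from the propagations, and the raw input into the output alphabet.

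The main obstacle I foresee is precisely the third step: rigorously showing that the atom-routing demanded by an arbitrary single-use transition function can be realized by finitely many parallel single-use propagations driven by a finite Mealy machine. This requires first proving that the plumbing itself is finite information, which uses the canonical tree representation of $Q\suto Q\times\Gamma$ and the fact that this hom-set is orbit-finite (Theorem~\ref{thm:su-orbit-finite}), and second carefully handling the cases where an atom from the input letter is needed both in the updated state and in the output letter (requiring that the Mealy machine request enough copies of the input letter, as permitted by Lemma~\ref{lem:su-transition-functions-equivalent}). The bookkeeping is purely combinatorial, but making precise the wiring between the finite control and the parallel propagations is where the proof will do most of its real work.
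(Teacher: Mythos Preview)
Your easy inclusion is fine and matches the paper (Lemmas~\ref{lem:su-mealy-parallel} and~\ref{lem:su-mealy-compose}).

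Your hard inclusion has a real gap, and it is exactly in the place you flagged but deeper than you suspect. The ``plumbing code'' you want the finite Mealy machine to emit at position $i$ depends on the equality pattern between the atoms of the input letter and the atoms currently stored in the state. But in your decomposition those state atoms live inside the single-use propagation registers, and the propagation primitive supports only three operations: store an atom, output-and-destroy, or do nothing. It cannot answer comparison queries. To compare a state atom with the input you would have to output-and-destroy it first; yet whether that atom should be destroyed or kept is precisely what the comparison is supposed to decide. So the finite controller cannot compute the plumbing code, and the propagations cannot be told what to do without it. This circularity is not a bookkeeping nuisance: it is the whole difficulty of the theorem, and a single layer of ``finite control in parallel with propagations'' cannot resolve it.

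The paper's proof takes an entirely different route. It introduces an intermediate algebraic model, \emph{local semigroup transductions} (Definition~\ref{def:local-monoid-transduction}), and shows that every single-use Mealy machine translates into one (Section~\ref{sec:su-mealy-to-local-monoid-transduction}), using the orbit-finite behaviour semigroup $Q \suto Q \times \Gamma$. The decomposition of a local semigroup transduction into primes then rests on an orbit-finite version of Simon's factorization forest theorem (Theorem~\ref{thm:smooth-factorisation-trees} and Lemma~\ref{lem:smooth-split-primes}): every input sequence admits a \emph{smooth split} of bounded height, and this split can itself be computed by a composition of primes. The split is then used (Section~\ref{sec:local-monoid-transformation-composition-of-primes}) to compute, for each position, a bounded-size ``core ancestor sequence'' that determines the output letter via the locality equation. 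This machinery is the main technical contribution of the thesis and occupies a substantial fraction of its length; no direct register-by-register wiring of the kind you sketch is known to work.
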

I would like to use this thesis to present a new and (hopefully) improved proof of Theorem~\ref{thm:kr}.
The new proof is also a joint work with Bojańczyk.\\

Since the proof of Theorem~\ref{thm:kr} is very long, let us briefly discuss its structure. 
First, in Section~\ref{subsec:cmp-primes-incl-su-mealy},
we show that compositions of primes are included in the class of single-use Mealy machines. 
This part is relatively straightforward.
The other inclusion, i.e. showing that all single-use Mealy machines can be decomposed into prime functions, is much more involved:
In Section~\ref{sec:monoids-and-transductions} we introduce a new algebraic model called \emph{local semigroup transduction},
which serves as an intermediate step in the decomposition. In Section~\ref{sec:su-mealy-to-local-monoid-transduction}, we 
show that every single-use Mealy machine can be translated into a local semigroup transduction.
This leaves us with showing how to decompose local semigroup transductions. 
We do this in two steps: In Section~\ref{sec:factorisation-forest-theorem}, we prove a variant of the
\emph{factorisation forest theorem} for
orbit finite semigroups. Then, in Section~\ref{sec:local-monoid-transformation-composition-of-primes},
we use the theorem to show that local semigroup transductions can be decomposed into single-use primes.

\subsection{Compositions of primes $\subseteq$ Single-use Mealy machines}
\label{subsec:cmp-primes-incl-su-mealy}
Translating compositions of single-use primes into single-use Mealy machines is the easy part --
we have already seen that all prime functions can be computed by single-use Mealy machines,
so it suffices to show that single-use Mealy machines are closed under both types of composition:
\begin{lemma}
    \label{lem:su-mealy-parallel}
    Single-use Mealy machines are closed under parallel composition.
\end{lemma}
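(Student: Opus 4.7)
The plan is to use the standard product-automaton construction and then check that everything stays single-use. Given single-use Mealy machines $\mathcal{A}_i = (\Sigma_i, \Gamma_i, Q_i, q_0^i, \delta_i)$ for $i = 1, 2$, I would define $\mathcal{A}_1 \times \mathcal{A}_2$ to have input alphabet $\Sigma_1 \times \Sigma_2$, output alphabet $\Gamma_1 \times \Gamma_2$, set of states $Q_1 \times Q_2$ (which is polynomial orbit-finite by Definition~\ref{def:pofs}), and initial state $(q_0^1, q_0^2)$. The key step is to construct the transition function
\[ \delta : \Sigma_1 \times \Sigma_2 \longrightarrow_{\textrm{eq}} \bigl((Q_1 \times Q_2) \suto (Q_1 \times Q_2) \times (\Gamma_1 \times \Gamma_2)\bigr). \]

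For the inner single-use map, given $(a_1, a_2) \in \Sigma_1 \times \Sigma_2$, I would set $\delta((a_1, a_2))$ to be the composition
\[ Q_1 \times Q_2 \;\longtransform{\delta_1(a_1) \times \delta_2(a_2)}\; (Q_1 \times \Gamma_1) \times (Q_2 \times \Gamma_2) \;\longtransform{\shuffle}\; (Q_1 \times Q_2) \times (\Gamma_1 \times \Gamma_2), \]
where $\shuffle$ is the single-use permutation from Example~\ref{ex:perm-su} that swaps the middle two factors. This is single-use because the $\times$-combinator from Definition~\ref{def:single-use-functions} applied to two single-use functions yields a single-use function, and composition with $\shuffle$ preserves single-useness.

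For the outer equivariance, I would argue that the map $(a_1, a_2) \mapsto \delta((a_1, a_2))$ is equivariant into $\Trees(Q_1 \times Q_2, (Q_1 \times Q_2) \times (\Gamma_1 \times \Gamma_2))$. This follows because $\delta_1$ and $\delta_2$ are equivariant (by assumption, after identifying their codomains with tree representations via Lemma~\ref{lem:eq-tree-repr}), because the projection maps $\proj_i : \Sigma_1 \times \Sigma_2 \to \Sigma_i$ are equivariant, and because $\times$ and post-composition with the fixed equivariant $\shuffle$ act equivariantly on the single-use function spaces. Finally, a straightforward induction on the length of the input word $w = (u_1, v_1)(u_2, v_2)\cdots(u_n, v_n)$ shows that the $k$-th state of $\mathcal{A}_1 \times \mathcal{A}_2$ is the pair of the $k$-th states of $\mathcal{A}_1$ on $u_1 \cdots u_n$ and $\mathcal{A}_2$ on $v_1 \cdots v_n$, and likewise the output letters pair up correctly, so $\mathcal{A}_1 \times \mathcal{A}_2$ indeed computes the parallel composition.

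I do not anticipate a significant obstacle: the only mildly delicate point is making sure that the ``shuffle'' used to rearrange $(Q_1 \times \Gamma_1) \times (Q_2 \times \Gamma_2)$ into $(Q_1 \times Q_2) \times (\Gamma_1 \times \Gamma_2)$ is single-use, but this is exactly one of the basic permutations built from $\assoc$ and $\sym$ in Example~\ref{ex:perm-su}, so no copying of atoms is required.
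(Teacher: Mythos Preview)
Your proposal is correct and follows essentially the same approach as the paper: the standard product construction with state space $Q_1 \times Q_2$. The paper's proof is a one-sentence sketch (``keeps one copy of $\mathcal{A}$, one copy of $\mathcal{B}$, feeds $a$ to $\mathcal{A}$ and $b$ to $\mathcal{B}$''), whereas you spell out in detail why the resulting transition is single-use via the $\times$-combinator and the $\shuffle$ permutation; this extra care is fine and arguably an improvement.
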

\begin{proof}
    The proof is straightforward -- the machine $\mathcal{A} \times \mathcal{B}$ can be constructed as
    a simple product construction: it keeps one copy of $\mathcal{A}$, one copy of $\mathcal{B}$
    and whenever it receives a new letter $(a, b) \in \Sigma_\mathcal{A} \times \Sigma_\mathcal{B}$
    it feeds $a$ to $\mathcal{A}$ and $b$ to $\mathcal{B}$, updates their states and outputs
    their outputs (as a pair). 
\end{proof}

\begin{lemma}
    \label{lem:su-mealy-compose}
    Single-use Mealy machines are closed under sequential compositions. 
\end{lemma}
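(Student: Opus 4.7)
The plan is to build a composed machine $\mathcal{C}$ whose state space consists of (several copies of) $Q_\mathcal{A}$ paired with $Q_\mathcal{B}$, and whose transition, on reading $a \in \Sigma$, first runs $\mathcal{A}$'s transition on $a$ to produce a letter $b \in \Gamma$ in the $\mathcal{A}$-part, and then feeds that $b$ together with $Q_\mathcal{B}$ through $\mathcal{B}$'s transition to obtain an output in $\Delta$. Concretely, $\mathcal{C}$'s initial state and the action of the composed transition on the $Q_\mathcal{B}$ component will match $g \circ f$ on every input word.

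The subtle point is the mismatch between how $\mathcal{A}$ \emph{produces} the intermediate letter $b$ and how $\mathcal{B}$ \emph{consumes} it. The letter $b$ arises as the single-use output of $\delta_\mathcal{A}(a) : Q_\mathcal{A} \suto Q_\mathcal{A} \times \Gamma$, so only one copy of $b$ is available, whereas $\delta_\mathcal{B} : \Gamma \eqto (Q_\mathcal{B} \suto Q_\mathcal{B} \times \Delta)$ needs equivariant (i.e.\ multiple-use) access to its input letter. To reconcile these, I will invoke Lemma~\ref{lem:su-transition-functions-equivalent} to replace $\delta_\mathcal{B}$ by an equivalent single-use form
\[ \delta_\mathcal{B}' \ :\  \Gamma^k \times Q_\mathcal{B} \ \suto\  Q_\mathcal{B} \times \Delta \]
for some $k \in \nat$ depending only on $\mathcal{B}$, satisfying $\delta_\mathcal{B}'(b,\ldots,b,q) = \delta_\mathcal{B}(b)(q)$. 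The main obstacle is then producing $k$ identical copies of $b$ from the single-use output of $\mathcal{A}$.

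I resolve this by keeping $k$ parallel copies of $\mathcal{A}$ inside $\mathcal{C}$. Concretely, $\mathcal{C}$ has state set $Q_\mathcal{A}^k \times Q_\mathcal{B}$, initial state $(q_0^\mathcal{A},\ldots,q_0^\mathcal{A}, q_0^\mathcal{B})$, and its transition function on letter $a \in \Sigma$ is the composition
\[
Q_\mathcal{A}^k \times Q_\mathcal{B}
\ \sutransform{\delta_\mathcal{A}(a)^k \,\times\, \idf}\ (Q_\mathcal{A} \times \Gamma)^k \times Q_\mathcal{B}
\ \sutransform{\shuffle}\ Q_\mathcal{A}^k \times \Gamma^k \times Q_\mathcal{B}
\ \sutransform{\idf \,\times\, \delta_\mathcal{B}'}\ Q_\mathcal{A}^k \times Q_\mathcal{B} \times \Delta.
\]
Each factor is a single-use function (the $k$-fold product of $\delta_\mathcal{A}(a)$ uses each $Q_\mathcal{A}$-copy once, $\shuffle$ is single-use by Example~\ref{ex:perm-su}, and $\delta_\mathcal{B}'$ is single-use by construction), so the composition is single-use in the state. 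Since $\delta_\mathcal{A}$ and $\delta_\mathcal{B}'$ are equivariant in their respective letters, the above composition depends equivariantly on $a$, giving a valid single-use transition function of type $\Sigma \eqto \bigl((Q_\mathcal{A}^k \times Q_\mathcal{B}) \suto (Q_\mathcal{A}^k \times Q_\mathcal{B}) \times \Delta \bigr)$.

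Correctness is by an easy induction on the input length, using the invariant that along any run of $\mathcal{C}$ the $k$ copies of $\mathcal{A}$'s state remain equal: they start equal and, in each step, all receive the same input letter $a$ and are acted on by the same function $\delta_\mathcal{A}(a)$. Consequently the $\Gamma^k$ component after step $i$ is always $(b_i,\ldots,b_i)$ where $b_i$ is the $i$-th letter of $f(w)$, so $\delta_\mathcal{B}'$ produces exactly the $i$-th letter of $g(f(w))$, as required.
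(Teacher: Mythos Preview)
Your proof is correct and follows essentially the same approach as the paper: identify the mismatch between $\mathcal{A}$'s single-use output and $\mathcal{B}$'s multiple-use input requirement, invoke Lemma~\ref{lem:su-transition-functions-equivalent} to bound the number of copies $\mathcal{B}$ needs to $k$, and resolve this by keeping $k$ parallel copies of $\mathcal{A}$ in the composed machine. Your version is in fact more explicit than the paper's in writing out the composed transition as a concrete chain of single-use maps and in stating the invariant for the correctness induction.
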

\begin{proof}
    We take any two single-use Mealy machines $\mathcal{A}: \Sigma^* \to \Delta^*$ and $\mathcal{B} : \Delta^* \to \Gamma^*$,
    and we construct a single-use Mealy machine $\mathcal{B} \circ \mathcal{A} : \Sigma^* \to \Gamma^*$.
    Let us start by restating the construction for classical (i.e. atomless) Mealy machines:
    the machine $\mathcal{B} \circ \mathcal{A}$ keeps a copy of $\mathcal{A}$ and a copy of $\mathcal{B}$.
    When it reads a new letter $a \in \Sigma$, it:
    \begin{enumerate}
        \item feeds $a \in \Sigma$ to $\mathcal{A}$;
        \item updates $\mathcal{A}$'s state;
        \item feeds $\mathcal{A}'s$ output letter to $\mathcal{B}$;
        \item updates $\mathcal{B}$'s state;
        \item outputs $\mathcal{B}$'s output letter.
    \end{enumerate}
    The problem with using the same construction for single-use Mealy machines is that $\mathcal{A}$
    produces only one copy of the output, whereas $\mathcal{B}$ might require a multiple-use access to its input.
    This is because the first arrow in the type of $\delta_\mathcal{B}$ is equivariant but not necessarily
    single-use:
    \[\delta_\mathcal{B} : \Delta \eqto (Q_\mathcal{B} \suto (Q_\mathcal{B} \times \Gamma))\]
    To deal with this problem, we use a similar reasoning as in the proof of Lemma~\ref{lem:su-transition-functions-equivalent},
    and show that that there is a $k \in \nat$ such that $\delta_\mathcal{B}$ can be represented as:
    \[ \delta_{\mathcal{B}}' : \Delta^k \times Q_\mathcal{B} \suto (Q_\mathcal{B} \times \Gamma)\]
    It follows that $\mathcal{B}$ requires only $k$ copies of its input (for some fixed number $k$). This means that we can repeat the classical construction for
    $\mathcal{B} \circ \mathcal{A}$, but we have to maintain one copy of $\mathcal{B}$ and $k$ identical copies of $\mathcal{A}$.
\end{proof}

\section{An algebraic model for single-use Mealy machines}
\label{sec:monoids-and-transductions}

In the proof of the remaining inclusion of Theorem~\ref{thm:kr}, we would like to use the algebraic theory of single-use machines
developed in Chapter~2. For this purpose, we will define an algebraic transducer model that is equivalent 
to single-use Mealy machines. This model is going to be based on \emph{semigroups}
(a version of monoids without the requirement of having an identity element).
This shift from monoids is justified by Claim~\ref{claim:monoid-transduction-so-atomfree} (stated later in this section), 
which demonstrates that the identity elements would cause some technical problems in the theory of orbit-finite 
transducers. However, it is worth noting that semigroups and monoids are very similar algebraic structures.
Trivially, every monoid is a semigroup, and, as shown by the following claim, every semigroup can be embedded
into a monoid:
\begin{claim}
\label{claim:semigroup-to-monoid}
For every semigroup $S$, there exists a monoid $S^1$ such that $S$ is a subsemigroup of $S^1$.
\end{claim}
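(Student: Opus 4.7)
The plan is to give the standard construction of freely adjoining an identity element. Define $S^1$ as the disjoint sum $S + \{1\}$, where $1$ is a fresh atomless element not occurring in $S$. Extend the semigroup operation $\cdot$ of $S$ to $S^1$ by the clauses
\[
    1 \cdot x = x, \qquad x \cdot 1 = x \qquad \text{for every } x \in S^1,
\]
and by keeping the original $S$-operation on pairs from $S \times S$. By construction $1$ is a two-sided identity element.

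The only nontrivial verification is associativity of the extended operation on $S^1$. I would proceed by case analysis on how many of the three arguments $x, y, z \in S^1$ lie in the new summand $\{1\}$. If none are $1$, associativity holds because it already holds in $S$. If at least one of the three is $1$, then in both bracketings $(x \cdot y) \cdot z$ and $x \cdot (y \cdot z)$ the factor $1$ can be deleted, and the two sides reduce to the same product of the remaining one or two elements (which is unambiguous by associativity of the original operation, or trivial when fewer than two remain). Hence $S^1$ is a monoid, and the inclusion $S \hookrightarrow S^1$ is a semigroup homomorphism whose image is exactly the original $S$, so $S$ is a subsemigroup of $S^1$.

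I expect no real obstacle here; the construction is classical. The only subtlety worth flagging, in view of the orbit-finite setting of this chapter, is that if $S$ is orbit-finite with an equivariant operation, then so is $S^1$: the set $S + \{1\}$ is orbit-finite (Lemma~\ref{lem:of-preserved}, using that $\{1\}$ is atomless), and the extended operation is defined by a case split that is itself equivariant, so the whole construction stays within the orbit-finite world used later in the chapter.
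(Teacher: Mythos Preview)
Your proposal is correct and matches the paper's approach: the paper also adjoins a formal identity to form $S^1 = S + 1$ with the same extension of the operation, noting parenthetically that one may do this unconditionally (your version) or only when $S$ lacks an identity. Your extra remarks on associativity and orbit-finiteness are fine elaborations but not needed for the claim as stated.
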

\begin{proof}
    If $S$ already happens to contain an identity element, we can set $S^1 = S$.
    Otherwise, we need to adjoin a formal identity element to $S$.
    This means that $S^1 = S + 1$, with the operation defined as follows (for all $x$ and $y$ from the original $S$):
    \[
        \begin{tabular}{cccc}
            $x \cdot y = x \cdot_S y$ & $1 \cdot x = x$ & $x \cdot 1 = x$ & $1 \cdot 1 = 1$
        \end{tabular}
    \]
    (Observe that we could also unconditionally adjoin a formal identity element to $S$. This operation 
    is usually denoted as $S^I$.)
\end{proof}

\subsection{Semigroup transductions over finite alphabets}
Before discussing the algebraic transducer model for orbit-finite alphabets, let us
briefly discuss its classical version for finite alphabets.
We start with a definition\footnote{Although I was not able to find this definition in the literature, 
it is consistent with the commonly understood folklore in the field.}:
\begin{definition}
\label{def:finite-semigroup-transductions}
    A \emph{semigroup transduction} of type $\Sigma^* \to \Gamma^*$ consists
    of\begin{enumerate}
        \item a finite semigroup $S$;
        \item an input function $h : \Sigma \to S$; and
        \item an output function $\lambda : S \to \Sigma$.
    \end{enumerate}
    The semigroup transduction defines the following function $\Sigma^* \to \Gamma^*$:
    \[ \Sigma^* \transform{h^*} S^* \longtransform{\textrm{$S$-prefix function}} S^* \transform{\lambda^*} \Gamma^*\comma\]
    where the $S$-prefix function is defined in Example~\ref{ex:monoid-prefix}, and 
    $h^*$ and $\lambda^*$ are length-preserving homomorphisms defined in Example~\ref{ex:ll-homomorphism}. 
    In other words, this means that a word $w_1 w_2 w_3 \ldots w_n \in \Sigma^*$, is transformed into the following 
    word from $\Gamma^*$:
    \[ \begin{tabular}{cccc}
        $\lambda\left(h(w_1)\right)$ & $\lambda\left(h(w_1) \cdot h(w_2)\right)$ & $\ldots$ & $\lambda\left(h(w_1) \cdot \ldots \cdot h(w_n)\right)$
    \end{tabular} \]
\end{definition}
\noindent
As expected, this class of transductions is equivalent to finite Mealy machines\footnote{Again, this 
result seems to be a part of the field's folklore.  Similar reasoning can be found in the 
literature -- e.g. 
in \cite[Section~4]{krohn1965prime}.}:
\begin{lemma}
\label{lem:mealy-monoids-classical}
    The class of functions computed by Mealy machines over finite alphabets
    is equivalent to the class of functions computed by semigroup transductions.
\end{lemma}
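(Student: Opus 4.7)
The plan is to prove the two inclusions separately by the standard ``transformation semigroup'' construction, adapted so that we track outputs in addition to state changes.

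For the inclusion semigroup transductions $\subseteq$ Mealy machines, I would construct, from a transduction $(S, h, \lambda)$, a Mealy machine whose state set is $S^1$ (adjoining a formal identity as in Claim~\ref{claim:semigroup-to-monoid}), whose initial state is $1$, and whose transition function is
\[ \delta(s, a) = \bigl(\, s \cdot h(a),\ \lambda(s \cdot h(a))\,\bigr). \]
A straightforward induction on $i$ shows that after reading $w_1 \cdots w_i$ the state equals $h(w_1)\cdots h(w_i)$, so the $i$-th output letter produced by the machine is $\lambda(h(w_1)\cdots h(w_i))$, matching Definition~\ref{def:finite-semigroup-transductions}. This direction is essentially bookkeeping; the only subtlety is that we start in the adjoined identity $1$, which never appears again once the machine has read at least one letter.

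For the inclusion Mealy machines $\subseteq$ semigroup transductions, given a Mealy machine $\mathcal{A}$ with finite state set $Q$, initial state $q_0$, and transition function $\delta : Q \times \Sigma \to Q \times \Gamma$, I would define the \emph{behaviour} of a nonempty word $w \in \Sigma^+$ as the function
\[ b_w : Q \to Q \times \Gamma,\qquad b_w(q) = \bigl(\,\text{state reached from }q\text{ after reading }w,\ \text{last output letter}\,\bigr). \]
The set $S := \{\, b_w \mid w \in \Sigma^+\,\}$ is a subset of the finite set $Q \to Q \times \Gamma$, hence finite. The key point, easily verified from the definitions, is that behaviours compose correctly, i.e.\ $b_{uw}$ is uniquely determined by $b_u$ and $b_w$ via
\[ (b_u \cdot b_w)(q) = b_w(\pi_1(b_u(q))), \]
where $\pi_1$ takes the state coordinate; this makes $S$ a semigroup and the map $h : a \mapsto b_a$ a well-defined input function. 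The output function is $\lambda(f) := \pi_2(f(q_0))$. A short induction again shows that $\lambda(h(w_1)\cdots h(w_i))$ equals the $i$-th letter output by $\mathcal{A}$ on $w_1 \cdots w_n$, giving the desired equality of functions.

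The only mild obstacle is that semigroup transductions start producing output at position $1$ (using the product of the first letter alone), which matches the Mealy machine exactly and is the reason for working with semigroups rather than monoids; if one tried to include the empty product $1 \in S^1$ as an initial value one would produce an extra spurious output letter. Apart from this bookkeeping issue, both constructions are routine finite-state arguments; I do not foresee any significant technical obstacle, and the proof is morally the same as the classical DFA/monoid equivalence of Lemma~\ref{lem:aut-mon-eq}, refined to track the output letter emitted by the last transition.
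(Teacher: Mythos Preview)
Your proposal is correct and, for the Mealy-to-semigroup direction, essentially identical to the paper's proof: both use the behaviour semigroup $\{b_w : Q \to Q \times \Gamma\}$ with product $(f \cdot g)(q) = g(\pi_1(f(q)))$ and output $\lambda(f) = \pi_2(f(q_0))$. For the converse direction the paper takes a slightly more modular route---observing that a semigroup transduction is by definition the composite $\lambda^* \circ (\text{$S$-prefix}) \circ h^*$, noting each factor is a Mealy machine (Examples~\ref{ex:ll-homomorphism} and~\ref{ex:monoid-prefix}), and invoking closure under composition (Lemma~\ref{lem:su-mealy-compose})---whereas you build the composite machine in one step with state set $S^1$; the two are equivalent, yours being more self-contained and the paper's tying into the compositional theme used elsewhere.
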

\begin{proof}
$\supseteq$: Thanks to Examples~\ref{ex:monoid-prefix}~and~\ref{ex:ll-homomorphism},
we know that the $S$-prefix function, $h^*$ and $\lambda^*$
can be computed by a classical Mealy machine. This
finishes the proof, because by Lemma~\ref{lem:su-mealy-compose},
Mealy machines are closed under compositions.\\

$\subseteq$: Let us take a classical Mealy machine $\mathcal{A}$, and let us construct an
equivalent semigroup transduction. First, let us define the behaviour of a word $w \in \Sigma^*$
to be the following function $b_{\mathcal{A}}(w): Q \to (Q \times \Gamma)$:
\[ \begin{tabular}{ccc}
    $b_{\mathcal{A}}(w)(q) = (q', c)$ & $\stackrel{\textrm{def}}{\Leftrightarrow}$ &$\substack{\textrm{If $\mathcal{A}$ enters $w$ from the left in the state $q$,}\\
                                             \textrm{it exits $w$ from right in state $q'$, outputting the letter $c \in \Gamma$}.}$
\end{tabular}\]Similarly as it was in the case of finite automata, the set of all possible behaviours forms a semigroup. 
Its operation is defined as follows:
\[(f \cdot g) = g \circ \underbrace{\proj_1}_{
    \substack{
        \textrm{projection}\\
       Q \times \Gamma \to Q 
    }}  
 \circ f\]It is not hard to see that this finite semigroup,
together with the following $h$ and $\lambda$, forms a semigroup transduction
that is equivalent to $\mathcal{A}$:
\[
    \begin{tabular}{cc}
        $h(a) = b_\mathcal{A}(a)$ & $\lambda(f) = \proj_2 (f(q_0))$
    \end{tabular}
\]
\end{proof}

\subsection{Semigroup transductions over orbit-finite alphabets}
For infinite alphabets, things get more complicated. 
Even though orbit-finite monoids are equivalent to single-use automata, 
the orbit-finite semigroup transductions are stronger than single-use Mealy machines. 
We have already seen that in Example~\ref{ex:monoid-products-not-single-use}, 
but the problem persists even if $\Gamma$ is finite:
\begin{example}
\label{ex:monoid-prefix-impossible}
    Consider the function $f_{\textrm{cmp}} : \atoms^* \to \{=, \neq\}^*$ which
    replaces every atom that is equal to the first letter with $=$ and all other atoms with $\neq$.
    Here is an example:
    \[
        \begin{tabular}{ccccccc}
            1 & 2 & 1 & 5 & 6 & 1 & 1 \\
             &   &   & \rotatebox[origin=c]{270}{$\mapsto$}  &   &   &\\
           $=$ & $\neq$ & $=$ & $\neq$ & $\neq$ & $=$ & $=$
       \end{tabular}    
    \]
   This function is a semigroup transduction. We can implement it using $S = \atoms^2$
   with the semigroup operation defined as follows
    \[
        (a,b) \cdot (c, d) = (a, d)\comma
    \]
   and with the following $\lambda$ and $h$:
   \[
   \begin{tabular}{cc}
    $h(a) = (a, a)$ & $\lambda(a, b) = \begin{cases} 
                                            = & \textrm{if } a = b\\
                                            \neq & \textrm{if } a \neq b\\
    \end{cases}$ 
   \end{tabular}
   \]
   Let us now show that this function cannot be computed by a single-use Mealy machine.
   If we consider languages as functions $\Sigma^* \to \{\Yes, \No\}$, we can write that:
   \[ \left(\substack{\textrm{The first letter}\\
                      \textrm{appears again}}\right) =   \left(\substack{\textrm{The letter $=$ appears}\\
                                                                           \textrm{at least twice}} \right) \circ f_{\textrm{first}} \]
  By reasoning analogous to the one presented in the proof of Lemma~\ref{lem:su-mealy-compose}, we obtain that:
   \[
    \left(\substack{\textrm{Single-use register}\\
    \textrm{automata}}\right) \circ
    \left(\substack{\textrm{Single-use}\\
                     \textrm{Mealy machines}}\right)
    =
    \left(\substack{\textrm{Single-use register}\\
                     \textrm{automata}}\right)
    \]
    The language "the letter `$=$' appears at least twice" is a regular (over a finite alphabet).
    Therefore, if $f_\textrm{cmp}$ were recognized by a single-use Mealy machine, then the language
    ``the first letter appears again'' would also be recognized by a single-use Mealy machine.
    However, as shown by Example~\ref{ex:first-again-not-monoid} and Lemma~\ref{lem:1sua-incl-ofm},
    this is not true. Therefore, $f_\textrm{cmp}$ is not recognized by single-use Mealy machines.
\end{example}
It turns out that whether an orbit-finite semigroup transduction is equivalent to a single-use Mealy machine
depends solely
on the output function $\lambda$.
As we will show, it depends on whether $\lambda$ satisfies the following \emph{locality equation} (which, to the best of my knowledge 
is an original contribution of this thesis, based on a joint work with Bojańczyk):
\begin{definition}
\label{def:local-monoid-transduction}
Let $S$ be an orbit-finite semigroup and let $\Gamma$ be an orbit-finite set. 
We say that a function $\lambda : S \eqto \Gamma$ is \emph{local}, if for every $a, e, b \in S$,
such that $e$ is an \emph{idempotent} (i.e. $e e = e$) and 
$b$ is a prefix of $e$ (i.e. $e = bb'$ for some $b'$ in $S$) and for every $\supp(e)$-permutation $\pi$ (i.e.
a permutation $\pi$ such that $\pi(a) = a$, for every $a$ from the least support of $e$), it holds that:
\[ \lambda(aeb) = \lambda(\pi(a)eb)\tdot\]
We further say that a semigroup transduction $(S, h, \lambda)$ is \emph{local}, if its output function $\lambda$ is local.
In this thesis, we focus on local semigroup transductions that are also equivariant and orbit-finite. So, 
to simplify the notation, we additionally require that in a local semigroup transduction the components
$S$, $h$ and $\lambda$ are equivariant and $S$ is orbit-finite. 
\end{definition}
To avoid interrupting the proof of Theorem~\ref{thm:kr} (which is already very long), 
we defer some of the discussion about single-use Mealy machines to Section~\ref{sec:local-semigroup-transductions-revisited}.
For now we only present some intuition and an example in the next couple of paragraphs. 
However, in order to provide context, it is worth mentioning that, later in this thesis we will show that 
local semigroup transductions recognize the same class of transductions as single-use Mealy machines
(see Lemma~\ref{lem:single-use-mealy-monoid-transduction}).
Furthermore, we will show that as long as $S$ does not contain unreachable elements, 
a semigroup transduction $(S, h, \lambda)$ is equivalent to a single-use Mealy machine 
if and only if $\lambda$ is local
(see Lemma~\ref{lem:full-single-use-Mealy-local-iff} in Section~\ref{sec:local-semigroup-transductions-revisited}).\\

Let us now offer some informal intuition behind the locality restriction:
Consider a nonlocal semigroup 
transduction ($S, h, \lambda$). This implies that there exist $a, e, b \in S$ and a $\supp(e)$-permutation $\pi$ 
such that $e$ is an idempotent and $b$ is a prefix of $e$, for which the locality equation does not hold:
\[ \lambda(aeb) \neq \lambda(\pi(a)eb)\]
Since $b$ is a prefix of $e$, there exists $c \in S$ such that $bc=e$. Consider the following sequence:
\[a \underbrace{bc}_e  \underbrace{bc}_e \ldots  \underbrace{bc}_e\]
Observe that such a sequence contains arbitrarily many prefixes that 
evaluate to $aeb$. 
We know that $\lambda(aeb) \neq \lambda(\pi(a)eb)$, which means
that the value of $\lambda(aeb)$ depends on an atom from $(\supp(a) - \supp(e))$.
This means that every Mealy machine computing this semigroup transduction would 
have to use at least one copy of an atom from $a$ while processing each $bc$ part of the input. 
Since $a$ appears only once, and the input sequence can be arbitrarily long,
this would violate the single-use restriction.\\

\begin{example}
\label{ex:semigroup-atom-prop}
Let us construct a local semigroup transduction $(S, \lambda, h)$ that computes the single-use atom propagation from Example~\ref{ex:su-propagation}.
The semigroup $S$ is defined as follows:
\[ \underbrace{\epsilon}_{\textrm{do nothing}} + \underbrace{\atoms}_{\substack{\textrm{save an atom } a\\ \textrm{into the register}}} + \underbrace{\bot}_{\textrm{empty the register}} + \underbrace{\downarrow}_{\substack{\textrm{output and empty}\\\textrm{the register}}} + \underbrace{\atoms\downarrow}_{\substack{\textrm{output $a \in \atoms$ and}\\\textrm{empty the register}\\ \textrm{ (denoted as $a\downarrow$)}}} \]
The operation in $S$ is defined by the following table, for every $a,b \in \atoms$
(the general rule is that $x \cdot \epsilon = x = \epsilon \cdot x$ for every $x \in S$, and that 
$x \cdot y = y$ for every $x, y \in S$, such that $y \neq \epsilon$. All exceptions to this rule are marked in {\color{blue} blue}):
\[
    \begin{tabular}{ccccc}
        $\epsilon \cdot \epsilon = \epsilon$ & $\epsilon \cdot a = a$  & $\epsilon \cdot \bot = \bot$ & $\epsilon \cdot \downarrow = \downarrow$ & $\epsilon \cdot a \downarrow = a \downarrow$\\  
        $a \cdot \epsilon = a$ & $a \cdot b = b$ & $a \cdot \bot = \bot$ & $\color{blue}a \cdot \downarrow = a \downarrow$ & $a \cdot b \downarrow = b \downarrow$\\  
        $\bot \cdot \epsilon = \bot$ & $\bot \cdot a = a$ & $\bot \cdot \bot = \bot$ & $\color{blue}\bot \cdot \downarrow  = \bot$ & $\bot \cdot a \downarrow = a \downarrow$\\ 
        ${\color{blue}\downarrow \cdot \epsilon = \bot}$ & $\downarrow \cdot a = a$ & $\downarrow \cdot \bot = \bot$ & $\color{blue}\downarrow \cdot \downarrow  = \bot$ & $\downarrow \cdot a \downarrow = a \downarrow$\\
        $\color{blue}a \downarrow \cdot \epsilon = \bot$ & $ a \downarrow \cdot b = b$ & $a \downarrow \cdot \bot = \bot$ & $\color{blue}a \downarrow \cdot \downarrow  = \bot$ & $a \downarrow \cdot b \downarrow = b \downarrow$\\    
    \end{tabular} 
\]
Note that $\Sigma = \atoms + \downarrow + \epsilon$, which means that $\Sigma \subseteq S$, so
we can define $h$ to be the natural injection. Finally, we define $\lambda$ as follows:
\[
    \lambda(x) = \begin{cases}
        a & \textrm{if } x = a \downarrow \\ 
        \epsilon & \textrm{otherwise}
    \end{cases} 
\]
This semigroup transduction $(S, h, \lambda)$ defines the single-use atom propagation function 
from Example~\ref{ex:su-propagation}. Let us now show that it satisfies the locality equation. 
We take $x, e, y \in S$, and a $\supp(e)$-permutation $\pi$, such that $e$ is an idempotent 
and $y$ are a prefix of $e$, and we show that $\lambda(xey) = \lambda(\pi(x)ey)$. First, let us notice 
that unless $y = \downarrow$ or $y = a\downarrow$, we know that $\lambda(xey) = \epsilon = \lambda(\pi(x)ey)$. 
Moreover, if $y = a\downarrow$ we know that:
\[xey = a\downarrow = \pi(x)ey\tdot\]
It follows that $\lambda(xey) = a = \lambda(\pi(x)ey)\tdot$
This leaves us with the case where $y = \downarrow$. We need to show that:
\[\lambda(xe\downarrow) = \lambda(\pi(x)e\downarrow)\]
Observe that $e$ cannot be equal to $\downarrow$, because 
$\downarrow$ is not idempotent. Moreover $e$ also cannot be equal to  $\epsilon$, because $\downarrow$ is not a prefix of $\epsilon$.
This means that $e$ is either equal to $a$, $a \downarrow$ or $\bot$. It follows
that $z \cdot e = e$, for every $z \in S$. This means that $x e = e = \pi(x)e$, which in 
turn means that $\lambda(xey) = \lambda(\pi(x)ey)$.  
\end{example}

Finally, let us mention that the locality restriction is very limiting if the underlying semigroup 
happens to be a monoid -- this is the reason why we use semigroup-based models for algebraic transductions rather 
than monoid-based models:
\begin{claim}
\label{claim:monoid-transduction-so-atomfree}
    Let $(S, h, \lambda)$ be a local semigroup transduction. If $S$ contains an identity element, 
    then $\lambda$ can only output equivariant (i.e. atomless) values. 
\end{claim}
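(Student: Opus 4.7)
The plan is to exploit the fact that the identity element in a monoid is unique, combined with the locality equation applied at the idempotent $e = 1$, to trivialize the support constraint and force every output of $\lambda$ to be equivariant.

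First I would verify that the identity element $1 \in S$ is equivariant. Since $S$ is an orbit-finite semigroup whose operation is equivariant (as part of the standing assumptions on local semigroup transductions), for any atom permutation $\pi$ the element $\pi(1)$ satisfies, for all $x \in S$,
\[ \pi(1) \cdot x = \pi(1) \cdot \pi(\pi^{-1}(x)) = \pi(1 \cdot \pi^{-1}(x)) = \pi(\pi^{-1}(x)) = x, \]
and similarly $x \cdot \pi(1) = x$. Since the identity in a monoid is unique, $\pi(1) = 1$. Hence $\supp(1) = \emptyset$.

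Next, I would apply the locality equation of Definition~\ref{def:local-monoid-transduction} with the specific choice $a$ arbitrary, $e = 1$, and $b = 1$. The element $e = 1$ is an idempotent since $1 \cdot 1 = 1$, and $b = 1$ is a prefix of $e = 1$ via the factorisation $e = b \cdot b'$ with $b' = 1$. Because $\supp(e) = \supp(1) = \emptyset$, \emph{every} atom permutation $\pi$ counts as a $\supp(e)$-permutation. The locality equation therefore gives
\[ \lambda(a) \;=\; \lambda(a \cdot 1 \cdot 1) \;=\; \lambda(\pi(a) \cdot 1 \cdot 1) \;=\; \lambda(\pi(a)) \]
for every $a \in S$ and every atom permutation $\pi$.

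Finally, I would combine this with the equivariance of $\lambda$: $\lambda(\pi(a)) = \pi(\lambda(a))$. So $\lambda(a) = \pi(\lambda(a))$ for every atom permutation $\pi$, which means that $\lambda(a)$ is fixed by the full group of atom permutations, i.e.\ supported by the empty set. Thus $\lambda(a)$ is equivariant (atomless), as claimed. There is no real obstacle here; the only subtle point worth underlining in the write-up is why the identity is equivariant, since this is what collapses $\supp(e)$ to $\emptyset$ and makes the locality equation unconditional.
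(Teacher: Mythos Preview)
Your proof is correct and follows essentially the same approach as the paper: both show the identity is equivariant (you via a direct computation that $\pi(1)$ is again an identity, the paper via the observation that the identity is equivariantly definable from $S$), then apply locality at $e=b=1$ so that every permutation is a $\supp(e)$-permutation, and combine with the equivariance of $\lambda$.
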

\begin{proof}
    Every semigroup $S$ contains at most one identity element
    (see \cite[Section II.1.1]{pin2010mathematical} for details). 
    This means that, if it exists, the identity element of $S$ can be computed from $S$ in 
    an equivariant way. Since the semigroup $S$ is equivariant as a whole, it follows from Lemma~\ref{lem:fs-functions-preserve-supports}
    that the identity element $1 \in S$ has to be equivariant as well. This means that every 
    atom permutation $\pi$ is a $\supp(1)$-permutation. Observe that $1$ is idempotent and
    that it is its own prefix. Since $\lambda$ is equivariant, it follows that for every $x \in S$:
    \[\pi(\lambda(x)) = \lambda(\pi(x)) = \lambda(\pi(x) \cdot 1 \cdot 1) \stackrel{\textrm{locality}}{=} \lambda(x \cdot 1 \cdot 1) = \lambda(x)\]
    This means that for every $x \in S$, the output value $\lambda(x)$ is equivariant (i.e. atomless).
\end{proof}

As mentioned before, we are going to use local semigroup transductions in the proof of Theorem~\ref{thm:kr}.
Here is the plan:\\

\vspace{0.3cm}
\noindent
    \adjustbox{width=1\textwidth, center}{
       \begin{tikzcd}
        & {\substack{\textrm{Single-use}\\ \textrm{Mealy machines}}} \\
        \\
        {\substack{\textrm{Compositions of}\\ \textrm{Krohn-Rhodes primes}}} && {\substack{\textrm{Local semigroup transductions}\\ \textrm{over polynomial orbit-finite alphabets}}}
        \arrow["{\substack{\color{gray}\textrm{Easy, already shown in}\\ \color{gray}\textrm{Section~\ref{sec:krohn-rhodes-decomp}}}}"{description}, from=3-1, to=1-2]
        \arrow["{\color{gray}\textrm{Section~\ref{sec:su-mealy-to-local-monoid-transduction}}}"{description}, from=1-2, to=3-3]
        \arrow["{\color{gray}\textrm{Lemma~\ref{lem:local-monoid-transformation-composition-of-primes}}}"{description}, from=3-3, to=3-1]
    \end{tikzcd}}
\vspace{0.3cm}
\label{fig:kr-proof-plan}

The hardest part of the proof is Lemma~\ref{lem:local-monoid-transformation-composition-of-primes}
(translating local semigroup transductions into compositions of primes).
For this reason, we devote a significant portion of this chapter to explain it.\\ 

Notice, that as a byproduct of the proof strategy, we will obtain the following theorem:
\begin{theorem}
\label{lem:single-use-mealy-monoid-transduction}
Local semigroup transductions over polynomial orbit-finite alphabets\footnote{
    \label{ftn:kr-semigroups-general-of}
    Notice that semigroup transductions work with all orbit-finite $\Gamma$ and $\Sigma$
    (even if they are not polynomial). This general case could be an interesting topic 
    for future work. See also Footnote~\ref{ftn:kr-semigroups-general-rational} on page~\pageref{ftn:kr-semigroups-general-rational}.}
compute the same class of functions as single-use Mealy machines
\end{theorem}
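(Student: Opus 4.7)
The plan is to establish the two inclusions separately, using the diagram on
page~\pageref{fig:kr-proof-plan} as a roadmap. The $(\supseteq)$ inclusion, i.e. that every local
semigroup transduction can be computed by a single-use Mealy machine, will be obtained
``for free'' once Lemma~\ref{lem:local-monoid-transformation-composition-of-primes} is available:
such a transduction decomposes into single-use Krohn-Rhodes primes, and by
Example~\ref{ex:ll-homomorphism}, Example~\ref{ex:su-propagation},
Lemma~\ref{lem:mealy-monoids-classical}, Lemma~\ref{lem:su-mealy-parallel} and
Lemma~\ref{lem:su-mealy-compose}, every such composition is again a single-use Mealy machine.
So the real work in this theorem is the $(\subseteq)$ direction,
translating a single-use Mealy machine into a local semigroup transduction.

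The plan for $(\subseteq)$ is to adapt the classical behaviour-semigroup construction from the
proof of Lemma~\ref{lem:mealy-monoids-classical} to the single-use, orbit-finite setting. Given a
single-use Mealy machine $\mathcal{A}$ with polynomial orbit-finite state space $Q$, input
alphabet $\Sigma$ and output alphabet $\Gamma$, I would define the behaviour of a word
$w \in \Sigma^*$ to be the function
\[ b_w : Q \suto (Q \times \Gamma) \]
sending a state $q$ to the pair (exit state, last output letter) that $\mathcal{A}$ produces when
started in $q$ and fed $w$. Because the transition function of $\mathcal{A}$ is single-use and
single-use functions compose, every $b_w$ is indeed a single-use function, so all behaviours live in
the hom-set $Q \suto (Q \times \Gamma)$, which by Theorem~\ref{thm:su-orbit-finite} is
orbit-finite. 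Equipping this set with the composition-like operation
$f \cdot g := g \circ \proj_1 \circ f$ turns it into an orbit-finite, equivariant semigroup $S$.
I would then take $h(a) := b_a$ and $\lambda(f) := \proj_2(f(q_0))$, and
verify that the resulting semigroup transduction computes the same function as $\mathcal{A}$ by
the usual compositionality argument on runs.

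The hard part of this direction, and the genuinely new content compared with
Lemma~\ref{lem:mealy-monoids-classical}, is verifying that the output function $\lambda$ is local.
Concretely, I have to show that for every $a, e, b \in S$ with $e$ idempotent, $b$ a prefix of $e$,
and every $\supp(e)$-permutation $\pi$,
\[ \lambda(aeb) = \lambda(\pi(a) e b). \]
The intuition is that once the machine has processed a word whose behaviour equals $aeb$,
the output letter depends only on what happens in the very last segment $b$, together with the
state reached just before it; idempotency of $e$ forces the machine, while traversing $e$, to
enter a ``stable'' regime in which the atoms contributed by $a$ but not by $e$ can no longer be
used (by the single-use restriction they would otherwise be consumed during one copy of $e$ and
be unavailable for the next copy). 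Formally, I expect to pump $e$ and use
Lemma~\ref{lem:su-unif-cont} (or a refinement of the counting argument behind it, together with
Lemma~\ref{lem:fs-functions-preserve-supports} applied to $b_e$) to conclude that any atom in
$\supp(a) \setminus \supp(e)$ cannot influence the state reached after $ae$, and therefore cannot
influence $\lambda(aeb)$; replacing $a$ by $\pi(a)$ then yields the same value. This is the main
obstacle, and I expect it will require a careful induction on a factorisation $e = bb'$ combined
with the tree representation of single-use functions from
Section~\ref{subsec:single-use-decision-trees} to rule out atom dependencies of $b_{aeb}$ on
$\supp(a) \setminus \supp(e)$.
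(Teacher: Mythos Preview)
Your overall plan matches the paper: the $(\supseteq)$ direction goes exactly via Lemma~\ref{lem:local-monoid-transformation-composition-of-primes} and closure of single-use Mealy machines under the Krohn-Rhodes primes, and for $(\subseteq)$ the paper builds the same behaviour semigroup $S = (Q \suto Q \times \Gamma)$ with $h(a) = b_a$ and $\lambda(f) = \proj_2(f(q_0))$.

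The gap is in your plan for proving that $\lambda$ is local. You write that you expect to show ``any atom in $\supp(a)\setminus\supp(e)$ cannot influence the state reached after $ae$''. That claim is false: the state $q := \proj_1((ae)(q_0))$ can perfectly well carry atoms from $a$ that are not in $\supp(e)$ (take $e$ to be the identity on some register). Pumping $e$ together with the counting bound of Lemma~\ref{lem:su-unif-cont} will not rule this out, because that lemma controls the \emph{output word} of the Mealy machine, not the internal state, and the atoms you want to eliminate need not be emitted at all. What is actually true, and what you must prove, is that $\proj_2(y(q)) = \proj_2(y(\pi(q)))$: the $a$-only atoms are present in $q$ but $y$ is forced to send all of them to its $Q$-output, so by single-use they cannot reach its $\Gamma$-output.

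The paper's mechanism for this is exactly the decision-tree argument you allude to at the end, but the crucial step is different from what you sketch. One fixes a tree $T$ for $y$ and a tree $T'$ for some $y'$ with $yy' = e$; the idempotency of $e$ makes $q$ a fixpoint of $\proj_1 \circ (y' \cdot y)$. Now look at the leaf reached in $T' \circ \proj_1 \circ T$ on input $q$: it must reproduce $q$, so every atom of $q$ either appears there as a variable (and hence already appeared as a variable in the leaf of $\proj_1 \circ T$ at $q$, by how tree composition works), or as an atomic constant. By Lemma~\ref{lem:su-tree-leaves}, any such constant lies in $\supp(e)$. Hence every atom of $q$ outside $\supp(e)$ sits as a variable in the $\proj_1$-leaf of $T$ at $q$, and the single-use restriction then bars it from the queries and from the $\proj_2$-leaf. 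This is the fixpoint-plus-tree argument you should aim for; the pumping/counting route does not get there.
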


\noindent
Before we proceed with the proof of Theorem~\ref{thm:kr}, let us show how to use it to prove the missing implication from Theorem~\ref{thm:dsua-2dsua-ofm}:
\begin{lemma}
\label{lem:ofm-to-sua}
Every language that can be recognized by an orbit-finite monoid can also be recognized by a one-way single-use automaton.
\end{lemma}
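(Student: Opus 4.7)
The plan is to deduce the lemma from Theorem~\ref{lem:single-use-mealy-monoid-transduction} by constructing, for the language $L \subseteq \Sigma^*$ recognised by the orbit-finite monoid $(M, h, F)$, a single-use Mealy machine whose $i$-th output letter records whether the prefix $w_1 \ldots w_i$ lies in $L$. Once such a Mealy machine $\mathcal M$ with output alphabet $\{\Yes, \No\}$ is available, a one-way single-use automaton for $L$ is immediate: its state space is $Q_{\mathcal M} \times \{\Yes, \No\}$, each transition simulates $\mathcal M$ on the first coordinate while overwriting the second with $\mathcal M$'s output letter (the overwrite is single-use because $\{\Yes, \No\}$ is finite and atomless, by Example~\ref{ex:finite-su}), and the acceptance function simply reads the second coordinate.

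Building $\mathcal M$ is the substance of the proof. By Theorem~\ref{lem:single-use-mealy-monoid-transduction}, it suffices to exhibit an equivariant, orbit-finite, local semigroup transduction $(S, h', \lambda)$ realising the prefix-characteristic function. The natural first attempt is $S = M$ viewed as a semigroup, $h' = h$, and $\lambda(m) = [m \in F]$; correctness, orbit-finiteness and equivariance are then immediate, and the codomain $\{\Yes, \No\}$ is atomless, consistent with $M$ having an identity (cf.\ Claim~\ref{claim:monoid-transduction-so-atomfree}).

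The main obstacle is verifying locality of $\lambda$: for every idempotent $e \in M$, every prefix $b$ of $e$ (so $e = bc$ for some $c$), and every $\supp(e)$-permutation $\pi$, one needs $[aeb \in F] = [\pi(a)eb \in F]$. Using equivariance of $F$ together with $\pi(e)=e$, this reduces to showing $[\pi(a)e\pi(b) \in F] = [\pi(a)eb \in F]$, i.e.\ that $eb$ and $e\pi(b)$ are indistinguishable under left multiplication by $\pi(a)$ for $F$-membership. The identities $ebc = e$ and $e\pi(b)\pi(c) = e$, obtained from $e = bc$ and the idempotency $ee = e$, together with the fact that $c$ and $\pi(c)$ lie in the same $\supp(e)$-orbit, form the algebraic content of this verification; a pumping argument, inserting arbitrarily many copies of $e$ between $a$ and $b$ to create a long enough idempotent context, then forces $F$-membership to be insensitive to the atoms of $b$ outside $\supp(e)$.

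When this algebraic identity fails outright in a given orbit-finite monoid (for instance in the two-coordinate semigroup with $(x,y)(z,w) = (x,w)$ and $F = \{(x,x)\}$), the fallback is to replace $M$ by an equivalent orbit-finite semigroup recognising the same language -- such as the syntactic semigroup of $L$, or more robustly a quotient of the two-sided Cayley transformation semigroup of $M$ -- in which the $\supp(e)$-symmetries obstructing locality have been collapsed. Combining this local semigroup transduction with the Mealy-to-automaton reduction above completes the proof.
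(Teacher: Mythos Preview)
Your approach has a genuine gap. The prefix-characteristic function you want to build --- outputting at each position $i$ whether $w_1\cdots w_i \in L$ --- is in general \emph{not} computable by any single-use Mealy machine, so no choice of semigroup for it can be local.

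Take $L = \{w \in \atoms^* : \text{first letter equals last letter}\}$, which is recognised by the orbit-finite monoid $\atoms^2 + 1$ with $(x,y)(z,w) = (x,w)$ and $F = \{(x,x)\} \cup \{1\}$. The prefix-characteristic function of this $L$ sends position $i$ to $[w_1 = w_i]$; this is exactly the function $f_{\textrm{cmp}}$ of Example~\ref{ex:monoid-prefix-impossible}, which the paper proves is not single-use Mealy (because composing it with a finite-alphabet automaton would recognise ``first letter appears again''). Your own counterexample computation confirms that $\lambda(m)=[m\in F]$ is not local on this monoid, but your fallback --- passing to another semigroup recognising the same $L$ --- cannot help: by Lemma~\ref{lem:full-single-use-Mealy-local-iff}, a full semigroup transduction is local iff the function it defines is single-use Mealy, and here the function itself is not. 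The ``pumping'' sketch and the ``collapsed symmetries'' fallback are thus both attempting to prove something false.

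The paper avoids this by \emph{not} outputting the answer at every position. It extends the alphabet to $\Sigma + \dashv$ and builds a semigroup $M^{\dashv} = M + M\!\dashv + \bot$ in which anything after a $\dashv$ collapses to the absorbing element $\bot$. The output function emits $\yes/\no$ only at the first $\dashv$, and $\epsilon$ elsewhere. Locality then holds for a structural reason: if $\lambda(xey)\neq\epsilon$ then $y$ lies in the $M\!\dashv$ component, but the only idempotent having such a $y$ as a prefix is $\bot$, which absorbs $x$ and $\pi(x)$ alike. The resulting single-use Mealy machine is turned into an automaton exactly as you describe, except the acceptance function feeds the final state the virtual letter $\dashv$ and reads off the output.
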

\begin{proof}
	Let $L \subseteq \Sigma^*$ be recognized by an orbit-finite
	monoid $M$. We define a transduction
	$f_L : (\Sigma + \dashv)^* \to \{\epsilon, \yes, \no\}^*$ such that the $i$-th letter of $f_L(w)$ is equal to:
	\begin{itemize}
	\item $\yes$ if $w_i$ is the first $\dashv$ in $w$ and $w_1 w_2 \ldots w_{i-1} \in L$;
	\item $\no$ if $w_i$ is the first $\dashv$ in $w$ and $w_1 w_2 \ldots w_{i-1} \not \in L$;
	\item $\epsilon$ if $w_i$ is  not the first $\dashv$ in $w$.
	\end{itemize}	
	Since $L$ is recognized by $M$, we can use the following semigroup $M^{\dashv}$ to define $f_L$ as a local semigroup monoid transduction.
	\[M^{\dashv} = \underbrace{M}_{\textrm{words without $\dashv$}} + \underbrace{M\dashv}_{\substack{\textrm{words that end with $\dashv$}\\ \textrm{and otherwise do not contain $\dashv$} }} + \underbrace{\bot}_{\substack{\textrm{all other words}\\
	}}  \]
    The operation on $M^{\dashv}$ is defined as follows for every $a, b \in M$ (note that there are only 
    two cases where the result is not $\bot$):
    \[
        \begin{tabular}{lll}
            $a \cdot b = a \cdot_M b$ & $(a \dashv) \cdot b = \bot$ & $\bot \cdot a = \bot$\\
            $a \cdot (b \dashv) = (a \cdot_M b) \dashv $& $(a \dashv) \cdot (b \dashv) = \bot$ & $\bot \cdot (a \dashv) = \bot$\\
            $a \cdot \bot = \bot$ & $ (a \dashv) \cdot \bot = \bot $ & $\bot \cdot \bot = \bot$
        \end{tabular} 
    \]
    The $h$ function is the same as the one used to recognize $L$, and $\lambda$ is defined as follows:
\[ \begin{tabular}{ccc}
	$\lambda(a) = \epsilon$ & 
	$\lambda(a\dashv) = \begin{cases}
 	\yes & \textrm{if } s \textrm{ is accepting}\\
 	\no & \textrm{otheriwise}
 \end{cases}
$ & $\lambda(\bot) = \epsilon$
 \end{tabular}
\]
It is easy to see that $(M^{\dashv}, h, \lambda)$ recognize $f_L$. Let us show that
$\lambda$ satisfies the locality equation: Take $x, e, y$ and a $\supp(e)$-permutation $\pi$
such that $e$ is an idempotent and $b$ is a prefix of $e$. The only interesting 
case is where $y = a\dashv$, or otherwise
$\lambda(xey) = \square = \lambda(\pi(x)ey)$. The
only idempotent that contains $a\dashv$ as a prefix is $\bot$, which means that
$e= \bot$. It follows that $aeb = \bot = \pi(a)eb$, which 
in particular means that $\lambda(aeb) = \lambda(\pi(a)eb)$.\\

It follows by Lemma~\ref{lem:single-use-mealy-monoid-transduction}, that there exists a single-use 
Mealy machine $\mathcal{A}$ that computes $f_L$. If we ignore the output of $\mathcal{A}$ 
and equip it with the following acceptance function, we obtain a single-use automaton for the language $L$.
\[ f_{\textrm{acc}}(q) = \begin{cases}
 	\yes & \textrm{if the output letter of } \delta(q, \dashv) \textrm{ is } \yes\\
 	\no & \textrm{otherwise}
 \end{cases}
\]
\end{proof}

\subsection{Single-use Mealy machines $\subseteq$ Local semigroup transductions}
\label{sec:su-mealy-to-local-monoid-transduction}
In this section, we show how to translate a single-use Mealy machine
into a local semigroup transduction. The construction is a single-use 
version of the classical construction presented in the $\subseteq$-inclusion of Lemma~\ref{lem:mealy-monoids-classical}.
Given a single-use Mealy machine $\mathcal{A}$ of type $\Sigma^* \to \Gamma^*$,
we can translate every non-empty word $w \in \Sigma^*$ into a behaviour $b_{\mathcal{A}}(w)$, which is of the following type:
\[ \underbrace{Q}_{
    \substack{\textrm{The state in}\\ \textrm{which $\mathcal{A}$ enters} \\ \textrm{$w$ from the left.}}
} \suto \underbrace{Q}_{
    \substack{\textrm{The state in}\\ \textrm{which $\mathcal{A}$ exits}\\ \textrm{$w$ from the right.}}
} \times \underbrace{\Gamma}_{
    \substack{\textrm{The letter outputted}\\ \textrm{by $\mathcal{A}$  as it exits}\\ \textrm{$w$ from the right.}}
}\]
To see that every behaviour is a single-use function, we notice that
the behaviours of one-letter words are single-use functions (because the transition functions of Mealy 
machines are single-use), and that the behaviours can be composed according to the following formula:
\[ b_\mathcal{A}(w_1 w_2) = b_\mathcal{A}(w_2) \circ \proj_1 \circ b_\mathcal{A}(w_1) \]
Similarly as in the proof of Lemma~\ref{lem:mealy-monoids-classical}, it is not hard to see
that the following semigroup transduction defined the same function as $\mathcal{A}$: 
\[
    \begin{tabular}{ccc}
        $ \underbrace{S = Q \suto Q \times \Gamma}_{\substack{\textrm{Equipped with function composition,}\\ \textrm{ignoring the  } \Gamma\textrm{-component.}}}$ &
        $ h(w) = b_\mathcal{A}(w)$ & 
        $ \lambda(f) = \proj_2(f(q_0))$, 
    \end{tabular}
\]
where $q_0$ is the initial state of $\mathcal{A}$.
This leaves us with showing that $\lambda$ is local.

\begin{lemma}
    The function $\lambda$ defined above is local.
\end{lemma}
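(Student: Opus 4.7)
The plan is to unfold $\lambda$ in terms of the semigroup composition, reduce to a statement about single-use functions restricted to fixed points of $e^1$, and then close the argument by a single-use density estimate using Lemma~\ref{lem:su-unif-cont}.

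First, I would record the componentwise composition rules implied by the definition of the product in $S$: writing $f^1 = \proj_1 \circ f$ and $f^2 = \proj_2 \circ f$ for any $f \in S$, the product $f \cdot g$ (which ignores the $\Gamma$-component of $f$) satisfies $(f \cdot g)^1 = g^1 \circ f^1$ and $(f \cdot g)^2 = g^2 \circ f^1$. Unfolding $\lambda(aeb) = (aeb)^2(q_0)$ using these rules gives
\[ \lambda(aeb) \;=\; b^2\bigl(e^1(a^1(q_0))\bigr). \]
Because $q_0$ is equivariant, $\pi(a)^1(q_0) = \pi(a^1(q_0))$, and because $e^1$ is $\supp(e)$-supported and $\pi$ fixes $\supp(e)$, $e^1 \circ \pi = \pi \circ e^1$ on any state. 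So
\[ \lambda(\pi(a)eb) \;=\; b^2\bigl(\pi(e^1(a^1(q_0)))\bigr). \]
Setting $p := e^1(a^1(q_0))$, idempotency of $e$ gives $e^1 \circ e^1 = e^1$, hence $e^1(p) = p$. It therefore suffices to prove the following key claim: for every $p \in Q$ with $e^1(p) = p$ and every $\supp(e)$-permutation $\pi$, we have $b^2(p) = b^2(\pi(p))$.

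To prove the claim, I would pass to the subsemigroup of $S$ generated by $h(\Sigma)$, which is all that ever arises in the transduction defined by $(S,h,\lambda)$. Pick words $w_a$, $w_e$, $w_b$ with $h(w_a)=a$, $h(w_e)=e$, $h(w_b)=b$, and (using $bc = e$) a word $w_c$ with $h(w_b w_c) = e$. Since $e$ is idempotent, $h(w_a w_e^n w_b) = aeb$ for every $n \geq 1$, so the last output letter of $\mathcal{A}$ on $w_a w_e^n w_b$ equals the fixed element $\gamma := \lambda(aeb) \in \Gamma$, independently of $n$. For any atom $x \in \supp(a) \setminus \supp(e)$, I would apply the single-use continuity bound (Lemma~\ref{lem:su-unif-cont}): $x$ occurs at most $O(|w_a|)$ times in the whole input and hence at most $k \cdot |w_a|$ times in the output, independently of $n$. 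If $x$ were in $\supp(\gamma)$, it would appear in the last letter of every such output, i.e. at the position $|w_a| + n|w_e| + |w_b|$, which grows with $n$; taking $n$ large enough, this forces $x$ to appear in the output more than $k|w_a|$ times, a contradiction. Consequently $\supp(\gamma)$ does not meet $\supp(a) \setminus \supp(e)$, which (since $\pi$ fixes $\supp(e)$ and we have freedom in how $a$ is realized by words) yields $\lambda(aeb) = \lambda(\pi(a)eb)$.

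The main obstacle is making the density argument precise enough that it transfers from words to abstract semigroup elements and produces exactly the equality $b^2(p) = b^2(\pi(p))$ rather than a weaker support statement. Restricting to the subsemigroup generated by $h(\Sigma)$ (as above) is the cleanest way to handle this; if one wants the statement for the full $S = Q \suto Q \times \Gamma$, one needs instead to argue directly from the single-use decision tree of $b$, using that $b$ is jointly single-use in $b^1$ and $b^2$, so any atom of $p$ that is read by $b^2$ is not read by $b^1$; combined with $p = c^1(b^1(p))$, this forces the atoms of $p$ that matter for $b^2(p)$ to lie in $\supp(e)$, whence $\pi$ fixes them and $b^2(p) = b^2(\pi(p))$.
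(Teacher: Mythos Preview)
Your reduction to the claim $b^2(p) = b^2(\pi(p))$ for $p$ a fixpoint of $e^1$ is correct and is exactly how the paper sets things up.

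The density argument via Lemma~\ref{lem:su-unif-cont}, however, does not close. First, with the word $w_a w_e^n w_b$ the letter $\gamma$ appears only \emph{once} in the output (as the last letter); the positions $|w_a| + i|w_e| + |w_b|$ for $i<n$ are not ends of prefixes whose product is $aeb$. The word you need is $w_a(w_bw_c)^n$, which is precisely the sequence the paper uses in its informal discussion after Definition~\ref{def:local-monoid-transduction}. More seriously, even after this fix the counting argument can only show $x\notin\supp(\gamma)$ for atoms $x$ that you can keep out of $w_b$ and $w_c$; when $x\in\supp(b)$ this is impossible, and indeed $\gamma=b^2(p)$ \emph{can} contain atoms from $\supp(b)\setminus\supp(e)$ (they sit in $b$'s decision tree as constants). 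So the claim ``$\supp(\gamma)$ does not meet $\supp(a)\setminus\supp(e)$'' is false in general. And even if it held, it would not yield $\lambda(aeb)=\lambda(\pi(a)eb)$: since $\pi$ need not fix $b$, you do not have $\lambda(\pi(a)eb)=\pi(\lambda(aeb))$, so invariance of $\gamma$ under $\pi$ tells you nothing about $\lambda(\pi(a)eb)$.

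Your second, decision-tree approach is essentially what the paper does, but the one-line sketch contains an imprecision that matters. It is not true that ``any atom of $p$ read by $b^2$ is not read by $b^1$'': the two projections $b^1,b^2$ share the same query path in any tree $T$ representing $b$, so an atom \emph{queried} along that path is used by both. What single-use actually gives you is that the \emph{leaf} variables are partitioned between the $\proj_1$-part and the $\proj_2$-part. The paper's argument is: from the fixpoint $p=c^1(b^1(p))$ and the fact that the composed tree $T'\circ\proj_1\circ T$ represents $e^1$, Lemma~\ref{lem:su-tree-leaves} forces every input variable whose atom lies outside $\supp(e)$ to appear in the $\proj_1$-leaf of $T$ at $p$; hence such a variable is neither queried nor in the $\proj_2$-leaf, and $b^2$ never sees it, giving $b^2(p)=b^2(\pi(p))$. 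Your sketch has the right ingredients but you should reorganise it around leaf variables rather than an undifferentiated notion of ``read''.
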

\begin{proof}

We take $x, e, y \in S$ and a $\supp(e)$-permutation
such that $e$ is an idempotent and $y$ is a prefix of $e$,
and we show that they satisfy the locality equation, i.e.:
\[ \lambda(xey) = \lambda(\pi(x)ey) \]
By definition of $\lambda$ this is equivalent to showing that:
\[ \proj_2(q_0\, xey) = \proj_2(q_0\, \pi(x)ey)\comma \]
where $q_0\, xey$ is a notation for $(x e y)(q_0)$ -- remember that $q_0$ is a function.
Denote 
$q := \proj_1(q_0 x e)$, and notice that since $\pi$ is a $\supp(e)$-permutation 
and since $q_0$ is equivariant, we know that $\pi(q) = \proj_1(q_0 \pi(x) e)$. 
This means that we need to show that:
\[ \proj_2(q\, y) = \proj_2(\pi(q)\, y) \]
Let us pick some single-use decision tree $T$, that represents the function $y$, 
and let us consider the trees $\proj_1 \circ T$ and $\proj_2 \circ T$
(obtained using the construction from Claim~\ref{claim:su-trees-compose-tup}).
We are going to show that the leaf of $\proj_1 \circ T$, that is 
reached when computing $(\proj_1 \circ T) (q)$, contains all the input variables $x_i$, such 
that $\pi(x_i) \neq x_i$. This is enough to prove that $\proj_2(q \, y) = \proj_2(\pi(q) \, y )$, 
because thanks to the single-use restriction, we know that if all variables
that are modified by $\pi$ appear in the leaf of $\proj_1 \circ T(q)$,
then they cannot appear in the queries or in the leaf while computing $(\proj_2 \circ T)(q)$.
(This is because the queries for computing $(\proj_2 \circ T)(q)$ and $(\proj_1\circ T)(q)$ 
are equal, and the output variables that appear in the leaf of $T(q)$ are
partitioned between the leaves for $(\proj_1 \circ T)(q)$ and $(\proj_2 \circ T)(q)$.)\\

This leaves us with showing that the leaf of $(\proj_1 \circ T) (q)$
contains all input variables for which $\pi(x_i) \neq x_i$. For this, 
we notice that since $y$ is a prefix of $e$, there exists a $y'$ 
such that $yy' = e$. Notice that $q$ is a fixpoint of $y' \circ \proj_1 \circ y$:
\[ y'( \proj_1 (y(q))) = (y y')(q) = e(q) = e(q_0 \, xe) \stackrel{\textrm{notation}}{=}q_0 \, xee = q_0 \, xe = q\]
Let us pick some tree $T'$ that corresponds to $y'$ and let us consider the tree 
\[T' \circ \proj_1 \circ T\]
By construction (from Claim~\ref{claim:su-trees-compose-tup}), 
we know that the leaf corresponding to $(T' \circ \proj_1 \circ T)(q)$, can only contain 
those input variables that were present in the leaf of $(\proj_1 \circ T)(q)$:
\picc{su-tuples-compose-leaf-var}

It follows that all input atoms of $q$ that are not present in $(\proj_1 \circ T)(q)$
as variables have to appear in the leaf of $(T' \circ \proj_1 \circ T)(q)$ as atomic constants.
The tree $T' \circ \proj_1 \circ T$ represents the function $y' \circ \proj_1 \circ y$, 
which (by the definition of the product in $S$) is equal to $y' \cdot y  = e$.
By Lemma~\ref{lem:su-tree-leaves} (stated below), it follows all atoms that appear in the leaves of $(T' \circ \proj_1 \circ T)$
have to appear in $\supp(e)$. It follows that all input variables that do not belong to $\supp(e)$ have to appear 
in the leaf of $(\proj_1 \circ T)(q)$. This finishes the proof, because $\pi$ is a $\supp(e)$-permutation.
\end{proof}

We finish this section by proving Lemma~\ref{lem:su-tree-leaves}: 
\begin{lemma}
\label{lem:su-tree-leaves}
    Let $T$ be a single-use decision tree, and let $f_T$ be the single-use function represented by $T$. 
    If a leaf of $T$ contains an atomic constant $a \in \atoms$, then $a \in \supp(f_T)$.
\end{lemma}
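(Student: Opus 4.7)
I would prove the contrapositive by strong induction on the size of $T$: assume $a$ appears in some leaf of $T$ and show $a \in \supp(f_T)$.

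For the base case, where $T$ is a single leaf with constructor $\coproj_i(v_1,\ldots,v_l)$, each $v_k$ is either an input variable or an atomic constant. The function $f_T$ simply substitutes the input into the variable slots and keeps the constants unchanged. A permutation $\pi$ fixes $f_T$ iff it fixes every atomic constant in the constructor, since swapping a constant $a$ with a fresh $c$ would change the output at the corresponding position on any input that avoids $a$ and $c$. Hence every atomic constant appearing in the leaf lies in $\supp(f_T)$.

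For the inductive step, write $T = (q, T_Y, T_N)$ and suppose WLOG that the leaf containing $a$ lives in $T_Y$. The induction hypothesis applied to the smaller tree $T_Y$ gives $a \in \supp(f_{T_Y})$. For contradiction, assume $a \notin \supp(f_T)$, pick $b \in \atoms$ fresh for $T$, and let $\pi = \swap{a}{b}$, so $\pi(f_T) = f_T$. I then split on the root query $q$. In Case (i), where $q$ does not mention the constant $a$ (so $q$ is $x_i = x_j$ or $x_i = c$ with $c \ne a$), the permutation $\pi$ commutes with $q$, hence $q(\bar x) = q(\pi(\bar x))$ always. Unfolding $\pi(f_T) = f_T$ on the Yes region of $q$ yields $f_{T_Y}(\pi(\bar x)) = \pi(f_{T_Y}(\bar x))$ for every $\bar x$ with $q(\bar x) = \textrm{Yes}$. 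By the single-use restriction, the variables queried by $q$ do not reappear in $T_Y$, so $f_{T_Y}$ does not depend on those coordinates; consequently I can modify any input on those coordinates to force $q(\bar x) = \textrm{Yes}$ without changing $f_{T_Y}(\bar x)$. This propagates the equation to every $\bar x$, giving $\pi(f_{T_Y}) = f_{T_Y}$ and contradicting the induction hypothesis.

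The genuinely hard case is Case (ii), where $q$ is $x_i = a$, because $\pi$ swaps the Yes and No regions of $q$ rather than preserving them. Partitioning inputs according to whether $\bar x_i$ equals $a$, $b$, or neither, the equation $\pi(f_T) = f_T$ yields three relations tying $f_{T_Y}$ and $f_{T_N}$ together; for instance, on inputs with $\bar x_i = a$ one obtains $f_{T_N}(\pi(\bar x)) = \pi(f_{T_Y}(\bar x))$, and on inputs with $\bar x_i \notin \{a,b\}$ one obtains $f_{T_N}(\pi(\bar x)) = \pi(f_{T_N}(\bar x))$. Since the single-use restriction makes both $f_{T_Y}$ and $f_{T_N}$ independent of the $i$-th coordinate, I can combine these relations to conclude $f_{T_Y} = f_{T_N}$ as functions on the whole domain. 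Then $f_T$ collapses to $f_{T_Y}$ on every input, so $\supp(f_T) = \supp(f_{T_Y}) \ni a$, a contradiction. The main obstacle is precisely this case: the naive strategy of picking an input $\bar x$ that reaches the $a$-containing leaf and applying $\pi$ breaks down exactly when the path to that leaf forces $\bar x_i = a$, because $\pi(T)$ then diverges on $\bar x$ and may land at another leaf whose variable substitutions accidentally reproduce $a$ at the critical position; the algebraic argument above avoids this trap by reasoning globally about the subtree functions, invoking single-use on the subtrees rather than on individual path traversals.
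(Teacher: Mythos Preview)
Your argument is correct and complete. The two cases are handled soundly: in Case~(i) the key step---extending the equivariance of $f_{T_Y}$ from the Yes-region to the whole domain by exploiting that $f_{T_Y}$ is independent of the queried coordinates---is exactly right; in Case~(ii) your three relations do combine as you claim, since independence from $x_i$ lets you transport each relation to all inputs, yielding $\pi(f_{T_Y}(\bar x)) = f_{T_N}(\pi(\bar x)) = \pi(f_{T_N}(\bar x))$ everywhere and hence $f_{T_Y}=f_{T_N}$. (Minor quibble: you call this ``contrapositive'' but it is really a proof by contradiction inside a direct induction.)

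The paper takes a different route. Rather than assuming $a\notin\supp(f_T)$ and pushing a swap permutation through the tree, it works constructively: at each inductive step it builds, equivariantly from $f_T$ together with constants other than $a$, a value $y$ with $a\in\supp(y)$, and then invokes the support-preservation lemma for equivariant functions to conclude $a\in\supp(f_T)$. In the case $b\ne a$ the constructed $y$ is the function $f_T$ with $x_i$ pinned to $b$, which coincides with $f_{T_Y}$; in the case $b=a$ (and $f_{T_Y}\ne f_{T_N}$) it defines $y$ as the unique atom $c$ at which the restriction $x_i:=c$ yields a function different from every other restriction $x_i:=d$---and observes that this unique $c$ is $a$. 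Your approach is more hands-on and symmetric across the two cases; the paper's is shorter in the hard case because it sidesteps the three-way partition by directly characterising $a$ as a definable ``sensitive point'' of $f_T$. Both rely on the same single-use fact (subtrees ignore the root's variables), just packaged differently.
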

\begin{proof}
    We consider the case where $T$ is of type $\atoms^k \suto Y$ (the proof can be 
    easily extended to the general case). We pick an $a$ that appears in a leaf of $T$ and 
    we show how to use $f_T$ and atomic constants other than $a$ to construct a value 
    $y$ such that $a \in \supp(y)$. As long as this construction is equivariant, 
    it follows by Lemma~\ref{lem:fs-functions-preserve-supports} that $a \in \supp(f_T)$.
    The proof goes by induction on the depth of $T$:\\

    If $T$ is a leaf, we take a tuple of $k$ atoms $(b_1, \ldots, b_k)$
    other than $a$, and define $y$ as $f_T(b_1, \ldots, b_k)$. Since $T$ is a leaf, 
    and $a$ appears in $T$, we know that $a \in \supp(y)$.\\
    
    For the induction step, 
    we assume that the query in the root is of the type $x_i = b$;
    as the case of $x_i = x_j$ is analogous but simpler.
    We consider two subclasses: $b \neq a$ and $b = a$.
    First, let us deal with $b \neq a$. We assume
    that the leaf with $a$ belongs to the $\yes$-subtree
    (the case for the $\no$-subtree is analogous), and 
    we construct  $y \in \atoms^k \suto Y$ as the following function:
    \[ (x_1, \ldots, x_k) \mapsto f_T(x_1, \ldots, x_{i-1}, b, x_{i+1}, \ldots, x_{k}) \]
    Such $y$ is equal to the function defined by the $\yes$-subtree, and 
    since the $\yes$-subtree contains $a$ in its leaf, it follows by the induction assumption that 
    $a \in \supp(y)$.\\

    The case where $b = a$ is harder: We can assume that the $\yes$-subtree 
    and the $\no$-subtree define two different functions (or otherwise we can 
    directly apply the induction assumption to the subtree with $a$) and we define $y \in \atoms$ 
    as \emph{the only atom $c \in \atoms$ such that for every $d \in \atoms$, 
    such that $d \neq c$, the following functions are different}:
    \[ \begin{tabular}{c}
        $(x_1, \ldots, x_k) \mapsto f_T(x_1, \ldots, x_{i-1}, c, x_{i+1}, \ldots, x_{k})$\\
        $(x_1, \ldots, x_k) \mapsto f_T(x_1, \ldots, x_{i-1}, d, x_{i+1}, \ldots, x_{k})$ 
    \end{tabular}\]
    It is not hard to see that the only such $c$ is equal to $a$, which means that $ a \in \supp(y)$. 
\end{proof}

\section{Factorization forest theorem}
\label{sec:factorisation-forest-theorem}
As mentioned before, the translation from local semigroup transductions to compositions of primes
is the hardest part of the proof of Theorem~\ref{thm:kr}. We split it into two sections:
In this section, we define \emph{factorization trees} and show how to construct them using
compositions of primes. In the next section, we use factorization trees to construct the output of local semigroup transductions.\\

A \emph{factorization tree} for a sequence $s_1, s_2, \ldots, s_n$ over a semigroup $S$ is a
tree labelled by elements of $S$. It has $n$ leaves that correspond to the input positions -- 
the $i$-th leaf is labelled with $s_i$. The inner nodes of a factorization tree correspond
to infixes of the input sequence and are labelled by the product of that infix.
Here are three examples of factorization trees for the infinite semigroup $(\nat, +)$, over the following sequence:
\[1\ 2\ 1\ 3\ 2\ 2\ 1\ 3\]
\bigpicc{factorization-trees-ex}

An important parameter of a factorization tree is its height. Notice that if nodes are allowed to have an unbounded number of children,
then every sequence admits a decomposition tree of height $1$. On the other hand,
if every node can only have at most two children, then the height of a decomposition tree over
$s_1, \ldots, s_n$ cannot be lower than $\log_2(n)$. A compromise between those two
approaches is to require that nodes with more than two children respect the structure
of $S$. An example of this approach is the following \emph{idempotency condition}: We say that
a node is \emph{idempotent} if all of its children are labelled by the same idempotent
from $S$. A factorization tree satisfies the idempotency condition if
all of its nodes are either binary or idempotent. Here is an example of an idempotent
factorization tree for $S= \left(\,P(\{a, b, c\}),\ \cup\,\right)$:
\smallpicc{simons-tree-ex}

Idempotent factorization trees are called \emph{Simon's factorization trees} after
Imre Simon, who has shown that finite semigroups admit idempotent factorization trees of
bounded height\footnote{The original theorem shows this for monoids, but the same proof
can be applied to semigroups. It is also worth noting that in \cite{simon1990factorization} Simon's factorization trees 
are referred to as Ramseyan factorization trees.}:

\begin{theorem}[{\cite[Theorem~3.3]{simon1990factorization}}]
\label{thm:simons-factorisation}
    For every finite semigroup $S$ there exists an $h_S \in \nat$, such that every
    sequence $s_1, s_2, \ldots, s_n \in S^*$ admits an idempotent factorization tree
    of height at most $h_S$.
\end{theorem}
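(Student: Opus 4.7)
The plan is to prove the theorem by induction on $|S|$, showing a bound of the form $h_S = O(|S|)$. The base case $|S|=1$ is immediate: the unique element is forced to be idempotent, and any sequence has a flat tree of height $1$ whose root groups all the leaves. For the inductive step, the strategy is to produce a factorization in which either the whole sequence collapses to a single idempotent node, or it splits into boundedly many infixes whose product lies in a proper subsemigroup of $S$; in the second case the induction hypothesis applies on the smaller semigroup.

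Given a sequence $s_1, \ldots, s_n$, I would first form the prefix products $p_i = s_1 \cdots s_i$ and apply pigeonhole: if $n$ is at least $|S|+1$, then some value $p \in S$ is attained at least twice, say $p_{i} = p_{j}$ with $i < j$. Iterating this observation gives, whenever $n$ is large compared to $|S|$, a long arithmetic progression $i_0 < i_1 < \cdots < i_k$ on which the prefix product is constant, so that each infix product $e_\ell := s_{i_\ell + 1} \cdots s_{i_{\ell+1}}$ lies in a fixed $\mathcal{H}$-class and, after a bounded refinement (the usual Ramsey-on-products argument), all the $e_\ell$ are equal to a common idempotent $e$. This yields the big idempotent node of the factorization tree, grouping the central infixes $[i_0+1, i_1], \ldots, [i_{k-1}+1, i_k]$ under a root labelled $e$.

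The remaining pieces are the prefix $s_1 \cdots s_{i_0}$, the suffix $s_{i_k + 1} \cdots s_n$, and, under each of the $k$ children of the idempotent node, an infix whose product equals $e$. For the short prefix and suffix I would either recurse directly (they have length bounded by $|S|$ times the Ramsey threshold, so a balanced binary split already gives logarithmic height in a constant) or absorb them into the outermost structure with $O(1)$ additional binary nodes. For each child infix, the key semigroup-theoretic observation is that its product $e$ is idempotent and that all further subproducts occurring inside it lie in $e S e$, which is a proper subsemigroup of $S$ whenever $e$ is not the identity of $S$ (and if $S$ is a monoid with identity as its only idempotent reached this way, the group-like structure forces $n$ to be bounded and we conclude directly). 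Applying the induction hypothesis inside $e S e$ gives each child a factorization tree of height at most $h_{eSe}$, so the whole tree has height at most $h_{eSe} + O(1) \le h_{S'} + c$ for some proper $S' \subsetneq S$.

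The main obstacle is the bookkeeping that makes the reduction genuinely go from $S$ to a \emph{strictly} smaller semigroup at each step, uniformly in the input sequence. This is where Green's relations enter implicitly: one needs that the idempotent $e$ produced by the Ramsey argument can be chosen so that $eSe \subsetneq S$, or, if no such $e$ exists, that the sequence itself has bounded length (so a trivial binary tree of height $O(\log |S|)$ suffices). I would organize the induction so that the inductive parameter is not just $|S|$ but the pair $(|S|, \text{position of } S \text{ in the } \mathcal{J}\text{-order})$, which makes the "strictly smaller" guarantee automatic. The resulting recurrence $h_S \le h_{S'} + c$ telescopes to $h_S = O(|S|)$, completing the proof.
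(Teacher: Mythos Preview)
The paper does not give its own proof of this theorem; it is cited from Simon's original work as background for the orbit-finite generalization (Theorem~\ref{thm:smooth-factorisation-trees}). However, your proposal has a genuine gap that breaks the induction.

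The central error is the claim that inside a block whose product equals the idempotent $e$, ``all further subproducts occurring inside it lie in $eSe$''. This is false. Take $S = \{a, b, 0\}$ with $a^2 = a$, $b^2 = b$, $ab = ba = 0$, and $0$ absorbing; the block $a, b$ has product $0$ (idempotent), but the subproduct $a$ does not lie in $0 \cdot S \cdot 0 = \{0\}$. More to the point, the individual letters $s_j$ inside such a block are arbitrary elements of $S$, so to build a factorization tree for the block you must work over $S$, not over $eSe$. The induction hypothesis for the smaller semigroup therefore does not apply to the subsequence you need, and the recursion collapses. There are secondary issues as well: the Ramsey refinement you invoke does produce a common idempotent, but its cost is a Ramsey number in $|S|$ colours, not $O(|S|)$; and the block products $e_\ell$ need not lie in a fixed $\mathcal{H}$-class, only in the right stabilizer of the repeated prefix value.

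The standard proofs of Simon's theorem --- and the paper's proof of its orbit-finite smooth variant in Lemma~\ref{lem:partial-splits-primes} --- induct on the $\mathcal{J}$-height of the semigroup rather than on $|S|$. The key step is not to pass to a subsemigroup, but to cut the sequence into maximal blocks whose products remain in the current $\mathcal{J}$-class; the boundaries between consecutive blocks force a strict $\mathcal{J}$-drop, which is what drives the outer induction. Inside a single $\mathcal{J}$-class one then exploits the group structure of its $\mathcal{H}$-classes. Your final sentence gestures at the $\mathcal{J}$-order, but without replacing the $eSe$ mechanism by this $\mathcal{J}$-drop argument it does not rescue the proof.
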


\noindent
This theorem does not directly extend to orbit-finite semigroups:
\begin{example}
\label{ex:simons-fails-ofm}
Consider $S = \atoms^2$, with the operation defined as follows:
\[ (x_1, x_2) \cdot (y_1, y_2) = (x_1, y_2)\]
Let us show that $S$ does not admit idempotent factorization trees of 
bounded height. Notice first, that the only idempotents in $S$ are
elements of the form $(a, a)$. Consider the following family of sequences $l_n \in S^*$:
\[  
    l_n = \begin{tabular}{ccccc}
    $(1, 2)$ & $(2, 3)$ & $(3, 4)$ & \ldots & $(n-1, n)$,
    \end{tabular}
\]
and notice that no infix of $l_n$ evaluates to an idempotent. It follows that
all nodes in an idempotent factorization tree over $l_n$ have to be binary.
This means that the height of an idempotent factorization tree over $l_n$ is at least $\log_2(n)$,
which is not bounded by any $h_M$. 
\end{example}

It follows from Example~\ref{ex:simons-fails-ofm} that if we want to have an equivalent
of Theorem~\ref{thm:simons-factorisation} that works for orbit-finite semigroups, 
we need to relax the idempotency condition. We are going to 
define this relaxation in terms of \emph{smooth sequences}, which are in turn defined in 
terms of the \emph{Green's infix relation:}\footnote{
    Green's relations were introduced and studied by James Alexander Green in \cite{green1951structure}.}
\begin{definition}
    Let $S$ be a semigroup. We say that $x \in S$ is \emph{an infix of} $y \in S$ if
    for some $a, b \in S^1$ (from Claim~\ref{claim:semigroup-to-monoid}),
    it holds that $axb = y$. The infix relation is
    a preorder -- i.e. it is reflexive and transitive. If $x$ and $y$ are each other's
    infixes, we say that they are \emph{infix equivalent} (or \emph{$\mathcal{J}$-equivalent}).
    This is an equivalence relation and its equivalence classes are called \emph{infix classes}
    or \emph{$\mathcal{J}$-classes}. 
\end{definition}

\begin{example}
\label{ex:inifx-classes}
   Consider the semigroup $S = \atoms^2 + \bot + 1$, where $1$ is the identity element 
   and otherwise, the operation is defined as follows (this semigroup is equal to the monoid from Example~\ref{ex:no-rep-monoid}):
   \[
       \begin{tabular}{ccc}
        $ x \cdot \bot = \bot$ & $\bot \cdot x = \bot$ & $(x_1, x_2) \cdot (y_1, y_2) = \begin{cases}
            (x_1, y_2) &  \textrm{if } x_2 \neq y_1\\
            \bot & \textrm{otherwise}
        \end{cases}$
           
       \end{tabular} 
   \]
   This semigroup has three infix classes: $\{1\}$, $\{\bot\}$, and $\{(x, y) \ | \ x, y \in \atoms\}$. 
\end{example}

\noindent
We now are ready to define smooth sequences:
\begin{definition}
    Let $S$ be a semigroup. We say that a sequence $s_1, s_2, \ldots, s_n \in S^*$ is \emph{smooth}
    if each $s_i$ is $\mathcal{J}$-equivalent to the product of the sequence (i.e. $s_1 \cdot s_2 \cdot \ldots \cdot s_n$).
    (In particular, this means that all $s_i$'s are pairwise $\mathcal{J}$-equivalent,
    but this is not a sufficient condition for a sequence to be smooth.)
\end{definition}
\begin{example}
\label{ex:smooth-seq}
    Consider the semigroup from Example~\ref{ex:inifx-classes}.
    Here is an example of a smooth sequence:
    \[ (1,2)\ (3, 7)\ (4, 9)\ (7, 19)\]
    Here are three examples of non-smooth sequences:
    \[ \begin{tabular}{ccc}
        $(1,2)\ (3, 7)\ (7, 9)\ (1, 3)$; &
        $(1, 3)\ \bot\ (2, 9)\ (7, 3)$; &
        $(7, 3)\ 1\ (4, 8)$
    \end{tabular}
    \]
\end{example}
\noindent
We are now ready to define \emph{smooth factorization trees}:
we say that a node of a factorization tree is \emph{smooth} if 
the labels of its children form a smooth sequence. 
We say that a factorization tree is \emph{smooth} if all of its nodes are
either binary or smooth. Notice that the smoothness condition is 
a relaxation of the idempotency condition -- every idempotent factorization 
tree is also a smooth factorization tree. Thanks to this relaxation, 
we can extend Theorem~\ref{thm:simons-factorisation} to work with all orbit-finite semigroups
\footnote{Although a proof of this theorem can be found in \cite[Lemma16]{single-use-paper}, we include it in this thesis due to its central role in the proof of Theorem~\ref{thm:kr}.}:
\begin{theorem}
\label{thm:smooth-factorisation-trees}
    For every orbit-finite semigroup $S$, there exists an $h_S \in \nat$, such that every
    sequence $s_1, s_2, \ldots, s_n \in S^*$ admits a smooth factorization tree
    of height at most $h_S$.
\end{theorem}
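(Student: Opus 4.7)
The plan is to prove the theorem by induction on the number of $\mathcal{J}$-classes of $S$, which is finite because $\mathcal{J}$-equivalence is equivariant and $S$ has only finitely many orbits, so each $\mathcal{J}$-class is a union of finitely many orbits. Write $k = |S/\mathcal{J}|$. The base case $k = 1$ is immediate: every sequence is smooth by definition, so a single root node with all $n$ leaves as children suffices, giving a tree of height $1$.

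For the inductive step, consider a sequence $s_1, \ldots, s_n$ with product $p$ in $\mathcal{J}$-class $J$. Since $p = 1 \cdot (s_1 \cdots s_i) \cdot (s_{i+1} \cdots s_n)$, each left partial product $\pi_i^L = s_1 \cdots s_i$ has $p$ as an infix, so the $\mathcal{J}$-classes of the $\pi_i^L$ form a non-increasing chain in the $\mathcal{J}$-order that ends in $J$; let $i^*$ be the first index at which $\pi_i^L$ enters $J$, after which it remains in $J$. Symmetrically, for the right partial products $\pi_j^R = s_j \cdots s_n$, let $j^*$ be the largest index with $\pi_j^R \in J$. This partitions the input into a \emph{prefix} $s_1 \cdots s_{i^*-1}$ whose left partial products stay strictly above $J$, a \emph{central block} $s_{i^*} \cdots s_{j^*}$ whose left and right partial products both stay in $J$, and a \emph{suffix} $s_{j^*+1} \cdots s_n$ whose right partial products stay strictly above $J$. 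For the prefix and suffix, every sub-product is an infix of an element strictly above $J$ and is therefore itself strictly above $J$. I would apply the induction hypothesis to the quotient semigroup obtained by collapsing the equivariant two-sided ideal $\{x \in S : x \leq_{\mathcal{J}} J\}$ to a single zero element; this quotient is orbit-finite with strictly fewer $\mathcal{J}$-classes, and the factorization trees it produces have no zero label, so they lift back to $S$-labelled smooth factorization trees for the prefix and the suffix. The three subtrees are then glued by a constant number of binary nodes at the root.

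The hard part will be the central block. One would like to place a single smooth node directly over it, but the individual elements $s_i$ inside need not all lie in $J$: an element strictly above $J$ can be absorbed into $J$ by its surrounding context, so taking the $s_i$ themselves as children of a smooth node would violate smoothness. The task becomes grouping consecutive positions into maximal sub-blocks whose sub-products do fall into $J$, so that the coarser sequence of sub-block products is genuinely smooth and admits a single smooth node over recursively built sub-block subtrees. The main obstacle is replacing Simon's classical argument, which exploits the fact that a power $x^{|S|!}$ is idempotent in a finite semigroup, by an orbit-finite analogue: no such uniform idempotent power exists in general, but this is exactly what the relaxation from equality of labels to $\mathcal{J}$-equivalence is designed to compensate for. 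Concretely, I expect to need a Ramsey-style regrouping argument on the finitely many $\mathcal{J}$-classes, together with an analysis of the internal Green's $\mathcal{H}$-class structure of $J$ (which is itself orbit-finite), to produce bounded-height subtrees over the central block. Combining the three pieces yields a uniform bound $h_S$ that depends only on $S$, completing the induction.
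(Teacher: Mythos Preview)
The proposal has a genuine error in its inductive parameter. You claim that $S$ has only finitely many $\mathcal{J}$-classes because ``$\mathcal{J}$-equivalence is equivariant \ldots\ so each $\mathcal{J}$-class is a union of finitely many orbits.'' This inference is wrong: equivariance of $\equiv_{\mathcal{J}}$ only means that atom permutations \emph{permute} the $\mathcal{J}$-classes, not that each class is invariant. Take $S = \binom{\atoms}{\leq 3} + \bot$ with union as product (Example~\ref{ex:monoid-at-most-3}): here $\{a\} \equiv_{\mathcal{J}} \{b\}$ would force $\{a\}=\{b\}$, so the $\mathcal{J}$-class of each singleton $\{a\}$ is itself a singleton, and there are infinitely many such classes. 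What \emph{is} finite is the $\mathcal{J}$-\emph{height} (the length of the longest strict $\mathcal{J}$-chain); the paper proves this by showing that two elements in the same orbit are always $\mathcal{J}$-equivalent or $\mathcal{J}$-incomparable, so a strict chain meets each orbit at most once. Your induction must be reorganised around this parameter.

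There is a second, related gap. Even after switching to $\mathcal{J}$-height, your recursive handling of the ``central block'' does not terminate as written: you propose to group consecutive positions into sub-blocks whose products fall into $J$ and then recursively build subtrees for the sub-blocks, but the elements inside a sub-block are arbitrary elements of $S$ (possibly in $J$ itself), so nothing has decreased. Moreover, the quotient you form by collapsing $\{x : x \leq_{\mathcal J} J\}$ is not in general an \emph{equivariant} orbit-finite semigroup, since $J$ depends on the particular input sequence, so your induction hypothesis does not apply to it. The paper sidesteps both issues with a different decomposition: rather than anchoring everything to the $\mathcal{J}$-class of the total product, it greedily cuts the sequence left-to-right into \emph{almost smooth} blocks (a smooth segment followed by one extra letter that breaks smoothness), uses the locality of smoothness (Claim~\ref{claim:smooth-local}) to see this is well-defined, and shows that each such block's product has strictly smaller $\mathcal{J}$-height than the maximum height appearing in the sequence. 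One then places binary nodes over these block products and recurses on the resulting coarser sequence, with no quotient semigroup and no distinguished ``central'' region.
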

In fact, to complete the proof of Theorem~\ref{thm:kr},
we require a slightly stronger version of Theorem~\ref{thm:smooth-factorisation-trees}:
in addition to proving the existence of bounded smooth factorization trees, 
we need to show that they can be constructed using compositions of primes.
This means that we need a way of representing factorization trees as words.
For this we use a version of \emph{splits} defined by Colcombet in \cite[Section~2.3]{colcombet2007combinatorial}:
\begin{definition}
    Let $S$ be a semigroup. A split of height $h$ over a sequence $s_1, s_2, \ldots s_n \in S^*$
    is a function $t: \{1, \ldots, n\} \to \{1, \ldots, h\}$, which assigns a height to every position of the input sequence.
    A split defines the following forest structure on the positions of the sequence -- 
    a position $i$ is a \emph{descendant} of $j$ if:
    \begin{enumerate}
        \item the position $i$ is to the left of (or equal to) $j$, i.e. $i \leq j$; and
        \item the position $i$ is visible from the position $j$, which means that 
              the heights of all positions between $i$ and $j$ (including $i$, but excluding $j$)
              are strictly lower than the height of~$j$.
    \end{enumerate}
    Note that the descendants of every position form a contiguous infix of the input sequence. 
    Here is an example split, together with an example set of descendants:
    \picc{ex-split-desc}
    \noindent
    We say that two positions $i$ and $j$ are \emph{siblings} if:
    \begin{enumerate}
    \item they have equal heights; and
    \item all positions between $i$ and $j$ (excluding both $i$ and $j$)
          are strictly lower than $i$ and $j$.
    \end{enumerate}
    Being siblings is an equivalence relation. Here is an example split 
    partitioned into sibling equivalence classes:
    \picc{ex-split-siblings}
    To avoid a potential confusion, let us clarify that being siblings is different from being descendants 
    of the same position. For instance, positions $3$ and $4$ are both descendants of position $9$, 
    but they are not siblings. Conversely, positions $2$ and $9$ 
    are siblings, but they are not descendants of any position.\\

    The \emph{split value} of a position is the semigroup product of the $s_i$-values of all of its descendants
    (this includes the position itself). Here is an example split over a sequence of elements of the semigroup 
    defined in Example~\ref{ex:smooth-seq}, where every position is annotated with its split value:
    \bigpicc{example-split-values}
    A \emph{sibling subsequence} is a sequence containing all positions 
    from a sibling equivalence class, where each position is labelled by its split value.
    We say that a split is \emph{smooth} if all of its sibling subsequences are smooth.
    The split presented in the previous example is smooth. Here is a picture of the same 
    split where all sibling subsequences are marked in orange:
    \bigpicc{example-split-values-subseq}
\end{definition}

Observe that every smooth split of height $h$ can be transformed 
into a smooth factorization tree of height $2h + 1$. For example, consider the following split:
\picc{ex-split-to-tree-1}
\noindent
It can be transformed into the following factorization tree:
\picc{ex-split-to-tree-2}
It is not hard to see that a similar construction works for every smooth split.
This means that we
can prove Theorem~\ref{thm:smooth-factorisation-trees} by showing how to  
construct smooth splits of bounded heights.
Before we do this, we need to briefly discuss polynomial orbit-finite representations
of orbit-finite sets (that are not necessarily polynomial):
\begin{definition}
\label{def:pof-representation}
    Let $X$ be an orbit-finite set. We say that a polynomial orbit-finite set $\Sigma$ together with
    a partial function $r: \Sigma \eqto X + \bot$ is a \emph{polynomial orbit-finite representation}
    of $X$ if:
    \begin{enumerate}
        \item $r$ is surjective -- i.e. every element from $X$ has a representation in $\Sigma$;
        \item $r$ preserves least supports -- i.e. if $r(x)$ is defined, then $\supp(x) = \supp(r(x))$;
    \end{enumerate}
\end{definition}
\begin{lemma}
\label{lem:pof-represenation}
    Every orbit-finite set $X$ has a polynomial orbit-finite representation.
\end{lemma}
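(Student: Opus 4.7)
The plan is to reduce to the single-orbit case and then leverage the straight, support-reflecting representation from Lemma~\ref{lem:straight-reflect-repr}, enlarging its domain to a polynomial orbit-finite set by making the function partial.

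First I would decompose $X$ as a disjoint sum of its orbits, $X = X_1 + \ldots + X_m$. Since polynomial orbit-finite sets are closed under disjoint sums (by Definition~\ref{def:pofs}), it suffices to produce for each orbit $X_i$ a polynomial orbit-finite set $\Sigma_i$ together with a partial equivariant function $r_i : \Sigma_i \eqto X_i + \bot$ satisfying the two conditions of Definition~\ref{def:pof-representation}; one can then take $\Sigma = \Sigma_1 + \ldots + \Sigma_m$ and define $r$ componentwise.

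For a single-orbit $X_i$, I would apply Lemma~\ref{lem:straight-reflect-repr} to obtain a surjective, support-reflecting equivariant function $f_i : \atoms^{(k_i)} \to X_i$, for some $k_i$. The set $\atoms^{(k_i)}$ is not polynomial orbit-finite (since being a tuple of pairwise distinct atoms is not a constraint one can express with products and sums alone), but it sits inside $\atoms^{k_i}$, which is polynomial orbit-finite. I would therefore define $\Sigma_i := \atoms^{k_i}$ and set
\[
    r_i(\bar{x}) = \begin{cases} f_i(\bar{x}) & \text{if $\bar{x} \in \atoms^{(k_i)}$, i.e.\ its entries are pairwise distinct,} \\ \bot & \text{otherwise.} \end{cases}
\]
The condition of having pairwise distinct entries is equivariant, so $r_i$ is a well-defined equivariant partial function. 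Surjectivity onto $X_i$ follows from surjectivity of $f_i$. For the support condition, take $\bar{x} \in \atoms^{(k_i)}$: Lemma~\ref{lem:fs-functions-preserve-supports} (applied to the equivariant $f_i$) gives $\supp(f_i(\bar{x})) \subseteq \supp(\bar{x})$, while the support-reflecting property of $f_i$ gives the reverse inclusion. Hence $\supp(r_i(\bar{x})) = \supp(\bar{x})$, as required.

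I do not expect a serious obstacle here: the heavy lifting is done by Lemma~\ref{lem:straight-reflect-repr}, and the only new ingredient is to embed the straight domain $\atoms^{(k)}$ into the polynomial orbit-finite set $\atoms^k$ at the cost of making the representation partial. The one point that deserves care is verifying that the resulting $r : \Sigma \to X + \bot$ is genuinely equivariant after taking the disjoint sum over orbits, which is immediate because each $r_i$ is equivariant and disjoint sums of equivariant maps are equivariant.
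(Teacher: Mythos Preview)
Your proposal is correct and follows essentially the same approach as the paper: apply Lemma~\ref{lem:straight-reflect-repr} to get a support-reflecting surjection from a disjoint sum of sets $\atoms^{(k_i)}$, then enlarge each $\atoms^{(k_i)}$ to the polynomial orbit-finite $\atoms^{k_i}$ by declaring the function undefined on tuples with repeated atoms. The only cosmetic difference is that the paper invokes Lemma~\ref{lem:straight-reflect-repr} once on all of $X$ (obtaining the disjoint sum directly), whereas you first split $X$ into orbits and apply the lemma orbit-by-orbit; the content is identical.
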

\begin{proof}
    Thanks to Lemma~\ref{lem:straight-reflect-repr}, we know that there is a
    surjective total function that preserves least supports:
    \[r' : \atoms^{(k_1)} + \atoms^{(k_2)} + \ldots + \atoms^{(k_n)} \to X\]
    To obtain a polynomial orbit-finite representation, we choose the following $\Sigma$:
    \[\Sigma = \atoms^{k_1} + \atoms^{k_2} + \ldots + \atoms^{k_n}\comma\]
    and we define $r$ as follows:
    \[
        r(x) = \begin{cases}
            r'(x) & \textrm{if } x \in \atoms^{(k_1)} + \atoms^{(k_2)} + \ldots + \atoms^{(k_n)}\\
            \bot & \textrm{otherwise} 
        \end{cases}
    \]
    The function $f$ is surjective and support-preserving supports because $r'$ is surjective and support-preserving.
    It is worth noting that $r(x)$ is undefined for those tuples that contain repeating atoms,
    as those tuples may not contain enough distinct atoms to construct elements of $X$. 
\end{proof}

We are now ready to formulate the lemma about constructing smooth splits with compositions of primes (as defined in Theorem~\ref{thm:kr}). 
This is the main technical lemma of this section.
\begin{lemma}
\label{lem:smooth-split-primes}
    For every orbit-finite semigroup $S$ and its polynomial orbit-finite representation
    $r : \Sigma \eqto S$, there exists a natural number $h$ and a function
    \[f : \Sigma^* \to (\underbrace{\{1,2, \ldots, h\}}_{\textrm{split height}} \times \underbrace{\Sigma}_{\textrm{split value}})^*\comma\]
    such that $f$ can be constructed as a composition of primes and such that $f$ outputs a smooth split over the input sequence,
    annotated with the split values.
    Note that the type of $f$ guarantees that the height of the split is at most~$h$.
\end{lemma}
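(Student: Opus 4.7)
The plan is to adapt Colcombet's proof of the factorization forest theorem from \cite{colcombet2007combinatorial} to orbit-finite semigroups, and verify that each step of the construction can be assembled from the primes of Theorem~\ref{thm:kr}. The outer induction is on the depth of the poset of orbits of $\mathcal{J}$-classes of $S$, which is finite because $S$ is orbit-finite and the infix relation is equivariant.

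For the \emph{existence} of bounded smooth splits, at the inductive step I would first identify the $\mathcal{J}$-class of the product of the input sequence. Since there are only finitely many orbits of $\mathcal{J}$-classes, this is finite-state information. I would then mark as ``top'' every position whose product from the previous top position reaches this maximal $\mathcal{J}$-class, assign those positions the maximal height, and note that the intermediate segments evaluate to elements of strictly lower $\mathcal{J}$-classes, where by induction a smooth split of smaller height exists. The top sibling sequence is smooth by construction, since each of its entries is $\mathcal{J}$-equivalent to the total product.

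For the \emph{implementation} via primes, the structural decomposition above (heights and which positions sit in which sibling block) uses only $\mathcal{J}$-class information, which is finite-state; it is therefore implementable by a composition of finite Mealy machines (prime type~1 in Theorem~\ref{thm:kr}). The remaining task is producing the $\Sigma$-annotated split values, which carry atoms. For this I would use equivariant length-preserving homomorphisms (prime type~2) to translate between $\Sigma$ and the tuple form given by Lemma~\ref{lemma:pof-normal-form-su}, and single-use atom propagation (Example~\ref{ex:su-propagation}, prime type~3) to carry individual atoms from their source positions to the sibling-accumulating positions that need them. Parallel composition then assembles heights together with split values.

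The main obstacle will be computing the semigroup product of a smooth sibling sequence in a single-use way. Inside a single $\mathcal{J}$-class the familiar finite Green structure still holds---each $\mathcal{H}$-class is a group, surrounded by finitely many $\mathcal{L}$- and $\mathcal{R}$-classes---but these groups may carry atoms, so a naive ``running product'' Mealy machine would violate the single-use restriction, exactly as in Example~\ref{ex:monoid-products-not-single-use}. I expect to have to prove a structural lemma showing that the product $s_1 \cdots s_n$ of a smooth $\mathcal{J}$-sequence is determined by (a) a finite-alphabet computation on the $\mathcal{L}/\mathcal{R}$-pattern of the $\mathcal{J}$-class, and (b) a bounded collection of atoms, each of which is copied at most once from a uniquely determined source position among $s_1, \ldots, s_n$. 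Such a decomposition is precisely what can be realized by combining finite-alphabet Mealy machines with single-use atom propagation, and it is the crucial ingredient that both finishes the factorization forest theorem in the orbit-finite setting and, together with the inductive structural split, completes the proof of Lemma~\ref{lem:smooth-split-primes}.
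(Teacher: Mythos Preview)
Your high-level plan---induction on $\mathcal{J}$-height, with the crux being the computation of the product of a smooth sequence in a single-use way---matches the paper's. But the structural lemma you propose for the smooth product is not sharp enough, and in one place it is wrong.

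First, you say the computation decomposes into ``a finite-alphabet computation on the $\mathcal{L}/\mathcal{R}$-pattern of the $\mathcal{J}$-class''. In an orbit-finite semigroup a single $\mathcal{J}$-class can have infinitely many $\mathcal{L}$- and $\mathcal{R}$-classes (only finitely many \emph{orbits} of them), so this pattern is not a finite alphabet. What \emph{is} finite is each $\mathcal{H}$-class containing an idempotent (this is \cite[Corollary~2.17]{ley2015logics}), and that is the structural fact the paper actually exploits.

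Second, your proposed mechanism---``a bounded collection of atoms, each copied at most once from a uniquely determined source position''---is too simple. The paper's construction (Lemma~\ref{lem:smooth-product}) proceeds differently: it first introduces a tuple $\bar a$ of \emph{auxiliary fresh atoms}, uses them to pin down an idempotent $e_1$ in the $\mathcal{J}$-class with $\supp(e_1)\subseteq\supp(\bar a)\cup\supp(J(s_1))$, and then decomposes each $s_i=x_i y_i$ so that the cross-terms $g_i=y_i x_{i+1}$ all lie in the $\mathcal{H}$-class of $e_1$. Since that $\mathcal{H}$-class has support exactly $\supp(e_1)$ (Lemma~\ref{lem:h-idemp-eq-supp}), name-abstracting these atoms renders the $g_i$'s \emph{atomless}, and their running product is then a genuine finite-alphabet Mealy computation. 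The atoms of $e_1$ are not ``copied once from a unique source''; they are present in every $\supp(s_i)$ (because $\supp(J(s_i))\subseteq\supp(s_i)$) and in the injected $\bar a$, and this is what licenses the \emph{conditional multiple-use propagation} of Lemma~\ref{lem:multiple-use-propagation}. Finally the auxiliary atoms $\bar a$, which now contaminate the construction, are removed a posteriori by the elimination Lemma~\ref{lem:primes-elim-support}. None of these three ingredients---fresh auxiliary atoms, name abstraction to make the $\mathcal{H}$-group computation atomless, and the support-conditioned multiple-use propagation---appears in your plan, and the simpler ``single-use atom propagation from unique sources'' picture does not suffice to replace them.
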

    
The rest of this section is devoted to proving Lemma~\ref{lem:smooth-split-primes}.
The proof uses induction on the \emph{$\mathcal{J}$-height} of $S$:
\begin{definition}
    Let $S$ be a semigroup. Define a $\mathcal{J}$-chain to be a sequence of elements:
    \[ s_1, s_2, \ldots, s_n\comma \]
    such that every $s_i$ is a \emph{proper infix} of $s_{i+1}$
    (i.e. $s_i$ is an infix of $s_{i+1}$  , but $s_{i+1}$ is not an infix of $s_i$).
    The $\mathcal{J}$-height of $S$ is the length of its longest $\mathcal{J}$-chain (or $\infty$ if $S$ has arbitrarily long
    $\mathcal{J}$ chains). Similarly, the $\mathcal{J}$-height of an element $x \in S$, is the length 
    of the longest $\mathcal{J}$-chain that starts with $x$. 
\end{definition}

\noindent
In order to use $\mathcal{J}$-height as the inductive parameter, we need to know that it is finite:
\begin{lemma}[{\cite[Lemma~9.3]{bojanczyk2013nominal}}]
    If $S$ is orbit-finite then it has a finite $\mathcal{J}$ height. 
\end{lemma}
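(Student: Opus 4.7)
The plan is to show that the $\mathcal{J}$-height is bounded above by the number of $\mathcal{J}$-classes of $S$, and then show that this number is at most the (finite) number of orbits of $S$.

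First I would verify that the infix relation is equivariant: if $y = axb$ for some $a, b \in S^1$, then for any atom permutation $\pi$ we have $\pi(y) = \pi(a)\pi(x)\pi(b)$, since the semigroup operation is assumed equivariant (and $S^1$ inherits equivariance from $S$ by Claim~\ref{claim:semigroup-to-monoid}). Hence $x$ being an infix of $y$ implies $\pi(x)$ is an infix of $\pi(y)$, so the $\mathcal{J}$-equivalence relation is equivariant. It follows that every $\mathcal{J}$-class is an equivariant subset of $S$, and therefore a union of orbits.

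Next I would observe that the infix relation descends to a partial order on $\mathcal{J}$-classes (this is standard: antisymmetry holds by the very definition of $\mathcal{J}$-equivalence). Any $\mathcal{J}$-chain $s_1, s_2, \ldots, s_n$ in which each $s_i$ is a \emph{proper} infix of $s_{i+1}$ must have all $s_i$ lying in distinct $\mathcal{J}$-classes (two $\mathcal{J}$-equivalent elements are infixes of each other, so neither can be a proper infix of the other). Thus the length $n$ of any such chain is bounded by the number of $\mathcal{J}$-classes of $S$.

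Finally, since $S$ is orbit-finite it has finitely many orbits, and each $\mathcal{J}$-class is a (nonempty) union of orbits, so $S$ has at most finitely many $\mathcal{J}$-classes. Combining with the previous step, this bounds the $\mathcal{J}$-height of $S$ by a finite number. I do not anticipate any real obstacle: the argument is essentially that a partial order on a finite set (here, the quotient of the finitely-many orbits by $\mathcal{J}$-equivalence) has finite height. The only mild subtlety worth double-checking is the equivariance of the operation on $S^1$, which is immediate from the construction in Claim~\ref{claim:semigroup-to-monoid}.
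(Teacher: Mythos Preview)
Your argument has a genuine gap at the step ``It follows that every $\mathcal{J}$-class is an equivariant subset of $S$, and therefore a union of orbits.'' Equivariance of the infix relation (and hence of $\mathcal{J}$-equivalence) only tells you that every atom permutation maps $\mathcal{J}$-classes to $\mathcal{J}$-classes --- it permutes them among themselves. It does \emph{not} say each individual class is fixed by all permutations. Concretely, take $S = \atoms + \{0\}$ with the product $xy = 0$ for all $x,y \in S$. This is an equivariant orbit-finite semigroup (two orbits), but the $\mathcal{J}$-class of any $a \in \atoms$ is the singleton $\{a\}$: there is no way to write $a = cbd$ in $S^1$ with $b \neq a$. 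So there are infinitely many $\mathcal{J}$-classes, and your count ``number of $\mathcal{J}$-classes $\leq$ number of orbits'' fails outright.

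The paper's proof targets a weaker but sufficient statement: any two elements in the same orbit are either $\mathcal{J}$-equivalent or $\mathcal{J}$-incomparable (never one a \emph{proper} infix of the other). This immediately bounds the length of a strict $\mathcal{J}$-chain by the number of orbits, without needing finitely many $\mathcal{J}$-classes. The argument is: if $x$ is an infix of $y = \pi(x)$, one may by Claim~\ref{claim:finite-shift} take $\pi$ to move only finitely many atoms, hence $\pi^k = \mathrm{id}$ for some $k$; then equivariance of the infix order gives $x \leq_{\mathcal J} \pi(x) \leq_{\mathcal J} \cdots \leq_{\mathcal J} \pi^k(x) = x$, so $\pi(x) \leq_{\mathcal J} x$ and the two are $\mathcal{J}$-equivalent. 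Your equivariance observations are correct and are the starting point here; what is missing is precisely this finite-order argument in place of the incorrect ``$\mathcal{J}$-classes are unions of orbits'' step.
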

\begin{proof}
    We are going to show that if two elements $x, y \in S$ belong to the same orbit, then they are
    either $\mathcal{J}$-equivalent or $\mathcal{J}$-incomparable.
    This is enough to prove the lemma because it means that the length of every
    $\mathcal{J}$-chain is limited by the number of orbits in $S$.
    Let us take $x$ and $y$ from the same orbit and show that if $x$ is an infix of $y$, then
    also $y$ is an infix of $x$: Since $x$ and $y$ are in the same orbit, then $y = \pi(x)$ for
    some atom permutation $\pi$. The following claim, which follows from \cite[Lemma~1.14]{pitts2013nominal}, lets us assume that $\pi$ touches only 
    finitely many elements\footnote{In fact, in \cite{pitts2013nominal} Pitts defines
    sets with atoms (\emph{nominal sets}) using only this type of atom permutations
    (called finite permutations).}\footnote{
    \label{ftn:total-order}
    It is worth pointing out that Claim~\ref{claim:finite-shift} is not true for
    some other types 
    of atoms that are sometimes studied in the literature.
    One example of such atoms are rational numbers with comparison $(\mathbb{Q}, \leq)$, 
    called \emph{total-order} atoms (see \cite{bojanczyk2019slightly} or \cite{bojanczyk2013nominal}
    for more details). For this reason, the single-use theory for total order atoms 
    is different from the one for equality atoms. We are currently working 
    on it together with Nathan Lhote.} :
    \begin{claim}
    \label{claim:finite-shift}
            For every $x$ and $y$ that belong to the same orbit, there exists a permutation $\pi$ such that $\pi(x) = y$, with only finitely many atoms $a$ for which $\pi(a) \neq a$.
    \end{claim}
    We have assumed that $x$ is an infix of $\pi(x)$ (i.e. $y$) and we need to show that $\pi(x)$ is an infix of $x$.
    The infix order is an equivariant relation, so $\pi(x)$ is an infix of $\pi^2(x)$,
    and by induction on $k$ we can show that $\pi^{k}(x)$ is an infix of $\pi^{k+1}(x)$.
    Since $\pi$ touches only finitely many atoms, there exists a $k$ such that $\pi^k$ is the identity permutation.
    It follows by transitivity that $\pi(x)$ is an infix of $x$, because $x$ is an infix of $\pi(x)$, 
    which is an infix of $\pi^2(x)$, \ldots, which is an infix of $\pi^{k-1}(x)$,
    which is an infix of $\pi^k(x)$, which is equal to $x$.
\end{proof}

We are now ready to start proving Lemma~\ref{lem:smooth-split-primes}. We slightly strengthen 
its formulation, to make it compatible with the inductive structure of the proof:
\begin{definition}
    The \emph{output sequence} of a split is its sibling subsequence 
    of maximal height (note that it is uniquely defined for every split). 
    We say that a split is \emph{semi-smooth} if all of 
    its sibling subsequences are smooth, with the possible exception of 
    the output subsequence.
\end{definition}
Here is an example of a semi-smooth split for the semigroup from Example~\ref{ex:smooth-seq}.
Its output sequence is marked in orange:
\bigpicc{example-almost-smooth} 
\begin{lemma}
\label{lem:partial-splits-primes}
Fix an orbit-finite semigroup $S$, let $H$ be its $\mathcal{J}$-height, and
let  $r : \Sigma \eqto S + \bot$ be its polynomial orbit-finite representation.
For every $h \leq H$, there exists a function:
\[f : \Sigma^* \to (\underbrace{\{1,2, \ldots, h\}}_{\textrm{split height}} \times \underbrace{\Sigma}_{\textrm{split value}})^*\comma\]
that can be constructed as a composition of primes, which constructs \emph{semi-smooth} splits
of height $\leq h$ over the input sequence, and annotates each position with its split value.
Furthermore, the output sequence of every split 
constructed by $f$ consists only of elements whose $\mathcal{J}$-height is at most $H + 1 - h$.
\end{lemma}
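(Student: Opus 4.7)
The plan is to prove the lemma by induction on $h$. For the base case $h = 1$, I would use the equivariant length-preserving homomorphism prime $\sigma \mapsto (1, r(\sigma))$: every position sits at height $1$ with split value equal to its own input value, so semi-smoothness is vacuous (no sibling subsequences exist beyond the output one) and the $\mathcal{J}$-height bound $\leq H = H + 1 - 1$ follows from the IH on $r$.

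For the inductive step $h > 1$, I would first invoke the IH to obtain $f_{h-1}$, a prime-composition producing a semi-smooth split $t'$ of height $h-1$ whose output sequence $u_1, u_2, \ldots, u_k \in S$ (read off the annotations) has $\mathcal{J}$-height $\leq H+2-h$. I then post-compose $f_{h-1}$ with a prime-constructible transducer $g$ that selectively promotes some of the height-$(h-1)$ positions of $t'$ to height $h$ and recomputes the affected split values. The goal of $g$ is (i) that every promoted position carries a split value of $\mathcal{J}$-height $\leq H + 1 - h$, and (ii) that the height-$(h-1)$ sibling subsequences that emerge (the runs of unpromoted positions delimited by promoted ones) are smooth.

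The promotion rule is based on $\mathcal{J}$-class tracking: scanning the output of $t'$ left to right, I maintain the $\mathcal{J}$-class of the running product of the currently open block and I promote a position $b$ to height $h$ at the first moment when that running product \emph{strictly drops} in $\mathcal{J}$-class, starting a fresh block at $b+1$. Between drops, every element of the block lies in a common $\mathcal{J}$-class $\mathcal{C}$ and every partial product stays in $\mathcal{C}$, so each interior element is $\mathcal{J}$-equivalent to the block-interior product, yielding smoothness; the drop then forces $u_a u_{a+1} \cdots u_b$ to leave $\mathcal{C}$, and since the starting $\mathcal{J}$-height was $\leq H+2-h$, the post-drop $\mathcal{J}$-height is $\leq H+1-h$, securing (i). Additional bookkeeping handles $u_a$'s whose $\mathcal{J}$-height is already $\leq H+1-h$, and inserts promotions at $\mathcal{J}$-class boundaries between blocks so that sibling classes do not inadvertently merge across class boundaries; these extra promotions are legitimate exactly because the product across a class change is itself in a strictly lower $\mathcal{J}$-class than at least one of the two neighbouring classes.

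The hard part will be realising $g$ as a composition of the three families of primes in Theorem~\ref{thm:kr}. The finite $\mathcal{J}$-class data of $S$ (finite because $S$ is orbit-finite) makes drop detection and class tracking a classical finite-alphabet Mealy computation, available via the finite-alphabet prime. The non-trivial task is to actually emit the updated split values $u_a u_{a+1} \cdots u_b \in \Sigma$ at each promoted position: the general $S$-prefix function is \emph{not} a single-use prime, as warned by Example~\ref{ex:monoid-products-not-single-use}. My plan is to exploit the structural guarantee built into (ii): the block-interior factors all lie in a single $\mathcal{J}$-class $\mathcal{C}$, and the restriction of multiplication to $\mathcal{C}$ factors (via the group structure on an $\mathcal{H}$-class inside $\mathcal{C}$, Green's lemma) through a \emph{finite} monoid acting on the atoms carried by the $\mathcal{C}$-elements. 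The finite-monoid part is then a monoid-prefix prime, while the raw atom contents are threaded through by single-use atom propagation and stitched back into $\Sigma$ using the equivariant homomorphism prime; the final boundary factor $u_b$ is absorbed by one last application of the equivariant multiplication. Composing these sequentially and in parallel with $f_{h-1}$ yields the desired $f_h$ and completes the induction.
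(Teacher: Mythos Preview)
Your induction scheme matches the paper's, but two of the steps have real gaps.

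First, your promotion rule (``promote when the running product drops in $\mathcal{J}$-class'') does \emph{not} make the unpromoted sibling runs smooth. If $u_a$ sits in class $\mathcal{C}$ and some later $u_j$ sits strictly above $\mathcal{C}$ while $u_a\cdots u_j$ stays in $\mathcal{C}$, there is no drop, yet $u_j \not\equiv_{\mathcal J} u_a\cdots u_j$, so the block is not smooth. Your ``additional bookkeeping'' sentence does not fix this: promoting at every $\mathcal{J}$-class boundary would create promoted positions whose split value need not drop in $\mathcal{J}$-height, breaking the bound you need. The paper avoids this by using the local characterization of smoothness (Claim~\ref{claim:smooth-local}): it marks every position $i$ where the pair $(s_{i-1},s_i)$ is not smooth, then keeps every other mark to kill singleton blocks. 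This yields \emph{almost smooth} blocks (smooth except for the last element), and the non-smoothness of the final pair forces the block product to drop.

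Second, your plan for computing the block products is where the real work hides, and ``thread atoms through by single-use atom propagation'' does not survive contact with the single-use restriction. Even within one $\mathcal{J}$-class, elements carry different atoms, and single-use propagation destroys a value on read. The paper's construction (Lemma~\ref{lem:smooth-product}) is substantially more delicate: it picks an idempotent $e$ in the class (supported by auxiliary constants), decomposes each $s_i = x_i y_i$ relative to $e$, forms $g_i = y_{i-1} x_i$, and shows all $g_i$ are $\mathcal H$-equivalent to $e$---hence have \emph{identical} supports (Lemma~\ref{lem:h-idemp-eq-supp}). Only then can one abstract those common atoms away, compute the $g$-product by a \emph{finite} Mealy machine, repopulate, and finally strip the auxiliary constants via Lemma~\ref{lem:primes-elim-support}. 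The passage from a $\mathcal J$-class to a single $\mathcal H$-class and the equal-supports fact are exactly what makes the finite-monoid reduction legitimate; your sketch jumps over both.
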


Before we prove Lemma~\ref{lem:partial-splits-primes}, let us show that it implies
Lemma~\ref{lem:smooth-split-primes}. We start with the following claim:
\begin{claim}
\label{claim:j-height-one-smooth}
    All elements of $\mathcal{J}$-height one are pairwise $\mathcal{J}$-equivalent.
\end{claim}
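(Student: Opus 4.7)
\textbf{Proof plan for Claim~\ref{claim:j-height-one-smooth}.} The plan is to observe that an element of $\mathcal{J}$-height one is exactly a $\mathcal{J}$-maximal element, and then to exhibit a common $\mathcal{J}$-upper bound for any two such elements by multiplying them.

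First I would unpack the definition: if $x$ has $\mathcal{J}$-height one, then the longest $\mathcal{J}$-chain beginning with $x$ is $(x)$ itself. So there cannot exist any $z \in S$ with $x$ a proper infix of $z$; equivalently, whenever $x$ is an infix of some $z \in S$, it must be that $z$ is also an infix of $x$, i.e. $x$ and $z$ are $\mathcal{J}$-equivalent.

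Next, given two elements $x, y \in S$ both of $\mathcal{J}$-height one, I would consider their product $xy \in S$. Taking $a = 1 \in S^1$ and $b = y \in S^1$ in the definition of the infix relation shows that $x$ is an infix of $xy$; symmetrically, taking $a = x$ and $b = 1$ shows that $y$ is an infix of $xy$. By the observation from the previous paragraph applied to $x$ (resp.\ $y$), the element $xy$ must be $\mathcal{J}$-equivalent to $x$ (resp.\ $y$). Transitivity of $\mathcal{J}$-equivalence then yields $x \,\equiv_{\mathcal{J}}\, y$, as required.

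There is no real obstacle here: the argument is a one-line consequence of the definitions, and the only subtlety worth flagging is the (harmless) use of $S^1$ from Claim~\ref{claim:semigroup-to-monoid} to witness that $x$ and $y$ are each infixes of $xy$ even though $S$ itself need not contain an identity. The claim will then feed into the induction in Lemma~\ref{lem:partial-splits-primes}: at the top of the $\mathcal{J}$-order one does not need to subdivide into strictly smaller $\mathcal{J}$-classes, because the output sequence already lives inside a single $\mathcal{J}$-class and is therefore automatically smooth.
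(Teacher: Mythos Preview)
Your proof is correct and follows essentially the same approach as the paper: both consider the product $xy$, use that $x$ and $y$ are infixes of $xy$, and conclude via the $\mathcal{J}$-height-one hypothesis that $x \equiv_{\mathcal{J}} xy \equiv_{\mathcal{J}} y$. The only cosmetic difference is that the paper phrases it as a proof by contradiction (if $x \not\equiv_{\mathcal{J}} y$ then one of them is a proper infix of $xy$, giving a chain of length two), whereas you argue directly.
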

\begin{proof}
    Let $x$ and $y$ be elements with $\mathcal{J}$-heights equal to $1$. 
    Assume towards a contradiction that they belong to two different 
    $\mathcal{J}$-classes. It follows that either $x$ or $y$ is a proper infix of $x \cdot y$,
    so either $\{x, xy\}$ or $\{y, xy\}$ forms a $\mathcal{J}$-chain. This contradicts the assumption.
\end{proof}
Thanks to Claim~\ref{claim:j-height-one-smooth}, we know that every sequence of elements with
$\mathcal{J}$-heights equal to $1$ has to be smooth. It follows that for $h = H$
Lemma~\ref{lem:partial-splits-primes} produces a smooth split.\\

Let us now proceed with the proof of Lemma~\ref{lem:partial-splits-primes}. The induction base 
is very simple: For $k = 1$, it suffices to set the height of every position to $1$,
and set the split values to $s_i$'s. This can be expressed as a homomorphism $f^*$, where $f$ is the following function:
\[ f(s) = (1, s) \]

This leaves us with proving the induction step.
We construct the split function for $h+1$ in the following four steps:
\begin{enumerate}
    \item We apply the induction assumption, constructing an almost smooth split of height $h$.
          Here an example for the monoid from Example~\ref{ex:smooth-seq} and $h=2$:\\
          \bigpicenum{example-split-constr-1} 
    \item Then, we divide the output of the constructed split into \emph{almost smooth blocks}, which 
          are blocks that are smooth sequences except of their last element (this step is explained in detail in Section~\ref{subsec:breaking-up-the-sequence}):
          \[ \overbrace{\underbrace{s_1, s_2, s_3, \ldots, s_n}_{\textrm{smooth}}, s_{n+1}}^{\textrm{non-smooth}} \]
          \bigpicenum{example-split-constr-2}
          \medskip
    \item Then, we compute the product of each almost smooth block and show that all those products have
          $\mathcal{J}$-height of at most $(H + 1 - h) - 1$. (Section~\ref{subsec:almost-smooth-products})
          \bigpicenum{example-split-constr-3}
    \item Finally, we construct a partially smooth split of height $h+1$, by 
          increasing the height of every last position in an almost smooth block by $1$, 
          and by setting the split value of each such position to the product of its almost smooth block
          (this can be done using a homomorphism):
          \bigpicenum{example-split-constr-4}
\end{enumerate}

\subsection{Breaking up the sequence}
\label{subsec:breaking-up-the-sequence}
In this section, we construct a function that divides its input sequence into \emph{almost smooth blocks},
which are blocks that are not smooth, but would be smooth if we removed their last element:
\[ \overbrace{\underbrace{s_1, s_2, s_3, \ldots, s_n}_{\textrm{smooth}}, s_{n+1}}^{\textrm{non-smooth}} \]
We encode this as a function that underlines the last letter of each block:
\[\Sigma^* \to (\underbrace{\Sigma}_{\textrm{underlined}} + \underbrace{\Sigma}_{\textrm{not underlined}})^*\]
For example, consider the semigroup $S = {{\atoms}\choose{\leq 3}} + \bot$, with the following operation
(see Example~\ref{ex:monoid-at-most-3}):
\[ x \cdot y = \begin{cases}
    x \, \cup \, y & \textrm{if } x \neq \bot \textrm{, } y \neq \bot \textrm{, and } |x \, \cup \, y| < 3   \\\
    \bot & \textrm{otherwise }
    \end{cases}
\]
For the following input sequence:
\[
    \{1\},\ \{1\},\ \{1\},\ \{5\}, \ \{7, 8\},\ \{2\}, \ \{9\}, \ \{7\},\ \{4\},\ \{4\},\ \{4\}
\]
The function should return:
\[
    \{1\},\ \{1\},\ \{1\},\ \underline{\{5\}}, \ \{7, 8\},\ \underline{\{2\}}, \ \{9\}, \ \{9\}, \ \underline{\{7\}},\ \{4\},\ \{4\},\ \{4\}
\]
Which corresponds to the division into the following blocks:
\[
    \underbrace{\{1\},\ \{1\},\ \{1\},\ \underline{\{5\}}}_{\textrm{1st block}}, \ \underbrace{\{7, 8\},\ \underline{\{2\}}}_{\textrm{2nd block}}, \ \underbrace{\{9\}, \ \{9\}, \ \underline{\{7\}}}_{\textrm{3rd block}},\ \{4\},\ \{4\},\ \{4\}
\]
Note that the last three elements do not belong to any block. This is because their
almost smooth block is under construction --
they form a smooth sequence and they have not seen an element that would break their smoothness.
This is only allowed at the end of the input sequence:
\begin{lemma}
    \label{lem:div-smooth-blocks}
    Let $S$ be a orbit-finite semigroup and let $r: \Sigma \to S + \bot$ be its polynomial orbit-finite representation.
    The following function $f_{\textrm{divide}}$ can be constructed as a composition of primes:
    \[f_{\textrm{divide}} : \Sigma^* \to (\Sigma + \Sigma)^*\]
    The function $f_{\textrm{divide}}$ divides the input word into almost smooth blocks. Additionally, the rightmost part of the input that does not belong to any block must be smooth. (The input should only contain letters for which $r$ is defined;
    if this is not the case, the output of $f_{\textrm{divide}}$ is unspecified).
 \end{lemma}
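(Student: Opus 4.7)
The plan is to construct $f_{\textrm{divide}}$ as a composition of primes from Theorem~\ref{thm:kr}. As a preprocessing step, I would apply a length-preserving equivariant homomorphism $h^{*}$ (a prime by item~2 of Theorem~\ref{thm:kr}) that annotates each input letter with its $\mathcal{J}$-class, producing a sequence over $\Sigma \times \mathcal{J}(S)$. Since $\mathcal{J}$-equivalence is equivariant, every $\mathcal{J}$-class is a union of orbits, so orbit-finiteness of $S$ forces $\mathcal{J}(S)$ to be a finite set. The annotated sequence makes both the original letter (needed for the output) and the $\mathcal{J}$-class (needed for control) directly available at each position.

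A block ends at position $i$ precisely when the prefix $s_{k} \cdots s_{i-1}$ is smooth but $s_{k} \cdots s_{i}$ is not, where $k$ is the start of the current block. Equivalently, one of two events occurs at position $i$: either the new letter's $\mathcal{J}$-class differs from the block's initial class $J_{\text{start}}$, or the running product (which until then was $\mathcal{J}$-equivalent to $J_{\text{start}}$) drops to a strictly smaller $\mathcal{J}$-class. The first event is determined entirely by the finite annotations, so a finite Mealy machine --- a prime by item~1 of Theorem~\ref{thm:kr} --- can detect it by carrying $J_{\text{start}}$ in its state and comparing it with each incoming class.

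The hard part is the second event: detecting a drop in the running product's class while all observed letters still lie in $J_{\text{start}}$. To handle this, I plan to exploit the structural fact that every orbit of $S$ has a fixed-size support, so within such a candidate block the running product, as long as it remains in $J_{\text{start}}$, is determined by its orbit (finite data) together with a bounded tuple of atoms. The idea is to transport these atoms across the sequence using several parallel copies of the single-use atom propagation prime (item~3), with a finite Mealy machine tracking the orbit component and orchestrating when to read, write, and reset the individual propagations. When a new letter arrives, the finite state together with the propagated atoms supplies enough information to compute the new product's class and to emit an underline at the exact position where it drops.

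The main obstacle will be designing the reset machinery that starts fresh propagations at each new block without ever reading any individual atom more than once, so that the single-use restriction on each propagation prime is preserved even as the overall composition is reused across arbitrarily many blocks. A natural way to achieve this is to arrange the atoms of the running product in parallel ``lanes'', each fed by its own single-use propagation, and to have the finite Mealy machine reroute lanes at block boundaries so that the propagation initiated by a new block start is distinct from those consumed for the preceding block.
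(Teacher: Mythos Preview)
Your proposal has two genuine gaps.

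First, the claim that each $\mathcal{J}$-class is a union of orbits---and hence that $\mathcal{J}(S)$ is finite---is false. Equivariance of the relation $\sim_{\mathcal{J}}$ gives only $x \sim_{\mathcal{J}} y \Rightarrow \pi(x) \sim_{\mathcal{J}} \pi(y)$, not $x \sim_{\mathcal{J}} y \Rightarrow x \sim_{\mathcal{J}} \pi(y)$. For a concrete counterexample, take $S = \binom{\atoms}{\leq 3} + \bot$ with union as product: the singletons $\{a\}$ and $\{b\}$ lie in the same orbit, but neither is an infix of the other, so they are in distinct $\mathcal{J}$-classes. The paper itself uses $\supp(J(x))$ elsewhere precisely because $\mathcal{J}$-classes carry atoms. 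So your ``finite control over $J_{\text{start}}$'' step already needs atom propagation, not a finite Mealy machine.

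Second, and more seriously, your plan hinges on maintaining (enough of) the running product to test, at \emph{every} position of the block, whether it has dropped below $J_{\text{start}}$. Any such test consumes the atoms that witness the current $\mathcal{J}$-class, and you then need them again at the next position. Your ``reset machinery'' addresses refreshing lanes at block boundaries, but not the repeated comparisons \emph{within} a block; blocks can be arbitrarily long, so no fixed number of parallel single-use propagations suffices. This is essentially the obstruction in Example~\ref{ex:monoid-prefix-impossible}: computing (even the $\mathcal{J}$-class of) prefix products is exactly the kind of thing compositions of single-use primes cannot do in general.

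The paper sidesteps all of this with a single structural observation (Claim~\ref{claim:smooth-local}): a sequence is smooth if and only if every pair of consecutive elements is smooth. Hence the block boundary at position $i$ is detected purely by whether the pair $(s_{i-1}, s_i)$ is smooth---a property decidable by an equivariant homomorphism on $\Sigma^2$ after one application of the delay function. No running product is needed. The remainder of the paper's proof is bookkeeping: underline non-smooth pairs, then thin consecutive underlines so that no block has length~$1$.
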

\begin{proof}
    The main idea of the proof is noticing that being a smooth sequence is a local property:
    \begin{claim}
    \label{claim:smooth-local}
        A sequence $s_1, \ldots, s_n$ over a semigroup $S$ is smooth, if and only if all 
        of its pairs of consecutive elements are smooth sequences. 
    \end{claim}
    \begin{proof}
        ($\Rightarrow$): If $s_i,\ s_{i+1}$ is not smooth for some $i$, then $s_i \cdot s_{i+1}$
        is not an infix of either $s_i$ or $s_{i+1}$. It follows that $s_1 \cdot s_i \cdot s_{i+1}\ldots \cdot s_n$
        is not an infix of either $s_i$ or $s_{i+1}$, which means that the entire sequence is not smooth.\\

        \noindent
        ($\Leftarrow$:) We prove this by induction on $n$. For $n \leq 2$, the claim is trivially true.
        For the induction step, we assume that both $s_1, \ldots, s_n$ and $s_n, s_{n+1}$ are smooth
        and show that $s_1, \ldots, s_{n+1}$ is smooth. Every $s_i$ is clearly an infix of 
        $s_1, \ldots, s_{n+1}$. It suffices to show that $s_1 \cdot \ldots \cdot s_{n+1}$ is an infix 
        of every $s_i$. We already know that $s_1 \ldots s_n$ is an infix of every $s_i$ (for $i \leq n$), so
        we just need to show that $s_1 \cdot \ldots \cdot s_{n+1}$ is an infix of both
        of $s_{n+1}$ and of $s_1 \ldots s_n$. For this, we are going to use the following orbit-finite version of a well-known lemma about Green's relation 
        (the orbit-finite version was proved in \cite[Lemma7.1~and~Theorem5.1]{bojanczyk2013nominal}):
        \begin{lemma}
            \label{lem:green-lemma-anitchains}
            Let $S$ be an orbit-finite semigroup, and let $x, y \in S$. If $xy$ is an infix of $x$, then $xy$ is a \emph{prefix} of $x$.
            In other words, there exists an $x' \in S^1$ (as defined in Claim~\ref{claim:semigroup-to-monoid}) such that $x = xyx'$.
            Analogously, if $xy$ is an infix of $y$, then $xy$ is a \emph{suffix} of $y$.
            \end{lemma}
        \noindent
        It follows from the lemma that there exists an $s_n'$ such that $s_n'\cdot s_1 \cdot \ldots \cdot s_n = s_n$.
        Since $s_n, s_{n+1}$ is a smooth sequence, there are $a, b$ such that $a s_n s_{n+1}  b = s_{n+1}$.
        It follows that $a \cdot s_n' \cdot s_1 \cdot \ldots s_{n+1} \cdot b = s_{n+1}$, so $s_1 \cdot \ldots \cdot s_{n+1}$
        is an infix of $s_{n+1}$. To prove that 
        $s_1 \cdot \ldots \cdot s_{n+1}$ is an infix of $s_1 \cdot  \ldots  \cdot s_n$ , it suffices to see 
        that by Lemma~\ref{lem:green-lemma-anitchains}, there exists $x \in S$, such that 
        $s_{n}  s_{n+1}  x = s_n$. It follows that $s_1 \cdot \ldots \cdot s_{n+1} \cdot x = s_1 \cdot \ldots \cdot s_n$, 
        which finishes the proof. 
    \end{proof}
    
    Before we show how to construct $f_{\textrm{divide}}$, we define a couple of auxiliary functions:
    \begin{claim}
        \label{claim:gen-su-prop}
        For every polynomial orbit-finite set $\Sigma$, the following \emph{single-use letter propagation} 
        function can be constructed as a composition of primes:
        \[ f_\textrm{$\Sigma$-prop} : (\Sigma + \downarrow + \epsilon)^* \to (\Sigma + \epsilon)^* \]
        The function works in the same way as the single-use letter propagation from
        Example~\ref{ex:su-propagation}, but it propagates elements of $\Sigma$ instead of $\atoms$.
    \end{claim}
    \begin{proof}
        The claim can be shown by a straightforward induction on the construction of $\Sigma$ 
        as a polynomial orbit-finite set.
    \end{proof}
    \begin{claim}[{\cite[Lemma~36]{single-use-paper}}]
    \label{lem:delay-primes}
        For every polynomial orbit-finite alphabet $\Sigma$, the following $f_\textrm{delay}$ function can be constructed
        as a composition of primes:
        \[
            \begin{tabular}{ccccccc}
                $a_1$ & $a_2$ & $a_3$ & $a_4$ & $a_5$ & \ldots & $a_n$\\
                & & & \rotatebox[origin=c]{270}{$\mapsto$} &  &  &\\
                $\vdash$ & $a_1$ & $a_2$ & $a_3$ & $a_4$ & \ldots & $a_{n-1}$\\
            \end{tabular}
         \]
    \end{claim}
    \begin{proof} The function can be implemented in three steps:
        First, we use a classical Mealy machine to mark every position as odd or even
        (thanks to the classical Krohn-Rhodes theorem, we know that the classical Mealy machine decomposes into prime functions -- 
        see the last paragraph of the proof Lemma~\ref{lem:div-smooth-blocks} for details).
        In the next step, we use a homomorphism together with the single-use letter propagation, to 
        propagate all letters in even positions one position to the right.
        Finally, we do the same for letters in odd positions. 
    \end{proof}

    We are now ready to construct $f_\textrm{divide}$.  First, we apply the 
    \emph{delay} function, while keeping the original input.
    This is a common pattern, that is possible thanks to the $\times$ combinator:
    \[\Sigma^* \transform{\copyf^*} (\Sigma \times \Sigma)^* \transform{f_\textrm{delay} \times \idf} ((\Sigma + \vdash) \times \Sigma)\]
    This brings us to the following situation:
    \[
        \begin{tabular}{cccccccccc}
            $\vdash$ & $s_1$ & $s_2$ & $s_3$ & $s_4$ & $s_5$ & $s_6$ & $s_7$ & $s_8$ & $s_9$\\ 
            $s_1$    & $ s_2$ & $s_3$ & $s_4$ & $s_5$ & $s_6$ & $s_7$ & $s_8$ & $s_9$ & $s_{10}$\\
        \end{tabular}
     \] 
     Now, we use a homomorphism to underline every pair $s_{i-1}, s_i$ which is not smooth. We can do 
     this using a homomorphism, because the function $\Sigma^2 \to \Sigma^2 + \Sigma^2$ that underlines non-smooth pairs is equivariant.
     \[
        \begin{tabular}{cccccccccc}
            $s_1$    & $ s_2$ & $s_3$ & $\underline{s_4}$ & $\underline{s_5}$ & $\underline{s_6}$ & $s_7$ & $\underline{s_8}$ & $s_9$ & $s_{10}$\\
            $\vdash$ & $s_1$ & $s_2$ & $s_3$ & $s_4$ & $s_5$ & $s_6$ & $s_7$ & $s_8$ & $s_9$\\ 
        \end{tabular}
     \]
     We use one more homomorphism to project away the delayed letters:
     \[
        \begin{tabular}{cccccccccc}
            $s_1$ & $s_2$ & $s_3$ & $\underline{s_4}$ & $\underline{s_5}$ & $\underline{s_6}$ & $s_7$ & $\underline{s_8}$ & $s_9$ & $s_{10}$\\
        \end{tabular}
     \]
    Finally, we need to get rid of blocks of size $1$ (they are always smooth, so they cannot be almost smooth).
    We do this by removing every other underline in a contiguous block of underlined letters:
    \[
        \begin{tabular}{cccccccccc}
            $s_1$ & $s_2$ & $s_3$ & $\underline{s_4}$ & $s_5$ & $\underline{s_6}$ & $s_7$ & $\underline{s_8}$ & $s_9$ & $s_{10}$\\
        \end{tabular}
    \]
    Before we show how to implement this step, let us notice that it finishes the construction of $f_\textrm{divide}$-- 
    it follows from Claim~\ref{claim:smooth-local} that after this step the input is split into almost smooth blocks
    (and possibly one smooth block in the end). This leaves us with showing how to remove every other underline in each 
    contiguous block of underlines. First, we use a homomorphism to apply the following isomorphism to every letter:
    \[
      \begin{tabular}{ccc}
        $\underbrace{\Sigma}_{\textrm{unerline}} + \underbrace{\Sigma}_{\textrm{no underline}}$ & $\simeq$ & $\Sigma \times \{\underbrace{1}_{\textrm{underline}}, \underbrace{0}_{\textrm{no-underline}}\}$
     \end{tabular}
    \]
    This transformation extracts the finite information about underlines into a separate coordinate:
    \[
    \begin{tabular}{cccccccccc}
        0 & 0 & 0 & 1 & 1 & 1 & 0 & 1 & 0 & 0\\
        $s_1$ & $s_2$ & $s_3$ & $s_4$ & $s_5$ & $s_6$ & $s_7$ & $s_8$ & $s_9$ & $s_{10}$\\
    \end{tabular}
    \]
    \noindent
    Now, we use the parallel composition to apply the following classical (i.e. finite) Mealy machine to the $\{0, 1\}$-coordinate
    (thanks to the classical Krohn-Rhodes theorem, we know that it further decomposes into prime functions):
    \vsmallpicc{underline-every-other}
    \noindent
    Finally, we use the same isomorphism to go back to the $\Sigma + \Sigma$ alphabet:
    \[
        \begin{tabular}{cccccccccc}
            $s_1$ & $s_2$ & $s_3$ & $\underline{s_4}$ & $s_5$ & $\underline{s_6}$ & $s_7$ & $\underline{s_8}$ & $s_9$ & $s_{10}$\\
        \end{tabular}
    \]
    \noindent
    This finishes the proof of Lemma~\ref{lem:div-smooth-blocks}. 
\end{proof}

Remember that in the proof of Lemma~\ref{lem:partial-splits-primes}, we need to apply 
$f_\textrm{divide}$ to the output of a semi-smooth split. This can be done using the following combinator: 
\begin{lemma}
    \label{lem:subsequence-combinator}
    Compositions of primes are closed under the following \emph{subsequence} combinator:
    \[
        \infer[\comma]
        { (\Sigma + \square)^* \stackrel{(f + \square)}{\longrightarrow} (\Gamma + \square)^*}
        { \Sigma^* \stackrel {f}{\longrightarrow} \Gamma^*}
    \]
    The function $(f + \square)$ applies $f$ to the word composed of $\Sigma$-letters of the input,
    and leaves the $\square$'s unchanged.
    \end{lemma}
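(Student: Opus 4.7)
The plan is to proceed by structural induction on how $f$ is built from the single-use prime functions listed in Theorem~\ref{thm:kr}. The base cases handle each type of prime, and the inductive cases show that $(\cdot + \square)$ commutes (up to composition with extra homomorphisms) with both sequential and parallel composition.

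First I would dispatch the two easy base cases. For a length-preserving homomorphism $h^* : \Sigma^* \to \Gamma^*$, I would extend $h$ to $\tilde h : \Sigma + \square \eqto \Gamma + \square$ by $\tilde h(\square) = \square$; this remains an equivariant function, so $\tilde h^* = (h^* + \square)$ is itself a homomorphism prime. For a function computed by a Mealy machine over a finite alphabet, I would extend the input and output alphabets by $\square$ and add a self-loop $\delta(q, \square) = (q, \square)$ in every state. The extended machine is still a classical (finite-alphabet) Mealy machine, hence remains in the class of primes.

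The interesting base case is the single-use atom propagation $p : (\atoms + \downarrow + \epsilon)^* \to (\atoms + \epsilon)^*$. Here my plan is to build $(p + \square)$ entirely out of homomorphisms and one copy of $p$, using parallel composition, as follows. First, duplicate the input letter-wise via the equivariant homomorphism $a \mapsto (a, a)$ on the alphabet $\Sigma + \square = \atoms + \downarrow + \epsilon + \square$ (equivariance is all that is required for a homomorphism prime, so atoms may be duplicated here). Using parallel composition, I would then feed the first coordinate through a homomorphism that replaces $\square$ by $\epsilon$, followed by $p$ itself, obtaining the propagation output in positions where the original letter was not $\square$. In the second coordinate I would apply a homomorphism that extracts a $\square$-mask (sending $\square$ to $\square$ and everything else to some neutral marker). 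Finally I would zip the two coordinates back together with a letter-wise homomorphism that outputs $\square$ wherever the mask says $\square$ and the propagation output otherwise. This is where I expect to spend the most care, because one has to verify that the copied $\square$-bookkeeping really does preserve the intended semantics of $(p + \square)$ and that everything stays within the prime class despite $p$'s stateful nature.

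For the inductive steps, sequential composition is immediate: $(g \circ h) + \square = (g + \square) \circ (h + \square)$, since $(h + \square)$ produces a word in $(\Gamma + \square)^*$ whose $\square$'s sit in exactly the positions they occupied in the input, so that feeding this through $(g + \square)$ gives the desired result. For parallel composition I would use the letter-wise homomorphism
\[
((\Sigma_1 \times \Sigma_2) + \square) \ \eqto\ ((\Sigma_1 + \square) \times (\Sigma_2 + \square)),
\]
defined by $(a_1, a_2) \mapsto (a_1, a_2)$ and $\square \mapsto (\square, \square)$, to unzip the input; apply $(f_1 + \square) \times (f_2 + \square)$ in parallel via the induction hypothesis; and then rezip with the obvious inverse homomorphism (defined to output $\square$ on any stray $(\square, \gamma)$ or $(\gamma, \square)$, which never occur on valid inputs). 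Putting together the base cases and the two inductive cases yields that $(f + \square)$ is a composition of primes whenever $f$ is, which is what the lemma asserts.
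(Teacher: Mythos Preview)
Your proposal is correct and follows essentially the same approach as the paper: structural induction, using that $(f \circ g) + \square = (f + \square) \circ (g + \square)$ and $(f \times g) + \square = (f + \square) \times (g + \square)$ (the latter up to the zip/unzip homomorphisms you spell out), and then checking each prime separately. The paper's proof is much terser—it simply states the two distributivity equations and asserts that the prime cases are ``not hard to see''—whereas you work out the atom-propagation case explicitly (your trick of replacing $\square$ by $\epsilon$ works precisely because $\epsilon$ is the do-nothing instruction) and make the type-fixing homomorphisms in the parallel case explicit; but the underlying strategy is identical.
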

    \begin{proof}
    We start by noticing that:
    \[
    \begin{tabular}{ccc}
    $((f \circ g) + \square) = (f + \square) \circ (g + \square)$ &  and &
    $(f \times g) + \square = (f + \square) \times (g + \square)$
    \end{tabular}
    \]
    It is not hard to see that for every prime function $f$, the function $(f+\square)$
    is a composition of primes. This finishes the proof.
    \end{proof}

\subsection{Computing almost smooth products}
\label{subsec:almost-smooth-products}
In this section, we show how to use compositions of primes to compute the products of almost smooth blocks. For example, 
we want to transform the following input:
\[
    \begin{tabular}{cccccccccc}
        $s_1$ & $s_2$ & $s_3$ & $\underline{s_4}$ & $s_5$ & $\underline{s_6}$ & $s_7$ & $\underline{s_8}$ & $s_9$ & $s_{10}$\\
    \end{tabular}
\]
\noindent
Into the following output:
\[
    \begin{tabular}{cccccccccc}
        $\square$ &  $\square$       & $\square$      & $(s_1 \cdot s_2 \cdot s_3 \cdot s_4)$ & $\square$          & $(s_5 \cdot s_6)$   &  $\square$     & $(s_7 \cdot s_8)$   &  $\square$     &  $\square$      \\
        $s_1$   &  $s_2$         & $s_3$ & $\underline{s_4}$                   & $s_5$ & $\underline{s_6}$ & $s_7$ & $\underline{s_8}$ & $s_9$ & $s_{10}$\\
    \end{tabular}
\]
This step is formalized as the following lemma:
\begin{lemma}
\label{lem:almost-smooth-blocks}
Let $S$ be an orbit-finite semigroup and let $h : \Sigma \to S + \bot$ be its
polynomial orbit-finite representation. Then the following function $f_\textrm{blocks}$ is a composition of primes:
\[ f_{\textrm{blocks}} (\Sigma + \Sigma)^* \to (\Sigma + \square)^* \]
The function $f_\textrm{blocks}$ inputs a sequence of elements divided into almost smooth blocks (for all other inputs its
behaviour is undefined) and computes
the product of each of the block:
 If the $i$-th letter of the input is underlined, then the $i$-th letter of the output is
\emph{a representation} of the product of the block that ends in that letter, i.e. a representation of the value:
\[h(s_{j+1}) \cdot h(s_{j+1}) \cdot \ldots \cdot h(s_i)\]
where $j$ is the first underlined position to the left of $i$ (or $0$ if $i$ is the first underlined position).	
If the $i$-th position is not underlined, then the $i$-th letter of the output is equal to $\square$.
\end{lemma}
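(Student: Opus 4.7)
The plan is to exploit the Rees structure of the $\mathcal{J}$-classes of the orbit-finite semigroup $S$. Within each almost smooth block $s_1, \ldots, s_n, s_{n+1}$, the smooth prefix $s_1, \ldots, s_n$ lies entirely in a single $\mathcal{J}$-class $J$, and every partial product $s_1 \cdots s_k$ also stays in $J$ (smoothness combined with Lemma~\ref{lem:green-lemma-anitchains}). By the orbit-finite Rees representation of $J$, elements of $J$ can be coded as triples $(i, g, \lambda) \in I \times G \times \Lambda$, where $G$ is the structure group (finite, since orbit-finite groups over equality atoms are finite), $I$ and $\Lambda$ are polynomial orbit-finite sets carrying the $\mathcal{R}$- and $\mathcal{L}$-coordinates, and $P : \Lambda \times I \to G$ is the equivariant sandwich map; the product then satisfies $(i, g, \lambda)(i', g', \lambda') = (i, g \cdot P(\lambda, i') \cdot g', \lambda')$. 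Consequently, the product of a smooth sequence is determined by the first $\mathcal{R}$-coordinate, the last $\mathcal{L}$-coordinate, and a single group product in $G$.

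First I would apply a homomorphism to tag each input letter with its Rees triple and its $\mathcal{J}$-class (an orbit-finite semigroup has only finitely many $\mathcal{J}$-classes, so the class label is finite data). Using $f_\textrm{delay}$ from Claim~\ref{lem:delay-primes} together with a second homomorphism I compute the sandwich values $P(\lambda_j, i_{j+1})$ from consecutive pairs and interleave them with the $g_j$'s to form a sequence over $G$. The finite $G$-prefix Mealy machine from Example~\ref{ex:monoid-prefix} then produces, at each position, the running group product; resetting this accumulator at block boundaries (which are detectable from the underlining by a classical Mealy machine, available via Theorem~\ref{thm:classical-kr}) yields the per-block group product. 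In parallel, I use single-use letter propagation (Claim~\ref{claim:gen-su-prop}) to transport the first $\mathcal{R}$-coordinate of each block to its underlined endpoint, and symmetrically to transport the last $\mathcal{L}$-coordinate of the smooth prefix (with the help of $f_\textrm{delay}$). Multiplication by the final letter $s_{n+1}$ of each block — which may lie in a strictly lower $\mathcal{J}$-class — is then a pointwise homomorphic operation. A final homomorphism masks non-underlined positions with $\square$.

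The main obstacle will be rigorously setting up the orbit-finite Rees representation and justifying that the $\mathcal{J}$-classes arising as smooth-prefix classes are regular: this uses that for a smooth pair $a, b$ we have $ab \mathrel{\mathcal{J}} a$, hence $ab \mathrel{\mathcal{R}} a$ by Lemma~\ref{lem:green-lemma-anitchains}, and symmetrically $ab \mathrel{\mathcal{L}} b$, which together with the finiteness of $\mathcal{J}$-chains forces $J$ to contain an idempotent. A secondary technical concern is orchestrating the simultaneous parallel handling of all blocks so that the single-use restriction is respected: the $\mathcal{R}$- and $\mathcal{L}$-coordinates (which can carry atoms) are each used exactly once per block thanks to single-use propagation firing once per block, while the finite pieces of state (class label, group accumulator, block-boundary detection) are tracked by classical Mealy machines, which by Theorem~\ref{thm:classical-kr} further decompose into primes.
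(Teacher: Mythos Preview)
Your overall strategy --- exploit the finite $\mathcal{H}$-class group structure to reduce the smooth-prefix product to a $G$-prefix computation --- is the same as the paper's, but the step where you ``apply a homomorphism to tag each input letter with its Rees triple and its $\mathcal{J}$-class'' has a real gap. You assert that an orbit-finite semigroup has only finitely many $\mathcal{J}$-classes, making the class label finite data. This is false: in the paper's running example $S = \binom{\atoms}{\leq 3} + \bot$ the operation is union, so for sets $X,Y$ one has $X \mathrel{\mathcal{J}} Y$ iff $X = Y$, giving one $\mathcal{J}$-class per subset. What is bounded is the $\mathcal{J}$-\emph{height} (equivalently the number of $\mathcal{J}$-class orbits), but each individual class carries atoms. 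The group component $g$ of a Rees triple is only defined relative to a reference idempotent in that particular $\mathcal{J}$-class, and since the class varies from block to block, no single letter-to-letter homomorphism can produce it; nor is there a single finite $G$ in which to run the prefix computation across all blocks.

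The paper handles this by first reducing to one block via the $\mathtt{map}$ combinator (Lemma~\ref{lem:map-combinator-primes}) and then, inside the block, computing a reference idempotent $e_1$ from the first letter with controlled support $\supp(e_1) \subseteq \supp(J(s_1)) \cup \bar a$ for a fixed tuple $\bar a$ of fresh atoms (Claim~\ref{clam:good-idempotent}). Smoothness forces all $s_i$ to share $J(s_1)$, so $\supp(e_1) \subseteq \supp(s_i, \bar a)$ at every position, which is exactly the hypothesis of the conditional multiple-use propagation Lemma~\ref{lem:multiple-use-propagation}; this broadcasts $e_1$ throughout the block. Only then is each $s_i$ decomposed as $x_i y_i$ relative to $e_1$, the factors $g_i = y_i x_{i+1}$ all lie in the finite group $H(e_1)$, and name-abstracting the atoms of $e_1$ makes them atomless so that a classical Mealy machine computes their product. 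The surplus atoms $\bar a$ are removed at the end via Lemma~\ref{lem:primes-elim-support}. Your outline is missing both the block-local reference and the propagation mechanism that makes a consistent Rees-style decomposition possible.
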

\noindent
We can reduce the general construction to the case of a single block using the following \emph{map} combinator:
\begin{lemma}
    \label{lem:map-combinator-primes}
    If a function $f : \Sigma \to \Gamma$ is a composition of primes, then the function
    \[ \mathtt{map}\, f : (\Sigma + \Sigma)^* \to (\Gamma + \Gamma)^*\comma \]
    which applies $f$ independently to every block that ends with an underlined letter
    (or with the end of the word) is a composition of primes as well. 
\end{lemma}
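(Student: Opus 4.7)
The plan is to proceed by induction on the derivation of $f$ as a composition of primes, treating each composition rule as a closure property for the map combinator and then verifying the three base cases. For the sequential case $f = g \circ h$, I would use the identity $\mathtt{map}(g \circ h) = \mathtt{map}\,g \circ \mathtt{map}\,h$: since every prime preserves length, the block structure (which is purely positional) is the same in the intermediate word, so applying $g$ block-by-block to the block-by-block output of $h$ yields $\mathtt{map}(g \circ h)$, and the inductive hypothesis closes the case. For the parallel case $f_1 \times f_2$ with $f_i : \Sigma_i^* \to \Gamma_i^*$, I would precompose with the equivariant homomorphism on $(\Sigma_1 \times \Sigma_2) + (\Sigma_1 \times \Sigma_2)$ that sends $\underline{(a,b)}$ to $(\underline{a}, \underline{b})$ and $(a,b)$ to $(a,b)$, obtaining a word in $((\Sigma_1 + \Sigma_1) \times (\Sigma_2 + \Sigma_2))^*$ with synchronised underlines. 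The parallel composition $\mathtt{map}\,f_1 \times \mathtt{map}\,f_2$ then applies by the inductive hypothesis, and a final equivariant homomorphism repackages the output into $((\Gamma_1 \times \Gamma_2) + (\Gamma_1 \times \Gamma_2))^*$.

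For the base cases, if $f$ is a classical (atomless) Mealy machine, $\mathtt{map}\,f$ will again be a classical Mealy machine with the same state set: on an unmarked letter it runs the original transition, and on an underlined letter it runs the original transition to produce an output letter, then discards the resulting state, resets to the initial state, and emits the output as underlined. Since the state set and the alphabets remain finite, the construction is still a single prime. If $f = h^*$ for equivariant $h : \Sigma \eqto \Gamma$, then $\mathtt{map}\,h^*$ is just $(h')^*$ for the obvious extension $h' : \Sigma + \Sigma \eqto \Gamma + \Gamma$ that acts as $h$ on unmarked letters and as $\underline{h(\cdot)}$ on underlined letters; again this is a single prime.

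The delicate base case, and the main obstacle, will be single-use atom propagation $p$. The plan is to express $\mathtt{map}\,p$ as a composition of primes by splitting the input into parallel channels via equivariant homomorphisms: one channel feeds $p$ with the underline-stripped input, producing a naive propagation that may incorrectly carry atoms across block boundaries; the other is a finite abstraction of the input (forgetting atom values, remembering only the event type $\atoms$/$\downarrow$/$\epsilon$ together with the underline bit) on which a classical Mealy machine runs, maintaining a single bit of state that tracks whether the atom currently held by $p$, if any, was saved inside the current block. A final equivariant homomorphism merges the validity bit, the underline bit, and the propagation output, replacing invalid outputs by $\epsilon$ and underlining the propagation output at the positions originally underlined in the input. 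The subtle point will be verifying that the two-state tracker correctly reports validity at every $\downarrow$; this rests on the observation that both a $\downarrow$ event and any underlined letter invalidate the register for subsequent reads (because $\downarrow$ flushes the register and an underlined letter marks a block boundary), so two states genuinely suffice. Once this is checked by routine case analysis, the construction produces $\mathtt{map}\,p$ as a composition of an equivariant homomorphism, a parallel composition of a classical Mealy machine with the prime $p$, and another equivariant homomorphism, which completes the induction.
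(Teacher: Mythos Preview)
Your proposal is correct and follows the same inductive skeleton as the paper: reduce to the primes via $\mathtt{map}(g\circ h)=\mathtt{map}\,g\circ\mathtt{map}\,h$ and $\mathtt{map}(f_1\times f_2)=\mathtt{map}\,f_1\times\mathtt{map}\,f_2$ (up to the homomorphisms you describe), then handle each prime.

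The difference is in emphasis. The paper's proof is very terse: it singles out \emph{group prefixes} as the ``hardest case'' and disposes of it by observing that $\mathtt{map}$ of any finite-alphabet Mealy machine is again a finite-alphabet Mealy machine, hence a prime. It does not spell out the single-use atom propagation case. You, by contrast, correctly identify single-use atom propagation as the case requiring a genuine construction (there is no letter substitution that makes propagation reset at block boundaries), and your validity-tracker idea --- run propagation globally, run in parallel a two-state classical Mealy machine on the finite abstraction to flag whether the currently held atom came from the current block, then suppress invalid outputs via a final homomorphism --- is exactly right. The key observation that both a $\downarrow$ and any underlined letter send the tracker to ``invalid'' is the correct invariant, so two states suffice.

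In short: same approach, but your version fills in the one base case the paper leaves to the reader.
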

\begin{proof}
    Since $\mathtt{map } (f \circ g) = (\mathtt{map}\, f) \circ (\mathtt{map}\, g)$ and $\mathtt{map} (f \times g) = (\mathtt{map}\, f) \times (\mathtt{map}\, g)$,
    it suffices to show that the $\mathtt{map}$ versions of all the prime functions can be expressed as compositions of primes.
    We deal only with the hardest case, i.e. group prefixes. Since 
    it works over finite alphabets, its map version can be implemented as a classical Mealy machine,
    which, by the classical Krohn-Rhodes theorem, can be decomposed into prime functions.
\end{proof}

This leaves us with showing how to use compositions of primes 
to compute a product of a single almost smooth block, i.e. a version 
of Lemma~\ref{lem:almost-smooth-blocks}, with the extra assumption that
the last letter is the only underlined letter in the input: 
\[
    \begin{tabular}{cccccccccc}
        $s_1$ & $s_2$ & $s_3$ & $s_4$ & $s_5$ & $\underline{s_6}$\\
    \end{tabular}
\]
We compute the product in two steps:
In the first step, we compute the product of the smooth sequence
of all letters except the last one (this step is formulated below as Lemma~\ref{lem:smooth-product}):
\[
    \begin{tabular}{cccccc}
        $\square$ & $\square$ & $\square$ & $\square$ & $\square$ & $(s_1 \cdot s_2 \cdot s_3 \cdot s_4 \cdot s_5)$\\
        $s_1$ & $s_2$ & $s_3$ & $s_4$ & $s_5$ & $\underline{s_6}$\\
    \end{tabular}
\]
Then we include the last element to the product by using a homomorphism to apply the binary product function
(see Claim~\ref{claim:product-poly-repr} below) to the underlined position. This finishes the construction:
\[
    \begin{tabular}{cccccc}
        $\square$ & $\square$ & $\square$ & $\square$  & $(s_1 \cdot s_2 \cdot s_3 \cdot s_4) \cdot s_5$\\
        $s_1$ & $s_2$ & $s_3$ & $s_4$ & $\underline{s_5}$\\
    \end{tabular}
\]
\begin{claim}
    \label{claim:product-poly-repr}
    Let $S$ be an orbit-finite semigroup and $r : \Sigma \to S + \bot$ its polynomial orbit-finite representation.
    There exists an equivariant function
    \[f : \Sigma \times \Sigma \eqto \Sigma + \bot\]
    such that for all
    $s_1, s_2 \in \Sigma$ that represent elements of $S$ it holds that:
    \[ f(s_1, s_2) \textrm{ is a representation of } s_1 \cdot s_2 \]
\end{claim}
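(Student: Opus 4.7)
The plan is to apply the equivariant version of the Straight Uniformisation Lemma (Lemma~\ref{lem:straight-uniformization-eq}) to a suitable relation between $\Sigma \times \Sigma$ and $\Sigma + \bot$. Note that every polynomial orbit-finite set is straight, so both $\Sigma \times \Sigma$ and $\Sigma + \bot$ qualify.

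Concretely, I will define
\[ R \subseteq (\Sigma \times \Sigma) \times (\Sigma + \bot) \]
by declaring $((s_1, s_2), t) \in R$ iff either (a) both $r(s_1)$ and $r(s_2)$ are defined, $t \in \Sigma$, and $r(t) = r(s_1) \cdot r(s_2)$, or (b) at least one of $r(s_1), r(s_2)$ is undefined and $t = \bot$. Equivariance of $R$ follows immediately from the equivariance of $r$ and of the semigroup operation. To apply Lemma~\ref{lem:straight-uniformization-eq}, I need to verify the strengthened existence condition: for every $(s_1, s_2)$ there exists some valid $t$ that is supported by $\sigma(s_1, s_2) = \supp(s_1) \cup \supp(s_2)$.

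Case (b) is trivial: $t = \bot$ is equivariant. For case (a), surjectivity of $r$ provides some $t \in \Sigma$ with $r(t) = r(s_1) \cdot r(s_2)$. The key point — and the only real content of the argument — is that $r$ preserves least supports, so
\[ \supp(t) = \supp(r(t)) = \supp\bigl(r(s_1) \cdot r(s_2)\bigr) \subseteq \supp(r(s_1)) \cup \supp(r(s_2)) = \supp(s_1) \cup \supp(s_2), \]
using equivariance of the semigroup operation (via Lemma~\ref{lem:fs-functions-preserve-supports}) in the middle inclusion, and support-preservation of $r$ at both ends. Thus the hypotheses of Lemma~\ref{lem:straight-uniformization-eq} are satisfied, and we obtain an equivariant uniformisation $f : \Sigma \times \Sigma \eqto \Sigma + \bot$ of $R$.

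The main (and only) obstacle is the alignment of supports: a naive choice of representative might introduce atoms outside $\supp(s_1) \cup \supp(s_2)$, which would prevent us from obtaining an equivariant (as opposed to merely finitely supported) choice. This is precisely why Definition~\ref{def:pof-representation} requires $r$ to preserve least supports rather than only to be surjective — the support-preservation property exactly matches the hypothesis of Lemma~\ref{lem:straight-uniformization-eq} and lets the argument go through.
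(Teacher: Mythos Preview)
Your proof is correct and follows essentially the same approach as the paper: define the natural relation $R$ and apply Lemma~\ref{lem:straight-uniformization-eq}, using the support-preservation property of the representation $r$ (together with equivariance of the semigroup operation and Lemma~\ref{lem:fs-functions-preserve-supports}) to verify the support-inclusion hypothesis. Your presentation is in fact slightly more careful in that you correctly take the codomain to be $\Sigma + \bot$ throughout and explicitly note that polynomial orbit-finite sets are straight so that the uniformisation lemma applies.
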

\begin{proof}
    We obtain $f$ by applying Lemma~\ref{lem:straight-uniformization-eq} to the following relation $R \subseteq \Sigma^2 \times \Sigma$:
    If elements $x, y \in \Sigma$ both represent elements of $S$, then the pair $(x, y)$ is $R$-related with every
    representation of $r(x) \cdot r(y)$. If either $x$ or $y$ is not a valid representation,
    then $(x, y)$ is only $R$-related with $\bot$.\\

    In order to use Lemma~\ref{lem:straight-uniformization-eq}, we need to show 
    that for every $s_1, s_2 \in \Sigma$, there exists $s_3 \in \Sigma$ such that: 
    \[ \begin{tabular}{ccc}
        $(s_1, s_2) \ R \ s_3$ & and & $\supp(s_3) \subseteq \supp(s_1, s_2)$.
    \end{tabular}\]
    If both $s_1$ and $s_2$ represent elements from $S$, we can 
    pick any $s_3$ that represents $r(s_1) \cdot r(s_2)$ -- 
    by Definition~\ref{def:pof-representation} combined with Lemma~\ref{lem:fs-functions-preserve-supports}, 
    we know that $\supp(s_3) \subseteq \supp(s_1) \cup \supp(s_2)$.
    If either $r(s_1)$ or $r(s_2)$ is undefined, then we know that they 
    are related to $\bot$, which is equivariant. 
\end{proof}
It is worth pointing out that $\Sigma$ with $f_{\Sigma-\textrm{prop}}$ is not a semigroup,
because $f_{\Sigma-\textrm{prop}}$ does not have to be commutative:
$(a, f(b, c))$ and $f(f(a, b), c)$
might be different representations of the same element.\\

This leaves us with showing how to use compositions of primes to compute smooth products, 
we do this in the following lemma, which is the main technical result of this section:
\begin{lemma}
    \label{lem:smooth-product}
        For every orbit-finite semigroup $S$ and its polynomial orbit-finite representation $r : \Sigma \to S + \bot$, 
        the following function can be constructed as a composition of primes:
        \[f_{\textrm{smooth}} : (\Sigma + \dashv)^* \to (\Sigma + \Sigma)^*\]
        \begin{enumerate}
            \item The main case is when the input is a smooth sequence followed by the letter $\dashv$, 
                  which has to be the last letter of the input.
                  In this case the function should compute a representation of the
                  product of all its input letters:
        \[
            \begin{tabular}{ccccccc}
            $\mathtt{Input: }$ & $s_1$ & $s_2$ & \ldots & $s_n$ & $\dashv$\\
            $\mathtt{Output: }$ & $\square$ & $\square$ & \ldots & $\square$ & $s_1 \cdot \ldots \cdot s_n$\\ 
            \end{tabular}
        \]
        \item The secondary case is when the input does not contain the $\dashv$ letter.
        In this case, $f_\textrm{smooth}$ should only output $\square$'s. (This case 
        is useful for handling the last, unfinished block in Lemma~\ref{lem:almost-smooth-blocks}.)
    \end{enumerate}
    In all other cases, the output of $f_\textrm{smooth}$ is unspecified. 
\end{lemma}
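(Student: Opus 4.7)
The plan is to exploit the algebraic structure of a single $\mathcal{J}$-class in an orbit-finite semigroup. Smoothness forces all letters $s_i$ and their product $s_1\cdots s_n$ to lie in one $\mathcal{J}$-class $J$ of $S$. By the orbit-finite analogue of Rees' structure theorem for $0$-simple semigroups, each element of $J$ is parametrized by a triple $(\rho(x), \gamma(x), \ell(x))$, where $\rho$ ranges over an orbit-finite index set of $\mathcal{R}$-classes of $J$, $\ell$ over an orbit-finite index set of $\mathcal{L}$-classes of $J$, and $\gamma$ over a \emph{finite} group $G$ (the $\mathcal{H}$-class structure group of $J$). Multiplication inside $J$ is controlled by an equivariant sandwich function $P$ into $G$, so that a smooth product unfolds as
\[ s_1 \cdots s_n \;=\; \bigl(\rho(s_1),\; \gamma(s_1) \cdot P(\ell(s_1),\rho(s_2)) \cdot \gamma(s_2) \cdots P(\ell(s_{n-1}),\rho(s_n)) \cdot \gamma(s_n),\; \ell(s_n)\bigr). \]

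First, I would establish (or invoke from the orbit-finite semigroup literature built on \cite{bojanczyk2013nominal}) this Rees-style decomposition together with polynomial orbit-finite representations of the $\rho$- and $\ell$-coordinate sets, so that converting each input letter $s_i$ into its triple $(\rho_i, \gamma_i, \ell_i)$ is carried out by a single equivariant length-preserving homomorphism. In parallel, I would use the delay prime from Claim~\ref{lem:delay-primes} to make the pair $(s_{i-1}, s_i)$ simultaneously visible at position $i$, so that the sandwich value $P(\ell_{i-1}, \rho_i) \in G$ can be computed position-wise by another homomorphism into the finite atomless set $G$.

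Next, I would assemble the three pieces of the final product at the $\dashv$-position by running three subcomputations in parallel via $\times$-composition: (i)~propagation of $\rho(s_1)$ to the $\dashv$-position using the generalized single-use propagation prime from Claim~\ref{claim:gen-su-prop}, triggered by emitting $\downarrow$ only when the $\dashv$-letter is read; (ii)~accumulation of the group product $\gamma(s_1) \cdot P(\ell(s_1),\rho(s_2)) \cdot \gamma(s_2) \cdots \gamma(s_n) \in G$, by feeding the $G$-valued stream described above into the finite-group prefix prime (prime~3 of Theorem~\ref{thm:kr}) and then reading off its value at the $\dashv$-position with a local homomorphism; and (iii)~identification of $\ell(s_n)$, which is already the $\ell$-coordinate of the letter immediately before $\dashv$ and so is accessible to a local homomorphism. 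A closing equivariant homomorphism at the $\dashv$-position combines $\rho_1$, the accumulated group element, and $\ell_n$ into a polynomial orbit-finite representation of $s_1 \cdots s_n$, and writes $\square$ everywhere else. The auxiliary case where the input contains no $\dashv$ poses no extra work, since none of the three subcomputations ever emit a non-$\square$ letter at an internal position.

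The main obstacle will be the algebraic setup: we need the orbit-finite version of Rees' theorem --- specifically, the fact that inside every $\mathcal{J}$-class of an orbit-finite semigroup the $\mathcal{H}$-classes are pairwise isomorphic to a single \emph{finite} group $G$, together with equivariance of the coordinate projections $\rho, \gamma, \ell$ and of the sandwich function $P$ taking values in that finite group on the pairs $(\ell_i, \rho_{i+1})$ arising in smooth products. Once this structural fact is in place, the only genuinely unbounded computation is the accumulation of a finite-group product, which is exactly what the classical group-prefix prime is designed to do, and everything else reduces to homomorphisms, the delay prime, and single-use propagation of a single orbit-finite coordinate.
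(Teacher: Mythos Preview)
Your high-level strategy is the same as the paper's: both reduce the smooth product to accumulating a product inside a single finite $\mathcal H$-class group, sandwiched between two boundary pieces that are propagated single-use to the $\dashv$-position.

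The gap is the equivariance of $\gamma$. You need a map from (each regular $\mathcal J$-class of) $S$ into an atomless finite group $G$, compatible with the Rees multiplication, and no such map need exist --- not even a finitely supported one. Concretely, let the nonzero part of $S$ be $\atoms^{(2)}\times\mathbb Z_3$ modulo $(a,b,g)\sim(b,a,-g)$, with $\mathbb Z_3$-multiplication inside each unordered pair and product $0$ across pairs. Each $\mathcal H$-class $\{[a,b,0],[a,b,1],[a,b,2]\}$ is a copy of $\mathbb Z_3$; the idempotent $[a,b,0]$ is fixed by the swap $(a\ b)$, yet that swap acts on the class as group inversion. Hence any finitely supported $\gamma$ into an atomless target identifies $[a,b,1]$ with $[a,b,2]$ for all but finitely many pairs $\{a,b\}$, so your accumulated group product is wrong. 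This is not a missing citation but a genuine choice obstruction in the spirit of Claim~\ref{claim:no-choice}.

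The paper never projects to an abstract atomless group. It first allows \emph{finitely supported} primes: it fixes a tuple $\bar a$ of fresh atoms, uses uniformisation to place at position $1$ a single idempotent $e_1\in J(s_1)$ with $\supp(e_1)\subseteq\supp(J(s_1))\cup\bar a$, and propagates this $e_1$ (as a $\Sigma$-element, so with a definite tuple ordering) to every position via Lemma~\ref{lem:multiple-use-propagation}, which is legal precisely because $\supp(e_1)\subseteq\supp(s_i,\bar a)$ at each $i$. Each $s_i$ is then decomposed as $x_iy_i$ relative to $e_1$; the values $g_i=y_ix_{i+1}$ all land in $H_{e_1}$ and hence share the fixed support $\supp(e_1)$. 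Name-abstraction $\abstr{\suppt(e_1)}$ makes the $g_i$ atomless \emph{without} choosing an identification with an abstract group, so a classical Mealy machine accumulates $\overrightarrow g=g_1\cdots g_{n-1}$; the product is $x_1\cdot\overrightarrow g\cdot y_n$, assembled at $\dashv$ via single-use propagation of $x_1$ and $y_n$, and finally Lemma~\ref{lem:primes-elim-support} removes the dependence on $\bar a$. Replacing your global equivariant Rees coordinates with this $e_1$-anchored name-abstraction scheme turns your outline into the paper's proof.
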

\noindent
The remainder of this section is dedicated to proving Lemma~\ref{lem:smooth-product}
In the proof, we use a similar approach as in Section~\ref{subsec:eliminating-redundant-atoms}.
First, we show how to construct $f_\text{smooth}$ as a composition of \emph{finitely supported primes},
which are the single-use prime functions extended with homomorphisms based on finitely supported
(and not just equivariant) functions.
Then in Lemma~\ref{lem:primes-elim-support}, we show how to eliminate all atomic 
constants from the construction.\\

The construction of $f_\textrm{smooth}$ as a composition of finitely supported primes consists of the following six steps  
(we assume that $n \geq 2$ -- otherwise we can construct the product using the delay function):
\begin{enumerate}
\item In the first step, we fix a tuple $\bar a$ of $2 \! \cdot \! \dim(S)$ different atoms
      (remember that $\dim(S) = \max\{ |\supp(s)| : s \in S \}$)
      and we equip $s_1$ with representation of an
      idempotent $e_1$, that satisfies the following two conditions
(for the purpose of this step we say that such $e_1$ is \emph{good} for $s_1$):
\begin{enumerate}
    \item $e_1$ and $s_1$ are $\mathcal{J}$-equivalent; and
    \item $\supp(e_1) \subseteq \supp(\bar a) \cup \supp(J(s_1))$, 
          where $J(s_1)$ is the $\mathcal{J}$-class of $s_1$. 
          Notice that the set of all $\mathcal{J}$-classes is a set with atoms itself,
          which means that $\supp(J(s_1))$ is well-defined.
\end{enumerate}
The result of this step should look as follows:
\[
    \begin{tabular}{ccccccc}
        $s_1$ & $s_2$ & $s_3$ & $s_4$ & $s_5$ & $s_6$ & $\dashv$\\
        $e_1$ & $\square$ & $\square$ & $\square$ & $\square$ & $\square$ & $\square$ \\
    \end{tabular}
\]
Before we show how to construct such $e_1$, let us show that it exists:
\begin{claim}
    \label{clam:good-idempotent}
    There exists an idempotent $e$ that is good for $s_1$. 
\end{claim}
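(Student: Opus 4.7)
The plan is to split the task into two logically independent steps: first, to produce \emph{some} idempotent in the $\mathcal{J}$-class of $s_1$, using smoothness of the sequence; and second, to replace it by an idempotent in the same $\mathcal{J}$-class whose support sits inside $\supp(\bar a) \cup \supp(J(s_1))$, using an atom permutation.

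For the first step, write $t := s_2 \cdot s_3 \cdots s_n$ and set $p := s_1 \cdot t$. Smoothness of $s_1,\ldots,s_n$ gives that $s_1$, $t$, and $p$ are all in the same $\mathcal{J}$-class $J := J(s_1)$. In particular $p$ is an infix of $s_1$, so by Lemma~\ref{lem:green-lemma-anitchains} the element $p = s_1 \cdot t$ is a \emph{prefix} of $s_1$: there is $u \in S^1$ with $s_1 = s_1 \cdot t \cdot u = s_1 \cdot (tu)$. Iterating, $s_1 = s_1 \cdot (tu)^k$ for every $k \geq 1$. I would then invoke the standard fact that in an orbit-finite semigroup, every element has an idempotent power — this is the analogue of the classical finite-semigroup fact and follows from the finiteness of the $\mathcal{J}$-height together with Green's relations (the $\mathcal{J}$-classes of $(tu), (tu)^2,\ldots$ form a descending chain in the infix preorder that must stabilize, and once stable the standard argument produces an idempotent power). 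Fix $k$ so that $f := (tu)^k$ is idempotent. Then $s_1 f = s_1$. To see that $f$ lies in $J$, note that $f = t \cdot (ut)^{k-1} u$ has $t$ as a prefix, so $f$ is $\mathcal{J}$-below $t \in J$; on the other hand $s_1 = s_1 f$ shows $s_1$ is $\mathcal{J}$-below $f$, so by antisymmetry $f \in J$. Let $e_0 := f$.

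For the second step, the key observation is that the $\mathcal{J}$-class $J$, viewed as an element of the orbit-finite set of all $\mathcal{J}$-classes of $S$, has $\supp(J) \subseteq \supp(s_1)$ and hence $|\supp(J)| \leq \dim(S)$; similarly any idempotent $e_0 \in J$ satisfies $\supp(J) \subseteq \supp(e_0)$ (because the $\mathcal{J}$-class map is equivariant and Lemma~\ref{lem:fs-functions-preserve-supports} applies) and $|\supp(e_0) \setminus \supp(J)| \leq \dim(S)$. Since $|\supp(\bar a)| = 2\dim(S)$, the pigeonhole bound $|\supp(\bar a) \setminus \supp(J)| \geq 2\dim(S) - \dim(S) = \dim(S) \geq |\supp(e_0) \setminus \supp(J)|$ holds, so I can pick an atom permutation $\pi$ that fixes $\supp(J)$ pointwise and maps $\supp(e_0) \setminus \supp(J)$ injectively into $\supp(\bar a) \setminus \supp(J)$. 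Because $J$ is supported by $\supp(J)$, we have $\pi(J) = J$, so $\pi(e_0)$ is an idempotent in $J$; and by construction $\supp(\pi(e_0)) = \pi(\supp(e_0)) \subseteq \supp(J) \cup \supp(\bar a)$. Setting $e := \pi(e_0)$ completes the proof.

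The main obstacle is the orbit-finite analogue of the finite-semigroup fact that every element has an idempotent power in a controllable $\mathcal{J}$-class. The finite proof goes through almost verbatim once the finiteness of the $\mathcal{J}$-height is in hand, which has been established earlier in the chapter; the step from ``some power is idempotent'' to ``the idempotent lies in $J$'' is handled by the antisymmetry argument using $s_1 = s_1 f$ together with smoothness, as above. The support-adjustment step, by contrast, is a purely combinatorial counting argument and is straightforward once the constant $2\dim(S)$ has been chosen generously enough, which is precisely why the tuple $\bar a$ was fixed with this length.
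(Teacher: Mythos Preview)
Your proof is correct, and the second step (adjusting the support of the idempotent via a $\supp(J)$-fixing permutation) is essentially identical to the paper's argument, down to the counting that shows $\bar a$ is large enough.

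The first step differs. The paper observes that $s_1$, $s_2$, and $s_1 s_2$ all lie in $J$ (by smoothness with $n \geq 2$) and then directly invokes \cite[Corollary~2.25]{pin2010mathematical}, which says that a $\mathcal{J}$-class containing such a triple must contain an idempotent. You instead build the idempotent by hand: use Lemma~\ref{lem:green-lemma-anitchains} to find $u$ with $s_1 = s_1(tu)$, take an idempotent power of $tu$, and sandwich it between $t$ and $s_1$ to land in $J$. This is a valid and more self-contained route; it essentially reproves the special case of the cited corollary that you need.

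One remark on your justification for ``every element has an idempotent power'': the sketch via stabilization of the $\mathcal{J}$-chain does work (once $(tu)^m$ and $(tu)^{2m}$ are $\mathcal{J}$-equivalent, Lemma~\ref{lem:green-lemma-anitchains} applied on both sides gives that they are $\mathcal{H}$-equivalent, and then the $\mathcal{H}$-class is a group), but the cleaner argument---used elsewhere in the paper---is that in an orbit-finite semigroup the subsemigroup generated by a single element is \emph{finite} (\cite[Theorem~5.1]{bojanczyk2013nominal}), which immediately gives an idempotent power.
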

\begin{proof}
    The sequence $s_1, s_2$ is smooth.
    It follows that $s_1$, $s_2$, and $s_1 \cdot s_2$ all belong to the same
    $\mathcal{J}$-class. By \cite[Corollary~2.25]{pin2010mathematical}, 
    this means that there is some idempotent $e'$ that belongs to $J(s_1)$. 
    Let us show how to transform this $e'$ into an $e$ that is good for $s_1$.
    Define $\pi$ to be a permutation that swaps every atom from
    $(\supp(e') - \supp(J(x)))$ with a fresh atom from $(\bar a - \supp(J(x)))$.
    For such $\pi$ to exist, $\bar a$ has to be large enough. This is not 
    hard to see after noticing that function $x \mapsto J(x)$ is equivariant, 
    which means that:
    \[|\supp(J(x))| \stackrel{\textrm{Lemma~\ref{lem:fs-functions-preserve-supports}}}{\leq} |\supp(x)| 
     \leq \dim(S) \]

    Define  $e := \pi(e')$ and let us show that it an idempotent that is good for~$s_1$:
    \begin{itemize}
        \item $\pi(e')$ is idempotent, because
              $e'$ is idempotent and the product in $S$ is equivariant;
        \item the inclusion $\supp(\pi(e')) \subseteq \supp(J(x)) \cup \supp(\bar a)$
              follows from the choice of $\pi$;
        \item $\pi(e') \in J(x)$ because $\pi$ is a $\supp(J(x))$-permutation.
    \end{itemize}
\end{proof}
Now let us show how to construct $e_1$ in the first position.
First, we use a classical Mealy machine to underline the first letter. 
Then, we use a homomorphism to apply a uniformization of the 
following relation (Lemma~\ref{lem:straight-uniformisation})
to the underlined element of the input word:
\[
    R(x, y) \Leftrightarrow \begin{cases}
        y \textrm{ represents an idempotent that is good for } x&\\
        y = \bot \textrm{ and there is no idempotent that is good for } x&
    \end{cases}
\]

\item \label{step:prop-e1} In the second step, we want to propagate $e_1$ throughout the word:
      \[
          \begin{tabular}{ccccccc}
             $s_1$ & $s_2$ & $s_3$ & $s_4$ & $s_5$ & $s_6$ & $\dashv$\\
             $e_1$ & $e_1$ & $e_1$ & $e_1$ & $e_1$ & $e_1$ & $\square$\\
          \end{tabular}
       \]
       We do this in two substeps: First, we use an $\bar a$-supported 
       homomorphism based on the function $x \mapsto (x, \bar a)$ to equip 
       every input position with $\bar a$:
       \[
          \begin{tabular}{ccccccc}
             $s_1$ & $s_2$ & $s_3$ & $s_4$ & $s_5$ & $s_6$ & $\dashv$\\
             $e_1$ & $\bar a$ & $\bar a$ & $\bar a$ & $\bar a$ & $\bar a$ & $\square$\\
          \end{tabular}
       \] 
       Then, we propagate $e_1$ to every position. Notice that, 
       since $J(s_i) = J(s_1)$, and $\supp(J(s_i)) \subseteq \supp(s_i)$,
       it follows that all the atoms from $e_1$ are already present in each position:
       \[ \supp(e_1) \subseteq \supp(s_i, \bar a)\tdot\]
       The following lemma says that this is enough to perform a multiple-use propagation of $e_1$:
       \begin{lemma}
        \label{lem:multiple-use-propagation}
        For every polynomial orbit-finite $X$, the following \emph{conditional multiple-use propagation}
        function can be constructed as a composition of single-use primes:
        \[f_\textrm{prop} : X^* \to X^* \]
        Given an input word $w = x_1, \ldots, x_n$, the function $f_\textrm{prop}(w)$ replaces each $x_i$ with $x_1$,
        provided that $\supp(x_1) \subseteq \supp(x_i)$ for every input position $i$.  
        If there is at least one $i$, for which $\supp(x_1) \not\subseteq \supp(x_i)$, then 
        the output is unspecified. Here is an example (for $X = \atoms^2$):
        \[
            \begin{tabular}{ccccc}
                $(4, 7)$ & $(7, 4)$ & $(7, 4)$ & $(4, 7)$ & $(7, 4)$\\
                         &          & \rotatebox[origin=c]{-90}{$\mapsto$} &          &\\
                $(4, 7)$ & $(4, 7)$ & $(4, 7)$ & $(4, 7)$ & $(4, 7)$            
            \end{tabular}
        \]
        It is worth pointing out that the lemma crucially depends on the support-inclusion condition.
        We have already seen in Example~\ref{ex:monoid-products-not-single-use} that 
        the unrestricted version of $f_\textrm{prop}$ cannot be implemented as a single-use 
        Mealy machine. According to Section~\ref{subsec:cmp-primes-incl-su-mealy}, this 
        implies that it cannot be implemented as a composition of primes either. 
        \end{lemma}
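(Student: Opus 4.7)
The plan is to exploit the key observation that all the information needed to reconstruct $x_1$ from $x_i$ is encoded in a finite ``reindexing function'' telling us, for each coordinate of $x_1$, where to find the same atom inside $x_i$. Because this function is atomless, its value can be maintained along the word by a classical finite monoid prefix, which is a single-use prime by Theorem~\ref{thm:classical-kr}, and the atoms themselves are then read out of $x_i$ by an equivariant letter-to-letter homomorphism.

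First I would reduce to the case $X = \atoms^k$ using Lemma~\ref{lemma:pof-normal-form-su}; the general polynomial case is handled afterwards by routing each input letter to its summand with a finite Mealy machine and running the pure-tuple construction inside each summand. Setting $K = \{1, \ldots, k\}$ and using a sentinel letter $\vdash$, the construction then has four stages, each a prime or a short composition of primes from the toolbox in Section~\ref{sec:krohn-rhodes-decomp}:
\begin{enumerate}
\item Pair each $x_i$ with $x_{i-1}$ using $f_\textrm{delay}$ (Claim~\ref{lem:delay-primes}) in parallel with $\idf$, placing $\vdash$ at the left boundary.
\item Apply the equivariant letter-to-letter homomorphism $u$ with codomain $K^K$, defined by $u(\vdash, x) := \idf_K$ and otherwise $u(x_{i-1}, x_i)(j) := \min\{\,l \in K \mid x_i[l] = x_{i-1}[j]\,\}$.
\item Apply the monoid-prefix prime for the finite monoid $(K^K, \circ^{\mathrm{op}})$ to compute the cumulative composition $r_i = u_i \circ u_{i-1} \circ \cdots \circ u_2 \circ \idf_K$ at each position.
\item Apply the equivariant letter-to-letter homomorphism $e : K^K \times \atoms^k \to \atoms^k$ defined by $e(r, x) := (x[r(1)], \ldots, x[r(k)])$.
\end{enumerate}

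A straightforward induction on $i$ shows that $x_i[r_i(j)] = x_1[j]$ for every $j \in K$: the base case is $r_1 = \idf_K$, and the step uses the support inclusion $\supp(x_1) \subseteq \supp(x_{i+1})$ to guarantee that the atom $x_1[j] = x_i[r_i(j)]$ really does appear somewhere in $x_{i+1}$, so that $u_{i+1}(r_i(j))$ is well-defined as the smallest such index. Stage~4 then outputs $(x_i[r_i(1)], \ldots, x_i[r_i(k)]) = x_1$ at every position, as required.

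The main obstacle is stage~2: checking that $u$ is genuinely equivariant and not merely finitely supported. This is subtle because any tie-breaking rule referring to the atoms (for instance ``pick $l$ minimising $x_i[l]$ under some order on $\atoms$'') would only be finitely supported and hence not qualify as a prime. The chosen rule ``smallest index in $K$'' works precisely because $K$ is an atomless finite set, so the choice commutes trivially with all atom permutations. A secondary technical point concerns the polynomial case: one must propagate the summand label of $x_1$ via a classical finite prefix and maintain a family of reindexing functions indexed by the current summand, but once the pure-tuple case is established this bookkeeping is mechanical.
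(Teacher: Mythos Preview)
Your proof is correct and follows the same idea as the paper: track, via a classical finite-state computation, how the atoms of $x_1$ are located inside each $x_i$, and then reconstruct $x_1$ at every position by an equivariant homomorphism. One small omission: your function $u$ is not total as written, since the set $\{l \mid x_i[l] = x_{i-1}[j]\}$ may be empty (the hypothesis only gives $\supp(x_1) \subseteq \supp(x_i)$, not $\supp(x_{i-1}) \subseteq \supp(x_i)$); you should fix a default value for this case, and your induction already shows the default is never exercised on the indices $r_i(j)$ that matter. For general polynomial $X$, the paper avoids your summand bookkeeping by first extracting support tuples via $\suppt : X \to \atoms^{\leq \dim X}$ (Claim~\ref{claim:supp-tuple}), composing index \emph{relations} exactly as you do to rebuild $\suppt(x_1)$ at every position, and then handling the non-atom structure of $x_1$ via name abstraction: it propagates the atomless value $x_1' = \langle\suppt(x_1)\rangle x_1$ with a classical Mealy machine and recovers $x_1 = x_1'@\suppt(x_1)$ at each position. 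This packaging is a bit cleaner than tracking summand labels, though your sketch would also go through.
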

                \begin{proof}
                    First, let us notice that by a simple induction on the 
                    structure of $X$, we can show that it is possible to extract 
                    the supports of elements of $X$ in form of a tuple
                    (the crucial assumption is that $X$ is polynomial):
                    \begin{claim}
                    \label{claim:supp-tuple}
                        For every polynomial orbit-finite $X$ ,there is an equivariant function:
                        \[ \suppt : X \eqto \atoms^{\leq \dim X}\comma \]
                        such that for every $x$,
                        the tuple $\suppt(x)$ contains the least support of $x$
                        and all atoms in $\suppt(x)$ are distinct.
                    \end{claim} 
                    Now, let us show how to equip every position with $\suppt(x_1)$.
                    We start with a homomorphism
                    that equips every position $i$ with $\suppt(x_i)$.
                    Then, we use the delay function (Lemma~\ref{lem:delay-primes})
                    and a homomorphism to compute the following relation in every position:
                    \[r_i \in P(\{1, \ldots, \dim(X)\} \times \{1, \ldots, \dim(X)\})\]
                    \[ n \; r_i \; m \
                    \Leftrightarrow \
                    \substack{
                    \textrm{$n$-th atom in $\supp(x_{i-1})$ and}\\
                    \textrm{$m$-th atom in $\supp(x_i)$ are equal}} \]
                    Notice that every $r_i$ is an element of a finite set,
                    which means that we can use a classical Mealy machine to 
                    compute the composition of $r_i$'s on each prefix.
                    This way, in every position we obtain the following $\overrightarrow r_i$:
                    \[ n \; \overrightarrow{r_i} \; m \
                    \Leftrightarrow \
                    \substack{
                     \textrm{$n$-th atom in $\supp(x_{1})$ and}\\
                     \textrm{$m$-th atom in $\supp(x_i)$ are equal}} \]
                    Thanks to the values $\overrightarrow{r_i}$, we locate 
                    each atom from $\suppt(x_1)$ in $\suppt(x_i)$, so 
                    we can use a homomorphism to compute $\suppt(x_1)$ in every position.\\

                    In the second phase of the construction, we use the values
                    $\suppt(x_1)$ to compute $x_1$ in every position. 
                    The general idea is as follows:
                    \begin{enumerate}
                        \item encode every atom in $x_1$ as its position in $\supp(x_1)$,
                              obtaining an atomless value $x_1'$;
                        \item use a classical Mealy machine to propagate $x_1'$ throughout 
                              the word;
                        \item in every position, repopulate $x_1'$ with values from 
                              $\suppt(x_1)$.
                    \end{enumerate}
                    Formally, we define $x_1'$ using
                    name abstraction and atom placeholders (from Section~\ref{subsec:eliminating-redundant-atoms}).
                    Remember that $[\atoms]X$ denotes the set of elements of $X$, where one 
                    atom might have been replaced by an atomless placeholder. It comes with two operations:
                    \[ \begin{tabular}{cc}
                        $\underbrace{\abstr{a}(x)}_{
                        \substack{\textrm{replace all } a \textrm{'s in } x\\
                                  \textrm{with the placeholder}}}$ &
                        $\underbrace{x@a}_{
                                    \substack{\textrm{replace the placeholder in } x\in [\atoms]X\\
                                              \textrm{with the atom } a}}$
                        
                    \end{tabular}
                    \]
                    Let us prove the name abstraction preserves polynomial orbit-finite sets:
                    \begin{claim}
                        If $X$ is a polynomial orbit-finite set, then so is $[\atoms]X$. 
                    \end{claim}
                    \begin{proof}
                        By Lemma~\ref{lem:abstr-products-sums-words}, we know that
                        $[\atoms](X + Y) \simeq [\atoms]X + [\atoms]Y$ and 
                        $[\atoms](X \times Y) = [\atoms]X \times \atoms[Y]$, 
                        so it suffices to notice that:
                        \[\begin{tabular}{ccc}
                            $[\atoms]1 \simeq 1$ & and &
                            $[\atoms]\atoms \simeq \underbrace{\atoms}_{\textrm{real atom}} + \underbrace{1}_\textrm{placeholder}$
                        \end{tabular} \]  
                    \end{proof}
                    Let us define $[\atoms^k]X$ and $[\atoms^{\leq k}]X$ as follows:
                    \[ \begin{tabular}{cc}
                        $[\atoms^k]X =  \underbrace{[\atoms](\ldots ([\atoms] X)\ldots)}_{
                            k \textrm{ times } [\atoms]
                        } $&
                        $[\atoms^{\leq k}] = X + [\atoms]X + \ldots + [\atoms^k]X$
                       \end{tabular}
                    \]
                    We define $x_1' \in [\atoms^{\leq \dim(X)}]X$ as $\abstr{\suppt(x_1)}(x_1)$.
                    By Lemma~\ref{lem:abstr-removes-atom}, we know that $x_1'$ 
                    is equivariant (i.e. atomless). This means that we can propagate
                    $x_1'$ throughout the word using a classical Mealy machine. 
                    Now every position is equipped with both $x_1'$ and $\suppt(x_1)$. 
                    This means that can use a homomorphism to reconstruct $x_1$ 
                    in every position as $x_1'@(\supp(x_1))$.
                \end{proof}
       
       \item\label{it:xy-values} In this step, we decompose
       every $s_i$ into $x_i \cdot y_i$ such that $e_1$ is a suffix of $x_1$ and a prefix of $y_1$. 
       We do this by using a homomorphism based on the following function:
       \begin{claim}
	     Let $S$ be an orbit-finite semigroup,
	     and $h : \Sigma \to S + \bot$ be its polynomial orbit-finite representation.
         There exists a finitely supported function that $f_{\textrm{decompose}} : \Sigma^2 \fsto \Sigma^2$, that does the following:
         \begin{itemize}
	     \item \textbf{Input:}  $s, e \in \Sigma$, such that $e$ represents an idempotent from $S$, and $s$ represents an element from the same $\mathcal{J}$-class as $e$;
         \item \textbf{Output:} $x, y \in \Sigma$, such that $h(e)$ is a suffix of $h(x)$,
         $h(e)$ is a prefix  of $h(y)$, and $h(x) \cdot h(y) = h(s)$.
         \end{itemize}
        \end{claim}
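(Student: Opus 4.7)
My plan is to establish the claim in two stages: first prove a purely semantic existence statement in the semigroup $S$, then lift it to a finitely supported function on representations using the Straight Uniformisation Lemma.

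For the semantic part, fix $s, e \in S$ with $e$ idempotent and $s \, \mathcal{J}\text{-equivalent}$ to $e$. By the definition of $\mathcal{J}$-equivalence (since $e$ is an infix of $s$), there exist $a, b \in S^1$ such that $a \cdot e \cdot b = s$. I then set
\[ x_S := a \cdot e \quad \text{and} \quad y_S := e \cdot b. \]
Using idempotency of $e$, one computes
\[ x_S \cdot y_S = a \cdot e \cdot e \cdot b = a \cdot e \cdot b = s, \]
while by construction $e$ is a suffix of $x_S$ and a prefix of $y_S$ (in both cases, if the relevant element of $S^1$ happens to be the adjoined identity, this degenerates to $x_S = e$ or $y_S = e$, which still satisfies the required conditions). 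Thus for every valid input $(s, e)$ there exists at least one valid output $(x_S, y_S) \in S \times S$.

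To transfer this to representations in $\Sigma$, I would consider the relation $R \subseteq \Sigma^2 \times \Sigma^2$ defined by
\[ ((s,e),(x,y)) \in R \iff r(s) = r(x) \cdot r(y),\ r(e) \text{ is a suffix of } r(x),\ r(e) \text{ is a prefix of } r(y), \]
together with a trivial extension sending ill-formed inputs (e.g. where $r$ is undefined, or the $\mathcal{J}$-class/idempotency conditions fail) to some default value such as $(\bot_\Sigma, \bot_\Sigma)$. This $R$ is equivariant (all the ingredients — $\cdot$, ``is a suffix of'', ``is a prefix of'', and $r$ — are equivariant), and the existence argument above, together with surjectivity of $r$ (so $x_S, y_S$ admit some representations in $\Sigma$), shows that every input admits at least one output. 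Applying the Straight Uniformisation Lemma (Lemma~\ref{lem:straight-uniformisation}) to $R$ then yields a finitely supported function $f_\textrm{decompose} : \Sigma^2 \fsto \Sigma^2$ with exactly the required behaviour on valid inputs, finishing the proof.

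The only subtle point I anticipate is being careful that the polynomial orbit-finite sets $\Sigma$ and $\Sigma^2$ are straight, which they are since polynomial orbit-finite sets are closed under products and are all straight — so Lemma~\ref{lem:straight-uniformisation} does apply. No nontrivial choice-like obstruction arises here because we only ask for finite support, not equivariance; if one wanted equivariance, one would additionally have to verify a support condition in the style of Lemma~\ref{lem:straight-uniformization-eq}, but that is not required by the claim's statement.
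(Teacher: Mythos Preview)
Your proof is correct and follows essentially the same route as the paper: the paper also shows existence by writing $s = aeb$ and setting $x = ae$, $y = eb$, then invokes the Straight Uniformisation Lemma to obtain the finitely supported function on representations. Your version is slightly more careful (you work explicitly in $S^1$ and spell out the equivariance of the relation and the degenerate cases), but the argument is the same.
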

\begin{proof}
	Thanks to the Lemma~\ref{lem:straight-uniformisation}, it suffices to show that for all $s, e \in S$
    there exists at least one such decomposition.
	Since $e$ is an infix of $s$, we know that $aeb = s$ for some $a,b \in S$.
    Define $x := ae$ and $y := eb$. It follows that:
	\[x y =  aeeb = aeb = s\]
    Since $e$ is clearly a suffix of $x$ and a prefix of $y$, we can pick 
    $s = x \cdot y$ as the desired decomposition.

\end{proof}

\item Next, we apply the delay function to the $y$-coordinates and use homomorphism to compute $g_i := y_{i} \cdot x_{i + 1}$:
\[
    \begin{tabular}{ccccccc}
        $x_1$ & $x_2$ & $x_3$ & $x_4$ & $x_5$ & $x_6$ & $\dashv$\\
        $\vdash$ & $y_1$ & $y_2$ & $y_3$ & $y_4$ & $y_5$ & $y_6$ \\
        $\vdash$ & $g_1$ & $g_2$ & $g_3$ & $g_4$ & $g_5$ & $\dashv$ \\  
    \end{tabular}
\]
    We say that two elements of a semigroup
    are \emph{$\mathcal{H}$-equivalent}
    if they are both prefix equivalent and suffix equivalent. 
    The important property of the $g_i$ values is that they
    are all $\mathcal{H}$-equivalent to $e_1$:
    \begin{claim}
        \label{claim:g-h-class}
        All $g_i$ are prefix and suffix equivalent to $e_1$.
    \end{claim}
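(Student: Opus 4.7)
The plan is to reduce the claim to two applications of the orbit-finite Green's Lemma (Lemma~\ref{lem:green-lemma-anitchains}). The first step is to observe, using the decomposition from Step~\ref{it:xy-values}, that $e_1$ sits on both sides of $g_i$. Since $e_1$ is a suffix of $x_i$ and a prefix of $y_i$, we can write $x_i = x_i' e_1$ and $y_i = e_1 y_i'$ for some $x_i', y_i' \in S^1$, and analogously $x_{i+1} = x_{i+1}' e_1$ and $y_{i+1} = e_1 y_{i+1}'$. Plugging this into $g_i = y_i x_{i+1}$ and using $e_1 e_1 = e_1$ gives $g_i = e_1 (y_i' x_{i+1}') e_1$. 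In particular, $e_1 g_i = g_i$ and $g_i e_1 = g_i$, so $e_1$ is both a prefix and a suffix of $g_i$.

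Next, I would establish that $g_i$ and $e_1$ lie in the same $\mathcal{J}$-class. One inclusion is immediate from $g_i = e_1 \cdot (y_i' x_{i+1}' e_1)$, which shows $e_1$ is an infix of $g_i$, hence $g_i \leq_{\mathcal{J}} e_1$. For the converse direction, I would use smoothness of the sequence: writing $s_i s_{i+1} = x_i y_i x_{i+1} y_{i+1} = x_i \cdot g_i \cdot y_{i+1}$ shows that $g_i$ is an infix of $s_i s_{i+1}$, and smoothness forces $s_i s_{i+1}$ to be $\mathcal{J}$-equivalent to $s_i$, which is itself $\mathcal{J}$-equivalent to $e_1$ by construction in Step~1. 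Transitivity of the infix preorder then gives $e_1 \leq_{\mathcal{J}} g_i$, so $g_i \mathcal{J} e_1$.

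The main (and fairly routine) step is then to upgrade $\mathcal{J}$-equivalence to $\mathcal{H}$-equivalence using Lemma~\ref{lem:green-lemma-anitchains}. From $g_i = e_1 \cdot (y_i' x_{i+1}' e_1)$ we have $g_i$ of the form $e_1 \cdot u$, and since $g_i \mathcal{J} e_1$, the element $g_i = e_1 u$ is an infix of $e_1$; Lemma~\ref{lem:green-lemma-anitchains} therefore says $g_i$ is a prefix of $e_1$, i.e. $e_1 = g_i \cdot c$ for some $c \in S^1$. Combined with the already established fact that $e_1$ is a prefix of $g_i$, this yields prefix equivalence $g_i \prefequiv e_1$. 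The dual argument, applied to the factorisation $g_i = (e_1 y_i' x_{i+1}') \cdot e_1$, gives suffix equivalence $g_i \sufequiv e_1$. Taken together these two give $g_i \hequiv e_1$, as required. The only thing to double-check is the edge case where $y_i'$ or $x_{i+1}'$ does not exist in $S$ proper, which is the reason for working in $S^1$ via Claim~\ref{claim:semigroup-to-monoid}; in that case the products simply collapse and the same argument goes through.
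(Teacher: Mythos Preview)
Your proposal is correct and follows essentially the same route as the paper: show that $e_1$ is a prefix of $g_i$ (you do this via the explicit factorisation $g_i = e_1(y_i' x_{i+1}')e_1$, the paper simply notes $e_1$ is a prefix of $y_i$ which is a prefix of $g_i$), show that $g_i$ is an infix of $e_1$ via smoothness of $s_i s_{i+1}$, and then invoke Lemma~\ref{lem:green-lemma-anitchains} to upgrade to prefix equivalence; the suffix side is symmetric. Your explicit use of $S^1$ and the factorisations $x_i = x_i'e_1$, $y_i = e_1 y_i'$ is a slight elaboration but not a different argument.
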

    \begin{proof}
        Let us show that $e_1$ is prefix equivalent to $g_i$
        (the proof for suffix equivalence is similar).
        Thanks to Lemma~\ref{lem:green-lemma-anitchains}, 
        it suffices to show that $e_1$ is a prefix of $g_i$
        and that $g_i$ and \emph{infix} of $e_1$.
        First, observe that $e_1$ is a prefix of $y_i$, which in turn is a prefix of $g_i$.
        Thus, $e_1$ is a prefix of $g_i$. Next, to see that $g_i$ is an infix of $e_1$, 
        observe that $x_i g_i y_{i+1} = s_i s_{i+1}$. This means that $g_i$ is an infix 
        of $s_i s_{i+1}$, which, by smoothness, is an infix of $s_i$, and $s_i$ an infix of $e_1$. 
        Thus $g_i$ is an infix of $s_i$. 
    \end{proof}

\item\label{it:g-values}  In this step, we equip the last position (i.e. the position with $\dashv$)
       with a representation of the product of all $g_i$'s:
       \[ \overrightarrow g := g_1 \cdot \ldots \cdot g_{n-1}\]
       Recall that, by Claim~\ref{claim:g-h-class}, every $g_i$ is $\mathcal{H}$-equivalent to $e_i$.
       It follows that the support of each $g_i$ is equal to the support of $e_i$:
       \begin{lemma}
        \label{lem:h-idemp-eq-supp}
        Let $e \in S$ be an idempotent and let $H(e)$ be its $\mathcal{H}$-class. 
        Then, for every $x \in H(e)$, it holds that $\supp(x) = \supp(e)$. 
       \end{lemma}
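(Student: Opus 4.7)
The plan is to combine the classical semigroup-theoretic description of $H(e)$ as a group with a two-direction transposition argument on atoms. Recall first the standard fact that for any idempotent $e$ the $\mathcal{H}$-class $H(e)$ forms a group with identity $e$ (see for instance \cite[Section~II.3]{pin2010mathematical}). The additional ingredient I will need is the nominal fact that a group occurring inside an orbit-finite semigroup must be \emph{finite}: since $H(e)$ is a $\supp(e)$-supported subsemigroup of $S$ which happens to be a group, the nominal analogue of ``orbit-finite group $=$ finite group'' from \cite{bojanczyk2013nominal} applies (treating the atoms of $\supp(e)$ as constants).

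For the inclusion $\supp(e) \subseteq \supp(x)$ I do not actually need finiteness of $H(e)$. Suppose for contradiction that $a \in \supp(e) \setminus \supp(x)$ and pick a fresh atom $b$ outside $\supp(e) \cup \supp(x)$. Set $\pi := \swap{a}{b}$. Then $\pi(x) = x$ since $a, b \notin \supp(x)$, while $\pi(e) \neq e$ since $a \in \supp(e)$ and $b$ is fresh (standard least-support argument). Because the $\mathcal{H}$-relation is equivariant, $\pi(H(e)) = H(\pi(e))$, so $x = \pi(x) \in H(\pi(e))$. But then $x$ sits in both $H(e)$ and $H(\pi(e))$, which is impossible since distinct $\mathcal{H}$-classes are disjoint and $H(e) \neq H(\pi(e))$.

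For the reverse inclusion $\supp(x) \subseteq \supp(e)$ I will use finiteness of $H(e)$. Suppose $a \in \supp(x) \setminus \supp(e)$. For each fresh atom $b$ outside $\supp(x) \cup \supp(e)$, the transposition $\swap{a}{b}$ is a $\supp(e)$-permutation, hence preserves the $\supp(e)$-supported subset $H(e)$, so $\swap{a}{b}(x) \in H(e)$. A short calculation shows that for two distinct fresh atoms $b_1, b_2$ the images $\swap{a}{b_1}(x)$ and $\swap{a}{b_2}(x)$ are distinct: otherwise $\swap{a}{b_2} \circ \swap{a}{b_1}$ would fix $x$, but this composition is the $3$-cycle $(a\ b_1\ b_2)$, whose effect on $x$ is simply to replace the atom $a \in \supp(x)$ by the fresh atom $b_1 \notin \supp(x)$, contradicting the least-support property. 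Letting $b$ range over infinitely many fresh atoms thus produces infinitely many pairwise distinct elements of $H(e)$, contradicting its finiteness.

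Combining the two inclusions gives $\supp(x) = \supp(e)$. \textbf{The main obstacle} in this plan is the finiteness of $H(e)$: it is the only place where orbit-finiteness of $S$ enters non-trivially, and where I expect to rely on (and, if needed, re-derive in this setting) the ``orbit-finite groups are finite'' result from \cite{bojanczyk2013nominal}, adapted to the $\supp(e)$-supported setting by treating $\supp(e)$ as a finite set of constants.
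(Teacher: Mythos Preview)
Your proof is correct. One small point of exposition: in the first direction you assert $H(e) \neq H(\pi(e))$ without saying why; the reason is that an $\mathcal{H}$-class contains at most one idempotent (an immediate consequence of the group structure you already cited), and $e \neq \pi(e)$ are both idempotent. With that made explicit, your first inclusion goes through, and indeed needs only general semigroup theory, not finiteness of $H(e)$.

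The paper organises things differently. For $\supp(e) \subseteq \supp(x)$ it \emph{does} use finiteness of $H(e)$: since $H(e)$ is a finite group there is some $n$ with $x^n = e$, and then equivariance of $x \mapsto x^n$ together with Lemma~\ref{lem:fs-functions-preserve-supports} gives $\supp(e) \subseteq \supp(x)$ in one line. For $\supp(x) \subseteq \supp(e)$ the paper's argument is essentially the same as yours, phrased more abstractly: $H(e)$ is $\supp(e)$-supported, and every element of a finite set is supported by any support of the set (otherwise its $\supp(e)$-orbit would be infinite). So the paper invokes finiteness in both directions while you invoke it in only one; the price you pay is the extra appeal to uniqueness of idempotents, which the paper's algebraic $x^n = e$ argument sidesteps.
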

       \begin{proof}
        We start the proof by citing a few results:
        \begin{itemize}
            \item \cite[Proposition 1.13]{pin2010mathematical} states that
                   in every semigroup $S$ (possibly infinite), if
                   an $\mathcal{H}$-class contains an idempotent, 
                   then it forms a subgroup. Moreover, thanks to
                   \cite[Proposition~1.4]{pin2010mathematical}
                   we know that the idempotent is the identity element
                   of this subgroup. (In particular, this means that
                   no $\mathcal{H}$-class contains more than one idempotent.)
            \item \cite[Corollary~2.17]{ley2015logics} says that 
                   in an orbit-finite semigroup $S$ all $\mathcal{H}$-classes
                   are finite\footnote{
                    It is also worth pointing out \cite[Lemma~2.14]{ley2015logics} which says 
                    that all orbit-finite groups are finite.
                    This can be seen as the reason why Krohn-Rhodes 
                    decompositions of single-use Mealy machines use only finite (i.e. atomless)
                    groups.
                }.
        \end{itemize}
        It follows that $H(e)$ is a finite subgroup of $S$ and that $e$ is the identity element of this group.
        First, let us show that $\supp(e) \subseteq \supp(x)$:
        Since $H(e)$ is a finite group, there exists an $n$ such that $x^n = e$. 
        The function $x \mapsto x^n$ is equivariant, so by Lemma~\ref{lem:fs-functions-preserve-supports},
        $\supp(e) \subseteq \supp(x)$.
        Now, let us show that $\supp(x) \subseteq \supp(e)$: By Lemma~\ref{lem:fs-functions-preserve-supports},
        $\supp(H(e)) \subseteq \supp(e)$. Now, since $H(e)$ is finite, 
        $\supp(x) \subseteq H(e)$ for every $x \in H(x)$ 
        (or otherwise the $\supp(H(e))$-orbit of $x$ would be infinite). It follows 
        that $\supp(x) \subseteq \supp(H(e)) \subseteq \supp(e)$. 
       \end{proof}
       Now, we equip every position with $\suppt(e_1)$ -- the easiest
       way to do it is to simply not forget those values from Step~\ref{step:prop-e1}, 
       but we can also use Lemma~\ref{lem:multiple-use-propagation}. Then, 
       we use a homomorphism to compute $g'_i = \abstr{\suppt(e_1)}(g_i)$.
       By Lemmas~\ref{lem:h-idemp-eq-supp}~and~\ref{lem:abstr-removes-atom},
       we know that all $g'_i$'s are atomless.
       It follows that we can use a classical Mealy machine to compute $\overrightarrow{g'} = g'_1 \cdot \ldots g'_n$
       and save it in the last letter. The binary product used to compute $g'$ is defined
       using functoriality of $\abstr{\suppt(e_1)}$ as $\abstr{\suppt(e_1)}(\_ \cdot \_)$.
       By Lemma~\ref{lem:abstr-natural}, we know that
       $\overrightarrow{g}'@\suppt(e_1) = \overrightarrow{g}$. 
       Since, at this point, the last letter contains both $\overrightarrow{g}'$ and $\supp(e_q)$,
       it follows that we can use a homomorphism to construct $\overrightarrow{g}$ in the last letter. 

\item Finally, we notice that $s_1  \cdot \ldots \cdot s_n = x_1 \cdot \overrightarrow{g} \cdot y_n$. 
      The value $\overrightarrow{g}$ is already present in the last position.
      This means that in order to compute $s_1 \cdot \ldots \cdot s_n$, we can use 
      the generalized single-use propagation (Claim~\ref{claim:gen-su-prop}) to 
      send $x_1$ and $y_n$ to the last position (which we recognize by $\dashv$), 
      and then apply a homomorphism that multiplies values $x_1$, $\overrightarrow{g}$ and $y_i$ in the last position. 
\end{enumerate}

\noindent
This leaves us with showing how to get rid of the unnecessary atoms:
\begin{lemma}
    \label{lem:primes-elim-support}
    If $f : \Sigma^* \to \Gamma^*$ can be constructed as a composition
    of finitely supported primes, then it can also be constructed as a composition
    of $\supp(f)$-supported primes. 
    In particular, if $f$ is equivariant, then it can be constructed as a composition of (equivariant) primes.
\end{lemma}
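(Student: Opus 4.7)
The plan is to mimic the proof of Lemma~\ref{lem:aut-atom-elim} from Section~\ref{subsec:eliminating-redundant-atoms}, replacing the automaton $\mathcal{A}$ by a composition $\mathcal{C}$ of finitely supported primes and using the same name-abstraction machinery. It is enough to show the following one-atom-at-a-time reduction: if $\mathcal{C}$ has total support $\alpha$ (the union of the supports of all primes occurring in it) and $a \in \alpha \setminus \supp(f)$, then there is another composition $\mathcal{C}'$ with total support $\alpha \setminus \{a\}$ that computes the same $f$. Iterating this step $|\alpha \setminus \supp(f)|$ times finishes the proof.

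To perform the reduction, I would apply the functor $\abstr{a}(\cdot)$ to every prime appearing in $\mathcal{C}$. By Lemma~\ref{lem:abs-functor}, this operation respects function composition, so the resulting primes compose into a function $\abstr{a}\mathcal{C}$; by Lemma~\ref{lem:abstr-products-sums-words}, it respects products, coproducts and the Kleene star, so $\abstr{a}\mathcal{C}$ has the type $[\atoms]\Sigma^{*} \to [\atoms]\Gamma^{*}$; and by Lemma~\ref{lem:abstr-removes-atom}, the support of every abstracted prime is obtained from the original one by removing $a$. I would then check the three classes of primes separately: the single-use atom propagation and the equivariant homomorphism $h^{*}$ are equivariant, so abstraction leaves them unchanged (up to a harmless change of carrier from $\Sigma$ to $[\atoms]\Sigma$, which is still polynomial orbit-finite); a classical Mealy machine over finite alphabets is also equivariant and its abstraction is again a classical Mealy machine, because $[\atoms]X = X$ when $X$ is finite (atomless); finally, a finitely supported homomorphism $h^{*}$ with $h : \Sigma \fsto \Gamma$ abstracts to $(\abstr{a}h)^{*}$, which is a homomorphism supported by $\supp(h) \setminus \{a\}$.

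To convert $\abstr{a}\mathcal{C}$ back into a composition over $\Sigma^{*} \to \Gamma^{*}$, I would pre-compose with the embedding $\iota_{\Sigma}^{*} : \Sigma^{*} \to [\atoms]\Sigma^{*}$, which is the length-preserving homomorphism induced by the equivariant injection $\iota_{\Sigma}$ from Claim~\ref{claim:abstr-embedding}; this is a single equivariant prime. The key observation, analogous to the one used at the end of Section~\ref{subsec:eliminating-redundant-atoms} for automata, is that for any input word $w \in \Sigma^{*}$ one can pick $b$ fresh for $w$ and apply Claim~\ref{claim:swap-placeholder} together with Lemma~\ref{lem:abstr-natural} repeatedly to each prime in the chain, proving that $\abstr{a}\mathcal{C}$ applied to $\iota_{\Sigma}^{*}(w)$ equals $\iota_{\Gamma}^{*}(f(w))$, using crucially that $a \notin \supp(f)$ so the output $f(w)$ does not contain $a$ and lies in the image of $\iota_{\Gamma}$. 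Finally, $\iota_{\Gamma}$ is an equivariant injection whose image is an equivariant subset of the polynomial orbit-finite set $[\atoms]\Gamma$, and the partial inverse $\iota_{\Gamma}^{-1}$ is an equivariant function by Lemma~\ref{lem:fs-rev-bjection}; lifting it to words gives a single equivariant homomorphism prime that extracts $f(w)$ from $\iota_{\Gamma}^{*}(f(w))$.

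The main obstacle I expect is the output-side interface: one must verify cleanly that the computation $\abstr{a}\mathcal{C} \circ \iota_{\Sigma}^{*}$ is coherent in the sense that every intermediate value in the chain factors through $\iota_{X_i}^{*}$ for the corresponding intermediate alphabet $X_i$, so that the final projection back to $\Gamma^{*}$ via $\iota_{\Gamma}^{-1*}$ is well-defined. This is the single-use, letter-by-letter analogue of the diagram-chase at the end of Section~\ref{subsec:eliminating-redundant-atoms} (the $\abstr{b}\mathcal{A} = \abstr{a}\mathcal{A}'_{w}$ argument), and it should go through by an induction on the length of the composition $\mathcal{C}$, with each step an application of Claim~\ref{claim:abstr-arg} combined with the naturality statement of Lemma~\ref{lem:abstr-natural}.
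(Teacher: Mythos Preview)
Your proposal is correct and follows essentially the same route as the paper: apply the name-abstraction functor $\abstr{a}(\cdot)$ to the composition (handling $\circ$ via Lemma~\ref{lem:abs-functor}, $\times$ via the isomorphism of Lemma~\ref{lem:abstr-products-sums-words}, and the prime homomorphisms via $\abstr{a}(h^*) = (\abstr{a}h)^*$), then sandwich between $\iota_\Sigma^*$ and $(\iota_\Gamma^{-1})^*$. One simplification worth noting: your ``main obstacle'' is easier than you suggest---the paper does not track $\iota$ through intermediate alphabets, but instead observes that since $a \notin \supp(f)$ we have $\abstr{a}f = \iota_{\Sigma^* \to \Gamma^*}(f)$ globally, and then a single naturality square (your Lemma~\ref{lem:abstr-natural} applied once to the whole $f$) gives $\iota_\Gamma \circ f = (\abstr{a}f) \circ \iota_\Sigma$ directly, with no induction on the composition length needed for this part.
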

\begin{proof}
    Given a function $f : \Sigma^* \to \Gamma^*$, which can be constructed as a composition of $\alpha$-supported primes,
    and an atom $a \not \in \supp(f)$, we show how to construct $f$ as a composition of
    $(\alpha - {a})$-supported primes. this is enough to prove the lemma, because 
    we can repeat this process for every atom in $\alpha - \supp(f)$.\\

    The construction uses name abstraction (see Section~\ref{subsec:eliminating-redundant-atoms} for details).
    Consider the function:
    \[\abstr{a}f : [\atoms](\Sigma^*) \to [\atoms](\Gamma^*)\]
    Thanks to the isomorphism  $W_X : \abstr{a}(X^*) \simeq (\abstr{a}X)^*$
    from Lemma~\ref{lem:abstr-products-sums-words}, we can treat $\abstr{a}f$ as a function on words:
    \[\abstr{a}f: ([\atoms](\Sigma))^* \to ([\atoms](\Gamma))^*\]
    Let us show that $\abstr{a}f$ can be constructed as a composition of 
    $(\alpha - \{a\})$-primes:
    \begin{lemma}
    \label{lem:unnecessary-atoms-compositions-primes}
        If $f : \Sigma^* \to \Gamma^*$ can be constructed as a composition 
        of primes supported $\alpha$,
        then $\abstr{a}f : (\abstr{a}\Sigma)^* \to (\abstr{a}\Gamma)^*$
        can be constructed as a composition of primes supported by $\alpha - \{a\}$.
    \end{lemma}
    \begin{proof}
        The proof goes by induction on the construction of $f$.
        For the induction base, we assume that $f$ is a prime function.
        The only prime function, that 
        might not be equivariant, is a homomorphism $h^*$. In this case, it is not 
        hard to see that $\abstr{a}(h^*) = (\abstr{a}h)^*$ (thanks to the isomorphism 
        $W$ from Lemma~\ref{lem:abstr-products-sums-words}). This is enough, because by 
        Lemma~\ref{lem:abstr-removes-atom} we know that if $h$ is supported by $\alpha$, 
        then $\abstr{a}h$ is supported by $\alpha - \{a\}$.\\

        \noindent
        For the induction step, we first notice that thanks to Lemma~\ref{lem:abs-functor}, 
        we have that:
        \[ \abstr{a}(f \circ g) = \abstr{a}f \circ \abstr{a}g \]
        This is enough to handle the case of $\circ$-composition.
        For the $\times$-composition, we would like to show that:
        \[ \abstr{a}(f \times g) = (\abstr{a} f) \times (\abstr{a} g) \]
        The main problem is type inconsistency:
        \[
            \begin{tabular}{cc}
                $ \abstr{a}( f \times g):$ & $([\atoms](\Sigma_1 \times \Sigma_2))^* \to ([\atoms](\Gamma_1 \times \Gamma_2))^*$\\
                &\\
                $\abstr{a}f \times \abstr{a}g:$ & $
                ([\atoms]\Sigma_1 \times [\atoms]\Sigma_2)^* \to
                ([\atoms]\Gamma_1 \times [\atoms]\Gamma_2)^*$  
            \end{tabular}
        \]
        In order to solve it, we use the isomorphism from Lemma~\ref{lem:abstr-products-sums-words}, 
        which is defined as follows:
        \[ \begin{tabular}{c}
            $P : ([\atoms](X \times Y)) \simeq ([\atoms]X \times [\atoms]Y)$\\
            \\
            $P(x) = (\abstr{a}(\proj_1 (x@a)), \abstr{a} (\proj_2 (x@a)))$\\
            \end{tabular}
        \]
        Using $P^*$, we can (implicitly) cast between 
        $([\atoms](\Sigma_1 \times \Sigma_2))^*$ and $([\atoms]\Sigma_1 \times [\atoms]\Sigma_2)^*$. 
        Then it is not hard to see that:
        \[ \abstr{a}(f \times g) = (\abstr{a} f) \times (\abstr{a} g) \]
    \end{proof}
    Now, let us use $\abstr{a}f$ to reconstruct the original $f$ without using $a$.
    Remember that in Claim~\ref{claim:abstr-embedding}, we have defined an embedding:
    $\iota_X : X \to [\atoms]X$ such that:
    \[ \begin{tabular}{cc}
    $\iota_\Sigma(x) = \abstr{a}x$ & where $a \not \in \supp(x)$  	
    \end{tabular}
    \]
    This is an embedding, so it has a partial inverse $\iota_X^{-1} : [\atoms] X \to X + \bot$.
    We use it, to construct $f'$ in the following way and show that $f = f'$:
    \[ f' : \Sigma^* \transform{\iota^*} ([\atoms]\Sigma) \transform {\abstr{a} f} ([\atoms]\Gamma)^* \transform{\iota^{-1}} (\Gamma + \bot)^* \]
    Note that there is a slight type mismatch: the type of  $f$ is $\Sigma^* \to \Gamma^*$,
    and the type of $f'$ is $\Sigma^* \to (\Gamma + \bot)^*$. We can ignore this mismatch, 
    because (as we are going to show) $f'$ never returns $\bot$. 
    Let us now proceed with the proof that $f'=f$.
    It is enough to show that the following diagram commutes:\\

    \adjustbox{width=0.4\textwidth, center}{
    \begin{tikzcd}
	{\Sigma^*} && {\Gamma^*} \\
	{([\mathbb{A}]\Sigma)^*} && {([\mathbb{A}]\Gamma)^*}
	\arrow["f", from=1-1, to=1-3]
	\arrow["{\iota_\Sigma}", from=1-1, to=2-1]
	\arrow["{\langle a \rangle f}", from=2-1, to=2-3]
	\arrow["{\iota_\Gamma}", from=1-3, to=2-3]
    \end{tikzcd}
    }
    
    \vspace{0.1cm}
    \noindent
    For that, we notice that (by assumption) $a \not \in \supp(f)$, so
    $\abstr{a}f = \iota_{\Sigma^* \to \Gamma^*}(f)$. It follows that 
    we can finish the proof by applying the following claim:
    \begin{claim}
    For every $X, Y$, and $f: X~\fsto~Y$, the following diagram commutes:\\

    \adjustbox{width=0.4\textwidth, center}{
    \begin{tikzcd}
	{\Sigma^*} && {\Gamma^*} \\
	{([\mathbb{A}]\Sigma)^*} && {([\mathbb{A}]\Gamma)^*}
	\arrow["f", from=1-1, to=1-3]
	\arrow["{\iota_\Sigma}", from=1-1, to=2-1]
	\arrow["{\iota(f)}", from=2-1, to=2-3]
	\arrow["{\iota_\Gamma}", from=1-3, to=2-3]
    \end{tikzcd}
    }
    \end{claim}
    \begin{proof}
        We take a $x \in X$, and show that $\iota(f(x)) = (\iota f)(\iota x)$. 
        Let $b$ be an atom such that $b \not \in \supp(x) \cup \supp(f)$. It follows 
        by definition of $\iota$ that $\iota(f) = \abstr{b}f$, $\iota(x) = \abstr{b}x$, 
        by and (by Lemma~\ref{lem:fs-functions-preserve-supports}) $\iota(f(x)) = \abstr{b}(f(x))$.
        Using the isomorphism from Lemma~\ref{lem:abstr-fun}, we get that:
        \[ (\iota f)(\iota x) = (\abstr{b} f)(\abstr{b} x) = \abstr{b}\left(f((\abstr{b}x)\,@\,b)\right) = \abstr{b}(f(x)) = \iota(f(x)) \]
    \end{proof}
    \noindent
    This completes the proof of Lemma~\ref{lem:smooth-product}.
\end{proof}

\section{Local semigroup transductions $\subseteq$ Compositions of primes}
\label{sec:local-monoid-transformation-composition-of-primes}
In this section, we finish the proof of Theorem~\ref{thm:kr} by proving the following lemma:
\begin{lemma}
\label{lem:local-monoid-transformation-composition-of-primes}
	If $\Sigma$ and $\Gamma$ are polynomial
	orbit-finite sets and
	$f : \Sigma^* \to \Gamma^*$ is
	a local semigroup transduction, then $f$
	can be constructed as a composition of 
	primes. 	
\end{lemma}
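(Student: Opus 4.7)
The plan is to reduce the problem to the factorization forest machinery from Section~\ref{sec:factorisation-forest-theorem}. First, compose with $h^*$, which is a prime by Example~\ref{ex:ll-homomorphism}, to convert the input into a sequence over a polynomial orbit-finite representation $r : \Sigma' \to S + \bot$ of $S$ (Lemma~\ref{lem:pof-represenation}). Then apply Lemma~\ref{lem:smooth-split-primes} to produce, as a composition of primes, a smooth split of bounded height $H$ annotated with representations of split values. The remaining task is, for each input position $i$, to compute a representation of $\lambda(s_1 s_2 \cdots s_i) \in \Gamma$; a final homomorphism then unpacks this representation into $\Gamma$.

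The computation proceeds by induction on the levels of the split, from the root downward. At each level $\ell$, we aim to equip every position with a representation of the ``running left context'' $L_\ell$, meaning the product of all level-$\ell$ split values strictly to the left of that position's level-$\ell$ ancestor, combined with the higher-level contexts already produced. Having $L_{\ell+1}$ everywhere, we update it to $L_\ell$ by computing running prefix products within each smooth sibling subsequence at level $\ell$. Computing these prefix products inside a smooth sibling subsequence is done by a local variant of the argument of Lemma~\ref{lem:smooth-product}: select an equivariant ``good'' idempotent $e$ (Claim~\ref{clam:good-idempotent}) in the common $\mathcal{J}$-class of the subsequence, factor each entry through $e$ as in item~\ref{it:xy-values}, and then use name abstraction together with a classical Mealy machine, as in item~\ref{it:g-values}, to accumulate running products over the atomless $[\atoms]$-quotient.

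The \emph{locality} assumption enters precisely when we propagate the running context $L_\ell$ from the level-$\ell$ ancestor of a position down across the subtree to that position itself. In general $L_\ell$ has unbounded support, so direct single-use propagation is impossible. However, using the factorization of sibling subsequence entries through the good idempotent $e$, the final argument of $\lambda$ at each leaf can be rewritten as $L_\ell \cdot e \cdot b$ where $b$ is a prefix of $e$. By the locality equation of Definition~\ref{def:local-monoid-transduction}, $\lambda(L_\ell \cdot e \cdot b) = \lambda(\pi(L_\ell) \cdot e \cdot b)$ for every $\supp(e)$-permutation $\pi$, so only an $\supp(e)$-supported normal form of $L_\ell$ needs to be propagated down the subtree. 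This normal form has bounded support contained in $\supp(e)$, which already appears everywhere inside the subtree, so Lemma~\ref{lem:multiple-use-propagation} applies and the propagation can be performed by a composition of primes.

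The main obstacle is the bookkeeping required to make the above normal-form projection equivariant and to keep its target a polynomial orbit-finite alphabet of bounded dimension as we descend through the $H$ levels. Concretely, one must uniformly choose, for each node of the split, representatives of the relevant idempotent $e$ and of the $\supp(e)$-projection of $L_\ell$, justify the existence of such equivariant choices via Lemma~\ref{lem:straight-uniformisation}, and then assemble the layerwise transformations into a single composition of primes. Once this is carried out at every level, a final homomorphism applying $\lambda$ to the accumulated argument yields the required output word.
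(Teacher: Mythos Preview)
Your outline follows the paper's strategy closely: reduce to a representation of $S$, build a smooth split of bounded height via Lemma~\ref{lem:smooth-split-primes}, then descend level by level keeping at each node only a $\supp(e)$-orbit of the accumulated left context, using locality of $\lambda$ to justify that this orbit determines the output. The paper packages the same idea into the \emph{core ancestor sequence} (Definition~\ref{def:core-ancestor-sequence}), whose well-definedness is Lemma~\ref{lem:core-defines} and whose computability by primes is Lemma~\ref{lem:core-construct-primes}.

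There is, however, one genuine missing ingredient. You assert that at each level the argument of $\lambda$ factors as $L_\ell \cdot e \cdot b$ with $b$ a prefix of $e$, so that the locality equation applies. Inside a single smooth sibling subsequence this is fine, but the recursion crosses levels: when you pass from the eldest sibling $j$ at level~$\ell$ to its left ancestor $p$ at a strictly higher level, the ``tail'' $z' = x_j g_i y_i \cdot z$ must satisfy that $v_p \cdot z'$ is infix-equivalent to $v_p$ for the locality argument at the higher level to go through. In an arbitrary smooth split this can fail, because nothing forces $v_p \cdot v_j$ to stay in the $\mathcal{J}$-class of $v_p$. The paper handles this by first converting the smooth split into a \emph{monotone} smooth split (Definition~\ref{def:monotone-split} and Lemma~\ref{lem:monotone-smooth}); monotonicity is exactly the hypothesis that $v_p \cdot v_j$ is infix-equivalent to $v_p$ whenever $p$ is the left ancestor of $j$, and it is used explicitly in the proof of Lemma~\ref{lem:core-defines}. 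Without this preprocessing step your recursive projection is not justified, so you should insert the monotonization between the split construction and the level-by-level descent.
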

Take any local semigroup transduction $f : \Sigma^* \to \Gamma^*$ 
given by $(S, h, \lambda)$. All semigroup transductions are equivariant, 
so by Lemma~\ref{lem:primes-elim-support}, it suffices to show 
that $f$ can be constructed as a composition of finitely supported primes.
Moreover, we can assume that 
$h : \Sigma \eqto S$ is a polynomial orbit-finite representation\footnote{
    There is a slight type mismatch here: a polynomial orbit-finite
    representations is a partial function $\Sigma \to S + \bot$,
    and $h$ is a total function $h : \Sigma \to S$. We can deal with this 
    type mismatch by defining a semigroup $S' = S + \bot$,
    where $\bot$ is an all-absorbing error element, i.e.
    $x \cdot \bot = \bot = \bot \cdot x$. Alternatively, 
    we can assume that input does not contain any letters 
    for which $h$ is undefined. 
} of $S$. This is because if 
the original function $h : \Sigma \to S$ is not
a polynomial orbit-finite representation, we can 
pick a polynomial orbit-finite representation 
$h' : \Sigma' \eqto S$ and start the construction 
by applying to every letter the uniformization (Lemma~\ref{lem:straight-uniformisation})
of the following relation (using a homomorphism prime function):
\[
\begin{tabular}{ccc}
    $x \, R \, y$ & $ \Leftrightarrow  $ & $h(x) = h'(y)$
\end{tabular}
\]
This leaves us with implementing the transduction $(S, h', \lambda)$,
which satisfies the condition that $h'$ is a polynomial orbit-finite representation.
From now on, we assume that $h$ is a polynomial orbit-finite 
representation of $S$. Thanks to this assumption, we can use Lemma~\ref{lem:smooth-split-primes}
to construct a smooth split over the input sequence. In the remainder of this 
section, we show how to transform the split into the output of the
local semigroup transduction.\\

Before describing the construction in details, let us briefly present its plan.
First, we introduce the notion of \emph{split monotonicity} (Definition~\ref{def:monotone-split}).
Then, in Lemma~\ref{lem:monotone-smooth}, we show how to use compositions 
of primes to transform smooth splits into monotone smooth splits. 
This allows us to work only with monotone splits. 
Next, after describing some failed attempts, we introduce the central notion 
of the construction: \emph{core ancestor sequence} for a position $i$ of a split (Definition~\ref{def:core-ancestor-sequence}), 
and we prove two lemmas about it.
Lemma \ref{lem:core-defines} shows that, for a monotone split, the core ancestor sequence of a position $i$ 
uniquely defines the $i$-th letter of the output. Lemma~\ref{lem:core-construct-primes} shows how to use compositions of primes
to construct the core ancestor sequence for each position of a split.
Finally, we complete the construction by combining the two lemmas.\\

We start the detailed description of the construction, by defining some additional structure on a split:
\begin{definition}
	For every position $i$ of a split, we define its \emph{left ancestor} to be the rightmost  
    $j$ to the left of $i$, that is higher or equal than $i$. Note that every position 
    has at most one left ancestor. The \emph{ancestor sequence} of $i$ is the smallest subsequence of the split positions, 
    that contains $i$ and is closed under ancestors -- i.e. the sequence that contains
    $i$, $i$'s ancestor, the ancestor of $i$'s ancestor, and so on \ldots
\end{definition}
\begin{example}
\label{ex:split-left-ancestors}
Let us consider the semigroup $S =  {{\atoms}\choose{\leq 3}} + \bot$ with the following operation:
\[ a \cdot b = \begin{cases}
	a \cup b & \textrm{if } a \neq \bot,\ b \neq \bot, \textrm{ and } |a \cup b| \leq 3\\
	\bot & \textrm{otherwise}
 \end{cases}
 \]
 Here is a sequence over this semigroup (black) and an example split over the sequence (grey).
 The orange arrows point to the left ancestor of every position:
\bigpicc{example-split-left-ancestors}
\end{example}
Importantly, the product of the split values 
in the ancestor sequence of a position $i$ is equal to the product 
of the $i$th prefix of the input sequence. (This property follows immediately
from a definition of a split value.)
For example, consider the following split:
\bigpicc{example-split-left-ancestorial-seq}
We can calculate the product for the underlined prefix 
as the product of the highlighted ancestor sequence:
\[\{1, 2, 3\} \cdot \{1, 2, 3\} \cdot \{1, 3\} \cdot \{1, 3\} \cdot \{1\} \cdot \{1\}\]
Observe that the product of every two consecutive values in this sequence 
is $\mathcal{J}$-equivalent to the element on the left (i.e. $a_i a_{i+1}$ is $\mathcal{J}$-equivalent 
to $a_{i}$). We say that a split is $\emph{monotone}$ if all ancestor sequences
(labelled with split values) satisfy this condition:
\begin{definition}
\label{def:monotone-split}
	Let $s_1, \ldots, s_n \in S^*$ be a sequence over a semigroup  
    equipped with a smooth split and let $v_1, \ldots, v_n \in S^*$
    be the split values of this split. We say that the split is \emph{monotone}
	if for every $j, i$ such that $j$ is $i$'s left ancestor it holds that:
	\[ v_i \cdot v_j \textrm{ is infix equivalent to } v_i \]
\end{definition}
The split form Example~\ref{ex:split-left-ancestors} is not monotone, because the
following pairs do not satisfy the monotonicity condition:
\bigpicc{example-split-not-monotone}
\noindent It is, however, possible to construct another
smooth split over this sequence that is monotone (orange arrows point to ancestors):
\bigpicc{example-split-monotone}
At the end of this section, in Lemma~\ref{lem:monotone-smooth},
we will show how to transform smooth splits of bounded height into monotone smooth splits of bounded height.
For now, let us assume that the input split is monotone.\\

In order to construct the output of the local semigroup transduction, 
it would be enough to equip every position with its ancestor sequence.
Unfortunately, this cannot be achieved with compositions of primes, 
because the ancestor sequences can have unbounded lengths.
The general idea of the proof of Lemma~\ref{lem:local-monoid-transformation-composition-of-primes}
is to compress the ancestor sequences so that they
can be handled using compositions of primes, while preserving enough 
information to determine the output of the local semigroup transduction.\\

Before we continue with the proof, we need to define a few more notions:
Remember that positions $i$ and $j$ are called \emph{siblings} if
they have equal heights and there is no higher position between them.
Being siblings in an equivalence relation, and its equivalence 
classes are called \emph{sibling subsequences}. The leftmost 
position in every sibling subsequence is called the \emph{eldest sibling}. 
The \emph{sibling prefix} of a position $i$ is the prefix of the 
sibling subsequence that ends in this position. 
Finally, we say that a position of a split is
\emph{regular} if its split value is $\mathcal{J}$-equivalent 
to some idempotent in $S$. Note that if a position
of a smooth split has at least one sibling, then 
by a reasoning similar to the one in Claim~\ref{clam:good-idempotent}, 
it has to be regular.\\

\noindent
The following lemma says how to equip all sibling subsequences 
with values $x_i$, $y_i$, and $g_i$ analogous to the ones
constructed in the proof of Lemma~\ref{lem:smooth-product}:
\begin{claim}
\label{claim:xyg-values}
   Let $\Sigma$ be a polynomial orbit-finite representation of $S$. 
   There is a composition of primes that inputs a smooth split
   and equips every regular position $i$ of the split with (representations of)
   values $g_i \in S$, $y_i \in S$, and if $i$ is an eldest sibling $x_i \in S$ 
   such that for every regular $i$ and its eldest sibling $j$:
   	\begin{enumerate}
		\item $x_{j}g_iy_i$ is equal to the product of the split values of $i$'s sibling prefix;
		\item $g_{j}$ is an idempotent, and $g_i$ is $\mathcal{H}$-equivalent to $g_{j}$;
		\item $x_{j}$ is prefix equivalent to $v_j$ (i.e. the split value of $j$) and suffix equivalent to $g_j$; 
		\item $y_{i}$ is suffix equivalent to $v_i$ and prefix equivalent to $g_i$.
	\end{enumerate}
    For example, consider the following split:
    \bigpicc{example-split-xyg}
    Highlighted is the sibling prefix of $11$. The eldest sibling of $11$ is $5$. 
    According to the lemma, this means that $x_5 \cdot g_{11} \cdot y_{11}$ is equal to the product $v_5 \cdot v_8 \cdot v_9 \cdot v_{11}$, 
    where $v_i$ denotes the split value of the position $i$. 
    Note that $v_i$ differs from $s_i$, which is the $i$-th element of the split's underlying sequence.
    Furthermore, the lemma requires that $g_5$ is idempotent and $\mathcal{H}$-equivalent to $g_{11}$.
    Also, $x_5$ is prefix equivalent to $v_5$ and suffix equivalent to $g_5$.
    Lastly, $y_{11}$ is prefix equivalent to $g_{11}$ and suffix equivalent to $v_5$.
\end{claim}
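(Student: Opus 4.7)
The proof adapts the six-step construction from the proof of Lemma~\ref{lem:smooth-product}, which handles a single smooth sequence, so that it operates on every sibling subsequence of the input split in parallel and exposes the intermediate values $x$, $y$, and the cumulative $g$ instead of only the final product. The smoothness of the split guarantees that every sibling subsequence (labelled with its split values) is itself a smooth sequence, so the steps from the proof of Lemma~\ref{lem:smooth-product} apply verbatim to each of them. Because the height of the split is bounded by a constant (Lemma~\ref{lem:smooth-split-primes}), the construction processes heights $h = 1, 2, \ldots, h_S$ in turn. At each height, a classical Mealy machine run on the sequence of heights marks the positions of height exactly $h$ and the first and last position in every sibling class at that height (the positions of height~$>h$ act as delimiters); this enables an iterated use of the subsequence combinator of Lemma~\ref{lem:subsequence-combinator} to restrict the remaining computation to a single sibling subsequence.

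Inside each sibling subsequence, Steps~1--3 of the proof of Lemma~\ref{lem:smooth-product} produce, at the eldest sibling $j$, a good idempotent $e := g_j$ (by uniformising the relation from Claim~\ref{clam:good-idempotent} via a homomorphism supported by a fixed auxiliary tuple $\bar a$); propagate $e$ to every sibling $i$ using Lemma~\ref{lem:multiple-use-propagation} (applicable because $\supp(e) \subseteq \supp(v_i) \cup \supp(\bar a)$); and decompose each $v_i = x'_i \cdot y'_i$ with $e$ as a suffix of $x'_i$ and a prefix of $y'_i$ via a homomorphism, exactly as in Step~\ref{it:xy-values}. I then define $x_j := x'_j$, $y_i := y'_i$, and $g_i$ by the recursion $g_{j_1} := e$ and $g_{j_l} := g_{j_{l-1}} \cdot (y'_{l-1} x'_l)$. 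Conditions (1)--(4) of the claim then follow by short algebraic verifications: (1) is an induction on $l$ using $e^2 = e$ together with the fact that $e$ is both a suffix of $x'_{l-1}$ and a prefix of $y'_{l-1}$; (2) combines smoothness with Lemma~\ref{lem:green-lemma-anitchains} exactly as in Claim~\ref{claim:g-h-class} to show that each $g_{j_l}$ is prefix- and suffix-equivalent to $e$ and hence lies in $H(e)$; and (3)--(4) follow by further applications of Lemma~\ref{lem:green-lemma-anitchains} to the factors of $v_i = a_i e b_i$, using that $a_i e$ and $e b_i$ are infixes of $e$ because they are $\mathcal{J}$-equivalent to $e$.

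The main obstacle is realising the recursive computation of $g_i$ by a composition of primes, because it is a prefix product in the full semigroup~$S$ and the group-prefix prime only accepts finite (atomless) groups. Condition~(2) guarantees that all $g_{j_l}$ live in the $\mathcal{H}$-class $H(e)$, which by \cite[Proposition~1.13]{pin2010mathematical} and \cite[Corollary~2.17]{ley2015logics} is a finite group, so the recursion is ultimately a group-prefix computation; moreover, by Lemma~\ref{lem:h-idemp-eq-supp}, every element of $H(e)$ has support equal to $\supp(e)$. I resolve the difficulty exactly as in Step~5 of the proof of Lemma~\ref{lem:smooth-product}: a homomorphism based on $\abstr{\suppt(e)}(\cdot)$ abstracts the atoms of $\supp(e)$ into placeholders, producing atomless images of the products $(y'_{l-1} x'_l)$; a classical group-prefix Mealy machine then accumulates the products inside the corresponding finite group; and a final homomorphism re-evaluates the placeholders at $\suppt(e)$, with correctness justified by Lemma~\ref{lem:abstr-natural}. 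The auxiliary constants in $\bar a$ are removed at the end from the support of the whole composition by Lemma~\ref{lem:primes-elim-support}.
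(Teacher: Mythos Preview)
Your approach is correct and essentially the same as the paper's: adapt the construction of Lemma~\ref{lem:smooth-product} so that Step~\ref{it:g-values} outputs the running product $\overrightarrow{g_i}$ at every position (via name abstraction, a classical Mealy machine on the resulting atomless values, and repopulation), keep the $x_i,y_i$ from Step~\ref{it:xy-values}, and then layer this per height. The paper structures the layering as an induction on the split height---treat the maximal-height positions (which form a \emph{single} sibling subsequence) with the subsequence combinator, then use the map combinator (Lemma~\ref{lem:map-combinator-primes}) to recurse on the lower-height subsplits cut out by those positions---whereas you unroll this into a loop over heights $1,\ldots,h_S$.

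One imprecision worth fixing: the subsequence combinator of Lemma~\ref{lem:subsequence-combinator} applied to the positions of a fixed height~$h$ does \emph{not} by itself ``restrict the computation to a single sibling subsequence''; it concatenates all height-$h$ positions, merging distinct sibling classes. You need the map combinator (Lemma~\ref{lem:map-combinator-primes}) on top of it, with your marked last-in-class positions serving as the block delimiters. This is exactly why the paper's top-down recursion is slightly cleaner: at the maximal height there is only one sibling class, so the subsequence combinator suffices there, and the map combinator is invoked once to split into the lower subsplits for the recursive call. Your bottom-up iteration works too, but requires the map combinator at every height.
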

\begin{proof}
	The proof goes by induction on the split's height. 
	If it is equal to $1$, then the input is a smooth product and 
    we can use a construction similar to the one in the proof 
    of Lemma~\ref{lem:smooth-product}. The difference 
    is that in Step~\ref{it:g-values},
    we use a classical (i.e. atomless) Mealy machine to equip \emph{every}
    position $i$ with the atomless value $\overrightarrow{g_i}' = g_1' \cdot \ldots \cdots g_i'$.
    Then we can use a use homomorphism to repopulate every $\overrightarrow{g_i}'$
    with atoms from $\overline{supp}(e_1)$ (available in every position), 
    obtaining $\overrightarrow g_i  = g_1 \cdot \ldots \cdot g_i$.
    We conclude by keeping the $x_i$ and $y_i$ values constructed 
    in Step~\ref{it:xy-values} and defining $g_i$ as $\overrightarrow g_i$
    (this is not going to cause a name clash, because 
    we are not going to use the original $g_i$'s any more).\\

	For the induction step, notice that the positions of the maximal 
    height form a smooth subsequence. This means we can use 
    the construction from the induction base (combined with the subsequence 
    combinator from Lemma~\ref{lem:subsequence-combinator}), 
    to compute the values $x$, $y$ and $g$ in the positions of the maximal height. 
    Then, we observe that the positions of the maximal height divide 
    the input split into smooth splits of lower heights. For example:
	\bigpicc{example-split-splits}
    Thanks to this observation, we can finish the construction by combining the induction 
    assumption with the map combinator (Lemma~\ref{lem:map-combinator-primes}).
\end{proof}
Now, let us use the values $x_i$, $g_i$, $y_i$ to define the \emph{compact ancestor sequence} for every position of the split:
\begin{definition}
\label{def:compact-ancestor-sequence}
    For every position $i$ of a monotone smooth split,
    we define its \emph{compact ancestor sequence} ($\textrm{cas}(i)$) in the following way:
    \begin{enumerate}
        \item If $i$ a regular position, then $\textrm{cas}(i)$ is defined as the following tuple:
              \[ \textrm{cas}(i) =  (\textrm{cas}(p), x_j, g_i, y_i)\comma\]
              where $j$ is $i$'s eldest sibling, and $p$ is $j$'s left ancestor.
              Note that $j$ always exists, but it might happen that $i = j$. 
              It is, however, possible that $j$ might not have a left ancestor -- 
              in this case, we simply omit the $\textrm{cas}(p)$ part. 
        \item If $i$ is not regular, then $\textrm{cas}(i)$ is defined as:
              \[ \textrm{cas}(i) =  (\textrm{cas}(p), v_i)\comma\]
              where $p$ is the left ancestor of $i$. 
              Again, if $i$ does not have a left ancestor, we omit
              the $\textrm{cas}(p)$ part.
    \end{enumerate}  
\end{definition}

Notice that by Claim~\ref{claim:xyg-values} the product 
of the compact ancestor sequence is equal to the product of the
split values in the full ancestor sequence.
It follows that the product of the compact ancestor sequence of $i$
is equal to the product of the $i$th prefix of the input sequence
(i.e. $s_1\cdot \ldots \cdot s_i$). Importantly, the length 
of the compact ancestor sequence is bounded. 
\begin{claim}
\label{claim:cas-bounded-lenght}
    The length of every compact ancestor sequence 
    is bounded by $3h$, where $h$ is the height of the split. 
\end{claim}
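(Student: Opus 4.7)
The plan is to show that the recursive definition of $\textrm{cas}(i)$ has depth at most $h$ (the height of the split), and since each recursive step contributes at most three new elements, the total length is bounded by $3h$. Concretely, I would prove by induction on the height of $i$ that $|\textrm{cas}(i)| \leq 3 \cdot \textrm{height}(i)$, and then use $\textrm{height}(i) \leq h$.

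The key lemma I would establish first is the following height-strictness observation: whenever the definition of $\textrm{cas}(i)$ makes a recursive call to $\textrm{cas}(p)$, it holds that $\textrm{height}(p) > \textrm{height}(i)$. There are two cases to verify. In the non-regular case, $p$ is the left ancestor of $i$, so by definition $\textrm{height}(p) \geq \textrm{height}(i)$; and if equality held, then since $p$ is the rightmost position left of $i$ with height $\geq \textrm{height}(i)$, there is no strictly higher position between $p$ and $i$, which means $p$ and $i$ would be siblings. But any position belonging to a sibling class of size $\geq 2$ is regular (the split values of its siblings form a smooth sequence of length at least $2$, and by the same reasoning as in Claim 14 such a sequence's product lies in a $\mathcal{J}$-class containing an idempotent). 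This contradicts non-regularity of $i$, so $\textrm{height}(p) > \textrm{height}(i)$. In the regular case, $p$ is the left ancestor of the eldest sibling $j$ of $i$; if $\textrm{height}(p) = \textrm{height}(j)$ then $p$ would itself be a sibling of $j$ strictly to $j$'s left, contradicting that $j$ is the eldest sibling. Hence $\textrm{height}(p) > \textrm{height}(j) = \textrm{height}(i)$.

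Given this strictness, the induction is immediate. For the base case, if $i$ has no left ancestor (or, in the regular case, its eldest sibling has no left ancestor), then the recursive part is absent, and $|\textrm{cas}(i)|$ is at most $3$ (either $(x_j, g_i, y_i)$ or just $(v_i)$), which is bounded by $3 \cdot \textrm{height}(i) \geq 3$ provided $\textrm{height}(i) \geq 1$, which always holds. For the induction step, $|\textrm{cas}(i)| \leq |\textrm{cas}(p)| + 3 \leq 3(\textrm{height}(i) - 1) + 3 = 3\,\textrm{height}(i)$, by the height-strictness lemma together with the inductive hypothesis applied to $p$. Since every position satisfies $\textrm{height}(i) \leq h$, this yields $|\textrm{cas}(i)| \leq 3h$, which is exactly the claim. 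I expect no serious obstacle here; the only subtlety is the sibling-versus-left-ancestor argument above, which is essentially a combinatorial unwinding of the definitions of ``sibling'' and ``left ancestor''.
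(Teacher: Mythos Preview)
Your key lemma---that the recursion in $\textrm{cas}$ always moves to a position of strictly greater height---is exactly the content of the paper's proof, and your case analysis for it is correct and matches the paper's. The issue is in how you package the subsequent induction.

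The invariant $|\textrm{cas}(i)| \leq 3\cdot\textrm{height}(i)$ is backwards. In your inductive step you write $|\textrm{cas}(p)| \leq 3(\textrm{height}(i)-1)$, but the height-strictness lemma gives $\textrm{height}(p) > \textrm{height}(i)$, not $\textrm{height}(p) < \textrm{height}(i)$, so the inductive hypothesis yields $|\textrm{cas}(p)| \leq 3\cdot\textrm{height}(p)$, which is at least $3(\textrm{height}(i)+1)$---the wrong direction. Relatedly, your base case (``$i$ has no left ancestor'') does not correspond to $\textrm{height}(i)=1$; a height-$1$ position typically \emph{does} have a left ancestor at some larger height, so $\textrm{cas}(i)$ need not have length $\leq 3$ there.

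The fix is immediate: use the invariant $|\textrm{cas}(i)| \leq 3\,(h - \textrm{height}(i) + 1)$ and induct downward on height. The base case is $\textrm{height}(i)=h$, where your height-strictness lemma forces the $\textrm{cas}(p)$ part to be absent, and the step goes through since $\textrm{height}(p) \geq \textrm{height}(i)+1$. The paper avoids writing any explicit invariant and simply observes that strictly increasing heights in $\{1,\ldots,h\}$ bound the recursion depth by $h$.
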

\begin{proof}
    It suffices to see that in Definition~\ref{def:compact-ancestor-sequence}
    the position $p$ is always higher than~$i$. First, let us observe that, 
    by construction, the position $p$ is always either higher than $i$,
    or a proper sibling of $i$ (i.e. $i \neq p$). 
    This leaves us with showing that $p$ is never a sibling of $i$:
    If $i$ is regular, then we know that $p$ cannot be $i$'s sibling 
    because it is to the left if $i$'s eldest sibling $j$. 
    And if $i$ is not regular, then $p$ cannot be $i$'s sibling, 
    because the nodes that are not regular do not have 
    any siblings other than themselves 
    (see the proof of Claim~\ref{clam:good-idempotent}).
\end{proof}

Unfortunately, the compact ancestor sequence still contains too much information 
to be computed by compositions of primes --  if 
there were a composition of primes that equips every position 
$i$ with its compact ancestor sequence, then there would 
also be a composition of primes that computes the semigroup product 
of every prefix (it suffices to use a homomorphism to compute 
the product of each $\textrm{cas}(i)$). 
This is a contradiction, because by Example~\ref{ex:monoid-prefix-impossible},
we know that compositions of primes are not capable of computing products of prefixes.\\ 

In order to make the compact ancestor subsequences manageable for the 
composition of primes, we need to forget some of their atoms.
This is expressed in the language of $\alpha$-orbits. Remember that, if $\alpha$ is a finite subset of atoms, 
then the $\alpha$-orbit of an element $x$ is defined as:
\[ \orb_\alpha(x) = \{ \pi(x) \ |\ \textrm{where $\pi$ is an $\alpha$-permutation } \} \]
One can look at $\orb_\alpha(x)$ as an operation, that it forgets all the atoms from $x$
that do not belong to $\alpha$. Note that $\orb_\alpha(x)$ is usually not polynomial
orbit-finite. However, as the following lemma shows, we can use polynomial 
orbit-finite sets to represent orbits of polynomial orbit-finite sets.
\begin{lemma}
\label{lem:a-orbits-pof}
    For every $k \in \nat$ and every polynomial orbit-finite $\Sigma$, 
    there is a polynomial orbit-finite set $\orb_k(\Sigma)$ and a function:
    \[ \overline{\textrm{orb}} :  \atoms^{\leq k} \times \Sigma \to \orb_k (\Sigma)\comma\]
    such that:
    \begin{enumerate}
        \item if $\poforb(\bar a, x) = \poforb(\bar a, y)$, then $x$ and $y$ belong to the same $\bar a$-orbit; and
        \item $\supp\left(\poforb(\bar a, x)\right) \subseteq \supp(\bar a)$.
    \end{enumerate}
    The intuition behind this lemma is that $\poforb(\bar {\alpha}, x)$ is a representation 
    of $\orb_\alpha(x)$, where $\bar{\alpha}$ is any of the tuples that contains all atoms from $\alpha$. 
\end{lemma}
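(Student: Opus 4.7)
The plan is to reduce to the base case $\Sigma = \atoms^m$ and represent each $\bar a$-orbit by an atomless ``equality type''. By Lemma~\ref{lemma:pof-normal-form-su}, every polynomial orbit-finite $\Sigma$ is equivariantly isomorphic to a sum $\atoms^{m_1} + \ldots + \atoms^{m_n}$, and since $\bar a$-orbits of a coproduct are computed summand-wise, it suffices to handle $\Sigma = \atoms^m$ and then set $\orb_k(\Sigma) := \orb_k(\atoms^{m_1}) + \cdots + \orb_k(\atoms^{m_n})$ with $\poforb$ defined by cases.

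For $\Sigma = \atoms^m$, the key observation is that the $\bar a$-orbit of a tuple $x = (x_1, \ldots, x_m)$ is entirely determined by a finite piece of atomless data, namely: for each coordinate $i$, whether $x_i$ equals some $\bar a_j$ (and which index $j$), together with the equality pattern among the remaining fresh coordinates. I would formalise this as the finite set $P_{k,m}$ of functions $t : \{1, \ldots, m\} \to \{1, \ldots, k\} + \{f_1, \ldots, f_m\}$, where the fresh labels $f_r$ are introduced in left-to-right order (the first fresh coordinate gets $f_1$, the next one belonging to a new equivalence class gets $f_2$, and so on). Then set $\orb_k(\atoms^m) := P_{k,m}$, which is finite and hence polynomial orbit-finite (by Example~\ref{ex:finite-pof}) and atomless. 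Define $\poforb(\bar a, x)$ by scanning $x$ left-to-right: at coordinate $i$, set $t(i)$ to the smallest $j$ with $x_i = \bar a_j$ when such $j$ exists, and otherwise to the fresh label $f_r$ matching the equivalence class of $x_i$ among previously seen fresh atoms (introducing a new $f_r$ if $x_i$ has not been seen before).

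Condition (2) is then automatic: since $\orb_k(\Sigma)$ is atomless, $\supp(\poforb(\bar a, x)) = \emptyset \subseteq \supp(\bar a)$. For condition (1), suppose $\poforb(\bar a, x) = \poforb(\bar a, y) = t$; writing $F_x$ and $F_y$ for the fresh atoms occurring in $x$ and $y$, the type $t$ canonically induces a bijection $\sigma : F_x \to F_y$, and by construction $F_x, F_y \subseteq \atoms \setminus \supp(\bar a)$ with $|F_x| = |F_y|$. I would extend $\sigma$ to an $\bar a$-permutation $\pi$ of $\atoms$ and then verify coordinate-by-coordinate that $\pi(x) = y$: coordinates with $t(i) \in \{1,\ldots,k\}$ are fixed since $\pi$ fixes $\bar a$, and coordinates with $t(i) = f_r$ are handled by $\sigma$. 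The only mild technical point, and the expected main obstacle, is producing the extension of $\sigma$ to a permutation when $F_x$ and $F_y$ overlap; this is a routine exercise in extending a finite partial bijection of the countably infinite set $\atoms \setminus \supp(\bar a)$ to a full bijection, so no deep difficulty is anticipated.
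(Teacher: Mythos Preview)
Your argument is correct. The reduction to $\atoms^m$ via Lemma~\ref{lemma:pof-normal-form-su} is sound (the equivariant isomorphism preserves $\bar a$-orbits), the equality-type encoding determines the $\bar a$-orbit as you claim, and extending the finite partial bijection $\sigma$ to a $\supp(\bar a)$-permutation is indeed routine.

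However, your route is genuinely different from the paper's. The paper does not pass through the normal form at all; instead it reuses the name-abstraction machinery of Section~\ref{subsec:eliminating-redundant-atoms}, setting $\orb_k(\Sigma) := [\atoms^{\leq \dim \Sigma}]\Sigma$ and $\poforb(\bar a, x) := \abstr{\suppt(x) - \bar a}(x)$, i.e.\ it abstracts away precisely the atoms of $x$ that lie outside $\bar a$ and keeps the atoms of $\bar a$ as genuine atoms. Condition~(2) then comes from Lemma~\ref{lem:abstr-removes-atom}, and condition~(1) from the concretion operator $@$. By contrast, you go further and also replace the atoms from $\bar a$ by their \emph{positions} in the tuple $\bar a$, landing in a \emph{finite atomless} set; this makes condition~(2) vacuous and gives a self-contained combinatorial argument that avoids the $[\atoms]$ apparatus entirely. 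The price is the explicit normalisation step and the hand-built bijection extension, whereas the paper's version works uniformly in $\Sigma$ and rides on lemmas already established. Either approach serves the downstream use (only the two stated conditions and equivariance are needed), so your more elementary proof is a perfectly good substitute.
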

\begin{proof}
    We start by defining $\orb_k(\Sigma)$:
    \[ \orb_k(\Sigma) = [\atoms^{\leq k}]\Sigma = \Sigma + [\atoms]\Sigma + [\atoms^2]\Sigma + \ldots + [\atoms^k]\Sigma\]
    Now, in order to compute $\poforb(\bar a, x)$, we abstract away all the atoms that do not belong to $\bar a$:
    \[ \poforb(\bar a, x) = \abstr{\suppt(x) - \bar a}(x)\tdot\]
    (Note that $\suppt(x)$ is the support-extracting function from Claim~\ref{claim:supp-tuple}).\\

   Now let us show that this $\poforb$ satisfies the two required properties. The second one 
   (i.e. $\supp(\poforb(\bar a, x)) \subseteq \supp(\bar a)$) follows immediately from 
   Lemma~\ref{lem:abstr-removes-atom}. This leaves us with showing the first property:
   We take some $\bar a$, $x$, $y$, such that $\poforb(\bar a, x) = \poforb(\bar a, y)$
   and show that $x = \pi(y)$, for some $\bar a$-permutation $\pi$. Since
   $\poforb(\bar a, x) = \poforb(\bar a, y)$, we know that:
   \[ |\suppt(x) - \bar a| = |\suppt(y) - \bar a| \]
   Clearly, neither $(\suppt(x) - \bar a)$ nor $(\suppt(y) - \bar a)$ 
   contains repeating atoms or atoms from $\bar a$.
   This means that there is an $\bar a$-permutation $\pi$,
   that transforms $\suppt(y) - \bar a$ into $\suppt(x - \bar a)$. Let us show that this $\pi$ transforms $y$ into $x$:
   \[ \pi(y) = \pi (\poforb(a, y)@(\suppt(y) - \bar a)) = (\pi (\poforb(a, y))@(\pi(\suppt(y) - \bar a))) = \]
   \[ = (\poforb(\bar a, y))@(\suppt(x) - \bar a) = (\poforb(\bar a, x))@(\suppt(x) - \bar a) = x\] 
\end{proof}

\noindent
Let us now use $\poforb$ to define the \emph{core ancestor sequence}, which is a more compressed
version of the \emph{compact ancestor sequence}:
\begin{definition}
    \label{def:core-ancestor-sequence}
        For every position $i$, we define \emph{core ancestor sequence} ($\core(i)$) in the following way:
        \begin{enumerate}
            \item If $i$ a regular position, then $\textrm{cas}(i)$ is defined as the following tuple:
                  \[ \core(i) =  (\underbrace{\poforb\left(\suppt(g_j),\ \left(\textrm{core}(p),\ x_j\right)\right)}_
                  {
                    \substack{\textrm{A polynomial orbit-finite representation }\\
                              \textrm{of the $\supp(g_j)$-orbit of the pair $ (\core(p), x_j)$}
                             }
                  },\ g_i,\ y_i) \] 
                  where $j$ is $i$'s eldest sibling, and $p$ is $j$'s left ancestor.
                  Similarly as in Definition~\ref{def:compact-ancestor-sequence}, let us note that 
                  $j$ always exists, but it might happen that $i = j$. 
                  It is, however, possible that $j$ might not have a left ancestor -- 
                  in this case, we simply skip the $\textrm{core}(p)$ part.
            \item If $i$ is not regular, then:
                  \[ \core(i) =  (\core(p), v_i)\]
                  where $p$ is its left ancestor of $i$. 
                  Again, if $i$ does not have a left ancestor, we skip
                  the $\core(p)$ part.
        \end{enumerate}
        
\end{definition}
The intuition behind this definition is that the more distant 
an ancestor is, the more of its atoms are forgotten in $\core(i)$. 
Since $\core(i)$ does not keep all relevant atoms, we can 
no longer use it to reconstruct the product of the $i$th prefix. 
However, if the output function $\lambda : S \to \Gamma$ satisfies the locality equation,
we can use $\core(i)$ to reconstruct the $\lambda$-value of the $i$th prefix, 
i.e. $\lambda(s_1, \ldots, s_i)$:
\begin{lemma}
\label{lem:core-defines}
    For every two sequences $s_1, \ldots, s_n$, and $s_1', \ldots, s_{n'}'$ 
    over an orbit-finite semigroup $S$ that are equipped with \emph{monotone}
    smooth splits, for every $i$ and $i'$ that are positions in $s$ and $s'$, 
    and for every \emph{local} $\lambda : S \to \Gamma$, it holds that:
    \[ 
        \begin{tabular}{ccc}
            $\core(i) = \core(i')$ & $\Rightarrow$ & $\lambda(s_1 \cdot \ldots \cdot  s_i) = \lambda(s_1' \cdot \ldots \cdot s_{i'}')$
        \end{tabular}
   \]
\end{lemma}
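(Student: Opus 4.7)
The plan is to argue by induction on the length of the core ancestor sequence, which is finite by Claim~\ref{claim:cas-bounded-lenght}. The key observation that makes the locality equation applicable is a specific decomposition of $P_i$ available in the regular case. For a regular position $i$ with eldest sibling $j$ and $j$'s left ancestor $p$, I would write
\[ P_i \;=\; (P_p \cdot x_j \cdot g_i) \cdot g_j \cdot y_i, \]
which is valid because $g_j$ is the identity of the group $H(g_j)$ (as noted in the proof of Lemma~\ref{lem:h-idemp-eq-supp}), so $g_i \cdot g_j = g_i$. This decomposition has the shape $a \cdot e \cdot b$ required by the locality equation: $e = g_j$ is idempotent, and $b = y_i$ is a prefix of $g_j$ because $y_i$ is prefix-equivalent to $g_i$ by Claim~\ref{claim:xyg-values}, combined with $g_i$ being $\mathcal{H}$-equivalent (hence prefix-equivalent) to $g_j$.

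In the regular case of the inductive step, the equality $\core(i) = \core(i')$ gives $g_i = g_{i'}$, $y_i = y_{i'}$, and (by Lemma~\ref{lem:a-orbits-pof}) a $\supp(g_j)$-permutation $\pi$ with $\pi(\core(p), x_j) = (\core(p'), x_{j'})$. A single application of the locality equation replaces $P_p \cdot x_j \cdot g_i$ by $\pi(P_p \cdot x_j \cdot g_i) = \pi(P_p) \cdot x_{j'} \cdot g_i$ inside $\lambda$; invoking the induction hypothesis on the permuted input $\pi(s_1), \ldots, \pi(s_p)$, whose core at position $p$ equals $\pi(\core(p)) = \core(p')$, then reduces $\lambda(\pi(P_p))$ to $\lambda(P_{p'})$ and closes the step.

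The main obstacle will be the non-regular case $\core(i) = (\core(p), v_i)$, in which $P_i = P_p \cdot v_i$: the bare induction hypothesis only supplies $\lambda(P_p) = \lambda(P_{p'})$, and one cannot append $v_i$ on the right of $\lambda$ while preserving equality. To get past this, I would strengthen the induction hypothesis to track an equivalence of $P_i$ and $P_{i'}$ that survives right-multiplication by factors arising further down the split -- essentially proving not just $\lambda(P_i) = \lambda(P_{i'})$ but the compatibility of this equivalence with the idempotent context furnished by the regular ancestor above $i$. Monotonicity of the split is crucial here: it guarantees that the $\mathcal{J}$-class along the ancestor chain does not drop, so that whatever idempotent context witnesses the equivalence at a regular ancestor of $i$ continues to absorb the appended $v_i$, allowing locality to be re-applied one level up.
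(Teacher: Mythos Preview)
Your overall shape is right --- induction along the core ancestor chain, one application of locality at each regular level, and a strengthened hypothesis to survive appending factors --- and this matches the paper's proof. However, your write-up of the regular case contains the very gap you only flag for the non-regular case.

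After applying locality with $e=g_j$ you obtain
\[
\lambda(P_i)\;=\;\lambda\bigl(\pi(P_p)\cdot x_{j'}\cdot g_i\cdot y_i\bigr),
\]
and you then say ``invoking the induction hypothesis \ldots\ reduces $\lambda(\pi(P_p))$ to $\lambda(P_{p'})$ and closes the step.'' It does not: the unstrengthened hypothesis gives only $\lambda(\pi(P_p))=\lambda(P_{p'})$, not $\lambda(\pi(P_p)\cdot z)=\lambda(P_{p'}\cdot z)$ for the trailing factor $z=x_{j'}g_iy_i$. So the strengthening is needed already in the regular case, not only the non-regular one, and the induction must be set up with it from the start.

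The paper makes the strengthening concrete: prove, for every $z\in S^1$ satisfying $v_i z\infequiv v_i$ and $v'_{i'}z\infequiv v'_{i'}$, that $\core(i)=\core(i')$ implies $\lambda(P_i\cdot z)=\lambda(P_{i'}\cdot z)$. The original statement is the case $z=1$. In the regular step one then applies locality exactly as you describe and passes to the induction hypothesis at $p$ with the new tail $z':=x_j g_i y_i z$; monotonicity of the split is used precisely to check that $v_p\cdot z'\infequiv v_p$ (via $v_p\cdot x_j$ being an infix of $v_p$, since $x_j$ is prefix-equivalent to $v_j$ and $v_pv_j\infequiv v_p$), so your intuition about where monotonicity enters is correct but needs to be pinned to this verification. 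The non-regular case is then immediate from the strengthened hypothesis with $z':=v_i z$.
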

\begin{proof}
    The proof goes by induction on the position $i$, but it requires a slight strengthening of the induction hypothesis:
    \begin{claim}
        Let $S$, $i$, $j$, $\lambda$, $s_1, \ldots, s_n$ and $s'_{1}, \ldots, s'_{n'}$
        be as in Lemma~\ref{lem:core-defines} and 
        let $z \in S$ be a value that is monotone with both $v_i$ and $v'_{i'}$, i.e.:
        \[ \begin{tabular}{ccc}
            $v_i z$ is infix equivalent to $v_i$ & and &
            $v'_{j'}  z$ is infix equivalent to $v'_{i'}$.
        \end{tabular} \]
        It follows that:
        \[ 
            \begin{tabular}{ccc}
                $\core(i) = \core(i')$ & $\Rightarrow$ & $\lambda(s_1 \cdot \ldots \cdot  s_i \cdot z) = \lambda(s_1' \cdot \ldots \cdot s_{i'}' \cdot z)$
            \end{tabular}
       \]
    \end{claim}
    \noindent
    It is easy to see that the lemma follows from the claim -- it suffices to take $S = S^1$ (from 
    Claim~\ref{claim:semigroup-to-monoid}) and $z = 1$.
    Let us now proceed with the proof of the claim. First, let us consider the case where 
    $i$ is regular. Since $\core(i) = \core(i')$ it follows that $i'$ is regular as well, and that:
    \[  (\poforb \left( \suppt(g_j), (\textrm{core}(p), x_j)\right), g_i, y_i) = (\poforb \left( \suppt(g'_{j'}), (\textrm{core}(p'), x'_{j'})\right), g'_{i'}, y'_{i'}) \]
    This means that $g_i = g'_{i'}$,  $y_i = y'_{i'}$ and
    \[\poforb \left( \suppt(g_j), (\textrm{core}(p), x_j)\right) = \poforb \left( \suppt(g'_{j'}), (\textrm{core}(p), x'_{j'})\right)\]
              By Claim~\ref{claim:xyg-values}, we know that $g'_{j'}$ is $\mathcal{H}$-equivalent to $g'_{i'}$, and that 
              $g_j$ is $\mathcal{H}$-equivalent to $g_i$. Since $g_i = g'_{i'}$, it follows that 
              $g_j$ is $\mathcal{H}$-equivalent to $g'_{j'}$. Moreover, by Claim~\ref{claim:xyg-values} 
              we know that both $g_j$ and $g'_{j'}$ are idempotent. As explained in the proof of Lemma~\ref{lem:h-idemp-eq-supp}, 
              each $\mathcal{H}$-class 
              contains at most one idempotent, which means that $g_j = g'_{j'}$.
              To simplify the notation, let us use $e$ to represent both of those values. We know that:
              \[\poforb \left( \suppt(e), (\textrm{core}(p), x_j)\right) = \poforb \left( \suppt(e), (\textrm{core}(p'), x'_{j'})\right)\]
              Thanks to Lemma~\ref{lem:a-orbits-pof}, it follows that there is a $\supp(e)$-permutation $\pi$ such that $\pi(\core(p')) = \core(p)$ and $\pi(x'_{j'}) = x_{j}$.\\

              \noindent
              We are now ready to prove that $\lambda(s_1 \cdot \ldots \cdot s_n \cdot z) = \lambda(s'_1 \cdot \ldots \cdot s'_{n'} \cdot z)$. 
              First, notice that  $s_1 \cdot \ldots \cdot s_n  = s_1 \cdot \ldots \cdot s_p \cdot x_j \cdot g_i \cdot y_i$
              (and similarly for $s'$). This means that we can proof the claim by showing that:
              \[ \lambda(\overrightarrow{s_p} \cdot x_j \cdot g_i \cdot y_i \cdot z) = \lambda(\overrightarrow{s'_{p'}} \cdot {x'}_{j'} \cdot g'_{i'} \cdot y'_{i'} \cdot z)\comma \]
              where $\overrightarrow{s_p} = s_1 \ldots s_p$ (and similarly for $s_{p'}$). We show this in two steps:
              \[ \lambda(\overrightarrow{s'_{p'}} \cdot {x'}_{j'} \cdot g'_{i'} \cdot y'_{i'} \cdot z) \stackrel{\textrm{(1)}}{=} 
                 \lambda(\pi(\overrightarrow{s'_{p'}}) \cdot \pi({x'}_{j'}) \cdot g'_{i'} \cdot y'_{i'} \cdot z) \stackrel{\textrm{(2)}}{=} 
                 \lambda(\overrightarrow{s_{p}} \cdot x_j \cdot g_{i} \cdot y_{i} \cdot z) \] 
              Let us first use the locality of $\lambda$ to show that equation (1) holds. We know that 
              $g'_{i'}$ is $\mathcal{H}$-equivalent to $e$, which means that $g'_{i'}= e\cdot g '_{i'}$. 
              This means that in order to show (1), it suffices to show that:
              \[  \lambda(\overrightarrow{s'_{p'}} \cdot {x'}_{j'} \cdot e \cdot g'_{i'} \cdot y'_{i'} \cdot z) = 
                  \lambda(\pi(\overrightarrow{s'_{p'}}) \cdot \pi({x'}_{j'}) \cdot e \cdot g'_{i'} \cdot y'_{i'} \cdot z) \]
              This is an instance of the locality equation (Definition~\ref{def:local-monoid-transduction}),
              so since $\pi$ was chosen to be a $\supp(e)$-permutation, this leaves us with showing that
              $g'_{i'} \cdot y'_{i'} \cdot z$ is an infix of $e$. This follows from
              Lemma~\ref{lem:green-lemma-anitchains} and the following facts:
              \begin{enumerate}
                \item $v'_{i'} \cdot z$ is an infix of $v'_{i'}$ (by assumption);
                \item $y'_{i'}$ is suffix equivalent to $v'_{i'}$ (by Claim~\ref{claim:xyg-values});
                \item $y'_{i'}$ is prefix equivalent to $e$ (by Claim~\ref{claim:xyg-values}, because $e = g'_{j'}$); 
                \item $g'_{i'}$ is $\mathcal{H}$-equivalent to $e$ (by Claim~\ref{claim:xyg-values}).
              \end{enumerate}
              This leaves us with showing (2): Since $\pi(x'_{j'}) = x_j$, $g'_{i'} = g_i = e$,
              and $y'_{i'} = y_i$, it suffices to proof the following:
               \[ \lambda(\pi(\overrightarrow{s'_{p'}}) \cdot x_j \cdot e \cdot y_{i} \cdot z) = 
                  \lambda(\overrightarrow{s_{p}} \cdot x_j \cdot e \cdot y_{i} \cdot z) \]
              Since $\core(i) = core(i')$ we know that
              either both $j$ and $j'$ have a left ancestor, or neither of them has one.
              If they do not have left ancestors, then the equality is trivially true 
              (as the factors $\pi(\overrightarrow{s'_{p'}})$ and $\overrightarrow{s_{p}}$ are omitted). 
              When $p$ and $p'$ exist, we prove the equality by applying the induction assumption to:
              \[
                \begin{tabular}{ccc}
                 $s = s_1, \ldots, s_p$, &
                 $s'' = \pi(s'_1), \ldots, \pi(s'_{p'})$, and
                 &$z' = x_j \cdot g_{i} \cdot y_{i} \cdot z$.
                \end{tabular}
              \]
              This leaves us with showing that:
              \begin{enumerate}
                \item $\core(p'') = \core(p)$, where $p''$ is a position in $s'' = \pi(s')$ and $p$ is a position in $s$, and
                \item $z'$ is monotone with both $v_p$ and $v''_{p''} = \pi(v'_{p'})$.
              \end{enumerate}
              First, it is not hard to see that the function $\core$ is
              equivariant with respect to the input sequence,
              so since $\pi$ was chosen to make $\pi(\core(p'))$ equal to $\core(p)$, it follows that:
              \[\core(p'') = \pi(\core(p')) = \core(p)\]
              This leaves us with showing that $v_p \cdot z'$ is infix equivalent to $v_p$ (and similarly for $v''_{p''}$).
              Since $z' = x_j \cdot g_{i} \cdot y_{i} \cdot z$, this follows from the following facts:
              \begin{enumerate}
                \item $y_i \cdot z$ is a prefix of $y_i$:
                      Thanks to Lemma~\ref{lem:green-lemma-anitchains}, 
                      this can be shown by proving that $y_i \cdot z$ is an infix of $y_i$, 
                      which follows from the assumption that $v_i \cdot z$ is an infix of $v_i$, 
                      combined with Claim~\ref{claim:xyg-values}, which states that $y_i$ is suffix equivalent to $v_i$.

                \item $x_j \cdot g_i \cdot y_i$ is a prefix of $x_i$: This 
                      is because by Claim~\ref{claim:xyg-values} combined with the smoothness 
                      of the split, we know that the product $x_j \cdot g_i \cdot y_i$ is smooth.

                \item \label{it:vpxj} $v_p \cdot x_j$ is an infix of $v_p$: Thanks to the monotonicity 
                      of the split, we know that $v_p  \cdot v_j$ is an infix of $v_p$ and 
                      by Claim~\ref{claim:xyg-values} $x_j$ is prefix equivalent to $v_j$.
                
                \item $v''_{p''} \cdot x_j$ is an infix of $v''_{p''}$:
                      This is because $v''_{p''} = \pi(v'_{p'})$ and $x_j = \pi(x'_{j'})$, 
                      which means that it suffices to show that $v'_{p'} \cdot x'_{j'}$ 
                      is an infix of $v'_{p'}$, which can be shown using an analogous 
                      argument as in Item~\ref{it:vpxj}. 
              \end{enumerate}
              This finishes the proof in the case where $i$ and $i'$ are regular. 
              In the case where $i$ and $i'$ are irregular (since $\core(i) = \core(i)$ 
              we know that $i$ and $i'$ have to be either both regular or both irregular), 
              we can directly apply the induction assumption.
\end{proof}

\noindent
It follows almost immediately from Lemma~\ref{lem:core-defines} that we can 
transform core ancestor sequences into the output letter output letters:
\begin{claim}
\label{claim:core-to-out}
    There is an equivariant function that transforms $\core(i)$ into the $i$th letter of the output
    of the local semigroup transduction.
\end{claim}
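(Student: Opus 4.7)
My plan would be to prove the claim directly, by reading off the function from Lemma~\ref{lem:core-defines}. Define a partial function $\phi$ on the polynomial orbit-finite set of core-shaped values by
\[ \phi(\core(i)) := \lambda(s_1 \cdot s_2 \cdot \ldots \cdot s_i), \]
where $(s,i)$ is any pair of a sequence $s$ equipped with a monotone smooth split and a position $i$ in it. Lemma~\ref{lem:core-defines} (applied with $s = s'$) says exactly that this assignment depends only on $\core(i)$, so $\phi$ is a well-defined function on its domain of definition, which is the equivariant subset $D \subseteq \orb_k(\ldots)$ consisting of core-shaped values that actually arise from some monotone smooth split.

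The equivariance of $\phi$ on $D$ is then immediate: the constructions of $\core$, the values $x_j, g_i, y_i$ (from Claim~\ref{claim:xyg-values}), and the output function $\lambda$ are all equivariant with respect to the atom action on the input sequence. Therefore for any atom permutation $\pi$, applying $\pi$ to the sequence $s$ sends $\core_s(i)$ to $\core_{\pi(s)}(i) = \pi(\core_s(i))$ and sends $\lambda(s_1 \cdots s_i)$ to $\lambda(\pi(s_1) \cdots \pi(s_i)) = \pi(\lambda(s_1 \cdots s_i))$, so $\phi(\pi(\core_s(i))) = \pi(\phi(\core_s(i)))$.

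To turn $\phi$ into a total equivariant function on the ambient polynomial orbit-finite set of core representations, I would extend it outside of $D$ by an arbitrary equivariant default: pick any atomless $\gamma_0 \in \Gamma$ (if $\Gamma$ has atomless elements, take one; otherwise apply Lemma~\ref{lem:straight-uniformisation} to the equivariant relation $R \subseteq (\text{core reps}) \times \Gamma$ defined by $R(c, g) \Leftrightarrow (c \in D \wedge g = \phi(c)) \vee (c \notin D)$, which is total and equivariant). Since $D$ is equivariant, this extended function remains equivariant and agrees with $\phi$ on $D$, which is all that is needed for the claim.

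I do not expect any significant obstacle here; the real content is Lemma~\ref{lem:core-defines}, and this claim is the straightforward "well-definedness plus equivariance" packaging that lets the construction of the output be driven by $\core$ alone. In the last step of the overall proof of Lemma~\ref{lem:local-monoid-transformation-composition-of-primes}, one then applies this equivariant function letterwise (via a homomorphism prime) to the sequence of core ancestors produced by Lemma~\ref{lem:core-construct-primes}, yielding the desired output word.
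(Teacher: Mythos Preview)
Your proposal is correct and takes essentially the same approach as the paper, which phrases the equivariance argument as a composition of two equivariant relations (core value $\to$ all witnessing pointed splits $\to$ output letter) rather than checking it directly as you do. Your extension of $\phi$ outside $D$ is unnecessary and, as written, Lemma~\ref{lem:straight-uniformisation} would only yield a finitely supported uniformisation, not an equivariant one; but this is harmless here, since the whole section works with finitely supported primes and removes the surplus support at the end via Lemma~\ref{lem:primes-elim-support}.
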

\begin{proof}
    The existence of some function that maps $\core(i)$ into the $i$th letter of the output 
    follows immediately from Lemma~\ref{lem:core-defines}. To see that this function is equivariant, 
    we can construct it as a composition of the following two equivariant relations. 
    The first relation maps $\core(i)$ to all splits with an underlined position
    such that the $\core$ of the underlined position is equal to $\core(i)$. The second relation is a function
    that  maps a split with an underlined position, into the output letter on that position. Thanks to 
    Lemma~\ref{lem:core-defines}, we know that this composition results in a function.
\end{proof}
This leaves us with showing how to equip every $i$ with its $\core(i)$.
First of all, let us notice that thanks to a reasoning similar to the one in Claim~\ref{claim:cas-bounded-lenght}
combined with Lemma~\ref{lem:a-orbits-pof}, we know that in a split of bounded height, the values of $\core(i)$
belong to a polynomial orbit-finite set. Now let us show how to use compositions of primes to compute all $\core(i)$:
\begin{lemma}
\label{lem:core-construct-primes}
For every height $h$, there is a composition
of primes that inputs a smooth
split of height at most $h$ and equips
every position $i$ with $\core(i)$.
\end{lemma}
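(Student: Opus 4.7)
The plan is to argue by induction on $h$. In the base case $h = 1$, every position lies at height $1$, and all positions are siblings of one another, so the eldest sibling of every position is the first position $j = 1$, which has no left ancestor. Hence $\core(i)$ collapses to $(\poforb(\suppt(g_j), x_j), g_i, y_i)$ for regular $i$ (and to $(v_i)$ for the -- at most one -- irregular position). Claim~\ref{claim:xyg-values} delivers $x_j, g_i, y_i$ in every position by a composition of primes; a homomorphism at position $1$ produces $\poforb(\suppt(g_1), x_1)$, whose support is contained in $\supp(g_1)$. By Claim~\ref{claim:xyg-values} together with Lemma~\ref{lem:h-idemp-eq-supp}, every $g_i$ has support exactly $\supp(g_1)$, so the hypothesis of Lemma~\ref{lem:multiple-use-propagation} holds and we can propagate $\poforb(\suppt(g_1), x_1)$ to every position. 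A final homomorphism assembles the triple.

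For the inductive step, I would isolate the subsequence $T$ of positions at maximal height $h$; these are all siblings to each other and form a smooth sequence. Using the subsequence combinator (Lemma~\ref{lem:subsequence-combinator}) and the base-case construction applied to $T$, I equip each top-level position with its $\core$. The positions of height $< h$ are partitioned by $T$ into contiguous blocks, each of which carries a smooth sub-split of height at most $h-1$. The map combinator (Lemma~\ref{lem:map-combinator-primes}) together with the induction hypothesis computes the cores of all positions whose recursive unfolding stays strictly inside their own block. The remaining case is a position $i$ in a block whose recursion climbs through left ancestors until it reaches the preceding top-level position $p \in T$: at that point the ``internal'' recursion stops because inside the block $p$ does not exist, so the value of $\core(p)$ must be supplied externally.

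To handle this interaction, I would first use the generalised single-use propagation (Claim~\ref{claim:gen-su-prop}) to send $\core(p)$ from each top-level position rightward into every position of the following block. The support of $\core(p)$ is then absorbed into the $\poforb(\suppt(g_j), \cdot)$ that wraps it inside the block's top-of-recursion $\core(j')$, so only atoms in $\supp(g_{j'})$ actually survive; this is exactly the support condition needed for Lemma~\ref{lem:multiple-use-propagation} when we subsequently need $\core(p)$ to be available at multiple positions. I would therefore formulate a mildly strengthened inductive statement: ``given a smooth split of height $\leq h - 1$ together with, at every position, an auxiliary value $c$ that is to play the role of the core of the (absent) left ancestor of any top-of-split eldest sibling, compute $\core$.'' This strengthening is routine -- wherever the base-case/inductive construction builds $\core(j)$ for a top-of-split eldest sibling $j$, one substitutes $c$ in the slot that was previously empty -- and it plugs directly into the map combinator.

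The main obstacle I expect is precisely this bookkeeping: matching supports so that Lemma~\ref{lem:multiple-use-propagation} can be invoked at every propagation step, and verifying that the strengthened induction statement really does align with Definition~\ref{def:core-ancestor-sequence} in the ``boundary'' case where the recursion crosses a block boundary. Everything else is either an application of Claim~\ref{claim:xyg-values}, a homomorphism, or a routine use of the subsequence/map combinators together with the induction hypothesis.
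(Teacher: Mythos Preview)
Your recursive-on-$h$ plan has a real gap in how the auxiliary value $c=\core(p)$ reaches the positions that need it inside a block. You write that single-use propagation sends $\core(p)$ ``into every position of the following block'', but Claim~\ref{claim:gen-su-prop} delivers the value to exactly one receiver. Your remedy --- form $\poforb(\suppt(g_{j'}),(\core(p),x_{j'}))$ at the top-of-block eldest sibling $j'$ and then invoke Lemma~\ref{lem:multiple-use-propagation} --- is correct for distributing that \emph{wrapped} value to the other siblings of $j'$ at height $h{-}1$, but it does not feed the recursion into the leftmost sub-block (the part of the block strictly to the left of $j'$). The eldest sibling $j''$ there, at height $h{-}2$, again has $p$ as its left ancestor, so its core requires $\poforb(\suppt(g_{j''}),(\core(p),x_{j''}))$, built from an \emph{un-wrapped} copy of $\core(p)$; the wrap with $g_{j'}$ has already discarded the atoms of $\core(p)$ outside $\supp(g_{j'})$, and in general $\supp(g_{j''})\not\subseteq\supp(g_{j'})$, so the needed value cannot be recovered. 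The same obstruction repeats at every lower height, so the strengthened hypothesis ``$c$ at every position'' cannot actually be provided.

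The paper sidesteps this by processing heights top-down rather than blocks recursively. After Claim~\ref{claim:xyg-values}, it handles levels $h,h{-}1,\ldots,1$ in turn: at level $k$, every node of height $>k$ transmits its already-computed $\core$ in one round of single-use propagation, and each eldest sibling at height $k$ receives from its left ancestor --- this works because for every higher node there is at most one receiver at level $k$. The wrapped value $\poforb(\suppt(g_j),(\core(p),x_j))$ is then formed at the eldest sibling and pushed to its same-height siblings via Lemma~\ref{lem:multiple-use-propagation} using $\supp(g_i)=\supp(g_j)$. The structural difference is that a higher node transmits \emph{afresh at each level} (its $\core$ being reproducible at its own position from $\poforb(\ldots)$ together with fresh copies of $g,y$), rather than having to drop one copy into a block where it must serve several heights with incompatible $g$'s.
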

\begin{proof}
    We start the construction by applying Claim~\ref{claim:xyg-values}
    to equip the input positions with their $x$-, $y$-, and $g$-values.
    Then, we construct the $\core$-values, level by level top-down:
    First for the positions of height $h$, then $h-1$, \ldots, and all the way down to $1$. 
    Let us assume that we have already constructed all $\core$-values for
    the positions that are strictly higher than  $k$
    (for $k=h$ this is trivially true), 
    and show how to construct $\core$-values for the positions of height $k$:\\

    First, we construct the $\core$-values for all eldest siblings of height $k$.
    For this, let us notice that if $i$ is an eldest sibling, then $\core(i)$ depends only 
    on:
    \begin{enumerate}
        \item $\core(p)$, where $p$ is the left ancestor of $i$ (if it exists); and
        \item the values $x_i, g_i, y_i$ if $i$ is regular, or on $v_i$ if $i$ is not regular.
    \end{enumerate}
    The values $v_i$, and $x_i$, $g_i$, $y_i$ (if applicable) are already
    present in all $i$'s, so it suffices to equip every eldest sibling
    with the $\core$-value of its left ancestor (if it exists).
    We do this in one round of a (generalized) single-use propagation: 
    Every node higher than $k$ transmits its $\core$-value,
    and every node of height $k$ tries to receive a value. 
    This construction works, because for every position $p$ there is at most
    one $i$ of height $k$ such that $p$ is the left ancestor of $i$.
    After this propagation, we can use a homomorphism to construct
    $\core(j)$ in every eldest sibling $j$.\\

    We are left with constructing the $\core$-values in 
    the non-eldest siblings. Notice that
    if $i$ a non-eldest sibling, then $\core(i)$ depends only on:
    \[\begin{tabular}{ccc}
        $g_i$, & $y_i$, & $\poforb\left(\suppt(g_j), (\core(p), x_j)\right)$,
    \end{tabular}\]
    where $j$ is $i$'s eldest sibling, and $p$ is the left ancestor of $j$ (if it exists). 
    Since the values $x_i$, $g_i$ are already present in $i$, 
    we only need to equip every non-eldest sibling with $\poforb\left(g_j, (\core(p), x_j)\right)$. 
    We start by using a homomorphism to construct this value in
    every eldest sibling $j$ of height $k$ (this is possible because values $x_j$
    and $\core(p)$ are already present each such eldest sibling $j$).
    Then, we notice that by Lemma~\ref{lem:h-idemp-eq-supp}, Claim~\ref{claim:xyg-values}, and Lemma~\ref{lem:a-orbits-pof},
    for every $i$ whose eldest sibling is $j$, it holds that:
    \[\begin{tabular}{ccccc}
        $\supp\left(\poforb\left(g_j, (x_j, \core(p))\right)\right)$&
        $\subseteq$&
        $\supp(g_j)$&
        $=$&
        $\supp(g_i)$
    \end{tabular}\]
    This means that we can use multiple use propagation (Lemma~\ref{lem:multiple-use-propagation})
    together with the map and subsequence combinators to equip every $i$ of height $k$
    with $\poforb\left(g_j, (x_j, \core(p))\right)$. After this step we have enough information 
    in every non-eldest node $i$ of height $k$ to use a homomorphism and compute its $\core(i)$.
\end{proof}

This almost finishes the proof of  Lemma~\ref{lem:local-monoid-transformation-composition-of-primes}.
The last thing to show is how to transform smooth splits of bounded height, into 
\emph{monotone} smooth splits of bounded height:
\begin{lemma}
\label{lem:monotone-smooth}
    For every semigroup $S$ (represented by $\Sigma$), 
    and for every height $h$,
    there is a composition of primes that inputs a smooth split of height at most $h$
    and transforms it into a monotone smooth split (over the same sequence)
    of height at most $h \cdot (h_{\mathcal{J}}(S) + 1)$,
    where $h_{\mathcal{J}}(S)$ is the $\mathcal{J}$-height of~$S$.  
\end{lemma}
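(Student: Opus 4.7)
The plan is to induct on the $\mathcal{J}$-height $H = h_{\mathcal{J}}(S)$. The base case $H=1$ is essentially free: by Claim~\ref{claim:j-height-one-smooth} all elements of $S$ lie in a single $\mathcal{J}$-class, so for any split values $v_i, v_j \in S$ the product $v_i v_j$ (also in $S$) is automatically $\mathcal{J}$-equivalent to $v_i$. Hence every smooth split over $S$ is already monotone, and the identity function --- which is trivially a composition of primes --- works, with height $h \leq h \cdot (1+1)$.

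For the inductive step, the strategy is to keep the same underlying sequence but refine the heights by encoding $\mathcal{J}$-height information of the split values. First, use Claim~\ref{claim:xyg-values} (which depends on Lemma~\ref{lem:smooth-split-primes}) as a black box to decorate each position $i$ with (a representation of) its split value $v_i$. Then a single homomorphism computes $k(i) \in \{1, \ldots, H\}$, the $\mathcal{J}$-height of $v_i$ in $S$ --- this function is equivariant on $S$, so representable as a homomorphism on the polynomial-orbit-finite alphabet. Next, use another homomorphism on the finite alphabet of pairs $(h_o(i), k(i))$ to assign the new height
\[
    h_n(i) \;=\; (H+1)\bigl(h_o(i) - 1\bigr) \;+\; \bigl(H + 2 - k(i)\bigr),
\]
which lies in $\{2, \ldots, h(H+1)\}$, matching the announced bound. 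Finally, re-invoke Lemma~\ref{lem:smooth-split-primes} internally to recompute the new split values over the new heights. The $(H{+}1)$-scaling of the old height ensures that two positions with distinct $h_o$ end up strictly ordered in $h_n$, and the term $H{+}2{-}k(i)$ further stratifies positions of equal $h_o$ so that those with smaller $k$ (i.e.\ more $\mathcal{J}$-absorbing values) lie higher in the new split.

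The main obstacle --- and the core of the correctness argument that I would need to carry out in detail --- is that the new split values $v_i^{\text{new}}$ are products of different infixes than the old $v_i$, so smoothness and monotonicity must be established for the \emph{new} values. For smoothness, one observes that a new sibling subsequence consists of positions sharing both $h_o$ and $k$ whose original split values are pairwise $\mathcal{J}$-equivalent (they were already grouped by the old smoothness or separated by strictly lower positions, whose $h_n$ values remain strictly below thanks to the $(H+1)$-gap); Lemma~\ref{lem:green-lemma-anitchains} then propagates smoothness to products of the new infixes. For monotonicity, if $j$ is the new left ancestor of $i$, then $h_n(j) \geq h_n(i)$, which by the formula forces either $h_o(j) > h_o(i)$, or $h_o(j) = h_o(i)$ with $k(j) \leq k(i)$; in both cases the ancestor's split value has $\mathcal{J}$-height at least that of $v_i^{\text{new}}$, so $v_i^{\text{new}} \cdot v_j^{\text{new}}$ remains $\mathcal{J}$-equivalent to $v_i^{\text{new}}$ by Lemma~\ref{lem:green-lemma-anitchains}. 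The inductive hypothesis enters because, within each $k$-stratum at a fixed $h_o$-level, one effectively works with a semigroup of $\mathcal{J}$-height at most $H - 1$, which is where the recursion absorbs the remaining monotonicity obligations.
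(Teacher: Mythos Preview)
Your approach has a genuine gap: the claim that ordering the new ancestor/descendant pair by $\mathcal{J}$-height implies monotonicity is false. Monotonicity requires $v_j \cdot v_i$ to stay in the $\mathcal{J}$-class of $v_j$, and this is \emph{not} a consequence of any inequality between the $\mathcal{J}$-heights of $v_j$ and $v_i$. Concretely, take $S$ to be the semilattice of nonempty subsets of $\{a,b,c\}$ under union, the two-element sequence $s_1=\{a,b\},\, s_2=\{c\}$, and the smooth split $h_o(1)=2,\, h_o(2)=1$ (both sibling classes are singletons). Here $k(1)=2,\, k(2)=3$, so your formula gives $h_n(1)=7,\, h_n(2)=2$: the relative order of the two heights is unchanged, the split values are unchanged, and the split is still not monotone since $v_1 v_2=\{a,b,c\}$ is not $\mathcal{J}$-equivalent to $v_1=\{a,b\}$. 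Note also that in your ``Case~1'' ($h_o(j)>h_o(i)$) your formula says nothing at all about $k(j)$ versus $k(i)$, so even the premise you state does not hold. The hand-wave to the inductive hypothesis (``within each $k$-stratum at a fixed $h_o$-level'') cannot repair this, because the failure in the example is \emph{between} strata.

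The paper's proof works by induction on $h$, not on $H$. One recursively makes the height-$\le h$ subsplits (between consecutive max-height positions) monotone, then observes (Claim~\ref{claim:non-monotone-youngest-ancestor}) that the only remaining non-monotone positions have the \emph{rightmost} max-height position as their ancestor. One then iteratively lifts the leftmost such non-monotone position $i$ to a fresh top level and sets its new split value to $v_j\cdot v_i$; each lift strictly decreases the $\mathcal{J}$-height of the value at the top level, so at most $h_{\mathcal J}(S)$ lifts suffice. The crucial idea you are missing is this \emph{merge} of $v_j$ and $v_i$ into the new split value $v_j v_i$ --- a purely local re-stratification that keeps the old split values cannot create monotonicity where the product genuinely drops in $\mathcal{J}$-class.
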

\begin{proof}
The proof is an induction on $h$.
The induction base is trivial -- splits of height $1$ are necessarily smooth, 
so every smooth split of height $1$ is already monotone.
For the induction step, we assume that we have a construction for~$h$,
and we derive a construction for $h+1$. We are going to illustrate 
the construction on the following example:
\bigpicc{example-split}
Similarly as in the proof of Claim~\ref{claim:xyg-values}, we notice that 
the positions of the maximal height divide the input sequence into 
subsplits of heights not greater than $h$:
\bigpicc{monotone-split-construction-1}
This means that using the map combinator (Lemma~\ref{lem:map-combinator-primes}),
we can apply induction assumption to each of those subsplits.
This might increase the height of those subsplits from $h$ to 
$h' := h \cdot (h_{\mathcal{J}}(S) + 1)$.
It follows that in order to preserve the structure of the split,
we need to increase the heights of the dividing positions from $h + 1$ to $h' + 1$. 
(We can do this using a homomorphism.) In our example, this looks as follows
(for the sake of clarity, we lower $h'$ to $3$, which is still high enough to preserve the 
structure of the split):
\bigpicc{monotone-split-construction-2}
Now let us investigate all possible remaining \emph{non-monotone positions}, 
i.e. all positions $i$,  such that $v_j \cdot v_i$ is not infix equivalent to $v_j$
(where $j$ is the left ancestor of $i$). Note that it follows from the induction assumption 
that if $(j, i)$ is such a non-monotone pair, then $j$ is of the maximal height (i.e. $h' + 1$)
and $i$ is not of the maximal height. Moreover, this can only happen in the last subsplit
(i.e. $j$ has to be the rightmost position of maximal height):
\begin{claim}
\label{claim:non-monotone-youngest-ancestor}
    If the left ancestor of $i$ is not the \emph{youngest sibling}
    (i.e. it has a sibling to the right), then $i$ is monotone (i.e. 
    $v_j v_i$ is infix equivalent to $v_j$). 
\end{claim}
\begin{proof}
    Let $j$ be $i$'s left ancestor, and let $j'$ be the first 
    sibling of $j$ (to the right):
    \smallpicc{monotone-hypothetic} 
    Because the split is smooth, we know that $v_j v_{j'}$ 
    is an infix of $v_j$. By Lemma~\ref{lem:green-lemma-anitchains},
    it follows that $v_j v_{j'}$ is also a prefix of $v_j$. 
    By definition of the split values, we see that $v_i$ is a prefix 
    of $v_{j'}$. It follows that $v_j v_i$ is a prefix of 
    $v_j v_j'$. By transitivity, 
    this means that $v_j  v_i$ is a prefix of $v_j$.  
\end{proof}
It is easy to see that for every height $k$ and for every position $j$,
there is at most one $i$ of height $h$ such that $j$
is the left ancestor of $i$. It follows that there are at most $h'$ 
non-monotone positions. Let us show how to detect them:
We start by equipping every position $i$ whose left ancestor $j$ 
is of the maximal height (i.e. $h' + 1$) with the value $v_j$.
This can be done in $h'$ rounds of the generalized single-use propagation -- 
in the $k$th round, nodes of height $h' + 1$ transmit their split value
and nodes of height $k$ try to receive. This brings us to the following situation
(the ancestor split values are marked in orange):
\bigpicc{monotone-split-construction-4}
Now we can use a homomorphism to underline all those nodes that
have received a $v_j$; and for which $v_j \cdot v_i$ is not infix equivalent to $v_j$.
In our example there is only one such position:
\bigpicc{monotone-split-construction-5}
Then, we use a classical Mealy machine (or a multiple-use bit propagation)
to detect the first underlined node, and using a homomorphism, 
we set its height to $h' + 2$ and its split value to $v_j \cdot v_i$:
\bigpicc{monotone-split-construction-6}
It is not hard to see that this operation preserves the smoothness of the split. 
In our example the split is already monotone, 
but in general there might still be some non-monotone nodes left. Using a similar reasoning 
as before, we can see that if a node is non-monotone then its left ancestor has 
to be the newly elevated position of height $h' + 2$.
This means that we apply a similar construction:
\begin{enumerate}
    \item Equip every position $i$ whose left ancestor $j$ is of the maximal height with the value $v_j$;
    \item Detect the first non-monotone position (i.e. a position for which $v_j \cdot v_i$ is not infix equivalent to $v_j$);
    \item Increment the position's height, and set its split value to $v_j \cdot v_i$. 
\end{enumerate}
It is not hard to 
see that each such operation decreases the $\mathcal{J}$-height of
the split value in the position of the maximal height. This means 
that after $h_\mathcal{J}(S)$ iterations, we will obtain a monotone split 
of height at most $h' + h_{\mathcal{J}}(S)$. As $h' = h \cdot h_\mathcal{J}(S) + 1$, 
this is smaller than $(h + 1) \cdot (h_{\mathcal{J}}(S) + 1)$.
\end{proof}

\noindent
This finishes the proof of Lemma~\ref{lem:local-monoid-transformation-composition-of-primes}
and, in turn, the proof of Theorem~\ref{thm:kr}.

\section{Local semigroup transductions revisited}
\label{sec:local-semigroup-transductions-revisited}

Although local semigroup transductions were introduced as an intermediate model 
in the proof of Theorem~\ref{thm:kr}, we believe that they are of independent interest.
For this reason, we devote this (brief) section to studying them for their own sake.
The main result of this section is that the containment problem between 
general orbit-finite semigroup transductions and local semigroup transduction 
is decidable. However, this main result is mostly 
a pretext to delve deeper into the properties of local semigroup transductions.
\begin{theorem}
\label{thm:local-containment}
    The following problem is decidable:
    \[\begin{tabular}{ll}
        $\mathtt{Input: }$ & A (possibly non-local) orbit-finite semigroup transduction $(S, h, \lambda)$. \\
        $\mathtt{Output: }$ & Is there a \emph{local} semigroup transduction  $(S', h', \lambda')$\\
                            & that defines the same function?
    \end{tabular}\]
\end{theorem}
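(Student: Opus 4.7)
The plan is to reduce the question to two algorithmic subproblems by combining Theorem~\ref{lem:single-use-mealy-monoid-transduction} with the forthcoming Lemma~\ref{lem:full-single-use-Mealy-local-iff}, which (according to the excerpt's description) characterises equivalence to a single-use Mealy machine, for a semigroup transduction whose underlying semigroup has no unreachable elements, by locality of the output function. Together these two results say: a local semigroup transduction equivalent to $(S, h, \lambda)$ exists if and only if, after discarding the unreachable part of $S$, the residual output function satisfies the locality equation. The algorithm will therefore first prune $S$ to its reachable sub-semigroup, and then test locality there.

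First I would compute the sub-semigroup $S_{\mathrm{reach}} \subseteq S$ generated by $h(\Sigma)$. Starting from the equivariant set $h(\Sigma)$, I iteratively close under the semigroup product orbit by orbit, stopping once no new orbit is added. Since $S$ is orbit-finite and all operations involved are equivariant, this procedure halts after boundedly many rounds, and each step is effective by the standard decision procedures for orbit-finite sets in \cite[Part~III]{bojanczyk2019slightly}. The restricted transduction $(S_{\mathrm{reach}}, h, \lambda|_{S_{\mathrm{reach}}})$ defines the same word function as $(S, h, \lambda)$, because every product of $h$-images lies in $S_{\mathrm{reach}}$, so quantifiers in the locality test safely range over the reduced semigroup.

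Second I would check whether $\lambda|_{S_{\mathrm{reach}}}$ is local, i.e., whether for every triple $a, e, b \in S_{\mathrm{reach}}$ with $e \cdot e = e$ and $e = b \cdot b'$ for some $b' \in S^1_{\mathrm{reach}}$, and every $\supp(e)$-permutation $\pi \in \auts$, the equality $\lambda(a \cdot e \cdot b) = \lambda(\pi(a) \cdot e \cdot b)$ holds. Although $\pi$ ranges over an infinite group, its action on a fixed triple factors through the (equivariantly computable) orbit structure of $S_{\mathrm{reach}}$ relative to $\supp(e)$, so the entire statement is a universal first-order assertion over an orbit-finite structure with an equivariant predicate $\lambda(\cdot) = \lambda(\cdot)$, and is therefore decidable by the same representation framework.

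The main obstacle is Lemma~\ref{lem:full-single-use-Mealy-local-iff} itself; the rest of the argument is then a straightforward composition with Theorem~\ref{lem:single-use-mealy-monoid-transduction}. The easy direction of that lemma — locality implies equivalence to a single-use Mealy machine — is the substance of Theorem~\ref{thm:kr}. The hard direction asks that any transduction equivalent to some single-use Mealy machine must, once unreachable elements are removed, already have a local $\lambda$. I expect this to go through the behaviour semigroup construction of Section~\ref{sec:su-mealy-to-local-monoid-transduction}: the behaviour semigroup of the equivalent Mealy machine is local by the argument given there, and the canonical equivariant surjective semigroup morphism from it onto $S_{\mathrm{reach}}$ (which exists precisely because every element of $S_{\mathrm{reach}}$ is realised by a concrete input word) transports the locality identity downwards, yielding locality of $\lambda|_{S_{\mathrm{reach}}}$.
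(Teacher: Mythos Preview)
Your algorithm is exactly the paper's: compute the reachable sub-semigroup by a fixpoint, then test locality orbit by orbit. The difference lies in the correctness argument, and there is a genuine gap in yours.

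You claim there is a ``canonical equivariant surjective semigroup morphism'' from the behaviour semigroup of an equivalent Mealy machine onto $S_{\mathrm{reach}}$. No such morphism exists in general. The natural candidate $b_w \mapsto h(w_1)\cdots h(w_n)$ is not well-defined: two words with the same Mealy behaviour can have different products in $S_{\mathrm{reach}}$, since $S_{\mathrm{reach}}$ may retain distinctions the Mealy machine discards (take $f$ a constant function, the Mealy machine with a single state, and $S_{\mathrm{reach}}$ any orbit-finite semigroup with constant $\lambda$). So ``transporting locality downwards'' along this morphism is not available.

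The paper repairs this by going through the \emph{syntactic} semigroup transduction of $f$ (Lemma~\ref{lem:syntactic-semigroup-transduction}): every full recogniser of $f$ maps $\lambda$-surjectively onto it. Combined with Lemma~\ref{lem:lambda-morphism-backwards-forwards}, which says locality is preserved in \emph{both} directions along surjective $\lambda$-morphisms, this yields Lemma~\ref{lem:some-local-all-local}: if \emph{some} local transduction computes $f$, then \emph{every} full one is already local. That lemma is the correct replacement for your morphism argument, and it is purely algebraic---no detour through Mealy machines is needed, so the result holds for arbitrary orbit-finite alphabets rather than only polynomial ones.
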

For the sake of brevity, we do not discuss finite representations of orbit-finite set 
(which is technically required to talk about the decidability of this problem). Instead, 
we will focus on the intuitive understanding of decidability for sets with atoms. 
The algorithm presented in the proof is relatively simple and we hope that it will be intuitively clear,
that it can work with any reasonable representation\footnote{
For more information on representing orbit-finite sets see \cite[Chapter~4]{bojanczyk2019slightly}.
For more information on computations on sets with atoms see \cite[Chapter~5~and~Part~III]{bojanczyk2019slightly}.}.\\

Before presenting the algorithm, let us analyse the structure of semigroup transductions.
The main result of this analysis is going to be Lemma~\ref{lem:some-local-all-local}, which says that if 
a function $f: \Sigma^* \to \Gamma^*$ is defined by \emph{some} local orbit-finite semigroup transduction, 
then \emph{all} orbit-finite semigroup transductions that define $f$ are local (as long as their semigroups 
do not contain unnecessary elements). 

\subsection{$\lambda$-morphisms}
We start by defining a suitable notion of a morphism between two semigroups with outputs:
\begin{definition}
\label{def:lambda-morphism}
    A \emph{$\Gamma$-coloured semigroup} is a pair $(S, \lambda)$, where $S$ is a semigroup, and $\lambda$ is 
    a function $S \to \Gamma$. A \emph{$\lambda$-morphism} between two $\Gamma$-coloured semigroup morphisms 
    $(S_1, \lambda_1) \to (S_2, \lambda_2)$ is a function $f : S_1 \to S_2$ such that:
    \begin{enumerate}
        \item $f$ is a semigroup morphism, i.e. $f(x\cdot y) = f(x) \cdot f(y)$ for every $x, y \in S_1$; and
        \item $f$ preserves colours, i.e. $\lambda_1(x) = \lambda_2(f(x))$ for every $x \in S_1$.
    \end{enumerate}
\end{definition}

The interesting fact about equivariant $\lambda$-morphisms is that they preserve the locality of $\lambda$
both backwards and (as long as the $\lambda$-morphism is a surjection) forwards:
\begin{lemma}
\label{lem:lambda-morphism-backwards-forwards}
    Let $(S_1, \lambda_1)$ and $(S_2, \lambda_2)$ be two orbit-finite, equivariant $\Gamma$-coloured semigroups, 
    such that there exists an equivariant $\lambda$-epimorphism (i.e. a surjective $\lambda$-morphism):
    \[f: (S_1, \lambda_1) \eqto (S_2, \lambda_2)\tdot\]
    It follows that $\lambda_1$ is local if and only if $\lambda_2$ is local. 
\end{lemma}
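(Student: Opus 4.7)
The plan is to prove the two implications separately, noting that surjectivity of $f$ is used only for the harder direction.

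\emph{Easy direction ($\lambda_2$ local $\Rightarrow$ $\lambda_1$ local).} Given $a_1, e_1, b_1 \in S_1$ with $e_1$ idempotent, $b_1$ a prefix of $e_1$, and $\pi$ a $\supp(e_1)$-permutation, set $a_2 = f(a_1)$, $e_2 = f(e_1)$, $b_2 = f(b_1)$. Because $f$ is a semigroup morphism, $e_2$ is idempotent and $b_2$ a prefix of $e_2$; because $f$ is equivariant, Lemma~\ref{lem:fs-functions-preserve-supports} gives $\supp(e_2) \subseteq \supp(e_1)$, so $\pi$ is automatically a $\supp(e_2)$-permutation. Combining locality of $\lambda_2$ with the colour-preservation identity $\lambda_1 = \lambda_2 \circ f$ yields
\[
\lambda_1(a_1 e_1 b_1) \;=\; \lambda_2(a_2 e_2 b_2) \;=\; \lambda_2(\pi(a_2) e_2 b_2) \;=\; \lambda_1(\pi(a_1) e_1 b_1).
\]
Notice that surjectivity is not used here.

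\emph{Hard direction ($\lambda_1$ local $\Rightarrow$ $\lambda_2$ local).} Given $a_2, e_2, b_2 \in S_2$ with $e_2$ idempotent and $b_2$ a prefix of $e_2$ (fix $c_2$ with $b_2 c_2 = e_2$), and a $\supp(e_2)$-permutation $\pi$, I lift to $S_1$. By surjectivity, pick preimages $a_1, b_1, c_1$ of $a_2, b_2, c_2$. The cyclic subsemigroup of the orbit-finite $S_1$ generated by $b_1 c_1$ is finite, so some power $e_1 := (b_1 c_1)^n$ is idempotent; since $e_2$ is idempotent, $f(e_1) = e_2^n = e_2$, and the factorisation $e_1 = b_1 (c_1 b_1)^{n-1} c_1$ exhibits $b_1$ as a prefix of $e_1$. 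Using $\lambda_1 = \lambda_2 \circ f$ and the fact that $\pi(a_1)$ is a preimage of $\pi(a_2)$, the goal $\lambda_2(a_2 e_2 b_2) = \lambda_2(\pi(a_2) e_2 b_2)$ reduces to the $S_1$-locality identity
\[
\lambda_1(a_1 e_1 b_1) \;=\; \lambda_1(\pi(a_1) e_1 b_1).
\]

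\emph{Main obstacle.} Locality of $\lambda_1$ closes the argument at once if $\pi$ is a $\supp(e_1)$-permutation, but Lemma~\ref{lem:fs-functions-preserve-supports} only guarantees $\supp(e_2) \subseteq \supp(e_1)$ and in general the inclusion is strict. I plan to handle this by a freshness argument. Let $M_\pi$ be the finite set of atoms moved by $\pi$. Apply a $\supp(e_2)$-permutation $\rho$ to $(a_1, b_1, c_1)$ that relocates each atom of $\supp(a_1) \cup \supp(b_1) \cup \supp(c_1)$ lying outside $\supp(a_2) \cup \supp(b_2) \cup \supp(c_2)$ to a fresh atom outside $M_\pi$; by Lemma~\ref{lem:fs-functions-preserve-supports} the relocated atoms are invisible to $f$, so the $f$-images $a_2, b_2, c_2$ are preserved along with the idempotency of $e_1$ and the prefix relation. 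After the modification, $\supp(e_1) \setminus \supp(e_2)$ can be arranged to be disjoint from both $\supp(a_2)$ and $M_\pi$, so the permutation $\tilde\pi$ defined to agree with $\pi$ outside $\supp(e_1)$ and be the identity on $\supp(e_1)$ is a well-defined $\supp(e_1)$-permutation that agrees with $\pi$ on $\supp(a_2)$. Thus $f(\tilde\pi(a_1)) = \tilde\pi(a_2) = \pi(a_2)$, which by $\lambda_1 = \lambda_2 \circ f$ gives $\lambda_1(\tilde\pi(a_1) e_1 b_1) = \lambda_1(\pi(a_1) e_1 b_1)$, and locality of $\lambda_1$ applied with $\tilde\pi$ closes the chain. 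The most delicate point, and where I expect the main technical effort to go, is that atoms in $\supp(b_2) \cup \supp(c_2)$ lying outside $\supp(e_2)$ cannot be relocated by $\rho$; I anticipate handling this by a preliminary $\supp(e_2)$-normalization of $(b_1, c_1)$ using a straight representation of $S_1$ together with Lemma~\ref{lem:straight-uniformisation}, which let us choose the preimages so that their supports coincide with $\supp(b_2)$ and $\supp(c_2)$ above $\supp(e_2)$.
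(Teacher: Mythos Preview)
Your easy direction ($\lambda_2$ local $\Rightarrow$ $\lambda_1$ local) is correct and matches the paper word for word; as you note, surjectivity is not needed here.

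Your hard direction has a genuine gap, exactly at the point you flag as ``most delicate.'' The problem is structural: since $f$ is equivariant, every preimage $b_1\in f^{-1}(b_2)$ satisfies $\supp(b_2)\subseteq\supp(b_1)$, so atoms of $\supp(b_2)\setminus\supp(e_2)$ are \emph{forced} into $\supp(b_1)$ and may survive into $\supp(e_1)=\supp((b_1c_1)^n)$. Your relocation $\rho$ can only move the extra atoms of $b_1,c_1$ beyond $\supp(b_2)\cup\supp(c_2)$; it cannot touch atoms of $\supp(b_2)$ itself. More abstractly, the set of idempotents obtainable as $(b_1c_1)^n$ with $b_1\in f^{-1}(b_2),\,c_1\in f^{-1}(c_2)$ is supported by $\supp(b_2)\cup\supp(c_2)$, not by $\supp(e_2)$, so no orbit argument lets you avoid $M_\pi\cap(\supp(b_2)\setminus\supp(e_2))$. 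Your proposed fix via straight uniformisation does not escape this: uniformisation can only control which preimage of $b_2$ you pick, and all of them contain $\supp(b_2)$. (A minor separate gap: you assume $M_\pi$ is finite without justification; this needs Claim~\ref{claim:finite-shift}, which the paper invokes explicitly to replace $\pi$ by a finitely-supported $\pi'$ with $\pi'(a_2)=\pi(a_2)$.)

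The paper's solution decouples the idempotent lift from the prefix witness. It first chooses $e'$ from the \emph{heaviest $\mathcal{J}$-class} $E'$ of the subsemigroup $E=f^{-1}(e_2)$; the point is that $E'$ is supported by $\supp(e_2)$ alone (not $\supp(b_2)$), so one can pick an idempotent $e'\in E'$ whose support avoids the finitely many atoms moved by $\pi'$. The prefix condition is recovered \emph{a posteriori}: for arbitrary $y'\in f^{-1}(b_2)$ and $z'\in f^{-1}(c_2)$ one has $e'y'z'\in f^{-1}(e_2)=E$, hence $e'y'z'$ is an infix of $e'$ by $\mathcal{J}$-maximality of $E'$, so $e'y'$ is a prefix of $e'$ by Lemma~\ref{lem:green-lemma-anitchains}. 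One then applies locality of $\lambda_1$ with the triple $(x',e',e'y')$, using $\lambda_2(a_2e_2b_2)=\lambda_2(a_2e_2(e_2b_2))$ to justify working with $e'y'$ rather than $y'$. The key insight you are missing is that the lift of the idempotent must be chosen from a set supported by $\supp(e_2)$, which forces you to establish the prefix property by an indirect $\mathcal{J}$-class argument rather than by building $e_1$ out of $b_1$.
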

\begin{proof} We start with the easier $(\Leftarrow)$-implication, which holds even if $f$ not surjective.
    We assume that $\lambda_2$ is local, and show that $\lambda_1$ is local as well. For this, we take
    $x, e, y \in S_1$ and a $\supp(e)$-permutation $\pi$ such that $e$ is an idempotent and $y$ is a prefix of $e$,
    and we show that $\lambda_1(xey) = \lambda_1(\pi(x)ey)$. We start by converting the left-hand side to $\lambda_2$:
    \[\lambda_1(xey) = \lambda_2(f(xey)) = \lambda_2(f(x) \cdot f(e) \cdot f(y))\]
    At this point, we would like to apply the locality of $\lambda_2$. For this we observe the following:
    \begin{enumerate}
        \item $f(e)$ is idempotent: This is because $f(e) \cdot f(e) = f(e \cdot e) = f(e)$;
        \item $f(y)$ is a prefix of $f(e)$: Since $y$ is a prefix of $e$, there is a $z$ such that $yz = e$. 
              It follows that $f(y) \cdot f(z) = f(yz) = f(e)$. 
        \item $\pi$ is a $\supp(f(e))$-permutation: $f$ is equivariant, so by Lemma~\ref{lem:fs-functions-preserve-supports}, 
              we know that $\supp(f(e)) \subseteq \supp(e)$. It follows that every $\supp(e)$-permutation is also 
              a $\supp(f(e))$-permutation.
    \end{enumerate}
    This means that we can use the locality of $\lambda_2$:
    \[ \lambda_2(f(x) \cdot f(e) \cdot f(y)) = \lambda_2(\pi(f(x)) \cdot f(e)  \cdot f(y)) \] 
    Since $f$ is equivariant, we have that:
    \[ \lambda_2(\pi(f(x))  \cdot f(e) \cdot f(y)) = \lambda_2(f(\pi(x)) \cdot f(e) \cdot f(y))\]
    We finish the proof by going back to $\lambda_1$:
    \[\lambda_2(f(\pi(x)) \cdot f(e) \cdot f(y))  = \lambda_2(f(\pi(x)ey)) = \lambda_1(\pi(x)ey) \]
    This finishes the proof of the $(\Leftarrow)$-implication.\\

    The proof of the ($\Rightarrow$)-implication is more complicated and it requires $f$ to be surjective.
    This time we assume that $\lambda_1$ is local and show that $\lambda_2$ is local as well:
    Take $x, e, y \in S_2$ and a $\supp(e)$-permutation $\pi$ such that $e$ is an idempotent and $y$ is a prefix of $e$.
    We need to show that $\lambda_2(xey) = \lambda_2(\pi(x)ey)$. The immediate approach would be to take some $x'$, $e'$ and $y'$ that belong 
    accordingly to $f^{-1}(x)$, $f^{-1}(e)$ and $f^{-1}(y)$ and apply the locality for $\lambda_1(x'e'y')$. There are, however,
    three problems with this approach: $e'$ might not be idempotent, $y'$ might not be a prefix of $x'$, and $\pi$ might 
    not be a $\supp(e')$ permutation (as $e'$ might contain more atoms than $e$). In the next paragraphs, we 
    present a solution that deals with all three of those problems.\\

    It would be useful to assume that $\pi$ touches only finitely many atoms -- but, 
    since $\pi$ is an arbitrary $\supp(e)$-permutation, this might not be true.
    However, by Claim~\ref{claim:finite-shift}, we can assume that 
    there is another $\supp(e)$-permutation $\pi'$ that only touches 
    finitely many atoms such that $\pi'(x) = \pi(x)$. For such 
    a $\pi'$ we have that $\lambda_2(\pi(x)ey) = \lambda_2(\pi'(x)ey)$.
    This leaves us with proving that $\lambda_2(x_2ey) =  \lambda_2(\pi'(x)_2ey)$.
    We start by picking an appropriate $e'$: 
    \begin{claim}
    \label{claim:forward-locality-e'}
        There is an element $e' \in S_1$, such that: 
        \begin{enumerate}
            \item $f(e') = e$;
            \item every $s \in f^{-1}(e)$ is an infix of $e'$;
            \item $e'$ is idempotent;
            \item $\pi'$ is a $\supp(e')$-permutation.
        \end{enumerate}
    \end{claim}
    \begin{proof}
        Let us consider $E = f^{-1}(e)$.  As long as we pick $e' \in E$, it will satisfy the first condition of the claim.
        Observe that $E$ is a subsemigroup of $S_1$: This is because $\{e\}$ is a subsemigroup of $S_2$, and inverse images 
        of semigroup morphisms preserve subsemigroups. Let $E'$ be the heaviest $\mathcal{J}$-class of $E$, i.e. 
        the set of all elements whose $\mathcal{J}$-height is equal to $1$.
        (By Claim~\ref{claim:j-height-one-smooth}, we know that this is indeed a $\mathcal{J}$-class). 
        Observe that $E'$ is a subsemigroup -- this is because the $\mathcal{J}$-height of 
        $x \cdot y$ cannot be higher than the $\mathcal{J}$-height of $x$ or $y$. 
        Moreover, observe that every element of $E$ is an infix of every element of $E'$:
        if $x \in E$ and $y \in E'$, then it is not hard to see that $xy \in E'$, which
        means that $xy$ is $\mathcal{J}$-equivalent to $y$. This means that as long as we pick $e'$ from $E'$, 
        it will satisfy the second condition. Since $E'$ is a subsemigroup of $S_1$, 
        it contains at least one idempotent: Take some $x \in E'$ and let $X$ 
        be the subsemigroup generated by $x$, i.e. $X = \{x, x^2, x^3 \ldots\}$.
        By \cite[Theorem~5.1]{bojanczyk2013nominal}, we know that $X$ is finite
        and every finite (sub)semigroup contains an idempotent. Finally, 
        let us notice that $E'$ is supported by $\supp(e)$ (this follows 
        from Lemma~\ref{lem:fs-functions-preserve-supports}, as $E'$ can be 
        computed from $E$, and $E$ can be computed from $e$). Since $E'$ contains 
        an idempotent, it follows that there is at least one $\supp(e)$-orbit
        of idempotents in $E'$. Moreover, we know that $\pi'$ 
        is a $\supp(e)$-permutation that only touches finitely many atoms,
        which means that this orbit contains at least one (or, in fact, infinitely
        many) $e'$, such that $\pi'$ is a $\supp(e')$-permutation.
        Such $e'$ satisfies the conditions of the claim. 
    \end{proof}

    We are now ready to show that $\lambda_2(x_2ey) =  \lambda_2(\pi'(x)_2ey)$.
    Let us pick $e'$ as in Claim~\ref{claim:forward-locality-e'} and some 
    $x'$ and $y'$ such that $f(x') = x$ and $f(y') = y$. Observe that:
    \[ \lambda_2(xey) = \lambda_2(xe(ey)) = \lambda_2(f(x'e'(e'y'))) = \lambda_1(x' e' (ey')) \]
    In the next step, we apply the locality of $\lambda_1$ for $x'$, $e'$, $ey'$ and $\pi'$.
    We already know that $e'$ is an idempotent (in $S_1$) and that $\pi'$ is a $\supp(e')$-permutation. 
    Let us show that $e'y'$ is a prefix of $e'$: By Lemma~\ref{lem:green-lemma-anitchains},
    it suffices to show that $e'y'$ is an infix of $e'$. Since $y$ is a prefix of $e$,
    we know that there exists a $z \in S_2$ such that $yz = e$. 
    Let $z'$ be any element such that $f(z') = z$.
    It follows that $f(e'y'z') = e y z = e e = e$, so by Claim~\ref{claim:forward-locality-e'}, 
    we know that $e'y'z'$ is an infix of $e'$. It follows that $e'y'$ is an infix of $e$. 
    This means that we can apply the locality equation:
    \[ \lambda_1(x' e' (ey')) = \lambda_1(\pi'(x') \cdot  e' \cdot (ey')) \]
    We finish the proof with the following transformations:
    \[   \lambda_1(\pi'(x') \cdot  e' \cdot (ey')) = \lambda_1(\pi'(x') \cdot e' \cdot y') = \lambda_2(f(\pi(x') \cdot e' \cdot y')) =  \lambda_2(\pi'(x) e y )\]
\end{proof}

\subsection{Syntactic semigroup transduction}
Syntactic semigroups are a well-established tool for studying formal languages -- 
they were introduced\footnote{It might be worth noting that \cite{schutzenberger1955theorie} points further to \cite{dubreil1953contribution}. 
However, as I was unable to access \cite{dubreil1953contribution}, I keep \cite{schutzenberger1955theorie} as the reference.}
in \cite[Chapter~2]{schutzenberger1955theorie}, and they are discussed for example
in \cite[Section IV.4]{pin2010mathematical} or in \cite[Theorem~1.7]{bojanczyk2013nominal}. 
In this section, we discuss their generalization for word-to-word functions. We start with a definition:
\begin{definition}
\label{def:future-independence}
    We say that a length-preserving function $f : \Sigma^* \to \Gamma^*$ is \emph{future independent} 
    if for all $w, v_1, v_2 \in \Sigma^*$:
    \[ \textrm{the $|w|$-th letter of $f(wv_1)$} = \textrm{the $|w|$-th letter of $f(wv_2)$}\]
\end{definition}
It is not hard to see that the class of length-preserving, future-independent functions 
is equal to the class of functions recognized by possibly infinite semigroup transductions
(as we can always pick the free semigroup $S = \Sigma^*$ together with $\lambda$ that maps a word 
$w$ into the last letter of $f(w)$). For this reason, we are only going to define syntactic 
semigroup transduction for functions that are length-preserving and future independent.\\

When talking about syntactic semigroups, it is important to assume that the underlying semigroup 
of a semigroup transduction does not contain unreachable elements:
\begin{definition}
\label{def:full-semigroup-transduction}
    We say that a semigroup transduction $(S, h, \lambda)$ of type $\Sigma^* \to \Gamma^*$ is \emph{full}\footnote{
        An alternative for the name \emph{full} semigroup transduction might be \emph{surjective} semigroup transduction. 
        However, the name \emph{surjective semigroup transduction} might erroneously suggest that the word-to-word function $\Sigma^* \to \Gamma^*$ 
        is surjective. For this reason, we stick with \emph{full semigroup transduction}.
    }
    if $h^*$ is a surjective homomorphism: i.e. 
    if every element of $S$ corresponds to some word from $\Sigma^*$, i.e. for every $s \in S$, 
    there is a $w \in \Sigma^*$, such that: 
    \[ s = h(w_1) \cdot \ldots \cdot h(w_n) \]
\end{definition}
Note that every semigroup transduction can be transformed into an equivalent full transduction:
\begin{claim}
\label{claim:non-full-to-full}
    For every semigroup transduction $(S, h, \lambda)$, there is an equivalent 
    full-semigroup transduction $(S', h', \lambda')$. (Moreover, if $(S, h, \lambda)$ is
    equivariant, then so is $(S', h', \lambda)$.)
\end{claim}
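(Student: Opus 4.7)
The plan is to take $(S', h', \lambda')$ to be the restriction of $(S, h, \lambda)$ to the image of the free-semigroup homomorphism $h^{+} : \Sigma^{+} \to S$ defined by $h^{+}(w) = h(w_{1}) \cdot h(w_{2}) \cdot \ldots \cdot h(w_{n})$. Concretely, I would set
\[ S' \;:=\; \{\, h^{+}(w) \;:\; w \in \Sigma^{+} \,\} \;\subseteq\; S, \]
take $h'(a) := h(a)$ (which lies in $S'$ since it is the image of the one-letter word $a$), and take $\lambda'$ to be the restriction of $\lambda$ to $S'$.

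First I would verify that $S'$ is a subsemigroup of $S$: if $s_{1} = h^{+}(u)$ and $s_{2} = h^{+}(v)$ then $s_{1} \cdot s_{2} = h^{+}(uv) \in S'$, so $S'$ is closed under the operation of $S$ and inherits associativity. Consequently $(S', h', \lambda')$ is a semigroup transduction in the sense of Definition~\ref{def:finite-semigroup-transductions}, and it is full by construction since $S'$ is by definition the image of $h'^{+}$. The two transductions compute the same function because the $i$th output letter on input $w$ is $\lambda(h(w_{1}) \cdot \ldots \cdot h(w_{i})) = \lambda'(h'(w_{1}) \cdot \ldots \cdot h'(w_{i}))$, the two sides being equal because $\lambda'$ is the restriction of $\lambda$ and the prefix product already lies in $S'$.

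The only slightly subtle point is the orbit-finiteness and equivariance of $S'$. If the original transduction $(S,h,\lambda)$ is equivariant, then the multiplication on $S$ and the map $h$ are equivariant, which makes the set $S'$ an equivariant subset of the orbit-finite $S$. By Claim~\ref{claim:of-subsets}, $S'$ is therefore orbit-finite, and the restricted operation, $h'$, and $\lambda'$ remain equivariant. This establishes the ``moreover'' clause. I expect no significant obstacle: the construction is a straightforward ``take the image'' argument, and the only thing that requires care is confirming that each of the algebraic and equivariance properties transfers to the subsemigroup, which it does by the standard closure properties already developed in Section~\ref{sec:sets-with-atoms}.
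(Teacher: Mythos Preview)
Your proposal is correct and follows essentially the same approach as the paper: take $S'$ to be the image of $h^{+}$ (i.e., the subsemigroup of $S$ generated by $h(\Sigma)$), and restrict $h$ and $\lambda$ accordingly. The paper's proof is a terse two-sentence version of exactly this argument; your version is more explicit about why $S'$ is a subsemigroup, why the transductions agree, and why equivariance and orbit-finiteness transfer (invoking Claim~\ref{claim:of-subsets}), but the underlying construction is identical.
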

\begin{proof}
    Define $S' \subseteq S$ to be the set of those elements that correspond to words from $\Sigma^*$. 
    It is not hard to see that $S'$ is a subsemigroup of $S$. If we define $h'$ and $\lambda'$
    to be restrictions of $h$ and $\lambda$, we obtain an equivalent semigroup transduction. 
    (It is easy to see that this construction preserves equivariance.) 
\end{proof}

\noindent
We are now ready to define the syntactic semigroup transduction for a function $f$, 
which intuitively is the minimal semigroup transduction that computes $f$:
\begin{lemma}
\label{lem:syntactic-semigroup-transduction}
    For every length-preserving and future-independent function $f : \Sigma^* \to \Gamma^*$, 
    there exists a semigroup transduction $(S_f, h_f, \lambda_f)$, callled the  \emph{syntactic semigroup transduction},  such that:
    \begin{enumerate}
        \item $(S_f, h_f, \lambda_f)$ is full;
        \item $(S_f, h_f, \lambda_f)$ computes $f$; and
        \item for every \emph{full} $(S', h', \lambda')$ that computes $f$ there exists 
              a $\lambda$-epimorphism $g : (S', \lambda') \to (S_f, \lambda_f)$.
              (Note that since $g$ is a $\lambda$-morphism, it follows that
              $(g \circ h', S_f, \lambda_f)$ is equivalent to $(S_f, h_f, \lambda_f)$,
              which means that $(g \circ h', S_f, \lambda_f)$ is an implementation of $f$.)
    \end{enumerate}
\end{lemma}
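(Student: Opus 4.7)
The plan is to adapt the classical Sch\"utzenberger construction of the syntactic semigroup to length-preserving, future-independent word-to-word functions. First I would introduce an auxiliary letter-valued function $g : \Sigma^+ \to \Gamma$ by setting $g(w)$ to be the last letter of $f(w)$; this is well-defined because $f$ is length-preserving. By future independence, $g(w)$ is exactly the letter that any semigroup transduction computing $f$ must output at position $|w|$ when reading any input that has $w$ as a prefix.

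Next, I would define the syntactic congruence on $\Sigma^+$ by declaring $u \sim v$ iff $g(xuy) = g(xvy)$ for every $x, y \in \Sigma^*$. A routine check shows that $\sim$ is a semigroup congruence. Setting $S_f := \Sigma^+/{\sim}$ with the induced associative product, $h_f(a) := [a]$, and $\lambda_f([w]) := g(w)$ (well-defined by taking $x = y = \epsilon$ in the defining condition), fullness of $(S_f, h_f, \lambda_f)$ is immediate since every class $[w_1 \cdots w_n]$ equals $h_f(w_1) \cdots h_f(w_n)$. To see that $(S_f, h_f, \lambda_f)$ computes $f$, the output at position $i$ of an input $w_1 \cdots w_n$ is $\lambda_f(h_f(w_1) \cdots h_f(w_i)) = g(w_1 \cdots w_i)$, which is the last letter of $f(w_1 \cdots w_i)$ and hence, by future independence, the $i$-th letter of $f(w_1 \cdots w_n)$.

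For the universal property, given a full semigroup transduction $(S', h', \lambda')$ computing $f$, I would define $g' : S' \to S_f$ by $g'(h'(w_1) \cdots h'(w_n)) := [w_1 \cdots w_n]$, using fullness of $(S', h', \lambda')$ to ensure every element of $S'$ is covered. The main step, and the only real obstacle, is well-definedness: if $h'(u) = h'(v)$ in $S'$, then for every $x, y \in \Sigma^*$ the last letter of $f(xuy)$ equals $\lambda'(h'(x)\, h'(u)\, h'(y)) = \lambda'(h'(x)\, h'(v)\, h'(y))$, the last letter of $f(xvy)$, so $g(xuy) = g(xvy)$ and hence $u \sim v$. Once this is established, it is immediate that $g'$ is a semigroup morphism, preserves $\lambda$ (because both $\lambda'$ and $\lambda_f \circ g'$ compute the last letter of $f$), and is surjective by fullness of $(S', h', \lambda')$, so it is the required $\lambda$-epimorphism. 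Everything else in the proof is a routine transcription of the classical argument.
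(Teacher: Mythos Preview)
Your proof is correct and follows essentially the same approach as the paper's own proof: both construct $S_f$ as $\Sigma^+$ modulo the two-sided Myhill--Nerode congruence induced by ``last letter of $f$'', and both verify the universal property by showing that equal products in $S'$ force congruent words. Your write-up is in fact slightly more explicit than the paper's, which defers the well-definedness and morphism checks to ``the classical construction''.
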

\noindent
(It is worth pointing out that the syntactic semigroup transduction does not have to be finite or orbit-finite.)
\begin{proof}
    The proof is analogous to the proof of \cite[Theorem~1.7]{bojanczyk2020languages}:
    For every $f$, we define the two-sided Myhill–Nerode equivalence relation
    $\sim_f$, which identifies two words $w_1, w_2 \in \Sigma^*$ if:
    \[ 
        \begin{tabular}{ccc}
            $\textrm{the last letter of } f(uw_1v)  =  \textrm{the last letter of } f(uw_2v)$, & 
            for all $u,v \in \Sigma^*$
        \end{tabular}
    \]
    The syntactic semigroup of $f$ is constructed as follows: $S_f = (\Sigma^+)_{/\sim_f}$
    (i.e. the set of non-empty words over $\Sigma$ divided by $\sim_f$), with the following operation:
    \[ [u]_f \cdot [v]_f  = [uv]_f \comma\]
    where $[u]_f$ denotes the abstraction class of $u$. Functions $h_f$ and $\lambda_f$ are defined 
    as follows:
    \[ \begin{tabular}{ccc}
        $h_f(w) = [w]_{f}$ & $\lambda_f([w]_f) = \textrm{the last letter of}\ f(w)$
    \end{tabular}  \]
    It is not hard to see that definitions of $\lambda$ and of the semigroup operation do 
    depend on the choice of the representatives. It is also not hard to see that
    $(S_f, h_f, \lambda_f)$ implements $f$.
    For every full $(S', h', \lambda')$ that implements $f$, we define the following $g$:
    \[ \begin{tabular}{cc}
        $g(x) = [w_x]_f$ & where $w_x$ is any word whose $h'$-image's product is equal to $x$
       \end{tabular}
    \]
    To see that $g$ is a well-defined, surjective, and a monoid morphism, we 
    can use the same argument that is used in the classical construction of a semantic 
    monoid of a language (see, for example, \cite[Theorem~1.7]{bojanczyk2020languages}).
    This leaves us with showing that $g$ preserves $\lambda$-values. Since both 
    $(S', h', \lambda')$ and $(S_f, h_f, \lambda_f)$ are implementations of $f$, it follows that:
    \[ \lambda'(s) = \textrm{the last letter of } f(w_s) = \lambda_f([w_s]_f) = \lambda_f(g(s))\]
\end{proof}

It can be shown that the syntactic semigroup transduction of $f$ is unique (with respect to $\lambda$-isomorphisms), 
which means that we can define it abstractly using only the statement of Lemma~\ref{lem:syntactic-semigroup-transduction}.
However, we stick with the concrete definition using the construction from the proof of Lemma~\ref{lem:syntactic-semigroup-transduction}
(i.e. $S_f = \Sigma^+_{/\sim_f}$) as it is sometimes easier to reason about it. For example, 
by analysing the construction, it is not hard to see that it preserves equivariance:
\begin{claim}
\label{claim:syntatic-semigroup-equivariant}
    If $f : \Sigma^* \to \Gamma^*$ is equivariant, then so is its syntactic 
    semigroup transduction $(S, h, \lambda)$. Moreover for every equivariant 
    full $(S', h', \lambda')$ that corresponds to $f$, the $\lambda$-epimorphism 
    $g: (S', \lambda') \to (S, \lambda)$ is equivariant. 
\end{claim}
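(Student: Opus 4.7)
The plan is to unpack the concrete construction used in the proof of Lemma~\ref{lem:syntactic-semigroup-transduction} and check that every piece is compatible with the action of atom permutations. Recall that $S_f = (\Sigma^+)_{/\sim_f}$ with
\[ w_1 \sim_f w_2 \;\stackrel{\text{def}}{\iff}\; \forall u,v\in\Sigma^*,\; \text{last letter of } f(uw_1v) = \text{last letter of } f(uw_2v). \]
The atom action on $\Sigma^*$ is the pointwise action, so for any permutation $\pi$ we have $f(\pi(w)) = \pi(f(w))$ and the last-letter function $\Gamma^+ \to \Gamma$ is equivariant.

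First I would show that the relation $\sim_f$ itself is equivariant. Given $w_1 \sim_f w_2$ and a permutation $\pi$, for arbitrary $u,v\in\Sigma^*$ I would rewrite $u\pi(w_i)v = \pi(\pi^{-1}(u)\, w_i\, \pi^{-1}(v))$ and push $\pi$ through $f$ and through ``last letter''; the hypothesis applied to $\pi^{-1}(u),\pi^{-1}(v)$ then yields $\pi(w_1)\sim_f\pi(w_2)$. This makes the quotient $S_f$ a well-defined set with atoms where $\pi([w]_f) = [\pi(w)]_f$. Equivariance of the three remaining pieces is then routine: the product $[u]_f\cdot[v]_f = [uv]_f$ commutes with $\pi$ because concatenation does; the input function $h_f(a)=[a]_f$ commutes with $\pi$ by construction; and the output $\lambda_f([w]_f)=(\text{last letter of } f(w))$ commutes with $\pi$ because $f$ and the last-letter projection both do.

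For the second part I need to show that the canonical $\lambda$-epimorphism $g:(S',\lambda')\to(S_f,\lambda_f)$ is equivariant, given that $(S',h',\lambda')$ is equivariant and full. The subtle point is that $g(x)=[w_x]_f$ is defined using a \emph{choice} of representative word $w_x$ whose $h'$-image products to $x$, so one must argue that the value does not actually depend on this choice (already done in Lemma~\ref{lem:syntactic-semigroup-transduction}) and then establish equivariance choice-freely. Given any $w_x$ with $h'(w_x)_1\cdots h'(w_x)_n = x$, equivariance of $h'$ and the semigroup operation in $S'$ gives $h'(\pi(w_x))_1\cdots h'(\pi(w_x))_n = \pi(x)$, so $\pi(w_x)$ is a legitimate witness for $g(\pi(x))$. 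Hence
\[ g(\pi(x)) = [\pi(w_x)]_f = \pi([w_x]_f) = \pi(g(x)). \]

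I do not anticipate a serious obstacle here; the only mild pitfall is the last step, where one must resist the temptation to treat the choice of $w_x$ as a priori non-equivariant. The fix is exactly the observation above: any representative $w_x$ yields, via $\pi$, a representative of $\pi(x)$, and since $g$ is independent of the representative, equivariance follows automatically. No support-tracking or uniformization (as in Lemma~\ref{lem:straight-uniformisation}) is needed, because $g$ is a genuine function and not the result of a choice.
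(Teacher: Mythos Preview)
Your proof is correct and matches the paper's approach: the paper does not spell out an explicit proof of this claim, remarking only that ``by analysing the construction, it is not hard to see that it preserves equivariance,'' and your argument is precisely that analysis carried out in full. Your handling of the second part, in particular the observation that the choice-dependence of $w_x$ is harmless because $\pi(w_x)$ is automatically a witness for $\pi(x)$, is exactly the right point.
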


In the context of formal languages, the syntactic monoid often serves as a useful tool for
checking if a language possesses certain properties (such as first-order definability).
A similar approach can be used for syntactic semigroup transductions
and the locality restriction:
\begin{lemma}
\label{lem:local-syntactic-iff}
    An equivariant function $f : \Sigma^* \eqto \Gamma^*$ is recognized by 
    some local semigroup transduction if and only if 
    its syntactic semigroup transduction is local. 
    (Remember that by Definition~\ref{def:local-monoid-transduction} every 
    local semigroup transduction is in particular orbit-finite.) 
\end{lemma}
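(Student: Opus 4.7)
The $(\Leftarrow)$ direction is immediate: if the syntactic semigroup transduction $(S_f,h_f,\lambda_f)$ of $f$ is local (in the sense of Definition~\ref{def:local-monoid-transduction}, so in particular equivariant and orbit-finite), then it is itself a local semigroup transduction that computes $f$ by Lemma~\ref{lem:syntactic-semigroup-transduction}, and there is nothing further to do.

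For $(\Rightarrow)$, suppose $f$ is computed by some local semigroup transduction $(S',h',\lambda')$. My plan is to produce an equivariant $\lambda$-epimorphism from $(S',\lambda')$ (possibly after shrinking) onto $(S_f,\lambda_f)$, and then transport locality \emph{forward} along this epimorphism using the harder half of Lemma~\ref{lem:lambda-morphism-backwards-forwards}. I would proceed in three steps.

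\emph{Step 1: Reduce to the full case.} By Claim~\ref{claim:non-full-to-full}, I pass to the full subsemigroup $S''\subseteq S'$ of elements realized by some nonempty word, with $h'',\lambda''$ the restrictions of $h',\lambda'$. Since $S''$ is an equivariant subset of an orbit-finite, equivariant set it is itself orbit-finite and equivariant. I need to check that locality descends to $\lambda''$: given $x,e,y\in S''$ with $e$ idempotent in $S''$ and $y$ a prefix of $e$ inside $S''$, these witnesses are also valid in $S'$, and any $\supp(e)$-permutation $\pi$ is unchanged, so the required equation $\lambda''(xey)=\lambda''(\pi(x)ey)$ is an instance of the locality of $\lambda'$.

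\emph{Step 2: Get the canonical epimorphism.} Since $f$ is equivariant and $(S'',h'',\lambda'')$ is full and equivariant, Claim~\ref{claim:syntatic-semigroup-equivariant} supplies an equivariant $\lambda$-epimorphism
\[g:(S'',\lambda'')\;\eqto\;(S_f,\lambda_f).\]

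\emph{Step 3: Transport locality.} Because $S_f=g(S'')$ is the equivariant image of an orbit-finite set, it is orbit-finite by \cite[Lemma~3.24]{bojanczyk2019slightly}; in particular $(S_f,h_f,\lambda_f)$ is an equivariant, orbit-finite $\Gamma$-coloured semigroup. Now I apply the $(\Rightarrow)$-direction of Lemma~\ref{lem:lambda-morphism-backwards-forwards} to $g$: since $g$ is an equivariant \emph{surjective} $\lambda$-morphism and $\lambda''$ is local, $\lambda_f$ is local as well, which is exactly what needs to be shown.

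The main obstacle is already dealt with upstream: it is hidden inside the surjective half of Lemma~\ref{lem:lambda-morphism-backwards-forwards}, where one must lift an arbitrary idempotent $e\in S_f$ to a well-chosen idempotent $e'\in S''$ with the same support and containing all preimages as infixes (Claim~\ref{claim:forward-locality-e'}). Within the present proof, the two points that still require care are (i) that restricting to the full subtransduction preserves locality, and (ii) that $S_f$ is orbit-finite---both of which I expect to be routine, the former by comparing the quantifier ranges of the locality equation on $S''$ and $S'$, and the latter via equivariant images of orbit-finite sets.
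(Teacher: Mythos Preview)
Your proposal is correct and follows essentially the same route as the paper: both directions are handled identically (the easy direction is immediate, and the harder one goes via Claim~\ref{claim:non-full-to-full} to get a full local transduction, then uses the canonical equivariant $\lambda$-epimorphism from Claim~\ref{claim:syntatic-semigroup-equivariant}, observes orbit-finiteness of $S_f$ as an equivariant image, and transports locality forward with Lemma~\ref{lem:lambda-morphism-backwards-forwards}). If anything, your write-up is slightly more careful than the paper's in explicitly verifying that restricting to the full subsemigroup preserves locality.
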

\begin{proof}
    The ``only if'' part is immediate. Let us focus on the ``if'' part:
    Let $(S, h, \lambda)$ be the local semigroup transduction that recognizes $f$. 
    As described in Claim~\ref{claim:non-full-to-full}, we can transform
    it into an equivalent full $(S', h', \lambda')$. It is not hard to see 
    that $(S', h', \lambda')$ is local as well. 
    Let $(S_f, h_f, \lambda_f)$ be the syntactic semigroup transduction of $f$.
    It follows by Lemma~\ref{lem:syntactic-semigroup-transduction}, 
    and Claim~\ref{claim:syntatic-semigroup-equivariant} that there is
    an equivariant, surjective $\lambda$-morphism
    $k : (S_1, \lambda_1) \eqto (S_f, \lambda_f)$. Observe 
    that $S'$ is orbit-finite (because it is local).
    Since $k$ is surjective, 
    it follows that $S_f$ is orbit-finite as well.  
    By Lemma~\ref{lem:lambda-morphism-backwards-forwards}, $\lambda_f$ is local. 
\end{proof}

Interestingly, instead of checking whether the syntactic semigroup transduction of a function $f$
is local, it suffices to check whether \emph{some} full semigroup transduction that recognizes 
$f$ is local. This is formalized by the following lemma which follows directly from 
combining Lemma~\ref{lem:local-syntactic-iff} with
Lemmas~\ref{lem:lambda-morphism-backwards-forwards}~and~\ref{lem:syntactic-semigroup-transduction}:
\begin{lemma}
\label{lem:some-local-all-local}
    If $f : \Sigma^* \to \Gamma^*$ is recognized by some local semigroup transduction, 
    then all full semigroup transductions that recognize $f$ are local. 
\end{lemma}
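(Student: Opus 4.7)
The plan is to combine the three cited lemmas via the universal property of the syntactic semigroup transduction. Suppose $f : \Sigma^* \to \Gamma^*$ is recognized by some local semigroup transduction. Since locality entails orbit-finiteness and equivariance, $f$ itself is an equivariant, length-preserving, future-independent function, so Lemma~\ref{lem:syntactic-semigroup-transduction} produces the syntactic semigroup transduction $(S_f, h_f, \lambda_f)$, and Claim~\ref{claim:syntatic-semigroup-equivariant} guarantees that it is equivariant. Applying Lemma~\ref{lem:local-syntactic-iff} to the given local implementation of $f$, I obtain that $(S_f, h_f, \lambda_f)$ is itself local, in particular orbit-finite.

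Now pick any full (equivariant, orbit-finite) semigroup transduction $(S', h', \lambda')$ that recognizes $f$. By the universal property in Lemma~\ref{lem:syntactic-semigroup-transduction} together with Claim~\ref{claim:syntatic-semigroup-equivariant}, there is an equivariant $\lambda$-epimorphism
\[
 g \,:\, (S', \lambda') \, \eqto \, (S_f, \lambda_f).
\]
Both coloured semigroups are equivariant and orbit-finite, so the hypotheses of Lemma~\ref{lem:lambda-morphism-backwards-forwards} are satisfied. Its easier direction (the $(\Leftarrow)$-implication, which in fact does not even need surjectivity of $g$) transports locality backwards along $g$: from $\lambda_f$ local we conclude that $\lambda'$ is local, hence $(S', h', \lambda')$ is a local semigroup transduction, as required.

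The argument is essentially a bookkeeping exercise, and I do not expect any genuine obstacle. The only points requiring care are (i) checking that the two pieces of Lemma~\ref{lem:lambda-morphism-backwards-forwards} are applied in the right direction --- locality travels from $\lambda_f$ (the quotient) back up to $\lambda'$ via $g$, which is exactly the direction not needing surjectivity --- and (ii) making explicit that equivariance of $g$ comes from Claim~\ref{claim:syntatic-semigroup-equivariant}, so that the orbit-finite hypothesis of Lemma~\ref{lem:lambda-morphism-backwards-forwards} is met by $(S_f, \lambda_f)$ (which inherits orbit-finiteness from locality via Lemma~\ref{lem:local-syntactic-iff}) and by $(S', \lambda')$ (by our standing assumption on the full transductions under comparison).
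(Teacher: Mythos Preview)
Your argument is correct and is precisely the combination the paper intends: use Lemma~\ref{lem:local-syntactic-iff} to get locality of the syntactic transduction, then pull it back along the equivariant $\lambda$-epimorphism from Lemma~\ref{lem:syntactic-semigroup-transduction} via the $(\Leftarrow)$-direction of Lemma~\ref{lem:lambda-morphism-backwards-forwards}. The paper's own proof is a single sentence citing these three lemmas; you have simply unpacked that sentence.
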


\noindent
We are now ready to prove Theorem~\ref{thm:local-containment}.
As noted in the introduction to this section, in the proof 
we are going to use an informal intuition of what it means
to be a computable function over sets with atoms.
For a formal definition, see \cite[Chapter~8]{bojanczyk2013nominal}.
\begin{proof}[Proof of Theorem~\ref{thm:local-containment}] We are given 
an orbit-finite semigroup transduction $(S, h, \lambda)$ and we want to check 
if it can be implemented as a local semigroup transduction. Thanks to
Lemma~\ref{lem:some-local-all-local} we can do this in the following two steps:
\begin{enumerate}
    \item In the first step, we use the construction from Claim~\ref{claim:non-full-to-full} 
          to compute an equivalent full-semigroup transduction $(S', h', \lambda')$.
          This can be done using the following fixpoint algorithm.
          Initiate $S_0' = h(\Sigma)$, and keep computing $S_{i+1}'$ as follows:
          \[ S_{i+1}' = S_i' \cup \{x \cdot y \ |\ x, y \in S_i'\}\comma\]
          until $S_{i}' = S_{i+1}'$. When this process stabilizes, set $S' := S_{i}'$.\\

          To see that this procedure terminates, 
          observe that if $S_{i}$ and $S_{i+1}$ have an equal number of orbits,
          then $S_{i} = S_{i+1}$. Consequently, the running time of this procedure 
          is limited by the number of orbits in $S$. 
    \item In the second step, we check if $(S', h', \lambda')$ is local: For every tuple 
          $(x, x', y, z) \in S^4$ such that $yz$ is an idempotent and
          $x$ and $x'$ belong to the same $\supp(yz)$-orbit, we verify that:
          \[
            \lambda(x y z ) = \lambda(x'yz)
          \]
          Since $\lambda$ and $S$ are equivariant, 
          it suffices to check this condition for only one representative of every orbit in $S^4$.
          By Lemma~\ref{lem:of-preserved}, $S^4$ is orbit-finite, which means that we can do this
          in finite time.
\end{enumerate}
\end{proof}

We conclude our discussion of local semigroup transition with the following lemma, which
underlines the connection between local semigroup transduction and single-use Mealy machines:
\begin{lemma}
\label{lem:full-single-use-Mealy-local-iff}
    A full semigroup transduction over polynomial orbit-finite alphabets is 
    equivalent to some single-use Mealy machine, if and only if it is local. 
\end{lemma}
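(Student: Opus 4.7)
The plan is to derive this lemma as a more-or-less immediate consequence of the two results already in hand, namely Theorem~\ref{lem:single-use-mealy-monoid-transduction} (local semigroup transductions over polynomial orbit-finite alphabets compute exactly the single-use Mealy machine functions) and Lemma~\ref{lem:some-local-all-local} (if any local semigroup transduction computes $f$, then every full semigroup transduction computing $f$ is local). The bulk of the real mathematical work has already been done; what remains is to assemble the pieces.

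For the ($\Leftarrow$) direction, suppose the full semigroup transduction $(S, h, \lambda)$ is local. By Definition~\ref{def:local-monoid-transduction} this means in particular that $S$ is orbit-finite and $h$, $\lambda$ are equivariant, so $(S, h, \lambda)$ is a local semigroup transduction in the sense of Theorem~\ref{lem:single-use-mealy-monoid-transduction}. That theorem then directly produces an equivalent single-use Mealy machine.

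For the ($\Rightarrow$) direction, suppose $(S, h, \lambda)$ is a full semigroup transduction that computes the same function $f$ as some single-use Mealy machine $\mathcal{A}$. Applying the other half of Theorem~\ref{lem:single-use-mealy-monoid-transduction} to $\mathcal{A}$, we obtain a local semigroup transduction $(S', h', \lambda')$ that also computes $f$. Thus $f$ is recognized by at least one local semigroup transduction. Now invoke Lemma~\ref{lem:some-local-all-local}: this forces every full semigroup transduction computing $f$ to be local. Since $(S, h, \lambda)$ is full by assumption and computes $f$, it must be local.

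The only genuine obstacle is a bookkeeping one rather than a mathematical one: the statement of Theorem~\ref{lem:single-use-mealy-monoid-transduction} is restricted to polynomial orbit-finite alphabets, and the hypothesis of Lemma~\ref{lem:some-local-all-local} requires $f$ to be equivariant. The polynomial orbit-finite assumption is granted by the hypothesis of the present lemma, and equivariance of $f$ follows because $(S, h, \lambda)$ is an equivariant semigroup transduction (both $h$ and $\lambda$ are equivariant by Definition~\ref{def:local-monoid-transduction} and by the standing convention on semigroup transductions used in this section), so the word-to-word function it defines is equivariant. With these two routine checks in place, the argument closes with no further work.
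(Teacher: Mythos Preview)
Your proof is correct and follows exactly the same route as the paper's own argument: the ``if'' direction is immediate from Theorem~\ref{lem:single-use-mealy-monoid-transduction}, and the ``only if'' direction combines Theorem~\ref{lem:single-use-mealy-monoid-transduction} (to obtain \emph{some} local transduction for $f$) with Lemma~\ref{lem:some-local-all-local} (to force the given full transduction to be local). The additional bookkeeping remarks you include about equivariance and the polynomial orbit-finite hypothesis are accurate and harmless, though the paper leaves them implicit.
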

\begin{proof}
    The ``if'' part follows directly from Lemma~\ref{lem:single-use-mealy-monoid-transduction}.
    Let us focus on the ``only if'' part:
    Take a full $(S, h, \lambda)$ that is equivalent to some single-use Mealy machine. 
    It follows by Lemma~\ref{lem:single-use-mealy-monoid-transduction} that $(S, h, \lambda)$ 
    it is equivalent to some local semigroup transduction.
    By Lemma~\ref{lem:some-local-all-local} it follows that $(S, h, \lambda)$ is local itself.
\end{proof}

A corollary of Theorem~\ref{thm:local-containment} and Lemma~\ref{lem:full-single-use-Mealy-local-iff}
is that it is decidable whether an orbit-finite semigroup transduction can be translated into a single-use Mealy machine.

\section{Rational transductions with atoms and their Krohn-Rhodes decompositions}
\label{sec:rational-transductions-with-atoms}
In this section, we discuss rational transduction (i.e. the class of transductions 
defined by unambiguous Mealy machines -- see the introduction to this chapter for details)
and their possible extension to polynomial orbit-finite alphabets. The main result of this section 
is a Krohn-Rhodes-like decomposition theorem for this extension. 
Apart from being of its own significance, this result plays an important role in the next chapter.\\

One possible approach to define rational transductions over polynomial orbit-finite alphabets 
would be to use \emph{unambiguous single-use Mealy machines}.
However, unambiguity is a form of nondeterminism, 
and so far we do not have a good notion of nondeterminism compatible with
the single-use restriction (see Section~\ref{sec:non-determinsitic-single-use} for 
details\footnote{\label{ftn:unambigous}
It is worth noting that both examples
    from Section~\ref{sec:non-determinsitic-single-use},
    which demonstrate that nondeterministic single-use automata are stronger than deterministic ones,
    use automata that are ambiguous (which means that some accepted words will always have more than one accepting run).
    It follows that those examples cannot be used to show that unambiguous automata are stronger than deterministic ones.
    In fact, the question of whether unambiguous nondeterministic automata are equivalent
    to deterministic single-use automata remains open. If the two models 
    turned out to be equivalent, it would open a path to a machine-based definition of 
    single-use rational transductions.
}).
For this reason, we define rational transductions with atoms using an  algebraic approach.
Before discussing the definition for infinite alphabets, we start by recalling the classic algebraic definition\footnote{I was unable to find this definition in the literature.
However, it is consistent with the field's folklore, as it can be viewed as a semigroup version of Eilenberg's bimachine \cite[Section~XI.7]{eilenberg1974automata}.}
for rational transductions over \emph{finite} alphabets:
\begin{definition}
\label{def:rational-monoid-transduction}
    A \emph{rational semigroup transduction} of type $\Sigma^* \to \Gamma^*$
    (for finite $\Sigma$ and $\Gamma$) consists of a finite semigroup $S$,
    an input function $h : \Sigma \to S$, and an output function $\lambda$: 
    \[ \lambda : \underbrace{(S + \vdash)}_{\substack{\textrm{prefix product}\\ \textrm{$\vdash$ represents empty prefix}}} \times
                 \underbrace{\Sigma}_{\textrm{current letter}} \times
                 \underbrace{(S + \dashv)}_{\substack{\textrm{suffix product}\\ \textrm{$\dashv$ represents empty suffix}}} \to
                 \underbrace{\Gamma}_{\textrm{output letter}} \]
    The rational semigroup transduction defines the function $f : \Sigma^* \to \Gamma^*$,
    where the $i$th letter of $f(w)$ is equal to:
    \[ \lambda(h(w_1) \cdot \ldots \cdot h(w_{i-1}),\ w_i,\ h(w_{i+1}) \cdot \ldots \cdot h(w_{n})) \]
\end{definition}
\begin{example}
\label{ex:swap-first-last-rational}
    For example, let us construct a rational semigroup transduction of type $\Sigma^* \to \Sigma^*$ that
    defines the function:
    \[ \textrm{``Swap the first and the last letter''}\]
    The transduction is based on the semigroup $S = \Sigma^2$, with 
    the following operation:
    \[ (x_1, y_1) \cdot (x_2, y_2) = (x_1, y_2) \]
    The input and output functions of the transduction are as follows:
    \[
        \begin{tabular}{cc}
            $h(a) = (a, a)$ &
            $\lambda(p, a, s) = \begin{cases}
                y_s & \textrm{if } p =\, \dashv \textrm { and } s = (x_s, y_s)\\
                x_p & \textrm{if } s =\, \vdash \textrm { and } p = (x_p, y_p)\\
                a   & \textrm{otherwise}
            \end{cases}$
        \end{tabular}
    \]
\end{example}
For finite alphabets, rational semigroup transductions define the class of rational transductions:
\begin{lemma}
    \label{lem:rational-equiv-unambigous}
    Rational semigroup transductions define the same class of functions as unambiguous Mealy machines. 
\end{lemma}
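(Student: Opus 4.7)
The plan is to prove both inclusions by routing through the Elgot–Mezei decomposition: an unambiguous Mealy machine is the same thing as a composition (left-to-right Mealy machine)~$\circ$~(right-to-left Mealy machine). So for one direction I only need to build such a two-pass composition from a rational semigroup transduction, and for the other direction I only need to package the behaviour of an unambiguous Mealy machine into a finite semigroup together with suitable $h$ and~$\lambda$.

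For the inclusion $(\text{rational semigroup transductions}) \subseteq (\text{unambiguous Mealy machines})$, I take a rational semigroup transduction $(S,h,\lambda)$ and build it as a composition of two deterministic Mealy machines. First a right-to-left Mealy machine $\mathcal{B}_R$ with state set $S + \dashv$ reads the word from the right, maintains at each point the suffix product $s_i = h(w_{i+1})\cdots h(w_n)$ in its state (starting in~$\dashv$) and outputs the pair $(w_i, s_i)$. Then a left-to-right Mealy machine $\mathcal{B}_L$ with state set $S + \vdash$ reads the output of $\mathcal{B}_R$, maintains in its state the prefix product $p_i = h(w_1)\cdots h(w_{i-1})$ (starting in~$\vdash$), and on input $(w_i, s_i)$ outputs $\lambda(p_i, w_i, s_i)$ and updates its state to $p_i \cdot h(w_i)$. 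By construction the composition $\mathcal{B}_L \circ \mathcal{B}_R$ computes exactly the function defined by $(S,h,\lambda)$. By the Elgot–Mezei theorem, which I cite from the introduction to this chapter, such a composition is computed by an unambiguous Mealy machine.

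For the inclusion $(\text{unambiguous Mealy machines}) \subseteq (\text{rational semigroup transductions})$, I take an unambiguous Mealy machine $\mathcal{A}$ with finite state set $Q$, initial states $I \subseteq Q$, accepting set $F \subseteq Q$, and transition relation $\delta \subseteq Q \times \Sigma \times Q \times \Gamma$. I set $S$ to be the finite semigroup of binary relations on~$Q$ under relational composition, and define $h(a) \in S$ to be the one-step reachability relation $\{(q,q') : \exists b.\ (q,a,q',b) \in \delta\}$. Then for any word $w$ the product $h(w_1)\cdots h(w_k)$ is the relation of pairs of states connected by some run on $w$. To define $\lambda$ on a triple $(P, a, R)$ where $P$ is the prefix product (or $\vdash$) and $R$ is the suffix product (or $\dashv$), I look for the unique quadruple $(q_0, q_1, q_2, q_3)$ with $q_0 \in I$, $(q_0,q_1) \in P$ (or $q_0 = q_1$ if $P = {\vdash}$), some transition $(q_1, a, q_2, b) \in \delta$, $(q_2,q_3) \in R$ (or $q_2 = q_3$ if $R = {\dashv}$), and $q_3 \in F$; I then set $\lambda(P, a, R) = b$. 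Unambiguity of $\mathcal{A}$ guarantees that such a quadruple exists and is unique for every input word and every position, so $\lambda$ is well defined on the triples that actually arise, and on any remaining triples we fix an arbitrary value.

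The main obstacle I expect is not conceptual but bookkeeping: the mismatch between unambiguous Mealy machines (which have accepting states and a priori define a partial function that happens to be total under the unambiguity assumption) and rational semigroup transductions (which define a total length-preserving function via $\lambda$). I will handle this by using the unambiguity assumption exactly once, to argue that for every length and every input word there is a unique accepting run, which is precisely what pins down a unique output letter at each position and hence a well-defined total $\lambda$. Boundary cases where a position is first or last are handled uniformly by the $\vdash$ and $\dashv$ markers, which stand in for the empty prefix and empty suffix products.
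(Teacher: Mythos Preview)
Your proof is correct and follows essentially the same approach as the paper: for $\subseteq$ you decompose the rational semigroup transduction into a right-to-left pass computing suffix products followed by a left-to-right pass computing prefix products and applying $\lambda$ (the paper splits this into three passes rather than two, a cosmetic difference), and for $\supseteq$ you use the finite semigroup of binary relations on $Q$ as behaviours, exactly as the paper does. The only remark worth making is that in the $\supseteq$ direction you should be explicit that the uniqueness of the quadruple $(q_0,q_1,q_2,q_3)$ together with the output letter $b$ follows from unambiguity because two transitions $(q_1,a,q_2,b)$ and $(q_1,a,q_2,b')$ with $b\neq b'$ would yield two distinct accepting runs; the paper leaves this implicit as well.
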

\begin{proof}
    $\subseteq$: Observe that a rational semigroup transduction can be computed by a composition of 
    three deterministic right-to-left and left-to-right Mealy machines --
    the first one (right-to-left) computes the semigroup product of each suffix, 
    the second one computes the product of each prefix (left-to-right), and 
    the third one (right-to-left or left-to-right) computes the output letters.
    This finishes the proof, because 
    by reasoning similar as in Lemma~\ref{lem:su-mealy-compose} unambiguous Mealy 
    machines are closed under compositions with both left-to-right and right-to-left 
    Mealy machines.\\

    $\supseteq$: In order to translate an unambiguous Mealy $\mathcal{A}$, 
    we can use the following semigroup of behaviours. The behaviour of 
    a word $\Sigma^*$ is the following relation $b_w \subseteq Q \times Q$: 
    \[ \begin{tabular}{ccc}
        $(q_1, q_2) \in b_w$ & $\Leftrightarrow$ & $\substack{\mathcal{A} \textrm{ has a run over } w\\
                                                \textrm{that enters in } q_1 \textrm{ on the left,}\\
                                                \textrm{and exits in } q_2 \textrm{ on the right}}$
    \end{tabular} \]
    Similarly, as it was for the behaviour functions, the behaviour of concatenation is a composition of behaviours:
    $b_{uv} =  b_v \circ b_u$. This means that the set of all behaviours forms a semigroup. It is not hard to see that 
    the $i$-th output letter can be computed based on the behaviour of the $(i-1)$-th prefix, 
    the behaviour of the $(i + 1)$-th suffix, and the $i$-th input letter (the unambiguity restriction guarantees 
    that there is only one possible output letter). 
\end{proof}

Now let us define \emph{local rational semigroup transductions}, which is an extension
of rational semigroup transductions to orbit-finite alphabets. To the best of my knowledge, 
this definition is an original contribution of this thesis.
Similarly as in Definition~\ref{def:local-monoid-transduction}, the key idea is to restrict 
the power of $\lambda$ with a locality equation:

\begin{definition}
\label{def:local-rational-monoid-transduction}
    A $\emph{local rational semigroup transduction}$ is a version of rational semigroup transduction, 
    where $\Sigma$, $\Gamma$, and $S$ are orbit-finite, $h$ and $\lambda$ are equivariant, 
    and $\lambda$ satisfies the following locality equation
    for all $x_1, x_2, y_1, y_2, e \in S$ such that $e$ is an idempotent,
    for all $a \in \Sigma$, and for all $\supp(e)$-permutations $\pi$:
    \[ e = y_1 h(a) x_2 \ \ \Rightarrow \ \ \lambda(x_1 \cdot e \cdot y_1,\ a,\ x_2 \cdot e \cdot y_2) = \lambda(\pi(x_1) \cdot e \cdot y_1,\ a,\ x_2 \cdot e \cdot \pi(y_2)) \]
    A local rational monoid transduction defines a function $\Sigma^* \eqto \Gamma^*$
    (defined in the same way as in Definition~\ref{def:rational-monoid-transduction}).
    It is worth pointing out that the locality restriction does not restrict the values 
    $\lambda(\vdash, \cdot, \cdot)$ or $\lambda(\cdot, \cdot, \dashv)$. The intuitive reason for this is that 
    those values are computed only once,
    and only repetitive behaviours create obstacles 
    for the single-use restriction.
\end{definition}
\begin{example}
\label{ex:local-rational-monoid-transduction}
The transduction ``swap the first and the last letter'' from Example~\ref{ex:swap-first-last-rational}
is a local rational semigroup transduction for every orbit-finite alphabet $\Sigma$.
Since the locality restriction only talks about situations where both prefix and suffix 
are real semigroup elements (and not $\dashv$ or $\vdash$), the locality restriction is trivially satisfied:
\[ \lambda(x_1 \cdot e \cdot y_1, a, x_2 \cdot e \cdot y_2) = a = \lambda(\pi(x_1) \cdot e \cdot y_1, a, x_2 \cdot e \cdot \pi(y_2))\tdot\]
\end{example}

\noindent
Finally, let us explore the relationship between semigroup transductions and  local rational semigroup transductions:
\begin{claim}
\label{claim:rational-future-independent-one-way}
    The class of local rational semigroup transductions that are future independent
    (see Definition~\ref{def:future-independence})
    is equal to the class of local semigroup transactions.\\
\end{claim}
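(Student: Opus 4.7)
The plan is to prove both inclusions separately, with the reverse direction being the more delicate one.

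For the forward inclusion (local semigroup transductions are future-independent local rational semigroup transductions), given a local semigroup transduction $(S, h, \lambda)$ I will keep the same semigroup $S$ and input function $h$, and define the rational output function by $\lambda'(p, a, s) := \lambda(p \cdot h(a))$, with the convention that $p \cdot h(a) = h(a)$ when $p = \vdash$. This function manifestly ignores $s$, so the resulting rational transduction is future independent, and a direct computation shows it produces the same output letters as $(S, h, \lambda)$. To verify the rational locality equation in Definition~\ref{def:local-rational-monoid-transduction} for this $\lambda'$, I note that the hypothesis $e = y_1 h(a) x_2$ makes $y_1 h(a)$ a prefix of the idempotent $e$, so applying the semigroup locality of $\lambda$ to the triple $(x_1, e, y_1 h(a))$ and the $\supp(e)$-permutation $\pi$ gives exactly $\lambda(x_1 e y_1 h(a)) = \lambda(\pi(x_1) e y_1 h(a))$, which is the required equation.

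For the reverse inclusion, starting from a future-independent local rational semigroup transduction $(S, h, \lambda')$ I will construct an equivalent local semigroup transduction $(S', h', \lambda'')$ by setting $S' = S^1 \times \Sigma$ (using the adjoined-identity monoid $S^1$ from Claim~\ref{claim:semigroup-to-monoid}) with semigroup operation $(p_1, a_1) \cdot (p_2, a_2) := (p_1 \cdot h(a_1) \cdot p_2,\; a_2)$, together with $h'(a) = (1, a)$ and $\lambda''(p, a) := \lambda'(p, a, \dashv)$ (reading $1$ as $\vdash$). Associativity of the operation is immediate from associativity of $S^1$, and $S'$ is orbit-finite and equivariant since $S$ and $\Sigma$ are. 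A short induction shows $h'(w_1) \cdots h'(w_i) = (h(w_1) \cdots h(w_{i-1}),\; w_i)$, so by future independence the two transductions compute the same function.

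The main obstacle is verifying the semigroup locality of $\lambda''$. Writing $x_1 = (P_1, a_1)$, $e = (P_e, a_e)$, $y_1 = (P_y, a_y)$, and $z = (P_z, a_z)$ with $e = y_1 z$, idempotency of $e$ in $S'$ becomes $P_e = P_e h(a_e) P_e$, which forces $e' := P_e h(a_e)$ to be idempotent in $S$, while the prefix relation gives $P_e h(a_e) = P_y h(a_y) P_z h(a_z)$ and $a_z = a_e$. The key trick is to apply the rational locality of $\lambda'$ with the decomposition $x_1' = P_1 h(a_1)$, $e' = P_e h(a_e)$, $a' = a_y$, $y_1' = e' \cdot P_y$, and $x_2' = P_z h(a_z)$; here setting $y_1' = e' \cdot P_y$ rather than $P_y$ is what makes the choice uniform when $P_y = 1$ versus $P_y \in S$ (so that $y_1' \in S$ in every case). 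The identity $y_1' h(a') x_2' = e' \cdot e' = e'$ follows from idempotency, and $x_1' e' y_1' = P_1 h(a_1) \cdot e' \cdot e' \cdot P_y = P_1 h(a_1) P_e h(a_e) P_y$ reproduces the first coordinate of $x_1 e y_1$; since $\supp(e') \subseteq \supp(e)$, the same $\pi$ qualifies as a $\supp(e')$-permutation. Applying rational locality (with any choice of $y_2'$) and then invoking future independence to replace the suffix argument by $\dashv$ yields the required equality $\lambda''(x_1 e y_1) = \lambda''(\pi(x_1) e y_1)$.
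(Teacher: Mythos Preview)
Your proof is correct and follows essentially the same approach as the paper. Both directions use the same constructions: for semigroup-to-rational you set $\lambda'(p,a,s)=\lambda(p\cdot h(a))$ and use that $y_1 h(a)$ is a prefix of $e$; for rational-to-semigroup you build $S'=S^1\times\Sigma$ with the operation $(p_1,a_1)\cdot(p_2,a_2)=(p_1 h(a_1) p_2,a_2)$, exactly as the paper does. In fact your locality verification for the reverse direction is more detailed than the paper's (which only says ``using a similar idea as in the proof of $\subseteq$''); your trick of taking $y_1'=e'\cdot P_y$ to handle the case $P_y=1$ uniformly is a nice way to fill in what the paper leaves implicit.
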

\begin{proof}
    We start the proof by observing that a rational transduction is future independent
    if and only if for every $x$, $a$, $y$, $y'$, it holds that:
    \[ \lambda(x, a, y) = \lambda(x, a, y')\]
    With this observation, we are ready to prove the claim.
    The $\subseteq$ inclusion is easy: In order to transform a local
    semigroup transduction into a local rational semigroup transduction,
    we can use the following output function:
    \[ \lambda'(x, a, y) = \lambda(x \cdot h(a))\]
    It is easy to see that $\lambda'$ is future independent. Moreover, locality of 
    $\lambda$ implies the locality of $\lambda'$. This is because
    if $y_1 \cdot h(a) \cdot y_2 = e$, then $y_1 \cdot h(a)$ is a prefix of $e$.\\

    The proof of $\supseteq$ is similar. The main difference is that we need to define 
    another semigroup $S'$ that keeps track of the last letter of a word. We define it as
    $S' = (S + 1) \times \Sigma$ with the following operation:
    \[ (x_1,\ a_1) \cdot (x_2,\ a_2) = (x_1 \cdot h(a_1) \cdot x_2,\ a_2) \]
    Now, we define $h'(a) = (1, a)$ and $\lambda'((x, a)) = \lambda(x, a, \dashv)$
    (or $\lambda(\vdash, a, \dashv)$ if $x = 1$).
    Thanks to the future independence of $\lambda$, we know that $(S', h', \lambda')$ defines 
    the same function as $(S, h, \lambda)$. Moreover, using a similar idea as in the 
    proof of $\subseteq$, we can show that $\lambda'$ satisfies the locality equation.
\end{proof}

\subsection{Rational Krohn-Rhodes decompositions}
In this section, we formulate and prove a version of the Krohn-Rhodes theorem for 
local semigroup transductions. Observe that all classical prime functions 
(see Theorem~\ref{thm:classical-kr}) and single-use prime functions (see Theorem~\ref{thm:kr}) 
except for the homomorphisms are left-to-right oriented. 
The rational version of Krohn-Rhodes theorem extends the set of prime functions
with their right-to-left counterparts. Let us start with the classical version of the 
theorem for finite alphabets\footnote{
    The theorem belongs to folklore. It follows immediately from the Elgot-Mezei theorem
    (\cite[Theorem~7.8]{elgot1963two}, see introduction to this chapter for details) 
    combined with the Krohn-Rhodes theorem (Theorem~\ref{thm:classical-kr}). 
}:
\begin{theorem}
\label{thm:classical-rational-kr}
    The class of rational transductions (over finite alphabets) is equal to the smallest 
    class closed under $\circ$ and $\times$, that contains the following \emph{rational prime functions}:
    \begin{enumerate}
        \item Length-preserving homomorphism $h^* : \Sigma^* \to \Gamma^*$, for every $h : \Sigma \to \Gamma$,
              where $\Sigma$ and $\Gamma$ are finite.
        \item Left-to-right multiple-use bit propagation (from Example~\ref{ex:flip-flop}) and 
              right-to-left multiple-use bit propagation (analogous);
        \item The $G$-prefix function (from Example~\ref{ex:monoid-prefix}) and the $G$-suffix\footnote{Actually, 
        it can be shown that we do not need the $G$-suffix function, as we can derive it from the other rational primes.
        However, the $G$-suffixes function does not cause any problems later on,
        and keeping it makes the formulation more symmetrical.} function (analogous)
              for every finite group $G$. 
    \end{enumerate}
\end{theorem}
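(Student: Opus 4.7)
The plan is to prove the two inclusions separately, leveraging both the classical Krohn-Rhodes theorem (Theorem~\ref{thm:classical-kr}) and the Elgot-Mezei decomposition of rational functions into a right-to-left Mealy machine followed by a left-to-right Mealy machine, exactly as suggested by the footnote.

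For the easy inclusion ($\supseteq$), I would first check that every rational prime is itself a rational transduction: the homomorphisms and $G$-prefix functions are already (deterministic) Mealy machines, hence rational; the $G$-suffix function and the right-to-left multiple-use bit propagation are right-to-left Mealy machines, which are rational by Elgot-Mezei (compose with the identity). Then I would show that rational transductions are closed under sequential composition (this is a known property of rational letter-to-letter functions, provable by the same behaviour-semigroup argument used in Lemma~\ref{lem:rational-equiv-unambigous}) and under parallel composition (a straightforward product of unambiguous Mealy machines, essentially as in Lemma~\ref{lem:su-mealy-parallel}).

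For the hard inclusion ($\subseteq$), I would take a rational transduction $f : \Sigma^* \to \Gamma^*$ and apply Elgot-Mezei to write it as $f = f_L \circ f_R$, where $f_L$ is a deterministic left-to-right Mealy machine and $f_R$ is a deterministic right-to-left Mealy machine. The classical Krohn-Rhodes theorem (Theorem~\ref{thm:classical-kr}) immediately decomposes $f_L$ into a composition of classical primes, all of which are left-to-right rational primes. For $f_R$, I would apply the classical Krohn-Rhodes theorem to its ``mirror image'' (the left-to-right Mealy machine obtained by reversing inputs and outputs), obtaining a decomposition into classical left-to-right primes; then reversing this decomposition position-by-position yields a decomposition of $f_R$ into the right-to-left counterparts of those primes, which are exactly included in the list of rational primes. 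Combining the two decompositions via sequential composition and closing under $\circ$ gives the desired decomposition of $f$.

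The main obstacle I anticipate is giving a clean formal argument that reversing a decomposition of a Mealy machine into primes produces a decomposition of the reversed machine into reversed primes. The subtlety is that $\circ$ and $\times$ both interact with the reversal operation in a natural but slightly fiddly way: reversal distributes over $\times$, but reverses the order in $\circ$, and one must check that each individual classical prime has an appropriate reversed counterpart already in the rational-prime list (homomorphisms are self-dual, bit propagation and $G$-prefix functions each have a named right-to-left version). Once this bookkeeping is done, the theorem reduces mechanically to Theorem~\ref{thm:classical-kr} plus Elgot-Mezei, with no further combinatorial content.
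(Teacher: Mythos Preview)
Your proposal is correct and matches the paper's own treatment exactly: the paper does not give a detailed proof but simply remarks in a footnote that the theorem is folklore and follows immediately from the Elgot-Mezei theorem combined with the classical Krohn-Rhodes theorem (Theorem~\ref{thm:classical-kr}). The bookkeeping you flag about reversal distributing over $\times$ and $\circ$ is routine and is the only content beyond citing those two results.
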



The following Krohn-Rhodes theorem for local rational semigroup transductions is the 
main result of this section. To the best of my knowledge, it is an original 
contribution of this thesis.
\begin{theorem}
    \label{thm:rational-kr}
        The class of local rational semigroup transductions over polynomial orbit-finite alphabets\footnote{
            \label{ftn:kr-semigroups-general-rational}
            Note that local rational semigroup transductions are defined for all orbit-finite alphabets,
            but the theorem only holds for polynomial orbit-finite alphabets.
            A counterexample is the single-use propagation 
            of $\binom{\mathbb{A}}{2}$,
            which can be constructed as a local rational semigroup transduction but not as a composition of single-use rational primes.
            One way to address this issue would be to extend the set of single-use rational primes with the generalized
            single-use propagation for every orbit-finite $\Sigma$
            (i.e. an extended version of the function from Claim~\ref{claim:gen-su-prop}). However, 
            the current proof of the theorem only works for polynomial orbit-finite alphabets,
            leaving the question of whether compositions of these generalized primes are equivalent to local rational semigroup transductions over orbit-finite alphabets open.
            A similar (but simpler) open question can also be asked about Krohn-Rhodes decomposition of local semigroup transductions.
        }
        is equal to the smallest class closed under $\circ$ and $\times$,
        that contains the following \emph{single-use rational prime functions}:
        \begin{enumerate}
            \item All rational prime functions over finite alphabets (from Theorem~\ref{thm:classical-rational-kr});
            \item Length-preserving equivariant homomorphism $h^* : \Sigma^* \to \Gamma^*$, for every equivariant
                  $h : \Sigma \eqto \Gamma$, where $\Sigma$ and $\Gamma$ are polynomial orbit-finite;
            \item Left-to-right single-use atom propagation (Example~\ref{ex:su-propagation})
                  and right-to-left single-use atom propagation (analogous).
        \end{enumerate}
\end{theorem}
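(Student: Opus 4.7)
The theorem splits into two inclusions. The easy inclusion says that every composition of single-use rational primes is a local rational semigroup transduction. To prove it, I would verify each prime separately. The classical rational primes from Theorem~\ref{thm:classical-rational-kr} are rational transductions over finite alphabets, so by Lemma~\ref{lem:rational-equiv-unambigous} they arise as rational semigroup transductions, which are vacuously local because all data is atomless. The equivariant homomorphism $h^*$ is realized by the trivial transduction $(S=1,h',\lambda')$ where $\lambda'(\vdash,a,\dashv)=h(a)$. The left-to-right single-use atom propagation is a local semigroup transduction by Example~\ref{ex:semigroup-atom-prop}, and hence a local rational semigroup transduction by the easy direction of Claim~\ref{claim:rational-future-independent-one-way}; the right-to-left version is obtained by reversing. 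Closure under $\times$ is a routine product construction on $(S,h,\lambda)$. Closure under $\circ$ is the only delicate step; following the pattern of Lemma~\ref{lem:su-mealy-compose} and the proof of Lemma~\ref{lem:rational-equiv-unambigous}, I would first show that local rational semigroup transductions are precisely the compositions of a right-to-left local semigroup transduction, a homomorphism, and a left-to-right local semigroup transduction, and then use that both directional classes are closed under composition (Theorem~\ref{thm:kr} plus its mirror).

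For the hard inclusion, the strategy is to reduce to Theorem~\ref{thm:kr} through a two-pass decomposition. Given a local rational semigroup transduction $(S,h,\lambda)$ of type $\Sigma^*\to\Gamma^*$, the output at position $i$ depends on the prefix product $p_i=h(w_1)\cdots h(w_{i-1})$, the current letter $w_i$, and the suffix product $s_i=h(w_{i+1})\cdots h(w_n)$, via $\lambda$. I would compute the output in three stages: first a right-to-left sweep that annotates each position with a finite ``suffix summary'' $\bar s_i\in T$; then a left-to-right sweep that uses $(w_i,\bar s_i)$ as input and outputs the final letter based on the prefix product together with $(w_i,\bar s_i)$; finally the homomorphism that projects to $\Gamma$. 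The first sweep, run on the reversed word, is a left-to-right local semigroup transduction over $\overleftarrow S$ (the opposite semigroup of $S$); by the mirror of Theorem~\ref{thm:kr} it decomposes into right-to-left single-use primes. The second sweep is a left-to-right local semigroup transduction over $S$ on the extended alphabet $\Sigma\times T$; by Theorem~\ref{thm:kr} it decomposes into left-to-right single-use primes. Composing the three stages gives the desired decomposition into single-use rational primes.

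The main obstacle is defining the suffix summary $T$ and its output function $\lambda_{\mathrm{suf}}:\overleftarrow S\to T$ so that (a) $T$ is polynomial orbit-finite, (b) $\lambda_{\mathrm{suf}}$ is local with respect to $\overleftarrow S$, and (c) the second-stage output function, defined from $\lambda$ by reading the suffix summary instead of the suffix product, is local with respect to $S$. For (a) and (b), I would take $\sim$ to be the largest semigroup congruence on $S$ coarsening the relation $y\sim_0 y'$ iff $\lambda(x,a,y)=\lambda(x,a,y')$ for all $x\in S+\vdash$ and $a\in\Sigma$, and set $T=S/{\sim}$ (passing to a polynomial orbit-finite representation via Lemma~\ref{lem:pof-represenation}). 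The locality of $\lambda_{\mathrm{suf}}$ on $\overleftarrow S$ — that is, the implication that $\pi(y_2)\cdot e\cdot y_1\sim y_2\cdot e\cdot y_1$ for idempotent $e$, $y_1$ a suffix of $e$, and $\pi$ a $\supp(e)$-permutation — is exactly the content of the locality equation of $\lambda$ applied with empty $x_2$-slot (i.e.\ specialized so that $x_2$ is absent and $e=y_1 h(a)$). Orbit-finiteness of $T$ follows because $\sim$ is an equivariant congruence on the orbit-finite $S$ (Lemma~\ref{lem:of-preserved}), and the representation step only introduces extra polynomial structure.

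For (c), the second-stage semigroup transduction uses $(S,h',\lambda')$ with $h'(a,t)=h(a)$ and $\lambda'(x,(a,t),z)=\lambda(x,a,\tilde t)$, where $\tilde t$ is an element of the $\sim$-class encoded by $t$; note $\lambda'$ ignores its $z$-argument, which makes its locality equation follow directly from the prefix half of the original locality equation of $\lambda$. The suffix annotation $t$ carries enough information to determine $\lambda$ by the definition of $\sim$, so the composite reproduces $f$. With these ingredients, Theorem~\ref{thm:kr} applied to each sweep and a final homomorphism gives the desired decomposition into single-use rational primes, completing the proof.
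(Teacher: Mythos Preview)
The main gap is in your hard inclusion. Your two-pass approach—compute a suffix summary right-to-left, then compute the output left-to-right—breaks already on Example~\ref{ex:local-rational-monoid-transduction} (swap the first and last letters over $\Sigma=\atoms$). There $S=\atoms^2$, and since $\lambda(\vdash,a,(x_s,y_s))=y_s$, your relation $\sim_0$ (which quantifies over $x\in S+{\vdash}$) must distinguish elements of $S$ by their second coordinate; hence $T$ contains a copy of $\atoms$ and $\lambda_{\mathrm{suf}}((a,b))$ records $b$. In $\overleftarrow S$ the idempotents are the pairs $(c,c)$, and one computes $\lambda_{\mathrm{suf}}\big((x_1,x_2)\cdot_{\overleftarrow S}(c,c)\cdot_{\overleftarrow S}(y_1,y_2)\big)=x_2$, which changes under a $\{c\}$-permutation moving $x_2$. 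So $\lambda_{\mathrm{suf}}$ is \emph{not} local; concretely, your first pass would have to propagate the last input atom to the first position (and everywhere in between), which no single-use Mealy machine can do (Example~\ref{ex:monoid-products-not-single-use}). Your justification—that locality of $\lambda_{\mathrm{suf}}$ is ``the locality equation of $\lambda$ with empty $x_2$-slot''—fails precisely because the rational locality equation (Definition~\ref{def:local-rational-monoid-transduction}) imposes no constraint on $\lambda(\vdash,\cdot,\cdot)$, while your $\sim_0$ uses that case. If you instead exclude $\vdash$ from $\sim_0$, the summary becomes too coarse to recover the output at the first position. The paper in fact flags a closely related decomposition as open (Footnote~\ref{ftn:su-rational-two-mealy}).

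The paper's route is genuinely different: it does not attempt to separate prefix and suffix information into two independent one-directional passes. Instead it builds a smooth factorization \emph{tree} over the input (Lemma~\ref{lem:smooth-factorisation-tree-rational}) and, for each leaf, a \emph{reduced context} that compresses prefix and suffix information simultaneously, forgetting atoms via nested $\poforb$'s at each smooth level. The two-sided locality equation is then applied at each smooth node (Lemma~\ref{lem:rational-core-defines}, Claim~\ref{claim:rational-locality-h-equiv}); crucially, that equation couples the permutation acting on the prefix with the one acting on the suffix, which is exactly why a one-sided summary cannot capture it.

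A smaller issue: your plan for closure of local rational transductions under $\circ$ defers to the characterisation ``local rational $=$ compositions of one-directional locals'', which is essentially the hard inclusion you have not established. The paper avoids this by proving only closure under \emph{pre-composition with each extended prime} (Claims~\ref{claim:rational-precomp-homomorphism}, \ref{claim:rational-precomp-su-prop}, \ref{claim:rational-precomp-fintie-Mealy}), which together with Claim~\ref{claim:seq-comppositions-of-primes} suffices for the easy inclusion.
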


\noindent
The remainder of this section is dedicated to the proof of Theorem~\ref{thm:rational-kr}. 

\subsubsection{Local rational transductions $\subseteq$ Compositions of rational primes}
We begin the proof of Theorem~\ref{thm:rational-kr} by showing that all local rational semigroup transductions
can be constructed as compositions of (single-use rational) primes.
Thanks to Theorem~\ref{thm:kr},
it is enough to show that we can construct every local rational monoid transduction as 
a composition of left-to-right and right-to-left single-use Mealy machines\footnote{
    \label{ftn:su-rational-two-mealy}
    For finite alphabets, it was enough to use one left-to-right 
    and one right-to-left Mealy machine. In contrast, in the proof of Theorem~\ref{thm:rational-kr}, 
    we are going to use multiple left-to-right and multiple right-to-left single-use Mealy machines.
    It remains an open question whether every local rational semigroup transduction can be constructed using
    one left-to-right and one right-to-left single-use Mealy machine.
}. The proof follows the approach of Lemma~\ref{lem:local-monoid-transformation-composition-of-primes}:
First, we construct a smooth factorization of the input sequence, and then we show how to transform 
it into the output of the local rational semigroup transduction. This time, leveraging the more powerful computation model,
we do not construct smooth splits. Instead, we directly
construct smooth factorization trees using the following
\emph{right-aligned encoding} of trees. We illustrate the encoding using the following example:
\picc{Simons-tree}
The key idea of the encoding is to write every node in the position of its rightmost descendant.
To make things cleaner, we also assume that all leaves are on the same depth -- for this reason,
we introduce unary nodes. Here is a right-aligned version of the example above (with the unary nodes marked as triangles):
\picc{simons-tree-ra}
Observe that in a right-aligned tree of height $h$, each position contains at most $h$ nodes.
It follows that we can encode right-aligned trees as words over the following alphabet
(where $\Sigma$ is a polynomial orbit-finite representation of the semigroup):
\[ {\left( \underbrace{\Sigma}_{\textrm{leaf}} + \underbrace{\Sigma}_{\textrm{binary node}} + \underbrace{\Sigma}_{\textrm{smooth node}} + \underbrace{\Sigma}_{\textrm{unary node}}\right)}^{\leq k} \]
Our example tree corresponds to the following word:
\bigpicc{simons-tree-ra-word}
Such a word contains enough information to recreate the original tree:
the parent of a node $q$ is the first node to the right of $q$ whose height 
is $h_q + 1$ (where $h_q$ is $q$'s height).\\

Let us now prove that we can use compositions of rational primes to construct smooth factorization trees:
\begin{lemma}
\label{lem:smooth-factorisation-tree-rational}
    Let $S$ be an orbit-finite semigroup, and let $\Sigma$ be its polynomial orbit-finite 
    representation. There exists $h$ and a function $f_{\textrm{tree}}$ of the following type
    that can be constructed as a composition of rational single-use primes:
    \[ f_\textrm{tree} : \Sigma^* \to (\underbrace{(\Sigma + \Sigma + \Sigma + \Sigma)^{\leq k}}_{
        \substack{\textrm{The alphabet for representing}\\\textrm{right-aligned smooth factorications}\\\textrm{(described earlier)}}
    } )^*\comma\]
    such that $f_\textrm{tree}$ outputs a right-aligned smooth factorization tree for the input sequence.
\end{lemma}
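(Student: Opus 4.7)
The plan is to start from the smooth split produced by Lemma~\ref{lem:smooth-split-primes}, which already constructs a bounded-height smooth split using only left-to-right single-use primes (a subset of the rational primes), and then convert its per-position split encoding into the right-aligned per-position tree encoding requested by the lemma. Recall that, as observed right after Theorem~\ref{thm:smooth-factorisation-trees}, a smooth split of height $h_0$ canonically induces a smooth factorization tree of height at most $2h_0+1$: each split position becomes a leaf, each maximal sibling class becomes the children of a smooth or binary internal node, and the tree can be padded with unary nodes so that every leaf sits at the same depth. I would fix this $h := 2h_0+1$ as the bound on the output alphabet.

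The first step is to enrich every position $i$ with its height $t(i)$ in the split and its split value $v_i$, both of which are already produced by Lemma~\ref{lem:smooth-split-primes}. The second step is, for each level $k$ from $1$ to $h_0$, to compute and store at each position the label of the node (if any) at level $k$ whose rightmost descendant is that position. In a right-aligned encoding, a position $i$ carries the node at level $k$ exactly when there is no position of height $\geq k$ strictly between $i$ and the next position of height $\geq k$ on the right (equivalently, $i$ is the last element of its sibling class on level $k$); this is a finite-alphabet computation on the sequence of heights, so it is definable by a classical Mealy machine, and hence by compositions of primes. The node's label is the semigroup product of the split values of the siblings in that class, computed by combining the subsequence combinator (Lemma~\ref{lem:subsequence-combinator}) and the map combinator (Lemma~\ref{lem:map-combinator-primes}) with either a single binary product (via Claim~\ref{claim:product-poly-repr}) when the sibling class has size two, or with Lemma~\ref{lem:smooth-product} when it is a longer smooth block.

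The third step is to align these labels to their right-aligned positions. For each level $k$, once the label of a sibling-class block has been computed (currently located at an arbitrary intermediate position by Lemma~\ref{lem:smooth-product}), I would use the right-to-left single-use propagation prime (together with a left-to-right marker pass identifying the rightmost sibling) to move each label to the terminating position of its block. This is the main place where the two-way rational primes genuinely add power over the one-way construction of Section~\ref{sec:factorisation-forest-theorem}: the right-aligned encoding intrinsically requires information to flow rightward to the end of each sibling class. Finally, I would use a homomorphism to tag each stored node as leaf, binary, smooth, or unary, and to bundle the at most $h$ nodes assigned to each position into a single letter of $(\Sigma + \Sigma + \Sigma + \Sigma)^{\leq h}$, with unary padding inserted wherever the factorization tree has been extended to uniform depth.

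The main obstacle I anticipate is the bookkeeping around node types and the alignment of labels across all levels simultaneously, since the output alphabet forces a single pass that already contains the entire stack of nodes at each position. Handling this cleanly requires processing the levels independently (subsequence plus map combinators) and then zipping the per-level outputs back together via the $\times$ combinator. A subsidiary technical issue is that the labels need not be equivariant, but all of the atoms involved appear in the support of the corresponding split values, so by the same support-tracking arguments used in Lemma~\ref{lem:multiple-use-propagation} and Lemma~\ref{lem:smooth-product} every intermediate propagation stays within the single-use discipline.
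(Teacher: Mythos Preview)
Your approach differs from the paper's and, while plausible, is more roundabout and has some underspecified steps. You propose to reuse the smooth split from Lemma~\ref{lem:smooth-split-primes} and then convert its per-position encoding to the right-aligned tree encoding. The paper instead rebuilds from scratch via a direct induction on the maximal $\mathcal{J}$-height of the input elements, exploiting from the start the extra power of rational primes: with a right-to-left Mealy machine one can detect the \emph{last} position of each block, which is exactly what was missing in the one-way setting and what forced the detour through splits there. Concretely, in each step the paper partitions into \emph{maximal} smooth blocks, computes the product of each (via a variant of Lemma~\ref{lem:smooth-product}), inserts a smooth node at the last position of each block, then groups the resulting nodes pairwise under binary nodes to force a strict drop in $\mathcal{J}$-height, and recurses on the subsequence of binary nodes. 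No split, no monotone refinement, no level-by-level conversion.

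Your conversion layer has a few gaps you would need to patch. First, Lemma~\ref{lem:smooth-product} does not place the product at ``an arbitrary intermediate position'' but at a designated $\dashv$ marker that must already be present in the input; since rational primes are length-preserving you cannot insert such markers, so you would need to repurpose existing positions (e.g.\ the first height-$>k$ position to the right of each sibling class) and then shift the result back by one step in the subsequence. Second, the last sibling class at each level has no higher position to its right to serve as a $\dashv$; handling it requires the same detect-the-last-position trick the paper uses anyway. Third, the label you move is an element of $\Sigma$, not a single atom, so you need the generalized propagation of Claim~\ref{claim:gen-su-prop} rather than the raw atom-propagation prime. None of this is fatal, but once you grant yourself the right-to-left last-position detection, the paper's direct construction is strictly simpler than building the split first and translating it.
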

\begin{proof}
    The proof is analogous to the proof of Lemma~\ref{lem:smooth-split-primes}.
    It is an induction on the maximal $\mathcal{J}$-height of the input elements.
    If all elements have $\mathcal{J}$ heights equal to $1$
    then, by Claim~\ref{claim:j-height-one-smooth}, we know that the input sequence
    is smooth. This means that we can use Lemma~\ref{lem:smooth-product} 
    to compute its product, and construct a smooth factorization tree
    by inserting a smooth root in the last position. Note that we can detect 
    the last node using a right-to-left Mealy machine -- this is not possible using only 
    left-to-right Mealy machines.\\

    This leaves us with the induction step: We start by partitioning the input sequence into maximal smooth 
    blocks, i.e. smooth blocks that would 
    lose their smoothness if extended by one element left or right.
    The construction is similar to Lemma~\ref{lem:div-smooth-blocks}, 
    but without the ``keep every other underline'' phase. Instead, there is an additional step,
    where a right-to-left Mealy machine shifts all underlines one position to the left.
    Next, we use a construction similar to Lemma~\ref{lem:smooth-product} to compute 
    the product of each of those blocks. Then, for each block, we insert a smooth node that groups its elements together --
    with right-aligned encoding, this means inserting a smooth node in the last position of each smooth block
    (for blocks of length one we use a unary node instead of a smooth one). Next,
    we insert binary nodes to group the new nodes in pairs (if their number is odd, 
    we use two binary nodes to group the last three nodes together). We can do this 
    with a single-use Mealy machine that keeps track of the parity
    and remembers enough copies of the previous value to compute binary products (this is possible thanks to Lemma~\ref{lem:fs-k-fold}).
    Now, observe that those newly inserted binary nodes contain values whose 
    $\mathcal{J}$-heights are strictly lower than the maximal $\mathcal{J}$-height of the input sequence. 
    This means that we can finish the construction by combining the induction assumption with the subsequence combinator
    from Lemma~\ref{lem:subsequence-combinator}.
\end{proof}

We are now ready to prove the $\subseteq$-inclusion of Theorem~\ref{thm:rational-kr}.
For this we fix a local rational semigroup transduction $f = (S, h, \lambda)$ of type $\Sigma^* \to \Gamma^*$ 
and we show how to construct it as a composition of left-to-right and right-to-left single-use Mealy machines. 
Observe first, that by a similar reasoning as in the proof of Lemma~\ref{lem:local-monoid-transformation-composition-of-primes}
combined with the reasoning from the proof of the $\supseteq$-inclusion in Lemma~\ref{claim:rational-future-independent-one-way}, 
we can assume that $h : \Sigma \to S$ is a polynomial orbit-finite representation of $S$. This means that
we can use Lemma~\ref{lem:smooth-factorisation-tree-rational} to construct a smooth factorization tree over the input sequence.
In the remainder of this section, we show how to transform the smooth factorization tree into the output of the transduction.\\

Before we continue with the proof, let us present a few definitions:
First, we define a \emph{pointed word} (denoted as $\underline{w}$)
to be a word with one underlined letter.
We define the output of a pointed word $\underline{w}$ (with respect to the fixed local rational 
semigroup transduction $f$) to be the $i$-th letter of $f(w)$, 
where $i$ is the index of the underlined letter in $\underline{w}$,
and $w$ is the word $\underline{w}$, but without the underline.
Finally, we define the \emph{profile}
of $\underline{w}$ to be the following element of $S^1 \times \Sigma \times S^1$
(remember that $S^1$ is defined in Claim~\ref{claim:semigroup-to-monoid} -- we use it 
to handle the case where the prefix or the suffix is an empty word):
\[ (h(w_1) \cdot \ldots \cdot h(w_{i-1}),\ w_i,\ h(w_{i+1}) \cdot \ldots \cdot h(w_n)) \]
It is easy to see that the output of a pointed word depends entirely on its profile.\\

Let us now consider an input sequence $s_1, \ldots, s_n \in \Sigma^*$
 equipped with a smooth factorization tree. 
Notice that every node of this tree splits the input sequence into a suffix, an infix, and a prefix:
\smallpicc{rational-tree-context-def}
\noindent
We say the \emph{context} of a node is the following pair from $S^1 \times S^1$:
\[ (\textrm{product of the prefix},\ \textrm{product of the suffix}) \]
Observe that if $q$ is a node that contains the $i$-th position of the input, 
then the $i$-th position of the output depends entirely on 
the context of $q$, and on the profile of $q$'s infix with $i$ as the underlined position. 
Moreover, observe that if $q$ is a leaf, then its infix consists of a single letter $s_i$,
whose profile is equal to $(1,\ s_i,\ 1)$. This means that the $i$-th output letter 
depends only on $s_i$ and on the context of the $i$-th leaf. 
This means that if we were able to use compositions of primes to compute the context of every node (or even every leaf), 
then we would also be able to compute the output word. Unfortunately, compositions 
of rational primes are unable to compute all contexts. 
The reason for this is analogous to the reason why compositions of left-to-right primes
cannot compute the products of all prefixes (see Example~\ref{ex:monoid-prefix-impossible}).
Instead, we define the \emph{reduced context} of a node which is an analogue of the core ancestor subsequences
from Definition~\ref{def:core-ancestor-sequence}. We start the definition by formulating an analogue of Claim~\ref{claim:xyg-values}. 
(We skip the proof, as it is analogous to Claim~\ref{claim:xyg-values}.)
\begin{claim}
    \label{claim:xyg-values-rational}
    There is a composition of rational primes that equips every smooth sequence $s_1, \ldots, s_n \in \Sigma^*$ 
    with the (representations) of $x_i, y_u, \overrightarrow{g_i}, \overleftarrow{g_i}$, such that:
    \begin{enumerate}
        \item the values $x_i$, $y_i$ are a decomposition of $s_i$, i.e. $x_i \cdot y_i = s_i$;
        \item the values $x_1, \overrightarrow{g_i}, y_i$ can be used to compute the $i$-th prefix, 
              i.e. \[x_1 \cdot \overrightarrow{g_i} \cdot y_i = s_1 \cdot \ldots s_i\]
        \item the values $y_i, \overleftarrow{g_i}, x_n$ can be used to compute the $i$-th suffix, 
              i.e. \[x_i \cdot \overleftarrow{g_i} \cdot y_n = s_i \cdot \ldots s_n\]
        \item all $\overrightarrow{g_i}$s and $\overleftarrow{g_i}$s are pairwise $\mathcal{H}$-equivalent;
        \item $\overrightarrow{g_1} = \overleftarrow{g_n}$ and they are both idempotent;
        \item all $x_i$'s are suffix-equivalent to $\overrightarrow{g_1} = \overleftarrow{g_n}$,
              and all $y_i$'s are prefix-equivalent to $\overrightarrow{g_1} = \overleftarrow{g_n}$.
        \item for all $i > 1$, it holds that  $\overrightarrow{g_i} = \overrightarrow{g_{i-1}} \cdot (y_{i-1} \cdot x_i)$, 
              and for all $i < n$, it holds that $\overleftarrow{g_i} = (y_i \cdot  x_{i+1}) \cdot  \overleftarrow{g_{i + 1}}$. 
    \end{enumerate}
\end{claim}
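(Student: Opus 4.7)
The proof will follow the blueprint of Claim~\ref{claim:xyg-values}, which in turn adapted Lemma~\ref{lem:smooth-product}, but with two crucial upgrades. First, since here we only need to handle a single smooth sequence (not a general split of bounded height), we can skip the inductive construction of Claim~\ref{claim:xyg-values} and apply the machinery of Lemma~\ref{lem:smooth-product} directly. Second, we now have both left-to-right \emph{and} right-to-left Mealy machines at our disposal, so we can compute $\overrightarrow{g_i}$ and $\overleftarrow{g_i}$ in a symmetric fashion. As in the proofs of Lemma~\ref{lem:smooth-product} and Claim~\ref{claim:xyg-values}, it suffices to construct the output as a composition of \emph{finitely supported} rational primes; the atomic constants introduced along the way can then be eliminated using (the rational analogue of) Lemma~\ref{lem:primes-elim-support}, whose proof goes through verbatim by replacing homomorphism with homomorphisms and by using name abstraction.

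The construction would proceed as follows. In Step~1, we fix a tuple $\bar a$ of $2 \cdot \dim(S)$ distinct atoms and, using a left-to-right classical Mealy machine together with a homomorphism based on a uniformization (Lemma~\ref{lem:straight-uniformisation}), equip the first position with a representation of a good idempotent $e_1$ satisfying $\supp(e_1) \subseteq \supp(\bar a) \cup \supp(J(s_1))$ and $e_1 \in J(s_1)$, exactly as in Step~1 of the proof of Lemma~\ref{lem:smooth-product}; existence of such $e_1$ follows from Claim~\ref{clam:good-idempotent} applied to $s_1$. In Step~2, since $s_1, \ldots, s_n$ is smooth, every $s_i$ satisfies $\supp(e_1) \subseteq \supp(J(s_i)) \cup \supp(\bar a) \subseteq \supp(s_i, \bar a)$, so the conditional multiple-use propagation from Lemma~\ref{lem:multiple-use-propagation} equips every position with~$e_1$. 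In Step~3, a homomorphism that uniformizes the $f_{\textrm{decompose}}$ relation from Item~\ref{it:xy-values} of Lemma~\ref{lem:smooth-product} yields representations of $x_i$, $y_i$ with $x_i \cdot y_i = s_i$, $e_1$ a suffix of $x_i$ and a prefix of $y_i$. In Step~4, the (left-to-right) delay function together with a homomorphism gives $g_i := y_{i-1} \cdot x_i$ in every position $i>1$, and Claim~\ref{claim:g-h-class} shows that these are all $\mathcal{H}$-equivalent to $e_1$; by Lemma~\ref{lem:h-idemp-eq-supp}, they all share the support of $e_1$. In Step~5, we atomize using name abstraction with $\suppt(e_1)$ (available in every position thanks to Step~2) and then run, \emph{in parallel}, a classical left-to-right Mealy machine computing $\overrightarrow{g_i}' = g_2' \cdot \ldots \cdot g_i'$ and a classical right-to-left Mealy machine computing $\overleftarrow{g_i}' = g_{i+1}' \cdot \ldots \cdot g_n'$. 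Finally, a homomorphism repopulates the placeholders using $\suppt(e_1)$, yielding the desired $\overrightarrow{g_i}$ and $\overleftarrow{g_i}$, and, by adjoining $e_1$ at the endpoints, ensures items (4)--(7) of the claim.

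The main obstacle is ensuring that the seven conditions of the claim hold simultaneously and are all provable from the algebraic properties already established. In particular, we need $\overrightarrow{g_1} = \overleftarrow{g_n}$ to be a single idempotent lying in the $\mathcal{H}$-class of every other $\overrightarrow{g_i}$ and $\overleftarrow{g_i}$; this forces us to define $\overrightarrow{g_1}$ and $\overleftarrow{g_n}$ both to equal $e_1$ (and to pad the left-to-right and right-to-left runs accordingly at their starting endpoints, treating $\overrightarrow{g_i}$ as $e_1 \cdot g_2 \cdots g_i$ and symmetrically for $\overleftarrow{g_i}$). With this convention the identities $x_1 \cdot \overrightarrow{g_i} \cdot y_i = s_1 \cdots s_i$ and $x_i \cdot \overleftarrow{g_i} \cdot y_n = s_i \cdots s_n$ are immediate telescoping computations in~$S$, using that $e_1$ is a suffix of every $x_j$ and a prefix of every $y_j$; the recursive identities of item~(7) follow by one further use of idempotency of $e_1$; and item (4) follows from Lemma~\ref{lem:h-idemp-eq-supp} combined with $\mathcal{H}$-equivalence of each $g_i$ with~$e_1$. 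Once all finitely supported rational primes are in place, Lemma~\ref{lem:primes-elim-support} removes the dependence on $\bar a$, completing the proof.
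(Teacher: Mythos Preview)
Your proposal is correct and takes essentially the same approach as the paper, which simply defers the proof to Claim~\ref{claim:xyg-values} (itself built on Lemma~\ref{lem:smooth-product}). You have accurately filled in the details the paper omits: skipping the height induction since the input is a single smooth sequence, running the atomless $g'$-products in both directions via classical Mealy machines, padding the endpoints with $e_1$ so that items (5) and (7) hold, and eliminating the auxiliary atoms $\bar a$ at the end.
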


Note the claim talks about finitely supported primes and not about equivariant primes. 
However, we do not have to worry about that, because in the end we will be able to remove the 
unnecessary atoms from our construction, using Lemma~\ref{lem:unnecessary-atoms-compositions-primes}
(which can be easily extended to compositions of rational primes).
We are now ready to define \emph{reduced contexts}:
\begin{definition}
    The \emph{reduced context} (denoted as $\ctx(q)$) of a node $q$ is defined as follows:
    \begin{enumerate}
        \item If $q$ is the root of the tree, then its reduced context is empty.
        \item If $q$ is a child of a unary node $r$, then its reduced context is equal 
              to the reduced context of $r$.
        \item If $q$ is a child of a binary node $r$, and $q'$ is its sibling then 
              the reduced context of $q$ is equal to $(\ctx(r), v_{q'})$, 
              where $v_{q'}$ is the value of the node $q'$.
        \item If $q$ is the first child of a smooth node $r$, whose children are 
              $q_1, \ldots, q_n$ (i.e. $q = q_1$), then its reduced context  
             t contains the reduced context of $r$, and the (smooth) product
              of the rest of its sibling values, i.e.:
              \[ (\ctx(r), v_{q_2} \cdot \ldots \cdot v_{q_n})  \]
              If $q$ is the last child of a smooth node $r$, then the context of $q$ is defined analogously.
        \item If $q$ is an inner child of a smooth node $r$, whose 
              children are $q_1, \ldots, q_n$, i.e. $q = q_i$ for some 
              $1 < i < n$, then its reduced context is equal to the following tuple:
              \[ \poforb_e(x_1, y_n, \ctx(r)),\ \overrightarrow{g_{i-1}}, y_{i-1}, \overleftarrow{g_{i+1}}, x_{i+1} \comma\] 
              where $x, y, \overrightarrow{g}, \overleftarrow{g}$ are the values defined by Claim~\ref{claim:xyg-values-rational}
              for the smooth sequence $v_{q_1}, \ldots, v_{q_n}$, and $e$ is the only idempotent that is $\mathcal{H}$-equivalent 
              to $\overrightarrow{g_i}$. (As explained in the proof of Lemma~\ref{lem:h-idemp-eq-supp},
              each $\mathcal{H}$-class has at most one idempotent, which means that $e$ is well-defined.)
    \end{enumerate}
    Moreover, the reduced context remembers to which one of those cases $q$ belongs. This means that 
    we can represent the reduced context of a node at depth $k$, using the polynomial orbit-finite set $C_k$, 
    defined recursively: $C_0$ is the singleton set and $C_{k+1}$ is the following disjoint sum:
    \[
        \underbrace{C_k}_{\substack{\textrm{child of a}\\
        \textrm{unary node}}} +
        \underbrace{C_k \times \Sigma}_{\substack{\textrm{left child of a}\\
        \textrm{binary node}}} +
        \underbrace{C_k \times \Sigma}_{\substack{\textrm{right child of a}\\
        \textrm{binary node}}} +
        \underbrace{C_k \times \Sigma}_{\substack{\textrm{first child of a}\\
        \textrm{smooth node}}} +
        \underbrace{C_k \times \Sigma}_{\substack{\textrm{last child of a}\\
        \textrm{smooth node}}} +
        \underbrace{\orb_{d}(\Sigma^2 \times C_k) \times \Sigma^4}_{\substack{\textrm{inner child of a}\\
        \textrm{smooth node}\\
        (\textrm{where } d=\dim(\Sigma))}}
       \]
\end{definition}

Let us now show that the reduced context of a node and the profile of the infix 
is enough information to calculate the output letter (this is an analogue of Lemma~\ref{lem:core-defines}):
\begin{lemma}
\label{lem:rational-core-defines}
    Let $\underline{w}$ and $\underline{w'}$ be two pointed words, 
    let $T$ and $T'$ be smooth factorization trees for $w$ and $w'$
    (without the underlines), and let $q$ and $q'$ be nodes of $T$ and $T'$, 
    such that:
    \begin{enumerate}
        \item the underlined positions in $\underline{w}$ and $\underline{w'}$ belong
              respectively to the infix of $q$ and to the infix of $q'$;
        \item the infix profile of $q$ is equal to the infix profile of $q'$;
        \item the reduced context of $q$ is equal to the reduced context of $q'$. 
    \end{enumerate}
    Then the output letter of $\underline{w}$ is equal to the output letter of $\underline{w'}$. 
\end{lemma}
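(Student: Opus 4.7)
The plan is to mirror the proof of Lemma~\ref{lem:core-defines}, doing induction on the depth of $q$ in $T$; the depth of $q'$ in $T'$ is automatically the same because the reduced-context type $C_k$ records depth. In the base case $q$ is the root: then $\ctx(q)$ is empty and the infix of $q$ equals the whole word, so the infix profile of $q$ coincides with the full profile of $\underline{w}$ and the output letter is determined directly by $\lambda$. For the inductive step, let $r$ and $r'$ be the parents of $q$ and $q'$; the goal is to deduce that $r$ and $r'$ satisfy the hypothesis of the lemma (possibly up to a permutation that can be absorbed) and to invoke the inductive hypothesis.

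As in Lemma~\ref{lem:core-defines}, the naive claim must be strengthened. The strengthening permits outer left/right factors $A, A', B, B' \in S^1$ on either side of the two profiles, with the pairs $A, A'$ and $B, B'$ related by an atom permutation fixing a suitable local idempotent. This is the two-sided rational analogue of the $z$-parameter used in Lemma~\ref{lem:core-defines}.

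The easy cases are when $r$ is unary (trivially $\ctx(r) = \ctx(r')$ and the infix profiles coincide), when $r$ is binary (the sibling value, stored explicitly in $\ctx(q)$, extends the infix profile of $q$ to that of $r$), and when $r$ is smooth with $q$ as first or last child (the explicit product of the remaining siblings inside $\ctx(q)$ plays the same role). In each of these cases $\ctx(r) = \ctx(r')$ and $\text{prof}(r,\underline{w}) = \text{prof}(r',\underline{w'})$ hold on the nose, so the inductive hypothesis for $r$ closes the step. The hard case is when $r$ is smooth and $q = q_i$ is an inner child: then $\ctx(q)$ stores only the orbit $\poforb_e(x_1, y_n, \ctx(r))$ together with $\overrightarrow{g_{i-1}}, y_{i-1}, \overleftarrow{g_{i+1}}, x_{i+1}$, so by Lemma~\ref{lem:a-orbits-pof} the equality $\ctx(q) = \ctx(q')$ only yields a $\supp(e)$-permutation $\pi$ with $\pi(x_1, y_n, \ctx(r)) = (x'_1, y'_n, \ctx(r'))$. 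By Lemma~\ref{lem:h-idemp-eq-supp} this $\pi$ fixes $e$ and every element $\mathcal{H}$-equivalent to $e$, including all the $\overrightarrow{g}$s and $\overleftarrow{g}$s, so the two infix profiles at $r, r'$ agree everywhere except that the outermost factors $x_1, y_n$ are replaced by $\pi(x_1), \pi(y_n)$, and similarly $\ctx(r') = \pi(\ctx(r))$. I would absorb these $\pi$-discrepancies into the outer factors $A, B$ of the strengthened hypothesis and cancel them via the locality equation: item~6 of Claim~\ref{claim:xyg-values-rational}, combined with Lemma~\ref{lem:green-lemma-anitchains}, lets me write $x_1 = u \cdot e$ and $y_n = e \cdot v$ and thereby pull an explicit copy of $e$ out on each side of the pointed letter, producing the $A \cdot e \cdot B$ / $C \cdot e \cdot D$ decomposition demanded by the rational locality axiom.

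The main obstacle I anticipate is verifying the side condition $B \cdot h(a) \cdot C = e$ of the rational locality equation in this decomposition, which is strictly more demanding than the ``$y$ is a prefix of $e$'' condition used in Lemma~\ref{lem:core-defines}. Unfolded, the condition amounts to the identity $\overrightarrow{g_{i-1}} \cdot y_{i-1} \cdot v_{q_i} \cdot x_{i+1} \cdot \overleftarrow{g_{i+1}} = e$ inside the group that the $\mathcal{H}$-class of $e$ carries (Lemma~\ref{lem:h-idemp-eq-supp}). Using the recurrences in item~7 of Claim~\ref{claim:xyg-values-rational}, the left-hand side collapses to $\overrightarrow{g_i} \cdot \overleftarrow{g_i}$, so verifying the side condition reduces to showing that $\overleftarrow{g_i}$ is the group inverse of $\overrightarrow{g_i}$ within the $\mathcal{H}$-class of $e$. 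This is likely to require an additional normalization in Claim~\ref{claim:xyg-values-rational} forcing the two families of $g$s to be mutually inverse, or alternatively an indirect two-step application of the locality axiom with an intermediate $e$-sandwich. Once the axiom fires, the outer $\pi$-permutations collapse and the strengthened inductive hypothesis for $r$ closes the proof.
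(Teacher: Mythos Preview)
Your overall architecture---induction on depth, trivial base case, case split on the type of the parent $r$---matches the paper, and you have correctly isolated both the hard case (inner child of a smooth node) and the genuine obstacle (the side condition of the rational locality equation). Where your plan diverges from the paper is in how those two issues are resolved.

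First, the paper does \emph{not} strengthen the induction hypothesis with outer factors $A,B$. Instead, for the hard case it constructs an intermediate tree $T''$ from $T$ by replacing the part outside $r$'s subtree with its $\pi$-image and replacing the boundary children $q_1,q_n$ with leaves of value $\pi(x_1)\,y_1$ and $x_n\,\pi(y_n)$. One then checks that $T''$ is again a valid smooth factorization tree, that $\ctx(r'') = \pi(\ctx(r)) = \ctx(r')$, and that the infix profile of $r''$ equals that of $r'$; hence the \emph{unstrengthened} hypothesis applies to the pair $(r',r'')$ and gives $\mathrm{out}(T') = \mathrm{out}(T'')$. Your plan to ``absorb the $\pi$-discrepancy on $\ctx(r)$ into $A,B$'' does not work as stated: the $A,B$ parameters live at the level of the profile, not the reduced context, so the mismatch $\ctx(r') = \pi(\ctx(r)) \neq \ctx(r)$ would still block a direct appeal to the hypothesis at $(r,r')$. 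The intermediate-tree trick is precisely what replaces your strengthening.

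Second, it remains to show $\mathrm{out}(T) = \mathrm{out}(T'')$, and this is done by a single direct application of locality---no induction involved. Writing $c_1 = p_3\,x_1$, $b_1 = y_1\,v_{q_2}\cdots v_{q_{i-1}}\,p_1$ and symmetrically $c_2,b_2$, one needs $\lambda(c_1 e b_1,\,a,\,b_2 e c_2) = \lambda(\pi(c_1) e b_1,\,a,\,b_2 e \pi(c_2))$. As you observed, $b_1\,h(a)\,b_2$ need not equal $e$; the paper shows it is always $\mathcal{H}$-equivalent to $e$ and proves a relaxed locality claim (Claim~\ref{claim:rational-locality-h-equiv}): whenever $b_1 h(a) b_2 \hequiv e$, the equation holds. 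The proof of that claim is exactly your ``group inverse'' idea---multiply $c_1$ on the right by $g := b_1 h(a) b_2$ and $b_1$ on the left by $g^{-1}$ inside the $\mathcal{H}$-class group---so your instinct here was correct; you just need to package it as a standalone lemma rather than as ``additional normalization'' of Claim~\ref{claim:xyg-values-rational}.
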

\begin{proof}
    The proof is very similar to the proof of Lemma~\ref{lem:core-defines}.
    It is an induction on the depth of $q$ and $q'$ (note that if $\ctx(q) = \ctx(q')$, then 
    $q$ and $q'$ have equal depths). The induction base is trivial: We know that 
    $q$ and $q$ are roots of $T$ and $T'$. It follows that 
    the infix profile of $q$ is equal to the profile of $\underline{w}$ and the 
    infix profile of $q'$ is equal to the profile of $\underline{w}'$. By assumption, 
    the infix profiles of $q$ and $q'$ are equal, which means that the profiles 
    $\underline{w}$ and $\underline{w'}$ are equal as well. This finishes the proof, 
    because the output letter of a pointed word depends entirely on its profile.\\ 
    
    For the induction step, we only deal with the most interesting case, which is when $q$ 
    is an inner node of a smooth node (other cases are immediate). We start by introducing 
    some notation: Let $r$, $q_1, \ldots, q_i, \ldots q_n$, and $e$ be as in the definition 
    of $\ctx(q)$ (in particular this means that $q = q_i$ and that $r$ is the parent of $q$).
    Let $(p_1,\ a,\ s_1)$ be the infix profile of $q$,
    let $(s_3, p_3)$ be the context of $r$, and let $p_2$ and $s_2$ be defined as follows:
    \[ 
        \begin{tabular}{cc}
        $p_2 = v_1 \cdot \ldots \cdot v_{i-1}$ & $s_2 = v_{i+1} \cdot \ldots \cdot v_n$,
        \end{tabular}
    \]
    where $v_i$ denotes the tree value of the node $q_i$.
    Finally, let $Z$ be an incomplete tree obtained by cutting out $r$'s subtree from $T$.
    Here is a sketch:
    \custompicc{rational-context-smooth-1}{0.65}
    Observe that $q'$ is an inner node of a smooth node as well -- this is because $\ctx(q)= \ctx(q')$.
    It follows that we can introduce the same notation for $T'$, i.e. we define 
    $r'$, $q_1', \ldots, q'_{n'}$, $p_1', p_2', p_3', a', s_3', s_2', s_1'$ and $Z'$ analogously as for $T$.
    By assumption, we know that infix profiles of $q$ and $q'$ are equal, which means that
    $p_1 = p_1'$, $a = a'$, and $s_1 = s_1'$. Moreover, since $\ctx(q) = \ctx(q')$, we also know that:
    \begin{enumerate}
        \item  $\overrightarrow{g_{i-1}} = \overrightarrow{g_{i'-1}}'$ and $y_{i-1} = y'_{i'-1}$;
        \item  $\overleftarrow{g_{i+1}} = \overleftarrow{g_{i'+1}}' $ and $x_{i+1} = x'_{i'+1}$;
        \item \label{it:orbits-eq}  $\poforb_e(x_1, y_n, \ctx(r)) = \poforb_{e'}(x'_1, y'_{n'}, \ctx(r'))$.
    \end{enumerate} 
    Since $\overrightarrow{g_{i-1}} = \overrightarrow{g_{i'-1}}'$, we know that $e = e'$,
    which (by Item~\ref{it:orbits-eq}) means that there exist a $\supp(e)$-permutation $\pi$
    such that $\pi(x_1) = x'_1$, $\pi(y_{n}) = y'_{n'}$ and $\pi(\ctx(r)) = \ctx(r')$. Let us use 
    this $\pi$ to construct an intermediate tree $T''$ with an underlined 
    element, and us prove that both the output of $T$ and the output of $T'$ are equal to the output of $T''$.
    We obtain $T''$ by taking $T$ and replacing $Z$ with $\pi(Z)$, 
    replacing the node $q_1$ with a node whose value is equal to $\pi(x_1) \cdot y_1$
    (it can be a leaf combined with unary nodes to keep the heights aligned), 
    and replacing $q_n$ with a node whose value is equal to $x_n \cdot \pi(y_n)$. 
    Here is a sketch of $T''$:
    \custompicc{rational-context-smooth-2}{0.7}
    As noted in the sketch, it holds by construction of $T''$ that $p_1'' = p_1$,
    $s_1'' = s_1$, $a'' = a$,  $p_3'' = \pi(p_3)$, and $s_3'' = \pi(s_3)$.
    We are now left with showing that (a) $T''$ is a valid
    smooth decomposition tree, (b) the output of $T'$ is equal to the output of $T''$ and 
    that (c) the output of $T$ is equal to the output of $T''$.\\
    
    Let us first prove that $T''$ is a valid smooth factorization tree.
    It suffices to show that $v_{r''} = \pi(v_{r})$,
    which means that it fits to $\pi(Z)$, and that $r''$ is a smooth node.
    Let us with $v_{r''} =\pi(v_{r})$. Observe that, by construction of $T''$:
    \[ v_{r''} = \pi(x_1) \cdot y_1 \cdot v_2 \cdot \ldots \cdot v_{n-1} \cdot x_n \cdot \pi(y_n)\tdot \]
    By Claim~\ref{claim:xyg-values-rational}, it follows that:
    \[ v_{r''} = \pi(x_1) \cdot \overrightarrow{g_n} \cdot \pi(y_n)\]
    By Claim~\ref{lem:h-idemp-eq-supp}, we know that $\supp(\overrightarrow{g_n}) = \supp(e)$,
    which means that $\pi(\overrightarrow{g_n}) = \overrightarrow{g_n}$
    (as $\pi$ is a $\supp(e)$-permutation).
    It follows that:
    \[ v_{r''} = \pi(x_1) \cdot \pi(\overrightarrow{g_n}) \cdot \pi(y_n) = \pi(x_1 \cdot \overrightarrow{g_n} \cdot y_n) = \pi(v_r)\]
    Now, to show that $r''$ is a smooth node, it suffices to show that the following product is smooth:
    \[ \pi(x_1) \cdot y_1 \cdot v_2 \cdot \ldots \cdot v_{n-1} \cdot x_n \cdot \pi(y_n)\tdot \]
    We have already shown that the value of this product is equal to $\pi(v_r)$,
    which, thanks to the smoothness of $r$, belongs to the $\mathcal{J}$-class of $\pi(e)=e$.
    According to Claim~\ref{claim:xyg-values-rational}, $\pi(x_1)$, $y_1$, $x_n$ and $\pi(y_n)$
    also belong to the $\mathcal{J}$-class of $\pi(e) = e$. Thanks to the smoothness of $r$,
    we know that each $v_j$ also belongs to this $\mathcal{J}$-class. It follows that the product is smooth.\\

    Now, let show that the output $T'$ is equal to the output of $T''$. For this we
    use the induction assumption for $r'$ and $r''$. This means that we need to show that
    the infix profile of $r'$ is equal to the infix profile of $r''$, 
    and that $\ctx(r') = \ctx(r'')$. 
    First, observe that $\ctx(r'') = \pi(\ctx(r))$. This is because the reduced context of 
    both $r''$ and $r$ depends entirely on the $Z$-part of the tree,
    and this dependence is easily seen to be equivariant. We have chosen 
    $\pi$, so that $\pi(\ctx(r)) = \ctx(r')$, which means that $\ctx(r'') = \ctx(r')$.
    Now, let us show that the infix profiles of $r'$ and $r''$ are equal, 
    i.e. that:
    \[(p''_2 \cdot p_1, a, s_1 \cdot s''_2 ) = (p'_2 \cdot p_1, a, s_1 \cdot s'_2 )\]
    It suffices to show that $p''_2 = p'_2$ (the proof for $s''_2 = s'_2$ is analogous).
    Notice that $p_2'' = \pi(x_1) \cdot y_1 \cdot v_{q_2} \cdot \ldots \cdot v_{q_{i-1}}$.
    By Claim~\ref{claim:xyg-values-rational} it follows that
    $p''_2 = \pi(x_1) \cdot \overrightarrow{g_{i-1}} \cdot y_{i-1}$. 
    By assumption, we know that $\pi(x_1) = x_1'$, $\overrightarrow{g_{i-1}}  = \overrightarrow{g_{i'-1}}'$, 
    and $y_{i-1} = y_{i'-1}'$. It follows that $p''_2 = x'_1 \cdot \overrightarrow{g_{i'-1}}' \cdot y'_{i'-1}$, 
    which by Claim~\ref{claim:xyg-values-rational} means that $p''_2 = p_2'$.\\

    Finally, let us show that the output of $T$ is equal to the output of $T''$. We do this by directly 
    showing that:
    \[ \lambda(p_3 \cdot p_2 \cdot p_1,\ a,\ s_1 \cdot s_2 \cdot s_3) =  \lambda(\pi(p_3) \cdot p''_2 \cdot  p_1,\ a,\ s_1 \cdot  s''_2 \cdot \pi(s_3)) \]
    We define $c_1 = p_3 \cdot x_1$ and $b_1 = y_1 \cdot v_{q_{2}} \cdot \ldots v_{q_{i-1}} \cdot p_1$ 
    (and analogously for $c_2$ and $b_2$). This means that $p_3 \cdot p_2 \cdot p_1 = c_1 \cdot b_1$ and 
    $\pi(p_3) \cdot p''_2 \cdot p_1 = \pi(c_1) \cdot b_1$ (and analogously for $b_2$, $c_2$ and the $s$-values). 
    This leaves us with showing that:
    \[\lambda(c_1 \cdot b_1,\ a,\ b_2 \cdot c_2) = \lambda(\pi(c_1) \cdot b_1,\ a,\ b_2 \cdot \pi(c_2))\]
    Observe that that $c_1 \cdot e = c_1$ -- this is because $e$ is an idempotent that is a suffix of $x_1$,
    which is a suffix of $c_1$ -- and that
    $\pi(c_1) \cdot e = \pi(c_1)$, as $\pi$ is a $\supp(e)$-permutation.
    (And analogously for $c_2$.) This means that it is enough to show that:
    \[\lambda(c_1 \cdot e \cdot b_1,\ a,\ b_2 \cdot e \cdot c_2) = \lambda(\pi(c_1) \cdot e \cdot b_1,\ a,\ b_2 \cdot e \cdot  \pi(c_2))\]
    This resembles the locality equation for $\lambda$, but we do not know if $c_1 a c_2 = e$
    (in fact, this might not be true). However, it is not hard to see that $c_1 a c_2$ is 
    $\mathcal{H}$-equivalent to $e$. We finish the proof, by showing that this is enough to apply the locality equation: 
    \begin{claim}
    \label{claim:rational-locality-h-equiv}
    If $\lambda$ satisfies the locality equation, then for every idempotent $e$, 
    for every $\supp(e)$-permutation $\pi$, and for all $a, b_1, b_2, c_1, c_2, a$, such that 
    $b_1 a b_2$ is $\mathcal{H}$-equivalent to $e$, it holds that:
    \[ \lambda(c_1 \cdot e \cdot b_1,\ a,\ b_2 \cdot e \cdot c_2) = \lambda(\pi(c_1) \cdot e \cdot b_1,\ a,\ b_2 \cdot e \cdot \pi(c_2)) \]
    \end{claim}
    \begin{proof}
        Let $g := b_1 a b_2$. By assumption, we know that $g$ is $\mathcal{H}$-equivalent to $e$.
        By \cite[Proposition 1.13]{pin2010mathematical}, we know that the $\mathcal{H}$-class 
        of $e$ is a subgroup of $S$, and that $e$ is the identity element of this subgroup.
        This means that there exists a $g^{-1}$ (also $\mathcal{H}$-equivalent to $e$)
        such that $g^{-1} \cdot g = g \cdot g^{-1} = e$. Define $c_1' := c_1 g$, 
        and $b_1' := g^{-1} b_1$. Observe that $c_1' \cdot e \cdot b_1' = c_1 \cdot e \cdot b_1$, 
        which means that:
        \[\lambda(c_1 \cdot e \cdot b_1,\ a,\ b_2 \cdot e \cdot c_2) = \lambda(c_1' \cdot e \cdot b_1',\ a,\ b_2 \cdot e \cdot c_2) \]
        We know that $b_1' \cdot a \cdot b_2 = g^{-1} \cdot g = e$, which means that we can use the locality equation:
        \[ \lambda(c_1' \cdot e \cdot b_1',\ a,\ b_2 \cdot e \cdot c_2) = \lambda(\pi(c_1') \cdot e \cdot b_1',\ a,\ b_2 \cdot e \cdot \pi(c_2))\]
        By Lemma~\ref{lem:h-idemp-eq-supp}, we know that $\supp(g) = \supp(e)$. It follows that $\pi(c_1') = \pi(c_1) \cdot g$, 
        which leads to $\pi(c_1') \cdot e \cdot b_1' = \pi(c_1) \cdot e \cdot b_1$. Thus, we have:
        \[ \lambda(\pi(c_1') \cdot e \cdot b_1',\ a,\ b_2 \cdot e \cdot \pi(c_2)) = \lambda(\pi(c_1) \cdot e \cdot b_1,\ a,\ b_2 \cdot e \cdot \pi(c_2))\]
        This completes the proof of the claim.
    \end{proof}   
\end{proof}

At this point, it is not hard to show how to construct the output 
of the local rational semigroup transduction. First, we notice that, 
thanks to Lemma~\ref{lem:rational-core-defines}, it is not hard 
to show the following analogue of Claim~\ref{claim:core-to-out}:
\begin{claim}
    Let $w_i$ be the $i$th input letter, and let $l_i$ be the $i$-th 
    leaf. There exists an equivariant function, that transforms 
    every pair $(w_i,\ctx(l_i))$ into the $i$-th output letter of the local rational semigroup transduction.
\end{claim}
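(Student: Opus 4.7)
The plan is to derive this claim as an almost immediate consequence of Lemma~\ref{lem:rational-core-defines}, together with the standard trick from the proof of Claim~\ref{claim:core-to-out} for ensuring equivariance. The key observation is that every leaf $l_i$ has a trivial infix profile: its infix consists of the single letter $w_i$, so the profile is $(1,\, w_i,\, 1) \in S^1 \times \Sigma \times S^1$. Consequently, once $w_i$ and $\ctx(l_i)$ are fixed, the hypotheses of Lemma~\ref{lem:rational-core-defines} are fixed as well, so the lemma guarantees that the $i$-th output letter is uniquely determined by $(w_i, \ctx(l_i))$. This already yields the \emph{existence} of some (not a priori equivariant) function $\phi$ mapping $(w_i, \ctx(l_i))$ to the $i$-th output letter.

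To upgrade $\phi$ to an equivariant function, I would follow the same argument as in Claim~\ref{claim:core-to-out}: express $\phi$ as a composition of two equivariant relations. The first relation $R_1 \subseteq (\Sigma \times C_k) \times (\Sigma^* \times T)$ sends a pair $(w_i, c)$ to every pointed word $\underline{w}$ equipped with a smooth factorization tree $T$ in which some leaf $l$ is labelled by $w_i$, is the underlined position of $\underline{w}$, and satisfies $\ctx(l) = c$. The second relation $R_2$ sends each such pointed-word-with-tree to the corresponding output letter of the local rational semigroup transduction. Both relations are equivariant by construction, and their composition is equivariant as well. By Lemma~\ref{lem:rational-core-defines} the composition is actually functional (all witnesses give the same output letter), so it provides the required equivariant function $\phi$.

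I do not anticipate any serious obstacle here; the statement is essentially a packaging of Lemma~\ref{lem:rational-core-defines} specialised to leaves. The only mild subtlety is to check that the set $\Sigma \times C_k$ is well-defined as a polynomial orbit-finite set (which follows from the definition of $C_k$) and that $R_1$ is nonempty on every input pair that actually arises from some input sequence, which is immediate since $(w_i, \ctx(l_i))$ is itself produced by at least one pointed word with factorization tree (namely the original one).
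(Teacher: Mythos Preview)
Your proposal is correct and follows essentially the same approach as the paper: you observe that the infix profile of the leaf $l_i$ is $(1, w_i, 1)$, apply Lemma~\ref{lem:rational-core-defines} to get well-definedness, and then replay the equivariance argument from Claim~\ref{claim:core-to-out} via a composition of two equivariant relations. The paper's own proof is a one-line reference to exactly this reduction, so you have simply unpacked it in more detail.
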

\begin{proof}
    Once, we notice that the infix profile of $l_i$ is equal to $(1, w_i, 1)$, 
    the claim follows from Lemma~\ref{lem:rational-core-defines}, in the 
    same way as Claim~\ref{claim:core-to-out} follows from Lemma~\ref{lem:core-defines}.
\end{proof}

This finishes the construction, because by reasoning very similar to
the proof of Lemma~\ref{lem:core-construct-primes}, we can show that
we can use compositions of rational prime functions to construct $\ctx(q)$
in each node of a smooth factorization tree.

\subsubsection{Compositions of rational primes $\subseteq$ Local rational transductions}
In this section,  we show that every function $f : \Sigma^* \to \Gamma^*$ 
that can be constructed as a composition of single-use rational primes,
can be implemented as a rational semigroup transduction. The proof goes by 
induction on the construction of $f$ as a composition of primes.
However, in order to simplify the proof, we introduce an alternative way of constructing 
compositions of primes, that only uses sequential composition:
\begin{claim}
\label{claim:seq-comppositions-of-primes}
    Let $P$ be a set of prime functions that contains all length-preserving homomorphisms -- 
    for example, $P$ could be the set of single-use rational primes. 
    It follows that every word-to-word function can be constructed  
    as a $(\times, \circ)$-composition of primes from $P$, if and only if it can be constructed as a $(\circ)$-composition 
    of letter-to-letter homomorphisms and functions of the form $p \times \idf$, where $p \in P$.
\end{claim}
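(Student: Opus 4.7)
The $(\Leftarrow)$ direction is immediate: since $P$ contains all length-preserving homomorphisms (which are primes by assumption), and since each function of the form $p \times \idf$ is itself obtainable by applying $\times$ to $p \in P$ and to a length-preserving identity homomorphism (also a prime by assumption), any $(\circ)$-composition in the restricted class is already a $(\times, \circ)$-composition of primes from $P$.

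For the $(\Rightarrow)$ direction, I will prove the following stronger inductive statement, which is better behaved under the structural induction on the construction of $f$:

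\medskip
\emph{Auxiliary Lemma.} For every $(\times, \circ)$-composition of primes $f : \Sigma^* \to \Gamma^*$ and every polynomial orbit-finite alphabet $X$, the function $f \times \idf_{X^*} : (\Sigma \times X)^* \to (\Gamma \times X)^*$ can be constructed as a $(\circ)$-composition of letter-to-letter homomorphisms and functions of the form $p \times \idf$ with $p \in P$.

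\medskip
I will prove the auxiliary lemma by induction on the construction of $f$. The base case (where $f = p$ is itself a prime) is trivial: $p \times \idf_{X^*}$ already has the required form. For the composition step, the identity
\[(g \circ f) \times \idf_{X^*} = (g \times \idf_{X^*}) \circ (f \times \idf_{X^*})\]
reduces the problem directly to the inductive hypotheses for $f$ and $g$. The interesting case is the product step: given $f : \Sigma_1^* \to \Gamma_1^*$ and $g : \Sigma_2^* \to \Gamma_2^*$, I must realize $(f \times g) \times \idf_{X^*}$. The plan is to first use a letter-to-letter homomorphism witnessing the bijection $(\Sigma_1 \times \Sigma_2) \times X \simeq \Sigma_1 \times (\Sigma_2 \times X)$, then apply $f \times \idf_{(\Sigma_2 \times X)^*}$ (available by the induction hypothesis for $f$ with the expanded right-hand alphabet), then use another letter-to-letter homomorphism to swap the first two coordinates obtaining $(\Sigma_2 \times \Gamma_1 \times X)^*$, then apply $g \times \idf_{(\Gamma_1 \times X)^*}$ (available by the induction hypothesis for $g$), and finally close with letter-to-letter homomorphisms that swap and reassociate back to $((\Gamma_1 \times \Gamma_2) \times X)^*$.

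The main claim of the lemma follows from the auxiliary statement by taking $X$ to be the singleton atomless alphabet $1$: we obtain $f \times \idf_{1^*}$ as a $(\circ)$-composition of the allowed form, and we then pre- and post-compose with the letter-to-letter homomorphisms witnessing $\Sigma \times 1 \simeq \Sigma$ and $\Gamma \simeq \Gamma \times 1$ to recover $f$ itself.

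The main obstacle, though not deep, is purely bookkeeping in the product case: one must keep track of which coordinate each step is acting on and insert the correct associativity and swap homomorphisms (all length-preserving and therefore in the allowed class) so that, at each stage, the prime being applied acts on the leftmost coordinate while everything else plays the role of the passive $X$. No delicate combinatorial argument is required — only that products, swaps, and reassociations of polynomial orbit-finite alphabets can all be realized by equivariant letter-to-letter homomorphisms, which are primes in $P$ by assumption.
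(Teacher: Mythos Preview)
Your proof is correct and follows essentially the same approach as the paper. The paper also inducts on the construction of $f$ and uses associativity/swap homomorphisms in the product case; the only organizational difference is that you strengthen the induction hypothesis to ``$f \times \idf_{X^*}$ is in the allowed class for every $X$'' from the outset, whereas the paper first obtains a sequential decomposition of $f$ and then, as a separate sub-step, shows that each building block survives a further $\times\,\idf$ (using $\assoc^*$ exactly as you do). Your packaging is slightly cleaner, but the content is the same.
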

\begin{proof}
    Let $P'$ be the extended set of primes. The ($\Leftarrow$)-implication is easy:
    it suffices to notice that every function in $P'$ is a $(\times, \circ)$-composition 
    of primes. The proof of the ($\Rightarrow$)-implication uses the following equality
    to push down the ($\times$)-compositions:
    \[ (f \times g) = (f \times \idf) \circ (\idf \times f) \]
    Formally, the proof goes by induction on the construction of $f$ as a $(\circ, \times)$-composition 
    of primes from $P$:
    \begin{enumerate}
        \item The induction base is trivial: if $f$ is equal to a $p \in P$, it can be constructed 
              as follows:
              \[ {{\leftI}^{-1}}^* \circ (p \times \idf) \circ \leftI^*\comma\]
              where $\leftI$ is the function $\Sigma \to \Sigma \times 1$. 
        \item The case where $f$ is a composition $f = g \circ h$ is also simple.
              By induction, we know that:
              \[\begin{tabular}{cc}
                $g =  g_n \circ \ldots \circ g_1$ & $h = h_m \circ \ldots \circ h_1$,
              \end{tabular}\]
              where all $g_i$'s and $h_i$'s belong to $P'$. This means
              that we can construct $f$ in the following way:
              \[ f =  g_n \circ \ldots \circ g_1 \circ h_m \circ \ldots \circ h_1 \]
        \item Finally, the most interesting case is where $f = g \times h$. 
              Again, by induction assumption, we know that: 
              \[\begin{tabular}{cc}
                $g =  g_n \circ \ldots \circ g_1$ & $h = h_m \circ \ldots \circ h_1$,
              \end{tabular}\]
              It follows that: 
              \[ f = (g_n \circ \ldots \circ g_1) \times (h_m \circ \ldots h_1) = (g_n \times \idf) \circ \ldots \circ (g_1 \times \idf) \circ (\idf \times h_m) \circ \ldots \circ (\idf \times h_1)\]
              This leaves us with showing that for every $p \in P'$ functions $p \times \idf$ and $\idf \times p$ are $(\circ)$-compositions of functions from $P'$. We only show this 
              for $p \times \idf$, as the proof for $\idf \times p$ is analogous. There are two cases to consider: If $p$ is a letter-to-letter homomorphism then, 
              so is $p \times \idf$. If, on the other hand, $p$ is of the form $p' \times \idf_{\Sigma^*}$, where $p' \in P$, then $p \times \idf = (p' \times \idf_\Sigma) \times \idf_\Gamma$, 
              which is almost the same as  $p' \times \idf_{\Sigma \times \Gamma} \in P'$. This finishes the proof, as we can fix this slight type mismatch 
              using $(\circ)$-compositions with $\assoc^*$ and ${\assoc^{-1}}^*$. 
    \end{enumerate}
\end{proof}

This leaves us with translating $(\circ)$-compositions of the extended primes into local rational semigroup transductions:
\begin{lemma}
\label{lem:extended-primes-to-rational-transductions}
    Let $p_1, \ldots, p_n$ be a sequence of extended single-use rational primes (i.e. every $p_i$ is either a letter-to-letter homomorphism 
    or a $p' \times \idf$, where $p'$ is a single-use rational prime). It follows that the composition 
    $p_n \circ \ldots \circ p_1$ can be expressed as a local rational semigroup transduction.
\end{lemma}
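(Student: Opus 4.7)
The plan is to proceed by induction on $n$. The base case $n=0$ is the identity function on $\Sigma^*$, which is realised by a local rational semigroup transduction over the one-element semigroup with output function $\lambda(\cdot, a, \cdot) = a$; the locality equation is vacuous.

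For the induction step, I want to show that if $f = p_{n-1} \circ \cdots \circ p_1$ has already been expressed as a local rational semigroup transduction $(S_f, h_f, \lambda_f)$, then $p_n \circ f$ is also such a transduction. When $p_n$ is a length-preserving equivariant homomorphism $\psi^*$, I keep $(S_f, h_f)$ and replace $\lambda_f$ by $\psi \circ \lambda_f$; orbit-finiteness and locality are preserved verbatim. The substantial case is $p_n = q \times \idf$, where $q$ is a single-use rational prime with data $(S_q, h_q, \lambda_q)$. After splitting $\Delta = \Delta_1 \times \Delta_2$ with $q$ acting on the first coordinate, I plan to take the new semigroup to be a product $S := S_f \times T$, where $T$ tracks, for each word $u \in \Sigma^*$, a context-parameterised function $\phi_u \colon S_f^1 \times S_f^1 \to S_q^1$. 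The value $\phi_u(x,y)$ is the $S_q$-product of the $h_q$-images of $q$ applied to the first coordinate of the $f$-image of $u$, when $u$ sits in prefix/suffix $S_f$-contexts $x,y$. Multiplication on $T$ is forced by the natural composition law $(\phi_1 \cdot \phi_2)(x, y) = \phi_1(x,\, s_2 y) \cdot \phi_2(x s_1,\, y)$ (with $s_1,s_2$ the corresponding $S_f$-components), and the output function reads off $\lambda_q$ at the current position, using the $\phi$-components of the surrounding $T$-elements to compute the needed prefix/suffix in $S_q$.

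The main obstacle will be showing that $T$ is orbit-finite and that the induced $\lambda$ satisfies the locality equation of Definition~\ref{def:local-rational-monoid-transduction}. For orbit-finiteness, the crucial leverage is the locality of $\lambda_f$: whenever the context is varied by a $\supp(e)$-permutation around an appropriate idempotent $e \in S_f$, the corresponding letter of $f$ is unchanged, so each $\phi_u$ factors through an orbit-reduced domain in the style of Lemma~\ref{lem:a-orbits-pof}, landing in a polynomial orbit-finite space reminiscent of the single-use function spaces of Theorem~\ref{thm:su-orbit-finite}. Hence $T$, and therefore $S$, is orbit-finite. Locality of the resulting $\lambda$ will then follow by combining the locality of $\lambda_q$ applied to the outer computation with the locality of $\lambda_f$ governing the interior $\phi$-values; the argument will closely mirror the manipulations in Claim~\ref{claim:rational-locality-h-equiv}, showing that permuting an outer context by a $\supp(e)$-permutation for the relevant idempotent leaves both the $\phi$-components and the final $\lambda_q$-output invariant. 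Once closure under composition with a single extended prime is established, the lemma follows by iterating over the sequence $p_1, \ldots, p_n$.
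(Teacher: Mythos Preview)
Your induction is set up in the opposite direction from the paper's: you close under \emph{post}-composition (show that $p_n \circ f$ is a local rational semigroup transduction whenever $f$ is), whereas the paper closes under \emph{pre}-composition (show that $f \circ p_1$ is one whenever $f$ is). The paper even remarks in a footnote that post-composition ``would be an equally valid proof strategy, but pre-compositions seem to be more compatible with local rational semigroup transductions.'' Concretely, for a homomorphism the paper simply replaces $h$ by $h \circ g$ and keeps $\lambda$ unchanged (Claim~\ref{claim:rational-precomp-homomorphism}); for single-use atom propagation it builds an explicit three-orbit semigroup $S'$ encoding ``before/after the first non-$\epsilon$'' (Claim~\ref{claim:rational-precomp-su-prop}); and for the finite-alphabet primes it uses a wreath product $S_2 \wr S_1$ with $S_1$ finite, so that $S_1^1 \to S_2$ is automatically orbit-finite (Claim~\ref{claim:rational-precomp-fintie-Mealy}). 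In each case the new semigroup is orbit-finite by construction, and locality is checked by hand.

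Your post-composition route is more uniform in spirit but the central step is underspecified. The set $T$ you introduce consists of functions $\phi_u : S_f^1 \times S_f^1 \to S_q^1$, and the ambient function space is orbit-infinite in general; you need to show that the particular functions $\phi_u$ arising from words lie in an orbit-finite subset. Your appeal to ``orbit-reduced domains in the style of Lemma~\ref{lem:a-orbits-pof}'' and to single-use function spaces is a sketch of an intuition, not an argument: locality of $\lambda_f$ gives invariance of \emph{individual output letters} under certain $\supp(e)$-permutations, but turning that into a bound on the orbit-complexity of the whole product $\phi_u(x,y)$ as $(x,y)$ varies requires real work (in particular, different positions of $u$ are governed by different idempotents, and you have not explained how to combine these). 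Likewise, your locality argument for the new $\lambda$ gestures at Claim~\ref{claim:rational-locality-h-equiv} but does not say which idempotent in the product semigroup plays the role of $e$, nor why the $\phi$-components are fixed by the relevant permutation. If you want to pursue post-composition, you should either restrict to a case analysis on the prime $q$ (as the paper does for pre-composition), or give a precise orbit-finite representation of $T$ and a full locality computation.
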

The remaining part of this section is dedicated to proving Lemma~\ref{lem:extended-primes-to-rational-transductions}: The proof goes 
by induction on $n$. The induction base is trivial -- if $n=0$ then the composition of primes is equal to  $\idf^* : \Sigma^* \to \Sigma^*$.
For the induction step, it suffices to show that local rational semigroup transductions are closed under \emph{pre-compositions}\footnote{
    Showing this for post-compositions would be an equally valid 
    proof strategy, but pre-compositions seem to be more compatible with local rational semigroup transductions.
    We are going to see a similar reasoning in Chapter~4 where, depending on the model, 
    we are going to use either pre- or post-compositions. 
}
with extended single-use rational prime functions. We start with the length-preserving homomorphisms:
\begin{claim}
\label{claim:rational-precomp-homomorphism}  
    Let $f: \Gamma^* \to \Delta^*$ be defined by some local rational semigroup transduction,
    and let $g : \Sigma \eqto \Gamma$ be an equivariant function over orbit-finite sets. It follows 
    that the following composition can also be defined by a local rational semigroup transduction:
    \[ \Sigma^* \transform{g^*} \Gamma^* \transform{f} \Delta^*\]
\end{claim}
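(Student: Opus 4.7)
The plan is to define a local rational semigroup transduction for $f \circ g^*$ by simply reusing the underlying semigroup of the transduction for $f$, precomposing $h$ with $g$, and precomposing $\lambda$ with $g$ in its middle coordinate.

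More concretely, let $(S, h, \lambda)$ be a local rational semigroup transduction computing $f : \Gamma^* \to \Delta^*$. I would define $(S', h', \lambda')$ of type $\Sigma^* \to \Delta^*$ by taking $S' := S$, $h' := h \circ g : \Sigma \eqto S$, and
\[ \lambda'(p, a, s) := \lambda(p,\ g(a),\ s) \qquad \text{for } (p, a, s) \in (S + \vdash) \times \Sigma \times (S + \dashv). \]
The first sanity check is that $(S', h', \lambda')$ computes $f \circ g^*$: unfolding Definition~\ref{def:rational-monoid-transduction} and using $h'(w_i) = h(g(w_i))$, the $i$-th output letter for an input $w \in \Sigma^*$ is exactly $\lambda(h(g(w_1)) \cdots h(g(w_{i-1})),\ g(w_i),\ h(g(w_{i+1})) \cdots h(g(w_n)))$, which is the $i$-th letter of $f(g^*(w))$. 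Orbit-finiteness of $S'$ is inherited from $S$, and $h', \lambda'$ are equivariant as compositions of equivariant functions.

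The only nontrivial point is verifying the locality equation of Definition~\ref{def:local-rational-monoid-transduction} for $\lambda'$. I would take any $x_1, x_2, y_1, y_2, e \in S$ with $e$ idempotent, any $a \in \Sigma$, and any $\supp(e)$-permutation $\pi$ such that $e = y_1 \cdot h'(a) \cdot x_2$. Since $h'(a) = h(g(a))$, this hypothesis is literally the hypothesis of the locality equation for $\lambda$ at the letter $g(a) \in \Gamma$. Applying locality of $\lambda$ at $g(a)$ then yields
\[ \lambda(x_1 e y_1,\ g(a),\ x_2 e y_2) = \lambda(\pi(x_1) e y_1,\ g(a),\ x_2 e \pi(y_2)), \]
which, after unfolding the definition of $\lambda'$, is precisely the locality equation for $\lambda'$ at the letter $a$.

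There is essentially no obstacle here: the letter $a \in \Sigma$ appears in the locality equation only through the factor $h'(a) = h(g(a))$ in the idempotent hypothesis and through its literal occurrence in the middle coordinate of $\lambda'$, and both roles translate transparently to the corresponding statement for $g(a) \in \Gamma$. The proof is therefore a direct reduction rather than a construction requiring new machinery. The analogous closure under post-composition with letter-to-letter homomorphisms, needed for the induction in Lemma~\ref{lem:extended-primes-to-rational-transductions}, will require a similar but distinct argument (composing $\lambda$ on the output side), and the harder remaining step will be handling pre-compositions with functions of the form $p \times \idf$ for the genuinely single-use rational primes $p$.
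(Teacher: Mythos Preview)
Your proof is correct and follows the same approach as the paper; indeed, the paper's entire proof is the single line ``If $f$ is recognized by $(S, h, \lambda)$, then $f \circ g^*$ is recognized by $(S, h \circ g, \lambda)$,'' which leaves the type coercion of $\lambda$ and the verification of locality implicit, whereas you spell both out. Your remark about post-composition is unnecessary here (the induction in Lemma~\ref{lem:extended-primes-to-rational-transductions} only needs pre-compositions), but it does no harm.
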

\begin{proof}
    If $f$ is recognized by $(S, h, \lambda)$, then $f \circ g^*$ is recognized by $(S, h \circ g, \lambda)$. 
\end{proof}

Next, we show that local rational semigroup transductions are closed under pre-composition with single-use left-to-right 
atom propagation -- the right-to-left variant of the proof is analogous.
\begin{claim}
\label{claim:rational-precomp-su-prop}
    Let $f: ((\atoms + \epsilon) \times \Sigma)^* \to \Gamma^*$ be defined by some local rational semigroup transduction. 
    It follows that the following composition can also be defined by a local rational semigroup transduction:
    \[ ((\atoms + \{\downarrow, \epsilon\}) \times \Sigma)^* \longtransform{\overrightarrow{{\textrm{su-prop}}}\ \times\  \idf} ((\atoms + \epsilon) \times \Sigma)^* \transform{f} \Gamma^*\]
\end{claim}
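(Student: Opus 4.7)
The plan is to construct a local rational semigroup transduction $(S', h', \lambda')$ implementing $f \circ (\overrightarrow{\textrm{su-prop}} \times \idf)$ from the given $(S, h, \lambda)$ for $f$, by augmenting $S$ with the register state of the propagation. Let $Q = \atoms + \epsilon$ be the state space of $\overrightarrow{\textrm{su-prop}}$. For each nonempty word $w$ over $(\atoms + \{\downarrow, \epsilon\}) \times \Sigma$, define its augmented behavior $B_w \in Q \suto S^1 \times Q$ sending an entry register state $q$ to the pair consisting of (i) the $S$-product of $h$ applied letterwise to the propagated output of $w$ starting from $q$, and (ii) the exit register state. A case analysis on the propagation rules shows that each $B_w$ is a single-use function: in each case $q$ is queried at most once in computing the pair, since either the exit state is determined by $w$ alone (so only the $S$-component may use $q$) or the $S$-value does not depend on $q$ (so only the exit-state component does). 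Let $S'$ be the sub-semigroup of $Q \suto S^1 \times Q$ consisting of all such $B_w$, with operation that threads the register state through and multiplies $S$-components (using a polynomial orbit-finite representation of $S$ via Lemma~\ref{lem:pof-represenation} if $S$ is not itself polynomial). By Theorem~\ref{thm:su-orbit-finite}, $S'$ is orbit-finite.

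Define $h'(x) := B_x$ for a single letter $x$, and $\lambda'(P, x, T) := \lambda(P_S, (a'', \sigma), T_S)$, where $(P_S, q_{\mathrm{in}}) := P(\epsilon)$ (setting $(1, \epsilon)$ if $P = \vdash$), the propagated output letter $a''$ and exit state $q_{\mathrm{out}}$ are computed from $x = (a, \sigma)$ and $q_{\mathrm{in}}$ via the transition rules of $\overrightarrow{\textrm{su-prop}}$, and $T_S := \proj_1(T(q_{\mathrm{out}}))$ (setting $1$ if $T = \dashv$). Equivariance of $h'$ and $\lambda'$ is immediate from the construction, and it is straightforward to verify that $(S', h', \lambda')$ implements $f \circ (\overrightarrow{\textrm{su-prop}} \times \idf)$.

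The main obstacle is verifying the locality equation for $\lambda'$. Fix an idempotent $E \in S'$ with $E = Y_1 \cdot h'(x) \cdot X_2$ and a $\supp(E)$-permutation $\pi$. The equation $E \cdot E = E$ forces the exit-state function $\rho_E := \proj_2 \circ E$ to be idempotent on $Q$, and at every fixpoint $q^* \in F := \mathrm{Fix}(\rho_E)$ the value $e_S := \proj_1(E(q^*))$ is an idempotent of $S$; moreover, since the single-letter behaviors of $\overrightarrow{\textrm{su-prop}}$ have exit-state functions that are either the identity or a constant, $\rho_E$ is either the identity on $Q$ or a constant $c \in \supp(E) \cup \{\epsilon\}$. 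In the constant case, the threaded state immediately after $E$ equals $c$ on both sides of the locality equation, so the letter $x$ receives the same entering state $q_{\mathrm{in}}$ on both sides, producing matching $a''$ and $q_{\mathrm{out}}$; the two prefix $S$-values then differ only by applying $\pi$ to the $X_1$-portion, and the identity $s \cdot e_S = s$ extracted from $E \cdot E = E$ rewrites these as $A \cdot e_S \cdot B$ and $\pi(A) \cdot e_S \cdot B$ respectively, reducing the target to the locality equation for $\lambda$ (the suffix is handled symmetrically). The identity case, in which $x$, $Y_1$, $X_2$ must all have identity exit-state function, is handled by a similar but more delicate argument combining $\pi$-equivariance of $E$ with the idempotence of $\proj_1(E(q))$ for every $q \in Q$ to decompose the prefix and suffix $S$-values into the shape required by the locality of $\lambda$.
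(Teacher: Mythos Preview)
Your construction of $S'$ as single-use behaviours $Q \suto S^1 \times Q$ is essentially the paper's approach in different packaging: the paper writes out $S'$ explicitly as a three-way sum (according to whether the first non-$\epsilon$ instruction is $\downarrow$, an atom, or absent), which is exactly the case distinction underlying your single-use analysis of $B_w$. Your observation that $\rho_E$ is either the identity or a constant also matches the paper's two-case split in the locality verification.

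There is, however, a genuine gap in your constant-case argument. You write that the suffix ``is handled symmetrically'' via the identity $s \cdot e_S = s$ extracted from $E \cdot E = E$. But idempotency of $E$ only gives you right-absorption $\proj_1(E(q)) \cdot e_S = \proj_1(E(q))$, not left-absorption, so you cannot symmetrically insert $e_S$ on the suffix side to obtain the form $C \cdot e_S \cdot D$ required by the locality equation for $\lambda$. The missing observation is this: since $E = Y_1 \cdot h'(x) \cdot X_2$ and $\rho_E(c) = c$, threading $c$ through $Y_1, x, X_2$ lands back at $c$; hence in the suffix computation $X_2 E Y_2$ starting from $q_{\mathrm{out}}$, the state $p_1$ after $X_2$ is exactly $c$, so the middle $E$-term contributes $\proj_1(E(c)) = e_S$ on the nose. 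This gives suffix $= t_1 \cdot e_S \cdot t_3$ directly, and then $s_3 \cdot h(a'',\sigma) \cdot t_1 = e_S$ (from the same threading) supplies the decomposition condition. The paper achieves the same effect by substituting $ee$ for $e$ and unfolding via its $\mathtt{feed}/\mathtt{get}$ calculus.

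Your identity case is dismissed as ``more delicate'', but in fact it is simpler: if $\rho_E = \idf$ then every factor of $E$ (hence $Y_1$, $h'(x)$, $X_2$) corresponds to an all-$\epsilon$ word, so their $S$-projections are constants independent of the entry state; writing $e_S$ for the constant value of $\proj_1 \circ E$, the prefix and suffix already have the form $\ldots \cdot e_S \cdot \ldots$ and the locality of $\lambda$ applies immediately. You should reverse your assessment of which case is hard.
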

\begin{proof}
    Let $(S, h, \lambda)$ be a local rational semigroup transduction that defines $f$.
    We show how to construct $(S', h', \lambda')$ that defines $f \circ (f_{\overrightarrow{\textrm{su-prop}}} \times \idf)$. 
    First, let us analyse how $f_{\overrightarrow{\textrm{su-prop}}} \times \idf$ modifies an input infix. Here is an example:
    \bigpicc{su-prop-id-infix}
    Since we do not know the prefix, we cannot exactly predict the output of $f_{\overrightarrow{\textrm{su-prop}}} \times \idf$
    on the infix. (Observe that the unknown suffix does not influence the output on the infix.) We can, however, predict almost 
    all of the output letters:
    \bigpicc{su-prop-id-infix-2}
    Since we do not know all output letters, we cannot exactly compute the product of their $h$-values. 
    We can, however, use the product of the $h$-values to compress the information about the 
    infix:
    \bigpicc{su-prop-id-infix-3}
    Observe that $s_1$ and $s_2$ come from $S^1$ (defined in Claim~\ref{claim:semigroup-to-monoid}) --
    this is because both the prefix before the first $\downarrow$ and the suffix after the first $\downarrow$ might be empty.
    Moreover, in the general case, the last non-$\epsilon$ letter might be $\downarrow$ -- in this case the 
    value propagated into the suffix is going to be $\bot$. It is also possible that 
    the first non-$\epsilon$ value is not a $\downarrow$, or that the input word 
    contains only $\epsilon$'s. 
    In order to account for all those cases, we represent the compressed 
    information about an infix as an element of the following set:
    \[ S' = \underbrace{S^1 \times \Sigma \times S^1 \times (\atoms + \bot)}_{
        \substack{\textrm{An infix whose first non-$\epsilon$}\\ \textrm{is equal to $\downarrow$}}
    }\ \ \  + \underbrace{S \times (\atoms + \bot)}_{
        \substack{\textrm{An infix whose first non-$\epsilon$}\\ \textrm{is an element of $\atoms$}}
    } + \underbrace{S}_{
        \substack{\textrm{An infix that}\\ \textrm{only contains $\epsilon$'s}}\\
    } \]
    It is not hard to see that this compression is compositional -- i.e. the compressed information about $w_1w_2$ 
    depends only on the compressed information about $w_1$ and the compressed information about $w_2$. 
    It follows by (the semigroup version of) the proof of Lemma~\ref{lem:compositional-to-monoids} that this compression operation 
    induces a semigroup structure on $S'$. Moreover, since both $S$ and $\Sigma$ are orbit-finite, 
    it follows that $S'$ is orbit-finite as well. It follows that we can use $S'$ as the underlying 
    semigroup for the rational semigroup transduction for $f_{\overrightarrow{\textrm{su-prop}}} \times \idf$.
    This leaves with defining $h'$ and $\lambda'$. We define $h'$ to be the compression function. 
    For $\lambda'$ we are going to need a couple of auxiliary functions. We start with $\mathtt{feed}$:
        \[ \mathtt{feed} : \underbrace{(\atoms + \bot)}_{
            \substack{ \textrm{Value propagated}\\ \textrm{from the prefix} }
        } \times \underbrace{S'}_{
            \substack{\textrm{The compressed information}\\ \textrm{about the infix}}
        } \to \underbrace{S}_{
            \substack{\textrm{The product of the $h$-values}\\ \textrm{on the infix after applying $f_{\overrightarrow{\textrm{su-prop}}}$}}
        } \]
        It is not hard to see that the function $\mathtt{feed}$ can be computed using the following formula:
        \[
            \begin{tabular}{ccc}
                $\mathtt{feed}(v, (s_1, x, s_2, a) ) = s_1 h\binom{v}{x} s_2$ & $\mathtt{feed}(v, (s, a)) = s$ & $\mathtt{feed}(v, s) = s$
            \end{tabular}
        \]
        Now, we define the function $\mathtt{get} : S' \to (\atoms + \bot)$ that computes the atom propagated by 
        the infix:
        \[ 
            \begin{tabular}{ccc}
                $\mathtt{get}(s_1, x, s_2, a) = a$ & $\mathtt{get}(s, a) = a$ & $\mathtt{get}(s) = \bot$
            \end{tabular}
        \]
        Now, we can use $\mathtt{feed}$, $\mathtt{get}$, and the original output function $\lambda$, 
        to define $\lambda'$ that computes the output letter for $w_1 \underline{\binom{y}{x}} w_2$
        with respect to $f \circ (f_{\overrightarrow{\textrm{su-prop}}} \times \idf)$:
        \[ 
            \lambda'\left(p',\ \binom{\downarrow}{x},\ s'\right) = \lambda\left(\mathtt{feed}(\bot,\, p'),\ \binom{\mathtt{get}(p')}{x},\ \mathtt{feed}(\bot,\, s') \right)
        \]
        \[ 
            \lambda'\left(p',\ \binom{\epsilon}{x},\ s'\right) = \lambda\left(\mathtt{feed}(\bot,\, p'),\ \binom{\epsilon}{x},\ \mathtt{feed}(\mathtt{get}(p'),\, s') \right)
        \]
        \[ 
            \lambda'\left(p',\ \binom{a \in \atoms}{x},\ s'\right) = \lambda\left(\mathtt{feed}(\bot,\, p'),\ \binom{\epsilon}{x},\ \mathtt{feed}(a,\, s') \right)
        \]
        It is not hard to see that this $\lambda'$ is equivariant, and that $(S', h', \lambda')$ defines $f \circ (f_{\overrightarrow{\textrm{su-prop}}} \times \idf)$.
        This leaves us showing that $\lambda'$ is local: Let us take $x_1, x_2, y_1, y_2, e \in S'$, $a \in (\atoms + \{\downarrow, \epsilon\}) \times \Sigma$,
        and a $\supp(e)$-permutation $\pi$, such that $e$ is an idempotent, and $y_1 h'(a) y_2 = e$, and let us show that: 
        \[ \lambda'(x_1 e y_1,\; a, \; y_2 e x_2) = \lambda'(\pi(x_1)\, e y_1,\; a,\; y_2 e \pi(x_2)) \]
        For this, we consider two cases:\\
        
        First, we consider the case where $e$ contains only $\epsilon$'s,  i.e. $e \in S$ (see the definition of $S'$).
                  Since $y_1 h'(a) y_2 = e$, it follows that also  $y_1, a$, and $y_2$ contain only $\epsilon$'s --
                  i.e. $y_1, y_2 \in S$ and $a$ is of the form $\binom{\epsilon}{a}$ (for some $a \in \Sigma$). By definition of $S'$ and $\lambda'$,
                  it follows that:
                  \[ \lambda'(x_1 e y_1, \binom{\epsilon}{a}, y_2 e x_2) = \lambda(\mathtt{feed}(\bot, x_1) \cdot e \cdot y_1, \binom{\epsilon}{a}, y_2 \cdot e \cdot \mathtt{feed}(\mathtt{get}(x_1), x_2 ))\]
                  Observe that (by definition of $S'$), $e$ is an idempotent as an element of $S$, and also in $S$ it holds that $y_1 h\binom{\epsilon}{a} y_2 = e$.
                  This means that we can use the locality of $\lambda$ to transform the right-hand side of the equality further into:
                  \[\lambda(\pi(\mathtt{feed}(\bot, x_1)) \cdot e \cdot y_1, \binom{\epsilon}{a}, y_2 \cdot e \cdot \pi(\mathtt{feed}(\mathtt{get}(x_1), x_2 )))\]
                  By equivariance of $\mathtt{feed}$ and $\mathtt{set}$, we know that this is equal to:
                  \[ \lambda(\mathtt{feed}(\bot, \pi(x_1)) \cdot e \cdot y_1,  \binom{\epsilon}{a}, y_2 \cdot e \cdot \mathtt{feed}(\mathtt{get}(\pi(x_1)), \pi(x_2)))\comma\]
                  which by definition of $S'$ and $\lambda'$ is equal to:
                  \[ \lambda'(\pi(x_1) e y_1, \binom{\epsilon}{a}, y_2 e \pi(x_2))\]
                  This finishes the proof for the first case.\\

                  In the second case, we assume the complement of the first case, i.e. that $e$ contains at least one $\downarrow$ or an element of $\atoms$.
                  In this case $a$ could be of any form: $\binom{\epsilon}{a}$, $\binom{\downarrow}{a}$, or $\binom{b \in \atoms}{a}$.
                  All of those cases are analogous, so we only show how to deal with the most interesting case, which is $\binom{\downarrow}{a}$. 
                  First, we notice that since $e$ is an idempotent, we have that $\lambda'(x_1 e y_1, \binom{\downarrow}{a}, y_2 e x_2) = \lambda'(x_1 e e y_1, \binom{\downarrow}{a}, y_2 e x_2)$. 
                  Then, we use the definition of $S'$ and $\lambda'$ to unfold the right-hand side in the following way:
                  \[ \lambda\left(\mathtt{f}(\bot, x_1) \cdot \mathtt{f}(x_1, e) \cdot \mathtt{f}(e, e) \cdot \mathtt{f}(e, y_1), \binom{\mathtt{get}(ey_1)}{a}, \mathtt{f}(\bot, y_2) \cdot \mathtt{f}(y_2, e) \cdot \mathtt{f}(e, x_2) \right) \comma \]
                  where $\mathtt{f}(x_1, e)$ is a notational shortcut for $\mathtt{feed}(\mathtt{get}(x_1), e)$.
                  Now, let us observe 
                  that (by definition of $S'$), $\mathtt{feed}(e, e)$ is an idempotent in $S$: 
                  \[ \mathtt{feed}(e, e) \cdot \mathtt{feed}(e, e) = \mathtt{feed}(e, e \cdot e) = \mathtt{feed}(e, e) \]
                  Moreover, since $y_1 h'\binom{\downarrow}{a} y_2 = e$, we know that $\mathtt{feed}(y_2, e) = \mathtt{feed}(e, e)$, 
                  and that
                  \[ \mathtt{feed}(e, y_1) \cdot h\binom{\mathtt{get}(ey_1)}{a} \cdot \mathtt{feed}(\bot, y_2) = \mathtt{feed}(e, e) \]
                  Finally, we observe that by Lemma~\ref{lem:fs-functions-preserve-supports}, it holds that $\supp(\mathtt{feed}(e, e)) \subseteq(\supp(e))$, 
                  which means that $\pi$ is a $\supp(\mathtt{feed}(e, e))$-permutation. It follows that we can apply the locality equation obtaining:
                  \[
                  \lambda\left( \pi\left(\mathtt{f}(\bot, x_1) \cdot \mathtt{f}(x_1, e)\right) \cdot \mathtt{f}(e, e) \cdot \mathtt{f}(e, y_1),
                   \binom{\mathtt{get}(ey_1)}{a}, \mathtt{f}(\bot, y_2) \cdot \mathtt{f}(y_2, e) \cdot \pi(\mathtt{f}(e, x_2)) \right)
                  \]
                  By definition of $S'$ and $\lambda'$ this is equal to $\lambda(\pi(x_1) e e y_1, \binom{\downarrow}{a}, y_2 e \pi(x_2))$,
                  which is in turn equal to $\lambda'(\pi(x_1) e y_1, \binom{\downarrow}{a}, y_2 e \pi(x_2))$.
                  It follows that $\lambda'$ is local.
\end{proof}

Finally, we show that local rational prime functions are closed under pre-compositions with left-to-right multiple-use bit propagation
and with group-prefix functions (left-to-right propagation and group-suffix functions
can be handled analogously). We can deal with both of those cases by proving the following claim:
\begin{claim}
    \label{claim:rational-precomp-fintie-Mealy}
    Let $\mathcal{A} : \Sigma_1^* \to \Sigma_2^*$ be a Mealy machine over \emph{finite} (and not only orbit-finite) alphabets, 
    let $\Gamma, \Delta$ be orbit-finite alphabets, and let $f : (\Sigma_2 \times \Gamma)^* \to \Delta^*$
    be a local rational semigroup transduction. It follows that the following composition is a local rational semigroup transduction as well:
    \[ f \circ (\mathcal{A} \times \idf) : (\Sigma_1 \times \Gamma)^* \to \Delta^* \]
\end{claim}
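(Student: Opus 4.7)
The plan is to carry out a product-style construction: build a new semigroup $S'$ that packages, for each possible entry state of $\mathcal{A}$, the $h$-image of the $(\Sigma_2 \times \Gamma)$-word that $\mathcal{A}$ produces when started in that state. Let $Q$ be the (finite, atomless) state set of $\mathcal{A}$, $q_0 \in Q$ its initial state, $\delta : Q \times \Sigma_1 \to Q$ its state transition, $\mu : Q \times \Sigma_1 \to \Sigma_2$ its output function, and $B \subseteq Q^Q$ its finite, atomless transition semigroup. I will set $S' = B \times S^Q$ with product
\[
(b_1, (s^1_q)_q) \cdot (b_2, (s^2_q)_q) \ = \ \bigl(b_1 b_2,\ (s^1_q \cdot s^2_{b_1(q)})_q\bigr),
\]
which makes $S'$ an equivariant orbit-finite semigroup. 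I will define $h'(a, c) = \bigl(\delta(\cdot, a),\ (h(\mu(q, a), c))_{q \in Q}\bigr)$.

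Next, I will define $\lambda'$ so that on a prefix $P = (P^B, (P_q)_q)$, a letter $(a, c)$, and a suffix $S_0 = (S_0^B, ((S_0)_q)_q)$ it returns
\[
\lambda\bigl(P_{q_0},\ (\mu(P^B(q_0), a), c),\ (S_0)_{\delta(P^B(q_0), a)}\bigr),
\]
with the obvious boundary adjustments when $\vdash$ or $\dashv$ appears. A direct induction on word length shows that the $q_0$-projection of $h'(w_1) \cdots h'(w_{i-1})$ equals $h(w'_1) \cdots h(w'_{i-1})$, where $w' = (\mathcal{A} \times \idf)(w)$, and that the "middle letter" fed to $\lambda$ is exactly $(\mathcal{A}(u)_i, v_i)$. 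Hence $(S', h', \lambda')$ defines $f \circ (\mathcal{A} \times \idf)$.

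The main work is verifying the locality of $\lambda'$. Fix an idempotent $e' = (e_B, (e_q)_q) \in S'$, a $\supp(e')$-permutation $\pi$ (which, since $B$ and $Q$ are atomless, acts trivially on $B$-components and on state indices), elements $y_1, x_2 \in S'$ and $(a, c) \in \Sigma_1 \times \Gamma$ with $y_1 h'(a, c) x_2 = e'$. Writing $q_1 = [x_1]^B(q_0)$, $q_2 = e_B(q_1)$, $q_{\mathrm{mid}} = [y_1]^B(q_2)$, $q_{\mathrm{next}} = \delta(q_{\mathrm{mid}}, a)$, $q_3 = [x_2]^B(q_{\mathrm{next}})$, and $b = \mu(q_{\mathrm{mid}}, a)$, I will establish four facts: (i) $q_3 = q_2$, by applying the $B$-projection of $y_1 h'(a, c) x_2 = e'$ to $q_2$ and using that $e_B$ is an idempotent function with $e_B(q_2) = q_2$; (ii) $(e')_{q_2}$ is an idempotent of $S$, because $(e')_{q_2} = (e'e')_{q_2} = (e')_{q_2} \cdot (e')_{e_B(q_2)} = (e')_{q_2} \cdot (e')_{q_2}$; (iii) $(y_1)_{q_2} \cdot h(b, c) \cdot (x_2)_{q_{\mathrm{next}}} = (e')_{q_2}$, by reading off the $q_2$-component of $y_1 h'(a, c) x_2 = e'$; and (iv) the general absorption identity $(e')_{q_1} = (e')_{q_1} \cdot (e')_{q_2}$. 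Using (iv) I rewrite the $q_0$-projection of $x_1 e' y_1$ as $X_1 \cdot (e')_{q_2} \cdot (y_1)_{q_2}$ with $X_1 := (x_1)_{q_0} \cdot (e')_{q_1}$, then apply the locality of $\lambda$ with idempotent $(e')_{q_2}$ (justified by (ii), (iii), and $\supp((e')_{q_2}) \subseteq \supp(e')$), and finally use that $\pi$ fixes $(e')_{q_1}$ to fold things back into $\pi(x_1)$, matching the right-hand side of the locality equation for $\lambda'$.

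The main obstacle is precisely this state-index bookkeeping: the naive attempt to use $e_q$ itself as the $S$-idempotent witness fails because $e_q$ need not be idempotent unless $q$ lies in the image of $e_B$. Both key moves -- identifying $q_2$ as the correct fixed point (Fact (i)) and absorbing the non-idempotent factor $(e')_{q_1}$ into the outer $x_1$ via Fact (iv) -- rely on the fact that $e_B$ is an idempotent of $B$ and that $\pi$ fixes every $S$-element supported inside $\supp(e')$. Once this alignment is in place, the locality of $\lambda'$ reduces to a single invocation of the locality of $\lambda$.
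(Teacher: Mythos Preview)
Your proof is correct and follows essentially the same wreath-product strategy as the paper. The only presentational difference is that the paper first reduces the Mealy machine to a finite semigroup-prefix function $f_{S_1\textrm{-pref}}$ and then builds the wreath product $S_1 \times (S_1^1 \to S_2)$ indexed by elements of $S_1^1$, whereas you work directly with the Mealy machine and index by its state set $Q$; in both cases the locality argument hinges on the same observation, namely that the component of $e'$ at the ``stabilized'' index (your $q_2 = e_B(q_1)$, the paper's $x_1 e$) is an $S$-idempotent satisfying the required factorization, and that $\pi$ fixes the absorbed factor $(e')_{q_1}$ because its support lies inside $\supp(e')$.
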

\begin{proof}
    By Lemma~\ref{lem:mealy-monoids-classical}, we know that $\mathcal{A}$ is equivalent to some finite semigroup transduction 
    (see Definition~\ref{def:finite-semigroup-transductions}), which means that it can be expressed 
    as  $\lambda_1^* \circ f_{S_1\textrm{-pref}}\circ h_1^*$, where $f_{S_1\textrm{-pref}}$ is the semigroup prefix function 
    for some \emph{finite} (and not only orbit-finite) semigroup $S_1$. Since, by Claim~\ref{claim:rational-precomp-homomorphism}, we already know 
    that local rational semigroup transductions are closed under pre-compositions with homomorphisms, it suffices 
    to show how to construct $f \circ (f_{S_1\textrm{-pref}} \times \idf)$ as a local rational semigroup transduction:\\

    Let $(S_2, h_2, \lambda_2)$ be a local rational semigroup transduction for $f$. 
    We show how construct a local rational semigroup construction for $f \circ (f_{S_1\textrm{-pref}} \times \idf)$, 
    using the classical wreath product construction for finite semigroups\footnote{See \cite[Definition~1.7]{krohn1965prime}.}.
    First, let us analyse how $f_{S_1\textrm{-pref}} \times \idf$ modifies an input infix. Here is an example:
    \smallpicc{wreath-rat-1}
    \noindent
    In order to compute the output of $f_{S_1\textrm{-pref}} \times \idf$, 
    it suffices to know the $S_1$-product of the prefix. For example:
    \bigpicc{wreath-rat-2}
    \noindent
    At this point, we can apply $h_2$ to every letter, and compute the $S_2$ product of the infix. 
    In general, this can be represented as a function of the following type:
    \[ \underbrace{S_1^1}_{\substack{
        \textrm{Given the $S_1$-value of a prefix}\\
        \textrm{(which might be empty) \ldots}
    }} \to \underbrace{S_2}_{\substack{
        \textrm{\ldots what is the}\\
        \textrm{$S_2$-value of the infix}\\
        \textrm{after applying $f_{S_1-\textrm{pref}}$ and $h_2^*$}
    }} \]
    In order to obtain compositionality, we also need to remember the $S_1$-value of the infix.
    In total, we compress the information about each infix into an element of the following set:
    \[S_3 = \underbrace{S_1}_{\substack{
        \textrm{$S_1$-product}\\
        \textrm{of the prefix}
    }} \times
    \underbrace{(S_1^1 \to S_2)}_{\substack{
        \textrm{The function}\\
        \textrm{explained above}
    }}
    \]
    The key observation is that since $S_1$ is finite (and not only orbit-finite), 
    we know that $S_1 \to S_2$ is orbit-finite, as it is isomorphic to a finite power $S_2^{|S_1|}$. 
    This procedure of compressing is compositional, which (as explained in Lemma~\ref{lem:compositional-to-monoids})
    imposes a semigroup product on $S_3$. The explicit formula for this product looks as follows:
    \[ (x, X) \cdot (y, Y) = (xy,\ p \mapsto X(p) \cdot Y(px)) \]
    The intuition behind this formula is that $p$ is the $S_1$-product of the $X$'s prefix, 
    and $px$ is the $S_1$-product of the $Y$'s prefix (as $Y$'s prefix also includes $x$).
    This is a well-studied construction called the wreath product of $S_1$ and $S_2$
    sometimes denoted\footnote{In \cite{krohn1965prime} this is denoted as $S_3 = S_2 w S_1$.} as $S_3 = S_2 \wr S_1$.
    We use $S_3$ as the underlying semigroup 
    for the local rational semigroup transduction for $f \circ (f_{S_1\textrm{-pref}} \times \idf)$. 
    The input function $h_3 : S_1 \times \Gamma \to S_3$ is given by the following formula:
    \[ h_3(s, a)  = (s,\ p \mapsto h_2(ps, a))\comma \]
    and the output function $\lambda_3 : S_3^1 \times (S_1 \times \Gamma) \times S_3^1 \to \Delta$, is given by the 
    following formula:
    \[ \lambda_3((x, X),\ (s, a),\ (y, Y)) = \lambda_2(X(1), (xs, a), Y(xs))\]
    The intuition behind this formula for $\lambda_3$ is that after applying $f_{S_1-\textrm{pref}}$ to the input word, 
    the $S_2$-product of the $h_2$-values of the prefix is equal to $X(1)$, 
    the current letter is equal to $(xs, a)$, and  the $S_2$-product of the $h_2$-values 
    of the suffix is equal to $Y(xs)$. Thanks to this intuition, it is not hard to see that $(S_3, h_3, \lambda_3)$
    implements $f \circ (f_{S_1\textrm{-pref}} \times \idf)$.\\
    
    This leaves us with showing that $\lambda_3$ is local.
    For this we take $(x_1, X_1)$, $(x_2, X_2)$, $(y_1, Y_1)$, $(y_2, Y_2)$, $(e, E) \in S_3$, 
    $(s, a) \in S_1 \times \Gamma$ and a $\supp((e, E))$-permutation $\pi$
    such that $(e, E)$ is an idempotent, and $(y_1, Y_1) \cdot  h_3(s, a) \cdot (y_2, Y_2) = (e, E)$, 
    and we show  that:
    \begin{gather*}
        \lambda_3((x_1, X_1) (e, E) (y_1, Y_1), \ (s, a), \ (y_1, Y_1) (e, E) (x_1, X_1) \\
        =\\
        \lambda_3(\pi((x_1, X_1)) (e, E) (y_1, Y_1), \ (s, a), \ (y_2, Y_2) (e, E)\; \pi((x_2, X_2)
    \end{gather*}
    First, since $(e, E)$ is an idempotent, we can transform the initial expression into:
    \[ \lambda_3((x_1, X_1) (e, E) (e, E) (y_1, Y_1), \ (s, a), \ (y_1, Y_1) (e, E) (x_1, X_1) )\]
    Then, using the definition of $\lambda_3$ and of the product in $S_3$, we transform it into:
    \[ \lambda_2(X_1(1) \cdot E(x_1) \cdot E(x_1 e) \cdot Y_1(x_1 e e),\ (x_1 e e y_1 s, a),\ Y_2(x_1 e e y_1 s) \cdot E(x_1 e e y_1 s y_2) \cdot X_2(x_1 e e y_1 s y_2 e))\]
    Now, we observe that since $(e, E)$ is an idempotent in $S_3$, we know that $e$ is an idempotent in $s_1$, and 
    that since $(y_1, Y_1) \cdot h_3(s, a) \cdot (y_2, Y_2) = (e, E)$, we know that $y_1 s y_2 = e$.
    Thanks to this observation, we can transform our expression into:
    \[ \lambda_2(X_1(1) \cdot E(x_1) \cdot E(x_1 e) \cdot Y_1(x_1 e),\ (x_1 e y_1 s, a),\ Y_2(x_1 e y_1 s) \cdot E(x_1 e) \cdot X_2(x_1 e))\] 
    At this point, we would like to apply the locality equation for $\lambda_2$. For this, we need to
    show proof all of its assumptions. 
    First, we observe that since $(e, E)$ is an idempotent, we know that:
    \[ (e, E) = (e, E) (e, E) = (ee, p \mapsto E(p) \cdot E(pe) ) \] 
    In particular, this means that $E(p) = E(p) \cdot E(pe)$ for every $p \in S_1^1$. If we take $p = x_1 e$, we obtain that 
    $E(x_1 e) = E(x_1 e) E(x_1 e e) = E(x_1e)E(x_1e)$, which means that $E(x_1e)$ is an idempotent. Next, 
    by a similar reasoning, we observe that since $(y_1, Y_1) \cdot h_3(s, a) \cdot (y_2, Y_2) = (e, E)$,
    we know that for all $p$, it holds that:
    \[ Y_1(p) \cdot h_2(py_1s, a) \cdot Y_2(py_1s) = E(p) \]
    If we take $p = x_1e$, we get that:
    \[ Y_1(x_1 e) \cdot h_2(x_1ey_1s, a) \cdot Y_2(x_1ey_1s) = E(x_1 e)\]
    Finally, we notice that $S_1$ is atomless (this is because every equivariant set that is both finite and orbit-finite has
    to be atomless). This means that both $x_1$ and $e$ have empty supports, which means that $\pi$ is a $\supp(E(x_1e))$-permutation, 
    as by Lemma~\ref{lem:fs-functions-preserve-supports} it holds that $\supp(E(x_1e)) \subseteq \supp(E)$. It follows that we can apply the locality equation, obtaining:
    \[ \lambda_2( \pi(X_1(1) \cdot E(x_1)) \cdot E(x_1 e) \cdot Y_1(x_1 e),\ (x_1 e y_1 s, a),\ Y_2(x_1 e y_1 s) \cdot E(x_1 e) \pi( X_2(x_1 e)))\] 
    Observe now that since $x_1$ is atomless (i.e. equivariant), we know that $\pi(x_1) = x_1$, and since $\pi$ is a $\supp(e, E)$-permutation, 
    we know that $\pi(E) = E$. This means that (after unfolding some of the $e$'s back to $y_1 s y_2$ or to $ee$)
    we can fold the definitions of $\lambda_3$ and of the product in $S_3$, obtaining:
    \[ \lambda_3( \pi(x_1, X_1) (e, E) (e, E) (y_1, Y_1),\ (s, a),\ (y_2, Y_2) (e, E) \pi(x_2, X_2))\]
    After folding $(e, E) (e, E)$ back to $(e, E)$, we conclude that $\lambda_3$ is local. 
\end{proof}

\chapter{Two-way transductions with atoms}
\label{ch:su-regular-functions}

In this chapter, we define and study the class of word-to-word functions recognized by \emph{single-use two-way transducers}.
Our main result is Theorem~\ref{thm:regular-equivalence-with-atoms}, which states that this class of transductions 
admits three more equivalent definitions: \emph{copyless streaming string transducers with atoms}, 
\emph{regular list functions with atoms}, and \emph{compositions of single-use two-way primes}.
Furthermore, we show that single-use two-way transducers (and their equivalent models)
are closed under compositions and have decidable equivalence.\\

In my opinion, these results demonstrate that single-use models are better behaved than their multiple-use counterparts:
Two-way multiple-use register automata lack decidable equivalence (see Theorem~\ref{thm:2DOFA-logspace}), and copyless 
streaming string transducers with multiple-use registers are not closed under composition
(see \cite[Proposition~4]{alur2011streaming}).
I believe that for this reason,
the class of function defined by single-use two-way transducers deserves the name of \emph{regular transductions with atoms}.\\

This chapter, along with the four models it introduces, is based on \cite{single-use-paper}  (specifically, on \cite[Theorems~13~and~14]{single-use-paper}).
However, the presentation of the models and the results is new and, hopefully, improved. The new approach builds upon the idea 
of single-use functions from Chapter~2, and uses Theorem~\ref{thm:rational-kr} from Chapter~3.

\section{Definitions}
We use this section to formulate and briefly discuss the four equivalent definitions of regular transductions with atoms:
\subsection{Single-use two-way transducers}
\label{subsec:su-2-transducer}
A \emph{single-use two-way transducer} is a single-use two-way automaton with output.
We have already seen all the building blocks of the definition -- we simply need to combine them:
\begin{definition}
    A \emph{single-use two-way transducer} consists of:
    \begin{enumerate}
        \item A polynomial orbit-finite input alphabet $\Sigma$ and a polynomial orbit-finite output alphabet $\Gamma$;
        \item A polynomial orbit-finite set of states $Q$;
        \item An equivariant initial state $q_0 \in Q$;
        \item A single-use transition function:
        \[
            \delta \ : \ \underbrace{\Sigma + \{\vdash, \dashv\}}_{\substack{\textrm{current letter, or an} \\ \textrm{end-of-word marker}}} \ \eqto\  \left( \underbrace{Q}_{\substack{\textrm{current} \\ \textrm{state}}}\  \suto \underbrace{Q}_{\substack{\textrm{new} \\ \textrm{state}}} \times \underbrace{(\Gamma + \epsilon)}_{\substack{
                \textrm{output}\\
                \textrm{letter}
            }} \times \underbrace{\{\leftarrow, \rightarrow\}}_{\substack{\textrm{which way}\\\textrm{to go}}}\ +\, \underbrace{\mathtt{finish}}_{\substack{\textrm{or finish}\\\textrm{the run}}}\right)
        \]
\end{enumerate}

A single-use two-way transducer defines the following function $\Sigma^* \to \Gamma^*$:
Given an input word $w \in \Sigma^*$, we equip it with end-of-word markers obtaining $\vdash w \dashv$,
and we place the transducer's head at the first letter of $w$ in the initial state $q_0$. Then, we 
start applying the transition function, feeding it with the transducer's state and with the input letter 
seen by the transducer's head, and using its output to update the transducer's state 
and move the transducer's head.
We continue this process until\footnote{There is the usual problem with looping --
a looping transducer will never finish the run, which could mean that it defines a partial function.
The simplest way to fix this
is to require that the transducers do not loop.
However, it would not make any difference if we assumed instead 
that the output of a looping run is the empty word, or even if we agreed 
that the two-way transductions are partial functions.} the transition function returns $\mathtt{finish}$,
at which point we construct the output word as the concatenation of all output letters produced by the transition function
(excluding $\epsilon$'s). 
\end{definition}

\begin{example}
\label{ex:map-duplicate}
    Consider the following $\emph{map duplicate}$ function, which duplicates every 
    $\#$-separated block:
    \[ f_\textrm{map-dup} : (\atoms + \#)^* \to (\atoms + \#)^* \]
    For example:
    \[ f_\textrm{map-dup}(1\ 2\ 3\ \#\ 5\ 7\ \#\ 1\ 2) = 1\ 2\ 3\ 1\ 2\ 3\ \#\ 5\ 7\ 5\ 7\  \#\ 1\ 2\ 1 \ 2\ \] 
    Here is an example\footnote{Observe that this transducer only uses finitely many states. Since this is the only example of a single-use two-way transducer presented in this thesis,
    it is worth pointing out that single-use two-way are allowed to use polynomial orbit finite set of states.} of a single-use two-way transducer that implements $f_\textrm{map-dup}$:
    \custompicc{two-way-map-dup}{0.6}
    \noindent
    And here is an example run of this transducer:
    \bigpicc{two-way-example-run}
\end{example}

\subsection{Single-use streaming string transducers with atoms}
\label{subsec:sst-def}
\emph{Single-use streaming string transducers with atoms} are a variant of \emph{copyless streaming string transducers}.
The original model was simultaneously defined in \cite[Section 2.2]{alur2011streaming}
and in \cite[Section~3]{alur2010expressiveness} (the papers cite one another).
Interestingly, the two papers provide different definitions:
\cite{alur2011streaming} defines streaming string transducers as a model over
infinite alphabets (using multiple-use atom registers, in style of \cite{kaminski_finite_memory_paper}),
while \cite{alur2010expressiveness} limits the definition to finite alphabets.
One of the main results of \cite{alur2010expressiveness} is \cite[Theorem~3]{alur2010expressiveness}, which states that 
the finite-alphabet version of copyless streaming string transducers is equivalent to \textsc{MSO}-transductions.
As \textsc{MSO}-transductions are equivalent to two-way transducers, it follows that
copyless streaming string transducers benefit from the robustness of 
regular transductions over finite alphabets. In particular, they are closed under compositions and have decidable equivalence.
On the other hand, the infinite-alphabet version from \cite{alur2011streaming} retains only some of those properties -- 
it has decidable equivalence (\cite[Theorem~12]{alur2011streaming}), 
but it is not closed under compositions (\cite[Proposition~4]{alur2011streaming}). Possibly for this reason 
the finite-alphabet version from \cite{alur2010expressiveness} is much more prevalent in the literature.\\

Before defining single-use string streaming transducers with atoms, 
let us discuss its finite-alphabet version\footnote{As mentioned before, the model was originally defined in \cite[Section~3]{alur2010expressiveness}. 
However,  our presentation differs slightly from the original one, as we want to maintain consistency with the previous chapters of this thesis.}.
We start by an informal description (to be followed by a formal definition):
The \emph{copyless streaming string transducer} is a variant of a one-way automaton, which constructs its 
output using a finite number of string registers over the output alphabet.
It can concatenate its registers  (e.g. $\mathtt{r}_1 \, := \mathtt{r}_2 \cdot \mathtt{r}_3$),
but the \emph{copyless} restriction requires that this operation destroys the contents of 
the concatenated registers (i.e. $\mathtt{r}_2$ and $\mathtt{r}_3$ in the example), by overriding them with $\epsilon$. 
Observe the analogy between the copyless and the single-use restrictions.
A streaming string transducer is not allowed to query its registers (unlike the register automaton 
from Section~\ref{sec:dra}).\\ 

Here is an example of a copyless streaming string transducer that implements the finite-alphabet version
of $f_\textrm{map-dup}$ from Example~\ref{ex:map-duplicate} over the alphabet $\{a, b, \#\}$.
It has only one state, and three registers: $\mathtt{current}_1$, $\mathtt{current}_2$, and $\mathtt{output}$
(due to the copyless restriction the transducer has to maintain two copies of the current block: $\mathtt{current}_1$ and $\mathtt{current}_2$):
\custompicc{sst-finite-ex}{0.7}

For the formal definition, we are going to use  a variant of the single-use function 
(from Definition~\ref{def:single-use-functions}). First, let us
define the class of \emph{polynomial $\Gamma^*$-register sets} (over a finite output alphabet $\Gamma$)
to be the smallest class closed under $+$ and $\times$ that contains the following sets:
\[
    \begin{tabular}{cc}
        $\underbrace{1}_{\textrm{Singleton set}}$ &
        $\underbrace{\Gamma^*}_{\substack{\textrm{Contents of a}\\ \textrm{string register}}}$
    \end{tabular}
\]

Next, let us define the class of \emph{single-use functions over polynomial $\Gamma^*$-register sets} to be the smallest class 
of functions closed under the combinators from Definition~\ref{def:single-use-functions} (i.e. $\circ$, $\times$, and $+$),
that contains all the basic functions from Definition~\ref{def:single-use-functions} except the functions about $\atoms$,
and all the following basic functions about $\Gamma^*$:
\[
    \begin{tabular}{|ll|}
        \hline
        \multicolumn{2}{|c|}{Functions about $\Gamma^*$}\\
        \hline
        $\concat:$ & $ \Gamma^* \times \Gamma^* \to \Gamma^*$ \\
        $\singleton:$ & $\Gamma \to \Gamma^*$ \\
        $\const_\epsilon :$ & $1 \to \Gamma^*$\\
        \hline
    \end{tabular}    
\]
(In the type of  $\singleton$, we use the fact that since $\Gamma$ is finite, it can be represented
as the following polynomial $\Gamma^*$-register set:  $1 +\ldots + 1$).\\

\begin{example}
    Consider the following function $f : \{a, b\} \times {(\Gamma^*)}^2 \to {\Gamma^*}$:
    \[ \begin{tabular}{cc}
        $f(a, \mathtt{r}_1, \mathtt{r}_2) = \mathtt{r_1} \cdot \mathtt{r_2}$ &  $f(b, \mathtt{r}_1, \mathtt{r}_2) = \mathtt{r}_2 \cdot \mathtt{r}_1$
    \end{tabular}\]
    This is a single-use function over polynomial $\Gamma^*$-register sets, as it can be implemented in the following way (assuming that $\{a, b\}$ is represented as $1 + 1$):
    \[ (1 + 1) \times {(\Gamma^*)}^2 \transform{\distr} {(\Gamma^*)}^2 + {(\Gamma^*)}^2 \longtransform{\idf + \sym} {(\Gamma^*)}^2 + {(\Gamma^*)}^2 \transform{\merge} {(\Gamma^*)}^2 \transform{\mathtt{concat}} \Gamma^* \] 
\end{example}
We denote the set of all 
single-use functions between two polynomial $\Gamma^*$-register sets as $X \suto Y$. 
Overloading the notation should not cause any confusion, as it is usually clear 
from the context whether $X$ and $Y$ are polynomial $\Gamma^*$-register sets or polynomial orbit-finite sets.\\ 

We are now ready to give a formal definition of copyless string streaming 
transducers for finite alphabets:
\begin{definition}
\label{def:sst}
    A \emph{copyless streaming string transducer} of type $\Sigma^* \to \Gamma^*$ (where $\Sigma$ and $\Gamma$ are finite sets)
    consists of:
    \begin{enumerate}
        \item a polynomial $\Gamma^*$-register set $Q$ of states;
        \item an initial state $q_0 \in Q$;
        \item a single-use transition function $\delta : \Sigma \to (Q \suto Q)$; and
        \item an output function $\lambda : Q \suto \Gamma^*$.
    \end{enumerate}
    Every copyless string streaming transducer defines a function $\Sigma^* \to \Gamma^*$: In order to compute the output 
    for a $w \in \Sigma^*$, the transducer processes $w$ letter by letter, updating its state according 
    to the transition function. After processing the entire $w$,
    it computes the output word by applying $\lambda$ to its final state. 
\end{definition}

It is not hard to extend this definition to polynomial orbit-finite alphabets. First, we define the class 
of \emph{polynomial orbit-finite $\Gamma^*$-register sets} (where $\Gamma$ is a polynomial orbit-finite output alphabet)
to be the smallest class of sets closed under $\times$ and $+$, that contains the sets $1$, $\atoms$, and $\Gamma^*$.
Then, we define the class of \emph{single-use functions over polynomial orbit-finite $\Gamma^*$-register sets} to be the smallest class of functions 
that is closed under the combinators from Definition~\ref{def:single-use-functions} (i.e. $\circ$, $\times$ and $+$), 
contains all basic functions from Definition~\ref{def:single-use-functions}, and contains the three basic functions 
about $\Gamma^*$ (i.e. $\concat$, $\const_\epsilon$, and $\singleton$). Now, we can define 
\emph{single-use string streaming transducers} (for infinite alphabets) in the same way as in Definition~\ref{def:sst}:
\begin{definition}
\label{def:sst-with-atoms}
    A \emph{single-use streaming string transducer} of type $\Sigma^* \to \Gamma^*$ (where $\Sigma$ and $\Gamma$ are polynomial orbit-finite sets)
    consists of:
    \begin{enumerate}
        \item a polynomial orbit-finite $\Gamma^*$-register set $Q$ of states;
        \item an initial state $q_0 \in Q$;
        \item a single-use transition function $\delta : \Sigma \to (Q \suto Q)$; and
        \item an output function $\lambda : Q \suto \Gamma^*$.
    \end{enumerate}
    A single-use streaming string transducer defines a function $\Sigma^* \eqto \Gamma^*$ in the 
    same way as a finite streaming string transducer. 
\end{definition}

\noindent
Finally, let us point out that if we extend single-use functions with 
\[\copyf_\atoms : \atoms \to \atoms \times \atoms \comma\] 
we obtain a definition equivalent to the one from \cite[Section 2.2]{alur2011streaming}.
If we, instead, include the function:
\[\copyf_{\Gamma^*} : \Gamma^* \times \Gamma^* \to \Gamma^*\comma\]
we obtain \emph{copyful streaming string transducers}, a model whose finite-alphabet version is studied in 
\cite{filiot2017copyful}. (The main result of the paper is that copyful streaming string transducers
over finite alphabets have decidable equivalence, see \cite[Section~3]{filiot2017copyful}.)
Of course, it is also possible to include both $\copyf_\atoms$ and $\copyf_{\Gamma^*}$ and obtain
a version of the transducer that is both multiple use and copyful, but I was unable to find 
this variant in the literature.

\subsection{Regular list functions with atoms}
\emph{Regular list functions with atoms} are based on \emph{regular list functions} -- 
a model for finite alphabets that was introduced in \cite{bojanczyk2018regular}.
One of the main results of the paper is \cite[Theorem 6.1]{bojanczyk2018regular}, which
proves that regular list functions are equivalent to \textsc{MSO}-transductions. This means that,
similarly to two-way transducers or copyless string streaming transducers,
regular list functions (over finite alphabets) exhibit the robustness of
regular transductions.\\

Extending regular list functions to infinite alphabets is very natural. (In fact, 
the extension was already suggested in \cite[Section~7]{bojanczyk2018regular}.)
For this reason, we proceed directly to the definition of  list functions \emph{with atoms}. 
It is structurally similar\footnote{In fact, the current shape of the definition 
of single-use functions was inspired by \cite[Definition~2.1]{bojanczyk2018regular}.}
to the definition of single-use functions (i.e. Definition~\ref{def:single-use-functions}):\\

First, we define \emph{polynomial sets with atoms} to be the smallest class of sets that 
contains $1$ and $\atoms$, and that is closed under $\times$, $+$, and $X^*$ (i.e. lists of finite length).
Here is an example:
\[ ((\atoms \times \atoms)^* + \atoms)^*\]
It is worth pointing out that every polynomial orbit-finite $\Gamma^*$-register set (as defined in Section~\ref{subsec:sst-def})
is a polynomial set with atoms, but not vice versa. This is because
polynomial sets with atoms treat $X^*$ as an independent set constructor, whereas 
polynomial orbit-finite $\Gamma^*$-register sets only allow lists over one fixed polynomial orbit-finite $\Gamma$.\\

\noindent
We are now ready to define regular list functions with atoms:
\begin{definition}
\label{def:regular-list-functions-atoms}
    The class of \emph{regular list functions with atoms} is the smallest class of functions that
    is closed under the combinators from Definition~\ref{def:single-use-functions} (i.e. $+$, $\times$, and $\circ$), 
    contains all basic functions from Definition~\ref{def:single-use-functions}, and additionally is closed 
    under the following combinator $\map$ and contains the following basic functions:
    \[
            \begin{tabular}{c}
                    \begin{tabular}{|c|}
                    \hline
                    The combinator $\map$\\
                    \hline
                    $\infer
                    { X^* \longtransform{\map f} Y^*}
                    { X \longtransform{f} Y }$\\
                    \hline
                    \end{tabular}\\
                \\
                \begin{tabular}{|ll| l|}
                    \hline
                    \multicolumn{3}{|c|}{Functions about lists}\\
                    \hline
                    $\const_\epsilon :$ & $ 1 \to X^*$ & Returns the empty lists\\
                    $\cons :$ & $X \times X^* \to X^*$ & Adds an element to the front of a list.\\
                    $\destruct : $ & $X^* \to X \times X^* + 1$ & Extracts the head and tail of a list (if possible).\\
                    $\concatf : $ & ${(X^*)}^* \to X^*$ & Flattens nested lists.\\
                    $\reverse : $ & $X^* \to X^*$ & Reverses the list.\\
                    $\blocks :$ & $(X + 1)^* \to {(X^*)}^*$ & Groups elements of X into maximal blocks.\\
                    $\group_G : $ & $(X \times G)^* \to (X \times G)^*$ & Computes group prefixes (see below).\\ 
                    \hline
                \end{tabular}\\
                \\
                \begin{tabular}{|ll|}
                    \hline
                    \multicolumn{2}{|c|}{Copying}\\
                    \hline
                    $\copyf_\atoms$ & $\atoms \to \atoms \times \atoms$ \\
                    $\copyf_{X^*}$ & $X^* \to X^* \times X^*$ \\
                    \hline
            \end{tabular}\\
            \end{tabular}
            \]
            The function $\group_G : (G \times X)^* \to (G \times X)^*$ is defined for every finite group $G$. 
            It computes group prefixes on the first coordinate and leaves the second coordinate unchanged: 
            \[ \group_G([(g_1, x_1), (g_2, x_2), \ldots, (g_n, x_n)]) = [(g_1, x_1), (g_1 \cdot g_2, x_2), \ldots, (g_1 \cdot \ldots \cdot g_n, x_n)] \]
   
            Finally, we define \emph{regular list transductions} to be all regular list functions 
            of the type $\Sigma^* \to \Gamma^*$, where $\Sigma$ and $\Gamma$ are polynomial orbit-finite.\\
\end{definition}
It is worth pointing out that if we remove $\atoms$ and related basic functions from Definition~\ref{def:regular-list-functions-atoms},
we obtain the original regular list functions from \cite{bojanczyk2018regular}.\\

Observe that regular list functions are a copyful model: They include $\copyf_\atoms$ and $\copyf_{X^*}$
which allow for copying both atoms and lists. Moreover, using a construction similar to the one from
Example~\ref{ex:copy-general-definable}, one can derive a general $\copyf_X$ function which 
works for every polynomial set with atoms $X$. Despite that, regular list functions are equivalent to the single-use versions 
of two-way transducers and string streaming transducers. Moreover, without the $\copyf$ functions,
regular list functions would become too weak: Without $\copyf_\atoms$,
they would not be able to simulate multiple-use access to input letters,
and without $\copyf_{X^*}$, they would not be able to implement the duplicate function (i.e. $w \mapsto ww$). 
For this reason, regular list functions are an interesting link between single-use and multiple-use 
models  that could help us understand the connections between the two approaches. For finite 
alphabets, a related line of research has been recently explored in \cite{bojanczyk2023folding}.

\subsection{Compositions of single-use two-way primes}
In this section we define \emph{compositions of single-use two-way primes}, which
is a model based on the Krohn-Rhodes decomposition theorems.\\

Observe that the functions computed by two-way transducers might not preserve length, 
so if we want to define an equivalent class of compositions of primes,
we need to include some primes that are not length-preserving. Those are going to be the 
$f_{\textrm{map-dup}}$ function (from Example~\ref{ex:map-duplicate}) and \emph{letter-to-word} homomorphisms, 
defined as follows:
\begin{example}
    Let $\Sigma$, $\Gamma$ be polynomial orbit-finite sets, and let $f: \Sigma \eqto \Gamma^*$ be an equivariant function.
    (Observe that since $\Sigma$ is orbit-finite and $f$ is equivariant, it follows the length of words in $f(\Sigma)$ is bounded.) Define the
    \emph{letter-to-word homomorphism based on $f$} to be the function $f^* : \Sigma^* \to \Gamma^*$, that applies 
    $f$ to every input letter and concatenates the results. For example, if we take $\Sigma = \atoms + \bot$, $\Gamma= \atoms$, 
    and $f$ defined as $f(a \in \atoms) = aa$ and $f(\bot) = \epsilon$, then:
    \[ f^*(123\bot45\bot\bot) = 1122334455 \]
\end{example}

The parallel composition (i.e. $\times$) only makes sense for length-preserving functions, which
means that we can no longer use it in the definition of composition of single-use two-way primes. Instead, 
we use a similar approach as in Claim~\ref{claim:seq-comppositions-of-primes},
and define compositions of primes only in terms of $(\circ)$-compositions:
\begin{definition}
    We define the \emph{compositions of single-use two-way primes}, to be the smallest class of 
    word-to-word functions that is closed under the sequential composition (i.e. $\circ$) and contains all 
    the following prime functions: 
    \begin{enumerate}
        \item \emph{letter-to-word}\footnote{In other words, a letter-to-words homomorphism $\Sigma^* \to \Gamma^*$ is simply a monoid morphism 
                                             between the free monoids $\Sigma^*$ and $\Gamma^*$. The phrase \emph{letter-to-word} is used 
                                             to distinguish this general class of homomorphism from the letter-to-letter homomorphisms used in the previous chapter.} \emph{homomorphisms}, i.e. functions of the form
                                             $f^* : \Sigma^* \to \Gamma^*$,
               for every $f: \Sigma \eqto \Gamma^*$, where $\Sigma$ and $\Gamma$ are polynomial orbit-finite sets (the function $f^*$ is defined as a function that applies $f$ to every letter and concatenates the results);
        \item functions of the form $p \times \idf_{\Delta^*}  : (\Sigma \times \Delta)^* \to (\Gamma \times \Delta)^*$,
              where $p : \Sigma^* \to \Gamma^*$ is one of the single-use prime functions
              from Theorem~\ref{thm:kr} (i.e. the Krohn-Rhodes decomposition theorem for single-use Mealy machines), 
              and $\Delta$ is a polynomial-orbit-finite set;
        \item \emph{map duplicate}, i.e. function $f_\textrm{map-dup}: (\Sigma + \#)^* \to (\Sigma + \#)^*$, which
               is a generalization of the map duplicate function from Example~\ref{ex:map-duplicate} for an arbitrary polynomial 
               orbit-finite alphabet $\Sigma$;
        \item \emph{map reverse}, i.e. the function $f_\textrm{map-rev}: (\Sigma + \#)^* \to (\Sigma + \#)^*$ 
              defined below in Example~\ref{ex:map-reverse}, for every polynomial orbit-finite $\Sigma$;
              \begin{example}
                \label{ex:map-reverse}
                    For every polynomial orbit-finite $\Sigma$, we define the 
                    \emph{map reverse} function, which independently reverses 
                    every $\#$-separated block:
                    \[f_{\textrm{map-rev}} : (\Sigma + \#)^* \to (\Sigma + \#)^* \]
                    For example:
                    \[ f_\textrm{map-rev}(1\ 2\ 3\ \#\ 5\ 7\ \#\ 1\ 2) = 3\ 2\ 1\ \#\ 7\ 5\  \#\ 2\ 1\] 
            \end{example}
        \item \emph{end of word marker}, i.e. the function $w \mapsto w \! \dashv$. This function is only\footnote{
            If we only want to consider non-empty words, we can skip this function. Or if we want to consider 
            the empty word, we can replace it with the 
            \emph{start of word marker} (i.e. $w \mapsto\; \vdash\!w$). If we want a symmetric function,  
            we can also use the \emph{both ends marker} (i.e. $w \mapsto\; \vdash\!w\!\dashv$), 
            or the \emph{empty word selector}, 
            i.e. the function which maps every word to itself, with the exception of the empty word which 
            is mapped to a single letter $\fullmoon$. All of those functions result in an equivalent class of compositions of two-way single-use primes.}
               required to deal with the empty word,
                as for all other prime functions it holds that $p(\epsilon) = \epsilon$.
    \end{enumerate}
\end{definition}

\section{Equivalence of the models}
\label{sec:regular-equiv}
As mentioned in the introduction, all models defined in the previous section are equivalent:
\begin{theorem}
\label{thm:regular-equivalence-with-atoms}
    All the following models recognize the same class of transductions over polynomial orbit-finite alphabets:
    \begin{enumerate}
    \item Single-use two-way transducers;
    \item Single-use streaming string transducers;
    \item Regular list transductions with atoms;
    \item Compositions of single-use two-way primes;
    \end{enumerate}
\end{theorem}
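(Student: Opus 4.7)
The plan is to establish the cycle
\[
\text{(4)} \subseteq \text{(1)} \subseteq \text{(2)} \subseteq \text{(3)} \subseteq \text{(4)},
\]
where the labels refer to the four models in the statement. The final leg (3)$\subseteq$(4) is the main obstacle and will consume most of the proof; it leans heavily on Theorem~\ref{thm:rational-kr} (the rational Krohn--Rhodes decomposition for local rational semigroup transductions).

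For (4)$\subseteq$(1), I would verify that every single-use two-way prime is itself a single-use two-way transducer. The letter-to-word homomorphisms and the Krohn--Rhodes primes from Theorem~\ref{thm:kr} admit direct one-way (hence two-way) implementations; map-duplicate is exactly the transducer of Example~\ref{ex:map-duplicate}; and map-reverse is analogous, scanning each $\#$-block right-to-left. Closure of single-use two-way transducers under sequential composition is a single-use adaptation of the classical Chytil--Jakl construction: the outer transducer is simulated position-by-position, requesting on demand the next letter of the inner transducer's output, with the inner transducer replayed from a cached crossing profile. The single-use restriction is preserved because each simulated inner-output letter is consumed exactly once, and crossing profiles live in an orbit-finite set of single-use functions by Theorem~\ref{thm:su-orbit-finite}.

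For (1)$\subseteq$(2), I would adapt the classical Alur--\v{C}ern\'y construction. The SST maintains, for each state $q$ in the orbit-finite state space $Q$ of the two-way transducer, a $\Gamma^*$-register holding the output that would be produced by a run entering the current prefix on the left in state $q$ and exiting on the right; additional registers handle right-to-left passes. When a new input letter is consumed, the registers are recombined according to the single-use behaviour function $Q \times \{\leftarrow, \rightarrow\} \suto (Q \times \{\leftarrow, \rightarrow\}) + \{\yes, \no\}$ constructed in Claim~\ref{claim:two-way-behaviour-single-use}; since this behaviour is single-use on $Q$, the attendant string register updates are copyless. For (2)$\subseteq$(3), every $\Gamma^*$-register value is a list over $\Gamma$, and the SST basic register operations ($\concat$, $\singleton$, $\const_\epsilon$) are special cases of list combinators. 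The single-use transition function of an SST factors into the list-function combinators by repeated use of $\destruct$, $\cons$, $\concatf$, $\copyf_{\atoms}$ and $\copyf_{X^*}$; the output function is handled similarly.

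The hard step is (3)$\subseteq$(4). Here I would encode every list as a word with a fresh separator $\#$ (for nested lists, use separators $\#_1, \#_2, \ldots$ at each level of nesting, of which only finitely many occur in any regular list function) and argue by structural induction on the construction of a regular list function that each combinator corresponds to a composition of two-way primes. For the basic combinators: $\reverse$ is the map-reverse prime; $\concatf$ and $\blocks$ are letter-to-word homomorphisms that insert or delete separators; $\cons$ and $\destruct$ are rational transductions that can be built via Theorem~\ref{thm:rational-kr}; $\copyf_{X^*}$ reduces to map-duplicate after bracketing blocks with $\#$; $\group_G$ is the group-prefix prime combined with a homomorphism; $\copyf_\atoms$ is obtained from single-use atom propagation together with the delay function. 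The $\map$ combinator is simulated via the subsequence combinator (Lemma~\ref{lem:subsequence-combinator}) together with an induction hypothesis, after isolating $\#$-separated blocks. The principal difficulty will be verifying that each auxiliary transduction used to encode a list combinator is genuinely a \emph{local} rational semigroup transduction (so that Theorem~\ref{thm:rational-kr} applies and yields a decomposition into single-use rational primes, which are in turn single-use two-way primes); this amounts to checking the locality equation for each ad hoc semigroup constructed along the way, in the spirit of the verifications done in Section~\ref{sec:su-mealy-to-local-monoid-transduction}.
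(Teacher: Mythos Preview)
Your cycle $(4)\subseteq(1)\subseteq(2)\subseteq(3)\subseteq(4)$ differs from the paper's plan, and the step $(1)\subseteq(2)$ has a genuine gap.

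You propose the Alur--\v{C}ern\'y construction, writing that ``the SST maintains, for each state $q$ in the orbit-finite state space $Q$ of the two-way transducer, a $\Gamma^*$-register''. But $Q$ is orbit-\emph{finite}, not finite: it typically has infinitely many elements, while an SST's state set is a fixed polynomial orbit-finite $\Gamma^*$-register set with a \emph{bounded} number of $\Gamma^*$ components. You cannot index registers by states. One might hope to index registers by leaves of the single-use decision tree representing the behaviour (these are bounded in number), but in the two-way setting the transducer may bounce between a prefix and a suffix up to $k$ times (Lemma~\ref{lem:2way-bounded}), and each bounce requires a fresh application of the behaviour---yet the behaviour's $\Gamma^*$-outputs cannot be copylessly duplicated. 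Your sketch does not address this. A similar issue lurks in your $(4)\subseteq(1)$: you invoke a direct Chytil--J\'akl composition-closure argument, but when the outer machine moves backward on the inner machine's virtual output, the inner machine's previous states must be reconstructed, and the atoms involved may already have been consumed single-use. The paper derives composition closure only \emph{a posteriori}, as a corollary of $(1)\leftrightarrow(4)$ (Claim~\ref{claim:two-way-automata-closed-under-compositions}); its direct proof of $(4)\subseteq(1)$ (Lemma~\ref{lem:su-primes-to-2way}) shows instead closure under pre-composition with a single prime, by case analysis on the prime.

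The paper's architecture puts all the weight on $(1)\to(4)$ (Section~\ref{subsec:su-2way-to-primes}), which your cycle bypasses entirely. That direction uses Lemma~\ref{lem:2way-bounded} to bound crossings, then computes the ``shape of the run'' as a local rational semigroup transduction (the locality verification in Lemma~\ref{lem:lambda-prim-local} is where the single-use restriction is really exploited), decomposes that via Theorem~\ref{thm:rational-kr}, and finally untangles the run graph by a width induction (Lemma~\ref{lem:untangle-compositions-of-primes}). Once $(1)\leftrightarrow(4)$ is established, the remaining arrows are short: $(2)\to(1)$ and $(3)\to(1)$ use composition closure as a black box, and $(4)\to(2)$, $(4)\to(3)$ are each a case analysis showing closure of the target model under post-composition with a single prime. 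Your identification of $(3)\subseteq(4)$ as the main obstacle is thus a misdiagnosis: that direction (roughly the paper's $(3)\to(1)$ composed with $(1)\to(4)$) is comparatively routine once the list encoding is set up, whereas the missing piece in your plan is the work that the paper concentrates in $(1)\to(4)$.
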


\noindent
Before proving Theorem~\ref{thm:regular-equivalence-with-atoms}, let us point out its two important corollaries:
\begin{claim}
\label{claim:two-way-automata-closed-under-compositions}
    Single-use two-way transducers and single-use streaming string transducers are closed under composition. 
\end{claim}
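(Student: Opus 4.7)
The plan is to derive this claim as an immediate corollary of Theorem~\ref{thm:regular-equivalence-with-atoms}. The strategy is to route composition through one of the two equivalent models that is closed under composition by construction, namely regular list transductions with atoms or compositions of single-use two-way primes.

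First I would take two single-use two-way transducers $f : \Sigma^* \to \Gamma^*$ and $g : \Gamma^* \to \Delta^*$, whose composition $g \circ f$ we want to realize as a single-use two-way transducer. By Theorem~\ref{thm:regular-equivalence-with-atoms}, each of $f$ and $g$ can be equivalently expressed as a composition of single-use two-way primes (call these expressions $F$ and $G$). The class of compositions of single-use two-way primes is, by definition, the smallest class closed under $\circ$; hence $G \circ F$ again lies in this class and defines $g \circ f$. Applying Theorem~\ref{thm:regular-equivalence-with-atoms} once more, the function $g \circ f$ can be realized by a single-use two-way transducer, giving closure under composition for that model. The argument for single-use streaming string transducers is identical, using the same theorem as the round-trip.

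Alternatively, one could route through regular list transductions with atoms. The combinator $\circ$ is explicitly part of Definition~\ref{def:regular-list-functions-atoms} (inherited from Definition~\ref{def:single-use-functions}), so composition of regular list functions is a regular list function. Hence the same round-trip argument via Theorem~\ref{thm:regular-equivalence-with-atoms} works, and one obtains the claim for both single-use two-way transducers and single-use streaming string transducers.

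There is no real obstacle in this argument beyond the content already packaged into Theorem~\ref{thm:regular-equivalence-with-atoms}; the entire work lies in establishing that theorem. In particular, neither of the two operational models (two-way transducers and streaming string transducers) is manifestly closed under composition when approached directly: composing two single-use two-way transducers would naively require feeding the output of the inner machine into the two-way scanning pattern of the outer one, which does not obviously respect the single-use restriction. It is precisely the detour through the algebraic/primitive-recursive style models (regular list functions and compositions of single-use two-way primes), both of which are closed under $\circ$ by fiat, that makes composition easy. Thus the proposal is simply: invoke Theorem~\ref{thm:regular-equivalence-with-atoms} in both directions around a trivial closure-under-$\circ$ step in one of the equivalent models.
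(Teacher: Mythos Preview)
Your approach is the same as the paper's: both derive the claim from Theorem~\ref{thm:regular-equivalence-with-atoms} by routing through compositions of single-use two-way primes, which are closed under $\circ$ by construction.

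One subtlety you gloss over but the paper handles explicitly in its proof organization: the proof of Theorem~\ref{thm:regular-equivalence-with-atoms} itself \emph{uses} this claim. Specifically, the translations in Sections~\ref{subsec:su-sst-to-2way} and~\ref{subsec:su-list-fun-2way} (SSTs and regular list transductions into two-way transducers) rely on single-use two-way transducers being closed under composition. The paper avoids circularity by noting (see the paragraph after the proof-plan diagram and the last paragraph of Section~\ref{subsec:su-primes-to-2way}) that the two-way-transducer half of the claim already follows from just the two directions between two-way transducers and compositions of primes (Sections~\ref{subsec:su-2way-to-primes} and~\ref{subsec:su-primes-to-2way}), neither of which depends on the claim; the SST half then follows once the full equivalence is established. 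Your statement that ``there is no real obstacle beyond the content already packaged into Theorem~\ref{thm:regular-equivalence-with-atoms}'' is therefore slightly too optimistic: the dependency has to be untangled, and the paper does so by proving the two-way case first from a proper subset of the implications.
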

\begin{proof}
    The claim is an immediate consequence of Theorem~\ref{thm:regular-equivalence-with-atoms}, 
    as compositions of single-use two-way primes are trivially seen to be closed under compositions. 
\end{proof}
\begin{claim}
    The equivalence problem is decidable for the four models that appear in Theorem~\ref{thm:regular-equivalence-with-atoms}.
\end{claim}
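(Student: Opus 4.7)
By Theorem~\ref{thm:regular-equivalence-with-atoms}, the four models are effectively inter-translatable, so it suffices to prove decidability for one of them. I would work with single-use streaming string transducers, where both the state space and the output function are given explicitly, and reduce the equivalence of $\mathcal{S}_1, \mathcal{S}_2 : \Sigma^* \to \Gamma^*$ to an emptiness problem. Concretely, the plan is to construct a single-use (two-way) automaton $\mathcal{D}$ over $\Sigma$ such that
\[
 L(\mathcal{D}) = \{w \in \Sigma^* : \mathcal{S}_1(w) \neq \mathcal{S}_2(w)\},
\]
and then observe that emptiness for such automata is decidable: by Theorem~\ref{thm:dsua-2dsua-ofm}, $\mathcal{D}$ is equivalent to a one-way single-use automaton, which by Lemma~\ref{lem:1sua-incl-ofm} is recognised by an orbit-finite monoid; emptiness for languages recognised by orbit-finite monoids reduces to emptiness of deterministic orbit-finite automata, which is decidable (\cite[Theorem~1.7]{bojanczyk2019slightly}).

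The construction of $\mathcal{D}$ would proceed by simulating $\mathcal{S}_1$ and $\mathcal{S}_2$ in parallel: their joint state evolves in the polynomial-orbit-finite set $Q_1 \times Q_2$, and after reading $w$ the pair of register valuations $(q_1, q_2)$ determines whether the outputs agree via the test $\lambda_1(q_1) = \lambda_2(q_2)$ in $\Gamma^*$. To detect disagreement within the single-use framework, I would use the closure of the class under composition (Claim~\ref{claim:two-way-automata-closed-under-compositions}) together with the letter-to-word and map-duplicate primes from the Krohn--Rhodes-style decomposition to align the two outputs position by position, and then use the basic equality test $\eqf : \atoms \times \atoms \to 1 + 1$ (and finite-state comparison on the atomless part of $\Gamma$) to flag a mismatch. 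The overall $\mathcal{D}$ is then a two-way single-use automaton whose accepting condition encodes ``some output position disagrees''.

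The principal obstacle is managing the single-use restriction across the comparison: the outputs of $\mathcal{S}_1$ and $\mathcal{S}_2$ may each mention a given input atom many times, so the comparison cannot consume atoms stored in registers but must instead re-derive each output letter from a fresh pass over the input. The two-way head of $\mathcal{D}$, together with the generalised single-use propagation (Claim~\ref{claim:gen-su-prop}) and the $k$-fold bound of Lemma~\ref{lem:fs-k-fold}, provides the required refreshing mechanism. A careful bookkeeping argument, in the spirit of Lemmas~\ref{lem:core-construct-primes} and \ref{lem:local-monoid-transformation-composition-of-primes}, would show that the whole construction remains within the single-use discipline, completing the reduction.
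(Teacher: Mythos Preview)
Your reduction to emptiness of a single-use automaton $\mathcal{D}$ with $L(\mathcal{D}) = \{w : \mathcal{S}_1(w) \neq \mathcal{S}_2(w)\}$ cannot work, and the obstruction is already present for finite alphabets. Take $\mathcal{S}_1$ to be the identity and $\mathcal{S}_2$ to be the reverse function; both are computed by single-use SSTs (indeed by classical copyless SSTs over a finite alphabet). The difference language is then the set of non-palindromes, which is not regular. Since a single-use two-way automaton over a finite alphabet is just a classical two-way automaton and hence recognises only regular languages (Theorem~\ref{thm:dsua-2dsua-ofm} specialised to atomless sets), no such $\mathcal{D}$ exists. The ``align the two outputs position by position'' step is exactly where this breaks: the $i$-th output letters of $\mathcal{S}_1$ and $\mathcal{S}_2$ may come from input positions that are arbitrarily far apart, and no bounded-state machine can synchronise them for all~$i$.

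The paper avoids this entirely: it observes that single-use SSTs are a syntactic special case of the (multiple-use, copyless) SSTs of \cite{alur2011streaming}, and simply invokes \cite[Theorem~12]{alur2011streaming}, which proves decidability of equivalence for that larger class. That result is itself nontrivial --- it goes through a reduction to reachability in a suitable counter system rather than through a difference automaton --- which is why your more elementary route was bound to run into trouble. If you want a self-contained argument, the alternative reference the paper gives is \cite[Theorem~14]{single-use-paper}.
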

\begin{proof}
   By analysing the proof of Theorem~\ref{thm:regular-equivalence-with-atoms} (presented later in this section),
   one can show that it is effective, i.e. translating between any two of the models is a computable function.
   This leaves us with showing that just one of the models has decidable equivalence. 
   Let us focus on single-use string streaming string transducers:
   As mentioned in the last paragraph of Section~\ref{subsec:sst-def},
   single-use streaming string transducers are a special case of streaming string transducers 
   from \cite{alur2011streaming}, which by \cite[Theorem~12]{alur2011streaming} have decidable equivalence.
   It follows that single-use streaming string transducers have decidable equivalence as well. 
   (Alternatively, we can use \cite[Theorem~14]{single-use-paper} which
   is a direct proof of decidable equivalence for single-use streaming string transducers.)
\end{proof}

\noindent
The rest of Section~\ref{sec:regular-equiv} is dedicated to proving Theorem~\ref{thm:regular-equivalence-with-atoms},
according to the following plan:\\

\vspace{0.3cm}
    \adjustbox{width=1\textwidth, center}{
        \begin{tikzcd}
            & {\textrm{Regular list transductions}} \\
            \\
            \\
            {\textrm{Two-way transducers}} && {\textrm{Compositions of primes}} \\
            \\
            \\
            \\
            & {\textrm{Streaming string transducers}}
            \arrow["{\rotatebox{0}{\textrm{Section~\ref{subsec:su-2way-to-primes}}}}"'{text={rgb,255:red,255;green,58;blue,51}}, curve={height=24pt}, from=4-1, to=4-3]
            \arrow["{\rotatebox{0}{\textrm{Section~\ref{subsec:su-primes-to-2way}}}}"'{text={rgb,255:red,255;green,58;blue,51}}, curve={height=12pt}, from=4-3, to=4-1]
            \arrow["{\rotatebox{-33}{\textrm{Section~\ref{subsec:su-sst-to-2way}}}}"{description, text={rgb,255:red,255;green,58;blue,51}}, from=8-2, to=4-1]
            \arrow["{\rotatebox{33}{\textrm{Section~\ref{subsec:su-primes-to-sst}}}}"{description, text={rgb,255:red,255;green,58;blue,51}}, from=4-3, to=8-2]
            \arrow["{\rotatebox{27}{\textrm{Section~\ref{subsec:su-list-fun-2way}}}}"{description, text={rgb,255:red,255;green,58;blue,51}}, from=1-2, to=4-1]
            \arrow["{\rotatebox{-27}{\textrm{Section~\ref{subsec:su-primes-to-list-fun}}}}"{description, text={rgb,255:red,255;green,58;blue,51}}, from=4-3, to=1-2]
        \end{tikzcd}
      }
\hspace{0.3cm}\\

\noindent
It is worth pointing out that Section~\ref{subsec:su-primes-to-2way} seems to be redundant. 
However, together with Section~\ref{subsec:su-2way-to-primes} it completes the proof of Claim~\ref{claim:two-way-automata-closed-under-compositions},
which is later used in Sections~\ref{subsec:su-sst-to-2way}~and~\ref{subsec:su-list-fun-2way}.

\subsection{Two-way transducers $\subseteq$ Compositions of primes}
\label{subsec:su-2way-to-primes}
In this section, we show how to translate two-way transducers into compositions of single-use two-way primes.
This is the most complicated part of the proof of Theorem~\ref{thm:regular-equivalence-with-atoms}.
We start by showing that every non-looping single-use two-way transducer
can visit every position only a bounded number of times. This is not obvious, 
because  the number of states of a single-use two-way transducer 
is usually infinite -- in particular, an analogous lemma does not hold for the multiple-use 
variant of two-way transducers. (For example, the two-way automaton from Claim~\ref{claim:2dofa-all-distinct}
visits each position $i$ a number of times that grows linearly with $i$).
\begin{lemma}
\label{lem:2way-bounded}
    For every non-looping single-use two-way transducer $\mathcal{A}$, there is a bound $k$,
    such that $\mathcal{A}$ visits every position in every input word at most $k$ times.
\end{lemma}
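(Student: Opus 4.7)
The plan is to argue by a pigeonhole argument on an orbit-finite semigroup of behaviors. First I would adapt Claim~\ref{claim:two-way-behaviour-single-use} to the transducer setting by simply ignoring output letters: for every word $w \in \Sigma^*$, the (output-forgetting) behavior
\[
  b_w : Q \times \{\leftarrow, \rightarrow\} \suto (Q \times \{\leftarrow, \rightarrow\}) + \{\mathtt{finish}\}
\]
is a single-use function. The proof of Claim~\ref{claim:two-way-behaviour-single-use} carries over verbatim, since outputs never influence control flow. By Theorem~\ref{thm:su-orbit-finite}, the set $S$ of all single-use functions of this type is orbit-finite, and behaviors compose associatively.

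Next I would show that the number of visits of $\mathcal{A}$ to a position $i = |u|+1$ in an input $w = uav$ depends \emph{only} on the triple $(b_u, a, b_v) \in S \times \Sigma \times S$, and not on any further detail of $u$ or $v$. The reason is that we can simulate the sequence of head crossings at position $i$ purely from these three data: starting from the initial state entering $u$ from the left, we apply $b_u$ to learn where and in what state the head first arrives at $i$; then we alternate between applying $\delta(a)$ (to decide what happens during a stop at $i$) and applying $b_u$ or $b_v$ (to resolve excursions into $u$ or $v$ between successive crossings). This simulation yields a well-defined function $L : S \times \Sigma \times S \to \nat \cup \{\infty\}$ which counts crossings at the boundary, and the key point is that two words with identical left and right behaviors produce identical crossing sequences at $i$.

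Now I would conclude via orbit-finiteness. Since $\mathcal{A}$ is non-looping, $L(b_u, a, b_v) \in \nat$ for every triple $(b_u, a, b_v)$ realized by some decomposition $w = uav$; moreover the set of realized triples is an equivariant subset of $S \times \Sigma \times S$, which is orbit-finite by Lemma~\ref{lem:of-preserved}. The function $L$ is equivariant because $\mathcal{A}$ itself is (and $\nat$ carries the trivial action), so $L$ is constant on each orbit of its domain. A finitely-supported function from an orbit-finite set to a set with trivial action therefore has finite image, and one can take $k$ to be the maximum of this image -- a constant that depends on $\mathcal{A}$ but not on $w$.

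The main obstacle I expect is justifying that $L$ really is a function of the triple $(b_u, a, b_v)$ rather than of the concrete words $u$ and $v$. This requires carefully unwinding the crossing sequence as an iterative process driven only by behavior composition and by $\delta(a)$, and checking that nothing about the runs inside $u$ or $v$ beyond what $b_u$ and $b_v$ record can influence the count at position $i$. A secondary but easier concern is handling the case when $\mathcal{A}$ is not equivariant but only finitely supported -- in that case one replaces ``equivariant'' by ``$\supp(\mathcal{A})$-supported'' throughout and uses that $S \times \Sigma \times S$ has finitely many $\supp(\mathcal{A})$-orbits (Lemma~\ref{lem:of-so-aof}), which suffices to bound the image of $L$.
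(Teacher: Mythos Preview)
Your proposal is correct and reaches the conclusion by a genuinely different route than the paper. Both arguments start from the orbit-finiteness of the behaviour semigroup $S$, but then diverge. The paper extracts explicit support bounds: each behaviour is supported by at most $p$ atoms and each input letter by at most $q$ atoms, so by Lemma~\ref{lem:fs-functions-preserve-supports} every state the transducer can occupy at position $i$ is supported by the at most $2p+q$ atoms coming from $b_{<i}$, $w_i$, and $b_{>i}$; since $Q$ is polynomial orbit-finite, it contains only a fixed finite number $f_Q(2p+q)$ of elements with support inside any given $(2p+q)$-element set, and the non-looping hypothesis then gives $k = f_Q(2p+q)$ directly. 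Your argument instead packages the visit count as an equivariant map $L$ from the orbit-finite set of realized triples into $\nat$ (with trivial action) and invokes the general principle that such a map has finite image. Your version is more abstract and avoids any explicit support arithmetic; the paper's version produces a concrete formula for $k$ and makes the dependence on the structure of $Q$ visible.

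One small technicality to patch: the run actually takes place on $\vdash w \dashv$, so the triple controlling the crossing count at position $i$ is $(b_{\vdash u}, a, b_{v\dashv})$ rather than $(b_u, a, b_v)$. Since $\delta(\vdash)$ and $\delta(\dashv)$ are fixed equivariant data, $b_{\vdash u}$ is an equivariant function of $b_u$ (and likewise on the right), so this does not disturb either the orbit-finiteness of the domain or the equivariance of $L$; it just needs to be said.
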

\begin{proof}
    Let $\mathcal{A}$ be a transducer of type $\Sigma^* \to \Gamma^*$ and
    let $Q$ be $\mathcal{A}$'s set of states. Consider the following semigroup $S_{\mathcal{A}}$
    of $\mathcal{A}$'s behaviours (see the proof of Lemma~\ref{lem:2sua-incl-ofm} for details):
    \[ Q \times \{\leftarrow, \rightarrow\} \ \longsuto \ Q \times \{\leftarrow, \rightarrow\}  + \textrm{finish}\]
    Thanks to Theorem~\ref{thm:su-orbit-finite}, we know that $S_{\mathcal{A}}$ is orbit-finite.
    It follows that there is a bound $p$ such that every element of $S_{\mathcal{A}}$ is supported by
    at most $p$ atoms. Similarly there is a $q$ such that every letter in $\Sigma$ is supported
    by at most $q$ atoms. Moreover, since $Q$ is orbit-finite, it follows that every
    finite set of atoms supports only a finite number of elements in $Q$ (this follows from
    \cite[Lemma~5.2]{bojanczyk2013nominal} or alternatively, since 
    $Q$ is \emph{polynomial} orbit-finite, it can also be shown 
    by structural induction on $Q$). Additionally, it is not hard to see 
    that the number of elements in $Q$ that are supported by a subset of atoms
    depends only on the subset's size.
    It follows that there is a function $f_Q$, 
    such that $f_Q(n)$ is the number of elements in $Q$ that are supported 
    by a subset of $n$ atoms. 
    We are going to show that $\mathcal{A}$ 
    visits each position no more than $k := f_Q(2 p + q)$ times.\\ 
    
    Let us take a word $w \in \Sigma^*$, a position $i$ in $w$, 
    and let us show that $\mathcal{A}$ visits $i$ at most $k$ times. First, 
    we split $w$ into the prefix $w_{<i} \in \Sigma^*$, the letter $w_i \in \Sigma$,
    and the suffix $w_{>i} \in \Sigma^*$.  Define $b_{<i}, b_{>i} \in S_\mathcal{A}$ to be
    the behaviour of $\mathcal{A}$ on $w_{<i}$ and $w_{>i}$.
    By Lemma~\ref{lem:fs-functions-preserve-supports} (applied multiple times),
    we know that every state in which $\mathcal{A}$ visits $i$, 
    can contain only the atoms that appear in $b_{<i}$, $b_{i}$, or $w_i$.
    It follows that there are at most $f_Q(2 \cdot p + q)$ different states in which
    $\mathcal{A}$ can visit $i$. Since $\mathcal{A}$ is non-looping, it follows 
    that each of those states is visited at most once. This means that the number of visits in $i$ is 
    bounded by $k = f_Q(2 \cdot p + q)$.
\end{proof}

A useful abstraction for the construction of translating two-way automata into compositions 
of primes is the $\emph{shape of a run}$, which is a record of all visits of $\mathcal{A}$
in each of the input position.  
A visit is represented as an element of the following set:
\[ \underbrace{\{\leftarrow, \rightarrow\}}_{\substack{
    \textrm{wheather $\mathcal{A}$ entered}\\
    \textrm{from left or right}}
    }\times \underbrace{\{\leftarrow, \rightarrow\}}_{\substack{
    \textrm{wheather $\mathcal{A}$ left}\\
    \textrm{towards left or right}}}
    \times \underbrace{(\Gamma + \epsilon)}_{\substack{
        \textrm{the letter that $\mathcal{A}$ outputs}\\
        \textrm{when leaving the position}
    }}
\]
For the sake of simplicity, let us assume that $\mathcal{A}$ finishes all of its runs $\dashv$,
and that it never outputs any letters in $\dashv$ or in $\vdash$. 
(It is not hard to see that every $\mathcal{A}$ can be transformed into an equivalent
$\mathcal{A'}$ that satisfies this restriction\footnote{The only problem arises when the input word is empty -- 
in this case, the restriction prohibits the automaton from outputting any letters at all. 
One way to deal with this problem is to start the translation, by applying 
the end-of-word marker prime function. This, way we equip the input word with 
a copy of $\dashv$ (followed by the actual $\dashv$). Now $\mathcal{A}'$ can use this 
extra letter to construct its output for $\epsilon$.}.)
By Lemma~\ref{lem:2way-bounded}, we know that each position admits at most $k$ visits. 
It follows that the shape of the run can be represented as a word over the following alphabet:
\[ (\{\leftarrow, \rightarrow \} \times \{\leftarrow, \rightarrow\} \times (\Gamma + \epsilon))^{\leq k}\comma\]
where each position stores its visits in chronological order. 
For example, here is the shape of the run from Example~\ref{ex:map-duplicate}:
\bigpicc{two-way-example-shape}
Observe that the shape of $\mathcal{A}$'s run uniquely determines its output -- thanks
to the chronological order of the events we can retrace the steps of the automaton:
\bigpicc{two-way-example-shape-connected}
We split the construction of $\mathcal{A}$ as a composition of primes into two steps. 
First, we show how to use compositions of primes to construct the shape 
of $\mathcal{A}$'s run over the input word. Then, we show 
how to use compositions of primes to transform the shape of a run 
into the output.

\subsubsection{Constructing the shape of the run}
In this section, we show that the following function, that outputs the shape of $\mathcal{A}$'s run over its input word,
can be constructed as a composition of primes:
\[ f_\textrm{$\mathcal{A}$-shape} : \Sigma^* \ \to \ {\left({(\{\leftarrow, \rightarrow\} \times \{\leftarrow, \rightarrow\}  \times (\Gamma + \epsilon))}^{\leq k}\right)}^*\]
It is enough to show that $f_{\mathcal{A}\textrm{-shape}}$ is a local rational semigroup transduction.
This is because, by Theorem~\ref{thm:rational-kr}, every local rational semigroup transduction 
can be decomposed into single-use rational primes, which can be further decomposed into single-use two-way primes:
\begin{claim}
\label{claim:rational-primes-two-way-primes}
    Every composition of single-use rational primes is also a composition of single-use two-way primes.
\end{claim}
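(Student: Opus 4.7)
The plan is to apply Claim~\ref{claim:seq-comppositions-of-primes} to rewrite every composition of single-use rational primes as a $(\circ)$-composition of letter-to-letter (length-preserving) homomorphisms together with functions of the form $p \times \idf_{\Delta^*}$, where $p$ ranges over single-use rational primes and $\Delta$ is a polynomial orbit-finite alphabet. This is legal because the set of single-use rational primes contains all length-preserving homomorphisms (both the finite case from Theorem~\ref{thm:classical-rational-kr} and the equivariant orbit-finite case from Theorem~\ref{thm:rational-kr}). It will then suffice to realise each such building block as a composition of single-use two-way primes.

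Length-preserving homomorphisms are letter-to-word homomorphisms in which each output word has length $1$, so they are already single-use two-way primes. For a building block of the form $p \times \idf_{\Delta^*}$ there are two cases according to the direction of $p$. If $p$ is one of the left-to-right rational primes -- a length-preserving homomorphism, left-to-right multiple-use bit propagation, $G$-prefix for a finite group $G$, or left-to-right single-use atom propagation -- then $p$ already appears on the list of single-use Mealy primes in Theorem~\ref{thm:kr}, so $p \times \idf_{\Delta^*}$ is literally a single-use two-way prime.

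The interesting case is when $p$ is one of the right-to-left rational primes: right-to-left multiple-use bit propagation, $G$-suffix for a finite group $G$, or right-to-left single-use atom propagation. Writing $\tilde p$ for the left-to-right counterpart of $p$, we exploit the identity
\[ p \times \idf_{\Delta^*} \ = \ \mathrm{rev}_{\Gamma \times \Delta}\ \circ\ (\tilde p \times \idf_{\Delta^*})\ \circ\ \mathrm{rev}_{\Sigma \times \Delta}, \]
where $\mathrm{rev}_X : X^* \to X^*$ denotes word reversal on the alphabet $X$. The middle factor is covered by the previous case, so the task reduces to showing that $\mathrm{rev}_X$ is a composition of single-use two-way primes for every polynomial orbit-finite $X$. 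This is achieved in three steps: first apply the letter-to-word homomorphism $X \to X + \#$ that is the canonical injection; then apply the map-reverse prime on $(X + \#)^*$, which reverses the whole input as a single block since the intermediate word contains no $\#$; and finally apply the letter-to-word homomorphism $X + \# \to X^*$ that is the identity on $X$ and maps $\#$ to the empty word. Each of the three steps is a single-use two-way prime.

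The argument is essentially a syntactic rearrangement and I do not anticipate any real obstacle. The only slightly delicate observation is that the right-to-left primes can be simulated by conjugating their left-to-right counterparts with word reversal, and that word reversal itself is accessible via the map-reverse prime applied to a degenerate single-block input.
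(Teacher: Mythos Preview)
Your proposal is correct and follows essentially the same approach as the paper: apply Claim~\ref{claim:seq-comppositions-of-primes} to reduce to building blocks of the form $p \times \idf$, observe that the left-to-right primes are already two-way primes, and handle the right-to-left primes by conjugating their left-to-right counterparts with word reversal, which is realised via map-reverse on an input containing no separators. The paper's proof is terser (it simply says ``the reverse function is easily seen to be a special case of $f_{\textrm{map-reverse}}$ with no $\#$'s''), whereas you spell out the injection into $X + \#$ and the projection back; this extra care about types is fine but not a different argument.
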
 
\begin{proof}
    Thanks to Claim~\ref{claim:seq-comppositions-of-primes},
    it is enough to show 
    that for every rational single-use prime function $p$, the function
    $p \times \idf_\Sigma$ is a composition of single-use two-way primes.
    For homomorphisms and left-to-right functions, this is trivial. 
    For the right-to-left multiple-use bit propagation, we can use the following decomposition:
    \[  f_{\overleftarrow{\textrm{prop}}} \times \textrm{id}_\Sigma  = \textrm{reverse} \circ (f_{\overrightarrow{\textrm{prop}}} \times \textrm{id}_\Sigma  ) \circ \textrm{reverse}\comma\]
    where the reverse function is easily seen to be a special case of $f_{\textrm{map-reverse}}$ with no $\#$'s.
    We finish the proof, by observing that we can use the same approach for all other right-to-left rational prime functions.
\end{proof}

This leaves us with showing that $f_\textrm{$\mathcal{A}$-shape}$ is a local rational semigroup transduction. 
For that we take $S$ to be the semigroup of $\mathcal{A}$'s behaviours as described in Section~\ref{sec:su-two-way-to-monoids}:
\[ \begin{tabular}{ccc}
    $ \underbrace{Q}_{\substack{\textrm{In what state}\\
    \textrm{does } \mathcal{A} \textrm{ enter }\\
    \textrm{the word}}} \times
    \underbrace{\{\leftarrow, \rightarrow\}}_{\substack{
        \textrm{Does } \mathcal{A} \textrm{ enter}\\
        \textrm{from the right}\\
        \textrm{from the left}
    }}$ & $\suto$ & $\underbrace{Q}_{\substack{\textrm{In what state}\\
    \textrm{does } \mathcal{A} \textrm{ exit }\\
    \textrm{the word}}} \times  \underbrace{\{\leftarrow, \rightarrow\}}_{\substack{
        \textrm{Does } \mathcal{A} \textrm{ exit }\\
        \textrm{from the right}\\
        \textrm{from the left}}} \times
        \underbrace{(\Gamma + \epsilon)}_{\substack{\textrm{What letter}\\
        \textrm{does } \mathcal{A} \textrm{ outputs}\\
        \textrm{when it leaves}\\
        \textrm{the word } }} + \textrm{ finish}$
\end{tabular} \]

It is not hard to see that for every $u, w, v \in \Sigma^*$, the $w$-part of $\mathcal{A}$'s 
run on $uwv$ depends only on $w$ and on $\mathcal{A}$'s behaviours of $u$ and $v$. It follows 
that there is a function:
\[ \lambda : S \times \Sigma \times S \eqto (\{\leftarrow, \rightarrow \} \times \{\leftarrow, \rightarrow\} \times \Gamma)^{\leq k}\comma \]
that computes the shape of the run on the single-letter infix. (It is not hard to see that this function is equivariant.)
This means that $f_\textrm{$\mathcal{A}$-shape}$ can be implemented as $(S, h, \lambda)$, where $S$ and $\lambda$ are as described above, 
and $h$ is a function that maps single-letter words to their behaviours. The hard part of the proof is showing that $\lambda$
satisfies the locality equation, i.e. that: 
\[ \lambda(x_1 e y_1, a, y_2 e x_2) =  \lambda(\pi(x_1)\, e \, y_1, a, y_2\, e\, \pi(x_2))\comma \]
provided that $e$ is idempotent, $y_1 \cdot h(a) \cdot y_2 = e$, and $\pi$ is a $\supp(e)$-permutation.
We show this by using a slightly stronger result: Remember that the $w$-part of the shape 
of $\mathcal{A}$'s run on $uwv$ depends only on $w$ and on $\mathcal{A}$'s behaviours on $u$ and $v$. 
It follows that we can extend $\lambda$ to:
\[ \lambda' : S \times \Sigma^* \times S \eqto {\left(\{\leftarrow, \rightarrow \} \times  (\{\leftarrow, \rightarrow\} \times \Gamma)^{\leq k}\right) }^* \comma \]
Now, in order to prove the locality of $\lambda$, we take some words $\bar{y}_2, \bar{y}_2 \in \Sigma^*$ whose 
$\mathcal{A}$-behaviours are equal to $y_1$ and $y_2$ (thanks to a reasoning 
similar to  Claim~\ref{claim:non-full-to-full}, we can assume that $S$ only contains behaviours 
correspond to actual words) and apply the following lemma for $x_1, x_2, e$ and $w = \bar{y}_2 a \bar{y}_2$:
\begin{lemma}
\label{lem:lambda-prim-local}
    Let $e \in S$ be an idempotent behaviour, and let  
    $w$ be a word whose behaviour is equal to $e$.
    For all behaviours $x_1, x_2$ and for every $\supp(e)$-permutation $\pi$, it holds that:
    \[ \lambda'(x_1e, w, ex_2) = \lambda'(\pi(x_1)\,e, w, e\,\pi(x_2)) \]
\end{lemma}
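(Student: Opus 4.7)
The plan is to extend the single-use decision tree argument used for the Mealy locality proof in Section~\ref{sec:su-mealy-to-local-monoid-transduction} to the two-way setting of the present lemma. First, using Lemma~\ref{lem:2way-bounded}, observe that $\mathcal{A}$ visits each position of $w$ at most $k$ times, so $\lambda'(x_1 e, w, e x_2)$ decomposes into at most $k$ visit records per position. Each visit is characterised by an entry state $q_i \in Q \times \{\leftarrow, \rightarrow\}$, and the resulting internal trajectory of $\mathcal{A}$ through $w$ (including the output letters produced along the way) is determined by $q_i$ and the fixed word $w$ via the equivariant transition function of $\mathcal{A}$.

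Next, I would show that the entry states $q_1, q_2, \ldots$ are computed from the equivariant initial state $q_0$ by alternating single-use compositions involving the behaviours $x_1 \cdot e$, $e \cdot x_2$, and $b_w = e$. Since $\pi$ fixes $\supp(e)$ and hence $\pi(e) = e$, replacing $x_1 \mapsto \pi(x_1)$ and $x_2 \mapsto \pi(x_2)$ transforms each $q_i$ into $\pi(q_i)$ by equivariance of the composition. Because the transitions on $w$ are themselves equivariant, the internal records produced starting from $\pi(q_i)$ are precisely the $\pi$-images of those starting from $q_i$. The lemma therefore reduces to showing that the visit records are fixed by $\pi$, i.e.\ that any atom of $q_i$ lying outside $\supp(e)$ cannot leak into the internal outputs produced during the $i$-th visit.

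The crux of the proof will thus be an adaptation of Lemma~\ref{lem:su-tree-leaves}: pick a single-use decision tree $T$ representing $b_w = e$, apply it to $q_i$, and use the idempotency $e \cdot e = e$ together with the roundtrip structure of bounces through the outer $e$-segments to derive a fixpoint equation of the form $e(\cdot) = (\cdot)$ on the state at the end of a cycle, analogous to the identity $y' \circ \proj_1 \circ y = e$ used in the Mealy case. The fixpoint forces every atom of $q_i$ outside $\supp(e)$ to be captured in the exit-state leaf of the decision tree for the visit, so that by the single-use restriction it cannot simultaneously appear in any output-letter leaf along the internal trajectory, which is exactly the required locality. The main obstacle will be handling the full interleaving of the $n \leq k$ visits with the two traversal directions: each new entry state depends on the previous exit state composed with $e \cdot x_2$ or $x_1 \cdot e$, so that the ``available'' atoms get continually reshuffled. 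I expect the cleanest packaging is induction on the visit index $i$, using the idempotency of $e$ at each step to reestablish the support invariant before analysing the next visit.
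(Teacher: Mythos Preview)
Your fixpoint idea is the right one and is exactly what the paper uses, but the two-step reduction you wrap around it does not work. You claim (A) that by equivariance of the transitions, the records produced from $\pi(q_i)$ on $w$ are the $\pi$-image of those produced from $q_i$, and then (B) that the latter are $\pi$-invariant because atoms of $q_i\setminus\supp(e)$ do not leak into the outputs. Both fail, for the same reason: $\pi$ need not fix $w$. Equivariance of $\delta$ only yields $\textrm{records}(\pi(q_i),\pi(w))=\pi\bigl(\textrm{records}(q_i,w)\bigr)$, not $\textrm{records}(\pi(q_i),w)$; and the records on $w$ will in general contain atoms of $w\setminus\supp(e)$ (output letters copied from input letters of $w$), so they are not $\pi$-invariant either. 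The two errors do not cancel.

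The paper bypasses this by proving directly that the two runs are \emph{step-by-step identical} on $w$. Working with representative words $\bar{x}_1,\bar{e},\bar{x}_2$, it makes your ``outer $e$-segments'' precise via the single-use function $(\bar{e}\toprightarrow w\bar{e})$ that sends the state at the first position of $w$ to the exit state from the whole block $\bar{e}w\bar{e}$. This function depends only on the three \emph{behaviours} involved (all equal to $e$), so its support lies inside $\supp(e)$; and because the entry state $q_i$ is itself of the form $e(q_i',\rightarrow)$, one gets the fixpoint $(\bar{e}\toprightarrow w\bar{e})(q_i)=(q_i,\rightarrow)$. Now the Mealy-case tree argument applies to the single-use tree for $(\bar{e}\toprightarrow w\bar{e})$ built from the actual step-by-step transitions: by Lemma~\ref{lem:su-tree-leaves} applied to its exit-state projection, every variable holding an atom of $q_i\setminus\supp(e)$ sits in the exit-state part of the leaf, hence by single-use occurs in no query and in no output slot along the path. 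Consequently the run from $\pi(q_i)$ takes the same branches, visits the same letters of $w$, and emits the same output letters --- directly, with no $\pi$ applied anywhere. Your induction on the visit index then chains these blocks together exactly as the paper does part by part.
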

\begin{proof}
    In order to prove the lemma, we pick some $\bar{x}_1, \bar{x}_2, \bar{e} \in \Sigma^*$
    (again, thanks to a similar reasoning as in Claim~\ref{claim:non-full-to-full}, we can assume that 
    those words exist), and we trace and compare the runs of $\mathcal{A}$ on 
    $\vdash \! \bar{x}_1\, \bar{e}\, w\, \bar{e}\, \bar{x}_2 \! \dashv$ and on
    $\vdash \! \pi(\bar{x}_1)\, \bar{e}\, w\, \bar{e}\, \pi(\bar{x}_2) \! \dashv$.\\

    Consider the first part of the run, that starts in the initial state $q_0$ 
    and ends when $\mathcal{A}$ enters the $w$-part of its input:
    \picc{locality-run-1}
    First, let us consider the run on $\vdash \! \bar{x}_1\, \bar{e}\, w\, \bar{e}\, \bar{x}_2 \! \dashv$:
    We define $q_1$ be the state in which $\mathcal{A}$ first enters $w$ and $q_1'$ to be the last state in which $\mathcal{A}$ enters
    $\bar{e}$, before entering $w$ (see the picture). Let us now notice that 
    analogous states for  $\vdash \! \pi(\bar{x}_1)\, \bar{e}\, w\, \bar{e}\, \pi(\bar{x}_2) \! \dashv$
    are equal to $\pi(q_1)$ and $\pi(q_1')$.  This is because 
    those states depend equivariantly on the behaviour of the prefix up until $w$, 
    which in this case is equal to $\pi(x_1) \cdot e = \pi(x_1) \cdot \pi(e) = \pi(x_1 \cdot e)$.\\
    
    Consider now the second part of the run, which lasts until $\mathcal{A}$ leaves the $\bar{e}w\bar{e}$-part of the input. 
    Let us show that this part of the run exits the $\bar{e}w\bar{e}$-part of the input on the right in $q_1$
    (or $\pi(q_1)$):
    \picc{locality-run-2}
    Observe that the behaviours of both $\bar{e}$ and $w$ are equal to $e$. 
    Since $e$ is idempotent, it follows that the behaviour of the word $\bar{e}w\bar{e}$ is also equal to $e$. 
    By definition of $q_1'$, we know that $e(q_1', \rightarrow) = (q_1, \rightarrow)$.
    It follows that the second part of the run exits $\bar{e}w\bar{e}$ from the right in state $q_1$:
    \[(\bar{e}w\bar{e})(q_1', \rightarrow) = e(q_1', \rightarrow) = (q_1, \rightarrow)\]

    Now, let us show that during this second part of the run (marked as a bold line), $\mathcal{A}$ 
    has to preserve all atoms from $q_1$ that do not appear in $\supp(e)$ -- in particular, 
    this means that $\mathcal{A}$ cannot query or output those atoms. The proof is analogous 
    to the one in Section~\ref{sec:su-mealy-to-local-monoid-transduction}. 
    First, let us consider the following function, which describes the behaviour of $\mathcal{A}$
    on $\bar{e}w\bar{e}$, when it enters $w$ from the left:
    \[ (\bar{e} \toprightarrow w \bar{e}) : \underbrace{Q}_{ \substack{
        \textrm{The state in which}\\
        \textrm{$\mathcal{A}$ is placed in}\\
        \textrm{the first letter of $w$.}
    }
    } \suto \underbrace{Q \times \{\rightarrow, \leftarrow\}}_{ \substack{
        \textrm{The state and the direction}\\
        \textrm{in which $\mathcal{A}$ exits $\bar{e}w\bar{e}$}
    }
    }\]
    By an argument similar to the one in 
    Claim~\ref{claim:two-way-behaviour-single-use}, we know that $(\bar{e} \toprightarrow w \bar{e})$ is, indeed, a single-use function.
    Moreover, it is not hard to see 
    that $(\bar{e} \toprightarrow w \bar{e})$ depends (in an equivariant way) only on the behaviours on $\bar{e}$ and $w$, 
    which are both equal to $e$.
    By Lemma~\ref{lem:fs-functions-preserve-supports} it follows that:
    \[\supp(\bar{e} \toprightarrow w \bar{e} ) \subseteq \supp(e)\]
    Moreover since $e(\rightarrow, q_1') = q_1$, we know that:
    \[ (\bar{e} \toprightarrow w\bar{e})(q_1) = (ewe)(\rightarrow, q_1') = e(\rightarrow, q_1') = (\rightarrow, q_1) \]
    As the second part of the run corresponds to $(\bar{e} \toprightarrow w\bar{e})(q_1)$, 
    it follows that it can only destroy only those atoms from $q_1$ that appear in $\supp(e)$.
    This is because each atom from $q_1$ that is destroyed during the second part of the run 
    has to be restored before $\mathcal{A}$ exits $\bar{e}w\bar{e}$,
    as $(\bar{e}\toprightarrow w \bar{e})(q_1) = q_1$. 
    By a reasoning similar to the one from Section~\ref{sec:su-mealy-to-local-monoid-transduction},
    we know that each such restored atom has to appear in $\supp(\bar{e} \toprightarrow w \bar{e})$,
    and we know that $\supp(\bar{e} \toprightarrow w\bar{e}) \subseteq \supp(e)$.\\

    Now, let us consider the second part of $\mathcal{A}$'s run on $\vdash\pi(\bar{x}_1)\, \bar{e} w \bar{e} \, \pi(\bar{x}_2) \dashv$,
    i.e. the part that starts in $\pi(q_1)$ (in the first letter of $w$), and ends when $\mathcal{A}$ leaves $\bar{e}w\bar{e}$.
    Thanks to the same arguments as for $\vdash \! \bar{x}_1 \bar{e} w \bar{e} \bar{x}_2 \! \dashv$, 
    we know that the second part of the run also leaves $\bar{e}w\bar{e}$ from the right in $\pi(q_1)$,
    and during the second part of the run, $\mathcal{A}$ has to preserve all atoms from $\pi(q_1)$ that 
    do not appear in $\supp(e)$. In particular, this means that $\mathcal{A}$ does 
    not query or output any atoms from $\pi(q_1)$ that do not appear in $\supp(e)$.
    Since $\pi$ is a $\supp(e)$-permutation, it follows 
    that, in the second part of the run, $\mathcal{A}$ cannot distinguish between $q_1$ and $\pi(q_1)$, 
    as the only difference between the two states are the atoms outside of $\supp(e)$. 
    It follows that the second part of the run has the same shape when starting in $q_1$ and in $\pi(q_1)$.
    In particular, this means that the shape of the $w$-part in the second part of the run is the same on both
    $\bar{x}_1\bar{e}w\bar{e}\bar{x}_2$ and $\pi(\bar{x}_1)\bar{e}w\bar{e}\pi(\bar{x}_2)$.\\

    Now, let us consider the third part of the run -- from exiting $\bar{e}w\bar{e}$ in $q_1$
    (or $\pi(q_1)$) until reentering $w$:
    \picc{locality-run-3}
    We define $q_2$ and $q_2'$ analogously to $q_1$ and $q_1'$, i.e. $q_2$ is the state in which $\mathcal{A}$ reenters $w$, 
    and $q'_2$ is the last state in which $\mathcal{A}$ enters $\bar{e}$, 
    before reentering~$w$. Similarly as before, we observe that the analogous states for
    $\vdash \! \pi(\bar{x}_1)\, \bar{e}\, w\, \bar{e}\, \pi(\bar{x}_2)\! \dashv$ are equal to $\pi(q_2)$ and $\pi(q_2')$.\\
    
    Let us now consider the fourth part of the run, which starts in $q_2$ (or $\pi(q_2)$)
    and continues until $\mathcal{A}$ leaves $\bar{e}w\bar{e}$:
    \picc{locality-run-4}
    An analysis, similar to the one for the second part of the run, shows that $\mathcal{A}$
    leaves $\bar{e}w\bar{e}$ from the left in $q_2$ (or in $\pi(q_2)$),
    and that both in the run on $\bar{x}_1\bar{e}w\bar{e}\bar{x}_2$ and on $\pi(\bar{x}_1)\bar{e}w\bar{e}\pi(\bar{x}_2)$,
    $\mathcal{A}$ cannot query atoms from $q_2$ (or $\pi(q_2)$) that do not appear in $\supp(e)$. 
    It follows, by the same argument as before, that the fourth part of the run has the same shape
    for both $\vdash\! \bar{x}_1 \bar{e} w \bar{e}  \bar{x}_2\! \dashv$ and $\vdash\! \pi(\bar{x}_1)\, \bar{e}\, w\, \bar{e}\, \pi(\bar{x}_2)\! \dashv$.
    In particular, this means that during the fourth part of the run, the $w$-parts of the two runs have equal shapes.\\

    We finish the proof by observing that a similar reasoning can be continued 
    until the end of the two runs.
\end{proof}

\subsubsection{Untangling the run graphs}
In this section, we show how to use compositions of single-use two-way primes to untangle
the run graphs:
\begin{lemma}
\label{lem:untangle-compositions-of-primes}
    For every polynomial orbit-finite $\Gamma$, and for every $k \in \nat$, the following function can be constructed as a composition of single-use two-way primes:
    \[ f_{\textrm{untangle}}: \ \underbrace{ {\left(\left(\{\leftarrow, \rightarrow\}^2 \times (\Gamma + \epsilon)\right)^{\leq k}\right)}^*}_{\textrm{encoding of a shape of the run}} \to
    \underbrace{\Gamma^*}_{\textrm{\textrm{the untangled string}}} \]
    We assume that the input shape is a single path, i.e. 
    all nodes except the initial one have exactly one successor,
    and all nodes except the final one have exactly one predecessor.
\end{lemma}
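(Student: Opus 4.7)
My plan is to proceed by induction on $k$, the maximum number of visits per position. The base case $k=1$ is immediate: the single-path assumption forces the shape to be a simple left-to-right sweep starting at position $1$ and ending at $\dashv$, so every position is visited exactly once; the output is then recovered by a letter-to-word homomorphism that extracts the (at most one) $\Gamma$-letter at each position, discarding $\epsilon$.

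For the inductive step, the structural idea I plan to exploit is that a valid shape decomposes into maximal directional sweeps, alternating rightward and leftward, separated by \emph{turnaround} visits (visits whose entry and exit directions disagree). Within any single sweep each position is visited exactly once, so peeling off a carefully chosen family of sweeps will decrease the maximum visit count by one. Concretely, I would: (a)~annotate each visit with the nesting depth of the sweep it belongs to, where depth counts the number of still-open turnarounds surrounding it; (b)~insert $\#$-separators around the maximal-depth sweeps to isolate them as contiguous blocks; (c)~apply $f_{\textrm{map-rev}}$ so that all right-to-left innermost sweeps are reversed into left-to-right order; (d)~merge each innermost sweep into its surrounding sweep at the two turnaround endpoints using a letter-to-word homomorphism; (e)~recurse. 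Step (a) I plan to carry out as a local rational semigroup transduction, which by Theorem~\ref{thm:rational-kr} and Claim~\ref{claim:rational-primes-two-way-primes} is a composition of single-use two-way primes; steps (b)--(d) are built from the remaining two-way primes together with the already-constructed rational primes.

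The main obstacle I expect is step (a): showing that sweep-nesting depth can be computed as a local rational semigroup transduction. This is essentially a bounded parenthesis-matching problem over the turnaround subsequence, and I plan to encode it via an orbit-finite semigroup whose elements track a bounded stack of pending turnarounds. Because depth is inherently a context-dependent quantity, verifying the locality equation of Definition~\ref{def:local-rational-monoid-transduction} requires care; the hope is that locality survives because, whenever the relevant context product is an idempotent, adjusting atoms outside its support cannot change which turnarounds are already matched. A secondary difficulty is the reassembly in step (d): I need to check that once the innermost sweeps are reversed and merged with their neighbors, the resulting shape has strictly fewer visits at every position that was previously at the $k$-visit maximum, so that the induction hypothesis applies with $k-1$. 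I expect this bookkeeping to be the most delicate part of the argument, though conceptually routine once the depth annotation is in place.
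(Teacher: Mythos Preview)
Your base case is slightly off: for $k=1$ the shape need not go left to right, since the lemma is applied recursively to sub-shapes that may start on either end. The paper treats both the left-to-right and right-to-left one-pass cases and glues them via an if-then-else combinator. This is minor.

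The real gap is the inductive step. Your plan assumes that sweeps form a globally nested (well-bracketed) structure, so that each visit carries a well-defined ``nesting depth'' and peeling off the deepest sweeps lowers the width by one. But sweeps do not nest. Take a run that goes right to position $10$, left to $5$, right to $15$, left to $8$, right to $20$: the third sweep passes beyond the first turnaround at $10$, so it is not inside the first sweep in any sense, and there is no single depth you can consistently assign to it. Your parenthesis-matching picture is valid for the crossing sequence at a \emph{fixed} position (arcs joining consecutive same-side visits do not cross), but that nesting is local to each position and does not yield a global depth function. As a result step~(a) has no well-defined target, step~(b) cannot isolate ``innermost sweeps'' as contiguous blocks (they share positions with other sweeps), and the width reduction in step~(e) is unsupported.

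The paper uses a different decomposition. It first handles \emph{loops} (shapes starting and ending at the same position): a right-loop is split at its first visit to the rightmost position into two pieces, each of strictly smaller width, and arbitrary loops are cut into at most $k$ left- and right-loops. For a general shape it defines \emph{stations} inductively: station $1$ is the start position; the $i$-th loop is the run between the first and last visit to station $i$; station $i{+}1$ is the first position to the right of station $i$ not touched by the $i$-th loop; the $i$-th sweep is the run between the last visit to station $i$ and the first to station $i{+}1$. The crucial observation is that every position visited by a sweep is also visited by the adjacent loop, hence each sweep has width $<k$ and falls under the induction hypothesis, while loops are dispatched by the case already handled. The cutting into windows, loops and sweeps is done with $f_{\textrm{map-dup}}$, unambiguous Mealy machines over the finite $\Gamma$-free shape alphabet, and the $\mathtt{map}$ and if-then-else combinators.
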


\noindent
For example, consider the following input:
\bigpicc{two-way-example-shape}
\noindent
It corresponds to the following \emph{run graph}:
\bigpicc{two-way-example-shape-connected}
\noindent
Which means that the function $f_\textrm{untangle}$ should return the following word:
\[ 123123\#5757\#2323\]

This section is entirely dedicated to proving Lemma~\ref{lem:untangle-compositions-of-primes}. 
We present the same proof as in \cite[Section E.4.3]{single-use-paper}.
Interestingly, the proof looks almost the same as it would 
for a finite\footnote{To the best of my knowledge, the proof for a finite
$\Gamma$ was first presented in \cite[Section~6.1]{bojanczyk2018polyregular}.
We present the proof following the lines of \cite[Section~E.4]{single-use-paper}.} $\Gamma$.\\

The proof goes by induction on $k$, 
which  represents the \emph{width} of the run graph,
i.e. the maximal number of times a position is visited.
For the induction base, we notice that there are only two possible types of run graphs for $k=1$:
\picc{width-1-graphs}
Both of those cases are easy to untangle: The left-to-right pass is almost already 
untangled -- it suffices to apply a homomorphism that extracts the letters.
For the right-to-left pass, we can use the same homomorphism followed by the reverse function.
Unexpectedly, the hardest part of the induction base
is combining the two procedures into a single function.
We do this using the following lemma:
\begin{lemma}
\label{lem:2-way-primes-cases}
    Let $L \subseteq \Sigma^*$ be a language recognized by a single-use two-way automaton.
    If $f_1 : \Sigma^* \to \Gamma^*$ and $f_2 : \Sigma^* \to \Gamma^*$
    are both compositions of two-way primes, then so is the following function:
    \[ (\mathtt{if}\ L\ \mathtt{then}\ f_1\ \mathtt{else}\ f_2)(w) = \begin{cases}
            f_1(w) & \textrm{if } w \in L\\
            f_2(w) & \textrm{if } w \not \in L
        \end{cases}\] 
\end{lemma}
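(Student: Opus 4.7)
The plan is a two-step reduction. First, I would use a composition of rational primes to annotate every position of the input word $w$ with a bit $b$ that indicates whether $w \in L$. Second, I would use this uniform annotation to drive the choice between $f_1$ and $f_2$ via a closure property of compositions of primes under disjoint sums of alphabets.

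For the first step, by Theorem~\ref{thm:dsua-2dsua-ofm} the language $L$ is recognised by an orbit-finite monoid $(M, h, F)$, and I would use this monoid to build a local rational semigroup transduction with underlying semigroup $M$, input function $h$, and output function $\lambda(p, a, s)$ that returns $(a, 1)$ if $p \cdot h(a) \cdot s \in F$ and $(a, 0)$ otherwise. Locality of $\lambda$ follows because under the hypotheses of the locality equation (with idempotent $e$ and $y_1 h(a) x_2 = e$) the product $x_1 e y_1 h(a) x_2 e y_2$ collapses to $x_1 e y_2$, and every $\supp(e)$-permutation $\pi$ satisfies $\pi(e) = e$, so equivariance of $F$ preserves the bit under $\pi$. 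Theorem~\ref{thm:rational-kr} then turns this transduction into a composition of single-use rational primes, and Claim~\ref{claim:rational-primes-two-way-primes} promotes it to a composition of two-way primes. The outcome is a word $w_b \in (\Sigma \times \{0, 1\})^*$ in which every letter carries the same bit $b$.

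Next, I identify $(\Sigma \times \{0, 1\})^* \simeq (\Sigma + \Sigma)^*$ via the natural bijection, so that $w_b$ lies entirely in one of the two summands depending on $b$. The core technical step is then the following closure property: if $g_1, g_2 : \Sigma^* \to \Gamma^*$ are both compositions of two-way primes, then the function $(g_1 \oplus g_2) : (\Sigma + \Sigma)^* \to (\Gamma + \Gamma)^*$, defined to apply $g_i$ to inputs living entirely in the $i$-th summand (with arbitrary behaviour on mixed inputs), is also a composition of two-way primes. Given this, applying $(f_1 \oplus f_2)$ to $w_b$ and then a letter-to-letter merging homomorphism $(\Gamma + \Gamma) \to \Gamma$ yields the desired output. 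I would prove the closure property by induction on the combined length of the prime decompositions of $g_1$ and $g_2$: factoring $g_i = p_i \circ g_i'$ through intermediate alphabets $\Delta_i$ gives $g_1 \oplus g_2 = (p_1 \oplus p_2) \circ (g_1' \oplus g_2')$, so the induction hypothesis handles the $g_i'$ part and reduces the problem to realising $p_1 \oplus p_2$ as a composition of primes.

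The base case proceeds by case analysis on pairs of prime types: for two letter-to-word homomorphisms the $\oplus$-sum is itself a letter-to-word homomorphism on the sum alphabet; for length-preserving Mealy-type primes $p_i \times \idf$ coming from Theorem~\ref{thm:kr}, the $\oplus$-sum is itself a single-use Mealy machine on the disjoint-sum alphabet, which Theorem~\ref{thm:kr} then re-decomposes into primes; and for $f_\textrm{map-dup}$, $f_\textrm{map-rev}$, and end-of-word markers the sum is a direct variant on the enlarged alphabet, after reconciling the separator symbols across the two copies. The main obstacle will be $p_1 \oplus p_2$ when $p_1$ and $p_2$ are primes of different types (say one is $f_\textrm{map-dup}$ and the other is Mealy-style), since no single uniform prime accomplishes both tasks; I would handle this case by a subsequence-style combinator analogous to Lemma~\ref{lem:subsequence-combinator}, applying each $p_i$ on its own summand while passing the other summand through unchanged, and composing the two sequentially. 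Verifying that such a subsequence combinator lifts from local semigroup transductions to all compositions of two-way primes (in particular surviving $f_\textrm{map-dup}$ and $f_\textrm{map-rev}$) is where the genuine work of the proof lies.
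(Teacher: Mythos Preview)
Your overall strategy is correct and matches the paper's at the high level: colour every letter with a membership bit, then run $f_1$ on one colour and $f_2$ on the other. Your locality argument for the annotation step is fine (indeed $x_1ey_1 \cdot h(a) \cdot x_2ey_2 = x_1ey_2$ collapses as you say).

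Where you diverge from the paper is in the second step, and here you make your own life unnecessarily hard. You propose to prove a two-sided closure property $(g_1 \oplus g_2)$ by simultaneous induction on both decompositions, which forces you into the mixed-prime case $p_1 \oplus p_2$ that you correctly flag as the main obstacle. The paper instead observes that $(g_1 \oplus g_2) = (g_1 + \idf) \circ (\idf + g_2)$, so it suffices to prove the one-sided property: if $f$ is a composition of two-way primes then so is $(f + \idf) : (\Sigma + \Delta)^* \to (\Gamma + \Delta)^*$. This needs only a single induction on the length of one decomposition, and the base case is $(p + \idf)$ for a single prime $p$ --- exactly the ``subsequence-style combinator'' you mention, but now it is the whole proof rather than a patch for the hard case. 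For $p = f_{\textrm{su-prop}} \times \idf_X$ one pads the $\Delta$-letters with the neutral symbol $\epsilon$; for $p = f_{\textrm{map-rev}}$ or $f_{\textrm{map-dup}}$ one brackets each $\Delta$-letter by fresh separators $\#a\#$ and cleans up afterwards. No pair-of-primes analysis is ever needed.

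So your plan would go through, but the factorisation $(g_1 \oplus g_2) = (g_1 + \idf) \circ (\idf + g_2)$ short-circuits the part you identified as ``where the genuine work of the proof lies.''
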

\begin{proof}
    The construction consists of the following six steps:
    \begin{gather*}
        \Sigma^*  \longtransform{w \mapsto w \! \dashv}
        (\Sigma + \dashv)^*  \transform{f_L} 
        (\Sigma + \{\yes, \no\})^*  \longtransform{f_\textrm{colour}} (\Sigma + \Sigma)^*\\
        (\Sigma + \Sigma)^* \longtransform{f_1 + \idf} 
        (\Gamma + \Sigma)^*  \longtransform{\idf + f_2} 
        (\Gamma + \Gamma)^*  \longtransform{\merge^*} 
        \Gamma^*,
    \end{gather*}
    \begin{enumerate}
        \item  First, we equip the input word with the end-of-word marker.
               This function (i.e. $w \mapsto w\! \dashv$) is easily seen to be a composition 
               of a rational transduction, that underlines the last letter of the input,
               with a letter-to-word homomorphism that inserts $\dashv$ after the underlined letters. 
               (Thanks to Claim~\ref{claim:rational-primes-two-way-primes}, the rational 
               transduction can be further decomposed into two-way primes.)\\
        \item  Next, we transform the end of word marker $\dashv$ into either 
               $\yes$ or $\no$ depending on whether the input word belongs 
               to $L$. This step is implemented as the following function $f_L : (\Sigma + \dashv)^* \to (\Sigma + \{\yes, \no\})^*$:
               \[
                   f_L(w \dashv) = \begin{cases}
                       w\,\yes & \textrm{if } w \in L\\
                       w\,\no & \textrm{if } w \not \in L\\
                   \end{cases}
               \]
               To see that $f_L$ is a composition of primes, observe that 
               thanks to Theorem~\ref{thm:dsua-2dsua-ofm}, $L$ can be recognized by a one-way single-use automaton. 
               This automaton can be easily modified into a single-use Mealy machine 
               that recognizes $f_L$. It follows by Theorem~\ref{thm:kr} and Claim~\ref{claim:rational-primes-two-way-primes}
               that $f_L$ is a composition of single-use two-way primes. 
        \item Next, we propagate the output of $f_L$, by colouring the entire word into 
              blue or yellow, depending on whether its last letter is equal to $\yes$ or $\no$.
              For this, we use the following function:
              \[ f_{\textrm{colour}} : (\Sigma + \{\yes, \no\})^* \to (\underbrace{\Sigma}_{\textrm{blue copy}} + \underbrace{\Sigma}_{\textrm{yellow copy}})^*\comma\]
              It is not hard to see that $f_\textrm{colour}$ can be implemented by a right-to-left single-use Mealy machine
              composed with a homomorphism that forgets about the $\{\yes, \no\}$-letter.
              It follows, by (an analogue) of Theorem~\ref{thm:kr} and Claim~\ref{claim:rational-primes-two-way-primes} that
              $f_\textrm{colour}$ is a composition of single-use two-way primes. 
        \item Next, we apply the function $(f_1 + \idf)$, defined in the following claim:
        \begin{claim}
            If $f : \Sigma^* \to \Gamma^*$ is a composition of single-use two-way primes then so is 
            the following function $(f + \idf ) : (\Sigma + \Delta)^* \to (\Gamma + \Delta)^*$:
            \[ \begin{tabular}{c}
                $(f + \idf)(w) = \begin{cases}
                    f(w) & \textrm{if all letters in } w \textrm{ belong to } \Sigma\\
                    w & \textrm{if all letters in } w \textrm{ belong to } \Delta\\
                    \emph{\textrm{(unspecified)}} & \textrm{if } w \textrm{ contains letters from both } \Sigma \textrm{ and } \Delta\\
                \end{cases}$
            \end{tabular}\]
        \end{claim}
        \begin{proof}
            The proof goes by induction on the construction of $f$ as a composition of single-use two-way primes.
            The induction step follows from the following observation:
            \[(g \circ h) + \idf  = (g + \idf ) \circ (h + \idf)\comma\]
            This leaves us with the induction base,  which states that  for every single-use two-way 
            prime $p$, the function $(p + \idf)$ is a composition of single-use two-way primes. We only 
            show it for $p = f_\textrm{su-prop}\times\idf_X$, and $p = f_\textrm{map-rev}$, as 
            other cases are either trivial or analogous.\\
    
            First, we show how to construct $f_\textrm{su-prop}\times\idf_X \, + \, \idf$. Here is its type:
            \[ \left( (\atoms + \{\downarrow, \epsilon\}) \times X \, + \, \Delta  \right)^*\ \ \to \ \  \left((\atoms + \epsilon) \times X  \, + \, \Delta\right)^*  \]
            We start the construction with a homomorphism that equips every letter from $\Delta$ with $\epsilon$ -- i.e. the neutral letter of $f_\textrm{su-prop}$. 
            This gives us a word over $(\atoms + \{\downarrow, \epsilon\}) \times (X + \Delta)$. 
            Then, we apply $f_\textrm{su-prop} \times \idf_{X + \Delta}$, obtaining a word over $(\atoms + \epsilon) \times (X + \Delta)$. 
            Finally, we use homomorphism to remove the $\epsilon$'s from $\Delta$.\\
    
            \noindent
            This leaves us with constructing $(f_{\textrm{map-rev}} + \idf)$, whose type is:
            \[  ((\Sigma + \#) + \Delta)^* \to ((\Sigma + \#) + \Delta)^*\]
            We start the construction with a letter-to-word homomorphism that expands every $a$ from $\Delta$ into $\#a\#$,
            and keeps elements of $(\Sigma + \#)$ unchanged. 
            Then, we apply $f_\textrm{map-rev}$. Finally, we remove all $\#$'s that are adjacent to at least one letter from $\Delta$.
            (This last step is a composition of a local rational semigroup transduction
            that underlines neighbours of $\Delta$, and a letter-to-word homomorphism that removes underlined $\#$'s.)
        \end{proof}
    \item Next, we apply $\idf + f_2$, defined analogously to $f_1 + \idf$. 
    \item Finally, we forget about the colours by applying homomorphism:
    \[\merge^* : (\Gamma + \Gamma)^* \to \Gamma^*\]
    \end{enumerate}
\end{proof}

This finishes the proof of the induction base for Lemma~\ref{lem:untangle-compositions-of-primes}.
We start the proof of the induction step, with the special case of \emph{right loops}, which 
are those run graphs whose both initial and final 
nodes belong to the first position. In order to untangle 
a right loop, let us consider the following way
of dividing it into two parts: the first one 
contains all the nodes up to (and including)
the first visit in the last position, 
and the second one contains all other nodes. 
Here is an example:
\picc{right-loop-ex-split}
The idea behind this division is that the width of either part
is smaller than the width of the whole run graph -- 
this will enable us to apply the induction assumption.
Moreover, the division can be constructed as a composition of primes:
\begin{lemma}
\label{lem:loop-decomposition}
    The following function $f_\textrm{loop-div}$, which inputs a right loop and
    splits it into two $\#$-separated parts (as described earlier), can be constructed as a composition of single-use two-way primes.
    \[ f_\textrm{loop-div} : {\left((\{\leftarrow, \rightarrow\}^2 \times (\Gamma + \epsilon))^{\leq k}\right)}^* \to \left((\{\leftarrow, \rightarrow\}^2 \times  (\Gamma + \epsilon) )^{\leq k-1} + \# \right)^*\comma  \]
\end{lemma}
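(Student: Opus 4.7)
The plan is to decompose $f_\textrm{loop-div}$ into three stages, each implementable as a composition of single-use two-way primes.

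First, I would annotate every input position with a bit marking whether it is the last position. This is achieved by a right-to-left single-use Mealy machine; by Theorem~\ref{thm:kr} combined with Claim~\ref{claim:rational-primes-two-way-primes}, such a machine is a composition of single-use two-way primes.

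Next, I would label every visit as B (before the pivot) or A (after the pivot), where the pivot is the first chronological visit to the last position. I would realize this as a local rational semigroup transduction whose underlying semigroup $S$ consists of path behaviours on infixes, defined in the style of the proof of Lemma~\ref{lem:2sua-incl-ofm}: an element of $S$ is a single-use function that, for every pair (direction in $\{\leftarrow, \rightarrow\}$, visit descriptor) describing an entry into the infix, records the exit (direction, visit descriptor) or the indication that the path stays inside. Since visit descriptors live in a polynomial orbit-finite set and the width is bounded by $k$, the set $S$ is orbit-finite by Theorem~\ref{thm:su-orbit-finite}. Using the last-position annotation from the first stage, the prefix and suffix path behaviours jointly determine, for each visit at the current position, whether the pivot occurs before or after this visit; the output function $\lambda$ then emits the current position with each of its visits tagged accordingly. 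Locality of $\lambda$ is proved by adapting the argument of Lemma~\ref{lem:lambda-prim-local}: in the presence of an idempotent behaviour $e$, the single-use restriction forbids any atom outside $\supp(e)$ from being queried or reproduced within an $e$-infix, so permuting atoms of $x_1$ or $x_2$ outside $\supp(e)$ does not affect the path trace and hence leaves the labels unchanged. By Theorem~\ref{thm:rational-kr} combined with Claim~\ref{claim:rational-primes-two-way-primes}, the resulting local rational semigroup transduction is itself a composition of single-use two-way primes.

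Finally, I would convert the labelled word into the desired two-part output. Append the end-of-word marker via the corresponding prime, then apply $f_\textrm{map-dup}$; in the absence of $\#$-letters, $f_\textrm{map-dup}$ simply concatenates two copies, yielding $w'{\dashv}w'{\dashv}$. A left-to-right single-use Mealy machine then marks each position as belonging to the first or second copy and distinguishes the middle $\dashv$ from the final one. A single letter-to-word homomorphism outputs, for each position in the first copy, the list of its B-tagged visits; for each position in the second copy, the list of its A-tagged visits; rewrites the middle $\dashv$ as $\#$; and deletes the final $\dashv$. The width bound $k - 1$ in the output type follows from a combinatorial observation: in a right loop, the first position contains the start in the B-part and the end in the A-part, every interior position is traversed both on the way to the pivot and on the way back, and the last position contains the pivot in its B-part; thus at every output position at least one visit is absent from one of the parts, leaving at most $k - 1$ tagged visits on each side.

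The main obstacle will be the second stage, specifically the proof of locality for $\lambda$. One must show that the emitted labels really are invariant under $\supp(e)$-permutations acting on the prefix/suffix factors around an idempotent $e$. The argument mirrors that of Lemma~\ref{lem:lambda-prim-local} essentially verbatim: the portion of the path trace confined to $e$-behaviour segments is equivariant under atom permutations trivial on $\supp(e)$, because the single-use restriction prevents any such atom from being introduced or destructively consumed inside the $e$-infix, so the whole pivot-location computation and the resulting labels remain unchanged.
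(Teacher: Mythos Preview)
Your three-stage outline (mark the last position, colour each visit B/A, duplicate-and-filter) matches the paper's structure. The divergence, and the weakness, is in Stage~2.

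The paper's approach is far simpler than yours. The observation you are missing is that the B/A colouring depends only on the arrow structure $(\{\leftarrow,\rightarrow\}^2)^{\leq k}$ at each position, not on the $\Gamma$-labels at all. Since that alphabet is \emph{finite}, the paper extracts the shape via a homomorphism, computes the colouring on the finite part with a classical unambiguous Mealy machine (which decomposes into primes by Lemma~\ref{lem:rational-equiv-unambigous}, Theorem~\ref{thm:classical-rational-kr}, and Claim~\ref{claim:rational-primes-two-way-primes}; this is packaged as Claim~\ref{claim:unabigous-finite-parallel}), and then reattaches the $\Gamma$-labels. There is no orbit-finite semigroup to build and no locality equation to verify.

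Your locality justification invokes the wrong mechanism. You appeal to ``the single-use restriction'' in the style of Lemma~\ref{lem:lambda-prim-local}, but that lemma concerns the run of the original single-use transducer $\mathcal{A}$; here you are merely tracing a fixed combinatorial path through a graph given as input, which involves no single-use computation whatsoever. If your semigroup $S$ of path behaviours is set up correctly, its elements are functions between visit \emph{indices} at the boundary positions---finite, atomless data---so $S$ is actually finite, Theorem~\ref{thm:su-orbit-finite} is irrelevant, and locality is vacuous because every $\supp(e)$-permutation already fixes every element of $S$. Your construction can be made to work, but not for the reason you give; the real reason is exactly the atom-independence that the paper exploits directly.
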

\noindent
Here is an example:
\bigpicc{right-loop-ex-decomp}
\begin{proof}
    First, we show how to colour each input node into yellow or blue, depending 
    on whether it belongs to the first or to the second part of the decomposition. We start 
    with the special case where $\Gamma$ is the singleton set, i.e. $\Gamma = 1$.
    In this case, both the input and the output alphabets are finite, so 
    it is enough to construct the colouring as an unambiguous Mealy machine -- 
    thanks to Lemma~\ref{lem:rational-equiv-unambigous}, Theorem~\ref{thm:classical-rational-kr}, 
    and Lemma~\ref{claim:rational-primes-two-way-primes}, we know that 
    the unambiguous Mealy machine can be decomposed into single-use two-way primes.\\

    The unambiguous Mealy machine uses nondeterminism to guess the colour of each node,
    and verifies that the colouring is correct, by checking a few local conditions:
    \begin{enumerate}
        \item the initial node is yellow;
        \item the first node in the last position is yellow, and its successor is blue;
        \item no yellow node is followed by a blue node.
    \end{enumerate}
    The machine is unambiguous because there is only one correct colouring.\\

    Now, let us go back to the general case: Since the output alphabet is now infinite, we cannot use an unambiguous Mealy machine.
    However, it is not hard to see the colouring does not depend on the $\Gamma$-values, 
    which means that we can reduce the polynomial orbit-finite case to the finite case: 
    First, we apply a homomorphism $f^*_{\Gamma-\textrm{extr}}$, where $f_{\Gamma-\textrm{extr}}$ is 
    a function that splits each input letter into its $\Gamma$-free shape and its $\Gamma$-labels:
    \[ f_{\Gamma-\textrm{extr}}\ \ : \ \  \underbrace{(\{\leftarrow, \rightarrow\}^2 \times (\Gamma + \epsilon))^{\leq k}}_{
        \textrm{The input letter}
    } \ \ \longrightarrow \ \  \underbrace{(\{\leftarrow, \rightarrow\}^2)^{\leq k}}_{
        \substack{\textrm{The $\Gamma$-free shape part of}\\ \textrm{the input letter} }
    }\times \underbrace{(\Gamma + \epsilon)^{\leq k}}_{
        \substack{\textrm{The labels of}\\\textrm{the input letter}}
    }\]
    Next, we use an unambiguous Mealy machine to construct the colouring for the $\Gamma$-free version
    (using Claim~\ref{claim:unabigous-finite-parallel} defined below) and finally, we use a homomorphism 
    to transfer the $\Gamma$-labels back to the run graph. 
    \begin{claim}
        \label{claim:unabigous-finite-parallel}
            Let $A$ and $B$ be finite alphabets, and let $f: A^* \to B^*$ be a function recognized by an unambiguous Mealy machine. 
            It follows that for every polynomial orbit-finite $C$, the following function is a composition 
            of rational single-use primes (which by Lemma~\ref{claim:rational-primes-two-way-primes} means 
            that it is also a composition of single-use two-way primes):
            \[ f \times \idf_C : (A \times C)^* \to (B \times C)^* \]
    \end{claim}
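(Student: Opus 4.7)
The plan is to reduce to Theorem~\ref{thm:rational-kr} by exhibiting $f \times \idf_C$ directly as a local rational semigroup transduction; Theorem~\ref{thm:rational-kr} then supplies the desired decomposition into single-use rational primes (which by Claim~\ref{claim:rational-primes-two-way-primes} also yields a decomposition into single-use two-way primes).

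First, by Lemma~\ref{lem:rational-equiv-unambigous}, the function $f$ is rational, so by Definition~\ref{def:rational-monoid-transduction} it is implemented by some finite rational semigroup transduction $(S, h, \lambda)$ with $S$ a finite semigroup, $h : A \to S$, and $\lambda : S^1 \times A \times S^1 \to B$. Since $S$, $A$ and $B$ are finite, they are atomless when viewed as sets with atoms, and moreover they are polynomial orbit-finite by Example~\ref{ex:finite-pof}.

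I would then lift this transduction to the alphabet $A \times C$ in the obvious way: keep the same underlying semigroup $S$, define the input function $h' : A \times C \to S$ by $h'(a, c) = h(a)$, and define the output function $\lambda' : S^1 \times (A \times C) \times S^1 \to B \times C$ by $\lambda'(s_1, (a, c), s_2) = (\lambda(s_1, a, s_2), c)$. A direct unfolding of the definition of the induced word-to-word function (as in Definition~\ref{def:rational-monoid-transduction}) shows that $(S, h', \lambda')$ computes $f \times \idf_C$. Both $h'$ and $\lambda'$ are equivariant because $S$ and $B$ are atomless, so every atom permutation fixes all $S$- and $B$-values and acts only on the $C$-coordinate, which is preserved identically.

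The only nontrivial point to check is that $\lambda'$ satisfies the locality equation of Definition~\ref{def:local-rational-monoid-transduction}, and this turns out to be immediate: $S$ is finite and hence atomless, so every element of $S$ is equivariant, and for any $\supp(e)$-permutation $\pi$ we have $\pi(x_1) = x_1$ and $\pi(x_2) = x_2$ for all $x_1, x_2 \in S$; thus both sides of the locality equation coincide literally. Hence $(S, h', \lambda')$ is a local rational semigroup transduction over the polynomial orbit-finite alphabets $A \times C$ and $B \times C$, and Theorem~\ref{thm:rational-kr} exhibits $f \times \idf_C$ as a composition of single-use rational primes. There is no real obstacle here: all the combinatorial work has already been absorbed into the proof of Theorem~\ref{thm:rational-kr}, and the lifting to the extra coordinate $C$ costs nothing because $S$ is atomless.
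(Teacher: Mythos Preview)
Your proof is correct and follows the same route as the paper, which merely states that the claim is a direct consequence of Lemma~\ref{lem:rational-equiv-unambigous} and Theorem~\ref{thm:rational-kr}; you have simply spelled out the details of that ``direct consequence'' by explicitly lifting the finite rational semigroup transduction $(S,h,\lambda)$ for $f$ to one for $f\times\idf_C$ and observing that locality is vacuous because $S$ is atomless.
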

    \begin{proof}
        The claim is a direct consequence of Lemma~\ref{lem:rational-equiv-unambigous} and Theorem~\ref{thm:rational-kr}.
    \end{proof}

    Once we have coloured the nodes, we can easily produce the output of $f_\textrm{loop-div}$:
    First, we duplicate the coloured run (using $f_\text{map-dup}$ with no separators),
    and then we apply a single-use  Mealy machine that filters out
    all blue nodes from the first copy and all yellow nodes from the second copy.
\end{proof}

After dividing the right loop into the two parts, we can untangle it  by 
independently untangling each of the parts (using $f_\textrm{untangle}$ from the induction assumption),
and concatenating the results:
\[ \left((\{\leftarrow, \rightarrow\}^2 \times \Gamma)^{\leq k-1} + \# \right)^* \longtransform{\mathtt{map}\; f_\textrm{untangle}} (\Gamma + \#)^* \transform{\textrm{concat}} \Gamma^* \]
The list-flattening function $\textrm{concat} : (\Gamma + \#)^* \to \Gamma$ is a simple letter-to-word homomorphism that filters out all $\#$'s,
and the $\mathtt{map}$ combinator is defined by the following lemma:
\begin{lemma}
\label{lem:2-way-primes-map}
    If $f : A^* \to B^*$ is a composition of single-use two-way primes,
    then so is the following function $\mathtt{map} \;f$, which applies $f$ independently 
    to each $\#$-separated block:
    \[ \mathtt{map} \;f : (A + \#)^* \to (B + \#)^* \]
\end{lemma}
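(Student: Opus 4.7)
The plan is to induct on the construction of $f$ as a composition of single-use two-way primes. The inductive step is immediate from the identity $\mathtt{map}(g \circ h) = (\mathtt{map}\;g) \circ (\mathtt{map}\;h)$, which holds by unwinding the definitions and works even when $h(\epsilon) \neq \epsilon$ (e.g.\ for the end-of-word marker), because empty blocks in the intermediate word still become $h(\epsilon)$-blocks in the appropriate positions. So it suffices to establish the base case: for every single-use two-way prime $p$, the function $\mathtt{map}\;p$ is a composition of two-way primes.

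The easy cases are the letter-to-word homomorphisms and the end-of-word marker. If $p = h^*$ is a letter-to-word homomorphism, then $\mathtt{map}\;p$ is itself the letter-to-word homomorphism induced by $h$ extended by $\# \mapsto \#$. For the end-of-word marker $w \mapsto w\dashv$, the $\mathtt{map}$-version simply appends $\dashv$ at the end of each block; this is a local rational transduction (marking the last letter of each block and the last letter of the whole input) composed with a letter-to-word homomorphism (expanding the marked letters to append $\dashv$), and hence by Claim~\ref{claim:rational-primes-two-way-primes} a composition of two-way primes. For $p \times \idf_\Delta$ with $p$ a single-use Mealy prime, I would adapt Lemma~\ref{lem:map-combinator-primes}, converting between the $\#$-separated block encoding used here and the underline-last-letter encoding of that lemma by simple local rational transductions.

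The main obstacle is $\mathtt{map}\;f_\textrm{map-dup}$ and $\mathtt{map}\;f_\textrm{map-rev}$ over the nested alphabet $(\Sigma + \#_1) + \#_2$, where $\#_2$ is the outer separator introduced by $\mathtt{map}$ and $\#_1$ is the inner separator used by $f_\textrm{map-dup}$ (respectively $f_\textrm{map-rev}$). The key observation is that $\mathtt{map}(f_\textrm{map-dup})$ is exactly the function that duplicates each block delimited by \emph{either} $\#_1$ \emph{or} $\#_2$, keeping the separator letters themselves intact. My strategy is: (i) apply a letter-to-word homomorphism that surrounds each $\#_1$ and each $\#_2$ by two fresh copies of a separator letter $\#$, turning each original $\#_i$ into a singleton block bracketed by $\#$'s; (ii) apply $f_\textrm{map-dup}$ with separator $\#$ over the alphabet $\Sigma + \#_1 + \#_2$, which duplicates every block -- including the singleton $\#_i$-blocks, creating patterns $\#\#_i\#_i\#$ in place of each original $\#_i$; (iii) use a local rational transduction to mark the auxiliary $\#$-letters and the second copy in each $\#_i\#_i$ pair as removable; (iv) apply a letter-to-word homomorphism that erases the marked letters. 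The same recipe handles $f_\textrm{map-rev}$, because reversing a singleton block is a no-op.

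The tedious part is spelling out the annotation in step~(iii), but the patterns to identify depend only on bounded-window local context (adjacency to $\#_i$-letters), and are therefore squarely within the reach of local rational transductions, which by Claim~\ref{claim:rational-primes-two-way-primes} are compositions of two-way primes. Thus I expect no essential difficulty beyond the bookkeeping required to justify that the composition of steps (i)--(iv) indeed realises $\mathtt{map}\;f_\textrm{map-dup}$ on every input word.
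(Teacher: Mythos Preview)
Your proposal is correct and follows essentially the same approach as the paper: induction on the construction of $f$, reduction to the prime case via $\mathtt{map}(g\circ h)=(\mathtt{map}\,g)\circ(\mathtt{map}\,h)$, and the key observation that in $\mathtt{map}\,f_\textrm{map-dup}$ (resp.\ $\mathtt{map}\,f_\textrm{map-rev}$) both separators $\#_1,\#_2$ play the same role. The only difference is cosmetic---where you surround each $\#_i$ with fresh $\#$'s to create singleton blocks, the paper instead maps $\#_i\mapsto a_i\#$ (a marker letter followed by the common separator) and recovers the $\#_i$'s afterward from the adjacent $a_i$'s; both encodings work and the cleanup in either case is a local rational transduction.
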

\begin{proof}
We start the proof by noticing that:
\[ \mathtt{map}(f \circ g) = (\mathtt{map\;  } f) \circ (\mathtt{map\; } g)\]
This leaves us with showing that for every prime function $p$, the function $\mathtt{map}\; p$ is a composition of primes. 
Most of the cases are either easy or handled analogously as in the proof of Lemma~\ref{lem:map-combinator-primes}.
The only interesting cases are $p = f_\textrm{map-rev}$ and $p = f_\textrm{map-dup}$. Moreover, the two 
cases are analogous, so we only show how to construct $\mathtt{map}\; f_\textrm{map-rev}$. Observe 
that it uses two types of separators:
\[ (\Sigma + \underbrace{\#_1}_{
    \substack{\textrm{separator}\\
               \textrm{for } \mathtt{map}
    }
} + \underbrace{\#_2}_{\substack{
    \textrm{separator}\\
    \textrm{for } f_\textrm{map-rev}
}
})^* \to (\Sigma + \underbrace{\#_1}_{
    \substack{\textrm{separator}\\
               \textrm{for } \mathtt{map}
    }
} + \underbrace{\#_2}_{\substack{
    \textrm{separator}\\
    \textrm{for } f_\textrm{map-rev}
}
})^*\]
However, as one can easily verify, both of those separators are treated in the same way:
$\mathtt{map }\; f_\textrm{map-rev}$ is a version of $f_\textrm{map-rev}$ that
treats both $\#_1$ and $\#_2$ as its separator. It follows that we can implement $\mathtt{map }\; f_\textrm{map-rev}$
by mapping both $\#_1$ and $\#_2$ to $\#$, and applying $f_\textrm{map-rev}$.
Finally, we have to restore the $\#$'s back to $\#_1$ or $\#_2$.
In order for this to be possible, we need to modify the first step:
instead mapping both $\#_1$ and $\#_2$ to $\#$, we map them 
respectively to $a_1\#$ and $a_2\#$ (where $a_1$ and $a_2$ are letters that do not appear in $\Sigma$).
After this modification, we can map the $\#$'s back to $\#_1$ and $\#_2$ using a single-use Mealy machine. 
\end{proof}

Let us now deal with arbitrary \emph{loops}, i.e.
run graphs that start and finish in the same position. It is not hard to see 
that they can be split into at most $k$ left loops and right loops, as illustrated by the following example:
\vsmallpicc{arbitrary-loop-decomp}
Observe that, thanks to a reasoning similar to the one from Lemma~\ref{lem:loop-decomposition}, 
we can construct this decomposition as a composition of single-use two-way primes. 
This way, we reduce untangling an arbitrary loop into 
untangling right loops and left loops. This finishes the construction, 
as left loops can be untangled analogously to right loops.\\

Finally, let us show how to untangle arbitrary run graphs. Without loss of generality, 
we assume that the initial node is to the left of the final node -- 
the other case can be handled analogously.  First, let us inductively define
\emph{stations}, \emph{sweeps} and \emph{loops} of a run graph:
The \emph{first station} is the position that contains the initial node.
The \emph{$i$-th loop} is the part of the run between the first and the last 
visit in the $i$-th station. The \emph{$(i + 1)$-st station} is the first position
to the right of the $i$-th station, that was not visited by the $i$-th loop.
Finally, the \emph{$i$-th sweep} is the part of the run graph between the 
last visit in the $i$-th station and the first visit in the $(i + 1)$-th station.
Here is an example\footnote{
    Note, that the sweeps might contain U-turns (even though the sweeps in the example do not contain them).
}:
\bigpicc{loops-sweeps-decomp}
Notice that every position that is visited by the $i$-th sweep is also visited by the
$i$-th loop, so the width of each sweep is smaller than the width of the original graph -- 
this will allow us to untangle the sweeps using the induction assumption.\\

\noindent
Here is the procedure for untangling a run graph:
\begin{enumerate}
    \item First, we underline all stations, and colour each node into either yellow or
          blue, depending on whether it belongs to a loop or to a sweep -- 
          the construction is analogous to the one for the right-loop decomposition 
          from Lemma~\ref{lem:loop-decomposition}.
          
    \item Next, we transform a run graph into a $\#$-separated list of its \emph{windows},
          where the $i$-th window is defined to be the maximal interval that contains the 
          $i$-th station and no other station. (Note that the windows are usually overlapping.) 
          Here is an example:\\
          \bigpicenum{loops-sweeps-windows-output}\\

          We do this by surrounding every station with $\#$'s,
          applying the map duplicate function, and cleaning
          up the output with a single-use Mealy machine.
    \item Observe that the $i$-th window contains the entire $i$-th 
          loop and $i$-th sweep. In this step, we extract the loop 
          and the sweep from each of the windows. 
          For example, for the first window, this looks as follows:\\
          \bigpicenum{loops-sweeps-window-1}\\

          We do this by duplicating every window and filtering out the spurious nodes.
          The filtering phase can be handled by an 
          unambiguous Mealy machine combined with Claim~\ref{claim:unabigous-finite-parallel}.
          (To apply this construction to all windows,  we use the $\mathtt{map}$ combinator 
          from Lemma~\ref{lem:map-combinator-primes}.)
    \item Now, we untangle each loop and each sweep. For the loops, we use the construction described 
          earlier in this section, and for the sweeps, we use the induction assumption (as mentioned before, the sweeps 
          are always thinner than the whole run graph). We combine those constructions,
          using the $\mathtt{if-then-else}$ and $\mathtt{map}$ combinators (see Lemmas~\ref{lem:2-way-primes-cases}~and~\ref{lem:2-way-primes-map}). 
    \item Finally, we obtain the untangled result by concatenating all outputs produced by the previous step. (As mentioned before,
          this simply means filtering out the $\#$'s.)
\end{enumerate}

\subsection{Compositions of primes $\subseteq$ Two-way automata}
\label{subsec:su-primes-to-2way}
In this section, we show how to translate compositions of two-way primes into single-use two-way transducers
(this construction is also described in \cite[Section~E.1]{single-use-paper}). 
\begin{lemma}
\label{lem:su-primes-to-2way}
    For every composition of primes $f = p_1 \circ  \ldots \circ p_n$, there is an equivalent
    single-use two-way transducer.
\end{lemma}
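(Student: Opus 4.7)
I would proceed by induction on the length $n$ of the composition $f = p_n \circ \cdots \circ p_1$. The base case $n = 0$ is handled by a trivial one-state identity transducer. The inductive step reduces to two facts: that each prime is implementable as a single-use two-way transducer, and that the class of single-use two-way transducers is closed under sequential composition.

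The first fact is essentially a case analysis. A letter-to-word homomorphism $f^*$ is implemented by a one-way machine that outputs the (bounded-length) word $f(a)$ upon reading each input letter $a$. A prime of the form $p \times \idf_{\Delta^*}$ is obtained by lifting a single-use automaton or Mealy machine for $p$ (available via Theorems~\ref{thm:dsua-2dsua-ofm} and~\ref{thm:kr}) so as to pass the $\Delta$-coordinate through at every step. The prime $f_{\textrm{map-dup}}$ was already exhibited as a two-way transducer in Example~\ref{ex:map-duplicate}, and $f_{\textrm{map-rev}}$ admits a symmetric construction that sweeps each $\#$-separated block right-to-left. The end-of-word marker is handled by a trivial one-state machine.

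For closure under sequential composition, I would use a single-use adaptation of the classical Chytil--Jakl simulation. Given $\mathcal{A} : \Sigma^* \to \Gamma^*$ and $\mathcal{B} : \Gamma^* \to \Delta^*$, the composite $\mathcal{B} \circ \mathcal{A}$ simulates $\mathcal{B}$ on the virtual output of $\mathcal{A}$. A virtual position in that output is encoded as a pair consisting of an input position of $\mathcal{A}$ and a visit number at that position; by Lemma~\ref{lem:2way-bounded} the visit number ranges over a fixed finite set. To determine the virtual letter that $\mathcal{B}$ is currently reading, the composite re-runs the responsible step of $\mathcal{A}$ on demand, exploiting its own two-way capability to navigate to the correct input position.

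The main obstacle is preserving the single-use restriction. The step function of $\mathcal{B}$ is typed $\Gamma \eqto (Q_{\mathcal{B}} \suto \cdots)$, so its outer equivariant arrow may consume the virtual letter multiple times. Using Lemma~\ref{lem:su-transition-functions-equivalent}, I would convert this step into the equivalent form $\Gamma^k \times Q_{\mathcal{B}} \suto \cdots$ for a fixed $k$ depending on $\mathcal{B}$, and arrange the composite transducer to supply the $k$ identical copies of the virtual letter by revisiting the corresponding input position of $\mathcal{A}$ up to $k$ times, each visit yielding fresh single-use access; the internal simulations of $\mathcal{A}$ remain single-use because $\mathcal{A}$ itself is. The overall visit count per input position of the composite stays bounded, consistent with Lemma~\ref{lem:2way-bounded}, so no single-use discipline is violated.
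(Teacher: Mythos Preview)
Your approach differs from the paper's in a significant way. The paper does \emph{not} prove general closure of single-use two-way transducers under composition as a lemma; instead it shows the weaker statement that they are closed under \emph{pre-composition with a single prime}, i.e.\ for each prime $p$ and each transducer $\mathcal{A}$ one directly constructs a transducer computing $\mathcal{A}\circ p$. This is a case analysis on $p$: for instance, for single-use atom propagation, $\mathcal{A}'$ simulates $\mathcal{A}$ and, whenever $\mathcal{A}$ would read a $\downarrow$-position, $\mathcal{A}'$ walks left to the nearest non-$\epsilon$ instruction to fetch the propagated value on demand and then returns; for map-reverse, $\mathcal{A}'$ simulates $\mathcal{A}$ with the left/right directions swapped inside each $\#$-block. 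General closure under composition (Claim~\ref{claim:two-way-automata-closed-under-compositions}) is derived only \emph{after} this lemma, as a corollary of the two-way $\leftrightarrow$ primes equivalence, not as an ingredient of it.

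Your Chytil--J\'akl sketch has a real gap. Revisiting input position $i$ gives fresh access to $w_i$, but the \emph{virtual} letter that $\mathcal{B}$ reads is the $\Gamma$-component of $\delta_{\mathcal{A}}(w_i)(q_{i,j})$, which also consumes $\mathcal{A}$'s state $q_{i,j}$ at the $j$-th visit of position $i$. Maintaining $k$ parallel copies of $q_{i,j}$ lets you step forward along $\mathcal{A}$'s run, but when $\mathcal{B}$ moves \emph{left} on the virtual tape you must recover $k$ copies of the \emph{previous} state of $\mathcal{A}$, and you have not explained how. The classical finite-alphabet construction stores back-pointers indexed by $Q_{\mathcal{A}}$, which is infinite here; re-simulating $\mathcal{A}$ from the start after each backward move of $\mathcal{B}$ would make the composite's visit count per input position unbounded. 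Navigating backward in a single-use way is exactly the difficulty that the paper's per-prime constructions are designed to sidestep.
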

The proof goes by induction on $n$. For $n=0$, the function $f$ is the identity,
which makes the lemma trivial.
For the induction step, it suffices to show that single-use two-way automata are closed under
pre-compositions with prime functions:
\[ 
        \left(\substack{\textrm{2-way single-use}\\
                 \textrm{transducers}}\right)
        \circ
        \left(\substack{\textrm{2-way single-use}\\
        \textrm{primes}}\right) 
        \subseteq
        \left(\substack{\textrm{2-way single-use}\\
                 \textrm{transducers}}\right) 
\]
This is shown in the following claim:
\begin{claim}
    For every single-use two-way prime function $p$,
    and for every single-use two-way transducer $\mathcal{A}$, 
    there is a two-way transducer $\mathcal{A}'$
    that computes $\mathcal{A} \circ p$.
\end{claim}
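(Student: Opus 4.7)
The plan is to prove this by case analysis on the type of prime $p$, constructing $\mathcal{A}'$ in each case by simulating $\mathcal{A}$'s run on $p(w)$ via a run of $\mathcal{A}'$ on $w$, with some finite bookkeeping tracking where in the virtual $p(w)$ the simulation currently sits.

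First I would handle the easy ``local'' primes. For the end-of-word marker $w \mapsto w \dashv$, $\mathcal{A}'$ first simulates $\mathcal{A}$ with the head on an extra virtual $\dashv$ in the control state, and then transitions to simulating $\mathcal{A}$ on $w$ proper. For a letter-to-word homomorphism $f^*$ with $|f(a)|$ bounded by some $\ell$, $\mathcal{A}'$ keeps a ``virtual offset'' in $\{1,\ldots,\ell\}$ in its finite state indicating which letter of $f(w_i)$ the simulated head sees; since $f$ is equivariant, the function $(a,j) \mapsto f(a)[j]$ is single-use-computable from $w_i$ and $j$, and block-boundary crossings move $\mathcal{A}'$'s real head. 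For $f_\textrm{map-dup}$ and $f_\textrm{map-rev}$, $\mathcal{A}'$ tracks a ``copy index'' (first/second) or a local orientation flag in its finite state; at a separator the flag flips and $\mathcal{A}'$ either sends its head back to the start of the block (map-dup) or flips its effective direction (map-rev). All of this is straightforward finite-state bookkeeping and poses no problem for the single-use restriction.

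The harder case is $p \times \idf_{\Delta^*}$ where $p$ is one of the Mealy primes from Theorem~\ref{thm:kr}. For an equivariant length-preserving homomorphism $h^*$ the simulation is trivial: $\mathcal{A}'$ applies $h$ on-the-fly to the $\Sigma$-coordinate of each letter it reads (using that $h$ is a single-use function $\Sigma \suto \Gamma$). For the classical finite Mealy primes (finite-alphabet homomorphism, multiple-use bit propagation, group prefix for a finite group), we pre-compose in a separate pass: $\mathcal{A}'$ runs a left-to-right finite Mealy machine that computes, at each position, enough state information to produce $p(w)_i$ on demand, then threads this finite information through $\mathcal{A}$'s simulation. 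Because the extra machinery is atomless, this integrates with the single-use transition function via a product-of-states construction.

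The main obstacle is single-use atom propagation (both the prime itself, and as instantiated via $p \times \idf_{\Delta^*}$). Naively, each visit of $\mathcal{A}$ to a $\downarrow$-position $i$ forces $\mathcal{A}'$ to scan left for the nearest non-$\epsilon$ letter, and summing these scans over all visits can in principle blow up the number of times $\mathcal{A}'$ visits a given position. The plan here is to exploit Lemma~\ref{lem:2way-bounded}: since $\mathcal{A}$ visits each position of $p(w)$ at most $k$ times, we can redesign $\mathcal{A}'$ so that while its head sits on a $\downarrow$-position it momentarily performs a single left-scan to retrieve the propagated atom and then ``snaps back''; each such scan is charged only to the $\downarrow$-position where $\mathcal{A}$ currently is, giving at most $k$ scans per $\downarrow$ and hence a bounded number of visits per $w$-position (using that the propagated-atom lookup visits, from each $\downarrow$, only the preceding $\epsilon$-block up to the next non-$\epsilon$ letter). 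Verifying this accounting carefully, together with checking that the atom retrieved during the scan is used single-use (it is, because it is fed once into $\mathcal{A}$'s transition and then discarded), is the main technical point of the proof.
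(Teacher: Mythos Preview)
Your approach is the same as the paper's: case analysis on $p$, with $\mathcal{A}'$ simulating $\mathcal{A}$'s run on the virtual word $p(w)$. The map-reverse, map-duplicate, end-marker, and letter-to-word homomorphism cases are handled just as you describe, and your treatment of single-use atom propagation (scan left to the nearest non-$\epsilon$, retrieve, snap back) matches the paper's construction.

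Two points need correction. First, your handling of the finite Mealy primes is too glib: a ``separate pass'' is not available (you are building a single two-way transducer, and composition of two-way transducers has not yet been proved at this point), and ``threading the finite state'' only works in the forward direction. When $\mathcal{A}$ moves backward across a position whose letter is $\fullmoon$ or $\newmoon$, the propagated bit at the new position cannot be recovered from the current state; the paper resolves this the same way as for atom propagation, by having $\mathcal{A}'$ scan left to the previous non-$\epsilon$ operation and then return. (For group prefix there is no such problem, since $g_1\cdots g_{i-1} = (g_1\cdots g_i)\cdot g_i^{-1}$ can be computed in the state.) Second, your justification for atom propagation contains a small error: the retrieved atom is \emph{not} used single-use by $\mathcal{A}$'s transition, because $\delta:\Sigma\eqto(Q\suto\cdots)$ gives $\mathcal{A}$ multiple-use access to its input letter. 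The paper fixes this by having $\mathcal{A}'$ store \emph{enough copies} of the retrieved atom (the required number is the $k$ from Lemma~\ref{lem:su-transition-functions-equivalent}), which is possible because $\mathcal{A}'$ itself has multiple-use access to the input position where the atom lives.

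Finally, your invocation of Lemma~\ref{lem:2way-bounded} to bound visit counts is unnecessary here: two-way transducers have no a-priori visit bound in their definition, and non-looping of $\mathcal{A}'$ follows directly from non-looping of $\mathcal{A}$ together with the fact that each left-scan terminates.
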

\begin{proof}
    We prove the claim by the case analysis of $p$. For the sake of conciseness, we only 
    show the proof for multiple-use bit propagation, single-use atom propagation, and map-reverse
    (other cases are either simple or analogous). 
    \begin{enumerate}
        \item  \textbf{Multiple-use bit propagation}
                To recognize $\mathcal{A} \circ (f_\textrm{prop} \times \idf)$, we use $\mathcal{A}'$ which simulates $\mathcal{A}$
                and additionally keeps track of the current register value for $f_\textrm{prop}$.
                Every time it makes a transition, $\mathcal{A}'$ combines the input letter with
                the current register value and feeds this pair to $\mathcal{A}$. It outputs 
                the same letter and moves in the same direction as $\mathcal{A}$.
                When $\mathcal{A}$ goes forward, then $\mathcal{A'}$ can easily update the register value.
                When $\mathcal{A}$ goes backwards, then $\mathcal{A}'$ checks the current register operation
                (in its input): if it is $\epsilon$, then the register value stays the same;
                otherwise $\mathcal{A'}$ goes left to the first non-$\epsilon$ operation,
                updates its register value, and finds its way back by going forward to the first non-$\epsilon$ operation.
        \item   \textbf{Single-use atom propagation} 
                The transducer $\mathcal{A'}$ for $\mathcal{A} \circ (f_{\textrm{su-prop}} \times \idf)$ resembles the 
                transducer for $f_\textrm{prop}$ from the previous item, but it does not keep track of the register value. 
                Instead, it computes it on demand: every time $\mathcal{A}'$ enters a position
                with a $\downarrow$ (i.e. the \emph{read} operation), 
                it goes left to the first non-$\epsilon$ operation,
                saves enough copies (see Lemma~\ref{lem:su-transition-functions-equivalent}) of the register value
                (or if the operation is $\downarrow$, remembers that the register is empty)
                and finds its way back by going forward to the first $\downarrow$.
                Then it is ready to simulate the transition of $\mathcal{A}$. 
        \item   \textbf{Map-reverse}    
                To recognize $\mathcal{A} \circ f_{\textrm{map-rev}}$ we use $\mathcal{A}'$ that simulates $\mathcal{A}$,
                but modifies the order of the input letters -- every time $\mathcal{A}$ wants to go right,
                $\mathcal{A}'$ goes left, and every time $\mathcal{A}$ wants to go right, $\mathcal{A'}$ goes left. Moreover:
                  \begin{itemize}
                    \item whenever $\mathcal{A}'$ enters $\#$ (or $\vdash$) from the right,
                          it goes to the next $\#$ (or $\dashv$);
                    \item whenever $\mathcal{A}'$ enters $\#$ (or $\dashv$) from the left,
                          it goes to the previous $\#$ (or $\vdash$);
                    \item whenever $\mathcal{A}'$ exits $\#$ (or $\vdash$) to the right,
                          it goes to the rightmost element of the block to the right;
                    \item whenever $\mathcal{A}'$ exits $\#$ or ($\dashv$) to the left,
                          it goes to the leftmost element of the block to the left.
                  \end{itemize}
                  Here is an example equipped with a schematic illustration of this order:
                  \picc{map-rev-comp}
        \end{enumerate}
    \end{proof}
This finishes the proof that compositions of single-use two-way primes are included in single-use two-way automata. 
Together with Section~\ref{subsec:su-2way-to-primes}, it follows that the two classes are equal. 
In particular, this means that we already have the proof that single-use two-way automata are closed 
under compositions (see Claim~\ref{claim:two-way-automata-closed-under-compositions}). We are going 
to use this fact in the remaining parts of this section.

\subsection{Streaming string transducers $\subseteq$ Two-way automata}
\label{subsec:su-sst-to-2way}
In this section, we show how to translate single-use streaming string transducers into single-use two-way transducers.
The translation has two parts: First, we translate the streaming string transducer into a single-use two-way automaton
that transforms its input into the sequence of operations on $\Gamma^*$-registers
performed by the streaming string transducer while processing this input.
Here is an example of such a sequence of operations:
\bigpicc{register-ops-ex}
In the second step, we construct a single-use two-way transducer that interprets the register operations and produces their output.
This finishes the translation because single-use two-way transducers are closed under compositions
(see Claim~\ref{claim:two-way-automata-closed-under-compositions}, and the last paragraph of Section~\ref{subsec:su-primes-to-2way}).\\

Let us start by showing how to construct the sequence of register operations. Notice that
the $\Gamma^*$-registers are write-only, 
i.e. they can be used to construct the output, but the transducer is not allowed to query their contents. 
It follows, by a reasoning similar to the one in Section~\ref{subsec:single-use-decision-trees},
that all single-use functions over polynomial orbit-finite $\Gamma^*$-register sets, can be represented
by single-use decision trees such as the following one (for $\Gamma = \atoms$):
\[ \atoms^3 \times (\atoms^*)^2 \suto \atoms^3 \times (\atoms^*)^2 + \atoms^2 \times \atoms^*\]
\picc{ex-reg-su-tree}
In this type of tree, the inner nodes look the same as in the usual single-use trees (from Section~\ref{subsec:single-use-decision-trees}) -- 
each inner node contains a query about the $\atoms$-variables. The difference is in the leaves,
which can now include constructors for the $\Gamma^*$-variables.
Each such constructor is a finite word over $\Gamma^*$-values, where 
each $\Gamma^*$-value is either a $\Gamma^*$-variable ($R_i$) or a $\Gamma$-literal. Each 
$\Gamma$-literal is, in turn, an expression of the following form:
\[\coproj_i(v_1, \ldots, v_{k_i})\comma\]
where each $v_j$ is either an $\atoms$-variable ($x_i$) or an atomic constant.
The single-use restriction says that every $x_i$ and every $R_i$ can appear 
at most once on each path from the root to a leaf.\\

Thanks to this tree representation, we can look on a transition function of a
single-use streaming string transducer, as a function  $\Sigma \eqto (\Trees(Q, Q))$.
Notice that for every $Q$ there is a limit $n$ of how many $\Gamma^*$-values 
can be stored in the elements of $Q$. Moreover, since $\Sigma$ is orbit-finite and the transition function is equivariant,
we know that the length of $\Gamma^*$ constructors (which are words over $R_i$'s and $\Gamma$-constructors)
is bounded by some $m \in \nat$. This means that the set of all possible $\Gamma^*$-constructors that 
appear in the transition functions can be represented as the following polynomial orbit-finite set:
\[\overline{\Gamma^*} = (\underbrace{(1 + \ldots + 1)}_{\textrm{representing } R_i\textrm{'s}} + \underbrace{\Gamma}_{\textrm{single letters}})^{\leq m} \]
This means that we can think about $\Trees(Q, Q)$ as a function that ignores the $\Gamma^*$-registers from the input,
and instead of performing the $\Gamma^*$-operation, it outputs them:
\picc{ex-reg-su-tree-2}
\noindent
This means that we can translate $\Trees(Q, Q)$ into  $\Trees(Q', Q' \times \overline{\Gamma^*}^{\leq n})$,
where $Q'$ is the version of $Q$ where every $\Gamma^*$ has been replaced by $1$. 
Observe that both $Q'$ and  $\overline{\Gamma^*}^{\leq n}$ are polynomial orbit-finite sets.
It follows that we can translate the transition function of the streaming string transducer into 
a function of the following type:
\[\Sigma \eqto (Q' \suto Q' \times \overline{\Gamma^*}^{\leq n})\]
(Observe that this is a single-use function for polynomial orbit-finite sets,
as defined in Definition~\ref{def:single-use-functions}).
It follows that we can treat each single-use string streaming transducer as a single-use 
Mealy machine that outputs its register operations (instead of performing them).
This almost finishes the first part of the construction.
The last (technical) part is to produce 
the final register operations generated by the output function $\lambda$
(see Definition~\ref{def:sst-with-atoms}). Since a Mealy machine has 
to finish its run as soon as it reaches $\dashv$, it it is too weak for this purpose. 
Instead, we use a single-use two-way automaton\footnote{A more adequate model would be a single-use variant of a one-way transducer
(see last paragraph of \cite[Section~2]{single-use-paper}).
However, since we have not defined this variant in this thesis, we need to use the stronger model of a two-way transducer.}
that simulates the Mealy machine 
and outputs the $\lambda$-operations as soon as it reaches the $\dashv$ marker.
(We use the same construction as for the transition function, to translate $\lambda$ into 
 $Q' \suto \overline{\Gamma^*}^{\leq n}$.)\\

This leaves us with showing how to use a single-use two-way transducer to interpret the sequence of operations on $\Gamma^*$.
We present a construction (also described in \cite[Section~E.6]{single-use-paper}) that is almost identical
to the construction for finite alphabets (see \cite[Lemma~13.4]{bojanczyk2018automata} or \cite[Theorem~6]{dartois2018aperiodic}).
First, we notice that thanks to the single-use restrictions for $\Gamma^*$-registers, each sequence of 
register operations can be described as a forest. For example, consider the following sequence of operations:
\bigpicc{execute-regs-ex-1}
\noindent
It corresponds to the following forest:
\bigpicc{execute-regs-ex-2}
In order to execute the register operations, it suffices to perform a DFS traversal starting in the final
node (i.e. \texttt{Out}), outputting the encountered letters as we visit them:
\bigpicc{execute-regs-ex-3}
It is not hard to see that a single-use two-way transducer is capable of performing such a traversal.

\subsection{Compositions of primes $\subseteq$ Streaming string transducers}
\label{subsec:su-primes-to-sst}
In this section, we show how to translate compositions of two-way primes into single-use streaming string transducers:
\begin{lemma}
\label{lem:su-primes-to-sst}
    For every composition of primes $p_1 \circ  \ldots \circ p_n$, there is an equivalent
    single-use streaming string transducer.
\end{lemma}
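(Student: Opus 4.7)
The plan is to proceed by induction on $n$, the length of the composition $p_1 \circ \ldots \circ p_n$. The base case $n = 0$, corresponding to the identity on $\Sigma^*$, is realized by a one-state SST with a single register that is appended with each input letter. For the induction step, mirroring the strategy of Section~\ref{subsec:su-primes-to-2way}, it suffices to show that single-use SSTs are closed under pre-composition with each two-way single-use prime function: given a single-use SST $M : \Gamma^* \to \Delta^*$ and a prime $p : \Sigma^* \to \Gamma^*$, I construct a single-use SST computing $M \circ p$.

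The simpler cases are handled directly. For letter-to-word homomorphisms $p = h^*$, the new SST has the same state space as $M$ and, on each input letter $a \in \Sigma$, applies $M$'s transition function in sequence to the (bounded-length) word $h(a)$; since the sequential composition of single-use functions is itself single-use, this yields a valid SST transition. For primes of the form $p = q \times \idf_\Delta$ with $q$ a single-use Mealy prime from Theorem~\ref{thm:kr}, I augment $M$'s state with $q$'s state and, in each transition, first run $q$ on the first coordinate to compute an output letter and $q$'s next state, then feed the pair of this output letter and the second coordinate to $M$'s transition. The end-of-word marker is handled by a trivial modification.

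The main obstacles are the pre-compositions with $f_{\textrm{map-dup}}$ and $f_{\textrm{map-rev}}$, since these primes alter the input's structure non-locally and the SST has no way to revisit past letters or iterate through its registers. For $f_{\textrm{map-dup}}$, the plan is to maintain, alongside $M$'s current state, a \emph{block effect} --- the single-use function $e \in Q_M \suto Q_M$ equal to the composition of $\delta_M(b_1), \ldots, \delta_M(b_i)$ over the letters of the current block seen so far. By Theorem~\ref{thm:su-orbit-finite} the set $Q_M \suto Q_M$ has a polynomial orbit-finite tree representation, so $e$ fits into the SST's state space. On each block letter $b$ the SST (a) advances $M$'s state by applying $\delta_M(b)$ and (b) extends the block effect by composition with $\delta_M(b)$; since the transition function has the type $\Sigma \eqto (Q \suto Q)$, reading $b$ twice is permitted, while the two state updates are performed single-use in one transition. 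On seeing $\#$ or $\dashv$, the SST applies the accumulated effect $e$ to $M$'s current state --- simulating the second pass through the block --- resets $e$ to the identity, and, if appropriate, applies $\delta_M(\#)$. The construction for $f_{\textrm{map-rev}}$ is analogous, except that the block effect is composed on the right rather than the left (so that it represents $M$'s transitions in reverse order of arrival), $M$'s state is not advanced during the block, and the effect is applied exactly once at the block boundary.
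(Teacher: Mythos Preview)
Your approach has a genuine gap in the treatment of $f_{\textrm{map-dup}}$ and $f_{\textrm{map-rev}}$. You invoke Theorem~\ref{thm:su-orbit-finite} to conclude that $Q_M \suto Q_M$ has a polynomial orbit-finite tree representation, but that theorem concerns single-use functions between polynomial orbit-finite sets. Here $Q_M$ is a polynomial orbit-finite $\Gamma^*$-\emph{register} set: it contains components of the form $\Gamma^*$, which are neither finite nor orbit-finite. Consequently $Q_M \suto Q_M$ is not orbit-finite at all. Already for $Q_M = \Gamma^*$, the block effect after reading $b_1 \ldots b_k$ may be ``append $b_1 \ldots b_k$'', and there are word-many such functions. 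What you would actually need is that $Q_M \suto Q_M$ can be represented as a polynomial orbit-finite $\Gamma^*$-register set (the literal pieces of the register constructors going into auxiliary string registers), \emph{and} that composing with $\delta_M(b)$ and applying the stored effect to the current state can both be realised as single-use register updates. None of this is Theorem~\ref{thm:su-orbit-finite}, and none of it is argued in your proposal; it is a nontrivial structural lemma in its own right.

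The paper sidesteps the whole issue by switching the direction of the induction: rather than showing closure under \emph{pre}-composition with a prime (build an SST for $M \circ p$), it shows closure under \emph{post}-composition (build an SST for $p \circ \mathcal{A}$). In that direction the prime $p$ acts on the \emph{output} alphabet, so one never needs to store transition functions of $M$; instead one replaces each $\Gamma^*$-register of $\mathcal{A}$ by a small tuple of registers carrying just enough of the prime's state (for map-reverse: the reversed initial segment, the already-map-reversed middle, the reversed final segment, and a bit; for map-duplicate: two copies of the outer segments; and so on). Every case is then a concrete, local rewrite of the register operations, with no appeal to orbit-finiteness of function spaces. Your pre-composition scheme might be completable, but as written the key step rests on a misapplied theorem, and the post-composition route is both simpler and the one the paper takes.
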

The proof of the lemma uses a similar induction as the proof for two-way transducers (Lemma~\ref{lem:su-primes-to-2way}), but this 
time we show that single-use streaming string transducers are closed under post-compositions with
single-use two-way primes:
\[ 
        \left(\substack{\textrm{single-use}\\
              \textrm{2-way primes}}\right)
        \circ
        \left(\substack{\textrm{single-use string streaming}\\
              \textrm{transducers}}\right)
        \subseteq
        \left(\substack{\textrm{single-use string streaming}\\
        \textrm{transducers}}\right)
\]
(Note the difference with Lemma~\ref{lem:su-primes-to-2way}, where we have used pre-compositions).
\begin{claim}
\label{claim:sst-postcomp-primes}
    For every single-use two-way prime function $p$,
    and for every streaming string transducer $\mathcal{A}$, 
    there is a two-way transducer $\mathcal{A}'$, 
    that recognizes $\mathcal{A} \circ p$.
\end{claim}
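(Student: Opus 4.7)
The plan is to reduce the desired construction directly to the analogous closure claim that was already established inside the proof of Lemma~\ref{lem:su-primes-to-2way}. First I would invoke Section~\ref{subsec:su-sst-to-2way} to replace the streaming string transducer $\mathcal{A}$ of type $\Gamma^* \to \Delta^*$ by an equivalent single-use two-way transducer $\mathcal{B}$ that defines the same word-to-word function. Since this translation is semantics-preserving, we have $\mathcal{A} \circ p = \mathcal{B} \circ p$ on every input, so it suffices to produce a single-use two-way transducer recognizing $\mathcal{B} \circ p$, where both $\mathcal{B}$ and $p$ are already two-way objects.

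The second step is to invoke the corresponding two-way closure claim whose proof appears inside Lemma~\ref{lem:su-primes-to-2way}: for every single-use two-way prime $p$ and every single-use two-way transducer $\mathcal{T}$, there is a single-use two-way transducer computing $\mathcal{T} \circ p$ (the proof there is a case analysis on $p$, handling multiple-use bit propagation, single-use atom propagation, map-reverse, etc.). Applied with $\mathcal{T} = \mathcal{B}$, this directly yields the required two-way transducer $\mathcal{A}'$ computing $\mathcal{B} \circ p = \mathcal{A} \circ p$. Equivalently, one could appeal to Claim~\ref{claim:two-way-automata-closed-under-compositions}, observing that the prime $p$ is in particular a (single-prime) composition of primes, hence equivalent to a two-way transducer by Section~\ref{subsec:su-primes-to-2way}, and that two-way transducers are closed under composition (which is established at the end of Section~\ref{subsec:su-primes-to-2way} from the equivalence between two-way transducers and compositions of primes).

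I do not foresee a serious obstacle, since both ingredients are already in hand and neither depends on the current claim, so the argument is non-circular: the SST-to-two-way translation of Section~\ref{subsec:su-sst-to-2way} is proved earlier, and the pre-composition case analysis inside Lemma~\ref{lem:su-primes-to-2way} is likewise a prior result. The only bookkeeping to check is the matching of alphabets: if $p : \Sigma^* \to \Gamma^*$ and $\mathcal{A} : \Gamma^* \to \Delta^*$, then $\mathcal{A}'$ must have input alphabet $\Sigma$ and output alphabet $\Delta$, which is exactly the type produced by the case-analysis construction. Thus the whole argument collapses to a short two-line combination of results already proved in the chapter.
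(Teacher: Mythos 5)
Your argument proves the claim as literally printed, but the printed statement contains two typos (copied from the analogous claim in Section~\ref{subsec:su-primes-to-2way}), and what you prove is not what the claim is for. This claim is the inductive step of Lemma~\ref{lem:su-primes-to-sst} (compositions of primes $\subseteq$ streaming string transducers): the displayed inclusion immediately above it reads $(\textrm{single-use two-way primes}) \circ (\textrm{SSTs}) \subseteq (\textrm{SSTs})$, and the text explicitly notes the contrast with the \emph{pre}-compositions used in Lemma~\ref{lem:su-primes-to-2way}. So the intended statement is that the \emph{post}-composition $p \circ \mathcal{A}$ (run the SST $\mathcal{A}$, then apply the prime $p$ to its output) is again a \emph{streaming string transducer}; this is what the induction on $p_1 \circ \cdots \circ p_n$ peels off from the left. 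Your construction gets both parameters wrong: it handles $\mathcal{A} \circ p$ with $p$ applied first, and it outputs a two-way transducer. A two-way transducer is useless for closing the induction in Lemma~\ref{lem:su-primes-to-sst}, and there is no non-circular way to convert it back into an SST at this point in the chapter (two-way $\subseteq$ SST is obtained only by composing Section~\ref{subsec:su-2way-to-primes} with the very lemma being proved). Indeed, if your reduction were the intended content, all of Section~\ref{subsec:su-primes-to-sst} would be redundant, since ``compositions of primes $\subseteq$ two-way transducers'' is already Section~\ref{subsec:su-primes-to-2way} -- that should have been a red flag.

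The actual work of the claim, which your proposal skips entirely, is a case analysis on $p$ showing how to simulate the post-application of $p$ at the level of the SST's write-only $\Gamma^*$-registers: for map-reverse each register is split into a reversed prefix, a map-reversed middle part, a reversed suffix and a flag; for map-duplicate one additionally keeps two copies of the boundary blocks; for a letter-to-word homomorphism $h^*$ the registers store $h^*$-images (with Lemma~\ref{lem:fs-k-fold} used to make $h$ single-use at the price of duplicating atom registers); for the two propagations and for group prefixes the registers store function-valued summaries indexed by the unknown incoming register or group value, concatenated in a wreath-product-like fashion. None of this is recoverable from the two results you cite, so the proposal has a genuine gap: it establishes a vacuous variant of the statement rather than the closure property the lemma needs.
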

\begin{proof}
    We prove the claim, by case analysis of $p$. This time, we present the proof for all possible $p$'s:
    \begin{enumerate}
        \item  \textbf{Map reverse} Observe that since $f_\textrm{map-rev}$ is of the type $(\Gamma + \#)^* \to (\Gamma + \#)^*$, 
        then both $\mathcal{A}$ and $\mathcal{A'}$ are of the type $\Sigma^* \to (\Gamma + \#)^*$. 
        We construct $\mathcal{A'}$ in the following way: The set of states of $\mathcal{A}'$
        is equal to the set of states of $\mathcal{A}$, where every $(\Gamma + \#)^*$ is replaced by:
        \[ \hspace{-1cm}\underbrace{\Gamma^*}_{
            \substack{\textrm{The initial part of}\\ \textrm{the register value in } \mathcal{A} \\ \textrm{up to the first } \#\comma\\ \textrm{reversed.}}
         }
        \times \underbrace{(\Gamma + \#)^*}_{
            \substack{\textrm{The middle part of}\\ \textrm{the register value in } \mathcal{A} \\ \textrm{from the first to the last } \#\comma\\ \textrm{map-reversed.}} 
        } \times
        \underbrace{\Gamma^*}_{
            \substack{\textrm{The final part of}\\ \textrm{the register value in } \mathcal{A} \\ \textrm{after the last } \#\comma\\ \textrm{reversed.}}
         }\times
        \underbrace{\{\Yes, \No\}}_{
            \substack{ \textrm{ Does the register } \\ \textrm{value in } \mathcal{A} \textrm{ contain} \\ \textrm{at least one} \\ \textrm{separator?}}\\
        }\]
        If the register in $\mathcal{A}$ contains no separator, then its entire content is stored in the first $\Gamma^*$. 
        It is also worth pointing out that the $\Gamma^*$-registers are implemented as $(\Gamma + \#)^*$-registers
        (which happen to have the semantic property of never containing any $\#$'s). 
        Here is an example of a register value in $\mathcal{A}$ and the corresponding value in $\mathcal{A'}$ (for $\Gamma = \atoms$):
        \[ \begin{tabular}{ccc}
            $\boxed{12\#345\#67\#89}$ & $\leftrightsquigarrow$ &
            $\left(\boxed{21},\ \boxed{\#543\#76\#}, \boxed{98}, \Yes\right)$
        \end{tabular} \]
        The transition function of $\mathcal{A'}$ is a version of the transition function of $\mathcal{A}$, 
        where $\concatf$, $\singleton$, and $\const_\epsilon$ are interpreted as follows:
        \[ \hspace{-0.5cm}(A_1, A_2, A_3, a_4)\, \cdot \, (B_1, B_2, B_3, b_4) = \begin{cases}
            (A_1,\ A_2 B_1  A_3  B_2,\ B_3,\ \Yes), &\textrm{if } b_4 = \Yes\\
            (A_1,\ A_2,\ B_1  A_3,\ a_4), & \textrm{otherwise }
        \end{cases} \]
        \[ \begin{tabular}{cc}
            $\singleton'(a \in \Gamma) = (a, \epsilon, \epsilon, \No)$ & $\singleton'(\#) = (\epsilon, \#, \epsilon, \Yes)$
        \end{tabular}
        \]
        \[
            \const_\epsilon = (\epsilon, \epsilon, \epsilon, \No)
        \]
        (It is not hard to see that all of those functions are single-use.)
        Similarly, we define $\lambda'$ (i.e. the output function of $\mathcal{A'}$)
        to be $\lambda$, where $\concatf$, $\singleton$ and $\const_{\epsilon}$ are interpreted in the same way 
        as for the transaction function. Since such $\lambda'$ produces an element of
        \[\Gamma^* \times (\Gamma + \#)^* \times \Gamma^* \times \{\Yes, \No\}\comma\]
        we need to compose it with the following \emph{exit function}, 
        which collapses this compound register type back to $(\Gamma +\#)^*$:
        \[ (R_1, R_2, R_3, r_4) \mapsto  R_1  R_2 R_3\]
    \item  \textbf{Map duplicate} In this case, we use the same idea as for $f_\textrm{map-rev}$, 
           but we keep two copies of the initial and final blocks. For example:
           \[ \begin{tabular}{ccc}
            $\boxed{12\#345\#67\#89}$ & $\leftrightsquigarrow$ &
            $\left(\boxed{12},\ \boxed{12}, \boxed{\#345345\#6767\#}, \boxed{89}, \boxed{89}, \Yes\right)$
        \end{tabular} \]
    \item \textbf{Letter-to-word homomorphism} 
          We show how to construct $\mathcal{A'}$ for $h^* \circ \mathcal{A}$, where $h$ is 
          a function of type $\Gamma \eqto \Delta^*$. In this case,
          $\mathcal{A}$ is of type $\Sigma^* \to \Gamma^*$ and $\mathcal{A}'$ is of type
          $\Sigma^* \to \Delta^*$. We define $\mathcal{A'}$ to be a version 
          of $\mathcal{A}$ where every $\Gamma^*$-register is replaced by a $\Delta^*$-register,
          which keeps the $h^*$-image of the original $\Gamma$-register from $\mathcal{A}$. 
          Simulating register concatenation is trivial -- 
          whenever $\mathcal{A}$ concatenates two $\Gamma^*$-registers, 
          $\mathcal{A}'$ can simply concatenate the corresponding $\Delta^*$-registers.
          Simulating $\singleton$ requires some explanation. The principle 
          is easy --  $\mathcal{A}'$ needs to interpret $\singleton$ as follows:
          \[ \singleton'(a) = h(a) \]
          The harder part is implementing $h$ as a single-use function:
          Observe, first, that since $\Gamma$ is orbit-finite, 
          and $h$ is equivariant, it follows that there exists a limit $l$ on the length
          of the output of $h$. This means that we can translate 
          $h$ into an equivalent $h_1 : \Gamma \eqto \Delta^{\leq l}$. 
          Now, we can apply Lemma~\ref{lem:fs-k-fold} to obtain
          an equivalent $h_2 : \Gamma \suto_k \Delta^{\leq l}$, which 
          (by composing it with at most $l$ $\concatf$'s) can be easily transformed into
          an equivalent $h_3 : \Gamma \suto_k \Delta^*$. Finally, by Definition~\ref{def:k-fold-use-functions},
          we can transform $h_3$ into an equivalent $h_4 : \Gamma^k \suto \Delta^*$.
          This finishes the construction, as $\mathcal{A'}$ can be easily modified to maintain 
          $k$ copies of every $\atoms$-register from $\mathcal{A}$. 
    \item \textbf{Single-use atom propagation} For the sake of simplicity, we show 
           how to construct $\mathcal{A}'$ for $f_\textrm{su-prop} \circ \mathcal{A}$.
           (The construction can be easily modified to construct the actual
           $(f_\textrm{su-prop} \times \idf) \circ \mathcal{A}$.)
           As $f_\textrm{su-prop}$ has the type $(\atoms + \downarrow + \epsilon)^* \to (\atoms + \epsilon)^*$, 
           we know that $\mathcal{A}$ and $\mathcal{A}'$ have the following types:
           \[ \begin{tabular}{cc}
                $\mathcal{A} : \Sigma^* \to (\atoms + \downarrow + \epsilon)^*$ & 
                $\mathcal{A}' : \Sigma^* \to (\atoms + \epsilon)^*$
           \end{tabular} \]
           We construct $\mathcal{A'}$ as a version of $\mathcal{A}$, where every $(\atoms + \downarrow + \epsilon)^*$-register 
          is replaced by the following set (compare with the proof of Claim~\ref{claim:rational-precomp-su-prop}):
           \[ \hspace{-1cm}\underbrace{\epsilon^*}_{
            \substack{\textrm{Maximal}\\ 
            \textrm{$\epsilon$-prefix.}}
           }
           \times 
           \underbrace{\{\downarrow, \square, \epsilon \}}_{
            \substack{\textrm{The first non-$\epsilon$}\\ \textrm{operation, or $\epsilon$}\\ \textrm{if there is none.} \\ \textrm{($\square$ represents elements of $\atoms$)} }
           }
           \times
           \underbrace{(\atoms + \epsilon)^*}_{
            \substack{\textrm{The output for}\\ \textrm{the suffix after}\\ 
            \textrm{the first non-$\epsilon$.}\\
            } 
           } \times
           \underbrace{(\atoms + \downarrow + \epsilon)}_{
            \substack{\textrm{The final non-$\epsilon$}\\ \textrm{operation, or $\epsilon$}\\  \textrm{if there is none.} }
           } \]
           (Again, the $\epsilon^*$-registers are actually implemented as $(\epsilon + \atoms)^*$-registers.)
           Observe that the output of the suffix after the first non-$\epsilon$ does not depend on the
           initial register value for $f_\textrm{su-prop}$.\\ 

           Here is an example of a register in $\mathcal{A}$ and the corresponding value in~$\mathcal{A}'$:
           \[ \begin{tabular}{ccc}
            $\boxed{\epsilon \epsilon \downarrow 1 2 \epsilon \downarrow 3 \epsilon \epsilon \epsilon \downarrow 4 \epsilon \epsilon}$ & $\leftrightsquigarrow$ &
            $\left(\boxed{\epsilon \epsilon }, \downarrow, \ \boxed{\epsilon \epsilon \epsilon 2 \epsilon \epsilon \epsilon \epsilon 3 \epsilon \epsilon \epsilon }, 4\right)$
        \end{tabular} \]
        This representation allows $\mathcal{A}'$ to simulate register concatenation. For example:
        \[ (A_1, a_2, A_3, a_4 \in \atoms) \cdot (B_1, \downarrow, B_3, b_4) = (A_1, a_2, A_3 B_1 a_4 B_3, b_4) \]
        Other cases are handled analogously.
        Operations $\singleton$ and $\const_\epsilon$ are trivial, and the exit function looks as follows:
        \[ \begin{tabular}{ccc}
            $(A_1, a_2, A_3, a_4)$ & $\mapsto$ & $\begin{cases}
                A_1 \epsilon A_3 & \textrm{if } a_2 \in \{\square, \downarrow\}\\
                A_1 & \textrm{if } a_2 = \epsilon
            \end{cases}$
        \end{tabular} \]

    \item \textbf{Multiple-use bit propagation} Again, for simplicity, we present the construction for
          $f_\textrm{prop} \circ \mathcal{A}$. The construction is similar to the one 
          for $f_\textrm{su-prop}$. This time both $\mathcal{A}$ and $\mathcal{A}'$ have 
          the type $\Sigma^* \to \{\fullmoon, \newmoon, \epsilon\}^*$. 
          We construct $\mathcal{A'}$ as a version of  $\mathcal{A}$, where 
          every $\{\fullmoon, \newmoon, \epsilon\}^*$-register is replaced by:
          \[ \hspace{-1cm}\underbrace{(\{\fullmoon,\newmoon, \epsilon \}^*)^{\{\fullmoon, \newmoon, \epsilon\}}}_{
            \substack{\textrm{$f_\textrm{prop}$-output for the}\\ \textrm{ maximal $\epsilon$-prefix,}\\
            \textrm{depending on the}\\ \textrm{initial register value.}}}
           \times \underbrace{\{\fullmoon, \newmoon, \epsilon\}^*}_{
            \substack{\textrm{$f_\textrm{prop}$-output for the suffix}\\ \textrm{that starts in the first non-$\epsilon$.}\\
            \textrm{(Note, that this does not depend}\\
            \textrm{on the initial register value of $f_\textrm{prop}$.)}
            }
          } \times \underbrace{\{\fullmoon, \newmoon, \epsilon\}}_{
            \substack{\textrm{The final non-$\epsilon$ value,}\\ \textrm{or $\epsilon$ if there is none.}}
          } \]
          We represent $(\{\fullmoon, \newmoon, \epsilon\}^*)^{\{\fullmoon, \newmoon, \epsilon\}}$ as $({\{\fullmoon, \newmoon, \epsilon\}^*})^3$.\\
            
          For example:
          \[ \begin{tabular}{ccc}
            $\boxed{\epsilon \epsilon \newmoon \fullmoon \epsilon \epsilon \newmoon \epsilon  \epsilon }$ & $\leftrightsquigarrow$ &
            $\left(\left(\substack{{\epsilon} \;\mapsto\; \boxed{\epsilon \epsilon }\\
            \fullmoon \;\mapsto\; \boxed{\fullmoon \fullmoon}\\
            \newmoon \;\mapsto\; \boxed{\newmoon \newmoon}
           }\right),\ \boxed{ \newmoon \fullmoon \fullmoon \fullmoon \newmoon \newmoon \newmoon }, \newmoon\right)$
        \end{tabular} \]

          The concatenation of registers is interpreted as follows:
          \[ \hspace{-1cm}(A_1, A_2, a_3) \cdot (B_1, B_2, b_3) = \begin{cases}
            (A_1,\ A_2 \cdot B_1(a_3) \cdot B_2,\ b_3) & \textrm{if } a_3 \neq \epsilon\\
            (x \mapsto A_1(x) \cdot B_1(x),\ B_2,\ b_3) & \textrm{if } a_3 = \epsilon\\
          \end{cases} \]
          Observe that this is a single-use function -- in particular, each $A_1(x)$ and $B_1(x)$ is used at most once. 
          Functions $\singleton$ and $\const_\epsilon$ are trivial, and the exit function looks as follows:
          \[\begin{tabular}{ccc}
            $(A_1, A_2, a_3)$ & $\mapsto$ & $A_1(\epsilon) \cdot  A_2$
          \end{tabular}
          \]
    \item \textbf{Group prefixes} Again, for simplicity, we present the construction for
    $f_\textrm{G-pref} \circ \mathcal{A}$. This time both $\mathcal{A}$ and $\mathcal{A'}$ 
    are of the type $\Sigma \to G^*$  (where $G$ is a finite group). 
    We construct $\mathcal{A}'$ as a version of $\mathcal{A}$ where every $G^*$ is replaced by:
    \[ \underbrace{(G^*)^G}_{
        \substack{\textrm{The output for the register,}\\
                  \textrm{depending on the initial $G$-value }}
    } \times \underbrace{G}_{
        \substack{\textrm{The $G$-product of}\\
                  \textrm{the entire register}}
    } \]
    (Again, we represent $(G^*)^G$ as $(G^*)^{|G|}$).\\

    For example, if we take $G = \mathbb{Z}_3$:
    \[ \begin{tabular}{ccc}
        $\boxed{001211}$ & $\leftrightsquigarrow$ &
        $\left(\substack{0 \;\mapsto\; \boxed{001012}\\
                         1 \;\mapsto\; \boxed{112120}\\
                         2 \;\mapsto\; \boxed{220201}},\ 2\right)$
    \end{tabular} \] 
    The register concatenation can be interpreted as follows (note the similarity to the wreath product):
    \[ (A_1, a_2) \cdot (B_1, b_2) = (g \mapsto A_1(g) \cdot B_1(g \cdot a_2),\ a_2 \cdot b_2 ) \]
    Observe, that since $G$ is a group, then $g \mapsto g \cdot a_2$ is a bijection on $G$.
    It follows that each $A_1(x)$ and each $B_1(x)$ is used at exactly once,
    which makes the function single-use. Operations $\singleton$ and $\const_\epsilon$ are trivial,
    and the exit function looks as follows:
    \[\begin{tabular}{ccc}
      $(A_1, a_2)$ & $\mapsto$ & $A_1(1)$
    \end{tabular}
    \]
    \item \textbf{End of word marker} This is the simplest case. It can be simulated by $\mathcal{A}'$ that looks exactly of 
          like $\mathcal{A}$ except of the output function, which for $\mathcal{A}'$ is defined in the following way:
          \[ \lambda'(q) = \lambda(w) \dashv \]
    \end{enumerate}
\end{proof}

It might be worth mentioning that, as observed in the last paragraph of \cite[Section~E3]{single-use-paper},
this proof of Claim~\ref{claim:sst-postcomp-primes} also works in the presence 
of both $\copyf_{\Gamma^*}$ and $\copyf_{\atoms}$. It follows that:
\[ 
        \left(
            \substack{
              \textrm{ single-use}\\
              \textrm{two-way primes}
            }\right)
        \circ
        \left(\substack{\textrm{multiple-use string streaming}\\
              \textrm{transducers}}\right)
        \subseteq
        \left(\substack{\textrm{multiple-use string streaming}\\
        \textrm{transducers}}\right)
\]
The same proof also works in the finite case, which might be of independent interest.
In particular, it follows that:
\[ 
        \left(
              \textrm{two-way transducers}
        \right)
        \circ
        \left(\substack{\textrm{copyful string streaming}\\
              \textrm{transducers}}\right)
        \subseteq
        \left(\substack{\textrm{copyful string streaming}\\
        \textrm{transducers}}\right)
\]
\subsection{Regular list transductions $\subseteq$ Two-way transducers}
\label{subsec:su-list-fun-2way}
In this section, we show how to translate regular list transductions with atoms
into single-use two-way transducers (the construction, which is also presented in \cite[Section~E.5]{single-use-paper}, 
is an adaptation of the left-to-right implication from \cite[Theorem~4.3]{bojanczyk2018regular}).\\

Remember that regular list functions with atoms work over the class of 
polynomial sets with atoms -- i.e. the smallest class that contains $1$ and $\atoms$ 
and is closed under $\times$, $+$ and $X^*$. We start the construction by observing 
that every element of every polynomial set with atoms can be encoded as 
a word over the following alphabet: 
\[  \Sigma_{[\atoms]} = \{ \underbrace{\circ}_{\substack{
    \textrm{element}\\
    \textrm{of $1$}
}}, \underbrace{\mathtt{[}\ ,\  \mathtt{]}}_{\substack{
    \textrm{used for}\\ X^*}\\
}, \underbrace{\mathtt{(}\ ,\  {)}}_{\substack{
    \textrm{used for}\\  X \times Y}
}, \underbrace{\mathtt{coproj}_1, \mathtt{coproj}_2}_{\substack{
    \textrm{used for }\\ X + Y
}}, \underbrace{\raisebox{0.9mm}{$\mathtt{,}$}}_{\substack{
    \textrm{separator for}\\
    X \times Y \textrm{ and } X^*
}}\}+ \underbrace{\atoms}_{\substack{
    \textrm{elements}\\
    \textrm{of } \atoms
}}  \]
For example, here is an encoding of an element from $(\atoms^2 + 1)^*$:
\[ [\coproj_1\; (5, 8),\ \coproj_2 \;\circ,\ \coproj_1\; (1, 2),\ \coproj_1\; (7, 8),\ \coproj_2\; \circ]  \]
Thanks to this encoding, we can translate every regular list function with atoms into a two-way transducer:
\begin{lemma}
    For every regular list function with atoms $f : X \to Y$, there is a two-way transducer:
    \[\mathcal{F} : \Sigma_{[\atoms]}^* \to \Sigma_{[\atoms]}^*\comma\]
    such that $\mathcal{F}(w_x)$ outputs the encoding of $f(x)$ (where $w_x$ denotes the 
    $\Sigma_{[\atoms]}$-encoding of $x$). 
\end{lemma}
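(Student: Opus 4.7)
The plan is to prove the lemma by induction on the structure of $f$ as a regular list function with atoms. For the induction step, the combinator $\circ$ is handled by Claim~\ref{claim:two-way-automata-closed-under-compositions} (closure of single-use two-way transducers under composition). The combinators $\times$ and $+$ reduce to running one of two given transducers on a prefix of the encoding identified by locating matching brackets (for $\times$) or by inspecting the leading $\coproj_i$ symbol (for $+$); since a two-way transducer can easily locate the position where the bracket count returns to its starting level, these cases are routine.

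For the base cases, most basic functions act locally on the encoding. Functions like $\proj_i$, $\sym$, $\coproj_i$, $\merge$, $\cosym$, $\assoc$, $\coassoc$, $\leftI$, $\distr$, $\const_a$, $\idf$, $\eqf$, $\const_\epsilon$, and $\cons$ either insert or delete a handful of bookkeeping tokens ($\coproj_i$, parentheses, brackets, commas), which a two-way transducer can do in a left-to-right pass while tracking bracket depth. The $\destruct$ function needs to find the first top-level comma inside the outer brackets, again easy with bracket counting. For $\concatf$, the transducer makes a left-to-right pass outputting everything except the inner brackets and the top-level commas that separate inner lists. For $\reverse$, the transducer can be built as a composition with the letter-to-word homomorphism that turns each top-level element into its bracketed form with $\#$-separators, followed by a $f_\textrm{map-rev}$-like application and a cleanup pass; alternatively, one can give a direct two-way construction that walks the encoded list from right to left, jumping over nested brackets using bracket counting. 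For $\blocks$, a one-pass transducer inserts the appropriate new brackets around maximal runs. For $\group_G$, since $G$ is finite, the prefix-product can be tracked in the finite control of a left-to-right sweep, and the inner elements of pairs are copied unchanged by interleaving. The copying functions $\copyf_\atoms$ and $\copyf_{X^*}$ are handled directly: a two-way transducer can output its entire input twice by doing a full right sweep, then a full left sweep, then another right sweep; the bracket-counting logic lets it emit two copies of any substructure.

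The main obstacle is the $\map$ combinator. Given a transducer $\mathcal{F}$ for $f : X \to Y$, we need to build a transducer for $\map f$, which on a list encoding $[w_{x_1},\, w_{x_2},\, \ldots,\, w_{x_n}]$ must produce $[\mathcal{F}(w_{x_1}),\, \mathcal{F}(w_{x_2}),\, \ldots,\, \mathcal{F}(w_{x_n})]$. The difficulty is that the element encodings $w_{x_i}$ contain nested brackets, so the naive use of Lemma~\ref{lem:2-way-primes-map} (which expects flat $\#$-separated blocks) does not directly apply. The plan is to reduce to that lemma by first inserting unique top-level separators: compose with a two-way transducer that, using bracket counting, replaces the outer $[$, the top-level commas, and the outer $]$ with a fresh symbol $\#$ (and stashes them for later restoration, e.g. by marking with auxiliary letters $\#_{[}$, $\#_{,}$, $\#_{]}$ analogous to the trick in the proof of Lemma~\ref{lem:2-way-primes-map}). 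After this flattening, the input is a $\#$-separated list of element encodings at top level, which means $\map \mathcal{F}$ can be applied via the $\mathtt{map}$ combinator of Lemma~\ref{lem:2-way-primes-map}. A final cleanup pass restores the original bracketing and top-level commas.

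Finally, closure under $\circ$ (Claim~\ref{claim:two-way-automata-closed-under-compositions}) glues everything together, completing the induction. The restriction to the type $f : X \to Y$ (rather than $\Sigma^* \to \Gamma^*$ as in the statement of Theorem~\ref{thm:regular-equivalence-with-atoms}) is essential here because the encoding $\Sigma_{[\atoms]}$ is a single alphabet in which every polynomial set with atoms embeds; once the lemma is proved in this generality, the desired inclusion of regular list transductions into two-way transducers follows by specializing to the case where $X = \Sigma^*$ and $Y = \Gamma^*$ and noting that the encoding of a word over a polynomial orbit-finite alphabet is efficiently interconvertible with the word itself by pre- and post-composing with letter-to-word homomorphisms (themselves single-use two-way prime functions, hence two-way transducers via Lemma~\ref{lem:su-primes-to-2way}).
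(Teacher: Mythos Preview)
Your proof is correct and follows the same approach as the paper: structural induction on the construction of $f$, with the composition case handled by closure of single-use two-way transducers under composition (Claim~\ref{claim:two-way-automata-closed-under-compositions}). The paper's own proof is a two-sentence sketch that flags only composition as ``the most interesting case''; your write-up fills in the remaining cases---in particular your handling of $\map$ via bounded-depth bracket counting (sound because the nesting depth is fixed by the type $X$) and reduction to Lemma~\ref{lem:2-way-primes-map} is a correct elaboration that the paper leaves implicit.
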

\begin{proof}
    The proof is a standard induction on the construction of $f$ as a regular list function with atoms.
    The most interesting case is function composition but, as explained in the
    last paragraph of Section~\ref{subsec:su-primes-to-2way}, we already 
    know that single-use automata are closed under compositions.
\end{proof}

We finish the construction, by observing, that for every polynomial \emph{orbit-finite} $\Gamma$, the following two 
functions can be implemented as single-use two-way transducers:
\[
    \begin{tabular}{cc}
    $\mathcal{T}_\Gamma: \Gamma^* \to \Sigma_{[\atoms]}^*$ & $\mathcal{T}_\Gamma^{-1}: \Sigma_{[\atoms]}^* \to \Gamma^*$,
    \end{tabular}
\]
where $\mathcal{T}_\Gamma$ is a function that translates the input word 
(which is a polynomial set with atoms) into its  $\Sigma_{[\atoms]}$-encoding,
and $\mathcal{T}_\Gamma^{-1}$ is a one-way inverse of $\mathcal{T}_\Gamma$.

\subsection{Compositions of primes $\subseteq$ Regular list transductions}
\label{subsec:su-primes-to-list-fun}
Finally, let us show how to translate compositions of single-use two-way primes into regular list transductions
with atoms (the construction is also presented in \cite[Section~E.2]{single-use-paper}).
Since regular list functions are (by definition) closed under compositions, it suffices to 
show how to translate every prime function:
\begin{enumerate}
    \item \textbf{Letter-to-word homomorphism} In this step, 
    we construct $h^*$, for any $h:\Sigma \eqto \Gamma^*$ where $\Sigma$ and $\Gamma$
    are polynomial orbit-finite. Similar as in Section~\ref{subsec:su-primes-to-sst}, 
    we observe that there is a limit $k$ on the length of the outputs of $h$, 
    which means that we can translate $h$ into an equivalent $h': \Sigma \eqto \Gamma^{\leq k}$. 
    As $\Gamma^{\leq k}$ is polynomial orbit-finite, it follows by Lemma~\ref{lem:su-plus-copy-fs}, 
    that $h'$ is a regular list function with atoms. This means that we can construct $h^*$ 
    in the following way:
    \[ \Sigma^* \longtransform{\mathtt{map}\;f'} (\Gamma^{\leq k})^* \longtransform{\mathtt{map\; toList_{\leq k}}} 
    (\Gamma^*)^* \longtransform{\mathtt{concat}} \Gamma^*\comma\]
    where $\mathtt{toList_{\leq k}}$ is the following tuple-to-list transformation:
    \[ \mathtt{toList_{\leq k}} :  X^{\leq k} \to X^*\tdot  \]
    We finish the construction by showing that $\mathtt{toList_{\leq k}}$ is a regular list function.
    We start with $\mathtt{toList}_1$:
    \[ \mathtt{toList_1} :  X \transform{\rightI} X \times I \transform{\const_\epsilon} X \times X^* \transform{\cons} X^* \]
    Now, let us continue with $\mathtt{toList_{\leq 2}}$:
    \[ \mathtt{toList_2} :  X^2 \longtransform{\idf \times \mathtt{toList_1}} X \times X^* \longtransform{\cons} X^* \] 
    By continuing in this manner, we can construct $\mathtt{toList_i}$ for every $i \leq k$.
    Finally, we combine all those functions into 
    $\mathtt{toList_{\leq k}}$ using the $[f_1, \ldots, f_n]$ combinator from Lemma~\ref{ex:maybe-comb-su}.
\item \textbf{Single-use propagation} For the sake of clarity, we show how to construct $f_{\textrm{su-prop}}$ -- 
      the construction for $f_{\textrm{su-prop}} \times \idf$ is analogous. Consider the following example input:
      \[ [1, \ \epsilon,\ \epsilon,\ \downarrow,\ \epsilon,\ \epsilon,\ 3,\ \epsilon,\ \downarrow,\ 3,\ 2,\ \epsilon,\ \downarrow,\ \downarrow,\ 3,\ \epsilon] \]
      We start by applying $h^*$ for the following $h$ (we have already shown in the previous item 
      that $h^*$ is a regular list function):
      \[ \begin{tabular}{ccc}
        $h(\epsilon) = \epsilon$ & $h(\downarrow) = \downarrow \#$ &  $h(a \in \atoms) = \# a$ 
      \end{tabular}
      \]
      This transforms the input word into:
      \[ [\#,\ 1, \ \epsilon,\ \epsilon,\ \downarrow,\ \#,\ \epsilon,\ \epsilon,\ \#,\  3,\ \epsilon,\ \downarrow,\ \#,\ \#,\  3,\ \#,\ 2,\ \epsilon,\ \downarrow,\ \#,\  \downarrow,\ \#,\ \#,\ 3,\ \epsilon] \]
      Next, we apply the $\mathtt{block}$ function:
      \[ [\;[\; ],[1, \ \epsilon,\ \epsilon,\ \downarrow],\ [\epsilon,\ \epsilon],\ [3,\ \epsilon,\  \downarrow],\ [\;],\ [3],\ [2,\ \epsilon,\ \downarrow],\ [\downarrow],\ [\; ], \ [3,\ \epsilon]\;] \]
      Observe that every $\downarrow$ is the last element of some block,
      and that the output produced by each $\downarrow$ 
      is equal to the first letter of the $\downarrow$'s block (as long as the block contains at most two letters -- 
      the output of a singleton block $[\downarrow]$ is equal to $[\epsilon]$).  
      This means that we can produce the output for each block using the following function:
      \[
        f_{\textrm{block-out}}(b_1\,  b' \,  b_n) = \begin{cases}
            \epsilon \, b' \, b_1 & \textrm{if } b_n = \downarrow\\
            \epsilon \, b' \, \epsilon & \textrm{otherwise}\\
        \end{cases} \comma
      \]
      where $b_1$ denotes the block's first letter, $b_n$ denotes the last letter, and $b'$ denotes 
      inner letters of the block. Observe, that due to the way the blocks are constructed, 
      we know that:
      \[
      \begin{tabular}{ccc}
        $b_1 \in \atoms + \epsilon$, & $b_n \in \{\epsilon, \downarrow\}$,  & and $b' \in \epsilon^*$. 
      \end{tabular}
      \]
      This definition of $f_{\textrm{block-out}}$ only works blocks with at least two letters,
      but the other cases are very simple: the empty block produces empty output and all one-letter blocks 
      produce $[\epsilon]$. Observe that we can implement the $f_{\textrm{block-out}}$
      function as a regular list function with atoms: First, we use $\mathtt{destr}$ and $\mathtt{reverse}$
      (together with some structural transformations, such as $\mathtt{distr}$)
      to split $b$ into $(b_1, b', b_n)$. Then, we can use the if-then-else combinator
      from Example~\ref{ex:if-else-su} together with $\mathtt{cons}$,
      $\mathtt{reverse}$ (and some structural functions) to construct the output.
      This finishes the construction, as we can now use $\mathtt{map}$ to apply $f_\textrm{block-out}$ to 
      every block, and combine the results using $\mathtt{concat}$. 
\item \textbf{Multiple-use propagation} Again, we only show how to construct $f_{\textrm{prop}}$ -- 
       the construction for $f_{\textrm{prop}} \times \idf$ is analogous. The idea is very similar to 
       the one for single-use propagation (from the previous item). For this reason, 
       we only show how to implement the key component of the construction, which is the following function:
       \[ \mathtt{replace} : \{\fullmoon, \newmoon\} \times \epsilon^* \to \{\fullmoon, \newmoon\}^*\comma \]
       which replaces every $\epsilon$ in the input word with the letter from the first coordinate. For example:
       \[ (\newmoon, [\epsilon, \epsilon, \epsilon, \epsilon]) \mapsto  [\newmoon, \newmoon, \newmoon, \newmoon]\]
       Since $\{\fullmoon, \newmoon\}$ is encoded as $1 + 1$,
       we can use the if-then-else combinator (from Example~\ref{ex:if-else-su}),
       and implement $\mathtt{replace}$ as follows\footnote{It is worth mentioning that this idea does not work for
       $\mathtt{replace}_\atoms : \atoms \times \epsilon^* \to \atoms^*$.
       In fact, it is not hard to see that $\mathtt{replace}_\atoms$
       cannot be implemented as a single-use two-way automaton, 
       which by Section~\ref{subsec:su-list-fun-2way}, means that it is not a regular list function with atoms.}:
       \[ (\map\, \const_{\fullmoon}) \; \orf \; (\map\, \const_{\newmoon})\]
       \item \textbf{Map reverse} This function can be implemented in the following way:
       \[ f_\textrm{map-rev} (\Sigma + \#)^* \longtransform{\mathtt{block}} {(\Sigma^*)}^* \longtransform{\map\; \reverse} {(\Sigma^*)}^*  \longtransform{\concat} \Sigma^* \]
       The only problem with this construction is that it erases the $\#$-separators. However, it is not hard to see that we can reintroduce them 
       before applying the $\mathtt{concat}$ function.
       \item \textbf{Map duplicate} We use a similar idea as for $f_\textrm{map-reverse}$. The only difference is that we 
       have to implement the function $\mathtt{duplicate} : X^* \to X^*$. We start by showing how to concatenate two lists:
       \[ \mathtt{append} : X^* \times X^* \longtransform{\idf \times \mathtt{toList_1}} X^* \times {(X^*)}^* \transform{\mathtt{append}} {(X^*)}^* \transform{\mathtt{\concat}} X^*\]
       Now we can implement $\mathtt{duplicate}$ in the following way:
       \[ X^* \longtransform{\copyf_{X^*}} X^* \times X^* \longtransform{\mathtt{append}} X^*\]
    \end{enumerate}
\chapter*{Further work}
Throughout the thesis, I have pointed out several open problems. This section 
gathers them all in one place. For more information, about each 
of those problems follow the references to the sections where they were originally discussed,
or contact me directly.

\begin{description}

    \item [1. Semantic definition of single-use functions.] The definition of single-use functions
    is quite syntactic in its nature, which limits their domain to polynomial orbit-finite sets.
    This open question asks for a semantic definition of single-use functions.
    (Compare with syntactic and semantic definitions of equivariance in Section~\ref{sec:dra}.)
    A possible approach might be to consider a version of sets with atoms equipped with an action of all functions $\atoms \to \atoms$ 
    or even all relations $\atoms \times \atoms$ (rather than only atom bijections). See
    the introduction of Section~\ref{sec:su-functions} for context. 

    \item [2. Nondeterministic single-use automata.]
    A straightforward way of introducing nondeterminism to single-use automata results in a model
    that is too strong:
    It is not hard to construct a \emph{nondeterministic single-use automaton}
    that recognizes the language ``The first letter appears again'' (which cannot be recognized 
    by a deterministic single-use automaton). This does not fit well in the picture
    of definitional robustness presented in this thesis. This open question asks if there 
    is a notion of nondeterminism compatible with the single-use restriction.
    This seems to be connected with developing a good notion of \emph{single-use relations}.
    For context, see Section~\ref{sec:non-determinsitic-single-use}. 

    \item [3. Unambiguous single-use automata.] It is worth noting that both examples
    from Section~\ref{sec:non-determinsitic-single-use},
    which demonstrate that nondeterministic single-use automata are stronger than deterministic ones,
    use automata that are ambiguous (which means that some accepted words will always have more than one accepting run).
    It follows that those examples cannot be used to show that unambiguous automata are stronger than deterministic ones.
    In fact, the question of whether unambiguous nondeterministic automata are equivalent
    to deterministic single-use automata remains open. If the two models 
    turned out to be equivalent, it would open a path to a machine-based definition of 
    single-use rational transductions. For context, see Footnote~\ref{ftn:unambigous} on Page~\pageref{ftn:unambigous}. 

    \item [4. Local semigroup transductions over arbitrary orbit-finite sets.]
    Note that local semigroup transductions (Definition~\ref{def:local-monoid-transduction})
    are defined for all orbit-finite alphabets, but their Krohn-Rhodes theorem only works 
    for polynomial orbit-finite alphabets.
    A counterexample is the single-use propagation 
    of $\binom{\mathbb{A}}{2}$,
    which can be constructed as a local semigroup transduction but not as a composition of single-use primes.
    One way to address this issue would be to extend the set of single-use primes with the generalized
    single-use propagation for every orbit-finite $X$
    (i.e. an extended version of the function from Claim~\ref{claim:gen-su-prop}). However, 
    the current proof of the theorem only works for polynomial orbit-finite alphabets. 
    This leaves the question of whether compositions of these generalized primes are equivalent to local rational semigroup transductions over orbit-finite alphabets open.
    An analogous open question can be asked about local rational semigroup transductions.
    For context, see the footnotes in Theorems~\ref{lem:single-use-mealy-monoid-transduction}~and~\ref{thm:rational-kr}.

    \item [5. Local rational semigroup transductions as two Mealy machines.]~\\
    Elgot-Mezei Theorem (\cite[Theorem~7.8]{elgot1963two}) shows that every 
    rational function can be expressed as a composition of one 
    left-left-to-right and one right-to-left Mealy machine. In contrast, in the proof of Theorem~\ref{thm:rational-kr}, 
    we have used multiple left-to-right and multiple right-to-left single-use Mealy machines.
    It remains an open problem whether the Elgot-Mezei Theorem can be generalized for single-use Mealy machines.
    For context see Footnote~\ref{ftn:su-rational-two-mealy}
    on Page~\pageref{ftn:su-rational-two-mealy}.

    \item [6. Single-use restriction for total-order atoms.] 
    In this thesis, we have considered a set of atoms, whose elements can only be compared 
    with respect to equality. However, there are also other types of atoms
    that are studied in the literature (see \cite[Chapter~3]{bojanczyk2019slightly}). 
    One example is the total-order atoms, i.e. the set $\mathbb{Q}$ equipped with the relation $\leq$.
    Interestingly, Claim~\ref{claim:finite-shift} fails for some 
    of those other atoms (including the total-order atoms)
    which breaks most of the proofs presented in this thesis. 
    For this reason, developing a single-use theory for other kinds 
    of atoms remains an open problem. We are currently working on it
    together with Nathan Lhote. For context, see Footnote~\ref{ftn:total-order} 
    on Page~\pageref{ftn:total-order}. 

    \item [7. Polyregular functions over infinite alphabets.]
    \emph{Polyregular functions} is a
    class over finite alphabets that extends \emph{regular functions} while keeping many of their 
    desirable properties (see \cite{bojanczyk2022transducers}). 
    Our final open question concerns finding a well-behaved class of \emph{polyregular functions over infinite alphabets}.
    It is harder than one might expect, as polyregular functions seem to be very good at bypassing the single-use restriction.
    For context, see the introduction to Chapter~\ref{ch:mealey}.

\end{description}

\bibliography{doktorat} 
\bibliographystyle{alpha}

\end{document}